\DeclarePairedDelimiter{\ceil}{\lceil}{\rceil}
\DeclarePairedDelimiter{\floor}{\lfloor}{\rfloor}
\def\myfnt{\ifx\protect\@typeset@protect\expandafter\footnote\else\expandafter\@gobble\fi}
\newcommand{\myref}[1]{\cref{#1}\mynameref{#1}{\csname r@#1\endcsname}}
\newcommand{\Myref}[1]{\Cref{#1}\mynameref{#1}{\csname r@#1\endcsname}}
\def\mynameref#1#2{%
  \begingroup
    \edef\@mytxt{#2}%
    \edef\@mytst{\expandafter\@thirdoffive\@mytxt}%
    \ifx\@mytst\empty\else
    \space(\nameref{#1})\fi
  \endgroup
}
\def\BState{\State\hskip-\ALG@thistlm}
\setlist{leftmargin=*,noitemsep}
\newcommand{\pluseq}{\mathrel{+}=}
\newcommand{\parallelsum}{\mathbin{\|}}
\newcommand{\cut}[0]{\mathrm{\mathrm{cut}}}
\newcommand{\Flats}[0]{\mathrm{\mathrm{Flats}}}
\newcommand{\comp}[0]{\mathrm{\mathrm{comp}}}
\newcommand{\supp}[0]{\mathrm{\mathrm{supp}}}
\newcommand{\Cuts}[0]{\mathrm{Cuts}}
\newcommand{\im}[0]{\mathrm{Im}}
\newcommand{\In}[0]{\mathrm{In}}
\newcommand{\inc}[0]{\mathrm{inc}}
\newcommand{\connE}[0]{\mathrm{connE}}
\newcommand{\ep}[0]{\mathrm{ep}}
\newcommand{\Bridge}[0]{\mathrm{Bridge}}
\newcommand{\Mod}[0]{\mathrm{Mod}}
\newcommand{\MST}[0]{\mathrm{MST}}
\newcommand{\UST}[0]{\mathrm{UST}}
\newcommand{\Tree}[0]{\mathrm{Tree}}
\newcommand{\Uniform}[0]{\mathrm{Uniform}}
\newcommand{\NP}[0]{\mathrm{NP}}
\newcommand{\RP}[0]{\mathrm{RP}}
\newcommand{\BPP}[0]{\mathrm{BPP}}
\newcommand{\Poly}[0]{\mathrm{P}}
\newcommand{\FPT}[0]{\mathrm{FPT}}
\newcommand{\EMS}[0]{\mathrm{EMS}}
\newcommand{\MSO}[0]{\mathrm{MSO}}
\newcommand{\XP}[0]{\mathrm{XP}}
\renewcommand{\arraystretch}{1.5}
\theoremstyle{definition}
\newtheorem{thm}{Theorem}[section]
\newtheorem{cor}[thm]{Corollary}
\newtheorem{lem}[thm]{Lemma}
\newtheorem{prop}[thm]{Proposition}
\newtheorem{example}[thm]{Example}
\newtheorem{defn}[thm]{Definition}
\theoremstyle{remark}
\newtheorem{remark}[thm]{Remark}
\numberwithin{equation}{section}
\def\tcom#1\par{}
\newcommand\blfootnote[1]{%
  \begingroup
  \renewcommand\thefootnote{}\footnote{#1}%
  \addtocounter{footnote}{-1}%
  \endgroup
}
\title{Complexity and Geometry of Sampling Connected Graph Partitions}
\author{Lorenzo Najt$^*$, Daryl DeFord$^{\dagger}$, Justin Solomon$^{\dagger}$}
\begin{document}
\raggedbottom
\maketitle
\vspace{.25cm}
\begin{abstract}In this paper, we prove intractability results about sampling from the set of partitions of a planar graph into connected components. Our proofs are motivated by a technique introduced by Jerrum, Valiant, and Vazirani. Moreover, we use gadgets inspired by their technique to provide families of graphs where the ``flip walk'' Markov chain used in practice for this sampling task exhibits exponentially slow mixing. Supporting our theoretical results we present some empirical evidence demonstrating the slow mixing of the flip walk on grid graphs and on real data. Inspired by connections to the statistical physics of self-avoiding walks, we investigate the sensitivity of certain popular sampling algorithms to the graph topology. Finally, we discuss a few cases where the sampling problem is tractable. Applications to political redistricting have recently brought increased attention to this problem, and we articulate open questions about this application that are highlighted by our results.\blfootnote{$^*$University of Wisconsin-Madison. Corresponding Author: \href{mailto:LNAJT@math.wisc.edu}{Lnajt@math.wisc.edu}
}\blfootnote{$^{\dagger}$Massachusetts Institute of Technology.}%
\end{abstract}
%\begin{keywords}Markov Chains, Computational Complexity, Graph Partitions, Gerrymandering, Self-avoiding Walks \end{keywords}
%\begin{AMS}60J10, 68Q17, 68R10, 97A40, 68Q25\end{AMS}

\vspace{.25cm}
\section{Introduction}

The problem of \emph{graph partitioning}, or dividing the vertices of a graph into a small number of connected subgraphs that extremize an objective function, is a classical task in graph theory with application to network analytics, machine learning, computer vision, and other areas.  Whereas this task is well-studied in computation and mathematics, a related problem remains relatively understudied:  understanding how a given partition compares to other members of the set of possible partitions.  In this case, the goal is not to generate a partition with favorable properties, but rather to compare a given partition to some set of alternatives. Recent analysis of political redistricting have invoked such comparisons (see \Cref{section:CongressionalMotivation}), motivating the investigation of this general problem.%

Consider a connected graph $G = (V,E)$, and let $P_k(G)$ denote the collection of $k$-partitions of $V$ such that each block induces a connected subgraph. One approach to understanding how a given element of $P_2(G)$ compares to the other elements proceeds by uniformly sampling from $P_2(G)$, then gathering statistics about this sample and comparing them to the partition under consideration.%
While this is an attractive approach, this uniform sampling problem is computationally intractable, assuming $\NP \not = \RP$.

We open the paper by reviewing this fact. Our first new result is that this intractability persists even if we consider partitions of equal size. Then, motivated to produce a result that is more relevant to the classes of graphs that arise in redistricting, we show that uniformly sampling $P_2(G)$ remains intractable even if $G$ is a maximal plane graph with a constant bound on the vertex degree. Beyond sampling from the uniform distribution, we also prove results about the intractability of sampling from a broader class of distributions over connected $k$-partitions.  

Such worst case results should not be considered proof that uniformly sampling from $P_k(G)$ is \emph{always} impossible. However, it does indicate that algorithm designers should examine sampling heuristics with some skepticism. Driven by this philosophy, we follow up our investigation of the worst-case complexity with an investigation into applicability of a general and often extremely useful sampling tool, which is Markov chain Monte Carlo.

In the context of redistricting, Markov chains have rapidly become a popular tool for sampling from $P_k(G)$ to compare a districting plan (\cref{section:CongressionalMotivation}) to the space of possible plans \cite{chikina_assessing_2017,pegden1,mattingly,mattingly1,chen1}. The most commonly used Markov chain moves randomly on $P_2(G)$ by proposing to change the block assignment of a uniformly chosen node, and accepting such moves only if the connectivity of each block is preserved \cite{mattingly, chikina_assessing_2017}. %
We call this chain the flip walk (\cref{defn:flipwalk}). If $G$ is $2$-connected, then the flip walk on $P_2(G)$ is irreducible, and the stationary distribution is uniform \cite{akitaya2019reconfiguration}. In principle, running the flip walk on $P_2(G)$ for a long time will produce a uniformly random element of $P_2(G)$.  For this approach to be computationally feasible, however, one must guarantee that the mixing time of the Markov chain on $P_2(G)$ is not too large compared to $|G|$.%

Pursuing this angle, we explain how to engineer a family of graphs $G \in \mathcal{G}$ so that the mixing time of the flip walk on $P_2(G)$ grows exponentially quickly in $|G|$. Based on this, as well as some empirical work motivated by the bottlenecks we discover, we can conclude that there are strong reasons to doubt that Markov chain methods based on the flip walk mix in polynomial time. %

In addition to this, we make a connection with the literature on self-avoiding walks \cite{duminil2014supercritical} that demonstrates the existence of dramatic phase transitions in the qualitative behavior of distributions on $P_2(G)$. We provide experiments illustrating the relevance of these phase transitions to redistricting and, inspired by the ideas in those experiments, we examine the robustness of other popular approaches to sampling from $P_2(G)$, including some methods based on spanning trees \cite{recomb, deford_total_2018}. Overall, the observations we make highlight interesting and difficult %
challenges for the sampling algorithms and inference principles being used in statistical analysis of redistricting plans. %

Finally, we discuss a few classes of graphs on which it is possible to sample uniformly from $P_2(G)$ in polynomial time, but which are far from the kinds of graphs relevant to redistricting. The large gap between where we know that uniform sampling is intractable and where we know it is tractable, along with some connections to outstanding problems from statistical physics that seem to be on par with the intended redistricting application, indicates that there are many challenging questions remaining about sampling from $P_k(G)$.

\paragraph*{Overview and contributions}  As part of a broader effort to establish mathematical underpinnings for the analytical tools used in redistricting \cite{chikina_assessing_2017, mattingly1, akitaya2019reconfiguration, altman1997automation, kueng2019fair}%
, we identify challenges and opportunities for further improvement related to random sampling in the space of graph partitions. In addition to the technical material listed below, we articulate some implicit assumptions behind outlier methods used in the analysis of gerrymandering (\Cref{section:EOH}), and offer some suggestions for future work around computational redistricting. %

\begin{itemize}
\item Sampling intractability results, and bottlenecks:
\begin{itemize}

    \item We review why it is intractable to sample uniformly from $P_2(G)$ (\Cref{section:basichard}). As is typical, our strategy will be to engineer graphs so that uniform samples from their connected $2$-partitions are likely to solve an NP-hard problem. We will work with planar graphs to leverage bond-cycle duality, which gives a bijection between $P_2(G)$ and the set of simple cycles of the dual.%

\item One realistic condition to put on samplers from $P_2(G)$ is to restrict to the set of \emph{balanced} partitions, partitions for which both blocks have the same number of nodes. We next show that uniformly sampling balanced $2$-partitions remains NP-hard (\Cref{section:balancedhard}).%

\item We also prove the intractability of uniformly sampling from $P_2(G)$ for an even more constrained family of graphs: planar triangulations with bounded vertex degree (\Cref{section:maxhard}).%

\item We prove that uniformly sampling $k$-partitions is intractable, using %
a generalization of bond-cycle duality (\cref{appendix:duality}).

\item The gadgets used in the intractability proofs provide a means for constructing families of graphs such that the flip walk Markov chain on $P_2(G)$ has exponentially large mixing time (\Cref{Section:FlipChain}). We describe a family of plane triangulations with vertex degree bounded by $9$ such that the flip walk on $P_2(G)$ mixes torpidly. %

\end{itemize}
\item Empirical results:
\begin{itemize}
\item We include some empirical evidence indicating that Markov chains based on the flip walk mix slowly on grid graphs (\Cref{subsection:gridgraphempirical}) and on the graphs used in analysis of redistricting (\Cref{subsection:empiricalrealstates}).
\item We mention a link between our sampling problem on grid graphs and long standing challenges regarding the self-avoiding walk model from statistical physics (\Cref{Section:GridGraph}). We use this connection to motivate and demonstrate phase transitions in the qualitative properties of $P_2(G)$.

\item %
We provide experiments (\Cref{section:thechoiceofmodel} and \Cref{section:phasetransitions}) demonstrating that popular methods for sampling from geographic partitions via discretizing the geography as a graph $G$ and sampling from $P_2(G)$ are impacted in surprising ways by the discretization used.
\end{itemize}
\item Positive results:
\begin{itemize}
    \item We prove that there are efficient and implementable dynamic programming algorithms that can be used to sample uniformly from $P_2(G)$ and to sample uniformly from the balanced partitions in $P_2(G)$, provided that $G$ is a series-parallel graph. This algorithm succeeds in some cases where the flip walk is unreliable. We observe that these sampling problems are tractable on graphs of bounded treewidth.%
    
\end{itemize}

\end{itemize}

\subsection{Related Work}

We are not the first team of researchers to have considered redistricting problems from a complexity point of view. Indeed, there are many papers showing that optimization problems related to designing the most ``fair'' or ``unfair'' districts are $\NP$-hard, for various meanings of the word fair; works in this category include \cite{kueng2019fair,  altman1997automation}. Other researchers have explored the complexity of findings paths through the flip walk state space \cite{akitaya2019reconfiguration}. We will discuss other related work in the body of the paper, such as the connection to self-avoiding walks (\cref{Section:GridGraph}), and related sampling problems (\cref{section:markovchainrelatedwork}).%

\subsection{Basic Notation}

Let $G=(V,E)$ be a graph; unless otherwise specified, all of our graphs will be undirected, finite and simple. If unspecified, usually $n := |V|$. Given a graph $G$, $V(G)$ denotes the set of nodes, and $E(G)$ the set of edges. An (ordered) \emph{$k$-partition} $P=(V_1,V_2,\ldots, V_k)$ of $G$ is an list of disjoint subsets $V_i\subseteq V$ whose union is $V$, while an unordered $k$-partition is a set $\{V_1, \ldots, V_k\}$ satisfying the same conditions. Throughout this paper we will be concerned with {\em connected $k$-partitions}, i.e., those $k$-partitions where each $V_i$ induces a connected subgraph. The set of ordered connected $k$-partitions of $G$ is denoted $P_k(G)$, and the set of unordered connected $k$-partitions of $G$ is denoted $\mathscr{P}_k(G)$.  If $A \subset V$, then we will use $\partial_E A$ to denote the \emph{edge boundary} of $A$: $\partial_E A = \{ \{u,v\} \in E : u \in A, v \not \in A \}$. %

\subsection{Motivation from Redistricting}\label{section:CongressionalMotivation}

    \begin{figure}
        \centering
        \begin{tabular}{cc}
        \includegraphics[scale = .15]{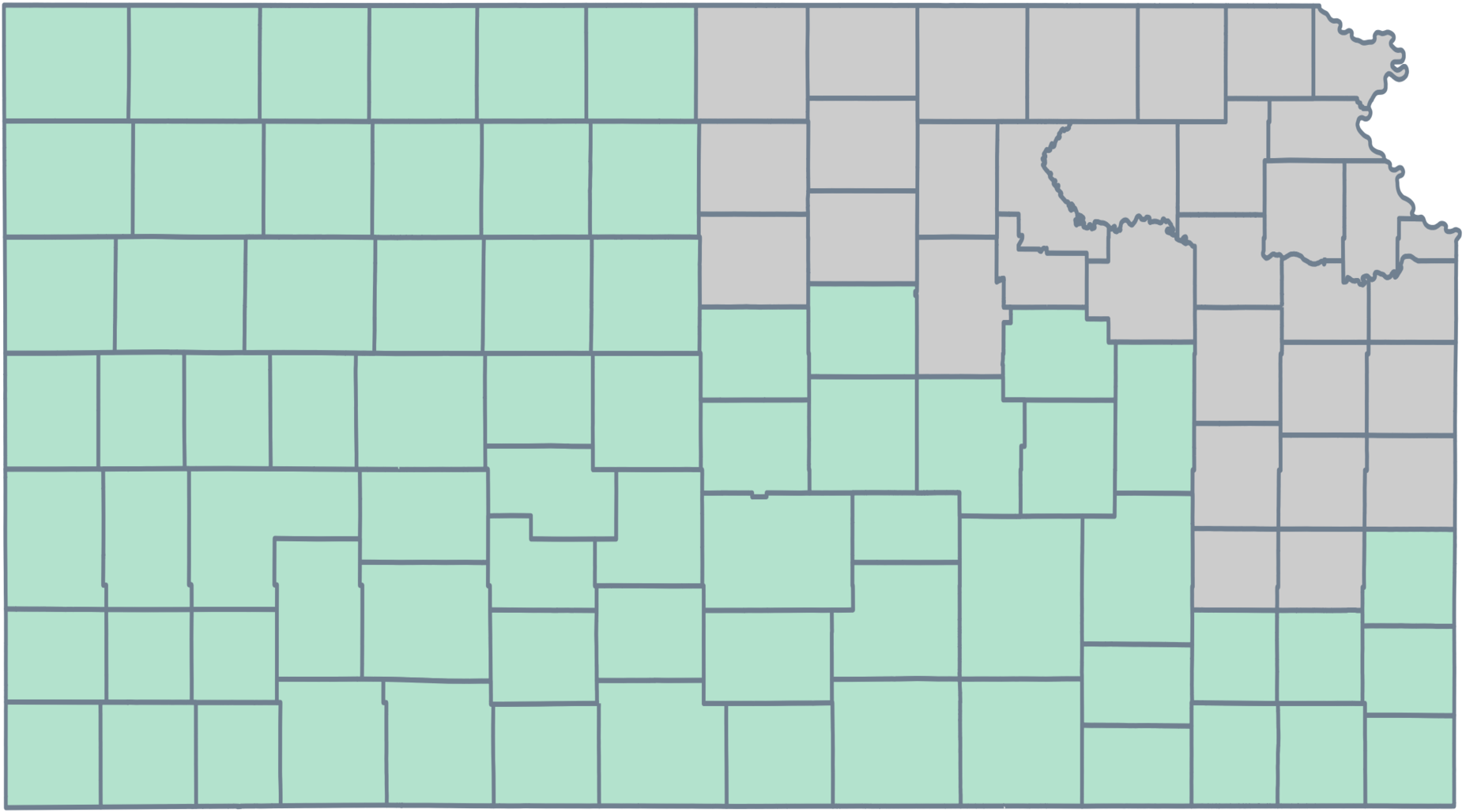} &
        \includegraphics[scale = .15]{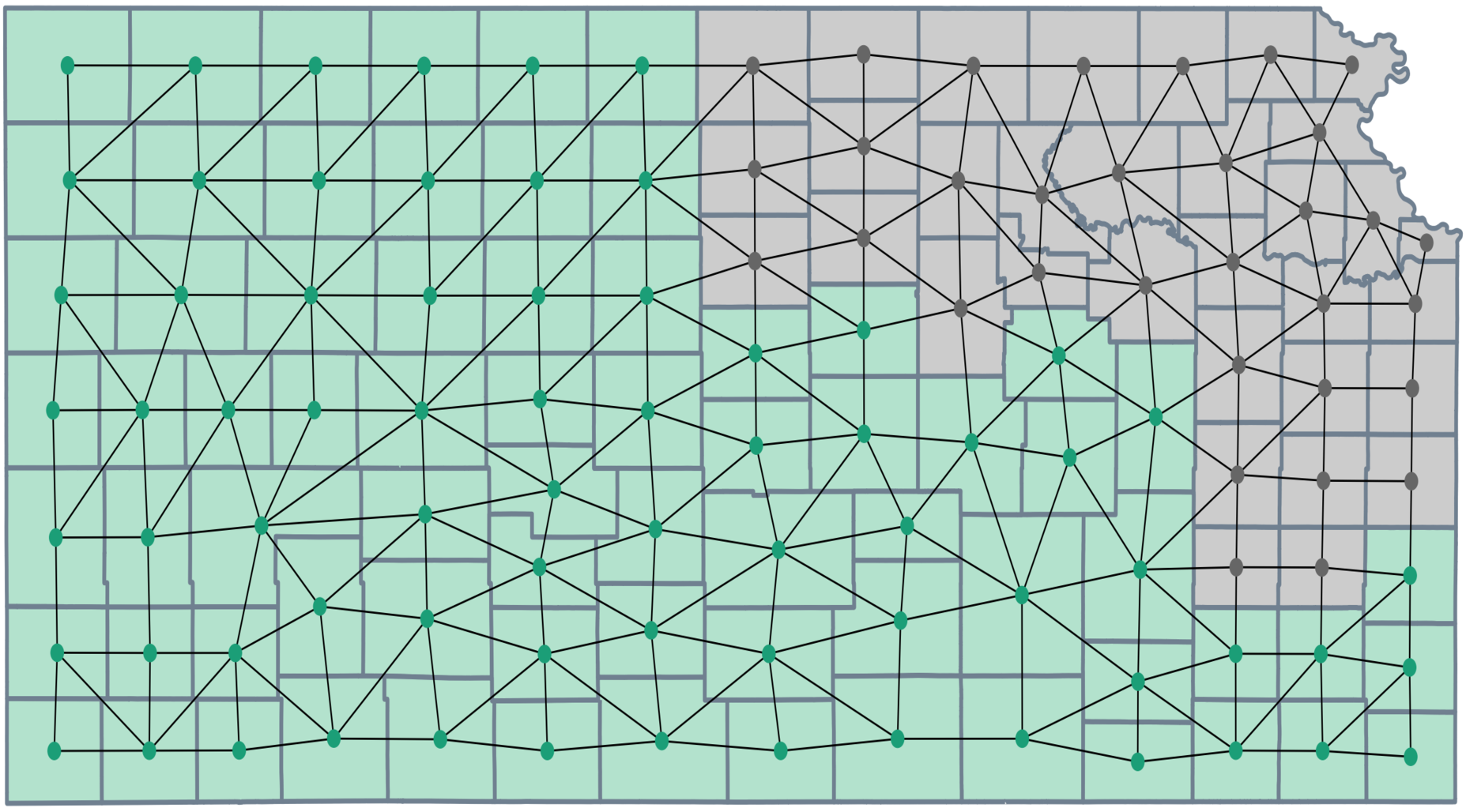}
        \\
        a) & b)
        \end{tabular}
        \caption{a) Kansas with county units \cite{shapefiles}, along with a connected $2$-partition. b) The corresponding state dual graph overlayed. Much more granular subdivisions of a state are often used. %
        }\label{fig:kansas}
    \end{figure}

In the United States, states are divided into small geographical units, such as in \Cref{fig:kansas}a);
these units are combined into voting districts, each of which elects a single representative. An assignment of these units to a district is called a districting plan. The units can be represented by the nodes of a graph, where units that share common boundaries are adjacent, as in \Cref{fig:kansas}. This graph is called the state dual graph%
. Assuming that voting districts must be contiguous, as is usually the case, a districting plan with $k$-districts is modeled by a connected $k$-partition of the state dual graph.

It was quickly observed \cite{martis2008original, hunter2011first} that by a clever choice of districting plan one could engineer aspects of electoral outcomes, a practice known as ``gerrymandering.'' In an effort to counteract this, there have been many proposals to design districting plans algorithmically, a process which often involves grappling with computationally intractable problems \cite{kueng2019fair,  altman1997automation}. The reality of redistricting, however, is that the power to draw the graph partition is in the hands of a legislature, dedicated committee, or hired expert---rather than a piece of software. For this reason, rather than using an algorithm to draw plans in the first place, some have suggested to analyze already drawn plans for compliance with civil rights law or desirability relative to alternatives.  

Arguments for or against districting plans are facilitated by understanding a plan in the context of what is possible.  For instance, an argument that a plan was drawn with the intent to discriminate might calculate that the proposed plan has more discriminatory properties than the vast majority of plans from a randomly generated collection of comparable plans; more specifically, the claim is that a particular map is an outlier compared to the other possibilities \cite{mathematiciansbrief} (see also \cref{section:EOH}).
This contextual approach requires sampling a \emph{diverse} \emph{ensemble} of plans that are compliant with the principles laid out by the governing body. %
A variety of algorithms have been proposed to sample ensembles of graph partitions for this purpose, from genetic algorithms \cite{cho_toward_2016, liu_pear:_2016} to random walks \cite{chikina_assessing_2017, herschlag_quantifying_2018}; recent expert reports in redistricting cases have used these tools to generate quantitative assessments of proposed plans \cite{chen1,herschlag_quantifying_2018,herschlag_evaluating_2017, pegden1}.

While random walk methods like \cite{mattingly1, herschlag_quantifying_2018} are guaranteed to sample from an explicitly designed distribution if run for long enough, practical computational constraints make it impossible to reach that point if there are no guarantees on the mixing time. %
On the other hand, algorithms like \cite{chen_unintentional_2013, magleby_new_2018, recomb}, which are
intended to generate a diverse set of partitions, sample from unknown distributions whose properties are hard to characterize.%

Thus, two critical open problems arise when relying on measurements derived from random ensembles of districting plans.  First, it is difficult to verify whether ensemble generation algorithms produce a statistically-representative sample from a targeted distribution. We study this problem by asking whether certain distributions over partitions are efficiently sampleable (\Cref{sec:initialintractability} and \Cref{Section:PositiveResults}), and whether certain sampling algorithms run efficiently (\Cref{Section:FlipChain} and \Cref{Section:Empirical}). Second, the qualitative properties of distributions over partitions are challenging to characterize, and it is difficult to determine the extent to which an outlier classification is affected by modelling decisions, including the choice of sampling algorithm or discretization. We study this problem in \Cref{section:phasetransitions} and \Cref{section:thechoiceofmodel}.

\section{Sampling Intractability}\label{sec:initialintractability}

In this section, we present our results about the intractability of various general sampling problems associated with connected $k$-partitions.  The key idea \cite{JVV} behind proving that some uniform sampling problem is intractable is to show that one can modify an algorithm that solves it into an algorithm that samples from the solutions to some hard problem. We begin by setting up some language (\Cref{section:preliminariesonintract}), some of which is standard, and some of which we have created to organize our results. Then, we review the intractability of uniformly sampling from $P_2(G)$ (\Cref{sec:initialintractability}). Next, we show that uniformly sampling from balanced connected $2$-partitions is intractable (\Cref{section:balancedhard}). Then, we show that uniformly sampling from $P_2(G)$ remains intractable under certain constraints on the topology of $G$ \Cref{section:maxhard}. Finally, we will show that, for any fixed $k$, certain generalizations of the uniform distribution on $P_k(G)$ are intractable to sample from (\Cref{section:kpartitionshard}).%

\subsection{Preliminaries on Intractability of Sampling}\label{section:preliminariesonintract}

In this section, we discuss some background on sampling problems, the class $\RP$, why $\RP \not = \NP$ is a reasonable assumption, and what it means for a sampling problem to be intractable. We also prove lemmas that will be used throughout.

The formalism for sampling problems, which goes back to at least \cite{JVV}, begins with a finite alphabet $\Sigma$ and a binary relation between words in this alphabet $R \subseteq \Sigma^* \times \Sigma^*$. We interpret $(x,y) \in R$ as asserting that $y$ is a solution to the instance $x$. For example, we can define a binary relation $R$ as those $(x,y)$ such that $x$ encodes a graph $G(x)$ and $y$ encodes the edges of a simple cycle of $G(x)$. We will consider only those relations that can be verified efficiently, which are called $p$-relations:

\begin{defn}[$p$-relations, \cite{JVV}] A relation $R \subseteq \Sigma^* \times \Sigma^*$ is a $p$-relation if there is a deterministic polynomial time Turing machine that recognizes $R \subseteq \Sigma^* \times \Sigma^*$ and if there is a polynomial $p$ such that $\forall x$, $(x,y) \in R$ implies that $|y| \leq p( |x|)$. We define $R(x) = \{y \in \Sigma^* : (x,y) \in R\}$.%
\end{defn}

Now we define the sampling problems we will be considering:

\begin{defn}[Family of $p$-distributions]
A family of $p$-distributions is defined by a $p$-relation $R$ and %
function $f : \Sigma^* \to \mathbb{Q}_{\geq 0}$. %
For each instance $x \in \Sigma^*$ with $R(x) \not = \emptyset$, we require that $f$ is not identically zero on $R(x)$. For such an instance $x$, we associate a probability distribution $p_x$ on $R(x)$, where $y \in R(x)$ has weight proportional to $f(y)$. The uniform distribution on $R$ is defined by taking $f$ to be identically $1$.
\end{defn}

\begin{defn}[Sampling problem]
To each family of $p$-distributions $(R,p_x)$, there is an associated sampling problem, which we also refer to as $(R,p_x)$:

\begin{computationalproblem}{ $P = (R,p_x)$ Sampling}

Input: $x \in \{ x \in \Sigma^* : R(x) \not = \emptyset \}$

Output: A sample drawn according to $p_x$.
\end{computationalproblem}
\end{defn}

Similar to approximation algorithms in the deterministic case, we can ask if Turing machine ``almost'' solves a sampling problem:

\begin{defn}[$\alpha$-almost solving a sampling problem]\label{def:alphaalmostsolve}
Suppose that $P = (R, p_X)$ is some sampling problem. Let $\alpha \in [0,1]$. We say that a probabilistic Turing machine $M$ \emph{$\alpha$-almost solves} \textsc{$P = (R,p_x)$ Sampling} if for all instances $X$ with $R(X) \not = \emptyset$, $M(X)$ accepts $X$ at least half the time and then outputs a sample from a distribution $q^M_X$, where $||q^M_X - p_X||_{TV} \leq \alpha$. In the case $\alpha = 0$, we say that $M$ solves the sampling problem.%
\end{defn}

We will use the complexity class $\RP$ to describe the intractability of a sampling problem.

\begin{defn}[The class $\RP$ \cite{arora2009computational}]
$\RP$ is the class of languages $L \subseteq \Sigma^*$ such that there is a polynomial time probabilistic Turing machine $M$ and a constant $\epsilon > 0$ so that, if $x \not \in L$, $M(x)$ always rejects, and if $x \in L$, $M(x)$ accepts with probability at least $\epsilon$.
\end{defn}

It is widely believed that $\RP \not = \NP$; this belief follows from the widely believed conjectures that $\NP \not = \Poly$ \cite{aaronson2016p} and $\BPP = \RP = \Poly$ \cite{impagliazzo1997p}. Based on this reasoning, and arguments in the style of \cite[Proposition 5.1]{JVV} or \cite[Theorem 1.17]{sinclair1988randomised}, which argue that there is likely no efficient algorithm for a sampling problem by showing that the existence of an efficient sampler would imply $\RP = \NP$, we make the following definition for when a sampling problem is intractable:

\begin{defn}[Intractability of a sampling problem]\label{defn:intractable}
We say that a sampling problem $P$ is \emph{intractable} on a language (or class) of instances $\mathscr{C}$ if for all $\alpha < 1$, the existence of a polynomial time probabilistic Turing machine that $\alpha$-almost uniformly samples from $P$ for all instances in $\mathscr{C}$ implies that $\RP = \NP$.
\end{defn}

\Cref{lem:luckyguess} below abstracts the repetitive part of most proofs showing that a sampling problem is intractable. %
To state it cleanly, we make the following definition, which takes a $p$-relation $Q$, contained inside a $p$-relation $S$, and describes the set of instances that have solutions:

\begin{defn}[Decision problem on a $p$-relation]
Let $S$ be a $p$-relation. A \emph{decision problem} in $S$ is a $p$-relation $Q$, such that $Q \subseteq S$, with an associated language $L_Q = \{ x \in \Sigma^* :  Q(x) \not = \emptyset\}$. If $\mathscr{C} \subseteq \Sigma^*$ is a language, and $L_Q(\mathscr{C}) := \{ x \in \mathscr{C} : Q(x) \not = \emptyset \}$ is $\NP$-complete, then we will say that $Q$ is a decision problem in the $p$-relation $S$ which is $\NP$-complete on the language $\mathscr{C}$. %
\end{defn}

\begin{lem}[Lucky guess lemma]\label{lem:luckyguess}
Consider some sampling problem $P = (R, p_x)$. %
Let $Q$ be some decision problem in a $p$-relation $S$, which is $\NP$-complete on the language $\mathscr{C}$. %
Suppose the following assumptions hold for some polynomials $p_m(n)$, $m \in \mathbb{N}_{\geq 1}$:
\begin{itemize}
    \item There is a $p_m$-time Turing machine $B_m$ such that for any instance $x \in \mathscr{C}$, $B_m$ constructs some $B_m(x) \in \Sigma^*$ with $R(B_m(x)) \not = \emptyset$.
    \item There is another $p_m$-time Turing machine $M_m$ that computes a map $\pi_m : R(B_m(x)) \to S(x)$.
    \item{(Probability Concentration)} If $|Q(x)| \geq 1$ and if $C$ is a random variable distributed according to $P$ on $R(B_m(x))$, then $\mathbb{P} ( \pi_m(C) \in Q(x) ) \geq 1 - \nicefrac{1}{m}.$
\end{itemize}
Then, $P$ is intractable on the language $B(\mathscr{C}) = \{B(x) : x \in \mathscr{C} \}$.
\end{lem}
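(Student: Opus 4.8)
The plan is to establish the contrapositive flavor directly: assume there is a polynomial-time probabilistic Turing machine $A$ that $\alpha$-almost uniformly samples from $P$ on all instances in $B(\mathscr{C})$ for some fixed $\alpha < 1$, and build from it a polynomial-time randomized algorithm deciding the $\NP$-complete language $L_Q(\mathscr{C})$, which forces $\RP = \NP$. Since $B(\mathscr{C}) = \{B(x) : x \in \mathscr{C}\}$ and intractability is indexed by this class, the hypotheses give us, for each $m$, the machines $B_m$ and $M_m$; the first step is to fix $m$ large enough (depending only on $\alpha$) that $1 - \nicefrac{1}{m}$ beats the total-variation error $\alpha$ with room to spare — concretely one wants $\nicefrac{1}{m} + \alpha < 1$, so $m > \nicefrac{1}{(1-\alpha)}$ suffices, and this $m$ is a constant, so $B_m$, $M_m$, and $A$ are all honest polynomial-time machines with a single composed polynomial running time.

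Next I would describe the decision procedure on input $x \in \mathscr{C}$: run $B_m$ to produce $B_m(x)$, which by the first hypothesis satisfies $R(B_m(x)) \neq \emptyset$ so $A$ is obligated to behave on it; run $A$ on $B_m(x)$; if $A$ rejects, output ``reject''; if $A$ accepts it returns a sample $C$ from a distribution $q$ with $\|q - p_{B_m(x)}\|_{TV} \le \alpha$, and we then apply $M_m$ to compute $\pi_m(C) \in S(x)$ and check, using the polynomial-time recognizer for the $p$-relation $Q$, whether $(x, \pi_m(C)) \in Q$; if yes, output ``accept'', otherwise output ``reject''. For correctness: if $x \notin L_Q(\mathscr{C})$ then $Q(x) = \emptyset$, so $(x, \pi_m(C)) \in Q$ can never hold and we always reject — this is the one-sided-error requirement of $\RP$. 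If $x \in L_Q(\mathscr{C})$, then $|Q(x)| \ge 1$; under the true distribution $p_{B_m(x)}$, the probability-concentration hypothesis gives $\mathbb{P}(\pi_m(C) \in Q(x)) \ge 1 - \nicefrac{1}{m}$, and transferring to the sampled distribution $q$ via the total-variation bound costs at most $\alpha$, so conditioned on $A$ accepting, $\mathbb{P}(\pi_m(C) \in Q(x)) \ge 1 - \nicefrac{1}{m} - \alpha > 0$; since $A$ accepts with probability at least $\nicefrac{1}{2}$, the overall acceptance probability is at least $\nicefrac{1}{2}(1 - \nicefrac{1}{m} - \alpha)$, a positive constant $\epsilon$, which is exactly what $\RP$ demands.

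Having built a polynomial-time $\RP$-type machine for the $\NP$-complete language $L_Q(\mathscr{C})$, I would conclude $\NP \subseteq \RP$, hence $\RP = \NP$, which contradicts nothing but is precisely the consequence \Cref{defn:intractable} asks us to derive; as this holds for every $\alpha < 1$, the sampling problem $P$ is intractable on $B(\mathscr{C})$. The only genuinely delicate point — and the one I would write most carefully — is the interaction between the acceptance event of $A$ and the total-variation transfer: $\alpha$-almost solving guarantees the conditional output distribution given acceptance is within $\alpha$ of $p_{B_m(x)}$, so the transfer must be applied to that conditional law rather than to some unconditional one, and one must separately use the ``accepts at least half the time'' clause to lower-bound the acceptance probability. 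A secondary bookkeeping point is confirming that composing the constant-index machines $B_m$ and $M_m$ with $A$ stays polynomial, which is immediate once $m$ is fixed but deserves a sentence.
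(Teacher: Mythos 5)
Your proposal is correct and follows essentially the same route as the paper's proof: build the one-sided-error decision procedure $B_m \to A \to M_m \to$ verifier for $Q$, use the probability-concentration bound together with the total-variation transfer to lower-bound the true-positive rate by a positive constant, and conclude $\RP = \NP$. The only cosmetic difference is the choice of $m$ (the paper takes $m = \lceil 2/(1-\alpha)\rceil$ so the success probability is cleanly $\geq \nicefrac{1}{m}$, whereas you take any $m > \nicefrac{1}{1-\alpha}$ and keep the constant $\tfrac{1}{2}(1-\nicefrac{1}{m}-\alpha)$); your explicit handling of the acceptance event is a point the paper leaves implicit.
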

\begin{proof}

Fix $\alpha < 1$, and take $m = \ceil{ \frac{2}{1 - \alpha}} $. We fix $B = B_m, M = M_m$. Assume that there exists a polynomial time probabilistic Turing machine $G$ that $\alpha$-almost solves $P$ on $B( \mathscr{C})$. We claim that \Cref{alg:LuckyGuess} gives an $\RP$-algorithm for $L_Q(\mathscr{C})$. %
\Cref{alg:LuckyGuess} runs in polynomial time, since for any $X \in \mathscr{C}$, constructing $B(X)$, sampling $C$ with $G$ and computing $\pi(C)$ with $M$ takes time polynomial in $|X|$. %
Thus, we only have to prove that the algorithm succeeds with the correct error bounds. Since \Cref{alg:LuckyGuess} clearly has no false positives, we only need to check that there is a constant lower bound on the true positive rate. We will show that if $|Q(X)| \geq 1$, then the probability of success is at least $\nicefrac{1}{m}$. %
Suppose that $q_Y$ is the distribution over $R(Y)$ of outputs of $G$ on input $Y$. Suppose that $A = \{ C \in R(B(X)) : \pi(C) \in Q(X) \}$. Since $\|p_{B(X)} - q_{B(X)}||_{TV} < \alpha$, and in particular $p_{B(X)}(A) - q_{B(X)}(A) < \alpha$, it follows that $q_{B(X)}(A) > p_{B(X)}(A) - \alpha
\geq 1 - \nicefrac{1}{m} - \alpha \geq \nicefrac{1}{m}$.  Hence, with probability at least $\nicefrac{1}{m}$, the sample drawn by $G$ from $R(B(X))$ will land in $A$. In other words, if $|Q(X)| \geq 1$, then \Cref{alg:LuckyGuess} will answer YES with probability at least $\nicefrac{1}{m}$. Since $L_Q(\mathscr{C})$ is $\NP$-complete, it would follow that $\NP = \RP$. Since this argument holds for all $\alpha < 1$, $P$ is intractable on $B(\mathscr{C})$.
\end{proof}

\begin{algorithm}[H]
\caption{Lucky Guess}\label{alg:LuckyGuess}
\textbf{Input:}  $G, M, B$ and $x \in \mathcal{C}$ as in the proof of \Cref{lem:luckyguess}.
\begin{algorithmic}[1]
\STATE{Construct $B(x)$}
\STATE{Let $C$ be the output of $G$ on $B(x)$}
\IF{$M(C) \in Q(x)$} \STATE{ return \texttt{YES}} \ELSE \STATE{ Return \texttt{NO}.}  \ENDIF
\end{algorithmic}
\end{algorithm}

Certain calculations appear repeatedly when checking the probability concentration hypothesis of \Cref{lem:luckyguess}. We isolate them here:

\begin{lem}\label{lem:HgeqDN}
If $H, N \geq 0$ and $H \geq D N$ for some $ D > 0$, then $\frac{H}{H + N} \geq \frac{D}{1 + D}$.
\end{lem}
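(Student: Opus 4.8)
The plan is to reduce the claimed inequality directly to the hypothesis $H \geq DN$ by clearing denominators. First I would dispose of the degenerate case: if $H + N = 0$ then $H = N = 0$ and the quotient $\frac{H}{H+N}$ is undefined, so I would assume (as is implicit in the statement, since this lemma is applied with $H+N$ a positive count of solutions) that $H + N > 0$. Then both $H+N$ and $1+D$ are strictly positive, so $\frac{H}{H+N} \geq \frac{D}{1+D}$ is equivalent, after cross-multiplying, to $H(1+D) \geq D(H+N)$, i.e.\ to $H + HD \geq DH + DN$, i.e.\ to $H \geq DN$, which is exactly the hypothesis. That completes the argument.

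A more transparent variant, which I might present instead, is to note that $t \mapsto \frac{t}{t+N} = 1 - \frac{N}{t+N}$ is nondecreasing for $t \geq 0$ whenever $N \geq 0$. Since $H \geq DN \geq 0$, monotonicity gives $\frac{H}{H+N} \geq \frac{DN}{DN+N}$; when $N > 0$ the right-hand side simplifies to $\frac{D}{D+1}$, and when $N = 0$ the left-hand side equals $1 \geq \frac{D}{1+D}$ directly, so the bound holds in all cases.

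There is essentially no obstacle here: the only points requiring a word of care are the boundary behavior when $N = 0$ (where $\frac{DN}{DN+N}$ is a $0/0$ form) and the trivial edge case $H = N = 0$, both handled as above. In the intended applications $H$ and $N$ will be cardinalities of disjoint nonempty sets, so $H + N > 0$ always holds and the first computation applies verbatim.
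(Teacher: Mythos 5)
Your argument is correct: cross-multiplying by the positive quantities $H+N$ and $1+D$ reduces the claim exactly to the hypothesis $H \geq DN$, and your handling of the degenerate case $H+N=0$ is appropriate. The paper states this lemma without proof (it is treated as an elementary calculation), so there is nothing to compare against; your verification, including the monotonicity variant, is a perfectly standard and complete justification.
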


\begin{lem}\label{lem:polylarge}
Fix $q  \geq 2$. Then for any $e \in \mathbb{N}$ and $S \geq 1$ if $d \geq 2 $ and
$d \geq 4 (\frac{ \log_2(S) + e}{ \log_2(q)})^2$ then $q^d \geq S d^e$.

\end{lem}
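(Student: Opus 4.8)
The plan is to pass to base-$2$ logarithms. Since $q\ge 2$ we have $\log_2 q\ge 1>0$, and since $S\ge 1$ we have $\log_2 S\ge 0$, so $q^d\ge Sd^e$ is equivalent to
\[
d\log_2 q \;\ge\; \log_2 S + e\log_2 d.
\]
Writing $T:=\dfrac{\log_2 S+e}{\log_2 q}\ge 0$, the hypothesis $d\ge 4\bigl(\tfrac{\log_2 S+e}{\log_2 q}\bigr)^2$ reads $d\ge 4T^2$, i.e.\ $\sqrt d\ge 2T$, and substituting $\log_2 S+e=T\log_2 q$ the target inequality becomes
\[
(d-T)\log_2 q \;\ge\; e(\log_2 d-1).
\]
I would then bound the two sides of this separately.

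For the left side: from $\sqrt d\ge 2T$ and $d\ge 1$ (so $\sqrt d\le d$) we get $T\le\tfrac12\sqrt d\le\tfrac12 d$, hence $d-T\ge\tfrac12 d$. For the right side: since $\log_2 S\ge 0$ we have $T\ge\dfrac{e}{\log_2 q}$, which together with $T\le\tfrac12\sqrt d$ gives $e\le\tfrac12\sqrt d\,\log_2 q$. Combining these with the elementary bound $\log_2 d-1\le\sqrt d$ (discussed next) yields, for $d\ge 2$ so that $\log_2 d-1\ge 0$,
\[
e(\log_2 d-1)\;\le\;\tfrac12\sqrt d\,\log_2 q\cdot\sqrt d\;=\;\tfrac12 d\log_2 q\;\le\;(d-T)\log_2 q,
\]
which is exactly what is needed; the case $e=0$ is immediate since $d\ge T$ and $\log_2 q\ge 0$. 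The various uses of $\sqrt d\ge 2T$ here are precisely where the constant $4$ in the hypothesis is consumed.

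The only point requiring care is the elementary inequality $\log_2 d-1\le\sqrt d$ for $d\ge 1$: the cruder statement $\log_2 d\le\sqrt d$ is in fact \emph{false} for $4<d<16$, so the $-1$ cannot be dropped --- which is why it is convenient to run the reduction in the form $(d-T)\log_2 q\ge e(\log_2 d-1)$, where the $-1$ appears on its own from $\log_2 S+e=T\log_2 q$. Via the substitution $x=\sqrt d$ the inequality $\log_2 d-1\le\sqrt d$ is equivalent to $x^2\le 2\cdot 2^{x}$, which holds on all of $(0,\infty)$ because a short calculus check gives $\max_{x>0} x^2 2^{-x}=\tfrac{4}{e^2(\ln 2)^2}<2$. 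I expect this observation to be the main (and essentially only nontrivial) obstacle; everything else is bookkeeping with the three estimates $T\le\tfrac12\sqrt d$, $e\le\tfrac12\sqrt d\,\log_2 q$, and $d-T\ge\tfrac12 d$.
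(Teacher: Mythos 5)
Your proof is correct, and it is essentially the paper's argument: both take base-$2$ logarithms, use the hypothesis in the form $\sqrt{d}\ge 2\,\frac{\log_2 S+e}{\log_2 q}$, and close the gap with a ``$\log_2 d$ grows like at most $\sqrt{d}$'' estimate (the paper invokes $\tfrac{d}{\log_2 d}\ge\tfrac12\sqrt d$, i.e.\ $\log_2 d\le 2\sqrt d$, where you use the variant $\log_2 d-1\le\sqrt d$). Your bookkeeping is more explicit than the paper's two-line sketch but follows the same route.
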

\begin{proof}

It suffices to pick $d$ so that $\frac{ d}{ \log(d)} \geq \frac{ \log(S) + e } { \log(q)}$. Since $\frac{d}{ \log_{2}(d)} \geq \frac{1}{2}\sqrt{d}$ for $d \geq 2$, the claim follows.

\end{proof}

\subsection{Intractability of Uniformly Sampling Simple Cycles}

The seminal paper \cite{JVV} proves that the following sampling problem is intractable:

\begin{computationalproblem}{GenDirectedCycle}
Input: A directed graph $G$.

Output: An element of the set of directed simple cycles of $G$, selected uniformly at random.
\end{computationalproblem}

\begin{figure}
    \centering
    \begin{tabular}{cc}
    \def\svgscale{.3}{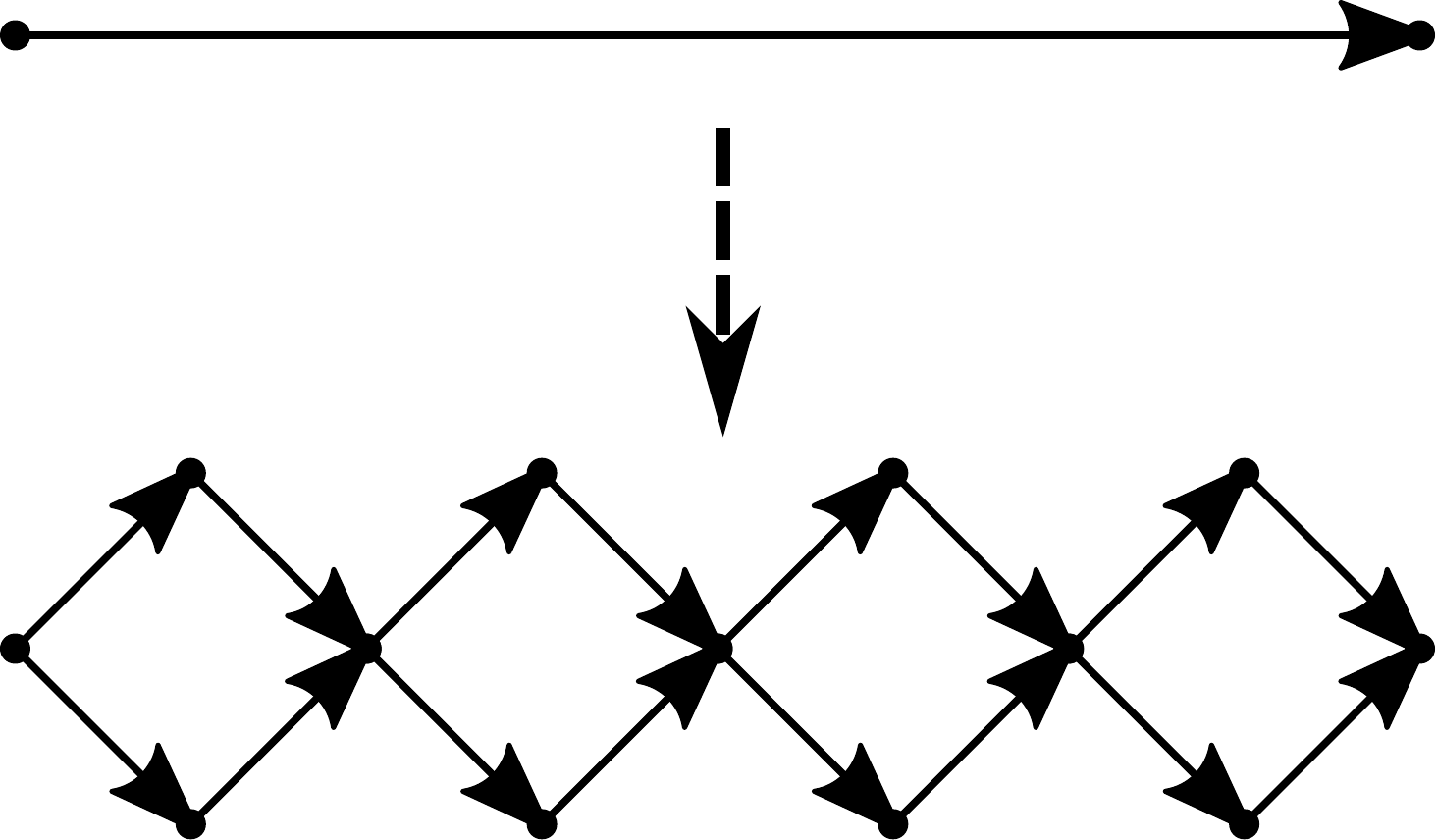 }
    &
    \raisebox{.23cm}{
    \def\svgscale{.3}{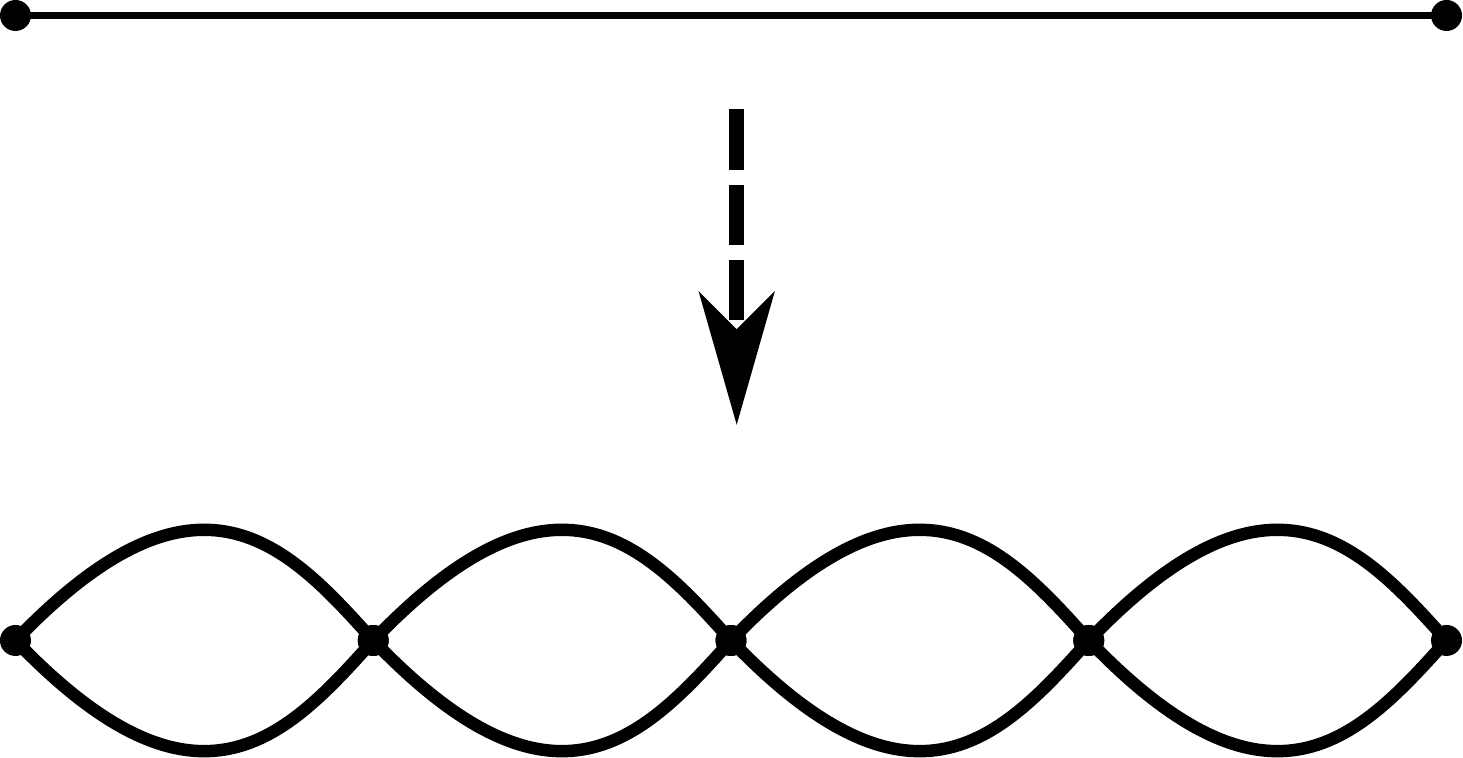}}
    \\
    (a) Directed & (b) Undirected
    \end{tabular}
    \caption{The directed version from \cite{JVV} and its undirected bigons counter part.}
    \label{fig:chainofdiamonds}
\end{figure}

\begin{thm}[\cite{JVV}, Proposition 5.1]\label{thm:gendirectedcycle}
The sampling problem \textsc{GenDirectedCycle} is intractable on the class of directed graphs.\footnote{We want to acknowledge a \href{https://cstheory.stackexchange.com/a/41367/44995}{helpful Stack Exchange conversation} \cite{hengguostack} with Heng Guo that alerted us to this theorem.}
\end{thm}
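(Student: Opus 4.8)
The plan is to apply the Lucky Guess Lemma (\Cref{lem:luckyguess}) to the classical reduction of \cite{JVV} from directed Hamiltonicity. Let $S$ be the $p$-relation pairing (an encoding of) a directed graph $G$ with (an encoding of) one of its directed simple cycles, so that \textsc{GenDirectedCycle} is precisely the uniform distribution on $S$. Let $Q \subseteq S$ be the sub-relation pairing $G$ with one of its directed Hamiltonian cycles, so that $L_Q$ is the language of Hamiltonian digraphs. Directed Hamiltonicity is $\NP$-complete, and it remains $\NP$-complete on the class $\mathscr{C}$ of digraphs that contain at least one directed cycle: given an arbitrary digraph one can test acyclicity in polynomial time and, in the acyclic case, output instead a fixed small cyclic non-Hamiltonian digraph. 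Working inside $\mathscr{C}$ guarantees that the gadget graphs constructed below have at least one simple cycle, which is exactly what the first hypothesis of \Cref{lem:luckyguess} (that $R(B_m(x)) \neq \emptyset$) demands.

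For each $m$, let $B_m$ be the ``chain of directed diamonds'' construction of \cite{JVV} depicted in \Cref{fig:chainofdiamonds}(a): given $G$ on $n$ vertices, replace every edge $(u,v)$ by a series of $\ell_m$ diamonds on fresh internal vertices, so that the gadget carries exactly $2^{\ell_m}$ pairwise internally disjoint directed $u$-to-$v$ paths and creates no new directed cycle. Because in such a gadget every internal vertex lies on a $u$-to-$v$ path and on no other, a directed simple cycle of $B_m(G)$ that enters a gadget must enter at $u$ and leave at $v$, traversing it completely; contracting each gadget back to its edge therefore sends directed simple cycles of $B_m(G)$ to directed simple cycles of $G$. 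Call this map $\pi_m$, and let $M_m$ be the (clearly polynomial-time) machine that computes it. Since the gadgets are vertex-disjoint and the route through each one is chosen independently, a fixed directed simple cycle of $G$ using $k$ edges has a $\pi_m$-fiber of size exactly $2^{\ell_m k}$.

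It remains to verify probability concentration and to pin down $\ell_m$. A Hamiltonian cycle of $G$ consists of exactly $n$ edges, while every non-Hamiltonian directed simple cycle uses at most $n-1$ edges; and since a directed simple cycle is determined by its edge set, $G$ has at most $2^{|E(G)|}\le 2^{n^2}$ simple cycles in total. Hence, if $G \in \mathscr{C}$ is Hamiltonian and $C$ is a uniformly random directed simple cycle of $B_m(G)$, then
\[
\mathbb{P}\big(\pi_m(C)\notin Q(G)\big)\;\le\;\frac{2^{n^2}\cdot 2^{\ell_m(n-1)}}{2^{\ell_m n}}\;=\;\frac{2^{n^2}}{2^{\ell_m}}.
\]
Choosing $\ell_m$ to be any integer at least $n^2+\log_2 m$ — polynomial in $n$ for each fixed $m$, so that $B_m$ and $M_m$ run in polynomial time (one can also select $\ell_m$ via \Cref{lem:polylarge}) — makes the right-hand side at most $\nicefrac{1}{m}$. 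All three hypotheses of \Cref{lem:luckyguess} are now satisfied, so \textsc{GenDirectedCycle} is intractable on $B(\mathscr{C})$, and hence on the larger class of all directed graphs, since any sampler that succeeds on every directed graph in particular succeeds on every graph in $B(\mathscr{C})$.

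The crux — and the only place real care is needed — is the gadget bookkeeping in the second paragraph: one must design the diamond chains so that (i) no spurious directed cycle is created inside or across gadgets, (ii) the internal vertices of distinct edge-gadgets are disjoint, so the route choices really are independent, and (iii) $\pi_m$ is well defined on \emph{every} directed simple cycle of $B_m(G)$ with fibers of the claimed size $2^{\ell_m k}$. Once this is nailed down, the amplification inequality $2^{\ell_m n}\gg 2^{n^2}\cdot 2^{\ell_m(n-1)}$ and the final reduction to $\RP=\NP$ follow immediately from \Cref{lem:luckyguess}.
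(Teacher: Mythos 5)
Your argument is correct and is essentially the approach the paper takes: the paper defers this proof to \cite{JVV} with exactly the sketch you flesh out (chains of directed diamonds concentrating mass on the longest cycles), and your write-up mirrors almost step for step the paper's own proof of the undirected analogue (\Cref{prop:SimpleCycleUniformHard}) — the $2^{\ell_m k}$ fiber count, the crude $2^{n^2}$ bound on the number of simple cycles, and an application of \Cref{lem:luckyguess}, with the added (and warranted) care of restricting to digraphs containing a cycle so that $R(B_m(x))\neq\emptyset$. One wording quibble: the $2^{\ell_m}$ routes through a diamond chain are not pairwise internally disjoint (they share the junction vertices between consecutive diamonds, and two routes agreeing on a diamond share its branch vertex), but nothing in your argument uses disjointness — only that a simple cycle entering the gadget at $u$ must traverse exactly one of the $2^{\ell_m}$ routes to $v$.
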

\noindent We refer the reader to the original paper for the proof. The idea is to concentrate probability on longer cycles by replacing edges with chains of diamonds, as in \Cref{fig:chainofdiamonds}(a), which has the effect of increasing the number of ways to traverse a cycle by a quantity which grows at a exponential rate proportional to the length of the cycle.

Similarly, one can consider an undirected, plane graph version of the same problem:%

\begin{computationalproblem}{GenSimpleCycle}
Input: An undirected, plane graph $G$.

Output: An element of $SC(G)$, selected uniformly at random.
\end{computationalproblem}

Although similar to \Cref{thm:gendirectedcycle}, will include the proof that \textsc{GenSimpleCycle} is intractable on the class of plane graphs as a preface to our other results. To do so, we formalize an analog of the chain of diamonds construction:

\begin{defn}[Chain of bigons, projection map]\label{def:chainofbigons}
Let $G$ be an (undirected) graph. Let $B_d(G)$ denote the graph obtained from $G$ by replacing each edge by a \emph{chain of $d$ bigons}.  Formally, we subdivide each edge into $d$ edges and then add a parallel edge for every edge. %
This construction is illustrated in \Cref{fig:chainofdiamonds}(b). There is a natural map $ \pi_d : SC(B_d(G)) \to 2^{E(G)}$, which collapses the chains of bigons. Formally, $\pi_d(C) = \{ e \in E(G) : B_d(e) \cap C \not = \emptyset \} $.\footnote{For there to be a Turing machine $M$ as in \cref{lem:luckyguess} that simulates $\pi_d$, the collection of $B_d(e)$ for $e \in E(G)$ has to be part of the encoding of the graph $B_d(G)$ as a string of $\Sigma^*$, and $B$ must construct an encoding of $B_d(G)$ with this information. We assume that all of this is true, and make sense of $SC(B_d(G))$ and similar objects in terms of the underlying graph. In general we will omit discussion of this level of detail.}
\end{defn}

\begin{prop}[Adaptation of \cite{JVV}, Proposition 5.1]\label{prop:SimpleCycleUniformHard}
\textsc{GenSimpleCycle} is intractable on the class of simple plane graphs.
\end{prop}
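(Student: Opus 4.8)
The plan is to apply the Lucky Guess Lemma (\cref{lem:luckyguess}) with $P$ taken to be the \textsc{GenSimpleCycle} sampling problem, $S$ the $p$-relation pairing a plane graph with subsets of its edges, and $Q$ a decision problem that is $\NP$-complete on a class $\mathscr{C}$ of plane graphs. A natural choice is Hamiltonicity: let $Q$ pair a plane graph $G$ with those subsets of $E(G)$ that are Hamiltonian cycles, so that $L_Q(\mathscr{C})$ is the Hamiltonian-cycle problem, which is $\NP$-complete even on (say) $3$-connected cubic planar graphs. The construction $B_m$ will be the chain-of-bigons operator: $B_m(G) = B_{d}(G)$ for a suitable polynomially-bounded $d = d(m, |G|)$, which is clearly computable in polynomial time and preserves planarity. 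The map $\pi_m$ is the projection $\pi_d : SC(B_d(G)) \to 2^{E(G)}$ of \cref{def:chainofbigons}, also polynomial-time computable, and $R(B_d(G)) \neq \emptyset$ since $B_d(G)$ certainly contains simple cycles. So the only real content is verifying the Probability Concentration hypothesis.

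The heart of the argument is a counting estimate. First I would observe that $\pi_d$ maps $SC(B_d(G))$ onto the set of edge-subsets of $G$ that form a single simple cycle: a simple cycle $C$ in $B_d(G)$ projects to a simple cycle of $G$ because within each chain of bigons $C$ must either avoid it entirely or traverse it monotonically, and at each original vertex $C$ uses exactly two incident chains. Conversely, given a simple cycle $\gamma$ of $G$ with $\ell = |\gamma|$ edges, the number of simple cycles $C$ of $B_d(G)$ with $\pi_d(C) = \gamma$ is exactly $2^{d\ell}$: each of the $d$ bigons along each of the $\ell$ edges of $\gamma$ offers an independent binary choice of which of its two parallel edges to use. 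Hence if $\gamma$ is a Hamiltonian cycle (length $n := |V(G)|$) it contributes $2^{dn}$ preimages, while every non-Hamiltonian simple cycle has length at most $n-1$ and contributes at most $2^{d(n-1)}$ preimages. The number of simple cycles of $G$ is at most, crudely, $n! \le n^n$, so the total mass on non-Hamiltonian preimages is at most $n^n \cdot 2^{d(n-1)}$, whereas (assuming $G$ is Hamiltonian, i.e.\ $|Q(G)|\ge 1$) the mass on Hamiltonian preimages is at least $2^{dn}$.

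To finish, I want $\frac{H}{H+N} \ge 1 - \tfrac1m$ where $H = 2^{dn}$ (lower bound on Hamiltonian mass, restricting to one Hamiltonian cycle suffices) and $N = n^n 2^{d(n-1)}$. By \cref{lem:HgeqDN} it is enough to have $H \ge m N$, i.e.\ $2^{dn} \ge m \, n^n \, 2^{d(n-1)}$, equivalently $2^{d} \ge m \, n^n$. This is exactly the situation handled by \cref{lem:polylarge} with $q = 2$, $S = m\,n^n$ (or rather $S = m$, $e = n$ after writing $n^n$ as $S' d^e$ — one should be slightly careful and instead apply it with the roles chosen so that the $n^n$ term is absorbed), giving a choice of $d$ that is polynomial in $n$ and $\log m$, hence polynomial in $|G|$ for each fixed $m$. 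Then $B_m$ runs in polynomial time and all hypotheses of \cref{lem:luckyguess} hold, so \textsc{GenSimpleCycle} is intractable on $B(\mathscr{C})$, which is a class of simple plane graphs.

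The main obstacle I anticipate is bookkeeping around the hypotheses rather than any deep difficulty: ensuring $B_d(G)$ is \emph{simple} (a chain of bigons is a multigraph, so one must subdivide once more, replacing each bigon by a $4$-cycle, and redo the counting with the constant changed — each such unit still gives an independent binary choice, so $2^{d\ell}$ becomes $2^{d\ell}$ up to harmless constants and the estimate survives); confirming that $\pi_d$ really lands in $S(x)$ and is correctly simulable by a Turing machine given the encoding conventions flagged in the footnote to \cref{def:chainofbigons}; and applying \cref{lem:polylarge} with the $n^n$ factor correctly folded into its parameters so that $d$ comes out polynomial. None of these is serious, which is why the proposition is stated as an adaptation of \cite{JVV}.
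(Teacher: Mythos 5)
Your proposal follows the paper's proof essentially step for step: the Lucky Guess lemma with $Q$ the Hamiltonian cycles, $B$ the chain-of-bigons construction, $\pi_d$ the collapsing map, and the $2^{dn}$ vs.\ $2^{d(n-1)}$ comparison closed by \cref{lem:HgeqDN} and \cref{lem:polylarge}. One claim in your counting step is not quite right, though it does not sink the argument: $\pi_d$ does \emph{not} map $SC(B_d(G))$ onto the set of simple cycles of $G$. A single bigon inside a chain is itself a simple cycle of $B_d(G)$, and it projects to a single edge of $G$; your dichotomy that a cycle either avoids a chain or traverses it monotonically misses the case where the cycle is entirely contained in one chain. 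The paper records $\im(\pi_d)=SC(G)\cup E(G)$ and notes each edge has exactly $d$ such degenerate preimages; these $d\,|E(G)|$ extra cycles must be folded into the non-Hamiltonian mass $N$, which is harmless since $d\,|E(G)|\le dn^2$ is dwarfed by $2^{d(n-1)}$, but the bookkeeping should acknowledge them. Your handling of simplicity (subdividing each bigon into a $4$-cycle inside the gadget) is a cosmetic variant of the paper's, which instead subdivides the \emph{input} graph so that a sampler for simple plane graphs yields one for plane multigraphs before the gadget is ever applied; either organization is fine.
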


\begin{proof} %
Given a polynomial time probabilistic Turing machine $M$ which $\alpha$-solves  \textsc{GenSimpleCycle} on the class of simple plane graphs, we obtain one that $\alpha$-almost solves  \textsc{GenSimpleCycle} on the class of plane graphs by subdividing each edge of a given graph. We thus let $\mathscr{C}$ be the class of plane graphs, and we will now work on checking that the conditions for \Cref{lem:luckyguess} can be satisfied. For $G \in \mathscr{C}$, with $n = |V(G)$, we fix $m$ and take $d = n^2 + m$. Observe that  $\im ( \pi_d ) = SC(G) \cup E(G)$. If $X \in SC(G)$ is a simple cycle with $m$ edges, then $| \pi_d^{-1}(X) | = 2^{dm}$. For $e \in E(G)$, $|\pi_d^{-1}(e)| = d$. Thus, if $\mathscr{H} \subset SC(G)$ is the set of Hamiltonian cycles of $G$, and $|\mathscr{H}| \not = 0$, then $|\pi_d^{-1} (\mathscr{H})| \geq 2^{dn} \geq 2^{m} 2^{n^2} 2^{d(n - 1)} \geq 2^m |\pi_d^{-1} ( 2^{E(G)} \setminus \mathscr{H})|$; here we have used $2^{n^2}$ as a crude upper bound on $|SC(G) \cup E(G)|$. Thus, by \Cref{lem:HgeqDN}, $\frac{|\pi_d^{-1} (\mathscr{H})|}{ |SC( B_d(G))|} \geq \frac{2^m}{1 + 2^m} \geq 1 - \nicefrac{1}{m}$. Define polynomial-time Turing machines $B$ and $M$ so that $B(G) = B_d(G)$ and $M(C) = \pi_d(C)$, and set $S(G) = E(G) \cup SC(G)$, $Q(G) = \textrm{HamiltonianCycles}(G)$ and $R(G) = SC(G)$ for all $G \in \mathscr{C}$. Since the Hamiltonian cycle problem is $NP$-complete on $\mathscr{C}$ \cite{garey_plane_1976}, the conditions of \Cref{lem:luckyguess} are satisfied.

\end{proof}

\subsection{Intractability of Sampling from $P_2(G)$}\label{section:basichard}

As we discussed in \Cref{section:CongressionalMotivation}, we are interested in the problem of uniformly sampling connected $2$-partitions.

\begin{computationalproblem}{GenConnected2Partition}
Input: A graph $G$.

Output: An element of $P_2(G)$, selected uniformly at random.
\end{computationalproblem}

We recall a fact about plane duality, which will connect \Cref{prop:SimpleCycleUniformHard} above to \textsc{GenConnected2Partition}.
\begin{thm}[\cite{Erickson}]\label{duality}
Let $G$ be a plane graph and $G^*$ its plane dual. Then, there is an polynomial time computable bijection between $SC(G^*)$ and $\mathscr{P}_2(G)$.\footnote{We wish to acknowledge a helpful \href{https://cstheory.stackexchange.com/a/41272/44995}{Stack Exchange discussion} with Mikhail Rudoy which first drew our attention to this theorem \cite{firstduality}.}
\end{thm}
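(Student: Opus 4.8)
\emph{Proof proposal.} The plan is to route the bijection through the set of \emph{bonds} (minimal nonempty edge cuts) of $G$, via two classical correspondences: $SC(G^*) \leftrightarrow \{\text{bonds of } G\} \leftrightarrow \mathscr{P}_2(G)$. Here $G$ is connected, as always in this paper, and I adopt the convention that a \emph{simple cycle} is a minimal nonempty member of the cycle space, so that a loop of $G^*$ counts as a simple cycle of length $1$ and a bigon as a simple cycle of length $2$; these degenerate cases genuinely occur in our gadgets, where they correspond respectively to bridges of $G$ and to two-element bonds of $G$.

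\emph{Step 1: $\mathscr{P}_2(G) \leftrightarrow \{\text{bonds of } G\}$.} Send $\{V_1,V_2\}$ to $\partial_E V_1 = \partial_E V_2$. This is an edge cut, and it is minimal: contracting each of the connected subgraphs $G[V_1]$ and $G[V_2]$ to a single point turns $G$ into a two-vertex graph whose entire edge set is the image of $\partial_E V_1$, so deleting any proper subset of $\partial_E V_1$ leaves $G$ connected. Conversely, if $F$ is a bond then $G - F$ has \emph{exactly} two connected components --- two because $G$ is connected, and not more, since otherwise a proper subset of $F$ would already separate two of those components from the rest --- and sending $F$ to the partition of $V$ into these two vertex sets inverts the first map. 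Both directions are computable in polynomial time (forming $\partial_E$, respectively a breadth-first search to list the components of $G-F$).

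\emph{Step 2: $\{\text{bonds of } G\} \leftrightarrow SC(G^*)$.} This is the planar bond--cycle duality: under the edge correspondence $e \leftrightarrow e^*$, the minimal nonempty cuts of $G$ are exactly the minimal nonempty members of the cycle space of $G^*$, i.e.\ the simple cycles of $G^*$ (and symmetrically, using $G^{**}=G$). One may cite this (e.g.\ Diestel's text, or Erickson's notes as already referenced), or argue directly: a simple cycle $C \subseteq E(G^*)$ is a Jordan curve, so each face of $G^*$ --- equivalently, each vertex of $G$ --- lies strictly inside or strictly outside $C$, giving $V = V_1 \sqcup V_2$ with $V_1$ the faces inside $C$ and $V_2$ those outside (both nonempty: $C$ bounds a disk containing some face, and the unbounded face lies outside). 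An edge $e$ of $G$ joins a vertex of $V_1$ to a vertex of $V_2$ precisely when $e^*$ lies on $C$, so the primal edges dual to $C$ are exactly $\partial_E V_1$; and $G[V_1], G[V_2]$ are connected because the open regions inside and outside $C$ are connected and any path between two faces on the same side crosses only dual edges off $C$, tracing a walk in the corresponding induced subgraph. Hence $C \mapsto \{e : e^* \in C\}$ lands in the bonds of $G$ (by Step 1), is injective since a simple cycle is determined by its edge set, and is onto by the same duality in reverse; this map and its inverse are a relabeling of edges, hence polynomial time.

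Composing Steps 1 and 2 gives the asserted polynomial-time bijection $SC(G^*) \to \mathscr{P}_2(G)$ together with a polynomial-time inverse. The one point demanding care is the topological bookkeeping in Step 2 --- verifying that a single simple cycle in the dual severs $G$ into exactly two \emph{connected} pieces even when $G$ has cut vertices and $G^*$ carries loops or parallel edges --- which is precisely the content of the bond--cycle duality one either invokes as known or checks via the Jordan-curve argument above; the remaining steps are routine.
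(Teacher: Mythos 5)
Your proof is correct and takes essentially the route the paper relies on: the theorem is cited to Erickson's Proposition~2.1 (reproduced as \Cref{prop:evensubgraphs} in \Cref{appendix:duality}), which is precisely the bond--cycle duality you establish in Step~2, and the paper's appendix proof of the general $k$-partition duality specializes at $k=2$ to your two steps, since the cut sets of connected $2$-partitions are exactly the bonds and the dual cycle-space elements with circuit rank $1$ are exactly the simple cycles. The conventions you flag are the right ones to worry about --- loops and bigons of $G^*$ must count as simple cycles, and both blocks of an element of $\mathscr{P}_2(G)$ must be nonempty --- and with those in place the argument is complete.
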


\begin{thm}
\textsc{GenConnected2Partition} is intractable on the class of plane graphs.
\end{thm}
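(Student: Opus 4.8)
The plan is to chain together the two results already established: \Cref{prop:SimpleCycleUniformHard}, which shows \textsc{GenSimpleCycle} is intractable on simple plane graphs, and \Cref{duality}, which gives a polynomial-time computable bijection between $SC(G^*)$ and $\mathscr{P}_2(G)$. The overall strategy is a reduction: I would show that a polynomial-time probabilistic Turing machine that $\alpha$-almost solves \textsc{GenConnected2Partition} can be converted, via the duality bijection, into one that $\alpha$-almost solves \textsc{GenSimpleCycle}, thereby forcing $\RP = \NP$.

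More concretely, suppose $M$ is a polynomial-time probabilistic Turing machine that $\alpha$-almost solves \textsc{GenConnected2Partition} on the class of plane graphs. Given an input plane graph $H$ for \textsc{GenSimpleCycle}, I would compute a plane graph $G$ with $G^* = H$ (i.e. take $G$ to be the plane dual of $H$, which is polynomial-time constructible), run $M$ on $G$ to obtain a sample from a distribution within total variation $\alpha$ of the uniform distribution on $P_2(G)$, pass from the ordered partition to the corresponding unordered partition in $\mathscr{P}_2(G)$, and then apply the polynomial-time computable bijection of \Cref{duality} to land in $SC(H) = SC(G^*)$. Since a bijection carries the uniform distribution to the uniform distribution and does not increase total variation distance, the result is an $\alpha$-almost sampler for \textsc{GenSimpleCycle} on plane graphs. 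As \textsc{GenSimpleCycle} is intractable on that class, so is \textsc{GenConnected2Partition}. Alternatively, one could phrase this directly through \Cref{lem:luckyguess}, taking $B$ to be the map sending a \textsc{GenSimpleCycle}-hard instance (built from a Hamiltonicity instance as in the proof of \Cref{prop:SimpleCycleUniformHard}) to the dual of the bigon-subdivided graph, and $\pi$ to be the composition of $\pi_d$ with the duality bijection.

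The one point requiring a little care — and the likeliest source of friction — is the bookkeeping between ordered and unordered partitions, and between $2^{E}$-valued projection maps and genuine cycles: $P_2(G)$ is a two-to-one cover of $\mathscr{P}_2(G)$ (each unordered partition into two nonempty connected blocks corresponds to two ordered partitions), so the uniform distribution on $P_2(G)$ pushes forward to the uniform distribution on $\mathscr{P}_2(G)$, and the factor of two is harmless. One also needs the empty/trivial partition conventions to be consistent with the definition of $P_2(G)$ and with the image of $\pi_d$ in the earlier proof, and to confirm that the duality bijection is still polynomial-time computable after composing with these relabelings; all of this is routine but should be stated. No genuinely new idea is needed beyond observing that intractability is transported along polynomial-time computable measure-preserving bijections.
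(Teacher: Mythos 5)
Your proposal is correct and follows essentially the same route as the paper: transport the intractability of \textsc{GenSimpleCycle} through the polynomial-time duality bijection of \Cref{duality}. The paper's proof is a two-line version of this argument, and your additional care about the two-to-one correspondence between ordered and unordered partitions is a harmless refinement of the same idea.
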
\label{thm:basicintractbilitypartitions}
\begin{proof}
If $G$ is a plane graph, then a polynomial algorithm to sample uniformly $P_2(G^*)$ gives an algorithm to sample uniformly from $SC(G)$ by \Cref{duality}. Now the result follows from \Cref{prop:SimpleCycleUniformHard}.
\end{proof}

This theorem implies that the broadest version of uniform sampling from the space of graph partitions is intractable. While already this observation is notable given how little is known about the graphs that appear in redistricting, our discussion does not end here. Rather, \Cref{thm:basicintractbilitypartitions} will be strengthened in \Cref{thm:balancedpartition} and \Cref{thm:triangulationhard}. In \Cref{Section:FlipChain}, we will also highlight how the probability concentration gadgets identify concrete issues with Markov chains used for sampling partitions.

\subsection{Intractability of Uniformly Sampling Balanced Partitions}\label{section:balancedhard}

For applications in redistricting, the blocks of a connected partition should be roughly equal in population. This motivates studying the problem of sampling \emph{balanced} partitions: %

\begin{defn}[$\epsilon$-balanced simple cycles and $2$-partitions]
Let $SC^{\epsilon}(G)$ be the set of \emph{$\epsilon$-balanced simple cycles} of a plane graph $G$, such that if $\{A,B\}$ is the dual connected 2-partition of $G^*$ then $1 - \epsilon \leq \frac{ |A| }{|B|} \leq 1 + \epsilon$ and $1 - \epsilon \leq \frac{ |B| }{|A|} \leq 1 + \epsilon$. Similarly, we say that a partition $(A,B) \in P_2(G)$ is $\epsilon$-balanced if these inequalities hold for $\{A,B\}$; we define $P_2^{\epsilon}(G)$ to be the set of such partitions.  %
\end{defn}

\begin{computationalproblem}{$\epsilon$-Balanced Uniform 2-Partition Sampling}
Input: A plane graph $G$.

Output: An element of $P_2^{\epsilon}(G)$ selected uniformly at random.
\end{computationalproblem}

The existence of a $0$-\emph{balanced} 2-partition in a graph $G$ is not obvious. In fact, determining if there exists a balanced connected $2$-partition of a given graph $G$ is $NP$-complete:

\begin{thm}[\cite{dyer_complexity_1985}, Theorem 2.2\label{thm:balancedpartition}\footnote{The phrase ``$k$-partition'' in \cite{dyer_complexity_1985} has a different meaning from the way we use it here. In their notation, a $k$-partition is a connected partition where each piece has size $k$. What we call a balanced connected $2$-partition, they call an $\nicefrac{n}{2}$ partition. The theorem is stated here in our notation.}] The decision problem of whether a given connected graph $G$ has a $0$-balanced, connected $2$-partition is $\NP$-complete.
\end{thm}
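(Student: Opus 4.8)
The plan is to reduce from a known $\NP$-complete problem, and the natural candidate is 3-dimensional matching or, following the flavor of \cite{dyer_complexity_1985}, a partition-style problem: given a set of positive integers with a target, decide whether a subset sums to exactly half the total. First I would take an instance of a suitable $\NP$-complete variant—say, the problem \textsc{3-Partition} or, more in the spirit of connectivity constraints, the problem of splitting the vertex set of a graph into two equal halves each inducing a connected subgraph where the "gadget" structure encodes arithmetic—and build a connected graph $G$ whose balanced connected $2$-partitions correspond bijectively (or at least surjectively onto the YES-witnesses) to solutions of the arithmetic instance. Concretely, for each number $a_i$ in the input one attaches a connected gadget of $a_i$ vertices, hangs all the gadgets off a common "spine" or a pair of hub vertices, and arranges the topology so that any connected $2$-partition must cut the graph into two connected pieces each of which is a union of whole gadgets plus a fixed overhead; the balance condition $|A| = |B|$ then forces the chosen gadgets on one side to have sizes summing to the target.

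The key steps, in order: (1) fix the source $\NP$-complete problem and state precisely the arithmetic condition to be encoded; (2) describe the gadget for a single integer $a_i$ and verify it is connected and has exactly the right number of vertices; (3) describe how the gadgets are wired together (the hub/spine vertices, possibly padding vertices to handle parity and to make the total even) so that $G$ is connected and planar-agnostic—planarity is not required here, since the statement is about general connected graphs; (4) prove the forward direction: any subset summing to half yields a balanced connected $2$-partition, by placing that subset's gadgets (together with one hub and the appropriate padding) in block $A$ and the rest in block $B$, checking both blocks are connected; (5) prove the converse: any balanced connected $2$-partition must respect gadget boundaries—here one argues that cutting through the interior of a gadget either disconnects a block or violates the size bookkeeping—and hence induces a subset of the prescribed total; (6) observe the reduction is polynomial-time, so $\NP$-hardness follows, and membership in $\NP$ is immediate since a partition is a polynomial-size certificate checkable in polynomial time.

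The main obstacle I expect is step (5), the "rigidity" of the gadget decomposition: one must design the gadgets and the wiring so that there is \emph{no} balanced connected $2$-partition that slices a gadget down the middle or routes connectivity through an unintended path. A standard fix is to make each gadget a clique (or a sufficiently dense, highly connected block) so that any proper nonempty subset of a gadget that is separated from its attachment point is disconnected from the rest of its side unless the whole gadget moves together; the hub vertices then need to be duplicated or made pendant-heavy so that each side is forced to contain exactly one hub, pinning down the global structure. Since \Cref{thm:balancedpartition} is cited verbatim from \cite{dyer_complexity_1985}, I would in the write-up simply invoke their construction and Theorem 2.2 directly rather than reproving it, noting only that their "$\nicefrac{n}{2}$-partition" terminology coincides with our $0$-balanced connected $2$-partition; the sketch above is the route one would take to reconstruct the argument if needed.
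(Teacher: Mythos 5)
This theorem is imported verbatim from \cite{dyer_complexity_1985}: the paper supplies no proof of its own, only the citation together with a translation of terminology, so your closing decision to invoke Theorem 2.2 of Dyer and Frieze directly (after noting that their ``$\nicefrac{n}{2}$-partition'' is our $0$-balanced connected $2$-partition) is exactly what the paper does, and membership in $\NP$ is indeed immediate.

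The fallback reduction you sketch, however, would not work as stated, and the problem is not the gadget-rigidity issue you flag in step (5) but the choice of source problem. Reducing from \textsc{Partition} (``does a subset of the given integers sum to half the total?'') by attaching a connected gadget of $a_i$ vertices per integer is only a pseudo-polynomial reduction: \textsc{Partition} is $\NP$-complete only under binary encoding of the $a_i$, and a graph with $\Theta(\sum_i a_i)$ vertices is exponentially large in that encoding. If you instead bound the $a_i$ polynomially so that the graph has polynomial size, the source instance becomes solvable in polynomial time by dynamic programming, and no hardness is transferred. \textsc{3-Partition} is strongly $\NP$-complete and so tolerates unary gadgets, but it asks for a partition into $m$ triples, not two blocks, so it does not map onto a connected $2$-partition without further work. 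A correct reconstruction has to start from a strongly $\NP$-complete combinatorial problem such as \textsc{Exact Cover by 3-Sets} and encode the covering structure (rather than raw magnitudes) into the connectivity of the two blocks; this is the flavor of the construction in \cite{dyer_complexity_1985}. Since the paper only cites the result, none of this affects its correctness, but the sketch should not be relied on as a self-contained proof.
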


Given Theorem~\ref{thm:balancedpartition}, the problem of uniformly sampling from the set of $0$-balanced connected $2$-partitions is vacuously intractable. To circumvent this issue, we focus on the case where $G$ is $2$-connected, since in that case a $0$-balanced 2-partition always exists and can be constructed in polynomial time by constructive versions \cite{suzuki1990linear} of the Gy{\H o}ri--Lov\'asz Theorem  \cite{gyori,lovasz1977homology} for $2$-partitions. As in the previous section, our strategy is to work with simple cycles, rather than connected partitions. In particular, the following classical theorem will let us translate a statement about Hamiltonian cycles into a statement about balanced Hamiltonian cycles:

\begin{thm}[Grinberg's Theorem, \cite{Grinberg}]%
Let $G$ be a plane graph. Assign to each face $F$ a weight $deg(F) - 2$, where $deg(F)$ is the number of edges in $F$. If $H$ is a Hamiltonian cycle of $G$, then the total weight of the faces inside of $H$ is equal to the total weight of the faces outside of $H$.
\end{thm}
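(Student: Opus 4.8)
The plan is to prove this by a double-counting argument along classical lines. Let $H$ be a Hamiltonian cycle of $G$ and set $n := |V(G)|$; since $H$ visits every vertex, it has exactly $n$ edges, and since no edge of $G$ crosses the Jordan curve $H$, every face of $G$ lies entirely inside $H$ or entirely outside $H$. I would partition the edges of $G$ not lying on $H$ into those drawn inside $H$ and those drawn outside $H$, writing $d'$ and $d''$ for their respective counts, and write $f'$ and $f''$ for the number of faces of $G$ inside and outside $H$.

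The first key step is the identity $f' = d' + 1$ (and symmetrically $f'' = d'' + 1$). I would prove this by induction on $d'$: the region enclosed by $H$ alone is a single face, and, since $G$ is plane, the interior edges form a crossing-free arrangement in the closed disk bounded by $H$; adding these chords one at a time, each chord is an arc with both endpoints on the boundary of the unique interior region containing it, so it splits that region into exactly two, increasing the interior face count by exactly one.

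The second step is to count edge--face incidences among the faces inside $H$. Each of the $n$ edges of $H$ borders exactly one interior face, and each of the $d'$ interior chords borders exactly two, so $\sum_{F\ \mathrm{inside}} \deg(F) = n + 2d'$. Combining this with $f' = d' + 1$ gives
\[
\sum_{F\ \mathrm{inside}} (\deg(F) - 2) \;=\; (n + 2d') - 2(d'+1) \;=\; n - 2 ,
\]
and the identical computation on the exterior yields $\sum_{F\ \mathrm{outside}} (\deg(F) - 2) = n - 2$ as well; equating the two expressions finishes the proof.

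The main obstacle --- really the only delicate point --- is making the face-count identity $f' = d'+1$ rigorous: one must argue carefully that, because $G$ is plane and $H$ is Hamiltonian, the chords interior to $H$ genuinely form a planar subdivision of a disk with no crossings, so that the inductive step adds precisely one face per chord, and one must be slightly attentive to the conventions for "face of $G$ inside $H$" and for parallel edges and loops (the $\deg(F)-2$ weighting is exactly what absorbs these multigraph subtleties). Everything else is a routine incidence count.
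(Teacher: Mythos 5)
Your proof is correct. The paper does not actually prove this statement --- it is quoted as a classical result with a citation to Grinberg's original paper --- so there is no in-paper argument to compare against; what you have written is the standard textbook proof. The two key counts are both right: since a graph with a Hamiltonian cycle on $n \geq 3$ vertices is $2$-connected, every face boundary is a cycle and the incidence count $\sum_{F\ \mathrm{inside}} \deg(F) = n + 2d'$ is clean (each edge of $H$ separates an interior face from an exterior one, and each interior chord separates two distinct interior faces, which your induction in fact guarantees). The identity $f' = d'+1$ is exactly the delicate point you identify, and your induction handles it: the chords are pairwise non-crossing because $G$ is plane, and each chord, having both endpoints on the boundary of the region containing it, necessarily splits that region in two. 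Equating $n-2$ from the inside with $n-2$ from the outside finishes the argument, and this is precisely how the paper uses the theorem (every face of a maximal plane graph is a triangle of weight $1$, so a Hamiltonian cycle must enclose exactly half the faces, hence half the interior chords, hence yield a balanced partition of the dual).
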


Grinberg's Theorem implies that every Hamiltonian cycle of a maximal plane graph is balanced, since every face is a triangle. Since the problem of determining whether a maximal plane graph has a Hamiltonian cycle is NP-complete \cite{wigderson_maximal}, it follows that the problem of determining whether a maximal plane graph has a \emph{balanced} Hamiltonian cycle is NP-complete.\footnote{The authors wish to thank Gamow from Stack Exchange for \href{https://cstheory.stackexchange.com/q/41999} { a helpful comment} \cite{gamow1} and \href{https://cstheory.stackexchange.com/a/41446/44995}{directing us towards} \cite{wigderson_maximal}. }

We begin by defining the map $\pi$ and proving the probability concentration result used to apply \Cref{lem:luckyguess}. %
In particular, recalling the definition of $\pi_d$ (\Cref{def:chainofbigons}), we define $\pi_{d}^{\epsilon} : SC^{\epsilon} ( B_d(G)) \to 2^{E(G)}$ as the restriction of $\pi_d : SC(B_d(G)) \to 2^{E(G)}$ to the $\epsilon$-balanced simple cycles of $G$. 
We derive the necessary inequalities in the following lemma:

\begin{lem}
Let $G  = (V,E)$ be a maximal plane graph, with $n = |V(G)|$. Let $C$ be a Hamiltonian cycle of $G$, and let $A$ be any non-Hamiltonian cycle of $G$. Let $\epsilon \geq 0$. Then we have
\begin{align} | (\pi_{2d}^{\epsilon})^{-1}( C) | &\geq | (\pi_{2d}^0)^{-1}( C) | = { 2dn \choose dn} \geq \frac{ 2^{2dn}}{2dn + 1},\textrm{ and}
\label{lowerbound:longestbalanced}
\\
\label{upperbound:shorterunbalanced} 
| (\pi_{2d}^{\epsilon})^{-1}(A) | & \leq | (\pi_{2d}^{\infty})^{-1} (A) | \leq  2^{2d(n-1)}.
\end{align}

If $\mathscr{H}$ is the set of Hamiltonian cycles of $G$, and $C \in \mathscr{H}$, then for $d \geq 4 \ceil{(m + n^2/2 + 3/2 + \log(n))^2}$, \begin{equation}\label{eqn:HDNbalanced}
| (\pi_{2d}^{\epsilon})( \mathscr{H}) | \geq 2^{2m} | \pi_{2d}^{-1}( 2^{E(G)} \setminus \mathscr{H})|.
\end{equation}

\end{lem}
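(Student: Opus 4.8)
The plan is to carry out all three estimates inside $B_{2d}(G)$, reducing each to a count of the independent binary choices available along the chains of bigons that lie on a cycle of $G$. Recall that in $B_{2d}(G)$ the chain $B_{2d}(e)$ replacing an edge $e \in E(G)$ is a string of $2d$ bigons, each of which bounds one small $2$-gon face, and that a simple cycle $\tilde{C}$ of $B_{2d}(G)$ with $\pi_{2d}(\tilde{C}) = C$ for $C \in SC(G)$ is specified by choosing, at each of the $2d|C|$ bigons along $C$, which of the two parallel edges to traverse; hence $|\pi_{2d}^{-1}(C)| = 2^{2d|C|}$, while $|\pi_{2d}^{-1}(e)| = 2d$ for a single edge, so $\im(\pi_{2d}) = SC(G) \cup E(G)$ exactly as in the proof of \Cref{prop:SimpleCycleUniformHard}. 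The first inequality in \eqref{lowerbound:longestbalanced} holds because a $0$-balanced simple cycle is $\epsilon$-balanced, so $(\pi_{2d}^{0})^{-1}(C) \subseteq (\pi_{2d}^{\epsilon})^{-1}(C)$; likewise an $\epsilon$-balanced simple cycle is a simple cycle, so $(\pi_{2d}^{\epsilon})^{-1}(A) \subseteq (\pi_{2d}^{\infty})^{-1}(A) = \pi_{2d}^{-1}(A)$, which is the first inequality in \eqref{upperbound:shorterunbalanced}.

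For the equality in \eqref{lowerbound:longestbalanced}, fix a Hamiltonian cycle $C$ of $G$ and a lift $\tilde{C} \in \pi_{2d}^{-1}(C)$, and count the faces of $B_{2d}(G)$ that lie inside $\tilde{C}$, i.e.\ the size of one block of the dual $2$-partition of $B_{2d}(G)^{*}$ attached to $\tilde{C}$. These faces are of three kinds. First, the $2n - 4$ ``large'' faces inherited from the triangular faces of $G$: since $G$ is a triangulation and $C$ is Hamiltonian, Grinberg's Theorem forces exactly $n - 2$ of them inside and $n - 2$ outside, independently of $\tilde{C}$. Second, the $2$-gons sitting on the $2n - 6$ chords of $C$: each chord is drawn wholly inside or wholly outside $C$, and exactly $n - 3$ of them lie inside (a triangulation of the interior $n$-gon uses $n - 3$ diagonals), contributing $2d(n - 3)$ such $2$-gons inside, again independently of $\tilde{C}$. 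Third, the $2dn$ $2$-gons sitting on the edges of $C$, and here the key local fact is that the two choices of parallel edge at such a bigon place its $2$-gon on opposite sides of $\tilde{C}$, so exactly one of the two choices puts it inside. Writing $t$ for the number of bigons along $C$ at which $\tilde{C}$ makes the ``inside'' choice, the block therefore has $(n - 2) + 2d(n - 3) + t$ faces out of the $2n - 4 + 2d(3n - 6)$ total, so it is $0$-balanced precisely when this equals half the total, i.e.\ when $t = dn$. Since a given value of $t$ is realised by exactly $\binom{2dn}{t}$ lifts, this gives $|(\pi_{2d}^{0})^{-1}(C)| = \binom{2dn}{dn}$, and $\binom{2dn}{dn} \geq 2^{2dn}/(2dn + 1)$ because the central term dominates a row of $2dn + 1$ binomial coefficients summing to $2^{2dn}$. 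The step I expect to be the main obstacle is this local fact in the third case: one has to keep track, as a function of the planar embedding chosen for $B_{2d}(G)$ and of the edge selected at a given bigon, of which side of $\tilde{C}$ the associated $2$-gon lands on.

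For \eqref{upperbound:shorterunbalanced}, a non-Hamiltonian simple cycle $A$ of $G$ uses at most $n - 1$ edges, so $|(\pi_{2d}^{\infty})^{-1}(A)| = 2^{2d|A|} \leq 2^{2d(n-1)}$. For \eqref{eqn:HDNbalanced}, \eqref{lowerbound:longestbalanced} yields $|(\pi_{2d}^{\epsilon})^{-1}(\mathscr{H})| \geq |(\pi_{2d}^{\epsilon})^{-1}(C)| \geq \binom{2dn}{dn} \geq 2^{2dn}/(2dn + 1)$, while each of the at most $|SC(G) \cup E(G)| \leq 2^{n^{2}}$ fibers of $\pi_{2d}$ over a non-Hamiltonian element of $\im(\pi_{2d}) = SC(G) \cup E(G)$ has size $2^{2d|A|} \leq 2^{2d(n-1)}$ (for a cycle $A$) or $2d \leq 2^{2d(n-1)}$ (for a single edge), so $|\pi_{2d}^{-1}(2^{E(G)} \setminus \mathscr{H})| \leq 2^{n^{2}} 2^{2d(n-1)}$. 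Hence \eqref{eqn:HDNbalanced} follows once $2^{2d}/(2dn + 1) \geq 2^{2m + n^{2}}$, for which it suffices that $4^{d} \geq S\,d$ with $S = 2^{2m + n^{2} + 2}\,n$; by \Cref{lem:polylarge} (with $q = 4$ and $e = 1$) this holds as soon as $d \geq (\log_{2} S + 1)^{2} = (2m + n^{2} + 3 + \log_{2} n)^{2}$, which is guaranteed by the stated hypothesis $d \geq 4\ceil{(m + n^{2}/2 + 3/2 + \log n)^{2}}$. Everything here beyond the local fact singled out above is routine estimation.
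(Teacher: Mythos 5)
Your proof is correct and follows essentially the same route as the paper's: Grinberg's theorem balances the triangular faces, the chord bigons contribute equally to both sides independently of the lift, and balance therefore reduces to choosing the inward edge at exactly half of the $2dn$ bigons along $C$, giving $\binom{2dn}{dn}$; the remaining bounds and the application of \Cref{lem:polylarge} match the paper's estimates. The only difference is that you spell out the face count (large faces, chord bigons, cycle bigons) that the paper compresses into the single sentence ``the only way to lift the cycle to a balanced simple cycle is to take the inward edge along exactly half of the bigons,'' which is a faithful elaboration rather than a new argument.
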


\begin{proof}[Proof of \Cref{lowerbound:longestbalanced} and \Cref{upperbound:shorterunbalanced}]
Since every Hamiltonian cycle of $G$ is balanced, the only way to lift the cycle to a balanced simple cycle of $B_{2d}(G)$ is to take the inward edge along exactly half of the bigons. For the lifts of non-Hamiltonian cycles, we can bound the number of lifts to a balanced simple cycle by the number of lifts to a cycle. 
These inequalities then follow from the standard fact that $\frac{ 2^{2r} }{ 2r + 1} \leq { 2r \choose r }$%
\end{proof}
\begin{proof}[Proof of \Cref{eqn:HDNbalanced}]
By \Cref{lem:polylarge}, for $d \geq 4 \ceil{(m + n^2/2 + 3/2 + \log(n))^2}$ we have $$2^{2dn} \geq (2^{2m + n^2 + 2}n) d 2^{2d(n-1)} \geq 2^{2m} 2^{n^2}(2dn + 1) 2^{2d(n-1)},$$ and thus
$$| (\pi_{2d}^{\epsilon})( \mathscr{H}) | \geq | (\pi_{2d}^{\epsilon})(C)| \geq \frac{2^{2dn}}{2 dn + 1} \geq 2^{2m} 2^{n^2} 2^{2d(n-1)} \geq 2^{2m} | \pi_{2d}^{-1}( 2^{E(G)} \setminus \mathscr{H})|.$$
\end{proof}

\begin{prop}\label{prop:balancedhard} Fix $\epsilon \geq 0$. Then, \textsc{$\epsilon$-Balanced Uniform Simple Cycle Sampling} is intractable on the class of graphs of the form $B_d(G)$, where $d \geq 1$ and $G$ is any maximal plane graph.
\end{prop}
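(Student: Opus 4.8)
The plan is a direct application of the Lucky Guess Lemma (\Cref{lem:luckyguess}), with the preceding lemma supplying the probability concentration. I would take $\mathscr{C}$ to be the class of maximal plane graphs, take $P$ to be \textsc{$\epsilon$-Balanced Uniform Simple Cycle Sampling} so that $R(H) = SC^{\epsilon}(H)$, take the ambient $p$-relation $S$ with $S(G) = 2^{E(G)}$, and take the decision problem $Q$ given by letting $Q(G)$ be the set of edge sets of Hamiltonian cycles of $G$. By Grinberg's Theorem every Hamiltonian cycle of a maximal plane graph is $0$-balanced, so $Q$ could equally be described as ``balanced Hamiltonian cycles,'' and, as discussed above, $L_Q(\mathscr{C})$ --- the set of maximal plane graphs possessing a Hamiltonian cycle --- is $\NP$-complete \cite{wigderson_maximal}. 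For the construction map I would set $B_m(G) := B_{2d}(G)$ with $d := 4\ceil{(m + n^2/2 + 3/2 + \log n)^2}$ and $n = |V(G)|$; for fixed $m$ this is polynomial in $n$, so $B_m$ runs in polynomial time, its outputs have the required form $B_{d'}(G')$ with $d' \geq 1$ and $G'$ maximal plane, and (using that the bigon structure is part of the encoding, per the footnote to \Cref{def:chainofbigons}) the projection $\pi_m := \pi_{2d}^{\epsilon}$ is polynomial-time computable and maps $R(B_m(G)) = SC^{\epsilon}(B_{2d}(G))$ into $\im(\pi_{2d}) = SC(G) \cup E(G) \subseteq S(G)$.

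The first thing to check is that $R(B_m(G))$ is nonempty for \emph{every} maximal plane $G$, not only the Hamiltonian ones. Since $G$ is $2$-connected, so is $B_{2d}(G)$, hence so is its plane dual $B_{2d}(G)^*$; moreover $B_{2d}(G)$ has $(2n-4) + 2d(3n-6)$ faces (the faces of $G$ together with $2d$ lens faces per edge), an even number, so $B_{2d}(G)^*$ has an even number of vertices. A constructive Gy\H{o}ri--Lov\'asz theorem \cite{suzuki1990linear, gyori, lovasz1977homology} then partitions $B_{2d}(G)^*$ into two connected parts of equal size, and by bond--cycle duality (\Cref{duality}) this partition corresponds to a $0$-balanced simple cycle of $B_{2d}(G)$; thus $\emptyset \neq SC^{0}(B_{2d}(G)) \subseteq SC^{\epsilon}(B_{2d}(G))$. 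For the probability concentration hypothesis, assume $|Q(G)| \geq 1$, write $\mathscr{H}$ for the set of Hamiltonian cycles of $G$, and set $H := |(\pi_{2d}^{\epsilon})^{-1}(\mathscr{H})|$ and $N := |SC^{\epsilon}(B_{2d}(G))| - H$. Every $\epsilon$-balanced simple cycle of $B_{2d}(G)$ counted by $N$ projects into $(SC(G) \cup E(G)) \setminus \mathscr{H} \subseteq 2^{E(G)} \setminus \mathscr{H}$, so $N \leq |\pi_{2d}^{-1}(2^{E(G)} \setminus \mathscr{H})|$; since our $d$ satisfies the hypothesis of \eqref{eqn:HDNbalanced} this gives $H \geq 2^{2m} N$, while $H \geq 1$ by \eqref{lowerbound:longestbalanced}. \Cref{lem:HgeqDN} then yields that a uniform sample $C$ from $SC^{\epsilon}(B_{2d}(G))$ satisfies
\[
\mathbb{P}\bigl(\pi_{2d}^{\epsilon}(C) \in Q(G)\bigr) \;=\; \frac{H}{H+N} \;\geq\; \frac{2^{2m}}{1+2^{2m}} \;\geq\; 1 - \frac{1}{m}.
\]
All hypotheses of \Cref{lem:luckyguess} are then met, so $P$ is intractable on $B_m(\mathscr{C})$, and a fortiori on the larger class of all $B_d(G)$ with $d \geq 1$ and $G$ maximal plane.

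I expect essentially no serious obstacle here: the substantive work --- concentrating the $\epsilon$-balanced simple-cycle measure of $B_{2d}(G)$ onto lifts of Hamiltonian cycles of $G$, uniformly in $\epsilon \geq 0$ --- is exactly \eqref{lowerbound:longestbalanced}--\eqref{eqn:HDNbalanced}, which is already in hand from the preceding lemma, and the $\NP$-completeness of the relevant decision problem has already been recorded. The only point needing a little care is the nonemptiness of $SC^{\epsilon}(B_{2d}(G))$ for non-Hamiltonian $G$ (so that \textsc{$\epsilon$-Balanced Uniform Simple Cycle Sampling} is a legitimate instance on every graph in $B_m(\mathscr{C})$), which is why the parity of the face count and the Gy\H{o}ri--Lov\'asz argument on the dual enter; everything else is bookkeeping against \Cref{lem:luckyguess}.
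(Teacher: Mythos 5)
Your proof is correct and follows essentially the same route as the paper's: the same choice of $d$, the same construction $B_{2d}(G)$, the same projection $\pi_{2d}^{\epsilon}$, and the same concentration estimate \eqref{eqn:HDNbalanced} fed into \Cref{lem:luckyguess} with $Q$ the Hamiltonian cycles and $\mathscr{C}$ the maximal plane graphs. Your additional verification that $SC^{\epsilon}(B_{2d}(G)) \neq \emptyset$ even for non-Hamiltonian $G$ (via the parity of the face count and the constructive Gy\H{o}ri--Lov\'asz theorem applied to the dual) is a welcome piece of care that the paper addresses only in the discussion preceding the proposition rather than inside the proof itself.
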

\begin{proof}
To prove this, we fit what we have calculated into the format of \Cref{lem:luckyguess}. Fix $m$. Then we take $d = 4 \ceil{(m + n^2/2 + 3/2 + \log(n))^2}$, which is polynomial in $G$, and set $B(G) = B_{2d}(G)$, and $M$ to compute $\pi_{2d}$. The probability concentration hypothesis follows from \Cref{lem:HgeqDN} and \Cref{eqn:HDNbalanced}, since they show that $\frac{| (\pi_{2d}^{\epsilon})^{-1}( \mathscr{H}) | } { |SC^{\epsilon}(G) | } \geq \frac{4^m}{1 + 4^m} \geq 1 - 1/m$. Finally, we set $Q$ to be the Hamiltonian cycles, and $\mathscr{C}$ to be the class of maximal plane graphs.

\end{proof}

\begin{thm}
Fix  $\epsilon \geq 0$. Then \textsc{$\epsilon$-Balanced Uniform 2-Partition Sampling} is intractable on the class of $2$-connected plane graphs.
\end{thm}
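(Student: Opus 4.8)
The plan is to reduce the already-proven \Cref{prop:balancedhard} to the present statement by passing through plane duality (\Cref{duality}), exactly mirroring how \Cref{thm:basicintractbilitypartitions} was deduced from \Cref{prop:SimpleCycleUniformHard}. First I would take an input graph of the form $B_d(G)$ with $G$ a maximal plane graph and $d \geq 1$, as in \Cref{prop:balancedhard}, and consider its plane dual $B_d(G)^*$. The key point is that \Cref{duality} gives a polynomial-time bijection between $SC(B_d(G))$ and $\mathscr{P}_2(B_d(G)^*)$, and I would need to check that this bijection carries $\epsilon$-balanced simple cycles to $\epsilon$-balanced connected $2$-partitions; but this is immediate from the definition of $SC^\epsilon$, which is \emph{defined} in terms of the balance of the dual $2$-partition, so the restricted map $SC^\epsilon(B_d(G)) \to P_2^\epsilon(B_d(G)^*)$ is still a polynomial-time bijection. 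Hence a polynomial-time probabilistic Turing machine that $\alpha$-almost uniformly samples \textsc{$\epsilon$-Balanced Uniform 2-Partition Sampling} on the class $\{B_d(G)^* : d \geq 1,\ G \text{ maximal plane}\}$ yields, by composing with the inverse bijection, a machine that $\alpha$-almost uniformly samples \textsc{$\epsilon$-Balanced Uniform Simple Cycle Sampling} on the class $\{B_d(G)\}$, contradicting \Cref{prop:balancedhard}; this gives intractability on the class $\{B_d(G)^*\}$.

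The remaining work is to argue that the graphs $B_d(G)^*$ are (or can be taken to be) $2$-connected plane graphs, so that intractability on this subclass implies the theorem as stated. Since $G$ is a maximal plane graph on at least $4$ vertices it is $3$-connected, and subdividing edges and doubling them (the operation producing $B_d(G)$ from $G$) preserves $2$-connectivity; the plane dual of a $2$-connected plane graph is $2$-connected, so $B_d(G)^*$ is $2$-connected. One bookkeeping subtlety: $B_d(G)$ has parallel edges, so its dual has vertices of degree $2$ and is not simple, whereas the theorem speaks of (simple) $2$-connected plane graphs; this is handled exactly as in the proof of \Cref{prop:SimpleCycleUniformHard}, by subdividing each edge of $B_d(G)^*$ once, which makes the graph simple, keeps it $2$-connected and plane, and changes neither $P_2$ nor the balance condition in a way that affects the reduction (subdivision vertices can be absorbed into either block, or one works with the variant of \Cref{duality} for the subdivided graph). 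I would phrase the final class $\mathscr{C}$ as "simple $2$-connected plane graphs" and note the reduction machine first performs this subdivision.

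I do not expect any genuine obstacle here: the theorem is a straightforward duality transport of \Cref{prop:balancedhard}, and the only things to be careful about are (i) that the balance condition is preserved under the bijection — which holds essentially by definition of $SC^\epsilon$ — and (ii) the $2$-connectivity and simplicity bookkeeping for $B_d(G)^*$. The mildest of these is (ii); if one wanted to be fully rigorous one would invoke the standard facts that $\kappa(G^*) \geq \kappa(G)$ fails in general but $2$-connectivity of a plane graph is equivalent to every face boundary being a cycle, a property inherited by the dual. A clean writeup is:

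\begin{proof}
Let $\mathscr{C}$ denote the class of simple $2$-connected plane graphs. Suppose $M$ is a polynomial-time probabilistic Turing machine that $\alpha$-almost uniformly samples \textsc{$\epsilon$-Balanced Uniform 2-Partition Sampling} on $\mathscr{C}$ for some $\alpha < 1$. We use $M$ to build a machine $M'$ that $\alpha$-almost uniformly samples \textsc{$\epsilon$-Balanced Uniform Simple Cycle Sampling} on the class $\{B_d(G) : d \geq 1,\ G \text{ maximal plane}\}$, contradicting \Cref{prop:balancedhard}.

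Given $B_d(G)$, the machine $M'$ first forms the plane dual $B_d(G)^*$ and then subdivides each of its edges once; call the result $H$. Since $G$ is a maximal plane graph (hence $3$-connected when $|V(G)| \geq 4$, and the small cases are trivial), the graph $B_d(G)$ obtained by subdividing and doubling edges is $2$-connected, so its plane dual is $2$-connected; subdividing each edge once makes it simple while preserving $2$-connectivity and planarity, so $H \in \mathscr{C}$. By \Cref{duality}, there is a polynomial-time computable bijection between $SC(B_d(G))$ and $\mathscr{P}_2(B_d(G)^*)$, and since one-time edge subdivision induces a canonical bijection $\mathscr{P}_2(B_d(G)^*) \cong \mathscr{P}_2(H)$, we obtain a polynomial-time computable bijection $\Phi : SC(B_d(G)) \to \mathscr{P}_2(H)$. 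By the definition of $SC^\epsilon$, a cycle lies in $SC^\epsilon(B_d(G))$ exactly when its image under $\Phi$ is an $\epsilon$-balanced connected $2$-partition, so $\Phi$ restricts to a bijection $SC^\epsilon(B_d(G)) \to P_2^\epsilon(H)$ (working with unordered partitions, or fixing an ordering convention; the factor of $2$ is immaterial).

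Now $M'$ runs $M$ on $H$; if $M$ accepts and outputs a partition $p$, then $M'$ outputs $\Phi^{-1}(p)$. Because $\Phi^{-1}$ is a bijection, pushing the uniform distribution on $P_2^\epsilon(H)$ through $\Phi^{-1}$ gives the uniform distribution on $SC^\epsilon(B_d(G))$, and total variation distance is preserved under this relabeling; hence $M'$ accepts at least half the time and outputs a sample within total variation distance $\alpha$ of uniform on $SC^\epsilon(B_d(G))$. Thus $M'$ $\alpha$-almost uniformly samples \textsc{$\epsilon$-Balanced Uniform Simple Cycle Sampling} on $\{B_d(G)\}$ in polynomial time. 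By \Cref{prop:balancedhard} this forces $\RP = \NP$. As this holds for every $\alpha < 1$, \textsc{$\epsilon$-Balanced Uniform 2-Partition Sampling} is intractable on $\mathscr{C}$.
\end{proof}
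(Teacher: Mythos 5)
Your proof is correct and follows essentially the same route as the paper's: the paper's entire argument is that $(B_d(G))^*$ is $2$-connected when $G$ is a maximal plane graph, after which the claim follows from \Cref{prop:balancedhard} via \Cref{duality}. Your additional care about simplicity of the dual (via one-time subdivision) and the explicit verification that the duality bijection preserves the balance condition are harmless elaborations of bookkeeping the paper leaves implicit.
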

\begin{proof}
$(B_d(G))^*$ is $2$-connected when $G$ is a maximal plane graph. %
The claim now follows by \Cref{prop:balancedhard} using \Cref{duality}.
\end{proof}

\begin{remark} These arguments work for any reasonable definition of nearly balanced that considers partitions with $|A| = |B|$ to be balanced.
\end{remark}

\subsection{Sampling intractability on maximal plane Graphs with bounded degree}\label{section:maxhard}

In this section, we improve the results from \Cref{section:basichard} by showing that the connected $2$-partition sampling problem is intractable on the class of maximal plane graphs with vertex degree bounded by $531$. 
This will shrink the gap between our theoretical intractability statements and the graphs used to study redistricting. To obtain our result, we will start in \Cref{strengthening} by proving a corresponding $NP$-completeness theorem, building on the results in \cite{garey_plane_1976}. Second, in \Cref{section:Recursive}, we will describe a construction for concentrating probability on the longer simple cycles by providing a simple cycle with many paths through a vertex, rather than with many paths through an edge. Finally, in \Cref{Max2Part} we will show how to tie the intractability argument together. We will reuse the gadgets from this section in \Cref{betterbehavedexample} to construct explicit counterexamples to heuristic sampling algorithms.

\subsubsection{Hamiltonian cycle is $\NP$-complete on cubic, 3-connected plane graphs with face degree $\leq 177$}\label{strengthening}

\begin{defn}[$3CCP$ graphs with bounded face degree]\label{defn:Cm3CCP}
Recall that a $3CCP$ graph is one that is $3$-connected, cubic, simple, and plane. Let $\mathscr{C}_m$ denote the collection of $3CCP$ graphs with face degree $\leq m$. Let $\mathscr{C} = \bigcup_{k \geq 3} \mathscr{C}_k$.
\end{defn}

\begin{defn}[Hamiltonian cycle problem]
If $\mathscr{D}$ is a language of graphs, let $\mathscr{D}\textrm{-HAM} = \{ G \in \mathscr{D} : G \text{ is Hamiltonian} \}$
\end{defn}

Our goal in this section is to prove the following theorem:\footnote{The authors are grateful to \href{http://domotorp.web.elte.hu/}{Pálvölgyi Dömötör Honlapja} for \href{https://cstheory.stackexchange.com/questions/42565/np-completeness-of-hamiltonicity-of-cubic-polyhedral-plane-graphs-with-bounded}{suggesting the proof strategy} used in this section \cite{honlapja}. We are of course responsible for all errors in our execution of the strategy. }%

\begin{thm}\label{thm:facebounded3CCP}
$\mathscr{C}_{177}\textrm{-HAM}$ is $\NP$-complete.  %
\end{thm}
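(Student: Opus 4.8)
The plan is to reduce from the known $\NP$-completeness of $\mathscr{C}\textrm{-HAM}$ on $3CCP$ graphs (Garey--Johnson--Tarjan \cite{garey_plane_1976}) by showing how to modify an arbitrary $3CCP$ graph into one with bounded face degree while preserving Hamiltonicity. The obstacle is that an arbitrary $3CCP$ graph may have faces of unbounded degree (long boundary cycles), and we must ``triangulate'' or otherwise subdivide these large faces without destroying $3$-connectedness, cubicity, planarity, or — crucially — the Hamiltonicity status. Since membership in $\NP$ is trivial (a Hamiltonian cycle is a polynomial-size certificate checkable in polynomial time), the whole content is the polynomial-time many-one reduction preserving the answer.

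\medskip

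\textbf{The reduction gadget.} First I would design a planar gadget that can be inserted into a large face $F$ of degree $\ell$, subdividing $F$ into faces of bounded degree, such that: (i) the new graph remains cubic and simple; (ii) it remains $3$-connected; (iii) it remains plane with a compatible embedding; and (iv) the new graph is Hamiltonian if and only if the old one is. The natural approach is to place a small $3$-regular ``ladder-like'' or ``fan'' structure inside $F$, chosen so that any Hamiltonian cycle of the enlarged graph is forced to route through the new vertices in essentially one combinatorial way — i.e., the gadget has a forced traversal pattern. A standard trick is that the gadget behaves like a ``wire'': a Hamiltonian cycle entering the gadget region must traverse all its internal vertices and exit, contributing exactly a fixed pattern of edges along the boundary of $F$, so that contracting the gadget recovers a Hamiltonian cycle of the original graph, and conversely every Hamiltonian cycle of the original extends. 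I expect to use a chain of small blocks (each a constant-size $3$-connected cubic piece) running around the interior of $F$, which automatically bounds all resulting face degrees by an absolute constant. Tracking the exact constant through the gadget is where the number $177$ comes from; I would not belabor the precise optimization, only verify that some explicit bound is achievable.

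\medskip

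\textbf{Verifying the four properties.} For cubicity and simplicity: design the gadget with all internal vertices of degree $3$ and all boundary attachments replacing a degree-$2$ ``slot'' created by subdividing a boundary edge of $F$ (so the old vertices stay degree $3$); avoid multi-edges by keeping the gadget large enough. For $3$-connectedness: this is the delicate point — I would argue that any $2$-cut of the new graph would induce a $2$-cut or a contradiction in the old one, using that the gadget itself is internally $3$-connected and ``well-attached'' (at least three disjoint paths from the gadget interior to the rest). Whitney-type arguments or the characterization of $3$-connected planar graphs (unique embedding) help here. For planarity and face degree: by construction, every face of the modified graph is either an untouched small face of $G$, a small face inside the gadget, or a small face between the gadget and $\partial F$ — all of constant size. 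For Hamiltonicity equivalence: prove the ``wire'' property of the gadget by a short case analysis on how a Hamiltonian cycle can enter/leave the gadget region, showing the traversal is forced; then the bijection (up to the forced internal part) between Hamiltonian cycles of $G$ and of the modified graph follows.

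\medskip

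\textbf{Main obstacle.} The hard part will be simultaneously maintaining $3$-connectedness and the forced-traversal (Hamiltonicity-preserving) behavior of the gadget: these pull in opposite directions, since a rigidly $3$-connected insertion tends to create choices for a Hamiltonian cycle, while a gadget that forces a unique traversal tends to have low connectivity (degree-$2$ bottlenecks). The resolution is to iterate the gadget around the whole face as a cyclic chain of constant-size blocks so that global $3$-connectivity is inherited from the cyclic topology while each block still locally forces the cycle to pass through it. Once the gadget is fixed, applying it to every face of degree exceeding the target bound (there are at most $O(|E|)$ such faces, each gadget is polynomial — indeed linear — in the face degree) yields a polynomial-time reduction from $\mathscr{C}\textrm{-HAM}$ to $\mathscr{C}_{177}\textrm{-HAM}$, which together with the trivial membership in $\NP$ proves the theorem.
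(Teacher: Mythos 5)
Your high-level skeleton matches the paper's (reduce from $\mathscr{C}\textrm{-HAM}$ of \cite{garey_plane_1976}, insert gadgets into large faces, preserve the $3CCP$ property and Hamiltonicity), but the heart of your argument --- a ``wire'' gadget with a forced traversal that can be wrapped around the interior of a face and attached at many points of its boundary --- is asserted rather than constructed, and constructing it is precisely the difficulty. You correctly identify the tension (forcing a unique traversal pulls toward degree-$2$ bottlenecks, while $3$-connectedness pulls toward giving the cycle choices), but ``iterate constant-size blocks cyclically'' does not resolve it: you still must force the Hamiltonian cycle to \emph{enter} the gadget at all (rather than bypass it along the subdivided boundary edges), and you must control the many possible entry/exit patterns at $\Omega(\ell)$ attachment points well enough to get a clean correspondence with Hamiltonian cycles of the original graph. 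No argument is offered for either point, so as written there is a genuine gap.

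The paper avoids needing such a gadget. It reuses the $3OR$ gadget already present in \cite{garey_plane_1976}, whose guarantee is \emph{disjunctive}, not a forced wire: after inserting it at three edges $e_1,e_2,e_3$ of a face, the new graph is Hamiltonian iff the old one has a Hamiltonian cycle using at least one of the $e_i$ (\Cref{lem:GJT3ORProperty}). The key observation making this an exact Hamiltonicity-preserving step is to choose $e_1$ and $e_2$ sharing a vertex $v$: in a cubic graph every Hamiltonian cycle uses two of the three edges at $v$, hence at least one of $\{e_1,e_2\}$, so the disjunctive condition is automatically satisfied. The price is that a single insertion does not reduce a face to bounded degree in one shot --- it roughly halves the face (the two large residual faces have degrees $\floor{f/2}+10$ and $\ceil{f/2}+10$) while increasing three adjacent faces' degrees by $6$ --- so the paper must iterate and prove polynomial termination via the potential $\sum_i f_i^2$ (\Cref{prop:subdivisioninpoly}); this trade-off is also where the specific constant $177$ comes from. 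Your one-shot picture, in which a single insertion bounds all resulting face degrees by an absolute constant, skips both the termination issue and the combinatorial analysis that any such gadget would require.
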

 We will prove a reduction from the following theorem: %

\begin{thm}[\cite{garey_plane_1976}]\label{thm:GJT3CCP}$\mathscr{C}\textrm{-HAM}$ is $\NP$-complete
\end{thm}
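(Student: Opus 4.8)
The plan is to establish $\NP$-hardness by a polynomial-time reduction from $\mathscr{C}\textrm{-HAM}$ (\Cref{thm:GJT3CCP}); membership of $\mathscr{C}_{177}\textrm{-HAM}$ in $\NP$ is immediate, since a Hamiltonian cycle can be guessed and checked in polynomial time. So the task is to construct, in polynomial time, a map $G \mapsto \widehat G$ that sends each $3CCP$ graph $G$ (with arbitrarily large faces) to a $3CCP$ graph $\widehat G$ every face of which has degree at most $177$, such that $G$ is Hamiltonian if and only if $\widehat G$ is.

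The construction is a single pass with two stages. Stage one: replace every vertex $v$ of $G$ by a fixed bounded-size cubic plane \emph{vertex gadget} $\Gamma_v$, glued to its neighbors' gadgets along the original edges through three external connectors; the gadget is built so that, for each of the three faces incident to $v$, it also exposes a bounded number of free ``inward'' connection points protruding into that face, while leaving the rest of each incident face combinatorially unchanged except for a constant-factor increase in its degree. Stage two: for each face $F$ of $G$, use the inward connection points that the gadgets around $\partial F$ protrude into $F$ to chop $F$ into pieces of bounded degree. For instance one can stitch the protruding stubs around $\partial F$, pairwise merging them through small cubic junctions, to form a shorter cycle $C^{(1)}_F$ concentric with $\partial F$, together with a ring of bounded-degree faces between the two; iterating $O(\log \deg F)$ times yields a nested family of concentric cycles inside $F$ whose innermost region has bounded degree. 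Everything in stage two is interior to $F$, so no face outside $F$ is affected, and the total number of vertices added is $\sum_v O(1) + \sum_F O(\deg F) = O(|E(G)|) = O(|V(G)|)$; hence there is no cascading growth of face degrees and the reduction runs in polynomial time. One checks that $\widehat G$ is cubic, simple, and plane, that every face has degree bounded by an absolute constant, and that $3$-connectivity is preserved by choosing the gadgets and stitchings so as not to create any $2$-edge cut.

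Correctness comes down to a \emph{Hamiltonicity transparency} statement: $\widehat G$ is Hamiltonian iff $G$ is, with Hamiltonian cycles corresponding through their restrictions to the individual gadgets, junctions, and concentric rings. The structural input is the observation that, in a cubic graph, a Hamiltonian cycle meets the boundary of any face in a disjoint union of arcs, each arc leaving the face at its two ends through the unique non-boundary edge there; consequently the information the rest of the graph transmits across a face is just a non-crossing matching on some subset of the outgoing edges, subject to every boundary vertex being covered. One verifies that $\Gamma_v$ admits exactly the same menu of local traversals as $v$ did, and that each concentric ring (and each junction) can be covered by the cycle consistently with every arc pattern on the cycles bounding it; composing these local bijections yields the global equivalence.

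The main obstacle is engineering the vertex gadget and the stitching so that all the demands hold simultaneously: Hamiltonicity transparency together with cubicity, simplicity, planarity, $3$-connectivity, and boundedness of all face degrees. It is precisely here that one cannot take the naive route of subdividing boundary edges: a degree-$2$ vertex is forced into every Hamiltonian cycle, and subdividing all three edges at a vertex destroys Hamiltonicity, so attachment points for the chopping structure must instead be manufactured inside the vertex gadgets. Once a gadget with all the required properties is in hand, the explicit bound $177$ follows by bookkeeping the gadget size, the widths of the concentric rings, and the base case of the nesting recursion.
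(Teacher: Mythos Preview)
There is a mismatch: the stated theorem, \Cref{thm:GJT3CCP}, is the Garey--Johnson--Tarjan result that the paper merely \emph{cites} and does not prove. Your proposal is a reduction \emph{from} \Cref{thm:GJT3CCP}, so you are in fact attempting \Cref{thm:facebounded3CCP} (the $\mathscr{C}_{177}$ statement). I compare against the paper's proof of that.

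The approaches are quite different. The paper works iteratively, face by face: it repeatedly inserts the $3$OR gadget of \cite{garey_plane_1976} into a too-large face along three boundary edges, two of which are chosen adjacent so that cubicity forces every Hamiltonian cycle through at least one of them; \Cref{lem:GJT3ORProperty} then guarantees Hamiltonicity is preserved at each step. Polynomial-time termination is proved via a sum-of-squared-face-degrees potential (\Cref{prop:subdivisioninpoly}), and the constant $177$ drops out of the explicit internal face-degree accounting for the $3$OR gadget (\Cref{lem:tedious}). You instead propose a single-pass scheme: replace each vertex by a gadget $\Gamma_v$ with inward stubs, then stitch the stubs inside each face into concentric rings.

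The gap is that $\Gamma_v$ and the ring junctions are never exhibited, and constructing them is the entire difficulty. Your $\Gamma_v$ must be a fixed cubic plane gadget that (i) reproduces at its three external connectors exactly the three Hamiltonian local states of a cubic vertex, (ii) also carries inward stubs into each incident face, and (iii) remains Hamiltonicity-transparent against \emph{whatever} arc pattern stage two will later impose on those stubs; the same flexibility is then required of every junction on every ring. This is a strong and delicate demand, and ``one verifies'' does not discharge it. The paper sidesteps the design problem entirely by reusing a gadget whose local-state analysis is already done in \cite{garey_plane_1976}, together with the adjacent-edge trick that makes that gadget's precondition automatic. Finally, even if your construction can be completed, the face-degree bound it yields would be determined by your own gadget sizes and ring widths, not by the $3$OR bookkeeping, so there is no reason it would come out to $177$.
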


To obtain \Cref{thm:facebounded3CCP} from \Cref{thm:GJT3CCP}, we will show that we can take $G \in \mathscr{C}$ and construct a $G' \in \mathscr{C}_{177}$ such that $G'$  has Hamiltonian cycle if and only if $G$ does. We will obtain $G'$ from $G$ by using an algorithm that subdivides the large faces repeatedly (\Cref{alg:Subdivisions}) in polynomial time (\Cref{prop:subdivisioninpoly}). The gadget that we use to subdivide large faces comes from the proof of \Cref{thm:GJT3CCP} in \cite{garey_plane_1976}, in particular, we use their $3$-way OR gate gadget. We now review some relevant properties of that $3$-way $OR$ (3OR) that will be used in the reduction:

\begin{defn}[$3OR$ Gadget, \cite{garey_plane_1976}]
The $3OR$ gadget is pictured in  \Cref{fig:3ORDetailed}. This gadget has three distinguished sets of attaching nodes, each of which consists of a path graph with $6$ nodes. %
\end{defn}

\begin{figure}
    \centering
    \def\svgscale{.3}{
    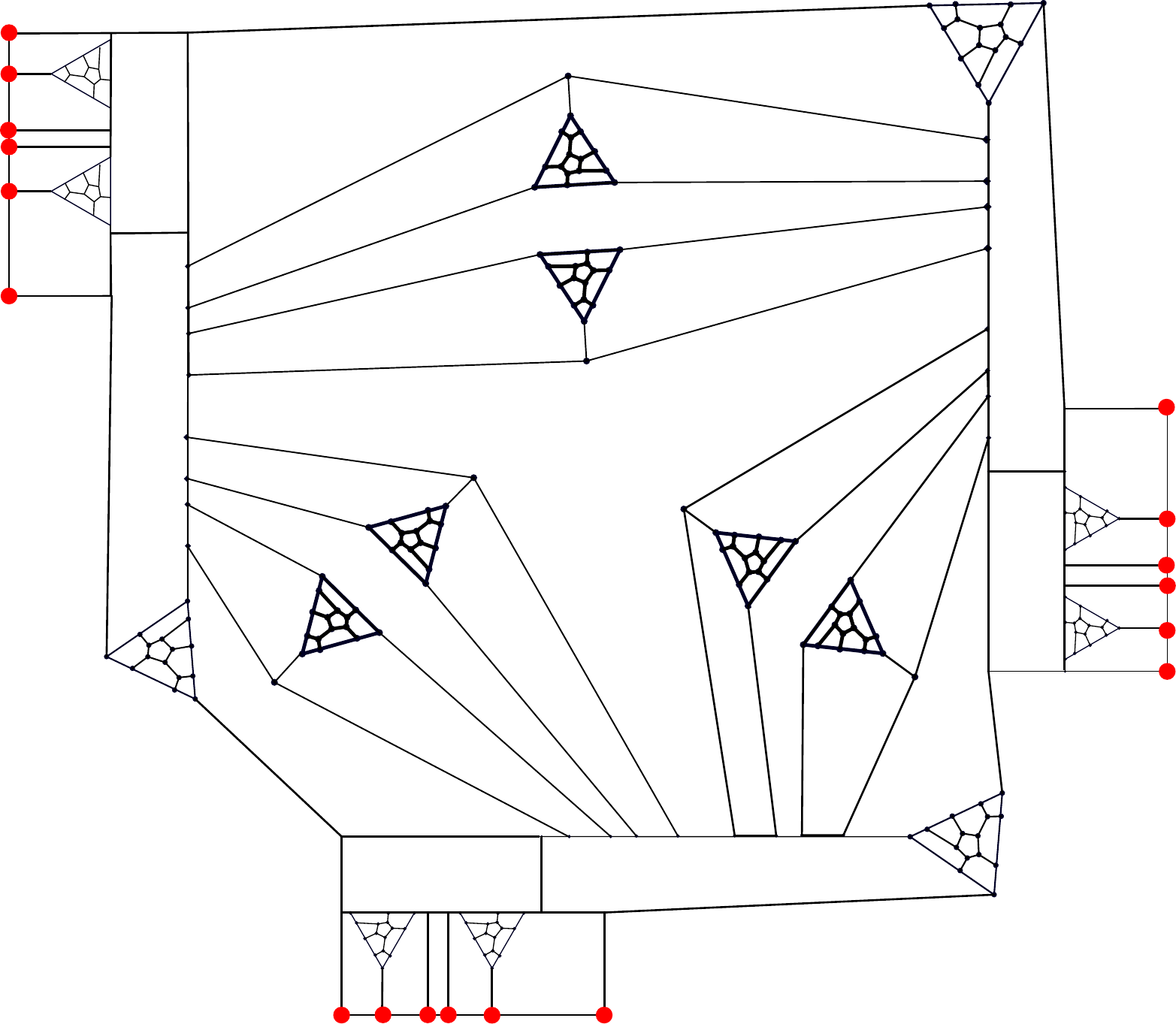}
    \caption{The 3OR gadget. See \Cref{fig:3ORDetailedForTedious} in the appendix for more detail. The distinguished attaching nodes are colored red.
    }
    \label{fig:3ORDetailed}
\end{figure}

\begin{defn}[$3OR$ insertion]
 A $3OR$ gadget can be inserted into a face $F$ of a plane graph by picking 3 edges $\{e_1, e_2,e_3\}$ of $F$, replacing each edge $e_i$ with a path containing $6$ nodes, and gluing each of the distinguished segments of the $3OR$ to one of those subdivided edges. For the plane embedding, we place the rest of the $3OR$ into the interior of $F$, as in \Cref{fig:Detailed3ORInsertion}. %
We will refer to this operation as \emph{inserting a $3OR$} into $F$ at the edges $\{e_1, e_2, e_3\}$.
\end{defn} %

\begin{figure}
\centering
\rotatebox{90}{\def\svgscale{.3}{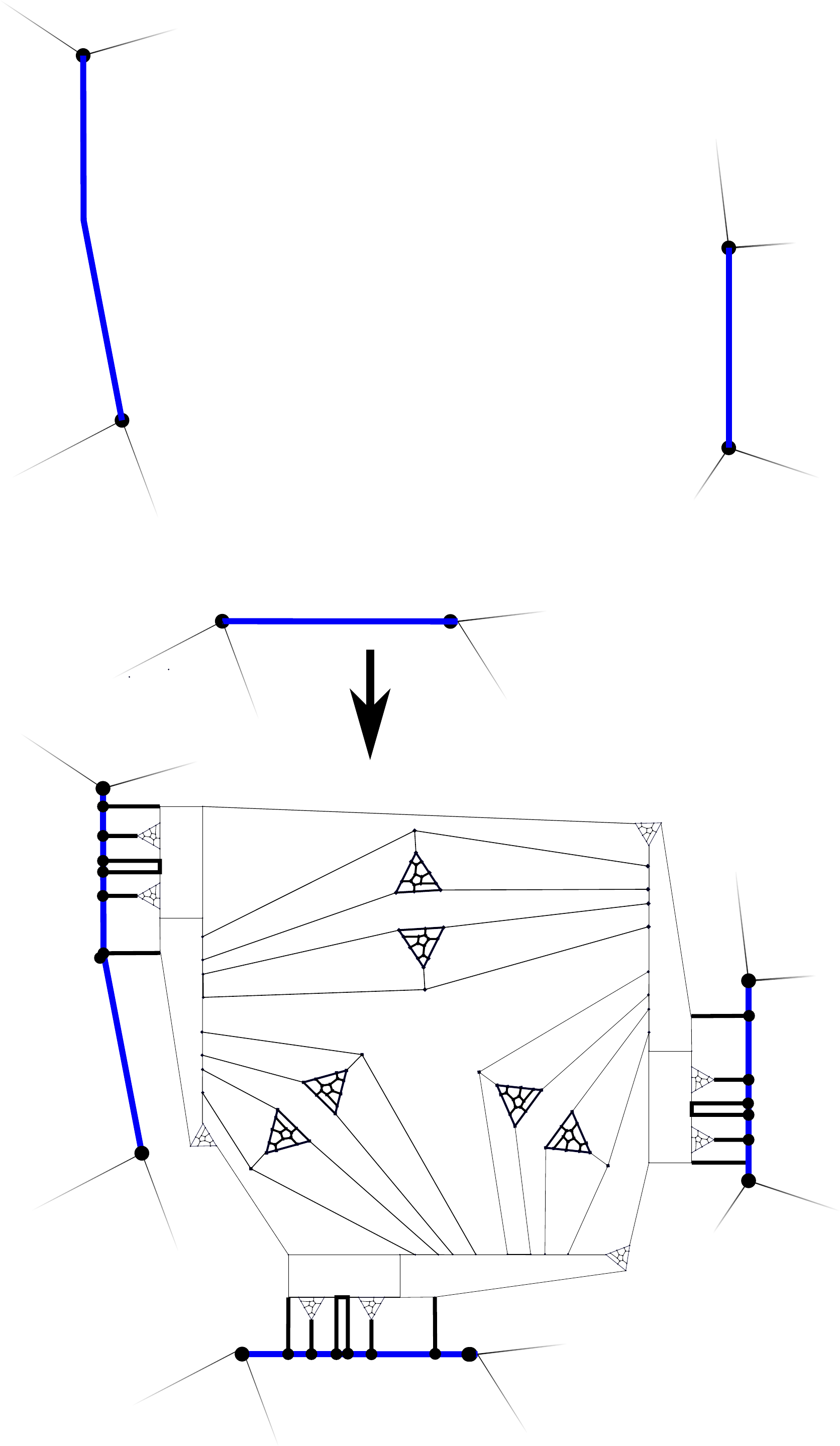} }
\vspace{.8cm}\caption{Inserting a $3OR$. See \Cref{fig:3ORDetailedForTedious} for more detail.} \label{fig:Detailed3ORInsertion}\label{fig:Detailed3ORInsertion}
\end{figure}
\begin{lem}\label{lem:GJT3ORProperty}
Let $H$ be a $3CCP$ graph, $F$ a face of $H$, and  $\{e_1, e_2, e_3\}$ edges of $H$. Construct a graph $H'$ by attaching a $3OR$ gadget to the edges $\{e_1, e_2, e_3\}$. Then $H'$ has a Hamiltonian cycle if and only if $H$ has a Hamiltonian cycle containing at least one of the $e_i$. Additionally, $H'$ is a $3CCP$ graph.
\end{lem}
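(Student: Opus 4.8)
The plan is to treat the three assertions separately, in each case reducing to the combinatorial behavior of the $3OR$ gadget of \cite{garey_plane_1976}. The heart of the argument is a preliminary claim about the gadget, which I would verify by inspecting \Cref{fig:3ORDetailed} (and \Cref{fig:3ORDetailedForTedious} in the appendix): after the $3OR$ is inserted into $F$ at $\{e_1,e_2,e_3\}$, the union of the gadget with its three attaching $6$-paths admits a spanning system of vertex-disjoint paths with endpoints at the (original) endpoints of the subdivided $e_i$, realizing a prescribed ``connection pattern'' on those endpoints, exactly when that pattern routes at least one of the three arms as a through-route. In the language of \cite{garey_plane_1976} this is the $3$-way OR semantics: the gadget is Hamiltonian-traversable precisely when at least one of its three inputs is ``true,'' where an input being true corresponds to the ambient cycle using the associated edge $e_i$. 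I expect to cite most of this analysis from \cite{garey_plane_1976} and only re-check the parts affected by our particular packaging (subdividing each $e_i$ into a $6$-node path and gluing the distinguished segments).

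Granting that property, the equivalence follows in both directions. For the backward direction, suppose $H$ has a Hamiltonian cycle $C$ with $e_i \in C$ for some $i$, and fix such an $i$. Outside $F$, $C$ already visits every vertex, so I keep it unchanged there; inside $F$, I use the coverability property to route a Hamiltonian cycle of $H'$ that traverses the subdivided $e_i$ as a through-route, covers the interior of the gadget, and covers the subdivision vertices on the remaining two arms (appropriately handling an arm if it is also used by $C$), while agreeing with $C$ at the attaching points. For the forward direction, given a Hamiltonian cycle $C'$ of $H'$, restrict it to the gadget together with its arms; since all gadget vertices are covered, the coverability property forces at least one arm to be a through-route, so contracting the gadget (undoing the $3OR$ insertion and the subdivisions) yields a Hamiltonian cycle of $H$ containing the corresponding $e_i$.

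Finally, for the claim that $H'$ is again a $3CCP$ graph: planarity is immediate because the gadget is embedded in the interior of $F$; the cubic property holds because the $3OR$ is itself cubic and the insertion replaces each $e_i$ by a $6$-path whose internal vertices each pick up exactly one gadget edge, leaving every vertex of degree $3$; simplicity is read off the figure (no loops or parallel edges are introduced). The delicate point is $3$-connectivity. I would argue that no $2$-element vertex cut $\{u,v\}$ of $H'$ can exist: if both $u,v$ lie in $H$, they would form a $2$-cut of $H$, contradicting its $3$-connectivity; if at least one lies in the gadget's interior, then the three attaching $6$-paths furnish three internally vertex-disjoint routes between the gadget and the rest of $H'$, which, combined with the $3$-connectivity of $H$ and the internal $3$-connectivity of the $3OR$ (a fact from \cite{garey_plane_1976}), prevents two vertices from separating the gadget from the rest or splitting it internally.

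The main obstacle I anticipate is the finite but fiddly case analysis establishing the OR-semantics of the $3OR$ gadget exactly as packaged here --- in particular confirming that the coverability property is robust to whichever nonempty subset of $\{e_1,e_2,e_3\}$ the ambient cycle uses --- together with the $3$-connectivity verification. Both are essentially contained in \cite{garey_plane_1976}; the real work is checking that our insertion operation does not disturb them.
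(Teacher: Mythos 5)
Your proposal is correct and takes essentially the same route as the paper: both defer the core Hamiltonicity equivalence to the local-state (OR-semantics) analysis of the $3OR$ gadget in \cite{garey_plane_1976}, and both treat cubicity, simplicity, and planarity as routine. The only minor divergence is in verifying $3$-connectivity, where you sketch a direct two-vertex-cut analysis while the paper invokes its vertex-replacement lemma (\Cref{lem:3_connected_lemma}) together with the BG-operation; either works, though your step ``if both $u,v$ lie in $H$ they form a $2$-cut of $H$'' still needs the small observation that paths in $H\setminus\{u,v\}$ lift to $H'\setminus\{u,v\}$ across the subdivided edges.
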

\begin{proof}[Proof \cite{garey_plane_1976}]
This amounts to an analysis of the local states, which are described in \cite[Fig.\ 6B]{garey_plane_1976}. The proof that these are the only possible local states exploits detailed properties of the $3OR$ gadget; the reader can consult \cite{garey_plane_1976} for details. The proof that $H'$ is $3$-connected follows from \Cref{lem:3_connected_lemma}, and checking that $H'$ is cubic and planar is straightforward. 
\end{proof}

Our strategy will be to take large faces and subdivide them using $3OR$ gadgets: %
\begin{defn}[Subdivision]\label{def:subdiv}
Let $F$ be a face of $H$. A \emph{subdivision of $F$} is a graph $H'$ obtained by taking $3$ edges of $F$, say $e_1, e_2, e_3$, where $e_1$ and $e_2$ \emph{share a vertex}, and inserting a $3OR$ into $F$ at $e_1, e_2, e_3$. See \Cref{fig:Detailed3ORInsertion} for an illustration of this definition. In \cref{fig:3ORDetailedForTedious} we label a few regions of this subdivision for reference: $3$ \emph{adjacent faces}, a \emph{pocket}, and $2$ \emph{large faces}.

\end{defn}

The difference between `subdivision' and `inserting a $3OR$' is that we require that $2$ of the edges used are adjacent. The following proposition shows that we can subdivide faces without changing Hamiltonicity:
\begin{prop}
Let $H$ be any $3CCP$ graph, and let $F$ be a face of $H$. Let $v$ be a vertex of $F$, let $e_1$ and $e_2$ be the edges of $F$ adjacent to $v$, and let $e_3$ be any other edge of $F$. Let $H'$ be the graph obtained by subdividing $H$ at $e_1,e_2$ and $e_3$. Then, $H$ has a Hamiltonian cycle if and only if $H'$ has a Hamiltonian cycle. See \Cref{fig:3ORDetailedForTedious}.
\end{prop}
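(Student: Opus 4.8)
The plan is to deduce this directly from \Cref{lem:GJT3ORProperty}. That lemma already asserts that the graph $H'$ obtained by inserting a $3OR$ at $\{e_1,e_2,e_3\}$ is Hamiltonian if and only if $H$ has a Hamiltonian cycle using at least one of $e_1,e_2,e_3$. So the whole proposition reduces to the following claim about our particular (adjacent) choice of edges: \emph{$H$ is Hamiltonian if and only if $H$ has a Hamiltonian cycle through one of $e_1,e_2,e_3$.} One direction of this claim is trivial, so the real task is to show that whenever $H$ has any Hamiltonian cycle at all, it has one through one of the $e_i$ --- and in fact through $e_1$ or $e_2$.

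First I would fix an arbitrary Hamiltonian cycle $C$ of $H$ and look locally at the vertex $v$. Because $H$ is cubic, $v$ is incident to exactly three edges; since $H$ is $3$-connected and plane its faces are simple cycles, so exactly two of these edges bound $F$, namely $e_1$ and $e_2$, and we may call the third one $e'$. Since $C$ is a $2$-regular subgraph visiting $v$, it uses exactly two of the three edges at $v$, hence omits exactly one of $\{e_1,e_2,e'\}$. In each of the three cases $C$ still contains $e_1$ or $e_2$, so $C$ contains at least one of $e_1,e_2,e_3$, which is exactly the hypothesis needed to invoke \Cref{lem:GJT3ORProperty} in the forward direction. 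For the reverse direction, a Hamiltonian cycle of $H'$ yields, via \Cref{lem:GJT3ORProperty}, a Hamiltonian cycle of $H$, so $H$ is Hamiltonian. Combining the two directions proves the proposition. (One can also record that $H'$ remains a $3CCP$ graph, again by \Cref{lem:GJT3ORProperty}, which will matter when this operation is iterated in \Cref{alg:Subdivisions}.)

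There is no serious obstacle here: all of the delicate work --- the local-state analysis of the $3OR$ gadget --- is encapsulated in \Cref{lem:GJT3ORProperty}, which we are free to cite. The only genuinely new point is the elementary observation that a cubic vertex lying on a Hamiltonian cycle must have one of any two of its incident edges used, and this is precisely why \Cref{def:subdiv} insists that two of the three subdivided edges, $e_1$ and $e_2$, share the vertex $v$: without adjacency one could imagine a Hamiltonian graph all of whose Hamiltonian cycles avoid the chosen triple, which would break the equivalence.
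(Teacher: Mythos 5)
Your proof is correct and follows essentially the same route as the paper's: observe that a Hamiltonian cycle of the cubic graph $H$ uses exactly two of the three edges at $v$, hence at least one of the adjacent pair $e_1,e_2$, and then invoke \Cref{lem:GJT3ORProperty}. The extra detail you supply (the explicit three-case check at $v$ and the remark on why adjacency of $e_1,e_2$ is essential) is a faithful elaboration of the paper's one-line argument, not a different method.
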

\begin{proof}
Since $H$ is cubic, every Hamiltonian cycle of $H$ uses all but one of the edges at each vertex. %
Thus, any Hamiltonian cycle uses at least one of $\{e_1,e_2\}$, allowing us to apply \Cref{lem:GJT3ORProperty}.%
\end{proof}

Since our goal is to reduce the face degree, we define the subdivisions that optimally decrease degree: %

\begin{defn}[Optimal subdivision]\label{defn:optimalsubdivision}
Let $F$ be a face of a $3CCP$ graph $G$. An \emph{optimal subdivision} of $F$ is any subdivision of $F$ that minimizes the maximum degree of the two large faces (cf.\ \Cref{def:subdiv}) of the subdivision.
\end{defn}

\Cref{alg:Subdivisions} takes a graph $H$ and a parameter $d$, and---if it terminates---returns a graph that has no faces of degree $> d$. We will use this algorithm to reduce to an instance of $\mathscr{C}_{177}\textrm{-HAM}$ from one of $\mathscr{C}\textrm{-HAM}$, so we must show that \Cref{alg:Subdivisions} terminates in polynomial time for an appropriate choice of $d$. To determine the necessary $d$, we need the following lemma:

\begin{algorithm}[t]
\caption{Subdivision of faces with more than $d$ edges}\label{alg:Subdivisions}
Input: A plane graph $H$ and $d \in \mathbb{N}$.
\begin{algorithmic}[1]
\STATE{  Let $F$ be any face of $H$ with maximum degree}
\IF{ $deg(F) \leq d$} \STATE{terminate and return $H$} \ELSE \STATE{Use a $3OR$ gadget to optimally subdivide $F$ (\Cref{defn:optimalsubdivision}). Set this subdivided graph as $H$.} \ENDIF
\STATE{ Return to $1$.}
\end{algorithmic}
\end{algorithm}

\begin{lemma}%
\label{lem:tedious} Let $G$ be a $3CCP$ graph. Suppose that $F$ is a face of degree $f$. Suppose we make an optimal subdivision of $F$ at edges $e_1, e_2, e_3$, where $e_1$ is adjacent to $e_2$. Let $F_i$ be the face in $G$ adjacent to $e_i$ for $i = 1,2,3$. Then, the following hold:
\begin{itemize}
    \item Each $F_i$ is distinct. Moreover, the degree of each $F_i$ increases by $6$.
    \item The two large faces (\cref{fig:3ORDetailedForTedious}) inside of what was originally $F$ each have degree $\floor{f / 2} + 10$ and $\ceil{f/2} + 10$
    \item  The gadget itself introduces $33$ faces of degree $4$, $75$ faces of degree $5$, $9$ faces of degree $7$, $15$ faces of degree $8$, $4$ faces of degree $9$, $3$ faces of degree $10$, $3$ faces of degree $12$, and $3$ of degree $14$. We call these the ``small faces.''%
    \item A face of degree 10 is introduced, which is labelled ``the pocket'' in \Cref{fig:3ORDetailedForTedious}.
\end{itemize}
\end{lemma}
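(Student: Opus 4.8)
## Proof Plan for Lemma~\ref{lem:tedious}

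The plan is to carry out a careful but elementary combinatorial bookkeeping on the $3OR$ gadget, using the explicit picture in \Cref{fig:3ORDetailedForTedious}. The key observation that organizes everything is that all four claims are \emph{local} to a single insertion: the only faces of $G$ that are altered are $F$ itself (which gets replaced by the two large faces plus the small faces and the pocket) and the three faces $F_1, F_2, F_3$ adjacent to the chosen edges $e_1, e_2, e_3$; every other face of $G$ is untouched. So the proof reduces to (i) checking the effect on $F_1, F_2, F_3$, (ii) computing the degrees of the two large faces, and (iii) reading off the degree sequence of the faces internal to the gadget.

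First I would handle the three adjacent faces $F_1, F_2, F_3$. Distinctness: since $e_1$ and $e_2$ are adjacent at a vertex $v$, if $F_1 = F_2$ then this common face together with $F$ would meet along two edges sharing a vertex, which is impossible in a $3$-connected plane graph (it would force a vertex of degree $2$ or a separating pair); a similar $2$-connectivity argument rules out $F_1 = F_3$ and $F_2 = F_3$. For the degree increase: inserting the $3OR$ replaces each $e_i$ by a path on $6$ nodes, i.e.\ by $5$ edges, and then glues the corresponding distinguished attaching path of the gadget along it; tracing the boundary of $F_i$ through \Cref{fig:3ORDetailedForTedious} shows the single edge $e_i$ on $\partial F_i$ is replaced by a walk contributing $5$ new edges plus the identification edges, for a net increase of exactly $6$. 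This is a direct count against the figure.

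Next I would compute the two large faces. After subdivision, what was the interior of $F$ is split by the gadget into two large faces. The boundary of $F$ has $f$ edges; the three subdivided edges $e_1, e_2, e_3$ sit on it, with $e_1, e_2$ adjacent. Cutting $F$ along the gadget's "spine" separates the $f$ boundary edges into two arcs — because $e_1$ and $e_2$ are adjacent, one side of the cut picks up $\floor{f/2}$ of the original edges and the other $\ceil{f/2}$ (this is exactly why \Cref{def:subdiv} insists on two adjacent edges: it lets the cut pass between $e_1$ and $e_2$ and thereby halve the boundary) — and each large face also inherits a fixed number of edges from the gadget boundary, which the figure shows to be $10$. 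Adding these gives $\floor{f/2}+10$ and $\ceil{f/2}+10$. Finally, the small-face degree sequence and the degree-$10$ pocket are obtained by simply enumerating the bounded regions interior to the $3OR$ gadget in \Cref{fig:3ORDetailedForTedious} and tallying their edge counts; since the gadget is fixed, this is a finite check that does not depend on $G$ or $f$.

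The main obstacle is not conceptual but the risk of miscounting: the $3OR$ gadget is large, and getting the exact multiplicities ($33$ faces of degree $4$, $75$ of degree $5$, and so on) right requires meticulous inspection of \Cref{fig:3ORDetailedForTedious}, including being careful about which edges of the gadget are consumed by the gluing to the subdivided $e_i$ versus which survive on the boundary of a large face. The distinctness claim for $F_1, F_2, F_3$ and the "$\floor{f/2}$ vs.\ $\ceil{f/2}$" split are the only places where $3$-connectivity and the adjacency hypothesis on $e_1, e_2$ are genuinely used; everything else is a static tally on a fixed picture, which I would present by annotating the figure rather than by prose.
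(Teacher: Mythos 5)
Your proposal matches the paper's proof, which is exactly this: read the face degrees and multiplicities off \Cref{fig:3ORDetailedForTedious}, and derive distinctness of the $F_i$ from the graph being $3CCP$. One small slip in your prose: an edge replaced by a path with $5$ edges would raise the face degree by $4$, not $6$ --- the correct reading of the insertion is that $6$ \emph{new} nodes subdivide $e_i$, so $e_i$ becomes a $7$-edge path and the degree of $F_i$ rises by $6$, consistent with the figure you defer to.
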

\begin{proof}
\Cref{fig:3ORDetailedForTedious} in the appendix can be used to count the degrees of these faces, and the number of the small faces of different sizes. The fact that each $F_i$ is distinct follows from the fact that the graph is $3CCP$. %
\end{proof}

\begin{prop}\label{prop:subdivisioninpoly}
Let $H$ be a $3CCP$ graph with $n$ nodes. As long as $d \geq 178 $, Algorithm~\ref{alg:Subdivisions} terminates in time polynomial in $|H|$.%
\end{prop}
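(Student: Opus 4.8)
The plan is to bound the number of subdivisions performed by \Cref{alg:Subdivisions} by $O(n)$ (here $n=|V(H)|$, and $|H|=\Theta(n)$ since $H$ is cubic) via a potential argument, to note that the working graph stays of size $O(n)$ throughout the run, and to observe that each iteration costs $\mathrm{poly}(n)$ time; multiplying these facts gives the statement.

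First I would extract the numerical bookkeeping from \Cref{lem:tedious}. One subdivision of a face $F$ of degree $f$ deletes $F$; creates two ``large'' faces of degrees $\floor{f/2}+10$ and $\ceil{f/2}+10$; creates a fixed collection of ``small'' faces, all of degree at most $14$, together with the degree-$10$ pocket; and raises the degrees of exactly three other, pairwise distinct, faces $F_1,F_2,F_3$ by $6$ each. Because the $3OR$ gadget is a fixed finite graph and a subdivision only subdivides three edges, each step adds $O(1)$ vertices and edges, so after $T$ subdivisions the graph has $n+O(T)$ vertices and — being planar and, by \Cref{lem:GJT3ORProperty}, still $3CCP$ — also $O(n+T)$ edges and faces, with $\sum_F\deg(F)=2|E|$ at every stage.

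The heart of the argument is the potential. Fix the constant $C=d-50$ (so $C\ge 128$, since $d\ge 178$) and, for a plane graph $H'$, set $\Phi(H')=\sum_F\max(0,\deg(F)-C)$, the sum over all faces $F$ of $H'$; then $\Phi(H)\le\sum_F\deg(F)=2|E(H)|=3n$. I claim every subdivision decreases $\Phi$ by at least $33$. Deleting $F$ (with $f>d>C$) lowers $\Phi$ by $f-C\ge 51$; the small faces and the pocket have degree at most $14<C$, so contribute $0$; the three neighbours $F_i$ raise $\Phi$ by at most $18$; and the two large faces raise $\Phi$ by $\max(0,\floor{f/2}+10-C)+\max(0,\ceil{f/2}+10-C)$, which equals $f+20-2C$ once $f\ge 2C-19$ and equals $0$ once $f\le 2C-20$. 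A short case split then finishes it: when $f\ge 2C-19$ the net change is at most $-C+38\le-90$, and when $d<f\le 2C-20$ it is at most $-(f-C)+18\le-33$. Since $\Phi\ge 0$, \Cref{alg:Subdivisions} performs at most $\Phi(H)/33=O(n)$ subdivisions and therefore halts.

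With $T=O(n)$ subdivisions each adding $O(1)$ to the size, the working graph has size $O(n)$ throughout, and each iteration — maintaining a plane embedding (essentially unique for a $3$-connected planar graph and updatable in polynomial time), locating a maximum-degree face, comparing its degree to $d$, choosing an optimal subdivision in the sense of \Cref{defn:optimalsubdivision} (two adjacent boundary edges plus a third edge bisecting the remaining boundary), and inserting the $3OR$ — runs in time polynomial in the current size, i.e.\ $\mathrm{poly}(n)$. Multiplying, the total time is $O(n)\cdot\mathrm{poly}(n)=\mathrm{poly}(|H|)$. The only real obstacle is the ``$+6$ to three adjacent faces'' clause of \Cref{lem:tedious}: it makes the maximum face degree non-monotone under subdivision, so no direct induction on the maximum degree works; the potential sidesteps this because each step removes a whole block of excess ($f-C\ge 51$) while the adjacency effect injects only the bounded amount $18$, and keeping the threshold $C$ a fixed amount \emph{below} $d$ is precisely what manufactures that slack — which is where the hypothesis $d\ge 178$ enters (any sufficiently large constant would do). One should additionally verify the degenerate case noted in \Cref{lem:tedious}, that $F_1,F_2,F_3$ really are distinct from each other and from $F$, which is where $3$-connectivity is used.
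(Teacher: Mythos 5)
Your proof is correct, but it runs on a different engine than the paper's. The paper also reduces everything to the face-degree bookkeeping of \cref{lem:tedious}, but it then uses the \emph{quadratic} energy $S=\sum_F \deg(F)^2$: a computation shows that, for $d=178$, each subdivision of a face of degree $>d$ decreases $S$ by at least $1$, and since $S \le (2|E(H)|)^2 = O(n^4)$ initially, the algorithm halts after $O(n^4)$ subdivisions. You instead use the truncated \emph{linear} potential $\Phi=\sum_F\max(0,\deg(F)-C)$ with the threshold $C=d-50$ held a fixed margin below $d$, and show each step drops $\Phi$ by at least $33$; since $\Phi(H)\le\sum_F\deg(F)=3n$, you get an $O(n)$ bound on the number of subdivisions. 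Both potentials exist to sidestep the same obstruction you correctly identify — the ``$+6$ to three adjacent faces'' clause makes the maximum face degree non-monotone, so no induction on it alone can work. What your version buys is a linear rather than quartic bound on the number of steps and a lighter verification burden: you only need that every small face and the pocket have degree below $C$ (all are $\le 14$), whereas the paper must actually tally $\sum c_i^2$ over the $145$ small faces to certify that $d=178$ makes the quadratic increment negative. Your case split is sound: for $f\ge 2C-19$ the net change is at most $38-C\le -90$, and for $d<f\le 2C-20$ it is at most $-(f-C)+18\le -33$ because $f-C\ge d+1-C=51$; the two ranges cover all integers $f>d$. The only hypotheses you lean on beyond \cref{lem:tedious} — that the three faces $F_1,F_2,F_3$ are distinct from one another and from $F$, and that the graph stays $3CCP$ so the lemma keeps applying — are exactly what \cref{lem:tedious} and \cref{lem:GJT3ORProperty} supply, so the argument closes.
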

\begin{proof}
We will consider a single step in the subdivision algorithm and show that in each step a certain nonnegative energy function decreases by at least one. Since the energy function starts off with value $O(n^2)$, the proposition will follow.

Let $f_j$, $j = 1, \ldots, |F(H)|$, and $f'_k$, $k = 1, \ldots, |F(H')|$ refer to the face degrees in some enumerations of the faces of $H$, %
before and after one step of Algorithm~\ref{alg:Subdivisions}, respectively. Assume that $f_1$ corresponds to the face $F_1$ being subdivided during that step and that $f_2, f_3, f_4$ correspond to the faces adjacent to $F_1$ along the edges where the $3OR$ gadget is being added. %
Let $S = \sum f_i^2$ and $S' = \sum f'_k$. 
By \Cref{lem:tedious} and notating the degrees of all the small faces by $c_i$, we have that $S' = S - f_1^2 + \sum c_i^2 + 10^2 + (\floor{f_1/2} + 10)^2 + (\ceil{ f_1/2}+10)^2 + \sum_{j \in \{2,3,4\} } ((f_j + 6)^2 - f_j^2).$

Thus, if $d$ is a positive integer so that $- d^2 + \sum c_i^2 + 10^2 + (\floor{d/2} + 10)^2 + (\ceil{ d/2}+10)^2  +  3 ((d + 6)^2 - d^2) \leq -1$, %
 then whenever there is a face of degree $> d$ one step of the subdivision algorithm reduces the energy $S$ by at least one. %
The precise computation of the smallest such $d$ depends on the counts of the $c_i$ listed in \Cref{lem:tedious}. Using these counts, it can be checked that taking $d = 178$ suffices to ensure that the energy decreases by at least one in each step.

Finally, note that the initial energy $S$ is bounded by $(\sum_i f_i)^2 = (2|E(H)|)^2 = O( |V(H)|^4)$. Therefore, after $O( |V(H)|^4)$ subdivision steps, \Cref{alg:Subdivisions} with $ d = 178$ terminates. Since each step of \Cref{alg:Subdivisions} takes polynomial time, the result follows.
\end{proof}

In particular, we can use Algorithm~\ref{alg:Subdivisions} to eliminate all of the faces of degree greater than or equal to $178$. Combining this with \Cref{thm:GJT3CCP}, we can now prove \Cref{thm:facebounded3CCP}:

\begin{proof}[Proof of \Cref{thm:facebounded3CCP}]
Let $H \in \mathscr{C}$. Apply \Cref{alg:Subdivisions} to obtain an $H' \in \mathscr{C}_{177}$, such that $H'$ has a Hamiltonian cycle if and only if $H$ has one. Constructing $H'$ takes polynomial time, by \Cref{prop:subdivisioninpoly}. %
\end{proof}

\subsubsection{The Node Replacing Gadgets $R_d$}\label{section:Recursive}

In this section, we will construct the corresponding gadget to the chain of bigons, which will allow us to concentrate probability on the longer cycles while remaining $3CCP$. Instead of replacing edges with a chain of $d$ bigons, which allowed for $2^d$ choices of ways to route through that edge, the gadgets $R_d$ we construct here will replace cubic vertices and allow for $\Theta(5^d)$ choices through that vertex.\footnote{The inspiration for this construction came from \cite{wigderson_maximal}, wherein one step a reduction from Hamiltonian cycle on $3CCP$ graphs to Hamiltonian cycle on maximal plane graphs is to replace cubic vertices by a certain gadget.} The first few $R_d$'s are displayed in \Cref{fig:recursiveinitial}, and we give a definition below:

\begin{figure}
    \centering
    \begin{tabular}{cc}
    \def\svgscale{.3}{
    %% Creator: Inkscape inkscape 0.92.4, www.inkscape.org
%% PDF/EPS/PS + LaTeX output extension by Johan Engelen, 2010
%% Accompanies image file 'RecursiveGadget.pdf' (pdf, eps, ps)
%%
%% To include the image in your LaTeX document, write
%%   \input{<filename>.pdf_tex}
%%  instead of
%%   \includegraphics{<filename>.pdf}
%% To scale the image, write
%%   \def\svgwidth{<desired width>}
%%   \input{<filename>.pdf_tex}
%%  instead of
%%   \includegraphics[width=<desired width>]{<filename>.pdf}
%%
%% Images with a different path to the parent latex file can
%% be accessed with the `import' package (which may need to be
%% installed) using
%%   \usepackage{import}
%% in the preamble, and then including the image with
%%   \import{<path to file>}{<filename>.pdf_tex}
%% Alternatively, one can specify
%%   \graphicspath{{<path to file>/}}
%% 
%% For more information, please see info/svg-inkscape on CTAN:
%%   http://tug.ctan.org/tex-archive/info/svg-inkscape
%%
\begingroup%
  \makeatletter%
  \providecommand\color[2][]{%
    \errmessage{(Inkscape) Color is used for the text in Inkscape, but the package 'color.sty' is not loaded}%
    \renewcommand\color[2][]{}%
  }%
  \providecommand\transparent[1]{%
    \errmessage{(Inkscape) Transparency is used (non-zero) for the text in Inkscape, but the package 'transparent.sty' is not loaded}%
    \renewcommand\transparent[1]{}%
  }%
  \providecommand\rotatebox[2]{#2}%
  \newcommand*\fsize{\dimexpr\f@size pt\relax}%
  \newcommand*\lineheight[1]{\fontsize{\fsize}{#1\fsize}\selectfont}%
  \ifx\svgwidth\undefined%
    \setlength{\unitlength}{535.493205bp}%
    \ifx\svgscale\undefined%
      \relax%
    \else%
      \setlength{\unitlength}{\unitlength * \real{\svgscale}}%
    \fi%
  \else%
    \setlength{\unitlength}{\svgwidth}%
  \fi%
  \global\let\svgwidth\undefined%
  \global\let\svgscale\undefined%
  \makeatother%
  \begin{picture}(1,0.40669431)%
    \lineheight{1}%
    \setlength\tabcolsep{0pt}%
    \put(0,0){\includegraphics[width=\unitlength,page=1]{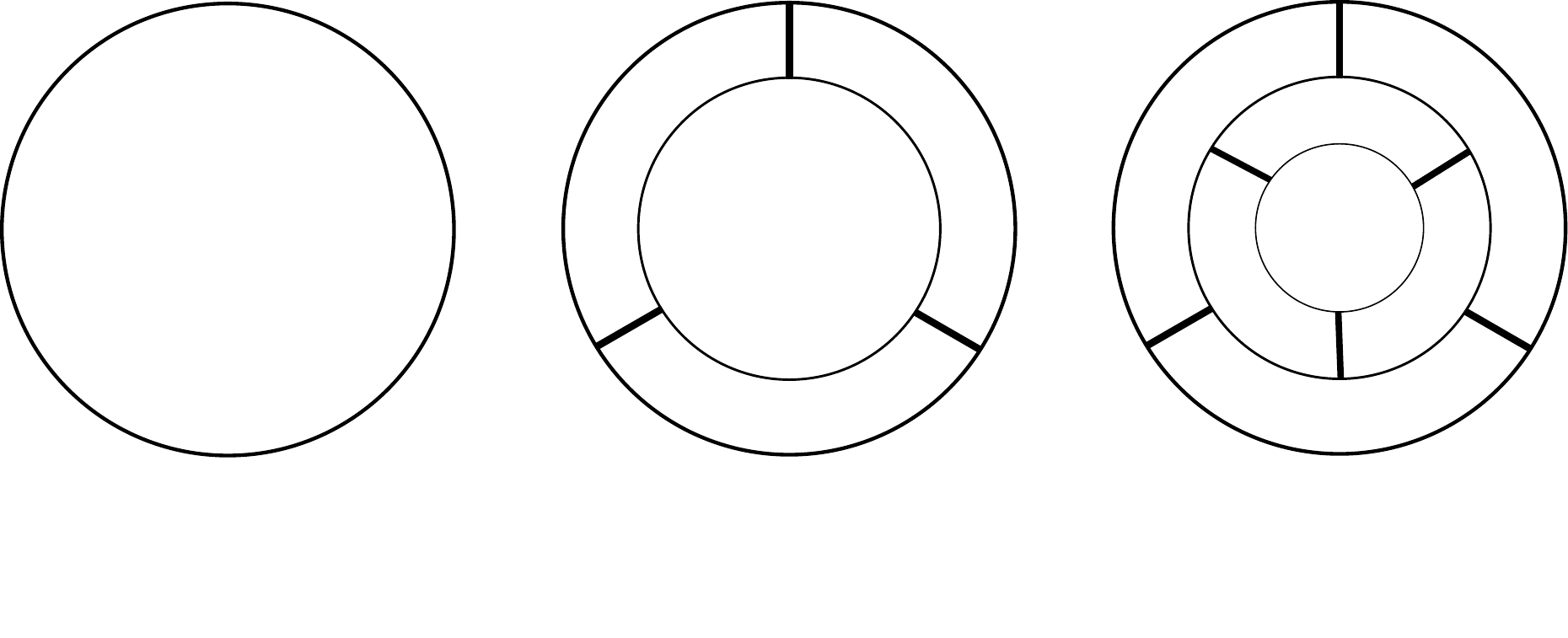}}%
    \put(0.14393614,0.00836055){\color[rgb]{0,0,0}\makebox(0,0)[t]{\lineheight{1.25}\smash{\begin{tabular}[t]{c}$R_0$ \end{tabular}}}}%
    \put(0.5020736,0.00872265){\color[rgb]{0,0,0}\makebox(0,0)[t]{\lineheight{1.25}\smash{\begin{tabular}[t]{c}$R_1$ \end{tabular}}}}%
    \put(0.85312119,0.00924849){\color[rgb]{0,0,0}\makebox(0,0)[t]{\lineheight{1.25}\smash{\begin{tabular}[t]{c}$R_2$ \end{tabular}}}}%
    \put(0,0){\includegraphics[width=\unitlength,page=2]{RecursiveGadget.pdf}}%
  \end{picture}%
\endgroup%
}
    \end{tabular}
    \caption{The first $3$ terms in the sequence of gadgets.}\label{fig:recursiveinitial}
\end{figure}

\begin{defn}[The probability concentration gadgets $R_d$, $C_d$, $R'_d$%
]\label{defn:RDDef}
Define $R_0$ to be a $3$-cycle, with nodes labeled with $(a_0,b_0,c_0)$. %
For each $d \geq 0$, we will construct $R_{d+1}$ from $R_d$. First, we construct $R_d'$ from $R_d$ by subdividing the edges $\{x_d, y_d\}$ for all $x \not = y \in \{a,b,c\}$. The node that subdivided the edge $\{ x_d,y_d \}$ gets labelled $z'_d$, where $\{x,y,z\} = \{a,b,c\}$. For each $x \in \{a,b,c\}$, we attach a node $x_{d+1}$ and an edge $\{x'_d, x_{d+1}\}$. Then, we separately build a $3$ cycle $C_{d+1}$ with nodes labelled by $(a_{d+1}, b_{d+1},c_{d+1})$. We obtain $R_{d+1}$ by gluing $R'_d$ to $C_d$ by identifying the nodes with the same labels. See \Cref{fig:labelsrecursive} for an illustration of this construction.

\end{defn}

\begin{figure}
    \centering
    \def\svgscale{.7}{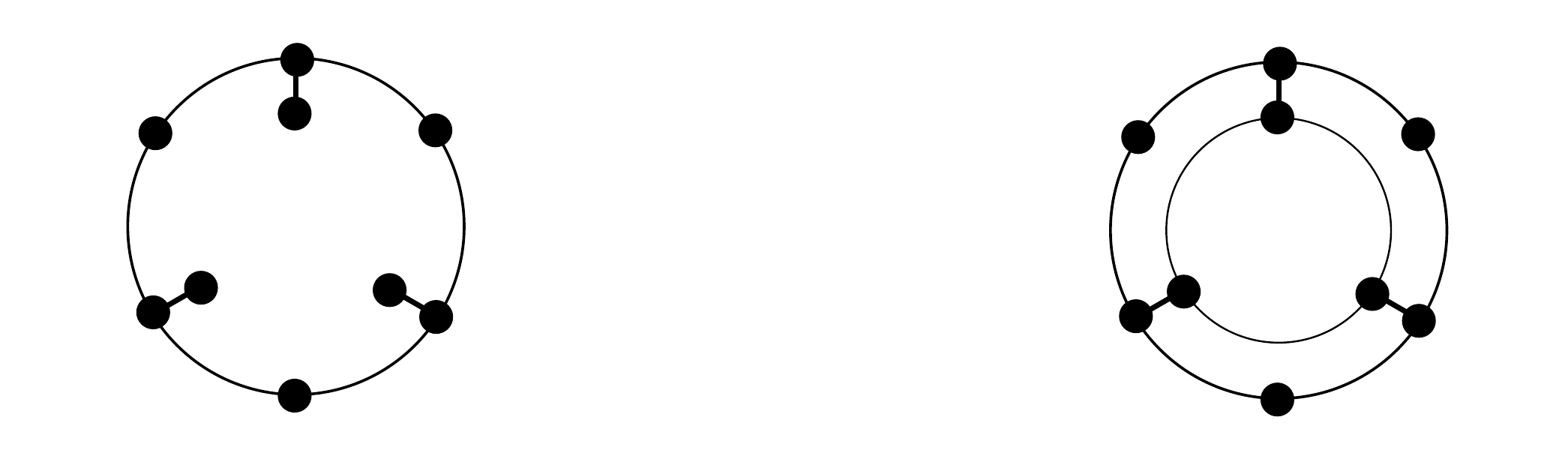}
    \caption{The node labels in the recursive construction}\label{fig:labelsrecursive}
    
\end{figure}

To show probability concentration, %
we will need to compute the number of choices a simple cycle will have when passing through an $R_d$ gadget, as well as the number of simple cycles internal to an $R_d$ gadget. The latter we know how to describe as $|SC(R_d)|$, and the former is captured by the following definition:

\begin{defn}[Simple boundary links]
We call the simple paths in $R_d$ that go from any two points $x \not = y \in \{a_0, b_0, c_0\}$ \emph{simple boundary links}, denoted $SBL(R_d;x,y)$, where $\{a_0, b_0, c_0\}$ are as in \Cref{defn:RDDef}. We denote $SBL(R_d) := SBL(R_d; a_0, b_0)$. %
\end{defn}

Because of the rotational symmetry of the gadget, $|SBL(R)|$ does not change if we choose a different two element subset of $\{a_0,b_0,c_0\}$ as the start and stop vertices. %
We next introduce notation that will be useful when computing $|SC(R_n)|$ and $|SBL(R_n)|$:

\begin{defn}[Simple paths from $X$ to $Y$]
For any graph $G$, with $X , Y \subseteq V(G)$, we let $SP_{X,Y}(G)$ denote the set of simple paths in $G$ that start in $X$, and stop at the first positive time they reach $Y$: $$SP_{X,Y}(G) = \{ \gamma = (x_0, x_1, \ldots, x_n) : x_i \in V(G), \{x_i, x_{i+1} \} \in E(G), x_0 \in X, x_n \in Y, x_i \not \in Y \text{ for } 0 < i < n \}.$$
\end{defn}

\begin{thm}\label{thm:RDUniformCounts}
Let $R_d$ be as in \Cref{defn:RDDef}. Then: %

\begin{align}
    &|SC(R_d)| =  \frac{1}{4} ( 3 \cdot 5^{d + 1} - 8 d-11)\label{eq:SCformula}\\
    5^d \leq &|SC(R_d)| \leq 5^{d+1}\label{eq:SCbounds}\\
    &|SBL(R_d)| = \frac{1}{2} (5^{d+1} - 1) \label{eq:SBLformula}\\
    5^d \leq &|SBL(R_d)| \leq 5^{d+1}\label{eq:SBLbounds}.
\end{align}

\end{thm}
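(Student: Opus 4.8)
The plan is to set up recursions for $|SC(R_d)|$ and $|SBL(R_d)|$ based on the recursive construction in \Cref{defn:RDDef}, solve them in closed form, and then derive the displayed exponential bounds as easy consequences. The main structural observation is that $R_{d+1}$ is obtained from $R_d$ by subdividing the three ``outer'' edges of the central triangle $\{a_d,b_d,c_d\}$ (creating the nodes $a'_d,b'_d,c'_d$), attaching pendant-then-glued triangle structure, and gluing on a fresh $3$-cycle $C_{d+1}$ with vertices $a_{d+1},b_{d+1},c_{d+1}$. Concretely, the boundary triple $\{a_{d+1},b_{d+1},c_{d+1}\}$ of $R_{d+1}$ sits on the outer triangle $C_{d+1}$, and a simple path or simple cycle in $R_{d+1}$ decomposes into: a choice of how it traverses the outer triangle $C_{d+1}$, together with how it enters/exits the copy of $R_d$ sitting inside. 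Because the only way into the inner $R_d$ from the three corners $a_{d+1},b_{d+1},c_{d+1}$ is through the three ``spoke'' paths $x_{d+1}\!-\!x'_d\!-\!(\text{into }R_d)$, the relevant inner contributions are exactly: (i) a simple cycle entirely inside $R_d$, counted by $|SC(R_d)|$; and (ii) a simple boundary link of $R_d$ between two of $\{a_d,b_d,c_d\}$, counted by $|SBL(R_d)|$ (and by symmetry this is the same for any pair).

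First I would carefully enumerate, by a finite case analysis on how the cycle/path interacts with the outer triangle $C_{d+1}$ and the three spokes, the ways a simple cycle of $R_{d+1}$ arises. A simple cycle either lives entirely in the inner $R_d$ (giving $|SC(R_d)|$), or it uses some of the outer triangle edges and spokes; in the latter case it must use exactly two of the three spokes (to get in and out of the inner block) and then some arc of $C_{d+1}$, and inside it traces a simple boundary link of $R_d$. Counting the number of ways to pick which two corners, which arc of the outer triangle connects them (there are two arcs, but one of them passes through the third corner and imposes extra constraints that I'll handle), together with the $|SBL(R_d)|$ internal choices, yields a relation of the shape $|SC(R_{d+1})| = |SC(R_d)| + (\text{const})\cdot|SBL(R_d)| + (\text{const})$. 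A parallel but slightly simpler analysis for $SBL(R_{d+1};a_{d+1},b_{d+1})$ — a simple path from $a_{d+1}$ to $b_{d+1}$ — gives $|SBL(R_{d+1})| = (\text{const})\cdot |SBL(R_d)| + (\text{const})$, since such a path must go $a_{d+1}\to a'_d\to$ (through $R_d$) $\to x'_d \to x_{d+1} \to \cdots \to b_{d+1}$, possibly also visiting the third corner via the outer triangle. The ``$5$'' in the formulas should emerge as the multiplier in the $|SBL|$ recursion.

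Once the recursions are pinned down, solving them is routine: the $|SBL|$ recursion is first-order linear with constant coefficients, giving $|SBL(R_d)| = \tfrac12(5^{d+1}-1)$ after matching the base case $R_0 = C_3$ (where a simple boundary link from $a_0$ to $b_0$ is either the direct edge or the two-edge path through $c_0$, so $|SBL(R_0)| = 2 = \tfrac12(5^1-1)$, confirming the formula). Substituting this into the $|SC|$ recursion produces an inhomogeneous first-order recursion whose solution has a $5^d$ term plus a linear-in-$d$ correction, matching $\tfrac14(3\cdot5^{d+1} - 8d - 11)$; the base case is $|SC(R_0)| = 1$ (the triangle has a unique simple cycle), and $\tfrac14(15 - 0 - 11) = 1$ checks out. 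Finally, \eqref{eq:SCbounds} and \eqref{eq:SBLbounds} follow by elementary estimates: for \eqref{eq:SBLbounds}, $5^d \le \tfrac12(5^{d+1}-1) \le 5^{d+1}$ is immediate; for \eqref{eq:SCbounds}, $\tfrac14(3\cdot 5^{d+1} - 8d - 11) \le \tfrac34 5^{d+1} \le 5^{d+1}$, and the lower bound $\ge 5^d$ holds since $3\cdot 5^{d+1} - 8d - 11 \ge 4\cdot 5^d$ for all $d \ge 0$ (check $d=0$ directly, then note $15\cdot 5^d$ dominates $4\cdot 5^d + 8d + 11$).

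The main obstacle I anticipate is the finite case analysis establishing the exact recursions — in particular, correctly accounting for the simple paths/cycles that wind through the third corner along the outer triangle $C_{d+1}$ without double-counting or missing configurations, and verifying that the ``first time it reaches $Y$'' stopping convention in the definition of $SP_{X,Y}$ interacts correctly with these windings. Getting the additive constants exactly right (so the closed forms match down to the base case) is where the bookkeeping must be done with care; a good sanity check is to independently verify $|SC(R_1)|$ and $|SBL(R_1)|$ by hand from \Cref{fig:recursiveinitial} against the formulas ($|SC(R_1)| = \tfrac14(75 - 8 - 11) = 14$, $|SBL(R_1)| = \tfrac12(25-1) = 12$).
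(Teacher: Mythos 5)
Your strategy is correct, and it is a genuinely different decomposition from the one in the paper. The paper peels off the \emph{innermost} triangle $C_d$: it splits $SC(R_d)$ into cycles avoiding $C_d$ (identified with $SC(R_{d-1})$) and cycles meeting $C_d$, and to count the latter it introduces auxiliary sequences $S_d$ (paths between points of $C_d$ avoiding $E(C_d)$) and $D_d$ (disjoint path pairs from $a_0,b_0$ down to $C_d$), each satisfying its own recursion ($S_d = 5S_{d-1}+6$, $D_{d+1}=5D_d$). You instead peel off the \emph{outermost} layer: $R_{d+1}$ is a subdivided triangle whose three edge-midpoints are joined by spokes to the three boundary corners of an embedded copy of $R_d$, so a parity argument on the spokes gives $|SC(R_{d+1})| = |SC(R_d)| + 6\,|SBL(R_d)| + 1$ (the $+1$ is the outer $6$-cycle itself; the $6$ is three spoke-pairs times two outer arcs) and $|SBL(R_{d+1})| = 5\,|SBL(R_d)| + 2$ (the $5$ comes from counting disjoint pairs of arcs in the outer $6$-cycle joining $\{a_0,b_0\}$ to the two used midpoints: $1+2+2$ over the three spoke-pairs; the $+2$ is the two spoke-free arcs). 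These solve to the stated closed forms and are consistent with the paper's via $2S_d = 6\,|SBL(R_{d-1})|$. Your version buys a self-contained coupled recursion in the two quantities actually appearing in the theorem, at the price of a slightly fussier disjoint-arc count in the $6$-cycle; the paper's version keeps the outer-boundary endpoints fixed and pushes all the growth into the auxiliary sequences.

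Two details to fix when writing this up. First, your orientation of the labels is reversed relative to \Cref{defn:RDDef}: there $C_{d+1}$ is the \emph{innermost} triangle and the boundary of every $R_d$ is the fixed outer triple $\{a_0,b_0,c_0\}$; also the spokes emanate from the edge-midpoints $x'_d$ of the subdivided triangle, not from its corners. Your decomposition survives because $R_{d+1}$ really is isomorphic to a subdivided triangle glued outside a shifted copy of $R_d$, but the proof should say this explicitly rather than rely on the (incorrect) identification of $C_{d+1}$ with the outer triangle. Second, your stated case analysis for simple cycles ("it must use exactly two of the three spokes") omits the cycle that uses outer-triangle edges and \emph{zero} spokes, namely the outer $6$-cycle itself; this is exactly the additive constant $+1$ your recursion template allows for, so it is an omission in the prose rather than in the arithmetic, but it must appear in the final case analysis.
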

\begin{proof}[Proof of \Cref{eq:SCformula}]
We partition the simple cycles in $R_d$ into those that touch $C_d$ and those that do not. Those simple cycles that do not touch $C_d$ can be identified with simple cycles in $R_{d-1}$. To describe the simple cycles that touch $C_d$, we start by defining $S_d = \{ X \in SP_{V(C_d), V(C_d)}( R_d) : X \cap E( C_d)  = \emptyset \}$. %
Among the cycles that touch $C_d$, there is $C_d$ itself, and there are the cycles that can be decomposed into an element of $SP_{ V(C_d) , V(C_d)} ( R_d[ C_d] )$ along with an element of $S_d$, as in \Cref{fig:simplecyclerecursivedecomposition}. %
Thus, $SC(R_d) = SC_{d-1} + 1 + 2S_d$.

\begin{figure}
    \centering
    \begin{tabular}{cc}
    \def\svgscale{.2}{
    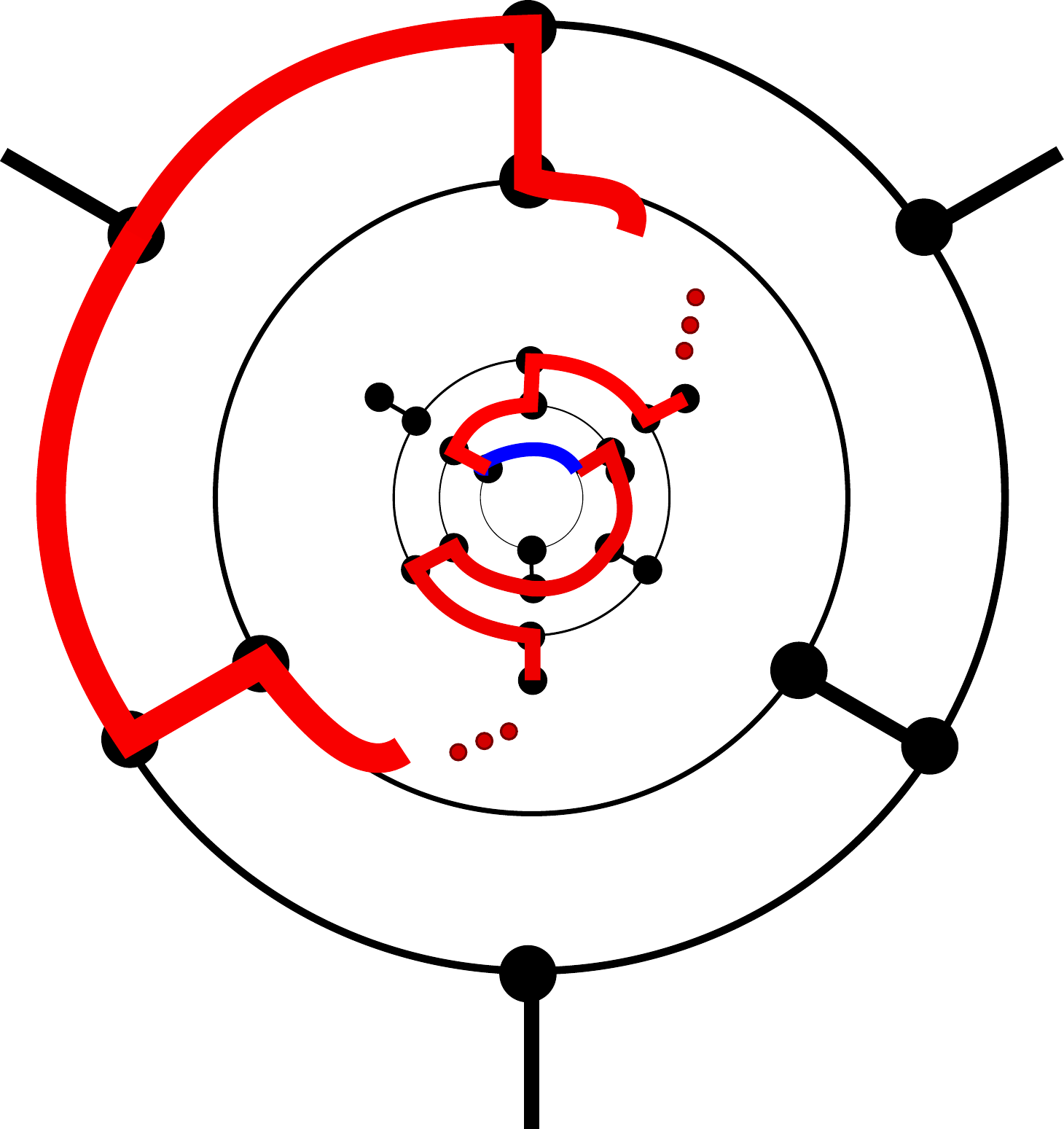} &    \def\svgscale{.17}{ %% Creator: Inkscape inkscape 0.92.4, www.inkscape.org
%% PDF/EPS/PS + LaTeX output extension by Johan Engelen, 2010
%% Accompanies image file 'RecursiveGadgetSBLCounting.pdf' (pdf, eps, ps)
%%
%% To include the image in your LaTeX document, write
%%   \input{<filename>.pdf_tex}
%%  instead of
%%   \includegraphics{<filename>.pdf}
%% To scale the image, write
%%   \def\svgwidth{<desired width>}
%%   \input{<filename>.pdf_tex}
%%  instead of
%%   \includegraphics[width=<desired width>]{<filename>.pdf}
%%
%% Images with a different path to the parent latex file can
%% be accessed with the `import' package (which may need to be
%% installed) using
%%   \usepackage{import}
%% in the preamble, and then including the image with
%%   \import{<path to file>}{<filename>.pdf_tex}
%% Alternatively, one can specify
%%   \graphicspath{{<path to file>/}}
%% 
%% For more information, please see info/svg-inkscape on CTAN:
%%   http://tug.ctan.org/tex-archive/info/svg-inkscape
%%
\begingroup%
  \makeatletter%
  \providecommand\color[2][]{%
    \errmessage{(Inkscape) Color is used for the text in Inkscape, but the package 'color.sty' is not loaded}%
    \renewcommand\color[2][]{}%
  }%
  \providecommand\transparent[1]{%
    \errmessage{(Inkscape) Transparency is used (non-zero) for the text in Inkscape, but the package 'transparent.sty' is not loaded}%
    \renewcommand\transparent[1]{}%
  }%
  \providecommand\rotatebox[2]{#2}%
  \newcommand*\fsize{\dimexpr\f@size pt\relax}%
  \newcommand*\lineheight[1]{\fontsize{\fsize}{#1\fsize}\selectfont}%
  \ifx\svgwidth\undefined%
    \setlength{\unitlength}{531.12463217bp}%
    \ifx\svgscale\undefined%
      \relax%
    \else%
      \setlength{\unitlength}{\unitlength * \real{\svgscale}}%
    \fi%
  \else%
    \setlength{\unitlength}{\svgwidth}%
  \fi%
  \global\let\svgwidth\undefined%
  \global\let\svgscale\undefined%
  \makeatother%
  \begin{picture}(1,1.1029527)%
    \lineheight{1}%
    \setlength\tabcolsep{0pt}%
    \put(0,0){\includegraphics[width=\unitlength,page=1]{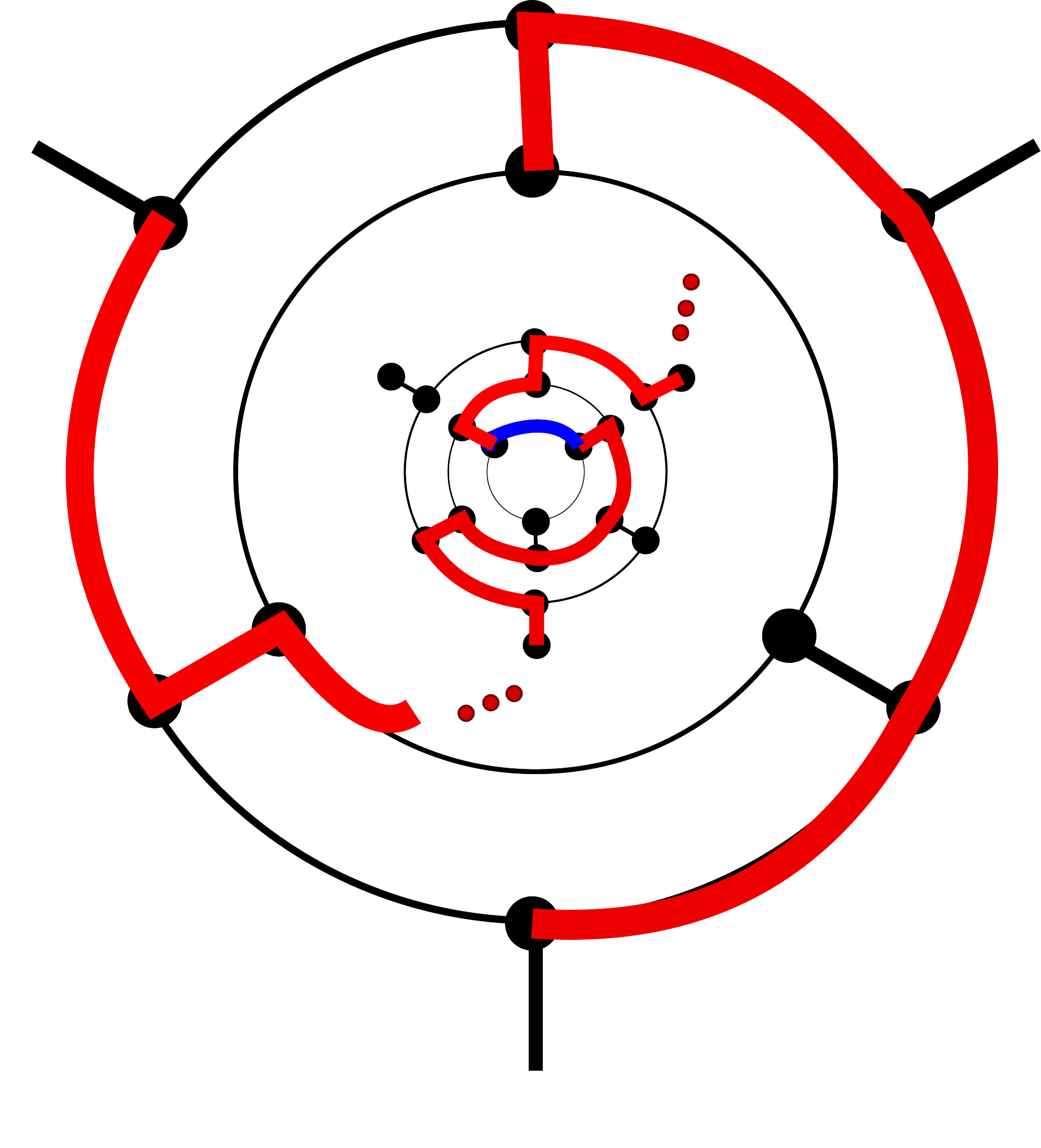}}%
    \put(0.16608698,1.00582433){\color[rgb]{0,0,0}\makebox(0,0)[rt]{\lineheight{1.25}\smash{\begin{tabular}[t]{r}$a_0$\end{tabular}}}}%
    \put(0.4928986,0.01332116){\color[rgb]{0,0,0}\makebox(0,0)[lt]{\lineheight{1.25}\smash{\begin{tabular}[t]{l}$b_0$\end{tabular}}}}%
  \end{picture}%
\endgroup%
 }
    \end{tabular}
    \caption{The decompositions used for \Cref{eq:SCformula} and \Cref{eq:SBLformula}. The $S_d$ (resp. $D_d$) part is in red. Observe that there  are two options for the simple path $R_d[C_d]$, one of which is colored blue.}\label{fig:simplecyclerecursivedecomposition}\label{fig:RecursiveGadgetSBLCounting}
    
\end{figure}

It can be checked that $S_d = 5  S_{d-1} + 6 $, by analyzing the number of ways to extend an element of $S_{d-1}$ to an element of $S_d$ and by accounting for the six elements of $S_d$ not obtained by extensions of $S_{d-1}$, as in \Cref{fig:6Extensions}. This second calculation also shows that $S_1 = 6$. We can solve this recurrence relation to conclude that $S_d = (3/2) (5^d - 1)$. Hence, from $SC_d = SC_{d-1} + 1 + 2S_d$ we have that $SC_d = SC_{d-1} + 1 + 3(5^d - 1)$. As $SC_0 = 1$, we conclude that $SC_d = 1/4 (3 \cdot 5^{1 + d} - 8 d-11)$.%
\end{proof}
\begin{proof}[Proof of \Cref{eq:SBLformula}]

We partition the simple boundary links in $R_d$ into those that pass through $C_d$ and those that do not. Those elements of $SBL(R_d)$ that do not touch $C_d$ can be identified with $SBL ( R_{d-1})$. To analyze those that pass through $C_d$, we define $D_d$ to be the set of pairs of disjoint simple paths, one from $a_0$ that stops at the first point it touches $C_d$, and the other from $b_0$ that stops at the first point it touches $C_d$.  The elements of $SBL(R_d)$ that touch $C_d$ can be decomposed into an element $\gamma$ of $D_d$ and one of the two simple paths in $R_d[C_d]$ that connect the points where $\gamma$ meets $C_d$, as in \Cref{fig:RecursiveGadgetSBLCounting}. Thus, $BL_{d+1} = BL_d + 2 D_d$.

We can compute that $D_{d+1} = 5D_d$, by analyzing how elements of $D_d$ can be extended to elements of $D_{d+1}$, as in \Cref{fig:TheFiveExtensions}. As $D_0 = 1$, we have that $D_d = 5^d$. From $BL_{d+1} = BL_d + 2 D_d$, we have that $BL_{d+1} = BL_d + 2 ( 5^{d + 1})$. As $BL_0  =2$, we can solve the recurrence to find that $BL_d =  (1/2)( 5^{d+1} - 1)$.
\end{proof}

\begin{proof}[Proof of \Cref{eq:SCbounds} and \Cref{eq:SBLbounds}]These follow directly from \Cref{eq:SCformula,eq:SBLformula}.%
\end{proof}

\begin{figure}
    \centering
    \begin{tabular}{cc}
    \def\svgscale{.3}{
    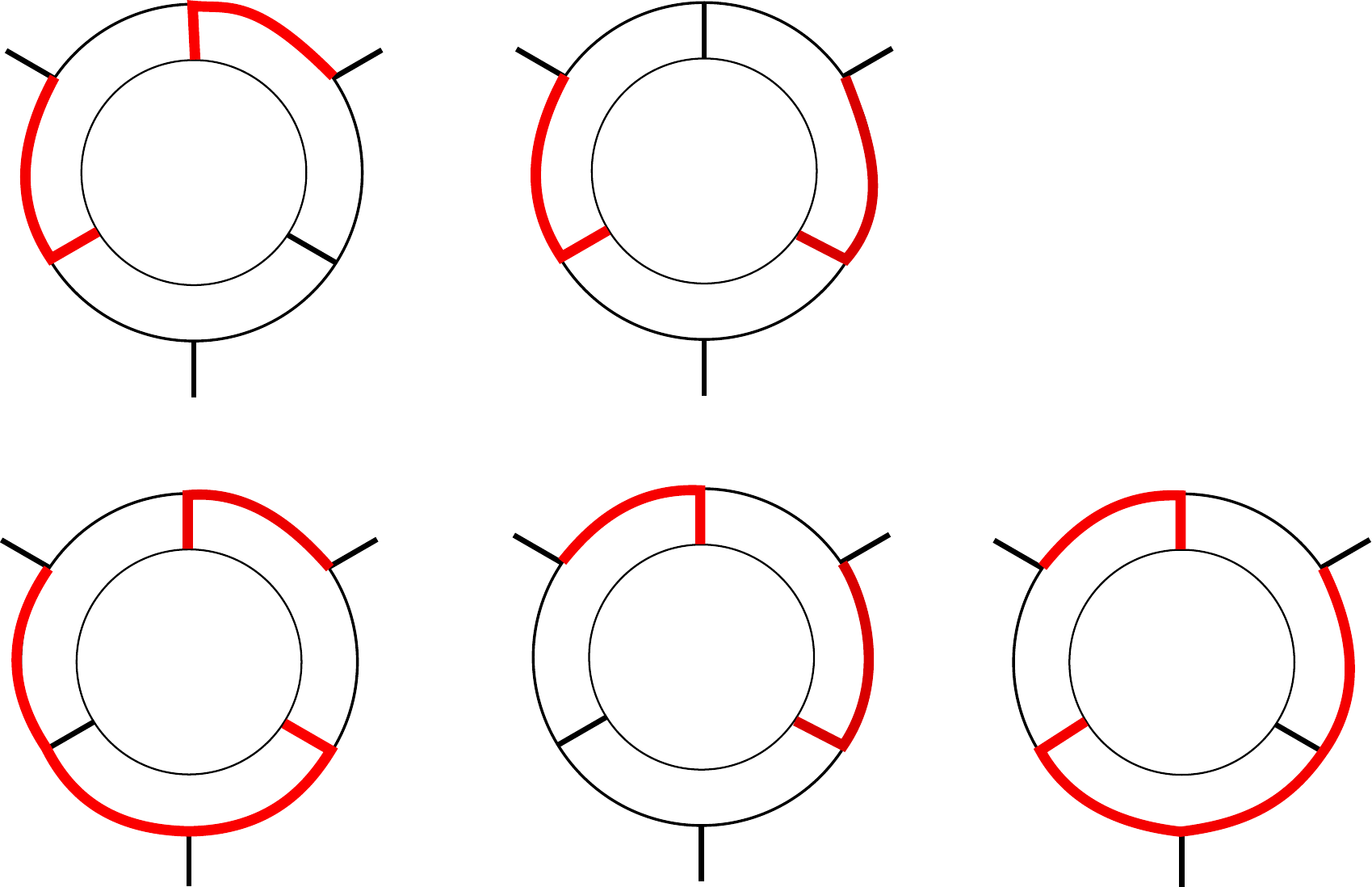} &\def\svgscale{.3}{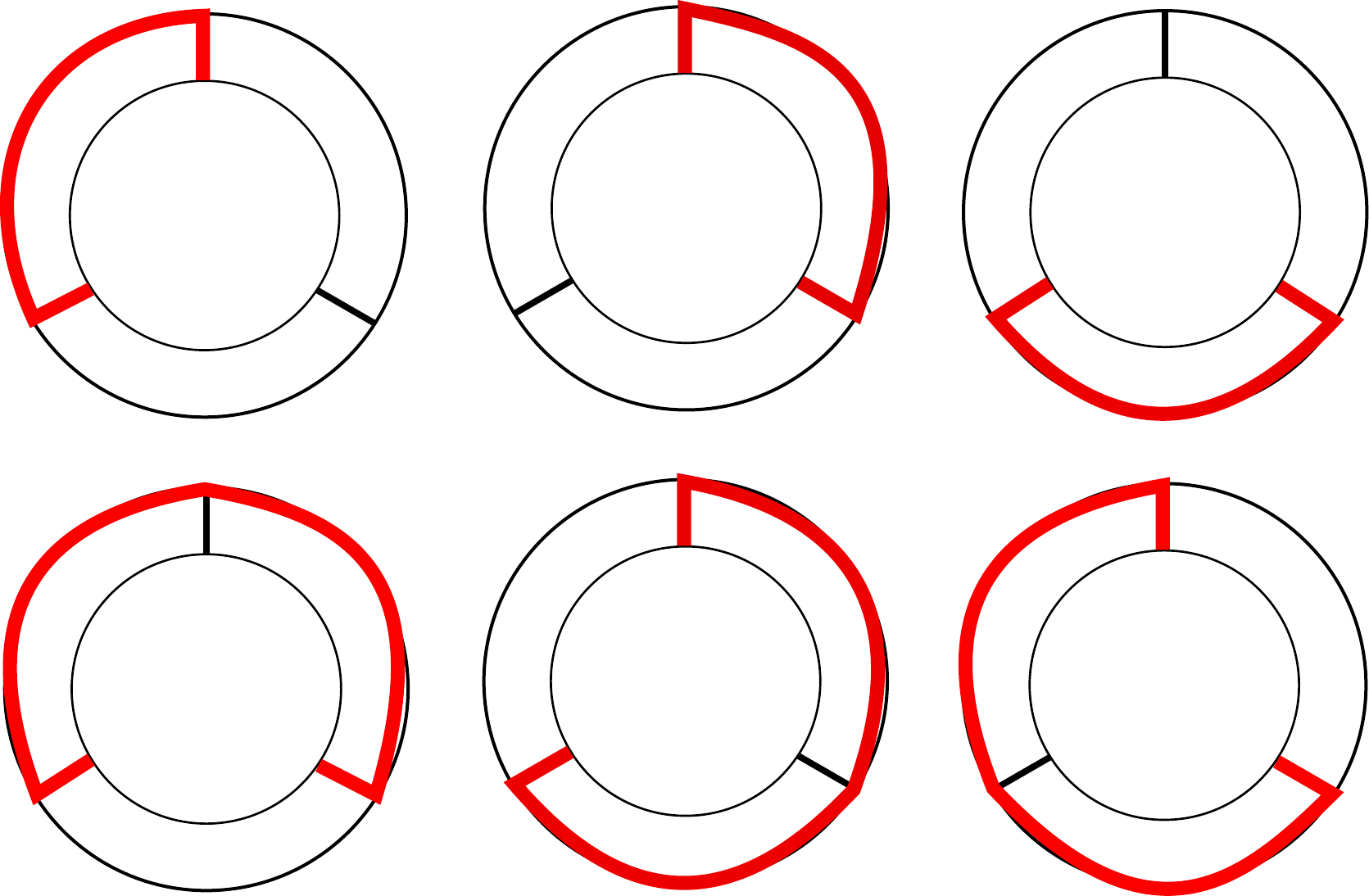}\\
    a) & b) 
    \end{tabular}
    \caption{a) The five ways to extend an element of $D_n$ to one of $D_{n+1}$, or one of $S_n$ to one of $S_{n+1}$. b) The $6$ elements of $S_n$ that are not extensions of elements of $S_{n-1}$. The inner $3$-cycle is $C_n$; most of $R_n$ is not pictured.
    }\label{fig:TheFiveExtensions}\label{fig:6Extensions}
    
\end{figure}

\subsubsection{Intractability on Maximal plane Graphs}\label{Max2Part}

We establish the analogous construction to replacing edges by chains of bigons and state a few results that we will need to finish the intractability proof.

\begin{defn}[$G \to R_d(G)$ vertex replacement construction]\label{defn:GtoRdGconstruction}
Given a cubic graph $G$, we let $R_d(G)$ denote the graph obtained by keeping the edges and replacing each vertex of $G$ with a copy of $R_d$ (\Cref{defn:RDDef}). %
\end{defn}

\begin{prop}\label{prop:Rdkeeps3CCP}
The construction $G \mapsto R_d(G)$ sends $\mathscr{C}_{m}$ to $\mathscr{C}_{3m}$.
\end{prop}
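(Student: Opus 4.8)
The statement bundles five assertions about $R_d(G)$: it is cubic, simple, plane, $3$-connected, and has face degree $\le 3m$. The plan is to read the first three directly off the construction (\Cref{defn:RDDef}, \Cref{defn:GtoRdGconstruction}) and to concentrate the work on the face-degree bound and on $3$-connectivity. For \emph{cubic}: the attaching nodes $a_0,b_0,c_0$ of $R_d$ have degree $2$ in $R_d$ (immediate for $R_0$, and each recursive step modifies only the edges of the innermost triangle $C_d$ and the region inside it, never the vertices $a_0,b_0,c_0$), while every other vertex of $R_d$ is built to have degree $3$; since each $v\in V(G)$ contributes one copy of $R_d$ and keeps its $\deg_G(v)=3$ incident edges, now glued to $a_0,b_0,c_0$, all vertices of $R_d(G)$ have degree $3$. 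For \emph{simple}: $R_d$ is simple (subdividing edges and gluing triangles preserves simplicity), and because $G$ is simple two distinct copies of $R_d$ are joined by at most one edge and no loops appear. For \emph{plane}: embed each copy of $R_d$ with $a_0,b_0,c_0$ on its outer face, matching their cyclic order there to the rotation of the three edges at $v$.

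\textbf{Face degree.}
I would split the faces of $R_d(G)$ into two families. First, faces interior to a single copy of $R_d$: tracing the recursion shows these are the innermost triangle $C_d$ (degree $3$), the three pentagons adjacent to it (degree $5$), and three hexagons per intermediate level (degree $6$), so every such face has degree at most $6\le 3m$. Second, faces inherited from faces of $G$: when a vertex $v$ lying on a face $F$ of $G$ is replaced by a copy of $R_d$, the face $F$ no longer turns at $v$ but instead runs along one of the three arcs of the outer $6$-cycle of $R_d$ joining consecutive attaching nodes, and each such arc has length $2$ (length $1$ when $d=0$); since $G$ is $3$-connected the three faces incident to $v$ are pairwise distinct, so each of them uses exactly one arc. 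Hence a face of degree $k\le m$ in $G$ becomes a face of degree $k+2k=3k\le 3m$ in $R_d(G)$ (degree $2k$ when $d=0$), which is the claimed bound, and it is attained, matching the target class $\mathscr{C}_{3m}$.

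\textbf{$3$-connectivity.}
This is the crux. I would first show that $\widehat{R}_d$, obtained from $R_d$ by adding an apex vertex $\omega$ adjacent to $a_0,b_0,c_0$, is $3$-connected, by induction on $d$: $\widehat{R}_0=K_4$, and the passage $\widehat{R}_d\to\widehat{R}_{d+1}$ merely replaces the interior of the innermost triangular face of $\widehat{R}_d$ by the $R$-gadget, an operation that preserves $3$-connectivity by \Cref{lem:3_connected_lemma} (or by a direct case analysis of candidate $2$-cuts, which can only sit near the modified face). From $\widehat{R}_d$ being $3$-connected I extract two facts: $R_d=\widehat{R}_d-\omega$ is $2$-connected, and for any two vertices $x,y$ of $R_d$ every component of $R_d-\{x,y\}$ contains a surviving attaching node (otherwise adding $\omega$ back would not reconnect the graph). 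To finish, take any two vertices $x,y$ of $R_d(G)$ and show $R_d(G)-\{x,y\}$ is connected by projecting to $G$: every copy of $R_d$ untouched by $\{x,y\}$ is connected; a copy missing exactly one of $x,y$ is still connected since $R_d$ is $2$-connected; a single copy missing both $x$ and $y$ may fall apart, but each of its pieces still carries at least one external edge by the extracted fact; and these pieces (and copies) are reglued because $G$, being $3$-connected hence $3$-edge-connected, stays connected after deleting one vertex or two edges — which is exactly what is lost at the copies containing $x$ and $y$.

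\textbf{Main obstacle.}
The one genuinely delicate step is the inductive step for $\widehat{R}_d$, i.e.\ verifying that inserting the $R$-gadget into a triangular face of a $3$-connected plane graph cannot create a $2$-vertex cut; everything else is degree bookkeeping and a routine "project to $G$" connectivity argument.
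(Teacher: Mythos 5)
Your outline matches the paper's almost exactly: the apex graph $\widehat{R}_d$ you introduce is precisely the paper's $(\tilde R_d)'$ (three leaf edges at $a_0,b_0,c_0$, leaves identified), the induction starts from $K_4$, and the face-degree count (each vertex of a face of $G$ contributes two extra edges, so degrees triple) is the paper's argument verbatim. Your final assembly differs cosmetically: the paper iterates \Cref{lem:3_connected_lemma} to replace the cubic vertices of $G$ one at a time, whereas you rederive the same conclusion globally from $2$-connectivity of $R_d$ plus the fact that every component of $R_d-\{x,y\}$ retains an attaching node; both are fine.

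The one place you stop short is exactly the step you flag: the inductive passage $\widehat{R}_d\to\widehat{R}_{d+1}$. Your proposed justification, a direct appeal to \Cref{lem:3_connected_lemma}, does not apply as stated, because that lemma replaces a cubic \emph{vertex} by a three-leaf gadget, while the passage from $R_d$ to $R_{d+1}$ first \emph{subdivides} the three edges of $C_d$ and then hangs new structure off the subdivision points --- there is no cubic vertex being replaced. The paper closes this gap with a two-stage trick: starting from $(\tilde R_d)'$, apply the Barnette--Gr\"unbaum operation (\Cref{lem:addedge3connected}) twice to insert a temporary degree-$3$ center vertex $c$ joined to subdivision points of the three edges of $C_d$, obtaining a $3$-connected graph $Q_d$; then replace the cubic vertex $c$ by a triangle (an $\tilde R_0$) using \Cref{lem:3_connected_lemma}, which yields $(\tilde R_{d+1})'$. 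If you want to avoid the BG detour, you would instead have to carry out the direct case analysis of candidate $2$-cuts near the modified face that you allude to; as written, that step is asserted rather than proved.
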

\noindent See \Cref{section:preserves3CCP} for proof of this proposition. %

With the $R_d$ construction in place, we analyze the relationship between $R_d(G)$ and $G$ to show that $R_d$ can be used to concentrate probability onto the longer simple cycles of $G$.

\begin{defn}[Original edges, projection map]
There is a natural inclusion map $i : E(G) \to E( R_d(G))$. We will call the edges in the image of $i$ the \emph{original edges}. This lets us define a map $\pi_d: 2^ { E(R_d(G))} \to 2^{E(G)}$ by $\pi_d(X) = i^{-1}(X)$.
\end{defn}

\begin{lem}\label{surjective}
For any cubic graph $G$, $\im(\pi_d) =  SC(G) \cup \{ \emptyset \}$ for $d \geq 0$.
\end{lem}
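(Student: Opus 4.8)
The plan is to prove the two inclusions $\im(\pi_d) \subseteq SC(G) \cup \{\emptyset\}$ and $SC(G) \cup \{\emptyset\} \subseteq \im(\pi_d)$ separately. For the first inclusion, take any simple cycle $C \in SC(R_d(G))$ and set $X = \pi_d(C) = i^{-1}(C)$, the set of original edges of $G$ that $C$ traverses. I would argue that $X$ is either empty or forms a simple cycle in $G$. The key observation is that $R_d(G)$ is built by keeping the original edges of $G$ and blowing up each vertex $v$ of $G$ into a copy of the gadget $R_d$, whose three ``boundary'' nodes $\{a_0, b_0, c_0\}$ are the endpoints where the (at most three, since $G$ is cubic) original edges incident to $v$ attach. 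Thus if $C$ enters a gadget-copy $R_d^{(v)}$ along one original edge, then since $C$ is a cycle (every vertex has even degree in $C$), it must leave along another original edge incident to $v$ — it cannot ``dead-end'' inside the gadget, because the only edges of $R_d(G)$ leaving the gadget are the original edges at $v$. Hence the original edges used by $C$ at $v$ number $0$ or $2$. Tracing $C$ around, the original edges it uses, together with the internal gadget-portions connecting consecutive entry/exit boundary nodes, decompose $C$; projecting down collapses each gadget-portion to the vertex $v$, so $X$ is a closed walk in $G$ using each vertex at most once, i.e. $X \in SC(G)$, unless $C$ uses no original edges at all, in which case $X = \emptyset$ (this happens precisely when $C$ lies entirely inside a single gadget copy — such cycles exist, e.g. the $3$-cycle $R_0$, so $\emptyset$ genuinely occurs).

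For the reverse inclusion, I first note $\emptyset \in \im(\pi_d)$: pick any gadget copy and take a simple cycle internal to it (which exists since $|SC(R_d)| \geq 5^d \geq 1$ by \Cref{thm:RDUniformCounts}); this cycle uses no original edges, so projects to $\emptyset$. Next, given a simple cycle $Z \in SC(G)$, I would lift it: for each vertex $v$ on $Z$, exactly two original edges of $Z$ are incident to $v$, attaching to two of the three boundary nodes of $R_d^{(v)}$; choose any simple path inside $R_d^{(v)}$ connecting those two boundary nodes — such a path exists because $|SBL(R_d)| \geq 5^d \geq 1$, again by \Cref{thm:RDUniformCounts}. Concatenating the original edges of $Z$ with these internal simple boundary links yields a closed walk $\widetilde{Z}$ in $R_d(G)$; it is simple because distinct gadget copies are vertex-disjoint away from shared boundary nodes / original edges, and within each copy we used a simple path touching the boundary only at its two endpoints. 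By construction $\pi_d(\widetilde{Z}) = Z$, so $Z \in \im(\pi_d)$.

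The main thing to get right — and the only place any care is needed — is the structural claim that a simple cycle of $R_d(G)$ meets the original edges at each vertex-gadget in an even number ($0$ or $2$) and that the induced ``internal'' pieces are simple paths between boundary nodes; this is where one must use that $G$ is cubic (so each gadget has exactly three boundary attachment nodes and the case analysis is finite) and that $R_d(G)$ has no edges leaving a gadget other than the original ones. Everything else is routine bookkeeping: the existence assertions for the lifts are immediate from the counting bounds \eqref{eq:SCbounds} and \eqref{eq:SBLbounds} in \Cref{thm:RDUniformCounts}, and simplicity of the concatenated walk follows from the gadget copies being internally disjoint. I would also remark that the lemma only asserts set-membership of $\emptyset$ and elements of $SC(G)$ in the image, not a bijection or a count — those refinements are presumably handled in the subsequent probability-concentration estimates.
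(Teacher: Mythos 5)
Your proposal is correct and follows essentially the same route as the paper: the forward inclusion rests on the same parity observation (a simple cycle of $R_d(G)$ must use an even number, hence $0$ or $2$, of the original edges at each cubic vertex-gadget, so the projection has all degrees $0$ or $2$ and is connected), and the reverse inclusion is the explicit lift via simple boundary links that the paper dismisses as ``straightforward.'' Your write-up merely supplies more detail on both directions than the paper's terse argument.
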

\begin{proof}%
That $SC(H) \cup \{ \emptyset \}$ is contained in the image is straightforward. 
Now, let $\beta \in SC(R_d(G))$. %
If $\pi(\beta)$ has degree $1$ or $3$ at a node $v \in V(G)$, then $\beta$ had an odd degree node in $R_d(v)$, which impossible as $\beta$ is a simple cycle.
Moreover, $\pi(\beta)$ is connected. %
Since the simple cycles can be characterized as the non-empty connected edge subgraphs such that all nodes are degree $2$, this concludes the proof.
\end{proof}

We now compute the probability concentration lemma necessary for applying \Cref{lem:luckyguess}:

\begin{lem}\label{lem:3CCPconcentration}
Suppose that $C$ is a Hamiltonian cycle of $G$, where $G$ has $n \geq 2$ nodes. If $d \geq n^2 + n + m $, and $X$ is a uniform sample from $SC( R_d(G))$, then $\mathbb{P} ( \pi_d(X) \text{ is a Hamiltonian cycle of $G$} )  \geq \frac{ 5^m}{1 + 5^m}$.
\end{lem}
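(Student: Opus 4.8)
The strategy is to mirror the proof of \Cref{prop:SimpleCycleUniformHard}, with the chain-of-bigons gadget replaced by the node-replacement gadget $R_d$, using the counting formulas from \Cref{thm:RDUniformCounts}. First I would set up the combinatorial bookkeeping: for a simple cycle $\gamma$ of $G$ with $\ell$ edges (equivalently, passing through $\ell$ vertices of $G$), the number of simple cycles of $R_d(G)$ lying over $\gamma$ under $\pi_d$ factors as a product over the vertices of $G$. At each of the $\ell$ vertices on $\gamma$, the lifted cycle must choose a simple boundary link through that copy of $R_d$, contributing a factor of $|SBL(R_d)|$; at each of the remaining $n-\ell$ vertices, the lifted cycle may either avoid that copy entirely or run a whole simple cycle inside it, contributing a factor of $1 + |SC(R_d)|$. (One must be slightly careful that \Cref{surjective} already tells us the fibers are nonempty exactly over $SC(G)\cup\{\emptyset\}$, and that the boundary-link/internal-cycle decomposition is genuinely a bijection — this is where the detailed structure of $R_d$ is used, but it is the kind of claim the figures support.) Hence $|\pi_d^{-1}(\gamma)| = |SBL(R_d)|^{\ell}\,(1+|SC(R_d)|)^{n-\ell}$ and in particular, for a Hamiltonian cycle $C$, $|\pi_d^{-1}(C)| = |SBL(R_d)|^{n} \geq 5^{dn}$ by \Cref{eq:SBLbounds}.

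Next I would bound the ``bad'' fibers. Any non-Hamiltonian $\gamma\in SC(G)$ uses at most $n-1$ vertices, so using \Cref{eq:SBLbounds} and \Cref{eq:SCbounds}, $|\pi_d^{-1}(\gamma)| \leq |SBL(R_d)|^{n-1}(1+|SC(R_d)|)^{1} \leq 5^{d(n-1)+1}\cdot(5^{d+1}+1) \leq 5^{dn+d+2}$, and similarly $|\pi_d^{-1}(\emptyset)| = (1+|SC(R_d)|)^n \leq 5^{(d+1)n+1}$. Summing over all bad preimages and using the crude bound $|SC(G)\cup\{\emptyset\}|\leq 2^{n^2}$ (as in \Cref{prop:SimpleCycleUniformHard}), the total bad mass is at most $2^{n^2}\cdot 5^{dn + O(n + d)}$. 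Comparing with the good mass $5^{dn}$ coming from a single Hamiltonian cycle, I need $5^{dn} \geq 5^m\cdot 2^{n^2}\cdot 5^{dn - \Omega(d)}$, i.e. the ``deficit'' $\Omega(d)$ in the exponent of the bad terms must dominate $m + n^2$. Choosing $d \geq n^2 + n + m$ makes the good-to-bad ratio at least $5^m$, so by \Cref{lem:HgeqDN} the probability that $\pi_d(X)$ is Hamiltonian is at least $\frac{5^m}{1+5^m}$. I would present this as a short chain of inequalities in a single \texttt{align*} block, keeping the exact constants (the $n-1$ vs $n$ in the exponents) transparent and invoking \Cref{lem:polylarge} only if the paper's constant $d = n^2+n+m$ turns out to need that extra slack — the excerpt's constant is clean enough that direct estimates on $5$-powers should suffice.

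The main obstacle is the fiber-factorization claim: establishing rigorously that a simple cycle of $R_d(G)$ restricts, on each copy of $R_d$, either to a simple boundary link (on vertices of $\pi_d(\gamma)$) or to either the empty set or an internal simple cycle (on the other vertices), with no other local possibilities, and that these local choices are independent and recombine bijectively. This requires knowing that the three ``ports'' $a_0,b_0,c_0$ of $R_d$ are exactly the attaching nodes to the original edges of $G$, that a simple cycle entering a copy of $R_d$ must leave it (parity/degree argument, already used in \Cref{surjective}), and that it cannot use a port-to-port path together with an extra internal loop — i.e. the boundary links and internal cycles of $R_d$ don't interfere. I expect this to be handled by the same ``analysis of local states'' style used for the $3OR$ gadget, referencing \Cref{fig:labelsrecursive} and the recursive structure, rather than a fully formal case analysis. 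Everything after that is routine arithmetic with powers of $5$.
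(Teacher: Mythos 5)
Your factorization of the fibers of $\pi_d$ is wrong, and the error is not harmless: it destroys the concentration estimate. You claim that for a simple cycle $\gamma$ of $G$ with $\ell$ edges,
$|\pi_d^{-1}(\gamma)| = |SBL(R_d)|^{\ell}\,(1+|SC(R_d)|)^{n-\ell}$,
the second factor recording the option of running ``a whole simple cycle inside'' each gadget not on $\gamma$. But an element of $SC(R_d(G))$ is a \emph{single connected} simple cycle (this is exactly the characterization used in \Cref{surjective}), so it cannot consist of a boundary-link circuit through the gadgets on $\gamma$ together with a disjoint internal cycle in some other gadget. The correct count is simply $|\pi_d^{-1}(\gamma)| = |SBL(R_d)|^{\ell}$, and separately $|\pi_d^{-1}(\emptyset)| = n\,|SC(R_d)| \le n\,5^{d+1}$ (a cycle projecting to $\emptyset$ lives entirely inside one of the $n$ gadgets), not $(1+|SC(R_d)|)^{n}$.

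This matters quantitatively. With the correct count, a non-Hamiltonian cycle has at most $5^{(d+1)(n-1)} = 5^{dn}\cdot 5^{\,n-1-d}$ lifts, i.e.\ a deficit of $5^{d-n+1}$ relative to the $5^{dn}$ lifts of a Hamiltonian cycle; taking $d \ge n^2+n+m$ makes this deficit swallow the $2^{n^2}$ count of bad cycles and leave a factor $5^m$, which is the whole proof (this is what the paper does, followed by \Cref{lem:HgeqDN}). With your extra factor $(1+|SC(R_d)|) \approx 5^{d+1}$ granted to each non-Hamiltonian cycle, the per-cycle bound becomes roughly $5^{(d+1)(n-1)+d+2} = 5^{dn+n+1}$, so the bad-to-good ratio is about $2^{n^2}5^{n+1}$ \emph{independently of $d$}: the gain from omitting one gadget is exactly cancelled by the freedom to loop inside it, and no choice of $d$ yields the bound $\tfrac{5^m}{1+5^m}$. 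Your own write-up betrays this: you first bound the bad mass by $2^{n^2}5^{dn+O(n+d)}$ and in the next sentence require it to be $5^{m}\cdot 2^{n^2}\cdot 5^{dn-\Omega(d)}$; these are incompatible. Once you replace the factorization with $|\pi_d^{-1}(\gamma)| = |SBL(R_d)|^{\ell}$ and handle $\pi_d^{-1}(\emptyset)$ as a separate additive term $n\,5^{d+1}$, the rest of your outline (crude bound $2^{n^2}$ on $|SC(G)|$, \Cref{eq:SBLbounds}, \Cref{lem:HgeqDN}) goes through and coincides with the paper's argument.
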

\begin{proof}

Let $H$ be the set of Hamiltonian cycles of $G$. %
From \Cref{eq:SCbounds} and \Cref{eq:SBLbounds} it follows that $|\pi_d^{-1} ( \emptyset \cup SC(G) \setminus H)| \leq 2^{n^2} 5^{(d + 1)(n-1)} + n 5^{d + 1} \leq  2^{n^2  + 1} 5^{(d + 1)(n-1)}$ and $|\pi_d^{-1}(H)| \geq 5^{dn}$. 
For $d \geq n^2 + n + m$, $5^{dn} \geq  5^m 2^{n^2 + 1} 5^{n-1} 5^{d(n-1)}$. The claim now follows by \Cref{lem:HgeqDN}.
\end{proof}

\begin{thm} If the Hamiltonian cycle problem is NP-complete on $\mathscr{C}_d$, then the problem of uniformly sampling simple cycles is intractable on the class of graphs $ \mathscr{C}_{3d}$.\label{thm:generalintractability3CCP}
\end{thm}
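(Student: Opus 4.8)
The plan is to apply the Lucky Guess Lemma (\Cref{lem:luckyguess}) with $P$ the uniform simple cycle sampling problem, $S(G) = SC(G) \cup \{\emptyset\}$, $R(G) = SC(R_d(G))$, and $Q(G) = \mathscr{C}_d\textrm{-HAM}$, i.e., $Q(G)$ is nonempty precisely when $G$ is Hamiltonian. By hypothesis, the language $L_Q(\mathscr{C}_d) = \mathscr{C}_d\textrm{-HAM}$ is $\NP$-complete, so $Q$ is a decision problem in the $p$-relation $S$ which is $\NP$-complete on the class $\mathscr{C}_d$. The role of the class $\mathscr{C}$ in the statement of \Cref{lem:luckyguess} is played by $\mathscr{C}_d$, and the conclusion will be intractability on $B(\mathscr{C}_d)$, which we will identify as a subclass of $\mathscr{C}_{3d}$.

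First, I would fix the polynomial family. Given $m \in \mathbb{N}_{\geq 1}$, set the gadget depth to $\delta = \delta(m, n) := n^2 + n + m$ where $n = |V(G)|$; this is polynomial in $|G|$, so the map $G \mapsto R_{\delta}(G)$ is computable by a polynomial-time Turing machine $B_m$. Since $G \in \mathscr{C}_d$ is cubic and $3CCP$, \Cref{prop:Rdkeeps3CCP} shows $R_{\delta}(G) \in \mathscr{C}_{3d}$, and in particular $R(B_m(G)) = SC(R_{\delta}(G)) \neq \emptyset$ (the graph is $2$-connected, hence has a cycle); this verifies the first hypothesis of \Cref{lem:luckyguess}. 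Second, the map $\pi_{\delta} : SC(R_{\delta}(G)) \to 2^{E(G)}$ from the ``original edges, projection map'' definition is computable in polynomial time by a Turing machine $M_m$, and by \Cref{surjective} its image lands in $SC(G) \cup \{\emptyset\} = S(G)$; this gives the second hypothesis.

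The third hypothesis, probability concentration, is exactly \Cref{lem:3CCPconcentration}: if $|Q(G)| \geq 1$, i.e., $G$ has a Hamiltonian cycle $C$, then since $\delta = n^2 + n + m$ and $n \geq 2$ (note $3CCP$ graphs have at least $4$ vertices, so this is automatic), a uniform sample $X$ from $SC(R_{\delta}(G)) = R(B_m(G))$ satisfies
\[
\mathbb{P}\bigl(\pi_{\delta}(X) \text{ is a Hamiltonian cycle of } G\bigr) \geq \frac{5^m}{1 + 5^m} \geq 1 - \frac{1}{m}.
\]
Thus $\mathbb{P}(\pi_{\delta}(X) \in Q(G)) \geq 1 - \nicefrac{1}{m}$, which is precisely the Probability Concentration condition. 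All three hypotheses of \Cref{lem:luckyguess} are met, so the uniform simple cycle sampling problem is intractable on $B(\mathscr{C}_d) = \{ R_{\delta(m,n)}(G) : G \in \mathscr{C}_d \}$, and since every such graph lies in $\mathscr{C}_{3d}$ by \Cref{prop:Rdkeeps3CCP}, it is intractable on $\mathscr{C}_{3d}$ as well.

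I do not anticipate a genuine obstacle here: the theorem is essentially a bookkeeping assembly of the pieces built in \S\ref{section:Recursive} and \S\ref{Max2Part}. The one point requiring mild care is making sure the ``class of instances'' quantifiers line up — \Cref{lem:luckyguess} delivers intractability on $B(\mathscr{C})$, and one must check that $B(\mathscr{C}_d) \subseteq \mathscr{C}_{3d}$ (from \Cref{prop:Rdkeeps3CCP}) and that intractability on a subclass implies intractability on the larger class $\mathscr{C}_{3d}$, which is immediate from \Cref{defn:intractable} since a sampler for all of $\mathscr{C}_{3d}$ in particular samples on $B(\mathscr{C}_d)$. A secondary detail is confirming $n \geq 2$ so that \Cref{lem:3CCPconcentration} applies, but this holds trivially for any nonempty graph in $\mathscr{C}_d$.
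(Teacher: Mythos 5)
Your proposal is correct and follows essentially the same route as the paper's own proof: fix $m$, take gadget depth $n^2+n+m$, set $B(G)=R_d(G)$, $M=\pi_d$, $Q$ the Hamiltonian cycles, and invoke \Cref{lem:3CCPconcentration} to verify the hypotheses of \Cref{lem:luckyguess}. The extra bookkeeping you supply (checking $B(\mathscr{C}_d)\subseteq\mathscr{C}_{3d}$ via \Cref{prop:Rdkeeps3CCP} and that intractability passes to the larger class) is a fine elaboration of what the paper leaves implicit.
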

\begin{proof}
We follow the notation of \Cref{lem:luckyguess}. For fixed $m$, we take $d = n^2 + n + m$. Define $B$ by $B(G) = R_d(G)$, and $M$ by the map $\pi_d$, and $Q$ is the set of Hamiltonian Cycles of $G$. \Cref{lem:3CCPconcentration} assures that the conditions for \Cref{lem:luckyguess} are satisfied.
\end{proof}

\begin{cor}\label{thm:IntractibilityofSCSamplingonC528}
The $SC(G)$ uniform sampling problem is intractable on $\mathscr{C}_{ 531}$.
\end{cor}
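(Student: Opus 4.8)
The plan is to obtain \Cref{thm:IntractibilityofSCSamplingonC528} by concatenating the two principal results of this section. \Cref{thm:facebounded3CCP} establishes that the Hamiltonian cycle problem is $\NP$-complete on $\mathscr{C}_{177}$, the class of $3CCP$ graphs with face degree at most $177$. \Cref{thm:generalintractability3CCP} then converts this into the desired sampling statement: taking $d = 177$ in that theorem gives that uniform sampling from $SC(G)$ is intractable on $\mathscr{C}_{3\cdot 177} = \mathscr{C}_{531}$, which is precisely the assertion to be proved.

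For completeness I would recall why \Cref{thm:generalintractability3CCP} applies. It is an invocation of \Cref{lem:luckyguess} in which, for a fixed $m$, one sets $d = n^2 + n + m$, $B(G) = R_d(G)$, $M = \pi_d$, and lets $Q(G)$ be the set of Hamiltonian cycles of $G$ while $\mathscr{C} = \mathscr{C}_{177}$. \Cref{prop:Rdkeeps3CCP} ensures $B$ maps $\mathscr{C}_{177}$ into $\mathscr{C}_{531}$, \Cref{surjective} pins down $\im(\pi_d) = SC(G) \cup \{\emptyset\}$, and \Cref{lem:3CCPconcentration} furnishes the probability concentration hypothesis: a uniform sample from $SC(R_d(G))$ projects to a Hamiltonian cycle of $G$ with probability at least $5^m/(1+5^m) \ge 1 - 1/m$. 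Since $\mathscr{C}_{177}\textrm{-HAM}$ is $\NP$-complete, the lucky guess lemma yields intractability on $B(\mathscr{C}_{177}) \subseteq \mathscr{C}_{531}$, and hence on all of $\mathscr{C}_{531}$.

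Consequently there is no genuine obstacle remaining at this point: the substantive work has all been discharged already --- the $\NP$-completeness reduction by iterated $3OR$ subdivision of large faces (\Cref{strengthening}), the exact recursive counts $|SC(R_d)|$ and $|SBL(R_d)|$ together with the resulting bounds (\Cref{section:Recursive}), and the verification that the vertex-replacement construction preserves $3$-connectivity and triples face degree (\Cref{Max2Part}). The one thing worth double-checking is the numerology, namely that the face-degree multiplier in $G \mapsto R_d(G)$ is exactly $3$ and not larger, so that no slack creeps in and $177$ maps cleanly to $531$; this is exactly the content of \Cref{prop:Rdkeeps3CCP}.
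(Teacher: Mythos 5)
Your proposal is correct and matches the paper's proof exactly: the corollary is obtained by combining \Cref{thm:facebounded3CCP} with \Cref{thm:generalintractability3CCP} at $d=177$, so that $3\cdot 177 = 531$. The additional recap of why \Cref{thm:generalintractability3CCP} applies is accurate but not needed beyond the two citations.
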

\begin{proof}
Immediate from \Cref{thm:facebounded3CCP} and \Cref{thm:generalintractability3CCP}.
\end{proof}

We can now prove the main result of this section:

\begin{thm}\label{thm:triangulationhard}
The $P_2(G)$ uniform sampling problem is intractable on the class of maximal plane graphs with vertex degree $\leq 531$. %
\end{thm}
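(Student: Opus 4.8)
The plan is to deduce this from \Cref{thm:IntractibilityofSCSamplingonC528} by transporting the hardness through plane duality (\Cref{duality}), exactly as \Cref{thm:basicintractbilitypartitions} was obtained from \Cref{prop:SimpleCycleUniformHard}. The one structural fact I need is that the dual operation carries $\mathscr{C}_{531}$ (\Cref{defn:Cm3CCP}) \emph{into} the class of maximal plane graphs of vertex degree $\leq 531$. Indeed, if $H$ is $3$-connected, cubic, simple and plane with every face of degree $\leq 531$, then $H^*$ is plane; its faces are all triangles because $H$ is cubic; it is $3$-connected because plane duality preserves $3$-connectivity; it is simple because a loop of $H^*$ would be a bridge of $H$ and a parallel pair of $H^*$ would be a $2$-edge-cut of $H$, neither of which a $3$-connected cubic graph possesses; and its vertex degrees are precisely the face degrees of $H$, hence are $\leq 531$. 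So $H^*$ is a maximal plane graph of maximum degree $\leq 531$, and $H \mapsto H^*$ is computable in polynomial time, as is the double dual identification $(H^*)^* = H$.

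Next I would assemble the reduction in the format used throughout \Cref{sec:initialintractability}. Fix $\alpha < 1$ and suppose, for contradiction, that $M$ is a polynomial-time probabilistic Turing machine that $\alpha$-almost uniformly samples from $P_2$ on every maximal plane graph of degree $\leq 531$. Given an instance $H \in \mathscr{C}_{531}$, construct $H^*$, run $M$ on $H^*$, forget the ordering of the resulting ordered connected $2$-partition, and apply the polynomial-time bijection of \Cref{duality} between $\mathscr{P}_2(H^*)$ and $SC((H^*)^*) = SC(H)$. The forgetful map $P_2(H^*) \to \mathscr{P}_2(H^*)$ is exactly two-to-one (the two blocks of a connected $2$-partition are always distinct and nonempty), so it carries the uniform distribution to the uniform distribution, and being a deterministic post-processing step it cannot increase total variation distance; the duality bijection preserves the uniform distribution exactly. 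Hence the composite procedure $\alpha$-almost uniformly samples from $SC(H)$ for every $H \in \mathscr{C}_{531}$ in polynomial time, and moreover inherits the ``accepts at least half the time'' property from $M$. By \Cref{thm:IntractibilityofSCSamplingonC528} this forces $\RP = \NP$; since $\alpha < 1$ was arbitrary, $P_2$ sampling is intractable on maximal plane graphs of degree $\leq 531$.

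I do not expect a genuine obstacle at this stage: all the real work has already been done, in \Cref{thm:facebounded3CCP} (bounded-face-degree Hamiltonicity) and in the $R_d$ gadgets underlying \Cref{thm:IntractibilityofSCSamplingonC528}. The only points needing care are bookkeeping: verifying that every graph handed to $M$ really lies in the promised class — handled by the structural observation above — and handling the passage between ordered and unordered $2$-partitions so that uniformity is preserved on the nose. Both are routine, so the theorem is essentially a corollary of \Cref{thm:IntractibilityofSCSamplingonC528} and \Cref{duality}.
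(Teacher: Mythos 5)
Your proposal is correct and follows the same route as the paper: dualize, observe that the duals of graphs in $\mathscr{C}_{531}$ are exactly the maximal plane graphs with vertex degree $\leq 531$, and transport the hardness of \Cref{thm:IntractibilityofSCSamplingonC528} through the bijection of \Cref{duality}. You simply spell out the bookkeeping (the two-to-one ordered-to-unordered map and the structural verification of the dual) that the paper leaves implicit.
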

\begin{proof}
Since the dual graphs of those graphs in $\mathscr{C}_{531}$ are exactly the maximal plane graphs with vertex degree $\leq 531$ and since simple cycles correspond bijectively to (unordered) connected 2-partitions under that duality, this result follows from \Cref{thm:IntractibilityofSCSamplingonC528}.
\end{proof}

\subsection{Intractability of uniformly sampling connected $k$-partitions}\label{section:kpartitionshard}

To obtain an intractability theorem about uniformly sampling connected $k$-partitions, we follow a similar approach as in previous sections. First, we recall a plane duality theorem, and then we prove that a relevant optimization problem is $\NP$-complete. Finally, we introduce a gadget that concentrates samples on the certificates to that problem.

We remind the reader that $\mathscr{P}_k(G)$ denotes the set of unordered $k$-partitions of $G$, such that each block induces a connected subgraph. In this section, we will show that uniformly sampling from $\mathscr{P}_k(G)$ is intractable on the class of planar graphs. We will also show intractability for a family of probability distributions that weights partitions according to the size of their boundary. \subsubsection{Duality for connected $k$-partitions}

Key to our proof will be a duality theorem \cref{thm:kpartitionduality:ref}, which is proven in detail in the appendix. We now list the definitions necessary for its statement.

\begin{defn}[Unordered connected partitions]
Let $\mathscr{P}^c(G)$ %
be the set of \emph{unordered} partitions of $V(G)$ such that each block induces a connected subgraph. %
That is, $\mathscr{P}^c(G) = \bigcup_{k = 1}^{|V(G)|} \mathscr{P}_k(G)$.%
\end{defn}

\begin{defn}[Edge cut]
Let $P$ be a partition of $V(G)$. If $P = \{A_1, \ldots, A_k\}$, then we refer to the $A_i$ as the \emph{blocks} of $P$. Let $\cut(P)$ denote the set of edges of $G$ with endpoints in different blocks of $P$.  %
\end{defn}

\begin{defn}[Component map]
Given $J \subseteq E(G)$, define a partition $\comp(J) \in \mathscr{P}^c(G)$ as the partition into the connected components of $G \setminus J$.
\end{defn}

\begin{defn}[Dual connected partitions] Let $\mathscr{E}_2(G)$ denote the set of subsets of edges of $G$ such that each connected component of the induced subgraph is $2$-edge connected: $$\mathscr{E}_2(G) = \{ J \subseteq 2^{E(G)} : \textrm{ Each component of } G[J] \textrm{  is $2$-edge connected} \}.$$ %
\end{defn}

\begin{defn}[Circuit rank, number of components]
Let $G$ be a graph. Then $h_1(G)$ denotes the circuit rank of $G$, that is, the minimum number of edges that must be removed to make $G$ a forest%
. Also, $h_0(G)$ denotes the number of connected components of $G$. 
\end{defn}

\begin{defn}[Dual $k$-partition]
We define $\mathscr{P}_k^*(G) = \{ J \in \mathscr{E}_2(G) : h_1( G[J]) = k - 1 \}$. We call the elements of $\mathscr{P}_k^*(G)$ dual $k$-partitions.
\end{defn}

\begin{thm}[Duality between $\mathscr{P}_k(G)$ and $\mathscr{P}_k^*(G^*)$]\label{thm:kpartitionduality:ref}
Let $G$ be a plane graph, and $G^*$ its planar dual. Let $D : 2^{E(G)} \to 2^{E(G^*)}$ be the natural bijection. The map $D \circ \cut : \mathscr{P}_k(G) \to \mathscr{P}_k^*(G^*)$ is a bijection, with $\comp \circ D^{-1} : \mathscr{P}_k^*(G^*) \to \mathscr{P}_k(G)$ as its inverse. Both are computable in polynomial time. %
\end{thm}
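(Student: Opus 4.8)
The plan is to show that $\cut$ and $\comp$ are already mutually inverse bijections in the relevant range, and then transport everything across the edge-bijection $D$; all the geometry reduces to two standard facts about plane duality, namely that deleting an edge of $G$ is the same as contracting its dual in $G^*$, and that an edge is a bridge of a plane graph iff its dual edge is a loop. Throughout I assume $G$ is connected (hence $G^*$ is connected), which is the only case of interest; Euler's formula for a plane graph with $c$ components reads $|V| - |E| + |F| = 1 + c$.

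First I would establish a counting identity valid for \emph{every} $H \subseteq E(G)$:
\[ h_0(G\setminus H) \;=\; h_1\!\left(G^*[D(H)]\right) + 1 . \]
Since $(G\setminus H)^* = G^*/D(H)$, the faces of $G\setminus H$ biject with the vertices of $G^*/D(H)$, and contracting an edge set lowers the vertex count by its rank, so $|F(G\setminus H)| = |V(G^*)| - \operatorname{rank}(D(H))$ with $\operatorname{rank}(D(H)) = |V(G^*[D(H)])| - h_0(G^*[D(H)])$. Substituting this and $|V(G^*)| = |F(G)|$ into Euler's formula for $G\setminus H$ and for the connected graph $G$, all terms cancel except the displayed equation, once one uses $|D(H)| = |H|$ and $h_1(G^*[D(H)]) = |D(H)| - \operatorname{rank}(D(H))$. (Equivalently, this is the equality of the cographic matroid of $G$ with the graphic matroid of $G^*$.)

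Next I would nail down the two composites. That $\comp(\cut(P)) = P$ for every connected partition $P$ is immediate: deleting $\cut(P)$ leaves each block connected and disconnects different blocks. For the other composite I claim $\cut(\comp(H)) = H \iff D(H) \in \mathscr{E}_2(G^*)$. One always has $\cut(\comp(H)) \subseteq H$, with equality iff every $e = uv \in H$ has $u,v$ in different components of $G\setminus H$, i.e.\ iff $e$ is a bridge of $G\setminus(H\setminus\{e\})$. Dualizing (deletion becomes contraction, bridge becomes loop), this says $D(e)$ is a loop of $G^*/(D(H)\setminus\{D(e)\})$, which holds iff the endpoints of $D(e)$ are joined by a path inside $D(H)\setminus\{D(e)\}$, i.e.\ iff $D(e)$ lies on a cycle of $G^*[D(H)]$, i.e.\ iff $D(e)$ is not a bridge of $G^*[D(H)]$. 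Hence $\cut(\comp(H)) = H$ iff $G^*[D(H)]$ is bridgeless, which is exactly $D(H) \in \mathscr{E}_2(G^*)$. Combining: for $P \in \mathscr{P}_k(G)$ and $H = \cut(P)$ we get $h_0(G\setminus H) = k$ (so $h_1(G^*[D(H)]) = k-1$ by the identity) and $\cut(\comp(H)) = H$ (so $D(H) \in \mathscr{E}_2(G^*)$), whence $D(\cut(P)) \in \mathscr{P}_k^*(G^*)$; conversely for $J \in \mathscr{P}_k^*(G^*)$ and $H = D^{-1}(J)$, the blocks of $\comp(H)$ are the $k$ connected components of $G\setminus H$ so $\comp(H) \in \mathscr{P}_k(G)$, and $\cut(\comp(H)) = H$. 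Since $D$ is a bijection $2^{E(G)} \to 2^{E(G^*)}$, the maps $D\circ\cut$ and $\comp\circ D^{-1}$ are mutually inverse bijections $\mathscr{P}_k(G) \leftrightarrow \mathscr{P}_k^*(G^*)$ (the case $k=2$ recovering \Cref{duality}), and constructing $G^*$ and $D$, finding connected components, and evaluating $h_0$ and $h_1$ are all standard polynomial-time operations.

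The ideas here are elementary; the real work is the bookkeeping in degenerate cases. The dual $G^*$ is a multigraph with loops and parallel edges, the subgraphs $G\setminus H$ and $G^*[D(H)]$ are generally disconnected, and the deletion--contraction and bridge--loop dualities carry the familiar caveats that have to be checked carefully (or circumvented by invoking matroid duality directly). One must also confirm that the Euler-characteristic computation holds verbatim for disconnected plane subgraphs and that ``each component $2$-edge-connected'' in the definition of $\mathscr{E}_2$ coincides with ``bridgeless'' on edge-induced subgraphs — these are the points where care is required, and presumably where the appendix spends most of its effort.
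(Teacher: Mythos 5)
Your proof is correct, and it organizes the argument differently from the paper's appendix. The paper factors the theorem into three separate pieces: (i) $\comp$ and $\cut$ are mutually inverse between $\Cuts(G)$ and $\mathscr{P}^c(G)$; (ii) a characterization of which dual edge sets are cut sets, proved by first invoking the classical even-subgraph/edge-boundary duality (\cite{Erickson}) together with Euler's decomposition of even subgraphs into disjoint cycles, and then handling general connected partitions by a coarsening trick (merge all blocks except two into one, so that the resulting $2$-partition's dual is a disjoint union of cycles through any prescribed edge); and (iii) the Euler-formula computation $h_1(G^*[\cut(P)^*]) = |P|-1$, stated only for cuts of connected partitions. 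You instead prove a single counting identity $h_0(G\setminus H) = h_1(G^*[D(H)])+1$ valid for \emph{every} $H \subseteq E(G)$ via the deletion--contraction correspondence, and replace step (ii) by a per-edge criterion: $\cut(\comp(H)) = H$ iff $G^*[D(H)]$ is bridgeless, derived directly from the bridge--loop duality. This avoids both Erickson's proposition and the coarsening trick, and is arguably more uniform; what it costs you is that the deletion--contraction identity $(G\setminus H)^* = G^*/D(H)$ and the bridge--loop correspondence must themselves be verified for multigraph duals with loops and for bridges (where $G\setminus H$ disconnects), which you correctly flag but leave as black boxes — the paper's route pushes exactly this burden onto the cited even-subgraph duality instead, so the two proofs are comparable in what they take for granted. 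Your more general counting identity is a genuine (if modest) strengthening of the paper's \Cref{prop:topologicalpartofduality}.
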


\subsubsection{The corresponding $\NP$-complete problem}

We will show that it is $\NP$-complete to decide if $\mathscr{P}^*_k(G)$ has length maximizing elements.%

\begin{defn}[Spanning edge set]
Let $J$ be a subset of edges of a graph $G$. We say that $J$ \emph{spans} $G$ if every node of $G$ is incident to some edge of $J$.
\end{defn}

\begin{prop}\label{lem:maxnumedgesandhomology}
Let $G = (V,E)$ be a graph. The maximum number of edges any set $J \subseteq E$ with $h_1( G[J]) = k - 1$ can have is $|V| + k - 2$. Moreover, a $J \subseteq E$ with $h_1( G[J]) = k - 1$ has $|V| + k - 2$ edges if and only if $G[J]$ has one component and spans $G$.
\end{prop}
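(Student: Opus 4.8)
The plan is to prove both the inequality and the equality characterization via the standard relation between circuit rank, number of vertices, number of edges, and number of components. Recall that for any graph $H$ one has $h_1(H) = |E(H)| - |V(H)| + h_0(H)$. I will apply this to the subgraph $G[J]$ spanned by an edge set $J \subseteq E$ with $h_1(G[J]) = k-1$.

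First I would set up notation: let $J \subseteq E$ satisfy $h_1(G[J]) = k - 1$, and let $H = G[J]$ be the subgraph on the vertex set of $G$ with edge set $J$ (so $H$ has $|V|$ vertices, some possibly isolated). Applying the circuit-rank identity to $H$ gives $|J| = h_1(H) + |V(H)| - h_0(H) = (k-1) + |V| - h_0(H)$. Since $H$ always has at least one component (here I would note the mild degenerate case $|V| = 0$ or $J = \emptyset$ can be handled or excluded trivially, as the surrounding text's conventions allow), we get $h_0(H) \geq 1$, hence $|J| \leq (k-1) + |V| - 1 = |V| + k - 2$, which is the claimed bound.

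For the equality characterization, equality $|J| = |V| + k - 2$ holds in the above chain if and only if $h_0(H) = 1$, i.e., $H$ has exactly one connected component. But $H = G[J]$ has the full vertex set $V$, so $H$ having a single component is equivalent to every vertex of $G$ being incident to an edge of $J$ (so that there are no isolated vertices forming extra components) together with those edges forming a connected subgraph; unpacking this, it is exactly the statement that $J$ spans $G$ and $G[J]$ has one component. Conversely, if $G[J]$ has one component and spans $G$, then $h_0(G[J]) = 1$ and the same identity gives $|J| = |V| + k - 2$. This closes the proof.

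The main (very mild) obstacle is simply being careful about the meaning of $G[J]$ as a subgraph carrying \emph{all} of $V$ versus only the endpoints of $J$, and about the degenerate edge cases ($J = \emptyset$, or components that are single isolated vertices); once the convention "$G[J]$ has vertex set $V$" is fixed, the proof is just the circuit-rank identity plus the observation that maximizing $|J|$ for fixed $h_1$ amounts to minimizing the number of components. I expect no real difficulty here.
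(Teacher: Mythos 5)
Your proof is correct and follows essentially the same route as the paper's: both are one-line applications of the rank--nullity identity $h_1 - h_0 = |E| - |V|$ applied to $G[J]$, with equality traced back to minimizing the component count. The only difference is bookkeeping: you take $G[J]$ to carry all of $V$ and fold the spanning condition into $h_0(G[J]) = 1$, whereas the paper takes $V(G[J])$ to be the endpoints of $J$ and records spanning as the separate inequality $V_J \leq |V|$; both conventions yield the same characterization.
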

\begin{proof}
Let $E_J,V_J$ be the number of edges and vertices of $G[J]$, respectively.  From $E_J - V_J = h_1(J) - h_0(J)$ (\cref{prop:ranknullity}), we have $E_J - V_J = k - 1 - h_0(J)$. Thus, $E_J = V_J + k - 1 - h_0(J)%
\leq |V| + k - 2$, as $h_0(J) \geq 1$ and $V_J \leq |V|$. This establishes the upper bound. Moreover, these inequalities become equalities if and only if $V_J = |V|$ and $h_0(G[J]) = 1$, which is to say, if and only if $J$ spans and has one component.
\end{proof}

To define the $\NP$-complete decision problem we will use for the sampling intractability proof, we single out the elements of $\mathscr{P}_k(G)$ that achieve the upper bound of \Cref{lem:maxnumedgesandhomology} and define a corresponding decision problem.

\begin{defn}[Maximal dual $k$-partition]
Let $\mathscr{P}_k^*(G)_{m}$ be the subset of $\mathscr{P}_k^*(G)$ consisting of the subgraphs that have $|V(G)| + k - 2$ edges, which is the maximal number of edges possible.%
\end{defn}

\begin{computationalproblem}{PlanarMaxEdgesDualkpartition}\label{kpartitionNPC}

Input: A planar graph $G$

Output: YES if  $\mathscr{P}_k^*(G)_m \not = \emptyset$, NO, otherwise.
\end{computationalproblem}

We will prove that \textsc{PlanarMaxEdgesDualkpartition} is $\NP$-complete, by reducing from the Hamiltonian cycle problem on grid graphs:

\begin{thm}[\cite{itai1982hamilton}]
Let $G$ be a finite subgraph of the square grid graph $\mathbb{Z}^2$, wherein integer points are adjacent if and only if their Euclidean distance is $1$. Deciding whether $G$ has a Hamiltonian cycle is $\NP$-complete.
\end{thm}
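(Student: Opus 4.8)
Membership in $\NP$ is immediate: a Hamiltonian cycle is a certificate of size $O(|V(G)|)$ whose validity can be checked in polynomial time. The work is in proving $\NP$-hardness, and the plan is to reduce from a Hamiltonian-cycle problem on plane graphs that is already known to be $\NP$-complete --- for concreteness, $\mathscr{C}\textrm{-HAM}$ from \Cref{thm:GJT3CCP}, the Hamiltonicity problem on $3$-connected cubic plane graphs --- by ``drawing'' the input graph on an integer grid and replacing its vertices and edges with small grid subgraphs (``gadgets''), so that Hamiltonian cycles of the resulting grid graph are in correspondence with Hamiltonian cycles of the original graph.

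Given a cubic plane graph $G$ on $n$ vertices, the first step is to compute, in polynomial time, an \emph{orthogonal grid layout} of $G$: place each vertex at a distinct lattice point, and route each edge as a self-avoiding lattice path of unit horizontal and vertical steps that passes through no non-incident vertex, with distinct edges routed along disjoint paths. Standard orthogonal-drawing algorithms do this on an $O(n)\times O(n)$ grid, and since $G$ is cubic every vertex has at most three attached routes. After rescaling by a fixed constant, each vertex occupies a fixed-size ``vertex region,'' each edge becomes a ``corridor'' of fixed width, corridors consist of straight segments joined at ``corner regions,'' and two corridors meet only inside a vertex region.

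The heart of the reduction is to populate these regions with a fixed finite family of grid gadgets --- a \emph{vertex gadget} with three ports, one per incident edge of $G$; a \emph{straight corridor gadget}; and a \emph{corner gadget} --- chosen so that the grid graph $\Gamma$ obtained has a Hamiltonian cycle if and only if $G$ does. In the forward direction, a Hamiltonian cycle $H$ of $G$ uses exactly two of the three edges at each vertex, so one threads a cycle of $\Gamma$ through the two corresponding ports of each vertex gadget (a fixed boustrophedon pattern covering the gadget's remaining cells), sweeps each corridor of an edge of $H$ straight through, and covers each corridor of an edge not in $H$ by a ``there-and-back'' hairpin attached to one of its two endpoint gadgets; these pieces join into a single cycle because the edges of $H$ already form one. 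In the reverse direction, one shows that within each gadget a Hamiltonian cycle of $\Gamma$ can only realize one of a small list of \emph{local states}, exactly the ones used above, and that reading off which two ports of each vertex gadget are occupied recovers a Hamiltonian cycle of $G$. This local-state bookkeeping is of the same flavor as the $3$-way-OR case analysis of \cite{garey_plane_1976} cited in \Cref{lem:GJT3ORProperty}.

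The main obstacle is designing and certifying the gadgets. One must exhibit grid pieces whose \emph{only} Hamiltonian-subpath behaviors are the intended port-to-port (and hairpin) patterns, which cover \emph{all} of their cells under each such behavior, and which glue along corridors without creating unintended routes; and one must respect parity, since a grid graph whose two classes in the checkerboard $2$-coloring differ in size has no Hamiltonian cycle at all, forcing all corridor lengths and gadget dimensions to be chosen so the two classes stay balanced (in practice, padding corridors to a common parity). Once a consistent gadget family is fixed, correctness follows from a finite --- if tedious --- enumeration of local states, and the reduction is polynomial because $\Gamma$ has $O(n^2)$ vertices.
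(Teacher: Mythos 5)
You should first note that the paper does not actually prove this statement: it is imported verbatim from \cite{itai1982hamilton} and used as a black box in the reduction of \Cref{prop:GridGraphReduction}, so there is no in-paper argument to compare against. Your proposal is an attempt to reprove the external result. Its architecture --- an orthogonal grid layout, vertex gadgets with one port per incident edge, corridors that are either ``crossed'' or traversed as a hairpin, and a finite local-state analysis in the style of \Cref{lem:GJT3ORProperty} --- does match the actual Itai--Papadimitriou--Szwarcfiter construction in spirit, and the $\NP$-membership and polynomial-size observations are fine.

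There are, however, two genuine gaps. First, the entire mathematical content of the theorem lives in the gadgets and their certification (``the \emph{only} Hamiltonian-subpath behaviors are the intended ones''), and you exhibit none of them; what you have is a plan for a proof, not a proof. Second, and more substantively, you reduce from $\mathscr{C}\textrm{-HAM}$ (\Cref{thm:GJT3CCP}), i.e.\ from $3$-connected cubic \emph{planar} graphs, which are in general non-bipartite, and you dispose of the checkerboard parity obstruction with ``padding corridors to a common parity.'' This is exactly the point where the known proof does something structurally different: \cite{itai1982hamilton} first establishes $\NP$-completeness of Hamiltonicity for planar \emph{bipartite} graphs of maximum degree $3$, and then uses a \emph{parity-preserving} embedding in which the two sides of the bipartition are sent to cells of the two checkerboard colors; that global $2$-coloring of the vertex clusters is what makes the colors of the port cells consistent, so that every tentacle admits both a crossing and a returning traversal compatible with its two endpoints. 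With a non-bipartite source graph, the color of the port cell used at each end of a corridor is dictated by the (unspecified) vertex gadget, and around an odd cycle of $G$ these constraints need not be simultaneously satisfiable: padding a corridor only changes the color relation between its own two ends and cannot repair a conflict forced at the vertex gadgets. To close this you would either have to design a vertex gadget that provably offers compatible port colors for every pair of incident edges --- a nontrivial claim that must be part of the case analysis --- or reroute the reduction through the planar bipartite degree-$3$ problem as in the original paper.
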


\begin{prop}\label{prop:GridGraphReduction} The problem
\textsc{PlanarMaxEdgesDualkPartition} is $\NP$-complete.
\end{prop}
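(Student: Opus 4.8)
The plan is to reduce from the Hamiltonian cycle problem on grid graphs, shown $\NP$-complete in \cite{itai1982hamilton}. Membership in $\NP$ is immediate: for a planar graph $G$, a set $J \subseteq E(G)$ is a polynomial-size certificate, and one checks in polynomial time that every connected component of $G[J]$ is $2$-edge-connected (by bridge-finding), that $h_1(G[J]) = k-1$, and that $|J| = |V(G)| + k - 2$ --- that is, that $J \in \mathscr{P}_k^*(G)_m$. The first real step is to translate the target condition into graph-theoretic language: by \Cref{lem:maxnumedgesandhomology}, $\mathscr{P}_k^*(G)_m \neq \emptyset$ holds if and only if $G$ contains a \emph{spanning, connected, $2$-edge-connected} subgraph of circuit rank exactly $k-1$. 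Since a connected $2$-edge-connected graph of circuit rank $1$ is precisely a cycle, for $k = 2$ this is exactly the statement that $G$ has a Hamiltonian cycle, so the identity map already reduces.

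For fixed $k \ge 2$ I would handle the general case with a rigid gadget that raises the required circuit rank from $1$ to $k-1$. Given a grid graph $G$, fix a vertex $v_0 \in V(G)$ and form $G'$ by attaching $k-2$ triangles to $G$, pairwise vertex-disjoint and each meeting $G$ only in $v_0$; thus each attached triangle introduces two new vertices of degree $2$. The graph $G'$ is planar and is built in linear time, and when $k = 2$ it is just $G$, so the construction is uniform in $k$.

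The correctness argument has the expected shape. If $C$ is a Hamiltonian cycle of $G$, then $E(C)$ together with all $k-2$ attached triangles is a spanning, connected, $2$-edge-connected subgraph of $G'$ of circuit rank $1 + (k-2) = k-1$, hence lies in $\mathscr{P}_k^*(G')_m$. Conversely, if $J \in \mathscr{P}_k^*(G')_m$, then because $G'[J]$ spans $G'$ and every vertex of a $2$-edge-connected graph has degree at least $2$, each of the $2(k-2)$ new vertices has degree exactly $2$ in $G'[J]$; this forces all $k-2$ triangles to lie in $J$. Setting $J_0 = J \cap E(G)$, the triangles are ears hung at the single vertex $v_0$, so they neither create nor destroy bridges and each contributes exactly $1$ to the circuit rank; one then reads off that $G[J_0]$ is spanning, connected, $2$-edge-connected of circuit rank $1$, i.e.\ a Hamiltonian cycle of $G$. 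Therefore $\mathscr{P}_k^*(G')_m \neq \emptyset$ if and only if $G$ is Hamiltonian, which completes the reduction.

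The main point requiring care --- more bookkeeping than genuine difficulty --- is the passage between $2$-edge-connectivity of $G'[J]$ and of $G[J_0]$, together with the claim that connectivity of $G'[J]$ forces $J_0$ itself to span and connect $G$: one has to observe that a degree-$2$ vertex of an attached triangle can never lie on a bridge, so the bridges of $G'[J]$ are exactly those of $G[J_0]$, and that since the triangles touch $G$ only at $v_0$, any path in $G'[J]$ between two vertices of $G$ can be rerouted to avoid the triangle interiors. The degenerate case $k = 1$ (which forces $|V(G)| = 1$) should be noted and set aside.
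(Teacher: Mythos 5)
Your proof is correct and follows the paper's strategy: reduce from Hamiltonicity of grid graphs \cite{itai1982hamilton} by grafting onto $G$ a rigid gadget that every element of $\mathscr{P}_k^*(G')_m$ is forced to contain (via spanning plus the no-bridge characterization of \Cref{lem:equivbridgeunion}), which soaks up $k-2$ of the circuit rank and leaves a spanning bridgeless connected subgraph of rank $1$, i.e.\ a Hamiltonian cycle. The only difference is the gadget. The paper locates the lexicographically upper-left degree-$2$ vertex $v$ of a ($2$-connected, WLOG) grid graph, deletes it, and joins its two neighbors by a chain of $k-2$ diamonds, so a Hamiltonian cycle of $G$ is rerouted through the chain; you instead hang $k-2$ pendant triangles at an arbitrary vertex $v_0$, so the Hamiltonian cycle of $G$ survives unchanged and the triangles sit beside it. Your variant is slightly cleaner in that it needs no $2$-connectivity preprocessing and no special vertex, and the forcing argument (degree-$2$ new vertices must have both incident edges present) is immediate; the small extra obligations you correctly flag — that simple paths and simple cycles of $G'[J]$ meeting a triangle interior are confined to that triangle, so $J\cap E(G)$ still spans, connects, and is bridgeless — all check out, as does the edge count $|J\cap E(G)|=|V(G)|$.
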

\begin{proof} The language of graphs that have $\mathscr{P}_k^*(G)_m \not = \emptyset$ is in $\NP$, since checking if a given set of edges is in $\mathscr{P}_k^*(G)_m$ can be done in polynomial time. We now show the reduction from Hamiltonicity of grid graphs. Let $G$ be a some subgraph of the grid graph, which we assume without loss of generality is $2$-connected, since otherwise it has no Hamiltonian cycle. %
Because $G$ is $2$-connected, the lexicographic upper-left %
most node $v$ of $G$ must have degree $2$. %
We build a new graph, $G'$, by removing $v$ and connecting its neighbors with a chain of $k-2$ diamonds, as in \Cref{fig:GridGraphReduction}.

We will show that $G'$ has an element of $P_k^*(G)_m$ if and only if $G$ has a Hamiltonian cycle. If $G$ has a Hamiltonian cycle, that cycle had to pass through both edges of $v$, hence we can replace those edges with the diamonds in $G'$. The result has $h_1 = k-1$, spans $G'$, and is connected. Thus it is an element of $P_k^*(G')_m$.

Going in the other direction, suppose that there is an $X \in \mathscr{P}_k^*(G')_m$. Since $X$ must span $G'$, $X$ must contain each node in the chain of diamonds. Moreover, since elements of $\mathscr{P}_k^*(G)$ have no bridge edges (\cref{lem:equivbridgeunion}), this implies that $X$ contains all the edges in that chain of diamonds. Thus, $k-2$ of $h_1(X)$ is accounted for by the diamonds, and for $h_1(X) = k-1$, the rest of $X$ must be a simple cycle, which spans $G \setminus v$ because $X$ spans $G'$. Replacing those diamonds by the path $\{a,v,b\}$, we obtain a Hamiltonian cycle of $G$.
\end{proof}

\begin{figure}[!htbp]
    \centering
    \def\svgscale{.5}{
    %% Creator: Inkscape inkscape 0.92.4, www.inkscape.org
%% PDF/EPS/PS + LaTeX output extension by Johan Engelen, 2010
%% Accompanies image file 'GridGraphReduction.pdf' (pdf, eps, ps)
%%
%% To include the image in your LaTeX document, write
%%   \input{<filename>.pdf_tex}
%%  instead of
%%   \includegraphics{<filename>.pdf}
%% To scale the image, write
%%   \def\svgwidth{<desired width>}
%%   \input{<filename>.pdf_tex}
%%  instead of
%%   \includegraphics[width=<desired width>]{<filename>.pdf}
%%
%% Images with a different path to the parent latex file can
%% be accessed with the `import' package (which may need to be
%% installed) using
%%   \usepackage{import}
%% in the preamble, and then including the image with
%%   \import{<path to file>}{<filename>.pdf_tex}
%% Alternatively, one can specify
%%   \graphicspath{{<path to file>/}}
%% 
%% For more information, please see info/svg-inkscape on CTAN:
%%   http://tug.ctan.org/tex-archive/info/svg-inkscape
%%
\begingroup%
  \makeatletter%
  \providecommand\color[2][]{%
    \errmessage{(Inkscape) Color is used for the text in Inkscape, but the package 'color.sty' is not loaded}%
    \renewcommand\color[2][]{}%
  }%
  \providecommand\transparent[1]{%
    \errmessage{(Inkscape) Transparency is used (non-zero) for the text in Inkscape, but the package 'transparent.sty' is not loaded}%
    \renewcommand\transparent[1]{}%
  }%
  \providecommand\rotatebox[2]{#2}%
  \newcommand*\fsize{\dimexpr\f@size pt\relax}%
  \newcommand*\lineheight[1]{\fontsize{\fsize}{#1\fsize}\selectfont}%
  \ifx\svgwidth\undefined%
    \setlength{\unitlength}{545.47630594bp}%
    \ifx\svgscale\undefined%
      \relax%
    \else%
      \setlength{\unitlength}{\unitlength * \real{\svgscale}}%
    \fi%
  \else%
    \setlength{\unitlength}{\svgwidth}%
  \fi%
  \global\let\svgwidth\undefined%
  \global\let\svgscale\undefined%
  \makeatother%
  \begin{picture}(1,0.43302575)%
    \lineheight{1}%
    \setlength\tabcolsep{0pt}%
    \put(0,0){\includegraphics[width=\unitlength,page=1]{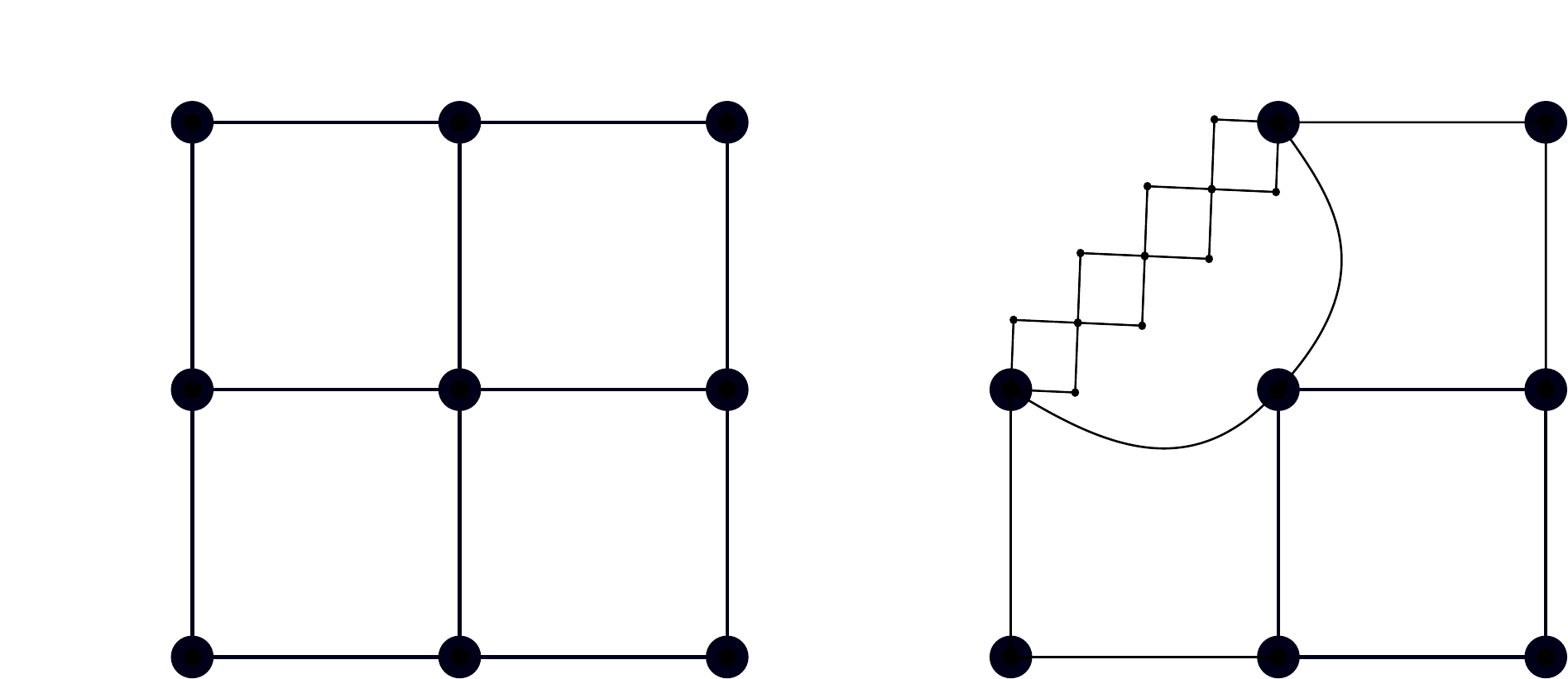}}%
    \put(0.10489196,0.36909566){\color[rgb]{0,0,0}\makebox(0,0)[rt]{\lineheight{1.25}\smash{\begin{tabular}[t]{r}$v$\end{tabular}}}}%
    \put(0,0){\includegraphics[width=\unitlength,page=2]{GridGraphReduction.pdf}}%
    \put(0.09920014,0.18151573){\color[rgb]{0,0,0}\makebox(0,0)[rt]{\lineheight{1.25}\smash{\begin{tabular}[t]{r}$a$\end{tabular}}}}%
    \put(0.2948731,0.39124031){\color[rgb]{0,0,0}\makebox(0,0)[lt]{\lineheight{1.25}\smash{\begin{tabular}[t]{l}$b$\end{tabular}}}}%
  \end{picture}%
\endgroup%
}
    \caption{A step in the reduction in \Cref{prop:GridGraphReduction} %
    }
    \label{fig:GridGraphReduction}
\end{figure}

\subsubsection{The Chain of Dipoles Construction, Degeneration of $k$-Partitions} 

We will concern ourselves with the following sampling problem, where we fix $\lambda> 0$:

\begin{computationalproblem}{$\lambda$-sampling connected dual $k$-partitions}

Input: A graph $G$.

Output: An element of $\mathscr{P}^*_k(G)$, drawn according to the probability distribution that assigns a set $J \in \mathscr{P}^*_k(G)$ weight proportional to $\lambda^{|J|}$.

\end{computationalproblem}

To make some estimates, we give the probability distribution in this problem a name:

\begin{defn}[Measures $\nu_{\lambda}$ and $N_\lambda$]\label{defn:lambdameasures}
Let $G$ be a graph, and let $\lambda > 0$. For any collection of sets of edges $Y \subseteq 2^{E(G)}$, we define the measure $N_{\lambda}$ on $Y$ by $N_{\lambda}(J) =  \lambda^{ |J| }$ for all $J \in Y$. %
We define a probability measure $\nu_{\lambda}$ by normalizing $N_{\lambda}$: $\nu_{\lambda}(J) = \frac{ N_{\lambda}(J) }{N_{\lambda}(Y)}$. %
\end{defn}

To prove that sampling from $\nu_{\lambda}$ is intractable, instead of using chains of bigons like we did for the uniform distribution, we will use chains of order-$r$ dipoles, where $r$ will be chosen so that $r \lambda \geq 2$: %

\begin{defn}[Chain of order-$r$ dipoles]
Define $B_{r,d}(G)$ as the graph obtained from $G$ by subdividing each edge of $G$ into $d$ segments, and then replacing each edge of the resulting graph by $r$ parallel edges (i.e. order-$r$ dipoles). Let $B_{r,d}(e)$ denote the chain of $d$ order-$r$ dipoles that replaces the edge $e$.
\end{defn}

\begin{defn}[Dipole projection map]\label{defn:projectionmap}
We define a map $\pi_d : P_k^*( B_{r,d}(G) ) \to 2^{E(G)}$ by $\pi_d(X) = \{ e \in E(G) : X \cap B_{r,d}(e) \not = \emptyset \}$.
\end{defn}

Now we discuss the main technical hurdle to overcome in this section. Recall from \Cref{sec:initialintractability} that when dealing with a simple cycle $C$, there was only one way that $\pi_d(C)$ could fail to be a simple cycle, namely, if $C$ was one of the bigons. Crucially, there were only $d |E|$ ways this could happen, which was negligible compared to the size of the domain of $\pi_d$. On the other hand, elements of $P_k^*(G)$ can degenerate in more complicated ways, as is shown in \Cref{fig:DegeneratingAndNonDegeneratingDual3s}. The next few propositions establish that the degenerating elements---and all of the other possibilities---remain negligible compared to the preimages of $P_k^*(G)_m$:

\begin{figure}
    \centering
    \begin{tabular}{cc}
         \def\svgscale{.3}{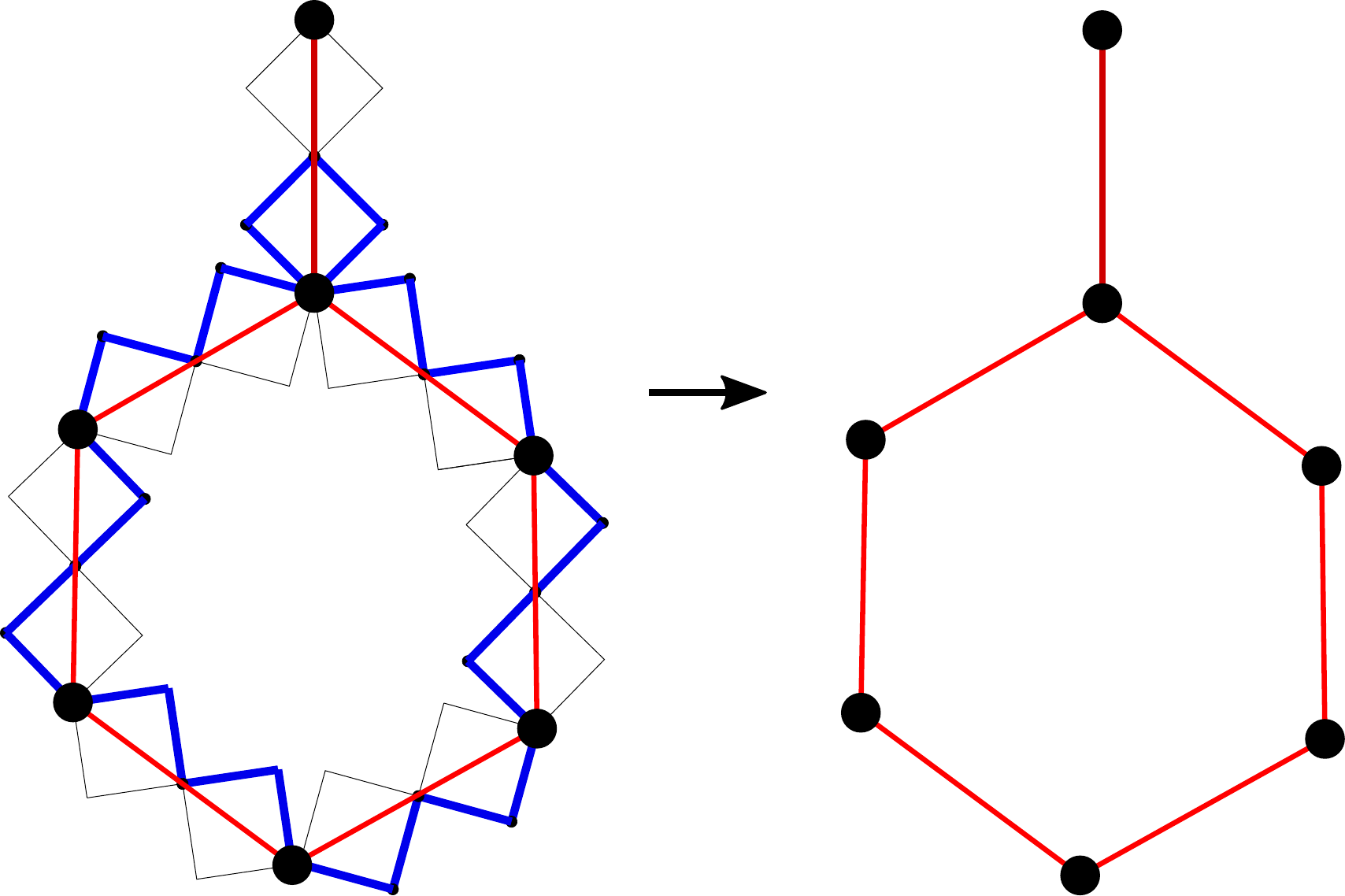} &
         \def\svgscale{.25}{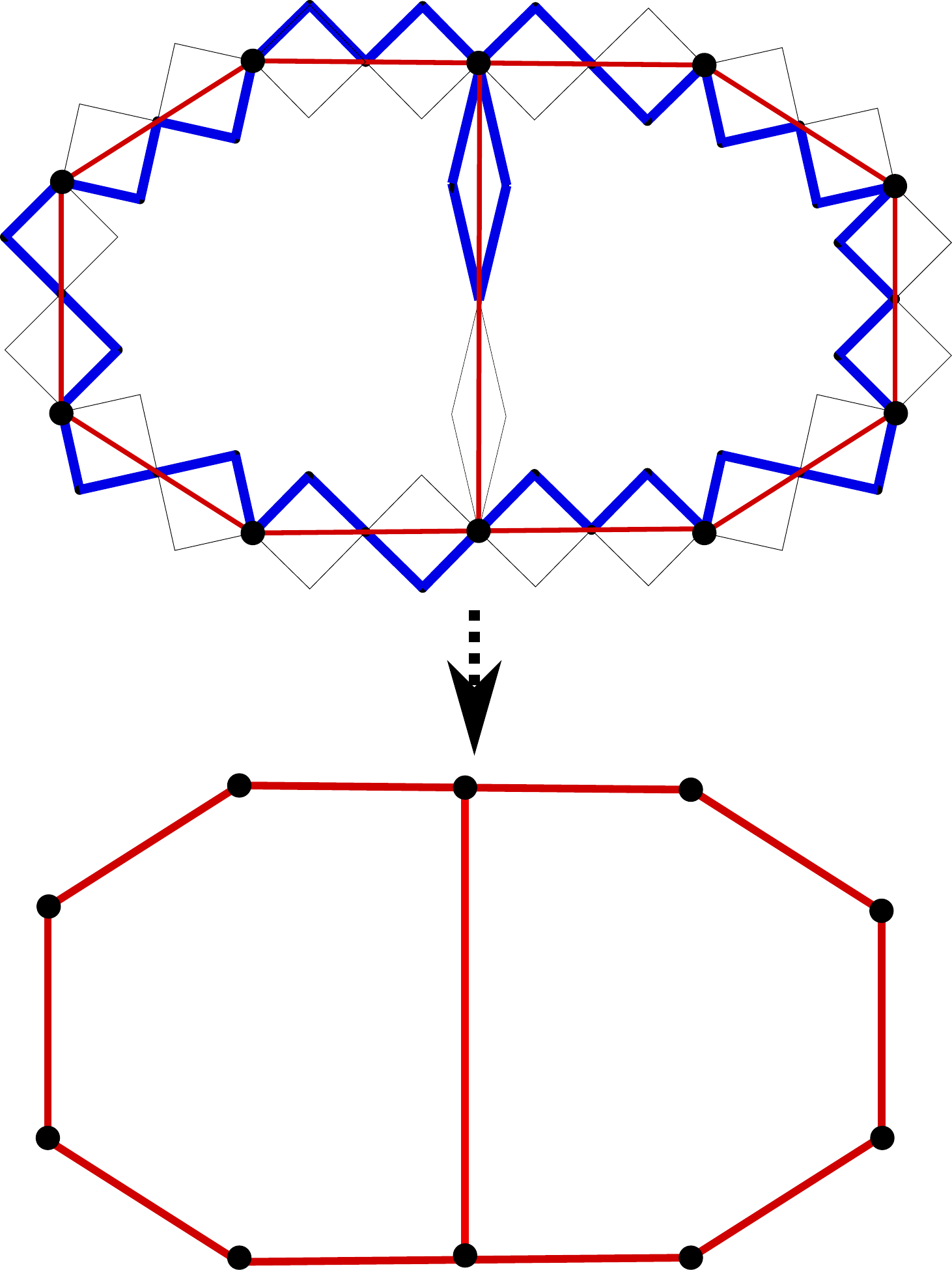} \\ a) A degenerating dual 3-partition & b) A Dual $3$-partition with a bigon from\\&  an $B_{r,d}$, but {which does not degenerate}
    \end{tabular}
    
    \caption{Example of degeneration and subtle non-degeneration.}
    \label{fig:DegeneratingAndNonDegeneratingDual3s}
\end{figure}

\begin{lem}[Surjectivity]\label{lem:lifts}
For each $Y \in P_k^*(G)$, there is a $\widetilde{Y} \in P_k^*( B_{r,d}(G))$ such that $\pi_d(\widetilde{Y}) = Y$.  
\end{lem}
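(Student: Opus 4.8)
The plan is to lift $Y$ in the most economical way possible: replace each edge of $Y$ by a \emph{single} path through its chain of dipoles, and then observe that the resulting edge set differs from $Y$ only by a subdivision, an operation that changes neither the circuit rank of a graph nor the $2$-edge-connectivity of its components. First I would fix, for each $e = \{u,v\} \in Y$, a choice of one of the $r$ parallel edges in each of the $d$ dipoles comprising $B_{r,d}(e)$; concatenating these selected edges yields a path $\gamma_e$ from $u$ to $v$ inside $B_{r,d}(e)$. Set $\widetilde Y := \bigcup_{e \in Y} \gamma_e$.

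The key structural observation is that distinct chains $B_{r,d}(e)$ and $B_{r,d}(e')$ share no edges and meet only at vertices of $G$ that are common endpoints of $e$ and $e'$. Consequently $B_{r,d}(G)[\widetilde Y]$ is, up to isomorphism, exactly the graph obtained from $G[Y]$ by subdividing each edge into $d$ segments (for $d = 1$ this is just $G[Y]$ with each edge replaced by one parallel copy, hence a copy of $G[Y]$). Granting this identification, checking $\widetilde Y \in \mathscr{P}_k^*(B_{r,d}(G))$ reduces to two routine facts about subdividing an edge: it leaves $h_0$ unchanged while adding one vertex and one edge, so by $h_1 = |E| - |V| + h_0$ it leaves $h_1$ unchanged as well; and it creates no new bridges, so each component of $B_{r,d}(G)[\widetilde Y]$ is $2$-edge-connected precisely because each component of $G[Y]$ is. Therefore $h_1(B_{r,d}(G)[\widetilde Y]) = h_1(G[Y]) = k-1$ and every component is $2$-edge-connected, i.e. $\widetilde Y \in \mathscr{P}_k^*(B_{r,d}(G))$. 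Finally, $\pi_d(\widetilde Y) = Y$ because $\gamma_e$ meets $B_{r,d}(e)$ for every $e \in Y$, while $\widetilde Y$ contains no edge of $B_{r,d}(e')$ when $e' \notin Y$, since each $\gamma_e$ lives entirely inside its own chain.

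There is no substantive obstacle in this argument; the only points requiring a modicum of care are verifying that the chains are genuinely edge-disjoint and intersect only at original vertices (so that the claim ``$B_{r,d}(G)[\widetilde Y]$ is a subdivision of $G[Y]$'' is literally correct), and handling the degenerate parameter cases $d = 1$ or $r = 1$ uniformly. Everything else is bookkeeping with the rank--nullity identity (\cref{prop:ranknullity}) and the elementary fact that subdivision preserves the ``connected, bridgeless'' property that defines membership in $\mathscr{E}_2$.
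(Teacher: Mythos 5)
Your proposal is correct and is exactly the paper's argument: the paper likewise lifts $Y$ by replacing each edge $e \in Y$ with a single simple path of length $d$ through $B_{r,d}(e)$ and notes that this does not change the topology, hence preserves the cycle rank and bridgelessness characterizing membership in $P_k^*$ (via \cref{lem:equivbridgeunion}). You have merely spelled out the details (edge-disjointness of the chains, invariance of $h_0$, $h_1$, and $2$-edge-connectivity under subdivision) that the paper leaves implicit.
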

\begin{proof}
Simply replace each edge $e \in Y$ with a simple path of length $d$ along $B_{r,d}(e)$. This does not change the topology, and hence the result has the same cycle rank and no bridges, which characterizes the elements of $P_k^*$ by \cref{lem:equivbridgeunion}.
\end{proof}

It will be convenient to single out the particular kind of $\pi_d$-preimage constructed in the proof of the last lemma:

\begin{defn}[Lift]\label{defn:lifts}
If $Y \in P_k^*(G)$, we refer to any of the $\widetilde{Y} \in P_k^*(B_{r,d}(G))$ obtained by replacing each edge $e \in Y$ with a simple path of length $d$ through $B_{r,d}(e)$ as a \emph{lift} of $Y$.
\end{defn}

To prove the probability concentration estimates (\Cref{prop:probabilityconcentrationkparts}) we need to characterize the elements of $\im(\pi_d)$ that have the most $N_{\lambda}$ mass above them as the elements of $P_k^*(G)_m$.
For that we will need to characterize the elements of the image with the most edges, which will be accomplished by \Cref{prop:boundsonprojectionelements} and \Cref{prop:maximizersinimagearecertificates}.

\begin{prop}\label{prop:boundsonprojectionelements}
The elements in the image of $\pi_d : P_k^*( B_{r,d} (G)) \to 2^{E(G)}$ all have $h_1 \leq k - 1$ and $\leq |V(G)| + k - 2$ edges. 
\end{prop}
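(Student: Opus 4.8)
The plan is to show both claimed bounds by relating any element $X \in P_k^*(B_{r,d}(G))$ to the set $\pi_d(X) \subseteq E(G)$ via the induced subgraph $G[\pi_d(X)]$, and then invoking \Cref{lem:maxnumedgesandhomology}, which already says that any $J \subseteq E(G)$ with $h_1(G[J]) = k-1$ has at most $|V(G)| + k - 2$ edges. So the crux is to establish that $h_1(G[\pi_d(X)]) \leq k - 1$; the edge bound then follows from \Cref{lem:maxnumedgesandhomology} (or rather its proof, since that proof shows the upper bound $|V| + h_1 - 1$ holds for all $J$, not only those achieving equality).

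First I would recall that $X \in P_k^*(B_{r,d}(G))$ means, by \Cref{lem:equivbridgeunion} and the definition of $\mathscr{P}_k^*$, that each component of $B_{r,d}(G)[X]$ is $2$-edge-connected and that $h_1(B_{r,d}(G)[X]) = k-1$. The key topological observation is that $\pi_d$ collapses each chain of dipoles $B_{r,d}(e)$ back down to the single edge $e$, and this collapsing map is a quotient that can only decrease the circuit rank: contracting or deleting edges within a chain, and identifying parallel edges, never increases $h_1$. More precisely, I would argue that $G[\pi_d(X)]$ is obtained from $B_{r,d}(G)[X]$ by a sequence of edge contractions (contracting a spanning tree of the restriction of $X$ to each used dipole-chain, which is connected because components of $B_{r,d}(G)[X]$ are $2$-edge-connected) followed by deleting the remaining (now-parallel or loop) edges in each chain down to one edge. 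Each contraction leaves $h_1$ unchanged and each deletion weakly decreases $h_1$, so $h_1(G[\pi_d(X)]) \leq h_1(B_{r,d}(G)[X]) = k-1$. An alternative, perhaps cleaner route: since each $B_{r,d}(e)$ is a series composition of dipoles, $X \cap B_{r,d}(e)$ being part of a $2$-edge-connected component forces it either to contain a spanning path of the chain (contributing the single original edge $e$ with no extra independent cycle beyond what a single edge contributes) or to be empty; the "extra" cycles internal to a chain are all killed by series-parallel reduction, so they contribute nothing to $h_1(G[\pi_d(X)])$.

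Once $h_1(G[\pi_d(X)]) \leq k-1$ is in hand, the edge bound is immediate: the proof of \Cref{lem:maxnumedgesandhomology} shows $E_J - V_J = h_1(G[J]) - h_0(G[J])$ for $J = \pi_d(X)$, whence $|J| = V_J + h_1(G[J]) - h_0(G[J]) \leq |V(G)| + (k-1) - 1 = |V(G)| + k - 2$, using $V_J \leq |V(G)|$, $h_0(G[J]) \geq 1$, and the bound on $h_1$ just established.

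The main obstacle I anticipate is making the "$\pi_d$ cannot increase circuit rank" step fully rigorous without hand-waving — in particular carefully handling the case where $X$ uses only some of the dipoles in a chain, or uses multiple parallel edges within a single dipole, and confirming that the resulting edge set on $E(G)$ is exactly $\pi_d(X)$ with the asserted homological behavior. I would address this by phrasing it as: within each chain $B_{r,d}(e)$ meeting $X$, the $2$-edge-connectivity of the ambient component together with the series structure of the chain forces the chain's contribution, after deletion/contraction to recover $G[\pi_d(X)]$, to behave exactly like a single edge $e$; formally this is a statement about $h_1$ being invariant under series-parallel reductions, which I would cite or prove in a short lemma if one is not already available (e.g. it follows from \Cref{prop:ranknullity}/the rank-nullity identity applied before and after each reduction). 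Everything else is bookkeeping with the rank-nullity formula.
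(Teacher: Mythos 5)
Your reduction of the edge bound to the bound $h_1(G[\pi_d(X)]) \leq k-1$ via \Cref{lem:maxnumedgesandhomology} and \Cref{prop:ranknullity} is exactly right and matches the paper. The gap is in the $h_1$ bound itself: the claim that $G[\pi_d(X)]$ is obtained from $B_{r,d}(G)[X]$ by contractions and deletions is false, and so is the fallback claim that $2$-edge-connectivity forces $X \cap B_{r,d}(e)$ either to contain a crossing path of the chain or to be empty. A $2$-edge-connected component of $X$ may sit entirely in the interior of a single chain --- for instance a single bigon (two parallel edges of one dipole) in the middle of $B_{r,d}(e)$ --- in which case $e \in \pi_d(X)$ even though $X$ supplies no path between the endpoints of $e$; this is precisely the ``degeneration'' the paper illustrates in \Cref{fig:DegeneratingAndNonDegeneratingDual3s}. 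Concretely, take $G$ a triangle and $X$ a bigon in the middle of each of the three chains: then $B_{r,d}(G)[X]$ is three disjoint bigons ($h_0 = 3$, $h_1 = 3$, so $X \in P_4^*(B_{r,d}(G))$), while $G[\pi_d(X)]$ is the triangle, a connected graph. Since deletion and contraction never merge components, the triangle is not a minor of three disjoint bigons, so the minor argument cannot produce the inequality even though the inequality itself ($1 \leq 3$ here) holds.

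The paper instead argues linear-algebraically: it takes a basis $1_{C_1}, \ldots, 1_{C_m}$ of the cycle space of $\pi_d(X)$, lifts each $C_i$ to a cycle $\widetilde{C}_i$ of $B_{r,d}(G)$ by choosing a crossing path in each chain, and shows the lifts stay independent by applying the linear map $T : \mathbb{R}^{E(B_{r,d}(G))} \to \mathbb{R}^{E(G)}$ that sums values over each chain, using $T(1_{\widetilde{C}_i}) = d \cdot 1_{C_i}$. If you prefer to repair your argument rather than adopt that one, the accounting must separate the edges of $\pi_d(X)$ whose chains are genuinely crossed by $X$ (cycles supported on those edges do lift into $X$, and independence of the lifts follows from $T$) from the degenerate edges, each of which forces a cycle of $X$ confined to the interior of its chain (any simple cycle of $X$ through an edge of a non-crossed chain must stay inside that chain, since the interior is reachable only through the two endpoints) and hence contributes its own independent unit to $h_1(X)$; adding the two contributions gives $h_1(\pi_d(X)) \leq h_1(X) = k-1$. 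Some such case analysis of the degenerate configurations is unavoidable --- it is the main technical content of the proposition --- and your proposal currently assumes it away.
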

\begin{proof}
The bound on the number of edges will follow from \Cref{lem:maxnumedgesandhomology} once we argue that all the elements in the image have $h_1 \leq k - 1$. So, let $X \in P_k^*( B_{r,d} (G)) $, and let $1_{C_1}, \ldots, 1_{C_m}$ be a basis for the cycle space of $\pi_d(X)$, where $1_Z$ is the indicator function of any set $Z \subseteq E(G)$. For each $i = 1, \ldots, m$, define $\widetilde{C}_i$ as a lift of $C_i$. %
We will show that the $1_{\widetilde{C}_i}$ are independent, by showing that any linear dependence gives a corresponding dependence between the $1_{C_i}$. Set $E_d = E( B_{r,d}(G))$. Define a linear map $T : \mathbb{R}^{E_d} \to \mathbb{R}^{E(G)}$, which, for each $e \in E(G)$, adds up the values along all edges of $B_{r,d}(e)$ and sets that as the value of $e$. In particular, $T( 1_{\widetilde{C}_i}) = d 1_{C_i}$. Suppose that we had that $0 = \sum_{i = 1}^m a_i 1_{\widetilde{C}_i}$, as functions on $E_d$. Then by applying $T$ to this equation, we obtain $0 = \sum_{i = 1}^m a_i d 1_{C_i}$, which implies that $a_i = 0$ for $i = 1, \ldots, m$. Hence $m \leq h_1(X)$, which implies that $m \leq k - 1$, so $h_1( \pi_d(X)) = m \leq k -1$.
\end{proof}

\begin{prop}\label{prop:maximizersinimagearecertificates}
The elements of $\im(\pi_d)$ that have $|V(G)| + k - 2$ edges are the elements of $P_k^*(G)_m$. 
\end{prop}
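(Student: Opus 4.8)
The plan is to prove the asserted set equality by showing both inclusions. The inclusion $P_k^*(G)_m \subseteq \{Y \in \im(\pi_d) : |Y| = |V(G)|+k-2\}$ is immediate: any $Y \in P_k^*(G)_m$ lies in $P_k^*(G)$, so by \Cref{lem:lifts} it admits a lift $\widetilde Y \in P_k^*(B_{r,d}(G))$ with $\pi_d(\widetilde Y) = Y$, whence $Y \in \im(\pi_d)$; and $|Y| = |V(G)|+k-2$ holds by definition of $P_k^*(G)_m$.

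For the reverse inclusion, fix $Y = \pi_d(X)$ with $X \in P_k^*(B_{r,d}(G))$ and $|Y| = |V(G)|+k-2$. First I would pin down the coarse structure of $G[Y]$: \Cref{prop:boundsonprojectionelements} gives $h_1(G[Y]) \le k-1$, and feeding this together with the hypothesis $|Y| = |V(G)|+k-2$ into rank--nullity (\cref{prop:ranknullity}) forces every inequality in sight to be tight, so $G[Y]$ is connected, spans $G$, and satisfies $h_1(G[Y]) = k-1$. It then suffices to prove that $G[Y]$ is bridgeless: that makes $Y \in \mathscr{E}_2(G)$, hence $Y \in P_k^*(G)$, and the edge count places $Y$ in $P_k^*(G)_m$.

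The crux — and the step I expect to be the main obstacle — is the bridgelessness of $G[Y]$. Suppose $e = \{u,v\} \in Y$ is a bridge, so that $G[Y]-e$ splits into components $A \ni u$ and $B \ni v$ with edge sets $Y_A$, $Y_B$ and $V(A) \sqcup V(B) = V(G)$. The essential observation is that in $B_{r,d}(G)$ the chains met by $X$ are precisely those over the edges of $Y = Y_A \sqcup \{e\} \sqcup Y_B$ (any $G$-edge between $A$ and $B$ other than $e$ is not in $Y$, so its chain is disjoint from $X$), and the single chain $B_{r,d}(e)$ is the only link joining the chains over $Y_A$ to those over $Y_B$. Because $B_{r,d}(e)$ is a ``series'' structure — for $d \ge 2$ its internal vertices are successive cut vertices, so it contains no two internally disjoint $u$--$v$ paths, and for $d = 1$ it is merely a bundle of parallel edges between $u$ and $v$ — no simple cycle of $B_{r,d}(G)[X]$ can meet chains over both $Y_A$ and $Y_B$, and any simple cycle using an edge of $B_{r,d}(e)$ must be a digon contained in $B_{r,d}(e)$. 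Hence every simple cycle of $X$ lies wholly inside the chains over $Y_A$, inside $B_{r,d}(e)$, or inside the chains over $Y_B$; writing $X = X_A \sqcup X_e \sqcup X_B$ for the corresponding splitting of $X$'s edges, $X_A$ and $X_B$ are each bridgeless with $\pi_d(X_A) = Y_A$ and $\pi_d(X_B) = Y_B$, while $X_e$ is nonempty and is a union of digon bundles, so its circuit rank is at least $1$.

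I would then finish with a circuit-rank count. The cycle-space decomposition just described gives $h_1(X) = h_1(X_A) + h_1(X_e) + h_1(X_B)$, while $h_1(X) = k-1$ since $X \in P_k^*(B_{r,d}(G))$; applying \Cref{prop:boundsonprojectionelements} to $X_A$ and $X_B$ gives $h_1(X_A) \ge h_1(G[Y_A]) = h_1(A)$ and $h_1(X_B) \ge h_1(B)$; and removing the bridge $e$ does not change circuit rank, so $h_1(A) + h_1(B) = h_1(G[Y]-e) = h_1(G[Y]) = k-1$. Combining, $k-1 \ge (k-1) + 1$, a contradiction, so $G[Y]$ has no bridge and the proof is complete. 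The delicate points are exactly those in the previous paragraph: verifying carefully that a simple cycle of $X$ truly cannot traverse the chain over a bridge (including the degenerate case $d = 1$), and that this localizes at least one unit of circuit rank inside $B_{r,d}(e)$, which is precisely what makes the budget $h_1(X) = k-1$ too tight.
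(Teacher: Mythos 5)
Your argument is correct and reaches the paper's conclusion by the same overall strategy---first force $h_1(G[Y])=k-1$ with $G[Y]$ connected and spanning (you re-derive \Cref{lem:maxnumedgesandhomology} via \cref{prop:ranknullity}; the paper just cites it), then derive a contradiction from a hypothetical bridge $e$ by showing $h_1(X)\geq k$---but your implementation of the bridge step is genuinely different. The paper exhibits $k$ independent cycles directly: a simple cycle of $X$ through some $\widetilde e\in B_{r,d}(e)\cap X$ (which exists because $X$ is bridgeless) together with lifts of a cycle basis of $G[Y]$, none of which touch $B_{r,d}(e)$. You instead cut $X$ along the chain $B_{r,d}(e)$ into three edge-disjoint pieces $X_A\sqcup X_e\sqcup X_B$, observe that every simple cycle of $X$ lies wholly in one piece (because the internal vertices of the chain are cut vertices and no other edge of $Y$ joins $V(A)$ to $V(B)$), and add up circuit ranks. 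Your route is longer but makes the localization of the ``extra'' unit of $h_1$ at the bridge completely explicit, and your case analysis of cycles traversing the chain (including $d=1$) is a point the paper leaves implicit. One small bookkeeping caveat: when you invoke \Cref{prop:boundsonprojectionelements} to get $h_1(X_A)\geq h_1(G[Y_A])$ and $h_1(X_B)\geq h_1(G[Y_B])$, note that the proposition as stated applies only to elements of $P_k^*(B_{r,d}(G))$ and bounds $h_1$ of the image by $k-1$; what you actually need is the intermediate inequality $h_1(\pi_d(W))\leq h_1(W)$ established inside its proof, applied to the edge sets $X_A$ and $X_B$ (using $\pi_d(X_A)=Y_A$, $\pi_d(X_B)=Y_B$). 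That argument does transfer verbatim, but you should cite the inequality rather than the proposition's statement. With that adjustment the proof stands.
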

\begin{proof}
Suppose $K = \pi_d(X)$ has $|K| = |V(G)| + k - 2$. By \Cref{prop:boundsonprojectionelements}, $h_1(K) \leq k -1$. By \Cref{lem:maxnumedgesandhomology}, $h_1(K) \geq k - 1$, so $h_1(K) = k-1$. To finish the claim, we prove that $K$ has no bridge edges. 
But suppose that $e$ is a bridge edge, and let $\widetilde{e}$ be any edge in $B_{r,d}(e) \cap X$. $X$, having no bridges, must have a simple cycle $C$ that contains $\widetilde{e}$. Now, let $C_1, \ldots, C_{k-1}$ be a cycle basis for $K$. Since $e$ is a bridge edge, none of the $C_i$ contain $e$, hence their lifts do not contain $\widetilde{e}$. This implies that $C$ is not in the span of $C_1, \ldots, C_{k-1}$, hence $h_1 ( X) \geq k$, which contradicts $X \in P_k^*(B_{r,d}(G))$. 
This shows that every element in $\im(\pi_d)$ with $|V(G)| + k - 2$ edges is in $P_k^*(G)_m$. Since we already showed in \Cref{lem:lifts} that every element of $P_k^*(G)$ has a lift to an element of $P_k^*( B_{r,d} (G))$, the claim follows.
\end{proof}

For the probability concentration lemma we will need the upper and lower bounds on the $N_{\lambda}$ mass above an $m$ edge element of $\im( \pi_d)$ provided by the next two propositions:

\begin{prop}\label{prop:kpartitionfibermasslower}
Let $K \in P_k^*(G)_m$. Then $N_\lambda ( \pi_d^{-1} (K) )  \geq (r \lambda) ^{ d ( |V(G)| + k - 2)}$.
\end{prop}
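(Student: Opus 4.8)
The plan is to obtain the bound by restricting the sum defining $N_\lambda(\pi_d^{-1}(K))$ to the \emph{lifts} of $K$ (\Cref{defn:lifts}); these alone already contribute exactly $(r\lambda)^{d(|V(G)|+k-2)}$, so the stated inequality is tight once only lifts are counted (there may, of course, be further preimages). First I would record that, since $K\in P_k^*(G)_m$, we have $|K|=|V(G)|+k-2$ (this is the defining property of the subscript $m$); nothing more about $K$ is needed here, although by \Cref{lem:maxnumedgesandhomology} one also knows $G[K]$ is connected and spans $G$.

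Next I would count the lifts. A lift $\widetilde{K}$ of $K$ is obtained by replacing each edge $e\in K$ with a simple path of length $d$ running along the chain of dipoles $B_{r,d}(e)$. Such a path traverses the $d$ order-$r$ dipoles of $B_{r,d}(e)$ in order and, at each dipole, selects exactly one of the $r$ parallel edges; conversely every such choice of one edge per dipole yields a distinct simple path of length $d$ through $B_{r,d}(e)$. Hence there are $r^d$ path choices for each $e\in K$, and since these choices are made independently over the $|K|$ edges of $K$, there are $r^{d|K|}=r^{d(|V(G)|+k-2)}$ distinct lifts $\widetilde{K}$.

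By \Cref{lem:lifts} each such $\widetilde{K}$ lies in $P_k^*(B_{r,d}(G))$, and by construction $\widetilde{K}\cap B_{r,d}(e)\neq\emptyset$ precisely when $e\in K$, so $\pi_d(\widetilde{K})=K$, i.e.\ $\widetilde{K}\in\pi_d^{-1}(K)$. Each lift consists of exactly $d|K|$ edges, so $N_\lambda(\widetilde{K})=\lambda^{d|K|}$. Summing over the $r^{d|K|}$ distinct lifts (and discarding all other preimages, which only increase the total) gives
\[
N_\lambda\bigl(\pi_d^{-1}(K)\bigr)\;\geq\;r^{d|K|}\,\lambda^{d|K|}\;=\;(r\lambda)^{d(|V(G)|+k-2)}.
\]

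There is no substantive obstacle in this argument: it is a direct count of lifts combined with the already-established surjectivity lemma (\Cref{lem:lifts}). The only points needing (minimal) care are that distinct per-dipole edge selections produce genuinely distinct lifts, and that each lift is a bona fide element of $\pi_d^{-1}(K)$ with exactly $d|K|$ edges — both of which are immediate from the construction of $B_{r,d}(G)$ and of a lift.
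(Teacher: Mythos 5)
Your proposal is correct and follows essentially the same route as the paper: count the $r^{d|K|}$ lifts of $K$, note each has exactly $d|K|$ edges and hence $N_\lambda$-mass $\lambda^{d|K|}$, and discard the remaining preimages to obtain the lower bound. The extra care you take in verifying that each lift lies in $\pi_d^{-1}(K)$ via \Cref{lem:lifts} is consistent with (and slightly more explicit than) the paper's argument.
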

\begin{proof}
Since $K$ has $|V(G)| + k - 2$ edges, and along each $B_{r,d}(e)$ there are $r^d$ simple paths, there are $r^{d ( |V(G) + k - 2)}$ lifts obtained by choosing one of those paths for each edge. Since the length of each such lift is $d ( |V(G) + k - 2|)$, the total $N_{\lambda}$ mass is $r^{d ( |V(G)| + k - 2)} \lambda^{d ( |V(G) + k - 2)} = (r \lambda) ^{ d ( |V(G)| + k - 2)}$. Since there may be other preimages, as shown in \Cref{fig:DegeneratingAndNonDegeneratingDual3s}, we only get a lower bound.
\end{proof}

\begin{prop}\label{prop:kpartitionfibermassupper}
Let $K \in \im(\pi_d)$, with $|E(K)| = m$. Assume $\lambda \in (0,1]$ and assume $r$ is such that that $r \lambda \geq 2$. Then $N_{ \lambda} ( \pi_d^{-1} (K) ) \leq 2^m r^{2km} d^{2km} ( r \lambda)^{dm}$.
\end{prop}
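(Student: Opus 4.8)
The plan is to expand $N_{\lambda}(\pi_d^{-1}(K)) = \sum_{X \in \pi_d^{-1}(K)} \lambda^{|X|}$ and to bound it by analyzing how an $X \in P_k^*(B_{r,d}(G))$ with $\pi_d(X) = K$ can distribute its edges among the $dm$ dipoles lying above the $m$ edges of $K$; since $\pi_d(X) = K$, we have $X \subseteq \bigcup_{e \in K} B_{r,d}(e)$. For each such dipole let $x_i \in \{0,1,\dots,r\}$ be the number of its $r$ parallel edges used by $X$, and call the dipole \emph{fat} if $x_i \ge 2$, \emph{thin} if $x_i = 1$, and \emph{empty} if $x_i = 0$. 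The substance of the proof is two structural observations that together keep $\pi_d^{-1}(K)$ from being large despite the complicated degenerations pictured in \Cref{fig:DegeneratingAndNonDegeneratingDual3s}.

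First, few dipoles can be fat: deleting all but one of the parallel edges of a fat dipole leaves the graph connected and lowers $h_1$ by exactly the number of deleted edges, so performing this for every fat dipole shows $\sum_{i \text{ fat}}(x_i - 1) \le h_1(X) = k - 1$; hence there are at most $k-1$ fat dipoles and they carry at most $2(k-1)$ edges in total. Second, the non-fat dipoles of a chain occur in \emph{uniform runs}: every component of $B_{r,d}(G)[X]$ is $2$-edge-connected, hence bridgeless, so an internal vertex of a chain cannot have $X$-degree $1$; for two consecutive non-fat dipoles $i, i+1$ of a chain, the vertex between them has $X$-degree $x_i + x_{i+1} \le 2$, which therefore cannot equal $1$, so $x_i = x_{i+1}$. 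Thus on each maximal run of consecutive non-fat dipoles in a chain (the runs being delimited by the fat dipoles and by the two endpoints of the chain), $X$ either uses no edge at all or uses exactly one edge in every dipole of the run, the only freedom in the latter case being which of the $r$ parallel edges to take in each dipole; a run of length $\ell$ therefore contributes a factor $1 + (r\lambda)^{\ell}$ to the weighted count.

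Granting these, I would bound
\[
N_{\lambda}(\pi_d^{-1}(K)) \;\le\; \sum_{\mathcal F}\ \lambda^{|\mathcal F|}\ \prod_{\text{runs of }\mathcal F}\bigl(1 + (r\lambda)^{\ell}\bigr),
\]
where $\mathcal F$ ranges over \emph{admissible fat configurations} --- a choice of at most $k-1$ of the $dm$ dipoles together with a set of between $2$ and $r$ of the parallel edges in each, using at most $2(k-1)$ edges altogether --- the inner product being over the maximal non-fat runs determined by $\mathcal F$, and $\ell$ the length of the run. Every $X \in \pi_d^{-1}(K)$ arises from exactly one pair $(\mathcal F, \text{run choice})$ by the two observations (and dropping the remaining global constraints only enlarges the sum). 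Since $r\lambda \ge 2$, we have $1 + (r\lambda)^{\ell} \le 2(r\lambda)^{\ell}$ for $\ell \ge 1$; a configuration with $F$ fat dipoles creates at most $F + m$ runs whose lengths sum to $dm - F$, so the inner product is at most $2^{F+m}(r\lambda)^{dm-F} = 2^{m}(r\lambda)^{dm}\bigl(2/(r\lambda)\bigr)^{F} \le 2^{m}(r\lambda)^{dm}$, while $\lambda^{|\mathcal F|} \le 1$ because $\lambda \le 1$. Hence each fat configuration contributes at most $2^{m}(r\lambda)^{dm}$, and it remains to count them: there are at most $\sum_{j \le k-1}\binom{dm}{j}$ placements and, for each, at most $\prod_{i}\binom{r}{x_i} \le r^{\sum_i x_i} \le r^{2(k-1)}$ ways to select the edges, giving $N_{\lambda}(\pi_d^{-1}(K)) \le \bigl(\sum_{j \le k-1}\binom{dm}{j}\bigr) r^{2(k-1)} 2^{m}(r\lambda)^{dm}$; a direct check that $\bigl(\sum_{j \le k-1}\binom{dm}{j}\bigr) r^{2(k-1)} \le r^{2km} d^{2km}$ completes the proof.

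The part I expect to be the real work is the uniform-run dichotomy and the bookkeeping it enables: one must verify that bridgelessness really forces uniformity on every run --- including runs abutting the endpoints of a chain, where the bounding vertices are genuine vertices of $G$ and impose no local constraint --- and one must correctly bound the number of runs a fat configuration produces. The remaining arithmetic is routine: the inequality $1 + (r\lambda)^{\ell} \le 2(r\lambda)^{\ell}$, the collapse of $2^{F+m}(r\lambda)^{dm-F}$ to $2^{m}(r\lambda)^{dm}$ using $r\lambda \ge 2$, and the final comparison, for which one uses $r \ge 2$ and bounds $\sum_{j \le k-1}\binom{dm}{j}$ by $2^{dm}$ when $dm < k$ and by $(dm+1)^{k-1}$ otherwise.
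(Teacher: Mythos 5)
Your argument is correct in its structural core but reaches the bound by a genuinely different route from the paper, and the one place where it is not yet a proof is the final count. The paper works chain by chain: for each $e \in K$ it bounds the total $N_\lambda$-mass of the possible local configurations $X \cap B_{r,d}(e)$ by splitting into those that contain one of the $r^d$ simple paths across the chain (which can then carry at most $k-1$ further edges, since each extra edge raises $h_1$) and those that do not (which then have at most $2(k-1)$ edges), obtaining the per-chain bound $\binom{rd}{k-1}r^d\lambda^d + \binom{rd}{2(k-1)}\lambda^2 \le 2(rd)^{2k}(r\lambda)^d$, and then multiplies over the $m$ chains using $N_\lambda(X)=\prod_e N_\lambda(X\cap B_{r,d}(e))$ and $\sum\prod\le\prod\sum$. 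Your global decomposition into fat dipoles and uniform runs is finer, and both structural claims are sound: the $h_1$ argument does give $\sum_{\mathrm{fat}}(x_i-1)\le k-1$, bridgelessness does force $x_i=x_{i+1}$ across an internal vertex joining two non-fat dipoles (a degree-one vertex would make its edge a bridge), and the run bookkeeping (at most $F+m$ runs of total length $dm-F$, hence a factor at most $2^{m}(r\lambda)^{dm}$ per fat configuration, using $r\lambda\ge 2$) is correct.

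The gap is the concluding ``direct check.'' First, $\prod_i\binom{r}{x_i}\le r^{2(k-1)}$ counts edge selections for one \emph{fixed} size vector $(x_i)$; you must sum over all admissible vectors, or better, note that a fat configuration is determined by its edge set, a subset of the $rdm$ parallel-edge slots of size at most $2(k-1)$, so the number of fat configurations is at most $\sum_{s\le 2(k-1)}\binom{rdm}{s}$. Second, the inequality you then need, namely that this count is at most $(rd)^{2km}$, genuinely fails under the case split you propose: already for $d=m=1$, $r=2$, $k\ge 3$ the polynomial bound $(2k-1)(rdm)^{2k-2}=(2k-1)2^{2k-2}$ exceeds $(rd)^{2km}=2^{2k}$. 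The correct dichotomy is on $rd$ versus $2k$: if $rd\le 2k$ use $\sum_s\binom{rdm}{s}\le 2^{rdm}\le 4^{km}\le(rd)^{2km}$, while if $rd>2k$ use $\sum_{s\le 2k-2}\binom{rdm}{s}\le(2k-1)(rdm)^{2k-2}$ together with $(rd)^{2k(m-1)+2}\ge(2k+1)^{2k(m-1)+2}\ge(2k-1)m^{2k-2}$ (the last step from $(2k+1)^{m-1}\ge m$). With that repair your proof goes through; as written, the last step is asserted rather than proved.
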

\begin{proof}
We begin by bounding the number of possible configurations of $X \cap B_{r,d}(e)$ for any $e \in E(G)$, and $X \in P_k^*( B_{r,d}(G))$. First, observe that there are $r^d$ simple paths across $B_{r,d}(e)$. We treat two cases, depending on whether or not $X$ contains one of these paths.

If $X$ contains one of those paths then $X$ may contain at most $k - 1$ additional edges from $B_{r,d}(e)$, because each one increases the rank of $h_1$. Thus, $r^d { rd \choose k - 1}$ upper bounds the number of configurations which includes a path through $B_{r,d}(e)$. Moreover, each such configuration contains at least $d$ edges, hence each one has $N_{\lambda}$ mass at most $\lambda^d$, as $\lambda \leq 1$.

Alternatively, $X \cap B_{r,d}(e)$ may not contain any of the simple paths crossing $B_{r,d}(e)$. However, in this case, we have that $|X \cap B_{r,d}(e)| \leq 2(k-1)$, since every pair of edges will increase $h_1$ by one, and thus an upper bound on the number of such configurations is ${ rd \choose 2(k - 1) }$. Moreover, each one of these configurations contains at least $2$ edges, so has mass at most $\lambda^2$. Thus, the total $N_{\lambda}$ mass obtained from this case is ${ rd \choose 2(k - 1) }\lambda^2$.

Combining these two cases, we have a bound for the total $N_{\lambda}$ mass of $\pi_d^{-1}(K)$ restricted to $B_{r,d}(e)$, namely:  $$\sum_{X \in \pi_d^{-1}(K)} N_{\lambda} ( X \cap B_{r,d}(e) ) \leq  { rd \choose k - 1} r^d \lambda^d + { rd \choose 2(k - 1) }\lambda^2 \leq  2 (rd)^{2k} ( r \lambda)^{d}.$$
Here, the last inequality follows from $\max ( {rd \choose k - 1}, {rd \choose 2(k - 1) } ) \leq (rd)^{2k}$ and from $r \lambda \geq 2 \geq \lambda^2$. To determine $N_{\lambda}(\pi_d^{-1}(K))$, we first observe that %
$N_{\lambda}(X) = \lambda^{|X|} = \prod_{e \in E(K) } \lambda^{|X \cap E(B_{r,d}(e))|} =  \prod_{e \in E(K) } N_{\lambda} ( X \cap B_{r,d}(e))$ for any $X \in \pi_d^{-1}(K)$. From this, the result follows:
\begin{align*}
    N_{\lambda} ( \pi_d^{-1}(K) ) &= \sum_{X \in \pi_d^{-1}(K) } N_{\lambda}(X) \\
    &= \sum_{X \in \pi_d^{-1}(K) } \prod_{e \in E(K) } N_{\lambda} ( X \cap B_{r,d}(e)) \\
    &\leq  \prod_{e \in E(K) } \sum_{X \in \pi_d^{-1}(K) }N_{\lambda} ( X \cap B_{r,d}(e)) \\
    &\leq \prod_{e \in E(K) }  2 (rd)^{2k} ( r \lambda)^{d} \\
    &=  2^m (rd)^{2km} ( r \lambda)^{dm}
\end{align*} 
\end{proof}

We now prove the probability concentration lemma necessary for applying the lucky guess algorithm, \Cref{lem:luckyguess}:

\begin{lem}[Probability concentration lemma]\label{prop:probabilityconcentrationkparts}
Let $G = (V,E)$ have $n = |V|$. Let $\lambda \in (0,1]$ and fix an integer $r$ such that $r \lambda \geq 2$. Assuming that $P_k^*(G)_m$ is non-empty, then for $d = 4 \ceil{(\frac{ a + 2n^2 + 2k n^2( \log(r) + 1) }{\log(r \lambda)})^2}$ the probability under $\nu_{\lambda}$ that an element of $P_k^*( B_{r,k} (G))$ maps via $\pi_d$ to an element of $P_k^*(G)_m$ is at least $\frac{2^a}{1 + 2^a}$. %

\end{lem}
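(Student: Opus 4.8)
The plan is to reduce the statement to a single mass inequality and then invoke the two packaging lemmas \Cref{lem:polylarge} and \Cref{lem:HgeqDN}. Write $\mathscr{M} = P_k^*(G)_m$, and set $H = N_\lambda\bigl(\pi_d^{-1}(\mathscr{M})\bigr)$ and $N = N_\lambda\bigl(\pi_d^{-1}(\im(\pi_d)\setminus\mathscr{M})\bigr)$. Since $\mathscr{M}\subseteq\im(\pi_d)$ by \Cref{lem:lifts} and \Cref{prop:maximizersinimagearecertificates}, and since $\{\pi_d^{-1}(K):K\in\im(\pi_d)\}$ partitions $P_k^*(B_{r,d}(G))$, we have $N_\lambda\bigl(P_k^*(B_{r,d}(G))\bigr)=H+N$, so the probability we must bound from below is exactly $H/(H+N)$. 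By \Cref{lem:HgeqDN} with $D=2^a$, it then suffices to prove $H\ge 2^a N$.

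First I would lower-bound $H$: the hypothesis that $\mathscr{M}\ne\emptyset$ lets us fix some $K_0\in\mathscr{M}$, and \Cref{prop:kpartitionfibermasslower} gives $H\ge N_\lambda(\pi_d^{-1}(K_0))\ge (r\lambda)^{d(n+k-2)}$. Next I would upper-bound $N$. By \Cref{prop:boundsonprojectionelements} every element of $\im(\pi_d)$ has at most $n+k-2$ edges, and by \Cref{prop:maximizersinimagearecertificates} those with exactly $n+k-2$ edges are precisely the elements of $\mathscr{M}$; hence every $K\in\im(\pi_d)\setminus\mathscr{M}$ has $m:=|E(K)|\le n+k-3$. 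Since $\lambda\in(0,1]$ and $r\lambda\ge 2$, \Cref{prop:kpartitionfibermassupper} applies and yields $N_\lambda(\pi_d^{-1}(K))\le 2^{m}(rd)^{2km}(r\lambda)^{dm}=\bigl(2(rd)^{2k}(r\lambda)^{d}\bigr)^{m}$; the base is $\ge 1$, so the right-hand side is largest at $m=n+k-3$. As the elements of $\im(\pi_d)$ are subsets of $E(G)$ and $|E(G)|\le n^2$, there are at most $2^{n^2}$ of them, so $N\le 2^{n^2}\cdot 2^{\,n+k-3}(rd)^{2k(n+k-3)}(r\lambda)^{d(n+k-3)}$.

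Dividing the target inequality $H\ge 2^aN$ through by the common factor $(r\lambda)^{d(n+k-3)}$, it remains to establish
\[
(r\lambda)^{d}\;\ge\;2^{a}\cdot 2^{n^2}\cdot 2^{\,n+k-3}\cdot r^{2k(n+k-3)}\cdot d^{\,2k(n+k-3)}.
\]
I would get this from \Cref{lem:polylarge} with $q=r\lambda\ge 2$, $e=2k(n+k-3)\in\mathbb{N}$, and $S=2^{\,a+n^2+(n+k-3)}r^{2k(n+k-3)}\ge 1$: a short computation gives $\log_2 S+e = a+n^2+(n+k-3)+2k(n+k-3)(\log_2 r+1)$, and using the crude bound $n+k-3\le n^2$ (valid whenever $P_k^*(G)\ne\emptyset$, since then $k-1=h_1\le |E(G)|-n+1\le 2n-5$) this is at most $a+2n^2+2kn^2(\log_2 r+1)$. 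Hence the prescribed $d=4\ceil{\bigl(\tfrac{a+2n^2+2kn^2(\log r+1)}{\log(r\lambda)}\bigr)^2}$ satisfies $d\ge 2$ and $d\ge 4\bigl(\tfrac{\log_2 S+e}{\log_2 q}\bigr)^2$, so \Cref{lem:polylarge} gives $(r\lambda)^{d}\ge S\,d^{e}$, which is the displayed bound; therefore $H\ge 2^aN$ and the lemma follows.

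The hard part is the upper bound on $N$, and it has already been done: unlike the simple-cycle setting of \Cref{section:basichard}, where the only preimages failing to project onto a simple cycle were the $d|E|$ ``collapsed bigons,'' a dual $k$-partition of $B_{r,d}(G)$ can degenerate in genuinely many ways (cf.\ \Cref{fig:DegeneratingAndNonDegeneratingDual3s}), so one must control the total $N_\lambda$-mass above \emph{every} non-maximal element of $\im(\pi_d)$ simultaneously. This is exactly the content of \Cref{prop:boundsonprojectionelements}, \Cref{prop:maximizersinimagearecertificates}, and \Cref{prop:kpartitionfibermassupper}; once those are in hand, the present lemma is just the bookkeeping with \Cref{lem:polylarge} and \Cref{lem:HgeqDN} outlined above.
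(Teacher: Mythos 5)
Your proof is correct and follows essentially the same route as the paper's: lower-bound the $N_\lambda$-mass over $P_k^*(G)_m$ via \Cref{prop:kpartitionfibermasslower}, upper-bound the mass over the (at most $2^{n^2}$) non-maximal images, each with at most $n+k-3$ edges, via \Cref{prop:maximizersinimagearecertificates} and \Cref{prop:kpartitionfibermassupper}, and close with \Cref{lem:polylarge} and \Cref{lem:HgeqDN}. Your write-up is in fact slightly more careful than the paper's (explicitly noting that the fibers partition the state space, that the fiber-mass bound is monotone in $m$, and why $n+k-3\le n^2$), but the argument is the same.
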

\begin{proof}
Let $M = n + K - 2$. %
Since elements of $\im( \pi_d) \setminus P_k^*(G)_m$ have $\leq M -1 $ edges (\Cref{prop:maximizersinimagearecertificates}), and $|Im(\pi_d)| \leq 2^{n^2}$, we have $N_d : = N_{\lambda} ( \pi_d^{-1} ( 2^{E(G)} \setminus P_k^*(G)_m ) ) \leq 2^{n^2}  2^{M-1} (rd)^{2k(M-1)} ( r \lambda)^{d(M-1)}$ (\cref{prop:kpartitionfibermassupper}). We also have that $H_d: = N_{\lambda} (\pi_d^{-1}(P_k^*(G)_m)) \geq ( \lambda r) ^{dM}$ (\cref{prop:kpartitionfibermasslower}). %
Using \Cref{lem:polylarge} and $M \leq n^2$,  %
for $d \geq 4 (\frac{ \log(S) + 2 k n^2 }{\log(q)})^2$, where $S = 2^a 2^{2n^2} r^{2kn^2}$ and $q = r \lambda$,
we have that $H_d \geq 2^a N_d$. Hence, by \Cref{lem:HgeqDN}, this $d$ suffices for $\frac{ H_d}{N_d + H_d} \geq \frac{2^a}{1 + 2^a }$.
\end{proof}

\subsubsection{Uniformly sampling connected $k$-partitions is intractable}

In this section we prove intractability of sampling dual $k$-partitions, and connect this to the intractability of sampling connected $k$-partitions.

\begin{thm}\label{thm:lambdadualkintractable}
For any fixed $\lambda \in (0,1]$, $\lambda$-sampling connected dual $k$-partitions is intractable on the class of 2-connected planar graphs.
\end{thm}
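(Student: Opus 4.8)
The plan is to apply the Lucky Guess Lemma (\Cref{lem:luckyguess}). I would fix, once and for all, an integer $r$ with $r\lambda\ge 2$ (for instance $r=\lceil 2/\lambda\rceil$), so that $B_{r,d}$ --- the chain-of-order-$r$-dipoles construction --- is the gadget. For the decision problem I take $Q(G)=\mathscr{P}_k^*(G)_m$, whose associated language is \textsc{PlanarMaxEdgesDualkPartition}; by \Cref{prop:GridGraphReduction} this is $\NP$-complete, and since the graphs $G'$ output by that reduction are planar and $2$-connected, it is already $\NP$-complete on the class $\mathscr{C}$ of such $G'$. I set $S(G)=2^{E(G)}$ and let $R(H)=\mathscr{P}_k^*(H)$ carry the weighting $\nu_\lambda$, so $(R,\nu_\lambda)$ is exactly $\lambda$-sampling connected dual $k$-partitions.

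Given $m$, I would set $a=\lceil\log_2 m\rceil$ and let $d=d(m,|G|)$ be the polynomial supplied by \Cref{prop:probabilityconcentrationkparts} for this $a$ and $r$; then $B_m(G):=B_{r,d}(G)$ and $M_m$ evaluates $\pi_d$ (\Cref{defn:projectionmap}). There are three things to verify. First, for fixed $m$, $d$ is polynomial in $|V(G)|$, hence $|B_{r,d}(G)|=O(rd\,|E(G)|)$ and $B_m$ runs in polynomial time; also $R(B_{r,d}(G))=\mathscr{P}_k^*(B_{r,d}(G))\ne\emptyset$, because in every $G'\in\mathscr{C}$ the chain of $k-2$ diamonds together with the edges of an $a$--$b$ path in $G\setminus v$ forms a connected, $2$-edge-connected subgraph of $h_1=k-1$, which lifts by \Cref{lem:lifts}. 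Second, $M_m$ computes $\pi_d$ in polynomial time directly from its definition. Third, probability concentration is precisely \Cref{prop:probabilityconcentrationkparts}: if $\mathscr{P}_k^*(G)_m\ne\emptyset$ and $C\sim\nu_\lambda$ on $\mathscr{P}_k^*(B_{r,d}(G))$, then $\mathbb{P}\big(\pi_d(C)\in\mathscr{P}_k^*(G)_m\big)\ge \tfrac{2^a}{1+2^a}\ge 1-\tfrac1m$ since $2^a\ge m-1$. \Cref{lem:luckyguess} then yields intractability of $\lambda$-sampling connected dual $k$-partitions on $B(\mathscr{C})=\{B_{r,d}(G'):G'\in\mathscr{C}\}$.

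Finally, I would check that $B(\mathscr{C})$ consists of $2$-connected planar graphs, so that intractability on this subclass gives intractability on the whole class. Planarity is clear. For $2$-connectivity: each $G'$ is $2$-connected, since deleting one of the new diamond-chain vertices leaves $a$ and $b$ joined through the $2$-connected graph $G\setminus v$, while deleting a vertex of $G\setminus v$ (in particular $a$ or $b$) leaves everything joined through the diamond chain; and $B_{r,d}$ preserves $2$-connectivity of a $2$-connected graph, because deleting a subdivision vertex of a replaced edge $e$ leaves its two sides joined through $B_{r,d}(G'\setminus e)$ (connected, as $G'$ is $2$-edge-connected), and deleting an original vertex $u$ leaves the thickened half-edges at $u$ hanging off the still-connected $G'\setminus u$.

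The genuine obstacle is upstream of this assembly, in \Cref{prop:probabilityconcentrationkparts} and its supporting propositions: because a dual $k$-partition of $B_{r,d}(G)$ can project onto a non-maximal edge set in many ways (cf.\ \Cref{fig:DegeneratingAndNonDegeneratingDual3s}), one must bound the total $N_\lambda$-mass of all such preimages (\Cref{prop:kpartitionfibermassupper}), identify the maximal-weight images with $\mathscr{P}_k^*(G)_m$ (\Cref{prop:maximizersinimagearecertificates}), and dominate the resulting polynomial factors by the exponential lower bound (\Cref{prop:kpartitionfibermasslower}) through \Cref{lem:polylarge}, which is what forces $d$ to be quadratic in $n$. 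Given those estimates, the only real care needed in this theorem is keeping $r$ and $d$ inside the polynomial budget of \Cref{lem:luckyguess} and confirming the gadget insertions do not destroy $2$-connectivity.
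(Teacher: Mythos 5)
Your proposal is correct and follows essentially the same route as the paper: both assemble the ingredients for \Cref{lem:luckyguess} with $B(G)=B_{r,d}(G)$ for $r\lambda\ge 2$, $M=\pi_d$, $Q(G)=\mathscr{P}_k^*(G)_m$ via \Cref{prop:GridGraphReduction}, and probability concentration from \Cref{prop:probabilityconcentrationkparts}. You additionally verify two hypotheses the paper leaves implicit (non-emptiness of $\mathscr{P}_k^*(B_{r,d}(G'))$ and that $B(\mathscr{C})$ stays within the class of $2$-connected planar graphs), which is a welcome tightening but not a different argument.
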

\begin{proof}
We will assemble the ingredients for \Cref{lem:luckyguess}, the Lucky Guess lemma. %
We fix a choice of $a$ so that $\frac{2^a}{2^a + 1} \geq 1 - 1/m$. Define $B$ is by the construction $G \to B_{r,d}(G)$ where $r$ is chosen so that $r \lambda \geq 2$ and $d = \ceil{(\frac{ a + 2n^2 + 2k n^2( \log(r) + 1) }{\log(r \lambda)})^2}$. The map $M$ is given by $\pi_d$, where we have $S(G) = 2^{E(G)}$. %
The problem $Q$ is defined by $Q(G)  = P_k^*(G)_m$, which is $\NP$-complete by \Cref{prop:GridGraphReduction}.
Moreover, by \Cref{prop:probabilityconcentrationkparts}, we have that $\mathbb{P} ( \pi_d(C) \in Q(G) : C \textrm{ is distributed according to  } \nu_{\lambda} \textrm { on } P_k^*( B_{r,d}(G) ) \geq 1 - 1/m$.
Thus, we obtain the result from \Cref{lem:luckyguess}.
\end{proof}

For any fixed $\lambda > 0$, we define a distribution on connected $k$-partitions. 

\begin{computationalproblem}{$\lambda$-sampling connected $k$-partitions}

Input: A graph $G$.

Output: An element of $\mathscr{P}_k(G)$, drawn according to the probability distribution that assigns a partition $P \in \mathscr{P}_k(G)$ weight proportional to $\lambda^{| \cut(P) |}$.

\end{computationalproblem}

Now we state the main theorem of this section:

\begin{thm}
Fix $\lambda \in (0,1]$. Then $\lambda$-sampling connected $k$-partitions is intractable on the class of $2$-connected planar graphs.
\end{thm}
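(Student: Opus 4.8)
The plan is to deduce this from \Cref{thm:lambdadualkintractable} by transporting the problem across the plane duality of \Cref{thm:kpartitionduality:ref}, in the same spirit that \Cref{thm:basicintractbilitypartitions} was obtained from \Cref{prop:SimpleCycleUniformHard}. The point to notice at the outset is that the bijection $D\circ\cut$ does not merely match $\mathscr{P}_k(G)$ with $\mathscr{P}_k^*(G^*)$ as sets: since $D$ is the natural \emph{cardinality-preserving} bijection on edge sets, it sends a partition $P$ of weight $\lambda^{|\cut(P)|}$ to a dual $k$-partition $J = D(\cut(P))$ of weight $\lambda^{|J|} = \lambda^{|\cut(P)|}$. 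So $\lambda$-sampling connected $k$-partitions and $\lambda$-sampling connected dual $k$-partitions are literally the same weighted sampling problem read through this bijection.

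First I would fix $\alpha < 1$ and suppose there is a polynomial-time probabilistic Turing machine $M$ that $\alpha$-almost solves $\lambda$-sampling connected $k$-partitions on the class of $2$-connected planar graphs. Given a $2$-connected plane graph $H$ that is a valid instance of $\lambda$-sampling connected dual $k$-partitions (so $\mathscr{P}_k^*(H) \neq \emptyset$), I would form its plane dual $H^*$. The plane dual of a $2$-connected plane graph is again a $2$-connected plane graph, so $H^*$ lies in the class on which $M$ is assumed to work; moreover $\mathscr{P}_k(H^*) \neq \emptyset$ by \Cref{thm:kpartitionduality:ref}, so $H^*$ is a valid instance for $M$, whence $M(H^*)$ accepts at least half the time. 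Run $M$ on $H^*$ to obtain $P \in \mathscr{P}_k(H^*)$ sampled from a distribution within total variation $\alpha$ of the one proportional to $\lambda^{|\cut(P)|}$, and return $D(\cut(P))$; here I use $(H^*)^* = H$ so that, by \Cref{thm:kpartitionduality:ref}, $D \circ \cut \colon \mathscr{P}_k(H^*) \to \mathscr{P}_k^*(H)$ is a polynomial-time computable bijection.

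Next I would verify that this procedure $\alpha$-almost solves $\lambda$-sampling connected dual $k$-partitions on $H$. By the cardinality-preservation remark above, the pushforward under $D \circ \cut$ of the weighting $\lambda^{|\cut(\cdot)|}$ on $\mathscr{P}_k(H^*)$ is exactly $N_\lambda$ on $\mathscr{P}_k^*(H)$ (\Cref{defn:lambdameasures}), and a bijection preserves total variation distance, so the law of $D(\cut(P))$ is within $\alpha$ of $\nu_\lambda$ on $\mathscr{P}_k^*(H)$. Constructing $H^*$, invoking $M$, and applying $D \circ \cut$ each take polynomial time, so the composite is a valid $\alpha$-almost solver for $\lambda$-sampling connected dual $k$-partitions on $2$-connected planar graphs. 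Since $\alpha < 1$ was arbitrary, \Cref{thm:lambdadualkintractable} then forces $\RP = \NP$, and so by \Cref{defn:intractable} the problem is intractable on the class of $2$-connected planar graphs.

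The only delicate point is the bookkeeping that keeps us inside the class of $2$-connected planar graphs after dualizing — the analogue of the observation that $(B_d(G))^*$ is $2$-connected used in the balanced case — so that the hypothesized connected-$k$-partition sampler is actually entitled to run on $H^*$. Everything else is a routine transfer of a weighted distribution along the polynomial-time bijection supplied by \Cref{thm:kpartitionduality:ref}.
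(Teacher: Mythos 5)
Your proposal is correct and matches the paper's proof, which simply states that the result follows as a corollary of \Cref{thm:lambdadualkintractable} via the duality of \Cref{thm:kpartitionduality:ref}. You have merely written out the routine details (weight preservation under the cardinality-preserving bijection $D\circ\cut$, preservation of $2$-connectedness and total variation under plane duality) that the paper leaves implicit.
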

\begin{proof}
This follows as a corollary to \Cref{thm:lambdadualkintractable}, using \Cref{thm:kpartitionduality:ref}.
\end{proof}

\section{The Flip Markov Chain}\label{Section:FlipChain}

In the previous section, we examined the worst case complexity of the partition sampling problem. However, worst case intractability results do not necessarily mean that the problem is intractable on examples of interest, since there can be algorithms which are effective only on certain cases. In this section and the next, we examine the performance of one such algorithm, which is based on Markov chains. 

Markov chains provide a generic means of sampling from prescribed distributions over a state space $\Omega$.  This technique starts with a seed in $\Omega$ and randomly applies perturbations to walk around the space; the more steps in the random walk, the closer the sample is to being distributed according to the \emph{stationary} distribution of the chain rather than the seed point. %
While this approach provides an elegant means of sampling, a mathematical analysis of the \emph{mixing time} (see \Cref{section:MCbackground}) is needed to understand how many steps one must take before the output can be trusted as representative of the stationary distribution.  Without control over the mixing time, it is possible that the sample did not travel far from the initial seed, %
potentially yielding a biased sample and distorted measurements.

In this section, we discuss a commonly-used Markov chain for sampling from $P_2(G)$, which we call the \emph{flip walk}.  This chain has seen wide use in the analysis of gerrymandering \cite{mattingly,mattingly1,chikina_assessing_2017}. We know from our analysis in the previous sections that one cannot hope for this chain to mix rapidly on general graphs, unless one also believes that $\RP = \NP$. %
To make this more concrete in this section we show that the gadgets used in our complexity proofs directly yield bottlenecks impeding the mixing of the flip chain. Later, in \cref{Section:Empirical}, we will use ideas from this section to analyze the mixing of the flip chain on examples relevant to redistricting.%

\subsection{Related work}\label{section:markovchainrelatedwork}

The flip walk is analogous to Glauber dynamics and Potts models. Contiguity of the blocks is not usually considered in these physical settings, and it is part of what makes sampling districting plans challenging. A difficulty in analyzing the combinatorics of contiguity constraints is that it is defined through global, rather than local interactions; a physical model with similar challenges is the self-avoiding walk, which we consider further in (\cref{Section:GridGraph}). We now list a few works that have studied Markov chains similar to the flip walk:

\subsubsection{Sampling min-cuts, and cuts according to boundary length}

A Markov chain with similar proposal moves as in \Cref{defn:flipwalk} was studied in \cite{bezakova2016integrating}, but restricted to a state space of $st$-cuts instead of connected $k$-partitions. They show that this Markov chain mixes slowly, even if the underlying graph is of bounded treewidth. They prove a tree-width fixed parameter tractability results for the counting and sampling problems they consider which, similarly to our work in \Cref{Section:PositiveResults}, build on Courcelle's theorem and are based on dynamic programming. Their results therefore share some similarities with ours, except that we studied \emph{connected} $2$-partitions weighted by a function of the edge-cut, whereas they studied two different cases: min cuts, and \emph{all} cuts weighted by a function of the edge-cut. The example in their section $7.4$ has some similar features to our example in \Cref{fig:torpid_mixing}.

\subsubsection{Literature on sampling $st$-paths}

Another place in the literature where the flip walk appears is in \cite{montanari2015sampling}, where the problem of sampling simple $st$-paths using the flip walk is studied. Their paper provides another example where there is a bottleneck \cite[Theorem 7]{montanari2015sampling} and gives a proof of ergodicity for their version of the flip chain (which is restricted to $st$-paths for fixed $s$ and $t$). They also make the observation that if the $st$-path flip chain on the grid graph is restricted to paths that are \emph{monotone} in one direction, then the flip walk is rapidly mixing on that restricted state space. We remark that the techniques based on \cite[Prosition 5.1]{JVV} that we discussed in \Cref{sec:initialintractability} suffice to show that the sampling problem they consider is intractable on any class of graphs closed under the operation of replacing edges with chains of bigons, and where the Hamiltonian $st$-paths problem is NP-complete. Additionally, the techniques we discuss below in \Cref{Section:PositiveResults} should suffice to reduce the simple $st$-path sampling problem to a corresponding counting problem, which will be tractable on certain classes of graphs, such as series parallel graphs or graphs of bounded treewidth. %

The question of sampling simple paths has also received some attention in the literature: \cite{hoens2008counting} proves that a certain Markov chain on simple paths in a complete graph mixes rapidly (Theorem 4.1.2) but that a Metropolis-Hasting's version with weights has bottlenecks (Theorem 4.2.2), and repeats a similar analysis for sampling simple paths in trees (Theorem 4.3.2 and Theorem 4.4.1). The existence of a FPRAS for weighted simple paths on the complete graph, where weights can be set to zero to forbid edges, would imply the existence of a FPRUS for simple paths in any graph, which would imply that $\RP = \NP$ by using the chain of bigons trick from \cite[Prosition 5.1]{JVV} and the $\NP$-completeness of the Hamiltonian path problem; this negatively answers one of the open problems given in the conclusion of \cite{hoens2008counting}. \cite{hoens2008counting} also provides a dynamic program algorithm to count and uniformly sample weighted simple paths in trees and DAGs (Section 5). The undirected case can be extended using Courcelle's theorem, so it is likely that a reasonably implementable fixed-parameter in treewidth algorithm for sampling simple paths exists, perhaps along similar lines to \Cref{Section:PositiveResults}.

\subsection{The Flip Walk}

We put a graph structure on the set of connected $2$-partitions $P_2(G)$ as follows.  
Let $(A,B) \in P_2(G)$. Given any $x \in V(G)$, consider the partition $(A \cup \{x\}, B \setminus \{x\}) = (A',B')$. Provided that $(A',B') \in P_2(G)$, including the case when $(A',B') = (A,B)$, this defines an \emph{edge} between two elements of $P_2(G)$. If $(A',B') \not \in P_2(G)$, that is, if either $A'$ or $B'$ does not induce a connected subgraph of $G$, then we add a self loop to $(A,B)$. Do the same also for $(A \setminus \{x\}, B \cup \{x\})$. This defines a $|V(G)|$-regular graph structure on $P_2(G)$. Given this graph structure, we define the flip walk: %

\begin{defn}[Flip Walk]\label{defn:flipwalk}
The flip walk on $P_2(G)$ is the Markov chain obtained by performing a lazy simple random walk on $P_2(G)$, using the graph structure defined in the previous paragraph. We abuse notation and refer to the Markov chain, the graph, and the set by $P_2(G)$. %
\end{defn}

If $G$ is $2$-connected, then $P_2(G)$ is irreducible \cite{akitaya2019reconfiguration} and hence ergodic. Since every node of $P_2(G)$ has degree $|V(G)|$, the uniform distribution is the stationary distribution for the flip walk on $P_2(G)$. Thus, by standard Markov chain theory \cite{levin_markov_2009}, this flip walk \emph{eventually} produces nearly uniformly distributed elements in $P_2(G)$. However, we will see examples in this section where the flip walk on $P_2(G)$ can take exponential time in $|G|$ %
to generate a nearly uniform sample. %

\subsection{Background: Mixing time of Markov Chains}\label{section:MCbackground}

We make a short digression to review a few notions from Markov chain theory. For details we have left out, see \cite{levin_markov_2009}. Since the goal of our discussion is to give examples where the random walk on $P_2(G)$ mixes slowly, we will recall the notion of \emph{mixing time} in the context of (discrete) Markov chains:

\begin{defn}[Total variation]
Given two probability distributions $\mu$ and $\nu$ on a finite set $\Omega$, the \emph{total variation} distance between $\mu$ and $\nu$ is given by $\|\mu - \nu\|_{TV} = \frac{1}{2} \sum_{x \in \Omega} |\mu(x) - \nu(x)|$. 
\end{defn}

\begin{defn}[Mixing time]
Let $\mu$ be the stationary distribution of the (discrete time) Markov chain $M = (\Omega, P)$. Let $P^t \delta_x$ denote the distribution at time $t$ of the Markov chain $M$ started at $x$. Define $$d^M(t) := \max_{x \in \Omega} ||P^t \delta_x - \mu||_{TV}.$$  Then, the \emph{mixing time} of $M$ is
\begin{equation}
t^M_{mix}(\epsilon) = \inf \{ t : d^M(t) \leq \epsilon \}.
\end{equation}
If the chain is clear from the discussion, we omit the superscript $M$.
\end{defn}
 
The definitions above help formalize what it means for a Markov chain to mix rapidly or torpidly:

\begin{defn}[Rapidly mixing]
A family of Markov chains $M \in \mathcal{M}$ is said to be \emph{rapidly mixing} if the there is a polynomial $p(x,y)$ so that $t^M_{mix}(\epsilon) \leq p( \log |M|, \log\epsilon),\ \forall M \in \mathcal{M}$, where $|M|$ denotes the size of the state space of $M$. %
\end{defn}

To prove rapid mixing, it is equivalent to find a polynomial $q(x)$ so that $t^M_{mix}(1/4) \leq q(\log (|M|))$, $\forall M \in \mathcal{M}$, as $t^M_{mix}(\epsilon) \leq \lceil \log_2( \epsilon^{-1}) \rceil t^M_{mix}(1/4)$ \cite[Equation (4.36)]{levin_markov_2009}. %

\begin{defn}[Torpidly mixing]
If there is an exponentially growing function, $f(n)$, such that $t_{mix}^M(1/4) \geq f( \log(|M|))$, for all $M \in \mathcal{M}$, then we say that $M$ is torpidly mixing.
\end{defn}

A standard means of arguing about mixing times for random walks on regular graphs comes from measuring bottlenecks, as in the next definition:

\begin{defn}[Conductance]
Let $G$ be a $d$-\emph{regular} graph, and $M$ the Markov chain obtained by a lazy random walk on $G$. We define the conductance of $M$\footnote{This is not the usual definition of the conductance, but this is the correct formula for the conductance of a lazy random walk on a $d$-regular graph \cite[p. 144]{levin_markov_2009}. The formula there has a typo, which was corrected in the errata.} as %
$$\Phi(M) = \min_{\substack{U \subset V(G)\\|U| \leq \frac{1}{2}|V(G)|}} \frac{ | \partial_E U |}{2d|U|}.$$
\end{defn}

Loosely speaking, such a set $U$ which proves that $\Phi(M)$ is small is called a bottleneck.

The following theorem connects mixing time and conductance and will be used to show that the chain $P_2(G)$ mixes torpidly for certain families of graphs, by building explicit bottleneck sets that upperbound the conductance:

\begin{thm}[\cite{levin_markov_2009}]\label{thm:conductancetomixing}For every Markov chain $M$, $t^M_{mix}(\nicefrac{1}{4}) \geq \frac{1}{4 \Phi(M)}$.
\end{thm}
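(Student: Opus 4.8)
The plan is the standard bottleneck-ratio (conductance) lower bound: exhibit an initial distribution from which the chain demonstrably leaks out of a bad set too slowly. Let $U \subset V(G)$ attain the minimum in the definition of $\Phi(M)$, so $|U| \le \tfrac12|V(G)|$ and $|\partial_E U| = 2d\,|U|\,\Phi(M)$. Since $M$ is a lazy random walk on a $d$-regular graph, its stationary distribution $\pi$ is uniform; I would first record that the one-step ergodic flow out of $U$ from stationarity, $Q(U,U^c) := \sum_{x\in U}\pi(x)\,P(x,U^c)$, equals $\tfrac{|\partial_E U|}{2d\,|V(G)|}$, so that $Q(U,U^c)/\pi(U) = \Phi(M)$. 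Start the chain from $\pi_U$, the stationary distribution conditioned on $U$ (so $\pi_U(x) = \pi(x)/\pi(U)$ on $U$ and $0$ elsewhere), and write $\mu_s := \pi_U P^s$ for the law at time $s$.

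The crux is a leakage estimate: $\mu_t(U^c) \le t\,\Phi(M)$. I would prove it in two moves. First, since $\pi_U \le \pi/\pi(U)$ pointwise and $P$ is a nonnegative stochastic operator fixing $\pi$, applying $P^s$ preserves this domination, so $\mu_s \le \pi/\pi(U)$ pointwise for every $s \ge 0$; consequently the mass leaving $U$ for $U^c$ at step $s$ is at most $\sum_{x\in U}\tfrac{\pi(x)}{\pi(U)}P(x,U^c) = \Phi(M)$. Second, mass in $U^c$ can only increase by what crosses over, so $\mu_{s+1}(U^c) \le \mu_s(U^c) + \Phi(M)$; combined with $\mu_0(U^c) = 0$ this telescopes to $\mu_t(U^c) \le t\,\Phi(M)$, i.e.\ $\mu_t(U) \ge 1 - t\,\Phi(M)$.

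To finish, bound total variation from below by testing on $U$: $\|\mu_t - \pi\|_{TV} \ge \mu_t(U) - \pi(U) \ge (1 - t\,\Phi(M)) - \tfrac12 = \tfrac12 - t\,\Phi(M)$. Because $\mu_t$ is a convex combination of the point-started laws $P^t\delta_x$ with $x \in U$, convexity of the total-variation norm gives $d^M(t) \ge \|\mu_t - \pi\|_{TV} \ge \tfrac12 - t\,\Phi(M)$; hence any $t < \tfrac{1}{4\Phi(M)}$ has $d^M(t) > \tfrac14$, which is exactly $t^M_{mix}(\nicefrac14) \ge \tfrac{1}{4\Phi(M)}$. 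The one step that needs care --- and where $d$-regularity is genuinely used --- is the pointwise domination $\mu_s \le \pi/\pi(U)$: it hinges on $\pi$ being stationary (here, uniform) and on $P$ acting monotonically on nonnegative vectors; the rest is bookkeeping with the flow inequality.
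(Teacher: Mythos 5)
Your argument is correct and is essentially the standard bottleneck-ratio proof from the cited reference \cite{levin_markov_2009} (the paper itself gives no proof, only the citation): the pointwise domination $\mu_s \le \pi/\pi(U)$ via stationarity and positivity of $P$, the telescoped leakage bound $\mu_t(U^c)\le t\,\Phi(M)$, and the test-set lower bound on total variation are exactly the steps used there. The quantitative details (the identity $Q(U,U^c)/\pi(U)=\Phi(M)$ for the lazy walk on a $d$-regular graph, and the strict inequality $d^M(t)>\nicefrac14$ for $t<\frac{1}{4\Phi(M)}$) all check out.
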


\subsection{Bottlenecks from the chain of bigons construction}\label{section:basicflips}

Due to the sampling intractability results (\cref{sec:initialintractability}), we know that by replacing edges with chains of bigons, we created graphs whose simple cycles should be expensive to sample. Likewise, it should be expensive to sample the connected $2$-partitions of the plane duals of these graphs. It is therefore natural to look for bottlenecks in the flip walk that arise through the plane dual of the chain of bigons construction (\Cref{fig:planardiamonds}(a)). In this section, we describe the dual of the chain of bigons construction (\Cref{defn:doubledstar}) and explain how it creates bottlenecks.

\begin{figure}
    \centering
    \begin{tabular}{cc}
    \def\svgscale{.3}{ 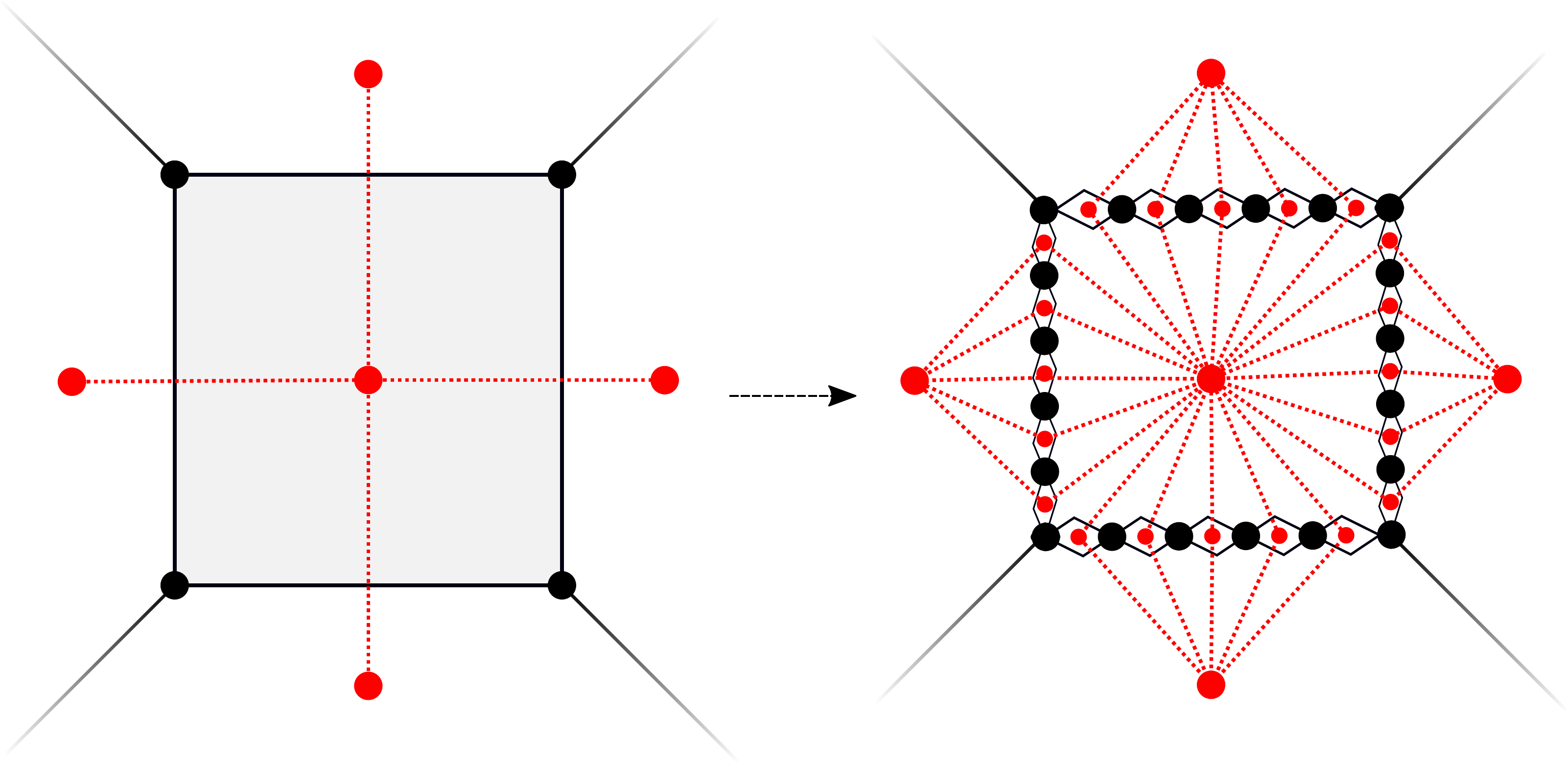 }&
    \scalebox{.4}{ 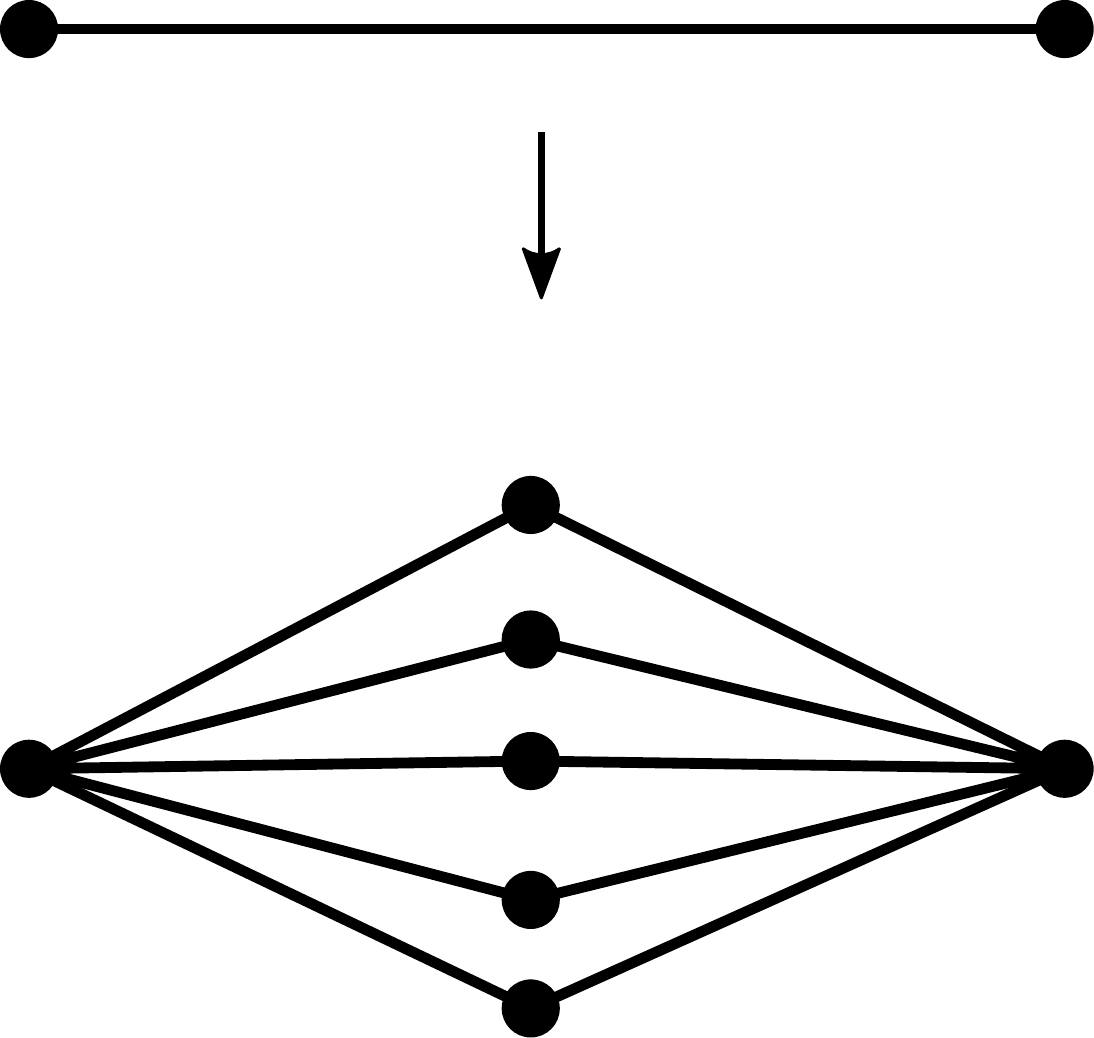 } 
    \\
    (a) Illustrating that $D_d(H) = (B_d(H^*))^*$ for plane $H$ & (b) Doubled $d$-star
    \end{tabular}
    \caption{The  chain of bigons construction and its dual }
    \label{fig:planardiamonds}
    
\end{figure}

\begin{defn}[Doubled $d$-star, original nodes]\label{defn:doubledstar}
Let $H$ be a graph. The \emph{doubled $d$-star} construction applied to $H$, notated $D_d(H)$, is obtained by replacing each edge of $H$ with $d$ parallel edges and then subdividing each new edge once. For $e \in E(G)$, we will let $D_d(e)$ denote the doubled $d$-star subgraph that replaced it. See \Cref{fig:planardiamonds}(b) for an illustration. 

There is an obvious inclusion $V(G) \hookrightarrow V(D_d(G))$, and we call the nodes in the image of that inclusion the original vertices. The other vertices in $D_d(G)$ are called new vertices.
\end{defn}

We will find bottlenecks in $P_2(D_d(G))$ by relating it to $P_2(G)$ using \Cref{lem:restrictionconnectededge}: %

\begin{lemma}\label{lem:restrictionconnectededge}
For any $(A,B) \in P_2( D_d(G))$, $(A \cap V(G), B \cap V(G) ) \in P_2(G)$. %
\end{lemma}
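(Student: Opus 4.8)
The plan is to exploit the fact that $D_d(G)$ is a \emph{bipartite} graph: one side of the bipartition is the set of original vertices (the copy of $V(G)$), and the other side is the set of new subdivision vertices. Indeed, the gadget $D_d(e)$ replacing an edge $e=\{u,v\}\in E(G)$ is a copy of the complete bipartite graph $K_{2,d}$ with parts $\{u,v\}$ and the $d$ subdivision vertices of $D_d(e)$; since $G$ is simple there are no edges of $D_d(G)$ directly joining two original vertices, and each new vertex has degree exactly $2$ with both of its neighbours original. I would record the sharper statement that is actually used: if $w$ is a new vertex lying on the gadget $D_d(e)$ with $e=\{u,v\}$, then the neighbourhood of $w$ in $D_d(G)$ is exactly $\{u,v\}$. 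In particular, every path of $D_d(G)$ alternates between original and new vertices.

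With this in hand the proof is a path-lifting argument. Fix $(A,B)\in P_2(D_d(G))$ and set $A_0=A\cap V(G)$, $B_0=B\cap V(G)$; these are disjoint and their union is $V(G)$. To see that $G[A_0]$ is connected, take $x,y\in A_0$ and choose a path from $x$ to $y$ inside the connected graph $D_d(G)[A]$. By bipartiteness this path reads $x=z_0,w_1,z_1,w_2,\dots,w_\ell,z_\ell=y$ with every $z_i$ an original vertex, every $w_i$ a new vertex, and all of these vertices belonging to $A$. By the sharpened observation above, the two neighbours of $w_i$ along the path, namely $z_{i-1}$ and $z_i$, are exactly the endpoints of the edge $e_i\in E(G)$ whose gadget contains $w_i$; hence $\{z_{i-1},z_i\}=e_i\in E(G)$. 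Therefore $z_0,z_1,\dots,z_\ell$ is a walk in $G$ all of whose vertices lie in $A_0$, so $x$ and $y$ lie in the same component of $G[A_0]$. As $x,y$ were arbitrary, $G[A_0]$ is connected, and the identical argument applied to $B$ shows $G[B_0]$ is connected.

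The only remaining point — and the one place a little care is needed — is that $A_0$ and $B_0$ must both be nonempty for $(A_0,B_0)$ to be an honest element of $P_2(G)$. I would note that if, say, $A_0=\emptyset$, then $A$ consists entirely of new vertices, which are pairwise non-adjacent in $D_d(G)$; connectedness of $D_d(G)[A]$ then forces $|A|=1$, so $A$ would be a single subdivision vertex. Thus the restriction is well defined and has the claimed form on every $(A,B)\in P_2(D_d(G))$ apart from the degenerate partitions whose smaller block is a single subdivision vertex, which are harmless to exclude (or are already excluded) in the bottleneck construction that follows. I expect the conceptual crux to be the bipartiteness observation together with the precise identification of a new vertex's neighbourhood; once these are in place the lift of the path back to $G$ is routine, and the nonemptiness corner case is the only subtlety.
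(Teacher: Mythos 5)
Your proof is correct and follows essentially the same route as the paper's: the paper also observes that any path in a block of $D_d(G)$ alternates between new and original vertices and that forgetting the new vertices yields a path in $G$, though you spell out the bipartiteness and neighbourhood identification in more detail. Your worry about a block restricting to the empty set is moot under the paper's conventions, since its $k$-partitions are not required to have nonempty blocks (later statements such as \Cref{thm:bottlenecksimple} explicitly exclude empty blocks when needed), so the degenerate case where $A$ is a single subdivision vertex still restricts to an element of $P_2(G)$.
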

\begin{proof}
Let $x, y \in V(G)$ be members of the same block of $(A,B)$, say $A$. There exists a path $\gamma$ in $A$ from $x$ to $y$. This path alternates between new vertices and original vertices. Forgetting the new vertices in this path gives a path in $A \cap V(G)$ between $x$ and $y$.
\end{proof}

We use \cref{lem:restrictionconnectededge} to make the following definition:

\begin{defn}[Restriction map]\label{defn:Rdmap}
Define a map $R_d: P_2(D_d(G)) \to P_2(G)$ by setting $R_d( (A,B) ) = (A \cap V(G) , B \cap V(G))$. 
\end{defn}

We now explain the key intuition behind the bottlenecks. In order for the flip walk to move between the fibers of $R_d$---that is, to change the assignment of an old node---a certain rare event must occur. In particular, if $u$ and $v$ are adjacent old nodes, and $u \in A$ and $v \in B$, then to reassign $u$ to $B$, every new node in $D_d(\{u,v\})$ must already be in $B$. However, under the flip walk with $u \in A$, $v \in B$, the new nodes of $D_d(\{u,v\})$ behave like a random walk on a hypercube, and in particular, it is unlikely for them to become part of the same block. Pursuing this intuition, the next lemma proves that the fibers of $R_d$ have \emph{much} smaller edge boundary than size, which will mean that they are bottleneck sets:

\begin{lem}\label{lem:bottlenecks}
Suppose that $(A,B) \in P_2(G)$, with $A \not = \emptyset$ and $B \not = \emptyset$, and let $n = |V(G)|$. Then, 
\begin{equation}\label{eq:pdcut}
| R_d^{-1}( (A,B) )  | = 2^{ \cut(A,B) d}
\end{equation}
and
\begin{equation}\label{eq:pdboundarycut}
| \partial_E R_d^{-1}( (A,B) ) | \leq (d + 1) n 2^{ (\cut(A,B) - 1)d }.
\end{equation}
\end{lem}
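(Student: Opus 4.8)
The plan is to analyze the fiber $R_d^{-1}((A,B))$ explicitly as a product over the edges of $G$, and then bound its edge boundary by identifying exactly which flip moves can leave the fiber. First I would establish \eqref{eq:pdcut}. Fix $(A,B) \in P_2(G)$. A partition $(A',B') \in P_2(D_d(G))$ lies in the fiber precisely when its restriction to the original vertices is $(A,B)$, so the original vertices are assigned as dictated by $(A,B)$; the only freedom is in the new vertices. For an edge $e = \{u,v\} \in E(G)$ with $u,v$ in the same block, say $A$, I would argue that \emph{every} one of the $d$ new vertices of $D_d(e)$ must lie in $A$ as well: if some new vertex $w$ in $D_d(e)$ were assigned to $B$, then $\{w\}$ would be an isolated vertex of the induced subgraph on $B$ (its only neighbors in $D_d(G)$ are $u$ and $v$, both in $A$), contradicting connectivity since $B \neq \emptyset$ and has other vertices — here one must note that if $B$ were just $\{w\}$ that is impossible since $B$ contains an original vertex. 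So same-block edges contribute no freedom. For a cut edge $e = \{u,v\}$ with $u \in A$, $v \in B$, each of the $d$ new vertices of $D_d(e)$ can independently be placed in $A$ or in $B$ (its two neighbors $u,v$ lie one in each block, so whichever block it joins, it has a neighbor there), and conversely any such assignment keeps both blocks connected because connectivity among original vertices is already guaranteed by $(A,B) \in P_2(G)$ and each new vertex attaches to an original vertex of its block. This gives exactly $2^d$ choices per cut edge, hence $|R_d^{-1}((A,B))| = 2^{d\,\cut(A,B)}$, which is \eqref{eq:pdcut}.

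Next I would prove \eqref{eq:pdboundarycut}. An edge of $P_2(D_d(G))$ out of the fiber corresponds to a single flip — reassigning one vertex $x$ of $D_d(G)$ — that produces a partition whose restriction to $V(G)$ is no longer $(A,B)$. This forces $x$ to be an original vertex, say $x = u$ with $u \in A$, being moved to $B$. The key observation is that for $(A' \setminus \{u\}, B' \cup \{u\})$ (where $(A',B')$ is in the fiber) to be a valid connected $2$-partition, in particular for the new block containing $u$ to be connected, every neighbor of $u$ in $D_d(G)$ that lies in $A'$ would become disconnected — so actually the binding constraint is that all new vertices adjacent to $u$ along cut edges at $u$ must already be in $B'$. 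Concretely: for the flip to even be \emph{proposed and relevant}, along each cut edge $e = \{u,v\}$ incident to $u$, all $d$ new vertices of $D_d(e)$ must be in $B'$ (otherwise moving $u$ to $B$ strands the part of $A'$ still attached only through $u$, or strands the new vertices; in any case the resulting partition is not in $P_2$). This is a severe restriction on $(A',B')$: picking one specific cut edge $e_0$ at $u$, all $d$ new vertices of $D_d(e_0)$ are pinned to $B'$, cutting the count of eligible source partitions by a factor of $2^d$. Summing over the at most $n$ choices of original vertex $u$ (we need only that $u$ has at least one incident cut edge, since otherwise no boundary edge arises at $u$) and the at most $d+1$ vertices of $D_d(e_0)$...

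Let me restate the boundary bound more carefully. The number of boundary edges of the fiber equals the number of pairs $\big((A',B'), x\big)$ with $(A',B') \in R_d^{-1}((A,B))$ and $x$ a vertex of $D_d(G)$ such that flipping $x$ exits the fiber while staying in $P_2(D_d(G))$. As argued, $x$ must be an original vertex $u$, and the source partition $(A',B')$ must have, along some fixed cut edge $e_0 = \{u,v\}$ incident to $u$, all $d$ new vertices lying in the block opposite to $u$; the new vertices on the \emph{other} cut edges at $u$ are then determined too, but it suffices to use the one factor of $2^{-d}$ from $e_0$. Thus for each of the $\le n$ original vertices $u$, the number of eligible source partitions is at most $|R_d^{-1}((A,B))| / 2^d = 2^{(\cut(A,B)-1)d}$. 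Each such eligible source partition can be flipped at $u$ in one way to leave the fiber, but I should also account for flips of the \emph{new} vertices of $D_d(e_0)$ that could in principle exit — actually those stay in the fiber by the product structure, so they do not contribute. The factor $(d+1)$ arises because one must also count, for fixed $u$, boundary edges where the flipped vertex is one of the $d$ new vertices \emph{adjacent to $u$} whose move changes... no: such moves stay in the fiber. On reflection the cleanest bookkeeping is: number of boundary edges $\le \sum_{u \in V(G)} (\text{number of new vertices that could be flipped, namely } \le d, \text{ plus the original vertex } u \text{ itself}) \times (\text{eligible partitions}) \le n \cdot (d+1) \cdot 2^{(\cut(A,B)-1)d}$, giving \eqref{eq:pdboundarycut}.

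\medskip

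\noindent\textbf{Main obstacle.} The delicate point is the precise characterization in \eqref{eq:pdboundarycut}: one must argue carefully that any flip leaving the fiber but staying in $P_2(D_d(G))$ forces, via the connectivity constraint, all $d$ new vertices on some incident cut edge to sit in the block opposite the flipped original vertex — and to handle edge cases (is it possible that $A' = \{u\}$? then $\cut(A,B) = \deg_G(u)$ and the bound still holds; what if $\cut(A,B) = 1$, forcing the exponent to $0$?). Getting the counting to land exactly at the factor $(d+1)n$ rather than, say, $(2d+1)n$ requires being economical about which vertex is flipped and which single cut edge supplies the $2^{-d}$ savings. The product-over-edges structure from \eqref{eq:pdcut} is what makes all of this tractable, but writing the boundary argument without double-counting is where care is needed.
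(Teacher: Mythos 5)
Your argument for \eqref{eq:pdcut} is correct and is the same edge-by-edge product decomposition the paper uses. For \eqref{eq:pdboundarycut} you also take the paper's route --- only a flip of an original vertex can leave the fiber, and connectivity after the flip constrains the new vertices on the cut edges at that vertex --- but the count that is supposed to produce the factor $(d+1)\,n\,2^{(\cut(A,B)-1)d}$ does not hold together, and the failure is exactly at the edge case you flag in your ``main obstacle'' paragraph without resolving it.

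Concretely: your claim that each flipped original vertex $u$ admits at most $2^{(\cut(A,B)-1)d}$ eligible source partitions assumes that \emph{all} new vertices on the cut edges at $u$ must lie in the block opposite $u$. That fails when $u$'s block of $(A,B)$ is the singleton $\{u\}$: after the flip the shrinking block is allowed to consist of a single new vertex $w$ on a cut edge at $u$, so one such new vertex may legitimately sit on $u$'s side in the source partition. With $\cut(A,B)=1$ and $d\geq 2$ this yields $d$ eligible sources against your claimed bound of $2^{0}=1$, so the inequality you assert is false there. You then recover the stated bound by multiplying by $(d+1)$, justified as counting flips of the $d$ new vertices adjacent to $u$ ``plus $u$ itself'' --- but, as you observe in the very same sentence, flipping a new vertex never leaves the fiber, so those flips contribute nothing; the $(d+1)$ is padding that happens to land on the right formula. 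The correct derivation (the paper's) is: fix one cut edge $e_0$ at the flipped vertex; connectivity of the shrinking block minus $u$ forces at most one new vertex of $D_d(e_0)$ onto $u$'s side, and only when the shrinking block becomes that singleton, so $D_d(e_0)$ has at most $d+1$ admissible configurations instead of $2^d$, while each of the remaining $\cut(A,B)-1$ cut edges is bounded crudely by $2^d$. Multiplying by the at most $n$ choices of flipped vertex gives \eqref{eq:pdboundarycut} and absorbs the degenerate case uniformly. Your plan needs this observation to close.
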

\begin{proof}[Proof of \eqref{eq:pdcut}]
We will count the number of extensions of $(A,B)$ across the new nodes of $D_d(G)$, by considering each edge $e \in E(G)$ separately. If $e \in \cut(A,B)$, then one can assign new nodes of $D_d(e)$ arbitrarily without affecting contiguity, and therefore one has $2^d$ choices.
If $e \not \in \cut(A,B)$, suppose both endpoints of $e$ are in $A$. Since $B \not = \emptyset$, to preserve connectivity all the new nodes of $D_d(e)$ must be in $A$. Therefore, there is only one choice for how to extend $(A,B)$ along this edge.
Combining these two cases yields \eqref{eq:pdcut}.
\end{proof}
\begin{proof}[Proof of \eqref{eq:pdboundarycut}]

    Let $e \in \partial_E R_d^{-1}( (A,B)) $ be an edge between $(L,M), (L', M') \in P_2(D_d(G))$, with $ R_d ( (L', M') ) =: (A',B')$.  %
    There is an $x \in V(G)$ be such that $L' = L + x$, $M' = M - x$, $A' = A + x$ and $B' = B - x$. %
    Since $L \not = \emptyset$, for $L'$ to be connected there has to be at least one node $l \in L$ so that $l \sim x$. Moreover, since $M - x$ is connected  and $l,x \not \in M - x$, $M - x$ can contain at most one new node of $D_d( \{l,x\})$. Hence, there are at most $d +1$ extensions of $(A',B')$ onto the new nodes of $D_d( \{l,x\})$. %
    As there are most $\cut(A,B) - 1$ edges remaining where we might have the full $2^d$ range of extensions, it follows that there are \emph{at most} $(d + 1) 2^{(\cut(A,B) - 1)d}$ elements of $\partial_E  R_d^{-1}( (A,B))$ that map to $\{ (A,B), (A',B') \}$ under $R_d$. Finally, the claim follows because there are at most $n$ candidates for the original node $x$ that gets flipped when making a step across $\partial_E R_d^{-1} (A,B)$.
\end{proof}

We now use these computations to show the slow mixing of the flip chain. %

\begin{thm}\label{thm:bottlenecksimple}

Let $G$ be any $2$-connected graph with at least two distinct connected $2$-partitions $P, Q \in P_2(G)$, neither of which have the empty set as a block. Let $n = |V(G)|$. Then, the family $P_2(D_d(G))$, $d \geq 1$, is torpidly mixing. In particular, we have the following bounds:
\begin{align}
    \Phi( P_2(D_d(G)) &\leq (d+1) 2^{-d - 1}\label{eq:phibound},\\
    t_{mix}(1/4) (D_d(G)) &\geq \frac{ 2^{d - 1}}{(d+1) }\label{eq:mixingtimebound}, \text{ and }\\
    | P_2(D_d(G)) | \leq 2^ { |D_d(G)|} &\leq 2^{n + dn^2}.\label{eq:p2bound}
\end{align}

\end{thm}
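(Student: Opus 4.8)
The plan is to produce an explicit bottleneck set for the flip walk on $P_2(D_d(G))$ from a single fiber of the restriction map $R_d$ of \Cref{defn:Rdmap} (well-defined by \Cref{lem:restrictionconnectededge}), and then feed the resulting conductance bound into \Cref{thm:conductancetomixing}. The key input is \Cref{lem:bottlenecks}: for any $(A,B)\in P_2(G)$ with both blocks nonempty, $|R_d^{-1}((A,B))| = 2^{\cut(A,B)d}$ while $|\partial_E R_d^{-1}((A,B))|\le (d+1)n\,2^{(\cut(A,B)-1)d}$. Thus the ratio of edge-boundary to size of \emph{any} such fiber is already $\le (d+1)n\,2^{-d}$, independent of the cut; this is the heart of \eqref{eq:phibound}, and nothing cleverer than a single fiber is needed.

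The only bookkeeping is the cardinality constraint $|U|\le\tfrac12|P_2(D_d(G))|$ in the definition of conductance, and this is exactly where the hypothesis enters. Since $P\neq Q$, the fibers $R_d^{-1}(P)$ and $R_d^{-1}(Q)$ are disjoint subsets of $P_2(D_d(G))$, each of size $\ge 2^d$ by \Cref{lem:bottlenecks}, so whichever is smaller --- call it $U := R_d^{-1}((A,B))$ with $(A,B)\in\{P,Q\}$ --- satisfies $2|U| \le |R_d^{-1}(P)|+|R_d^{-1}(Q)| \le |P_2(D_d(G))|$. Writing $c = \cut(A,B)$ and letting $N := |V(D_d(G))|$ denote the common degree of the flip graph on $P_2(D_d(G))$ (see \Cref{defn:flipwalk}), \Cref{lem:bottlenecks} then gives
\begin{equation*}
\Phi(P_2(D_d(G))) \le \frac{|\partial_E U|}{2N|U|} \le \frac{(d+1)n\,2^{(c-1)d}}{2N\,2^{cd}} = \frac{(d+1)n}{2N\,2^{d}} \le (d+1)2^{-d-1},
\end{equation*}
the last inequality using $N = n + d|E(G)| \ge n$. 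This is \eqref{eq:phibound}, and \eqref{eq:mixingtimebound} follows at once from \Cref{thm:conductancetomixing}, since $t_{mix}(1/4) \ge \tfrac{1}{4\Phi} \ge \tfrac{2^{d+1}}{4(d+1)} = \tfrac{2^{d-1}}{d+1}$ (noting that $D_d(G)$ is $2$-connected, so the flip chain is ergodic with uniform stationary distribution). For \eqref{eq:p2bound}, a connected $2$-partition of $D_d(G)$ is determined by its first block, a subset of $V(D_d(G))$, so $|P_2(D_d(G))| \le 2^{|V(D_d(G))|}$, and $|V(D_d(G))| = n + d|E(G)| \le n + dn^2$.

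Putting the three inequalities together yields torpid mixing of the family $\{P_2(D_d(G))\}_{d\ge 1}$: with $G$ (hence $n$) fixed, \eqref{eq:p2bound} forces $d \ge (\log_2|M| - n)/n^2$ for $M = P_2(D_d(G))$, so \eqref{eq:mixingtimebound} becomes a lower bound on $t_{mix}(1/4)$ that grows exponentially in $\log|M|$. I do not expect a real obstacle here --- \Cref{lem:bottlenecks} already does the geometric work --- so the main (minor) point to get right is the cardinality argument above: one never needs a bespoke bottleneck set, since of two disjoint fibers the smaller automatically carries at most half the total mass.
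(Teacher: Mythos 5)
Your proposal is correct and follows essentially the same route as the paper: take the smaller of the two disjoint fibers $R_d^{-1}(P)$, $R_d^{-1}(Q)$ as the bottleneck set (so the cardinality constraint in the definition of conductance is automatic), plug the counts from \Cref{lem:bottlenecks} into the conductance formula, and finish with \Cref{thm:conductancetomixing} and the trivial bound $|V(D_d(G))| = n + d|E(G)| \le n + dn^2$. The paper phrases the ``smaller fiber'' step as a WLOG on $\cut(P)\le\cut(Q)$ via \eqref{eq:pdcut}, but that is the same observation you make.
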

\begin{proof}%
Without loss of generality, assume that $\cut(P) \leq \cut(Q)$. We have that $|R_d^{-1}(P)| \leq \nicefrac{1}{2} | P_2( D_d(G))|$ since $R_d^{-1}(P) \cap R_d^{-1}(Q) = \emptyset$ and $|R_d^{-1}(Q)| \geq |R_d^{-1}(P)|$ by \eqref{eq:pdcut}. Combining \eqref{eq:pdcut} with \eqref{eq:pdboundarycut} yields \eqref{eq:phibound}. Equation \eqref{eq:mixingtimebound} follows from this by \Cref{thm:conductancetomixing}. Finally, \eqref{eq:p2bound} follows from the construction of $D_d(G)$. From \eqref{eq:p2bound}, we have that $\log  |P_2(D_d(G))|  \leq n + dn^2$. %

\end{proof}

\noindent For a concrete example, take $G$ to be a $4$-cycle, and take the two $0$-balanced cuts. %
A snapshot of the evolution of the flip walk on $D_5(G)$ can be seen in \Cref{fig:torpid_mixing}.

\begin{figure}
    \centering
    
    \includegraphics[scale = .1]{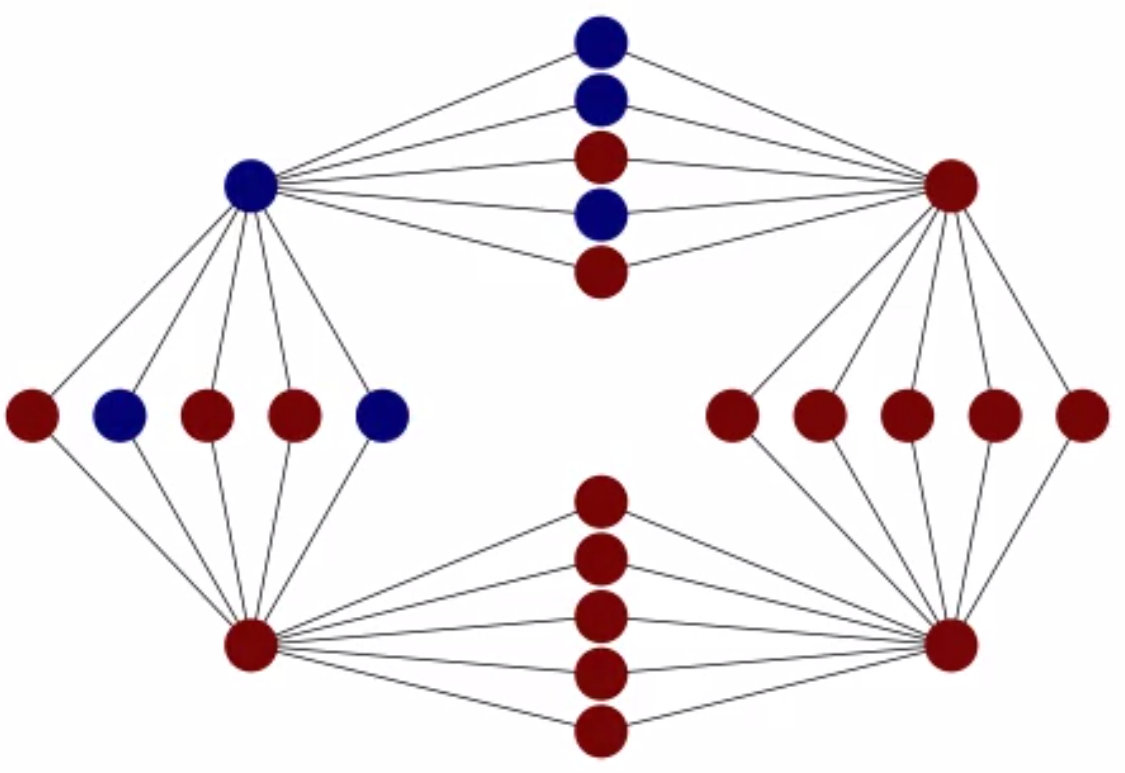}
    \caption{A snapshot of the flip walk evolving, illustrating the bottleneck of \Cref{thm:bottlenecksimple}.}
    \label{fig:torpid_mixing}
\end{figure}

We pause to note a key division between the intractability of uniformly sampling $P_2(G)$ and the mixing of the flip walk. 
First, observe that if $G$ is a series-parallel graph %
, then $D_d(G)$ is series parallel as well. We show in \Cref{Section:PositiveResults} that there is a polynomial time algorithm to uniformly sample from $P_2(G)$ on the class of series-parallel graphs, and yet our proof above shows that the flip walk still mixes slowly on this class of graphs.  Thus, even in cases where uniform sampling is tractable, the flip walk on $P_2(G)$ still may not be an efficient means of sampling. %

\subsection{Bottlenecks from the $R_d$ construction}\label{TriangleTorpid}

The example of \Cref{thm:bottlenecksimple} is not entirely satisfying, for example because the degrees of its nodes increase without bound. We will address some of its weaknesses in this section by producing a family of maximal plane graphs with vertex degree $\leq 9$, such that the corresponding family of flip walk chains is torpidly mixing. As in \Cref{section:basicflips}, our strategy will be to find a construction $G \to T_d(G)$ which uses gadgets to refine certain features of $G$, and a map $P_2(T_d(G) ) \to P_2(G)$, where we can count the size of the fibers and the size of the edge boundaries of the fibers. All graphs in this section are assumed to be embedded in the plane. Additionally, we will freely describe a partition $(A,B) \in P_2(G)$ by a map $p : V(G) \to \{a,b\}$.

\begin{defn}[$T_d$, original vertices, original triangles]
Let $G$ be a maximal plane graph. Let $T_d(G)$ be the graph defined by $T_d(G) = ( R_d( G^*))^*$, where $R_d$ is as in \Cref{defn:RDDef}.  There is a natural injection $i : V(G) \to V(T_d(G))$, since there is a natural injection $Faces(G^*)$ to $Faces( R_d(G^*))$, and we call the nodes in $\im(i)$ the original vertices. Moreover, if $F$ is any triangular face in $G$, then we call the original vertices of $F$ in $V(T_d(G))$ an original triangle. 
\end{defn}
\begin{figure}\centering
\def\svgscale{.4}{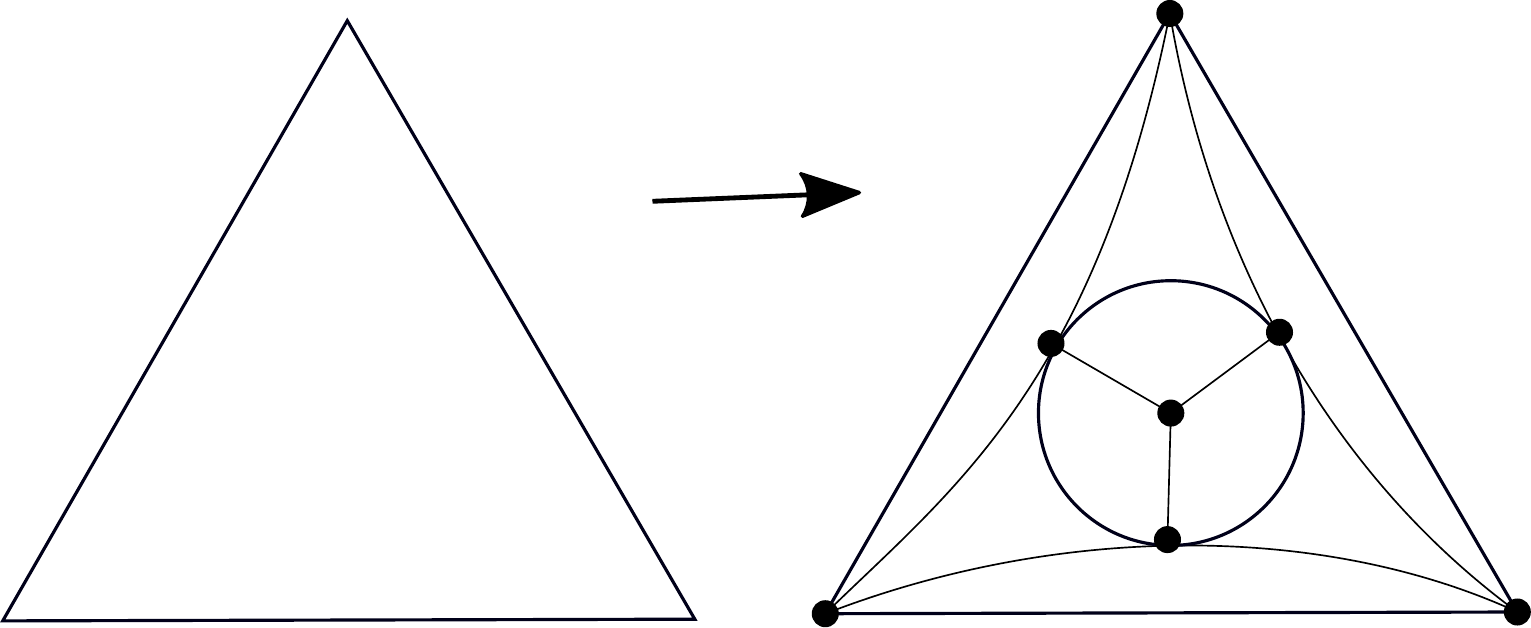}
\caption{The affect of $T_2$}\label{fig:dualrecursive}
\end{figure}

The affect of $T_d$ is to take every triangular face and refine it by gluing a graph hanging from the three nodes of the triangle. \Cref{fig:dualrecursive} shows the affect of applying $T_2$ on a single triangular face.

In \Cref{section:basicflips}, our strategy for finding bottleneck sets was to find a set of vertices such that if they all belonged to the same block, then a large number of vertices would also belong to that block. In this section, such a set of vertices will be the vertices of an original triangle. If all the nodes of an original triangle are in the same block, we will call that triangle pure:

\begin{defn}[Pure and mixed faces]
Let $G$ be a plane graph. Consider a partition $(A,B) \in P_2(G)$ defined by a function $p : V(G) \to \{a, b\}$. We will call a face $F$  \emph{pure} of assignment $a$ (resp.\ $b$), if $p$ takes the value $a$ (resp.\ $b$) on all of its nodes. We will call the face \emph{mixed} otherwise, that is, if $p$ takes on both values on the vertices of $F$. For a partition $(A,B) \in P_2(G)$, we let $\mathfrak{P}_{(A,B)}$ be the function on the set of faces of $G$ that assigns $a$ to all pure $a$-faces, $b$ to all pure $b$-faces, and $m$ to all mixed faces. %
Additionally, we define $M : P_2(G) \to \mathbb{N}$ as the number of mixed faces in a partition.%
\end{defn}

We are going to find bottlenecks in this section by defining sets of partitions of $T_d(G)$ by whether all original triangles are mixed. To leave such a set, some of the triangles will have to become pure, which will force the new nodes of that triangle to be in a specific arrangement. A convenient tool for expressing this will be to describe directed edges of $P_2(G)$ as being purifying or not.

\begin{defn}[Directed Configuration Space]
For a graph $G$, let $DP_2(G)$ be the directed graph version of the flip walk adjacency structure on $P_2(G)$. That is, it has a node for each node of $P_2(G)$ and for each edge $e = \{P, Q\}$ in $P_2(G)$, $DP_2(G)$ has two edges: $(P,Q)$ and $(Q,P)$.
\end{defn}

\begin{defn}[Purifying edges of the directed configuration space]
We call an edge $e = (P,Q) \in DP_2(G)$ purifying if there is a face $F$ of $G$ so that $\mathfrak{P}_P(F) = m$ but $\mathfrak{P}_Q(F) \in \{a,b\}$. Let $DP_2^C(G)$ be the graph obtained from $DP_2(G)$ by removing all purifying edges. %
For $Q \in P_2(G)$, we let $C_Q \subseteq P_2(G)$ be the set of all vertices strongly reachable from $Q$ in $DP_2^C(G)$.
\end{defn}

Finally, we will need a way to relate connected $2$-partitions of $T_d(G)$ to those of $G$, so that we can partition $P_2(T_d(G))$ based on which faces of $G$ are pure or mixed.

\begin{lemma}\label{restrictionconnectedtriangle}
For any partition $(A,B) \in P_2(T_d(G))$ defined by $p: V(T_d(G)) \to \{a,b\}$, let $p_o: V(G)\to\{a,b\}$ denote the restriction to the original vertices, which we identify with $V(G)$. Then $p_o: V(G)\to\{a,b\}$ defines a connected partition of $G$.
\end{lemma}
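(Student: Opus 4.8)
The plan is to follow the template of \Cref{lem:restrictionconnectededge}: given two original vertices $x,y$ lying in the same block of $(A,B)$, say $A$, I would take a path $\gamma$ joining them inside $T_d(G)[A]$ and ``project'' it to a walk in $G$ that stays among the original vertices in $A$. For this to work I first need a locality statement about the gadgets, namely that replacing a triangular face of $G$ by a copy of $R_d$ only introduces vertices that interact with the three corners of that face.

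First I would unwind the duality $T_d(G) = (R_d(G^*))^*$ to record the relevant structural facts. Vertices of $T_d(G)$ are faces of $R_d(G^*)$; the original vertices are the (enlarged) faces inherited from $G^*$, which correspond to vertices of $G$, while the new vertices are the internal faces of the copies of $R_d$ (one copy glued at each degree-$3$ vertex of $G^*$, i.e.\ at each triangular face $F$ of $G$, via its three attaching points $a_0,b_0,c_0$, which have internal degree $2$ in $R_d$). The key claims are: (i) the subgraph of $T_d(G)$ induced on the original vertices is isomorphic to $G$, because two faces of $G^*$ share an edge exactly when the corresponding vertices of $G$ are adjacent and the edges of $G^*$ are left untouched by the $R_d$ construction; and (ii) every internal face of a copy of $R_d$ is bounded only by internal gadget edges, hence is dual-adjacent only to other internal faces of the same copy and to the three faces of $G^*$ that surround the blown-up vertex — and those three faces correspond precisely to the three corners of the triangular face $F$. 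In particular a new vertex coming from the copy of $R_d$ glued at $F=\{p,q,r\}$ is adjacent in $T_d(G)$ only to new vertices of that same copy and to $p,q,r$, and new vertices coming from distinct copies of $R_d$ are never adjacent.

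With these facts the conclusion is bookkeeping. Writing $\gamma=(x=z_0,z_1,\dots,z_m=y)$ inside $T_d(G)[A]$, I would break $\gamma$ into its maximal runs of consecutive new vertices; by (ii) each such run lies in a single copy of $R_d$, attached at some triangular face $F$, and it is flanked on both sides by original vertices, which must therefore be corners of $F$ and hence are equal or adjacent in $G$. Replace each such run by the single edge of $G$ joining its two flanking vertices (and delete the detour outright if they coincide), and replace each original-to-original step of $\gamma$ by the corresponding edge of $G$ via (i). This produces a walk in $G$ from $x$ to $y$, all of whose vertices are original vertices lying in $A$, i.e.\ in $p_o^{-1}(a)$; hence $G[p_o^{-1}(a)]$ is connected, and symmetrically for $b$. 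Therefore $p_o$ defines a connected partition of $G$.

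The main obstacle is claim (ii): making precise, through the chain of duals, that the small faces of $R_d(G^*)$ border only the three ``large'' faces belonging to their own gadget together with the other small faces of that gadget. This rests on $G^*$ being cubic (so each gadget is glued at exactly three attaching points, of internal degree $2$) and on the observation that the connecting edges of $G^*$, untouched by the $R_d$ construction, continue to separate two large faces; once this is in place, the remainder mirrors the proof of \Cref{lem:restrictionconnectededge} for the doubled $d$-star.
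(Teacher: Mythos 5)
Your proposal is correct and follows the same route as the paper's proof: take a path in $T_d(G)[A]$ between two original vertices, delete the new vertices, and use the fact that the only original neighbours of a gadget's new vertices are the three mutually adjacent corners of the triangular face it refines. Your claims (i) and (ii) are exactly the justification the paper compresses into the phrase ``all the vertices on the boundary of any original triangle are adjacent,'' so the extra detail is welcome but not a different argument.
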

\begin{proof}
Let $x, y \in i(V(G)) \cap A$. Since $T_d(G)[A]$ is connected, there exist a path $\gamma$ in $A$ from $x$ to $y$. Forgetting the new vertices in this path gives a path in $G[i^{-1}(A)]$, since all the vertices on the boundary of any original triangle are adjacent.
\end{proof}

\begin{defn}[Restriction map]
Define $F_d: P_2(T_d(G)) \to P_2(G)$ to be the restriction map $F_d(p) = p_o$, with notation as in Lemma \ref{restrictionconnectedtriangle}
\end{defn}

\begin{lem}\label{lem:trianglebottlenecks}
Let $G$ and $F_d$ be as above. Let $P \in P_2(G)$. Then, 
\begin{equation}\label{eq:trianglearea}
| F_d^{-1} ( P) | \geq 5^{dM(P)} 
\end{equation}
and, supposing additionally that $P$ is such that $C_P = \{P \}$, 
\begin{equation}\label{eq:triangleboundary}| \partial_E F_d^{-1} ( P) | \leq  n  5^{(d + 1) (M(P) - 1)}.
\end{equation}

\end{lem}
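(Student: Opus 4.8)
\textbf{Reduction to a simple-cycle count, and the lower bound.} The plan is to transfer everything to the plane dual and reuse the counts of \Cref{thm:RDUniformCounts}. By bond--cycle duality (\Cref{duality}), $\mathscr{P}_2(T_d(G)) \cong SC((T_d(G))^*) = SC(R_d(G^*))$ and $\mathscr{P}_2(G) \cong SC(G^*)$; since $G$ is a maximal plane graph, $G^*$ is cubic, so \Cref{surjective} applies to it. Under these identifications the restriction map $F_d$ becomes the projection $\pi_d : SC(R_d(G^*)) \to SC(G^*) \cup \{\emptyset\}$: the cut of a partition of $T_d(G)$, read on the original edges $E(G) \subseteq E(T_d(G))$, is exactly the cut of its $F_d$-restriction, and dualizing carries this to intersecting the dual cycle with the original edges of $R_d(G^*)$. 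Fix $P \in P_2(G)$ with dual cycle $C$. A face of $G$ meets $\cut(P)$ iff its dual vertex lies on $C$, and a vertex of a simple cycle lies on it with degree two, so $C$ passes through exactly the vertices of $G^*$ dual to the $M(P)$ mixed faces of $G$. As in the proofs of \Cref{surjective} and \Cref{thm:RDUniformCounts}, an element of $\pi_d^{-1}(C)$ is precisely an independent choice, at each such vertex $v$, of a simple boundary link of $R_d(v)$ joining the two attaching points that $C$ uses (the lift being empty on the gadgets over vertices off $C$), and by \Cref{eq:SBLbounds} there are $|SBL(R_d)| \in [5^d,5^{d+1}]$ choices at each. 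Hence $|F_d^{-1}(P)| = |\pi_d^{-1}(C)| = |SBL(R_d)|^{M(P)} \ge 5^{d\,M(P)}$, which is \Cref{eq:trianglearea}. (Only the lower bound is needed, so it would suffice to exhibit $5^d$ independent routings of $C$ through each mixed gadget.)

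\textbf{The boundary bound.} An edge of $\partial_E F_d^{-1}(P)$ joins some $(L,M)\in F_d^{-1}(P)$ to some $(L',M')\notin F_d^{-1}(P)$ and is obtained by flipping a single vertex $w$ of $T_d(G)$; since $(L,M)$ and $w$ together determine the edge, it is enough to count such pairs. If $w$ were a new vertex then $F_d(L',M')=F_d(L,M)=P$ and the edge would lie inside the fibre, so $w$ is an original vertex and $P':=F_d(L',M')=\mathrm{flip}_w(P)\ne P$. By hypothesis $C_P=\{P\}$, so the directed edge $P\to P'$ of $DP_2(G)$ is purifying: some triangular face $F$ of $G$ is mixed under $P$ and pure under $P'$. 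As $P,P'$ differ only at $w$, we get $w\in F$; writing $F=\{w,w_1,w_2\}$, the vertices $w_1,w_2$ share a colour, say $b$, under $P$ while $w$ has colour $a$, and after the flip all three are $b$. A connectivity argument now pins down the whole gadget inside $F$: its new vertices are adjacent in $T_d(G)$ only to one another and to $w,w_1,w_2$, so if any had colour $a$ under $(L,M)$, then changing $w$ to $b$ would cut them off from the rest of their block (the remaining exits $w_1,w_2$ being colour $b$), disconnecting $L'$ — unless $P$'s $a$-block were the singleton $\{w\}$, a possibility itself ruled out by $C_P=\{P\}$ (flipping a neighbour of $w$ would be a non-purifying move). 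Therefore the restriction of $(L,M)$ to the gadget in $F$ is completely determined by $w$. Consequently, for fixed $w$, the partitions $(L,M)\in F_d^{-1}(P)$ admitting a flip at $w$ have free choices only on the other $M(P)-1$ mixed gadgets, so number at most $|SBL(R_d)|^{M(P)-1}\le 5^{(d+1)(M(P)-1)}$ by \Cref{eq:SBLbounds}. Summing over the at most $n=|V(G)|$ original vertices $w$ gives \Cref{eq:triangleboundary}. (Here $M(P)\ge 1$ since $C_P=\{P\}$ forces both blocks of $P$ nonempty and in a triangulation each cut edge borders two mixed faces.)

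\textbf{Main obstacle.} The duality dictionary and the substitution of \Cref{eq:SBLbounds} are bookkeeping; the real content is the boundary bound, and within it two points: (i) invoking $C_P=\{P\}$ to guarantee that \emph{every} fibre-leaving flip purifies some triangle $F$, and (ii) the connectivity argument showing that purifying $F$ rigidly determines the entire gadget sitting in $F$, so that exactly one of the $M(P)$ factors $|SBL(R_d)|$ of freedom disappears — this is what yields the $M(P)-1$ in the exponent. Some care is needed to identify and (via the hypothesis) dismiss the singleton-block case, which is the only way the rigidity in (ii) could fail.
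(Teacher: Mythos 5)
Your proof is correct and takes essentially the same route as the paper's: the lower bound counts the $|SBL(R_d)|\geq 5^d$ configurations available at each mixed face, and the upper bound observes that a fiber-leaving flip must occur at an original vertex, that $C_P=\{P\}$ forces the induced move in $DP_2(G)$ to purify some face whose gadget is thereby rigid, leaving at most $5^{d+1}$ configurations for each of the remaining $M(P)-1$ mixed faces and at most $n$ choices of flipped vertex. You spell out the connectivity argument for the rigidity of the purified face's gadget and the singleton-block edge case more explicitly than the paper does, but this is added detail rather than a different method.
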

\begin{proof}[Proof of \eqref{eq:trianglearea}]
A mixed face of a $2$-partition of $T_d(G)$ corresponds to a $SBL(R_d)$ segment of the simple cycle dual to that $2$-partition. Using the estimates in \Cref{eq:SBLbounds}, each of the mixed faces can be in at least in $5^d$ configurations, and so the claim follows.
\end{proof}
\begin{proof}[Proof of \eqref{eq:triangleboundary}] 
Since $C_P = \{P\}$ any edge out of $F_d^{-1}(P)$ must cause one mixed face of $P$ to become pure. This mixed face must be in a configuration where all but one node has the same block assignment, and that one exceptional node must be an original node. Since there are at most $n$ original nodes of $G$ which can switch during this step, and the other mixed faces have at most $5^{d+1}$ configurations each, the result follows by the bound on $SBL$ from \Cref{eq:SBLbounds}. %
\end{proof}

Taking $G=K_4$ yields the following corollary:

\begin{cor}\label{betterbehavedexample}
There is a family of graphs $H_d$, $d \in \mathbb{N}$, that are triangulations of the plane (maximal planar graphs) such the vertex degree is bounded by $9$ and $|V(H_d)| = O(d)$, and for which $P_2(H_d)$ and the unordered partition chain have mixing times at least $\frac{ 5^d } { 250}$. $H_2$ is shown in \Cref{fig:outdegreezero}b).
\end{cor}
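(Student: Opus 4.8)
The plan is to apply \Cref{lem:trianglebottlenecks} with $G=K_4$, setting $H_d:=T_d(K_4)$, and then convert the resulting bottleneck into a mixing-time lower bound via \Cref{thm:conductancetomixing}. The first step is to check the advertised structural properties of $H_d$. Since $K_4$ is self-dual, $H_d=T_d(K_4)=(R_d(K_4^*))^*=(R_d(K_4))^*$. As $K_4\in\mathscr{C}_3$, \Cref{prop:Rdkeeps3CCP} gives $R_d(K_4)\in\mathscr{C}_9$, i.e.\ $R_d(K_4)$ is a $3$-connected cubic simple plane graph with every face of degree $\le 9$; dualizing, $H_d$ is a simple plane triangulation (the dual of a $3$-connected cubic plane graph is a maximal plane graph) with vertex degree $\le 9$. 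Unwinding \Cref{defn:RDDef}, each step $R_d\to R_{d+1}$ adds a bounded number of vertices, so $|V(R_d)|=O(d)$; consequently $R_d(K_4)$ has $O(d)$ vertices and, by Euler's formula applied to a cubic plane graph, $O(d)$ faces, whence $|V(H_d)|=O(d)$.

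The heart of the argument is choosing the right partition of $K_4$. I would take $P$ to be any of the three balanced connected $2$-partitions of $K_4$, say $P=(\{1,2\},\{3,4\})$. Each of the four triangular faces of $K_4$ carries two vertices of one block and one of the other, so every face of $P$ is mixed and $M(P)=4$, the maximum possible. Moreover $C_P=\{P\}$: every flip out of $P$ reassigns a single vertex $v$, and the triangular face spanned by $v$ together with the two vertices of the block $v$ is being moved into becomes monochromatic, so each such flip is purifying; hence $P$ has no non-purifying outgoing edge in $DP_2(K_4)$ and nothing other than $P$ is reachable in $DP_2^C(K_4)$. Thus \Cref{lem:trianglebottlenecks} applies (with $n=|V(K_4)|=4$), and the fiber $U:=F_d^{-1}(P)\subseteq P_2(H_d)$ satisfies, by \eqref{eq:trianglearea} and \eqref{eq:triangleboundary}, $|U|\ge 5^{4d}$ and $|\partial_E U|\le 4\cdot 5^{3(d+1)}$.

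To finish I would run the conductance computation. The three fibers over the three balanced partitions of $K_4$ are pairwise disjoint, so at least one of them — rename it $U$ — has $|U|\le \tfrac13|P_2(H_d)|\le\tfrac12|P_2(H_d)|$ while still obeying the two bounds above. Since the flip walk on $P_2(H_d)$ is $|V(H_d)|$-regular with $|V(H_d)|\ge|V(K_4)|=4$,
\[
\Phi(P_2(H_d))\ \le\ \frac{|\partial_E U|}{2\,|V(H_d)|\,|U|}\ \le\ \frac{4\cdot 5^{3(d+1)}}{2\cdot 4\cdot 5^{4d}}\ =\ \frac{125}{2\cdot 5^{d}},
\]
and \Cref{thm:conductancetomixing} gives $t_{mix}(1/4)(H_d)\ge \frac{1}{4\Phi}\ge \frac{5^d}{250}$. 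The unordered partition chain is handled by the identical argument: the counting underlying \Cref{lem:trianglebottlenecks} is insensitive to block ordering, so the bottleneck fiber over the balanced \emph{unordered} partition of $K_4$ has area and edge boundary of the same orders, yielding the same bound for $\mathscr{P}_2(H_d)$.

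I expect the only genuine work to be the two verifications about $K_4$ in the second paragraph — that every balanced partition has all four faces mixed, and that such a partition is frozen in the purification-free digraph ($C_P=\{P\}$) — plus tracking constants carefully so that the clean value $5^d/250$ falls out; everything else is bookkeeping on top of the gadget machinery already in place.
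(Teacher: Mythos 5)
Your proposal is correct and follows essentially the same route as the paper: take $H_d=T_d(K_4)$, exhibit a partition $P$ of $K_4$ with $M(P)=4$ and $C_P=\{P\}$, invoke \Cref{lem:trianglebottlenecks} to bound the fiber $F_d^{-1}(P)$ and its edge boundary, and convert the resulting conductance bound $\Phi\le\tfrac12 5^{3-d}$ into the mixing-time bound via \Cref{thm:conductancetomixing}. The only difference is that you explicitly verify what the paper delegates to a computation and a figure — namely that the balanced partitions of $K_4$ are exactly the frozen ones (every flip out of them purifies a face) — and you spell out the structural checks on $H_d$; both are welcome fill-ins rather than a new argument.
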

\begin{proof}

\begin{figure}
\begin{tabular}{cc}
\def\svgscale{.4}{
    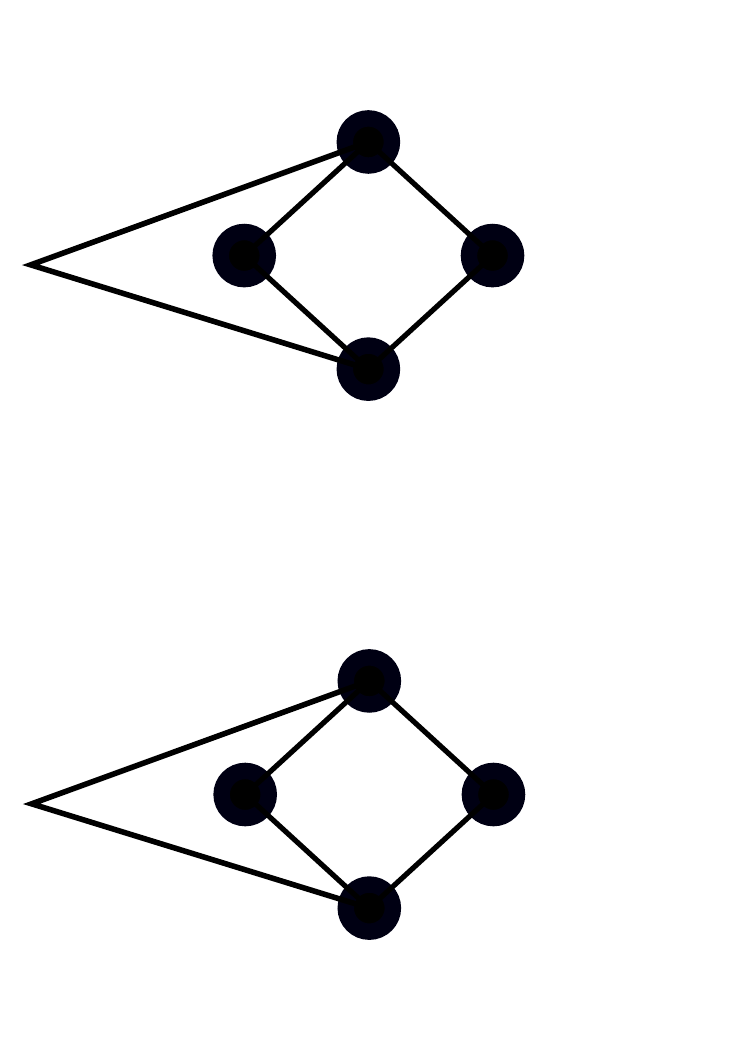} & \def\svgscale{.4}{ 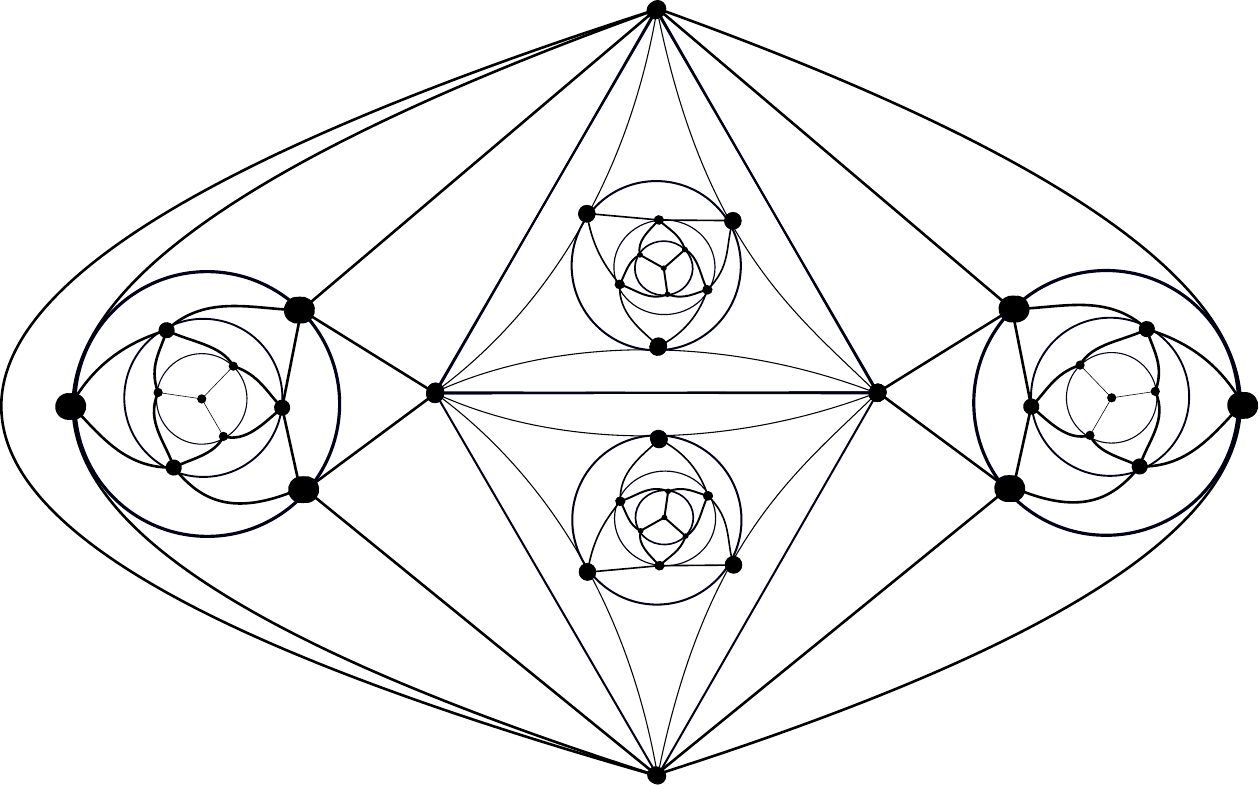} \\ 
    a) The elements of $DP_2^C(K_4)$ used in \Cref{betterbehavedexample} & b) $H_3$ of the family of \Cref{betterbehavedexample}.
\end{tabular}
    \caption{}
    \label{fig:outdegreezero}
\end{figure}
One can compute $DP_2^C(K_4)$ to find that there are three $P_i \in P_2(K_4)$ with $C_{P_i} = \{P_i \}$. \Cref{fig:outdegreezero}a) shows two such examples. %
Let $P$ be the top partition and $Q$ the bottom one in \Cref{fig:outdegreezero}a). By symmetry, we have that $|F_d^{-1}(P)| = |F_d^{-1}(Q)|$, and so since $F_d^{-1} (P) \cap F_d^{-1}(Q) = \emptyset$, it follows that $|F_d^{-1}(P)| \leq \nicefrac{ |P_2(T_d(G))|}{2}$. Hence, $F_d^{-1}(P)$, is a candidate bottleneck set. We compute, $$\frac{ | \partial_E F_d^{-1}(P) | }{ 2 n |F_d^{-1}(P)| } \leq \frac{1}{2} 5^{( d + 1) (M(P) - 1) - dM(P)}.$$ As $M(P)  = 4$, it follows that $\Phi( P_2(T_d(G)) \leq \frac{1}{2} (5^{3 - d})$. We obtain corresponding bottleneck sets in the quotient chain of unordered partitions. The result now follows by \Cref{thm:conductancetomixing}. 
\end{proof}

This last example illustrates that controlling neither the degree, nor the face degree, nor insisting on $3$-connectedness of $G$ can improve the mixing time of the flip walk on $P_2(G)$. On the other hand, these graphs still have a lot of area enclosed by length $3$ loops, which is arguably unrealistic for redistricting, except we could in principle see similar behavior around very dense cities. %
In the next section, we will use statistics inspired by the idea that certain nodes may change their assignment infrequently, as well as literature on self avoiding walks, to investigate the flip walk on connected partitions of a grid graph and on state dual graphs. %

\section{Empirical Examples}\label{Section:Empirical}

The torpid mixing of the flip walk highlighted in the previous section is not only a theoretical observation. In this section, we present several experiments showing slow mixing in practical applications of the flip walk to redistricting. The key statistic we study is closely linked to our bottleneck proofs, wherein we were able to identify sets of nodes that flip infrequently. For an empirical analysis, we will observe the frequency at which nodes flip during simulations of the flip walk on lattice like graphs (\cref{subsection:gridgraphempirical}) and on state dual graphs (\cref{subsection:empiricalrealstates}). %

To put our investigation in a larger context, we will begin by reviewing some related to self avoiding walks on the grid graph (\cref{Section:GridGraph}). This review will lead us to investigate the impact of the graph topology on the output the flip walk (\cref{section:phasetransitions}), and on the output of another popular partition sampling algorithm (\cref{section:thechoiceofmodel}).%

These experiments used the open source graph partition Markov chain software \href{https://github.com/mggg/GerryChain}{Gerrychain}. The code that produced these experiments \href{https://github.com/LorenzoNajt/Code-For-Complexity-and-Geometry-of-Sampling-Connected-Graph-Partitions}{is available online} \cite{mygithub}. %

\subsection{Grid Graph}\label{Section:GridGraph}

In this section, we will review some special features of connected partitions of the grid graph, in particular through the connection to the self-avoiding walk model from statistical physics. %

\subsubsection{Self-avoiding walks}

In one special case, the objects we are studying are closely linked to a famous topic from statistical physics, namely self-avoiding walks on lattices. A self-avoiding walk (resp. polygon) is a simple path (resp. cycle) in the integer lattice graph. These were introduced as models for polymers, and have shown themselves to be a difficult and rich object of mathematical investigation. An excellent reference for this topic is \cite{madrasself}; \cite{sokal1994monte} gives an overview of Monte Carlo methods used to investigate this topic. Self-avoiding walks on other lattices are also of interest, and we discuss one of those below. We will primarily be interested in walks that are constrained to lie in certain subsets of the lattice.

\begin{figure}
    \centering
    \def\svgscale{.2}{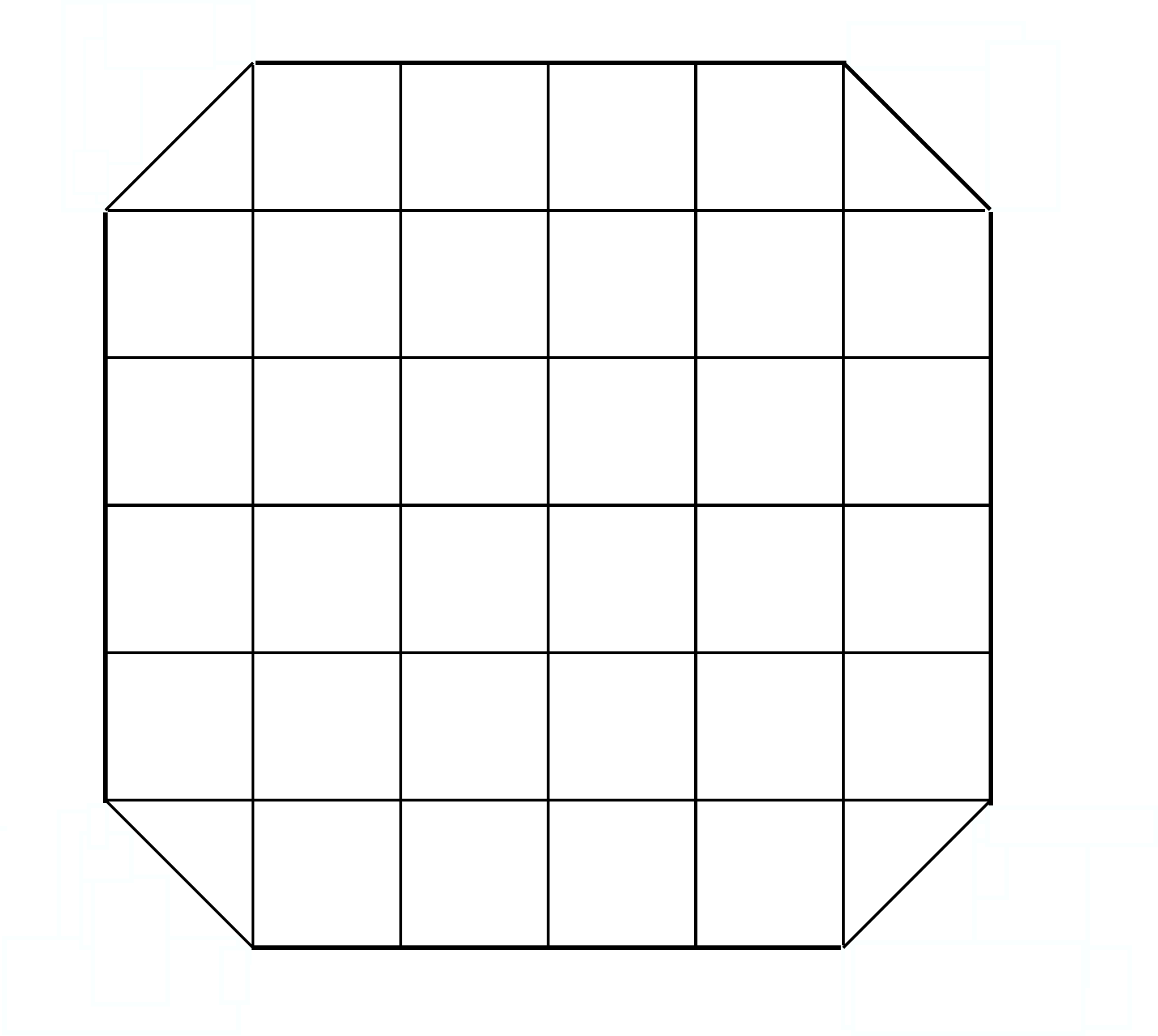}
    \caption{ $L_n$ }
    \label{fig:shavedgrid}
\end{figure}

Take $L_n$ to be the grid graph with shaved corners, as in \Cref{fig:shavedgrid}. The dual graph, $L_n^*$, is an $n-1 \times n-1$ grid graph $G_{n-1}$ with an additional ``supernode'' $V$ corresponding to the unbounded face. The simple cycles of $L_n^*$ break into two classes. First, there are those that do not contain the supernode. These can be thought of as self-avoiding polygons in $G_{n-1}$. Second, there are those that do contain the supernode. We can think of these as self-avoiding walks in $G_{n-1}$ between two points on the boundary. We will call these \emph{chordal} self-avoiding walks.

This connection to self-avoiding walks is important to us because such self-avoiding walks display phase transitions as one varies the preference for longer or shorter walks. As we will see, these phase transitions persist into the distribution $\nu_{\lambda}$ on $2$-partitions that we studied in \Cref{defn:lambdameasures}. After recalling the relevant facts and history about self-avoiding walks, we will present our experiments.

\begin{defn}[The family $P_{\lambda}$ of probability distributions on self-avoiding walks]
Fix $\lambda > 0$. Given a finite set of self-avoiding walks, $P_{\lambda}$ is a probability distribution that assigns mass to each walk $\omega$ proportionally to $x^{ |\omega|}$. Here $|\omega|$ counts the number of edges in the walk.
\end{defn}

It is known that the geometry of $P_{\lambda}$-typical chordal self-avoiding walks in $G_n$ display a phase transition as the parameter $\lambda$ is varied: depending on $\lambda$, for sufficiently large $n$, a path drawn from the distribution $P_{\lambda}$ will have certain properties with high probability.\footnote{We wish to thank a \href{https://mathoverflow.net/a/313003/41873}{helpful conversation} on MathOverflow for drawing our attention to this fact \cite{phasetransitionmathoverflow}.} To state this phase transition, we will recall an important constant called the connective constant of the square lattice.

\begin{thm}[\cite{madrasself}, Connective constant of the square lattice]
Let $c_n$ be the number of self-avoiding walks on the lattice $\mathbb{Z}^2$ that start at the origin.  The limit $\mu = \lim_{n \to \infty} \sqrt[n]{c_n}$ exists. $\mu$ is called the connective constant of the square lattice, and $\mu \approx 2.683$.
\end{thm}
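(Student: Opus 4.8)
The plan is to derive the existence of the limit from the submultiplicativity of $c_n$ via Fekete's subadditive lemma, and to treat the numerical estimate as inherited from the cited literature.

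First I would record the elementary bounds $2^n \le c_n \le 4\cdot 3^{n-1}$. The lower bound comes from considering the self-avoiding walks that use only north and east steps, which are automatically self-avoiding; the upper bound comes from noting that a walk from the origin has $4$ choices for its first step and at most $3$ choices for each subsequent step, since it may never immediately reverse an edge. In particular $c_n$ is finite and positive, and $c_n^{1/n} \in [2,3]$ for all $n$, so any limit that exists must lie in $[2,3]$.

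The key structural observation is concatenation. Given a self-avoiding walk $\omega$ of length $m$ starting at the origin and a self-avoiding walk $\omega'$ of length $n$ starting at the origin, translate $\omega'$ so that its initial point is the terminal point of $\omega$; the concatenated path has length $m+n$ but need not be self-avoiding. Conversely, every self-avoiding walk of length $m+n$ starting at the origin decomposes uniquely into its first $m$ steps (a self-avoiding walk of length $m$) and its last $n$ steps translated back to the origin (a self-avoiding walk of length $n$). This gives $c_{m+n} \le c_m c_n$. Setting $a_n := \log c_n$, the sequence $(a_n)$ is subadditive: $a_{m+n} \le a_m + a_n$. By Fekete's subadditive lemma, $\lim_{n\to\infty} a_n/n$ exists and equals $\inf_n a_n/n$; since $a_n/n = \log(c_n^{1/n}) \in (0,\log 3]$ is bounded, this limit is a finite real number. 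Exponentiating, $\mu := \lim_{n\to\infty} \sqrt[n]{c_n}$ exists and equals $\inf_n \sqrt[n]{c_n} \in [2,3]$.

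The remaining assertion $\mu \approx 2.683$ cannot be obtained from this soft argument, and in fact no closed form for $\mu$ is known; I would cite the rigorous upper and lower bounds together with the high-precision series-expansion estimates collected in \cite{madrasself}. Thus the only genuinely subtle point here is the numerical value, which is a reference to external computational work rather than something proved in this paper; the existence of the limit itself is routine.
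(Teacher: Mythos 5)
The paper does not prove this statement; it is quoted from \cite{madrasself} as background, and your argument is precisely the standard proof given there (submultiplicativity $c_{m+n}\le c_m c_n$ from the injective decomposition of a walk into its first $m$ steps and its translated last $n$ steps, followed by Fekete's subadditive lemma applied to $\log c_n$, with the numerical value $\mu\approx 2.638$--$2.683$ coming from external series-expansion computations). So you have supplied the right proof for a result the paper only cites. One trivial slip: your claim that $c_n^{1/n}\in[2,3]$ \emph{for all} $n$ is false for small $n$ (e.g.\ $c_1=4$ and $c_2=12$, so $c_2^{1/2}>3$); the bound $c_n\le 4\cdot 3^{n-1}$ only forces $\limsup_n c_n^{1/n}\le 3$, not a uniform bound by $3$. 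This does not affect the argument, since Fekete's lemma needs only subadditivity and the lower bound $c_n\ge 2^n$ guarantees the infimum of $\tfrac{1}{n}\log c_n$ is finite and at least $\log 2$, so the limit exists and lies in $[2,3]$ as claimed.
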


The phase transition for the qualitative properties of $P_{\lambda}$ on chordal self-avoiding walks (SAW)
occurs at $\lambda = 1 / \mu$, which is called the critical ``fugacity''. In particular:

\begin{itemize}
    \item Subcritical fugacity $\lambda <  1 / \mu$: A $P_{\lambda}$-typical SAW resembles a geodesic on the grid graph \cite{duminil2014supercritical}. %
    
    \item Critical fugacity $\lambda = 1 / \mu$: A $P_{\lambda}$-typical SAW resembles a sample path from chordal $SLE_{8/3}$ \cite{lawler2002scaling, duminil2014supercritical}.
    
    \item Supercritical fugacity $\lambda > 1 / \mu$: A $P_{\lambda}$-typical SAW is ``space filling'', in a sense made precise in \cite{duminil2014supercritical}. %
    
\end{itemize}

\subsubsection{Relevant statistical physics literature}

It should come as no surprise that a Markov chain as natural as the flip walk has been investigated before, especially given the interest in the self avoiding walk model. Indeed, the plane dual of the flip walk moves were applied to the study of self avoiding walks in $\mathbb{Z}^2$ with fixed endpoints (but \emph{not} constrained to lie in a bounded region) in the BFACF algorithm \cite[Section 6.7.1]{sokal1994monte}. However, the state space of this walk is infinite, unlike our setting. It was proven that this walk has infinite exponential autocorrelation time \cite{sokal1988absence}. %
Various efforts were made to improve the mixing time of the BFACF algorithm \cite[Section 6.7.2]{sokal1994monte}, since physicists were interested in sampling from the stationary distribution in addition to observing the paths of the chain itself \cite{sokal1991beat}. Additionally, Markov chains on self avoiding walks have been considered in constrained domains \cite[p.69]{sokal1994monte} just as in our setting, but it appears that little is known. In a somewhat different direction, and conditional on conjectures about the asymptotics of $c_n$, there are rapidly mixing Markov chains for uniformly sampling from unconstrained, fixed length self avoiding walks starting at the origin with free endpoint, see \cite{sokal1989exponential, randall2000self}. %

Additionally, known bounds on self-avoiding walks provide estimates for the size of $P_2(L_n)$. For lower bounds, estimates on the number of self avoiding walks \cite{madras1995critical} can be used. For the upper bound, methods in section 5.1 of \cite{bousquet2005self} can be used. Interestingly, \cite{bousquet2005self} cites \cite{abbott1978lattice} as the inspiration for their method, and \cite{abbott1978lattice} was written to address the question of how many ways one could design districting plans for a grid-like state.

\subsection{Experiments around mixing}\label{subsection:gridgraphempirical}

Our goal in this section is to document experiments about MCMC methods built on top of the flip walk. These experiments were designed to investigate whether the chain was drawing samples from its stationary distribution. %
As the proposal distribution, we used the flip walk on the \emph{simply-connected} elements of $P_2(L_n)$. We used a Metropolis score function $S( (A,B) ) = \lambda^{- |\cut(A,B)|}$, so that the stationary distribution would be $\nu_{\lambda}$. %
To tune $\lambda$ around the critical fugacity $1/\mu$, we used the estimates $\mu \in [2.625622,  2.679193 ]$ and $\mu \approx 2.63815853031$ (\cite[p. 10]{jensen2004improved}, which references \cite{jensen2003parallel,ponitz2000improved}). We also imposed balancedness constraints that prevented any partition from having more than $X \%$ of the total number of nodes beyond the number of nodes in a perfectly balanced partition, which we call an allowed population deviation (APD) of $X \%$.

These experiments are recorded in \cref{fig:subandcriticalflips} and \cref{fig:supercriticalnotmixing}. %

\begin{figure}[t]
  \centering
  \captionsetup[subfigure]{width=.45\linewidth}
  \subfloat[\emph{Very low fugacity, without tight population constraints}
    Population deviation $90 \%$, $\lambda = 1/10$, Steps $=  5,072,065,569$.  This  quickly shrunk to a small bubble in the corner, around which the walk oscillated.]{\label{ref_label1}\includegraphics[width=.3\textwidth]{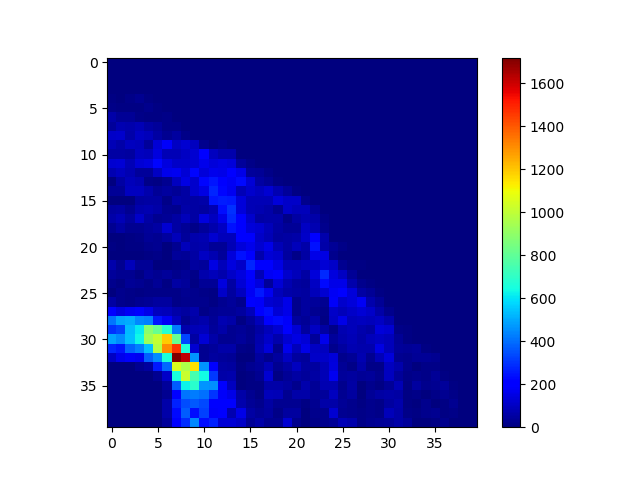}}
    \hspace{1.2in}
  \subfloat[\emph{Very low fugacity, with tight population constraints}
    Population deviation $10 \%$, $\lambda = 1/10 < 1 / \mu$, Steps $= 2,905,381,156$ This quickly rotated from the diagonal to be horizontal, and oscillated around that.]{\label{ref_label2}\includegraphics[width=.3\textwidth]{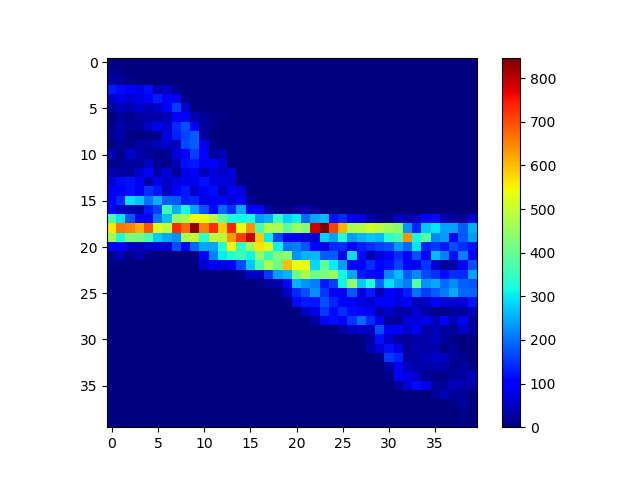}}

  \subfloat[\emph{Very low fugacity, with extremely tight population constraints}
    $APD = 1 \%$, $\lambda = 1/10 < 1 / \mu$, Steps $=2,413,374,064$.
    The population restriction made it more difficult to escape the diagonal configuration.]{\label{ref_label3}\includegraphics[width=.3\textwidth]{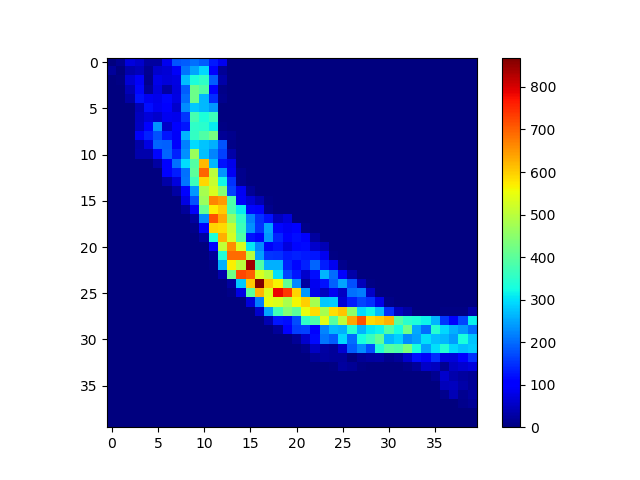}}
    \hspace{1.2in}
  \subfloat[    \emph{Critical fugacity, with loose population constraints}
    $APD = 50\%$, $\lambda = 1 / \mu$, Steps $= 1,405,646,093$.
    The walk oscillated around the diagonal.]{\label{fig:criticalfugacityloose}\includegraphics[width=.3\textwidth]{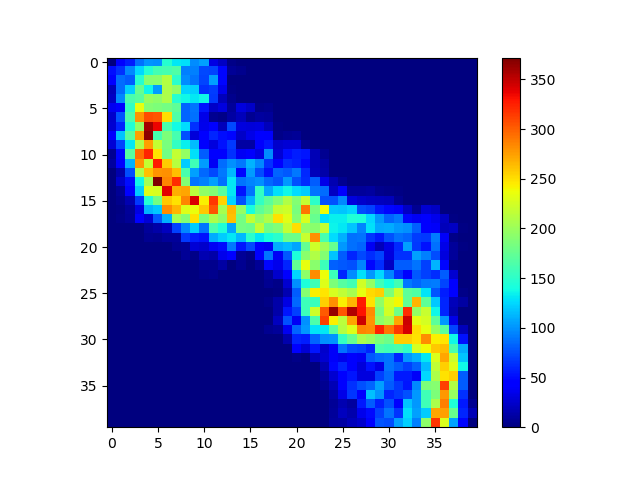}}
      \caption{\label{fig:subandcriticalflips}These examples start from an upper left to bottom right diagonal partition of a $40 \times 40$ grid graph. Each node kept track of the number of times it was flipped, and this number is reported and colored according to the key. Some interpretation of the history of the path revealed by the figures is provided. In all of these examples a symmetry argument demonstrates that the chain has evolved into some metastable region $P_2$.%
      }
\end{figure}

\begin{figure}
    \centering
    \begin{tabular}{cc}
        \includegraphics[width=.3\textwidth, height = .3\textwidth, angle = 90]{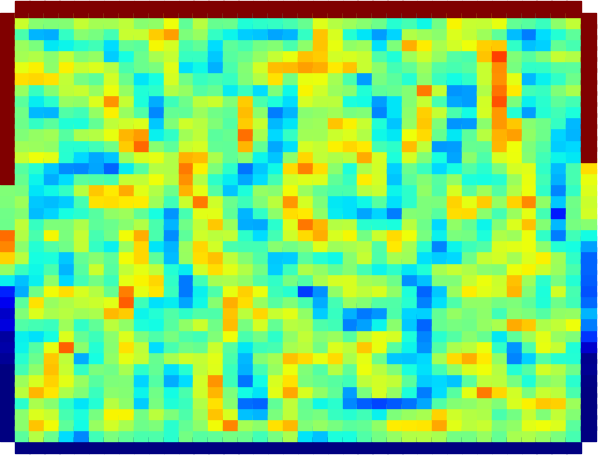} & \includegraphics[height=.3\textwidth]{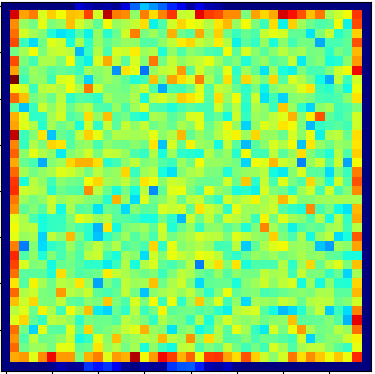}\\
        a) Average block & b) Flips
    \end{tabular}
  \caption{Two measurements of a single run. $APD = 90\%$, $x = 1$, steps $=194,390,536$, started from a vertical partition into red and blue blocks. a) Each node displays the color of its average block: more red if frequently in the red block, and more blue if frequently in the blue block. The left side, which started red, stays red through the entire run, while the right side stays blue. Considering the rotational symmetry and length of the run, this is a strong indication of a large mixing time. b) Each node records the number of times it flipped, with red indicating $100+$, green/yellow indicating around $60-80$, and blue indicating $<20$. One can see that most of the activity happens near the boundary of the square, and that the endpoints of the boundary of the partition barely move.}\label{fig:supercriticalnotmixing}
\end{figure}

\subsubsection{Kansas State-Dual Graph}\label{subsection:empiricalrealstates}

In this section, we repeat the experiments we performed on the grid graph the state dual graphs (\Cref{section:CongressionalMotivation}) of Kansas, which was chosen because its state dual graph resembles the grid graph.
For ease of visualization, we display the partitions and flip statistics on the underlying map rather than the state dual graph. %
The main features to observe are the slow mixing of the chain around the state space, and the phase transitions. See \cref{fig:kansas_experiemnts}. %

\begin{figure}
    \centering
    \begin{tabular}{cc}
\includegraphics[width=.4\textwidth]{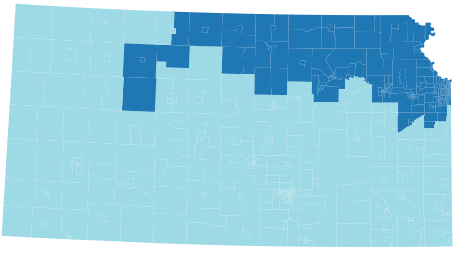}& \includegraphics[width=.4\textwidth]{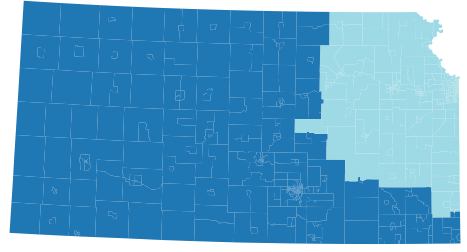} \\
\includegraphics[width=.4\textwidth]{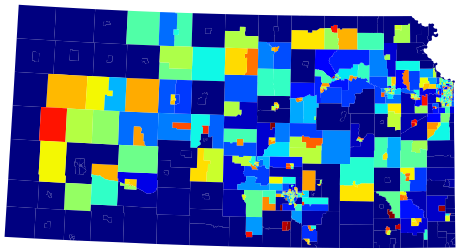}  &   \includegraphics[width=.4\textwidth]{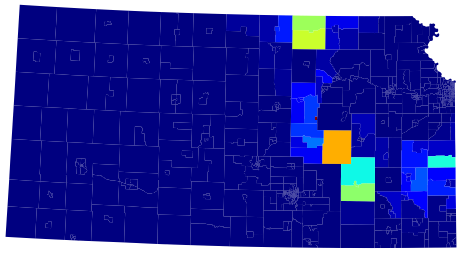} \\
\includegraphics[width=.4\textwidth]{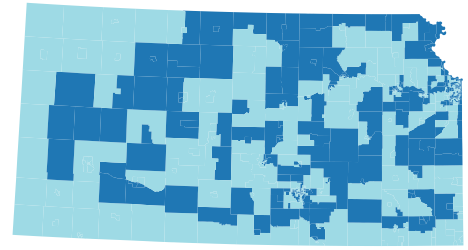} & \includegraphics[width=.4\textwidth]{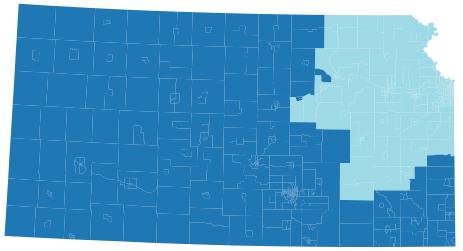}\\
    \text{a) $\lambda = 1$, $APD = 90 \%$. Steps$=93,415,894$} & \text{b) $\lambda = .379$, $APD = 90 \%$. Steps$=1,744,003,380$ } \\
    \end{tabular}
    \caption{Two runs of flip walk based $MCMC$ on the state dual graph of Kansas. Top: Starting plan. Middle: Counting Flips. Bottom: Ending plan.}
    \label{fig:kansas_experiemnts}
\end{figure}

\subsection{Graph topology and phase transitions}\label{section:phasetransitions}

So far we have discussed phase transitions of self-avoiding walks and connected $2$-partitions in the grid graph, and remarked that the critical fugacity occurs at $1/ \mu \approx .379$. However, there are other lattices, and for them the phase transitions occur at different values. %
Thus, one may wonder about the behavior of a self avoiding walk in a ``Frankengraph'' such as in \Cref{fig:Frankengraph}, which consists of triangles on the top, and squares on the bottom. In \Cref{fig:FrankenGraphExp} we show that we can find a value $\lambda$ where the part of the partition boundary in the square grid acts super critically, and the part in the triangular acts subcritically.\footnote{Connected $2$-partitions of the triangular lattice correspond to self avoiding walks in the dual lattice, which is a hexagonal lattice. Since \cite{duminil2012connective} puts the connective constant of the hexagonal lattice at $\sqrt{2 + \sqrt{2}}$, the phase transition for partitions of the triangular part occurs around $.541$. The authors wish to thank \href{https://people.eecs.berkeley.edu/~sarah.cannon/index.html}{Sarah Cannon} for discussions that clarified this behavior.}

These experiments about phase transitions in geographic compactness scores fit in with other observations about compactness, namely that many features of the scores are not robust under changes of scale, geometry or other implementation parameters \cite{duchin2018discrete, bar2019gerrymandering, barnes2018gerrymandering}. Here we have raised an additional issue, which is that calibration of compactness score parameters in relation to the topology of the underlying graph can have dramatic affects on ensembles. These observations should be considered in light of \cite{barnes2018gerrymandering}, which analyzed the impact of decisions regarding the calculation of these compactness scores, and found that apparently small choices in implementation could have large effects.

\begin{figure}[H]
    \centering
    \includegraphics[scale = .3]{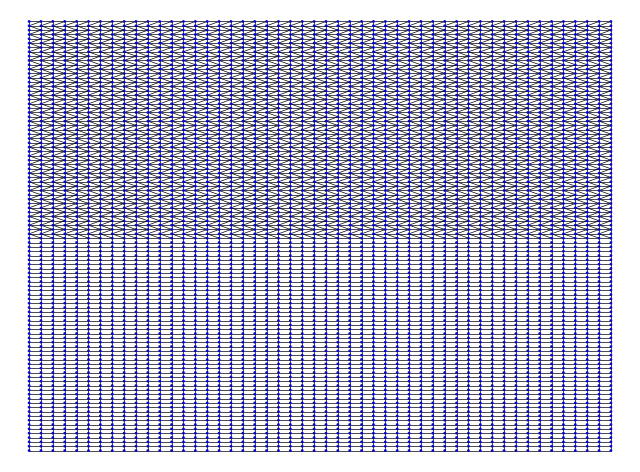}
    \caption{Frankengraph. The top is is a triangular lattice and bottom is a square lattice. Both pieces are $50 \times 50$. See \cref{fig:FrankenGraphExp} for results of running the flip walk.}
    \label{fig:Frankengraph}
\end{figure}

\begin{figure}
    \centering
    \begin{tabular}{cc}
       \includegraphics[width = .3\textwidth, height=.3\textwidth]{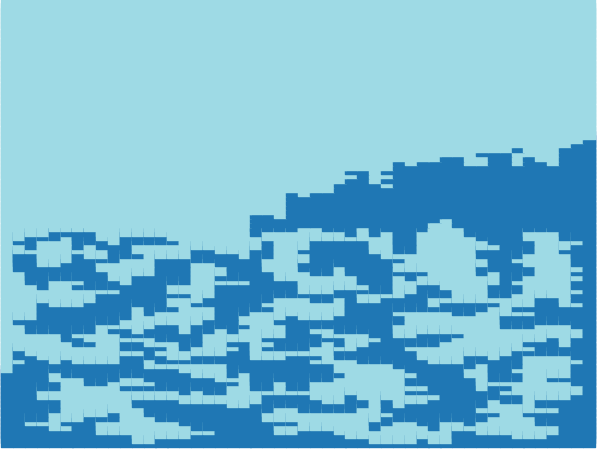}  &  \includegraphics[width =.3\textwidth, height=.3\textwidth]{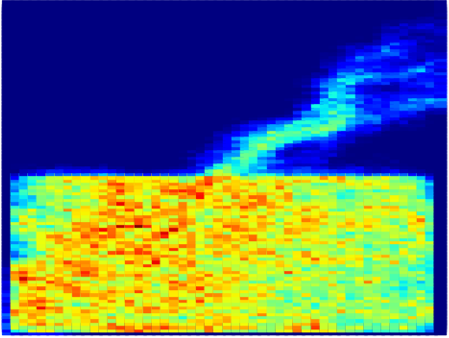}
    \end{tabular}
    \caption{A flip walk based MCMC run with $\lambda = 1/2$, starting from the diagonal partition of the Frankengraph. $90 \%$ population constraint. $17,768,956,990$ steps.}
    \label{fig:FrankenGraphExp}
\end{figure}

\subsubsection{The choice of model graph}\label{section:thechoiceofmodel}

The example of the $\lambda^{\cut}$ distribution and the Frankengraph teaches us that the choice of graph used to discretize the same underlying geography can dramatically affect the distribution over partitions produced by a fixed algorithm. However, in those cases the geographic reasonableness of the distribution also changed dramatically, making it easy to classify a \emph{single} partition as arising from one discretization or the other. In this section, we look at a different distribution over partitions of a rectangle, with the property that changing the discretization noticeably changes the distributions, but such that differences between the two distributions cannot be easily detected by observing natural geometric properties of individual plans.

We will now explain this other sampling algorithm, which underlies the method used in \cite{deford2019redistricting, recomb}. Let $\UST$ be an algorithm that takes a graph $G$ and returns a uniform spanning tree, for example using Wilson's algorithm \cite{wilson1996generating}. Let $\MST$ refer to the minimum spanning tree obtained by picking iid $\Uniform([0,1])$ edge weights. Let $\mathrm{Tree}(G)$ refer to either $\UST(G)$ or $\MST(G)$. Removing an edge $e$ from $\Tree(G)$ gives a forest with two components, and hence an element of $P_2(G)$. If we repeat this algorithm, only selecting those edges that provide an $\epsilon$-balanced connected $2$-partition of $G$, then we obtain a distribution over $P^{\epsilon}_2(G)$. %
For both $\UST$ and $\MST$, this distribution over partitions has some favorable properties, such as its concentration on partitions with small edge-cuts, which have made it appealing as a tool for sampling districting plans \cite[Section 3.1.1]{deford2019redistricting}. \footnote{A fact partially reflected in \cite[Corollary 2] {kenyon2000asymptotic}; the connection being that the probability that a $\UST$-partition $(A,B) \in P_2(G)$ is chosen is proportional to $T(A) T(B) cut(A,B)$, where $T(\_)$ counts the number of spanning trees, and based on this one can rearrange the asymptotics in Kenyon's paper to deduce that among the rectilinear partitions those with smaller perimeter are asymptotically preferred. However, rectilinear partitions are a set of extremely small measure in the $\UST$-partition distribution, and so this explanation for concentration of $\UST$-Partition on smaller fundamental cut-sets is only partial.} %
We will call this distribution $\UST$-partition or $\MST$-partition, or $\Tree$-partition if we refer to either.%

We construct a sequence of graphs from the $36 \times 36$ grid graph $G$ by by triangulating a set of its faces in the following way. Fix some $w \in [0,3]$, and suppose that the nodes of the grid graph are labelled by $(i,j)$, $i,j \in [0,35]$. For each $(x,y) \in V(G)$, if $12 \leq y \leq 20$ and $0 \leq x \leq 6w$ or $34 - 6w \leq x \leq 34$, add an edge $((x,y), (x + 1, y +1))$ if $x$ is even, and an edge $((x,y), (x + 1, y - 1))$ if $x$ is odd. We call the resulting graph $G_w$, and $w$ is referred to as the \emph{width}. We think of partitions of $G_w$ as modelling the same state, but with different choices regarding adjacency between geographic units. By changing the width, we can see a change in the shape of a typical $\Tree$-partition. The results are displayed in \Cref{fig:treegate}, where we use a number in $[0,3]$ to quantify the width of each half of the gate; the entire graph has dimensions $36 \times 36$, and the length of each half-gate is $width \times 6$. %
The effect is very clearly that closing the gap in the squeezes the boundary of the partitions in between the two halves of the gate. %

In \Cref{fig:treegatevotes} we show that $\MST$-partitions and $\UST$-partitions produce different partisan outlier measurements, despite the similarity between the description of the algorithms. %
In particular, if the underlying graph is is the one with width $2$, and if the distribution chosen as a baseline for outlier analysis is $\MST$-partition, then most of the time the seat share is $1$, and many random samples from $UST$-partition would be considered extreme outliers. We discuss this further in \Cref{section:EOH}.

We recall from \Cref{section:CongressionalMotivation} that in the analysis of a districting plan, one starts with a geographic entity, a U.S. state, that is broken into small geographic units. The choice of how the state is broken into small units determines an adjacency graph, and we modelled partitions of this state as connected partitions of that adjacency graph. If we intend to draw conclusions about the political geography of the underlying geographic entity, then one might hope that the impact of this modelling step is relatively tame. \Cref{fig:treegatevotes} shows that this is not the case for the two examples above, potentially muddying the description of what outlier methods measure.

In political redistricting, there is occasionally reason to refine the basic geographic units, for example to balance population. A similar situation where changes to graph topology can occur is where choices have to be made about which units to connect across bodies of water, such as in \cite{calderamathematics}, or when deciding between using rook or queen adjacency in the construction of the state dual graph.\footnote{That is, between declaring two units adjacent if they have a common edge, or if they have any common point.} There are also a variety of resolutions on which maps can be viewed, from counties to tracts to census blocks. All of this could potentially impact outlier analysis in a way similar to the Frankengraph example and \Cref{fig:treegate}. Thus, someone sampling partitions of state dual graph to investigate the political geography of redistricting should keep in mind that they are making a potentially significant choice at the level of the choice of model graph. 
To comport with best practices in statistics \cite{gelman2017beyond}, these decisions should be made as transparently and impartially as possible.

In many U.S. states, precincts, which are the atomic geographic units where electoral data is reported, are drawn at the discretion of the local municipalities. The examples in this section show that the people who choose the smaller geographic units potentially have a lot of control over the results reported by outlier methods. If outlier methods become standard, it would open the possibility of \emph{metamandering} through the deliberate manipulation of these boundaries. No comprehensive analysis has been done to understand the impacts of these decisions.

\begingroup
\setlength{\tabcolsep}{.05pt}
\renewcommand{\arraystretch}{.5}
\begin{figure}[t]
    \centering
    \begin{tabular}{cc|cc}
       \includegraphics[scale = .1, angle=90]{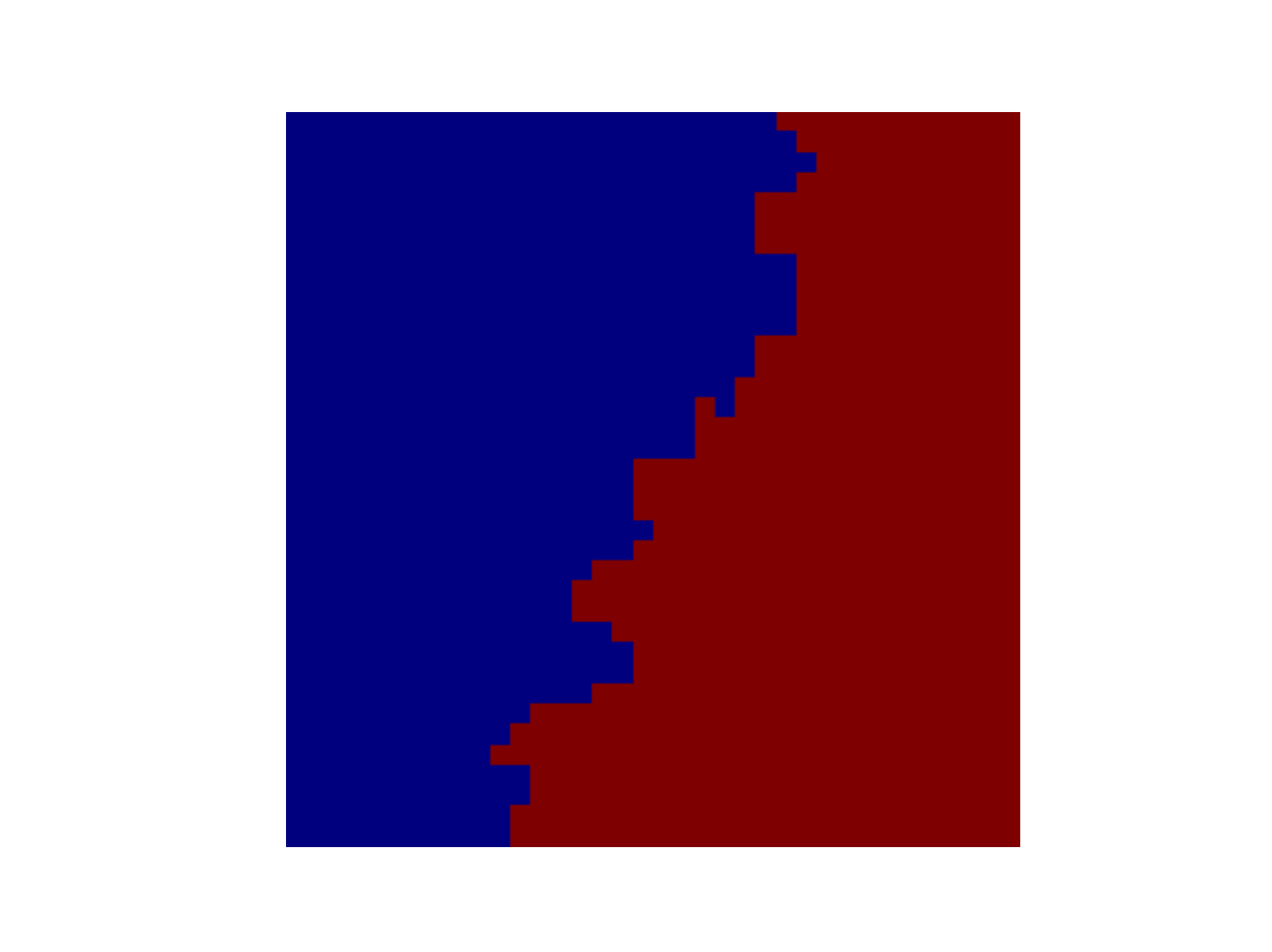} &  
       \includegraphics[scale = .2]{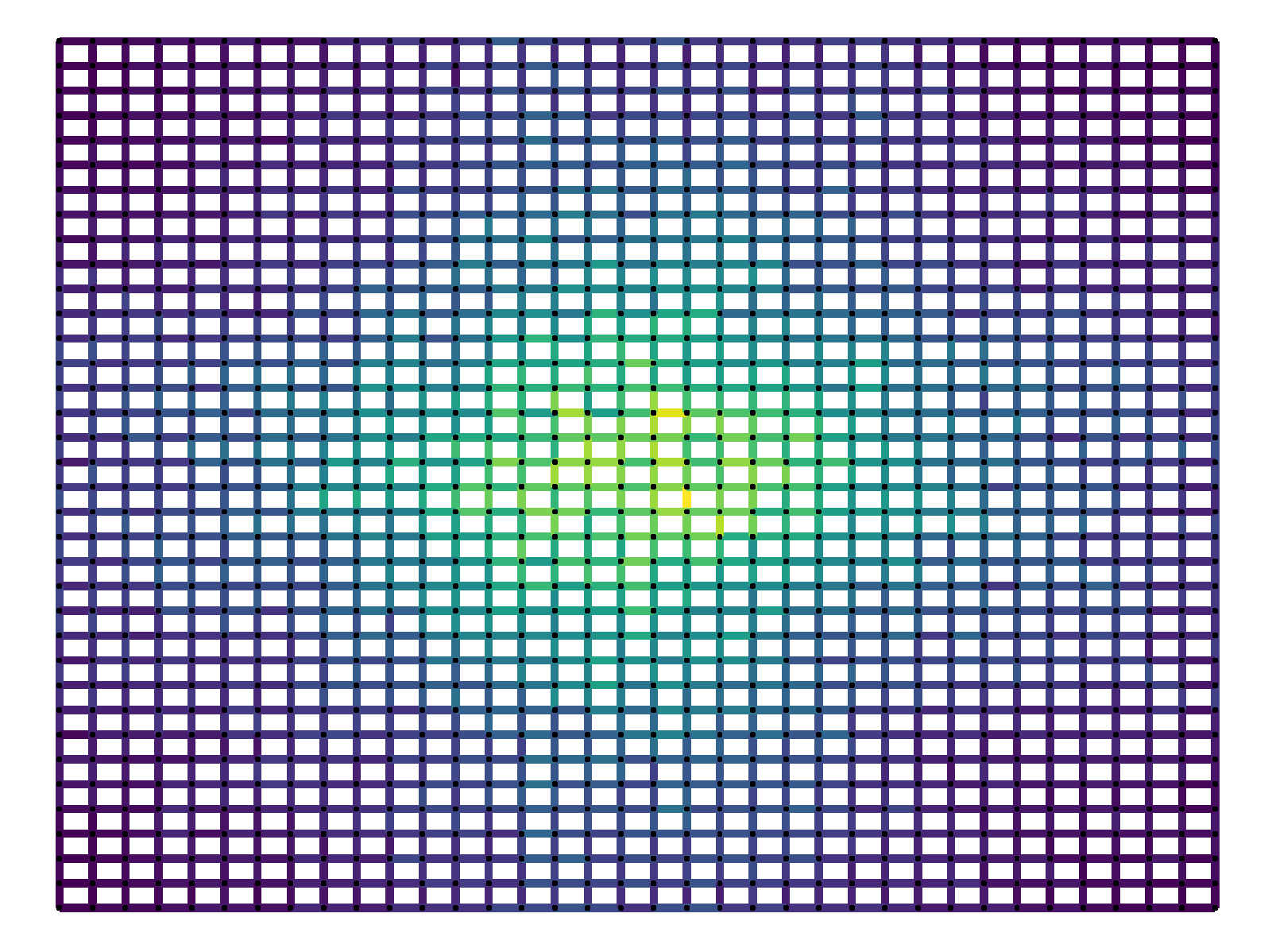}
       & \includegraphics[scale = .1, angle=90]{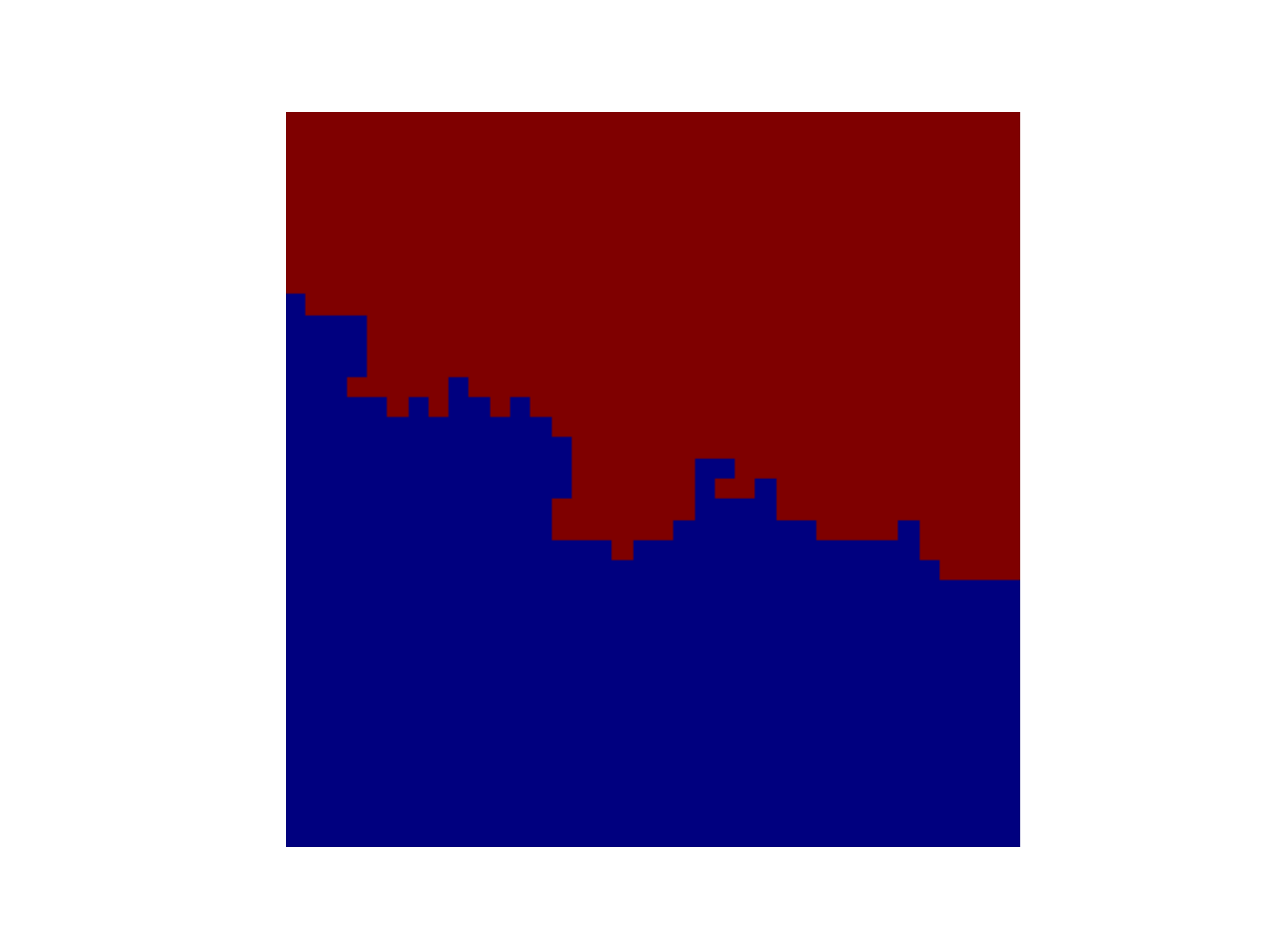} &
       \includegraphics[scale = .2]{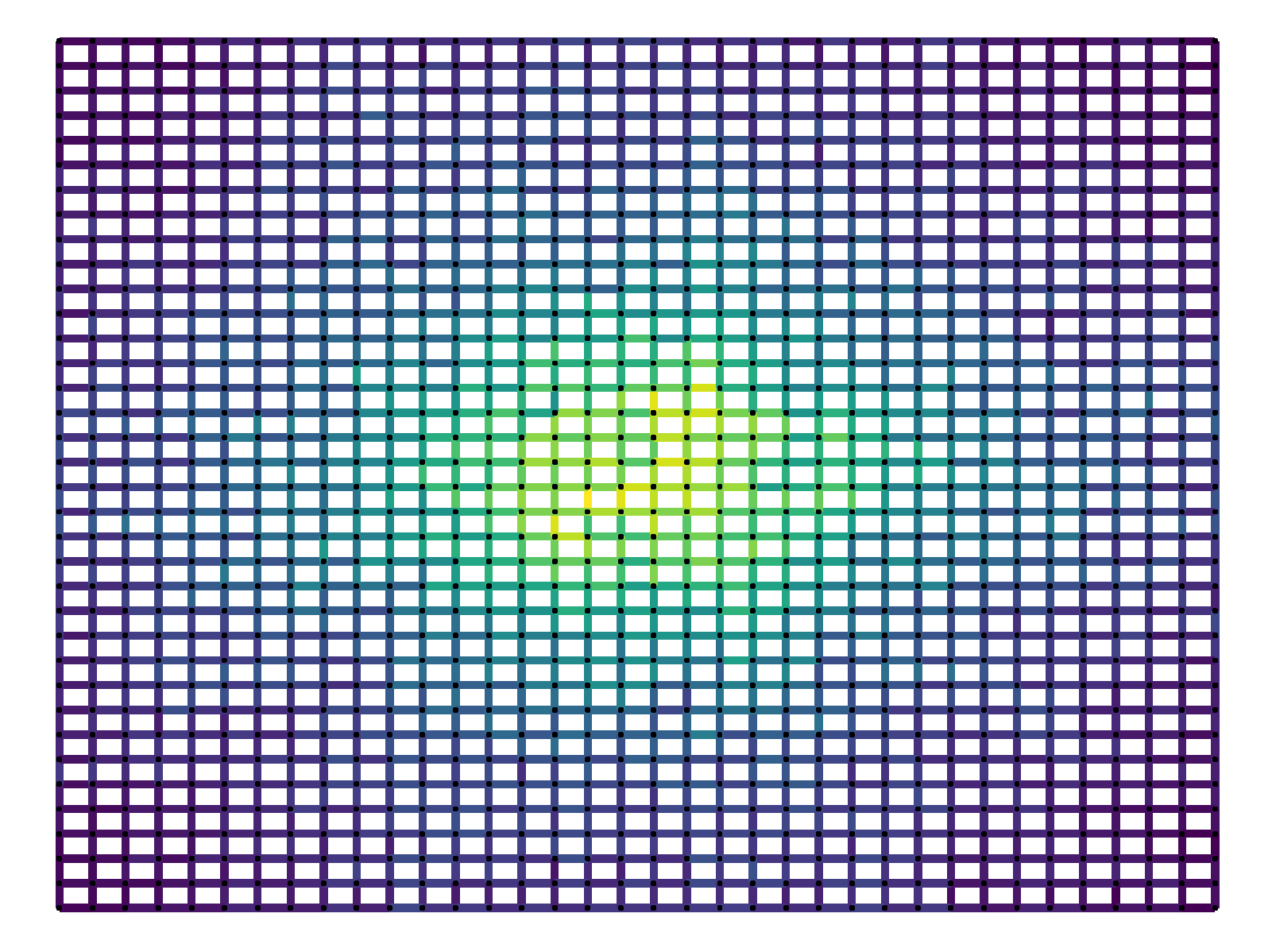}
       \\
       \includegraphics[scale = .1, angle=90]{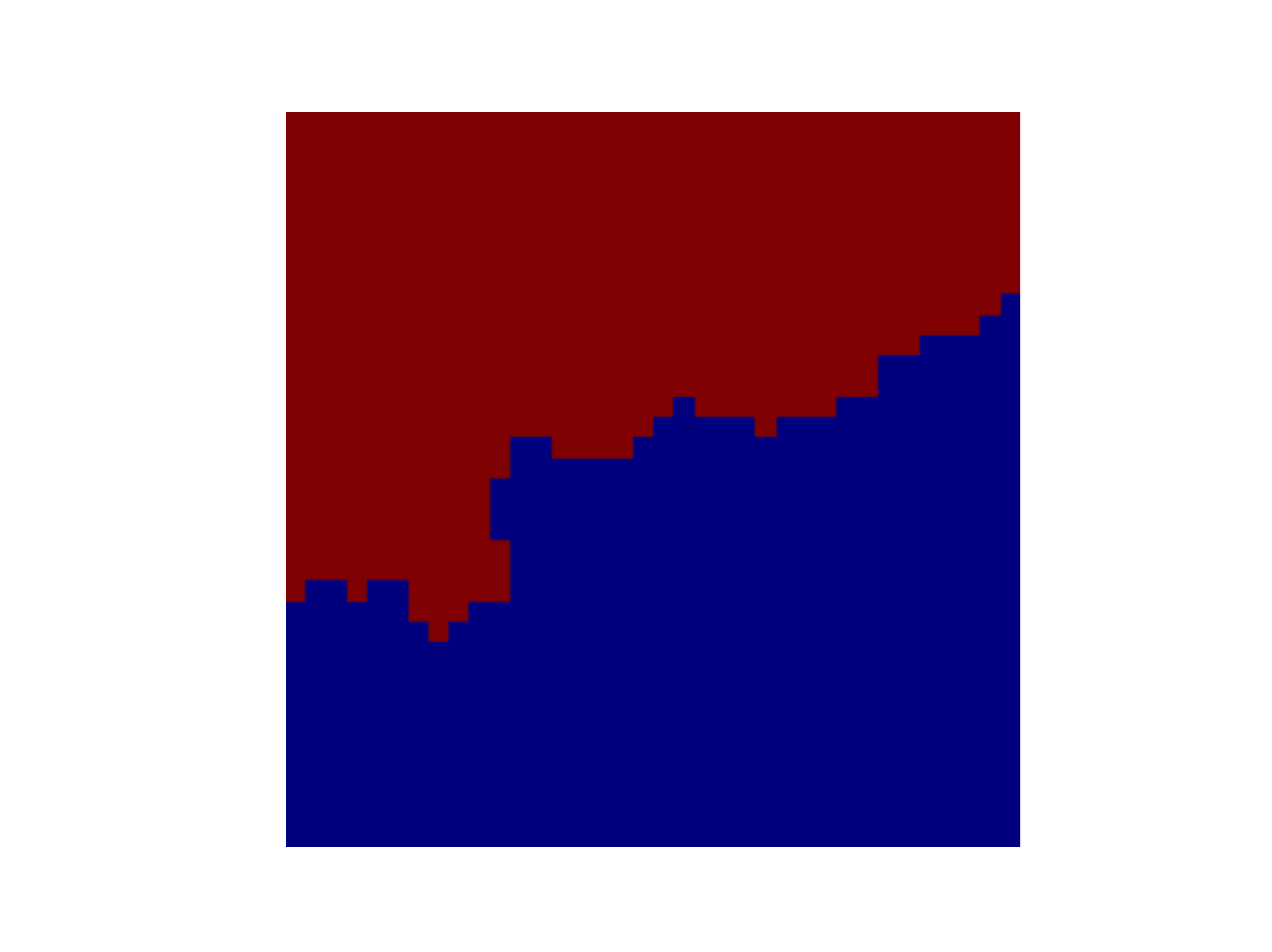} &
       \includegraphics[scale = .2]{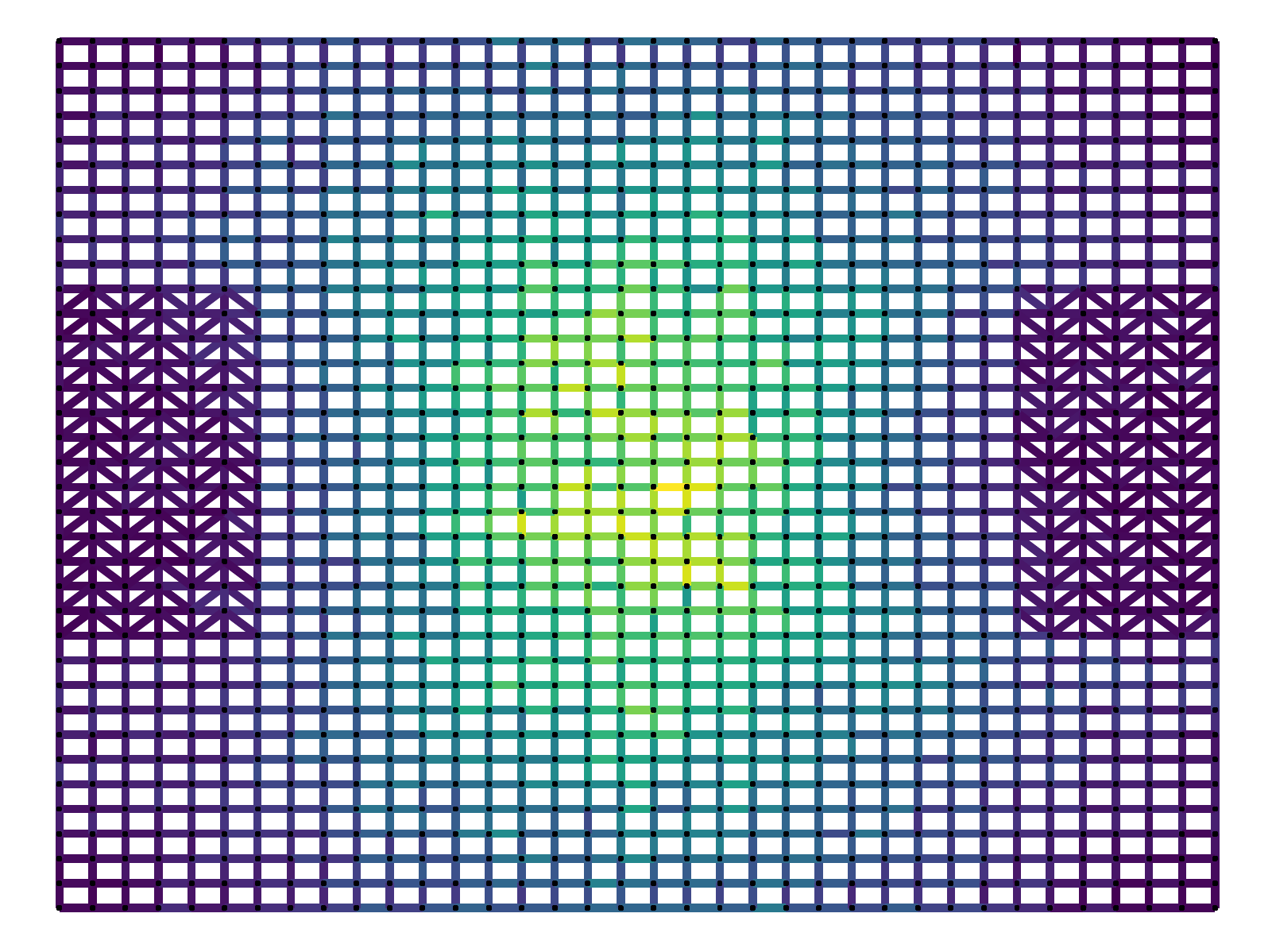}  & \includegraphics[scale = .1, angle=90]{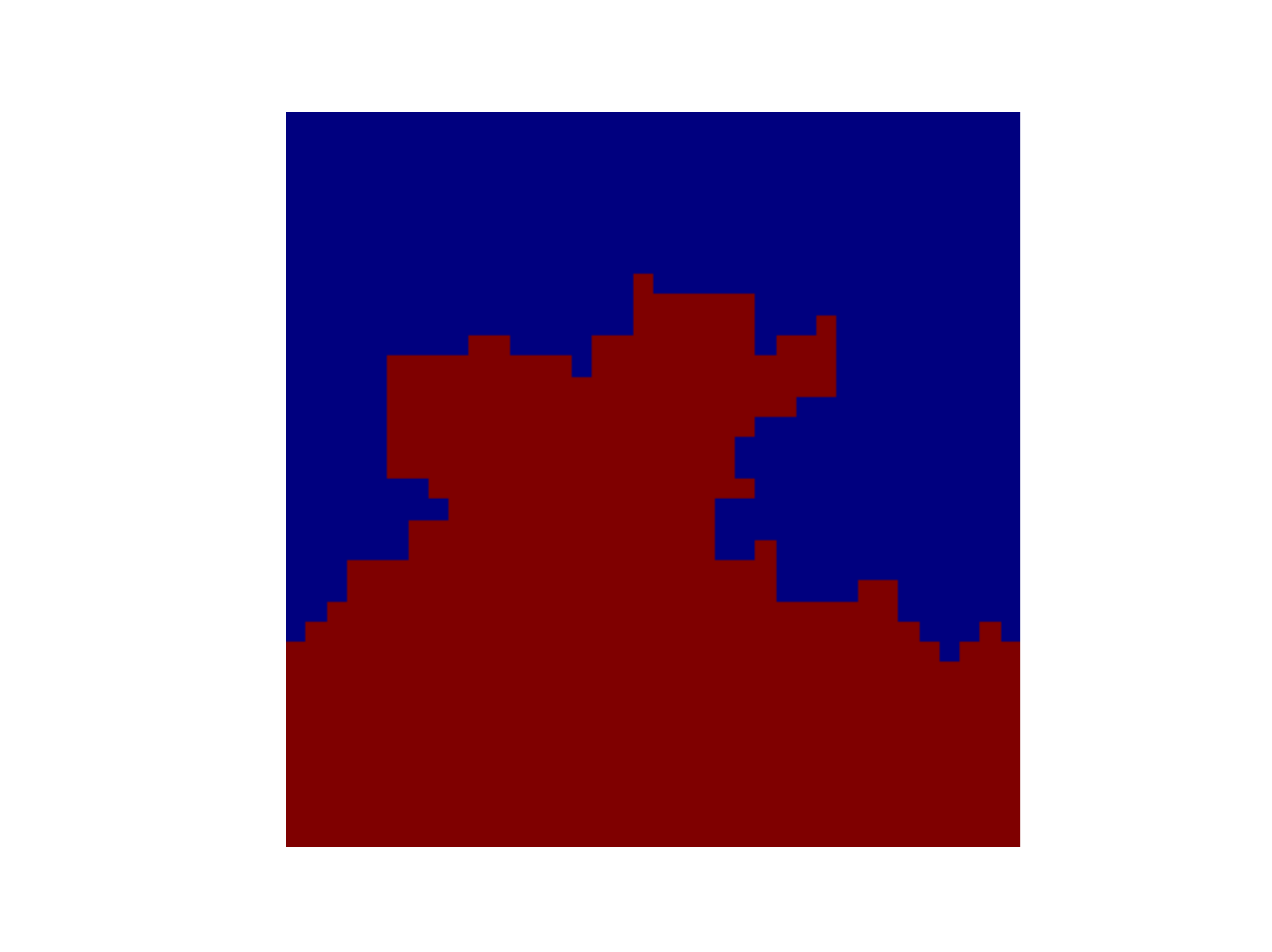}  &
       \includegraphics[scale = .2]{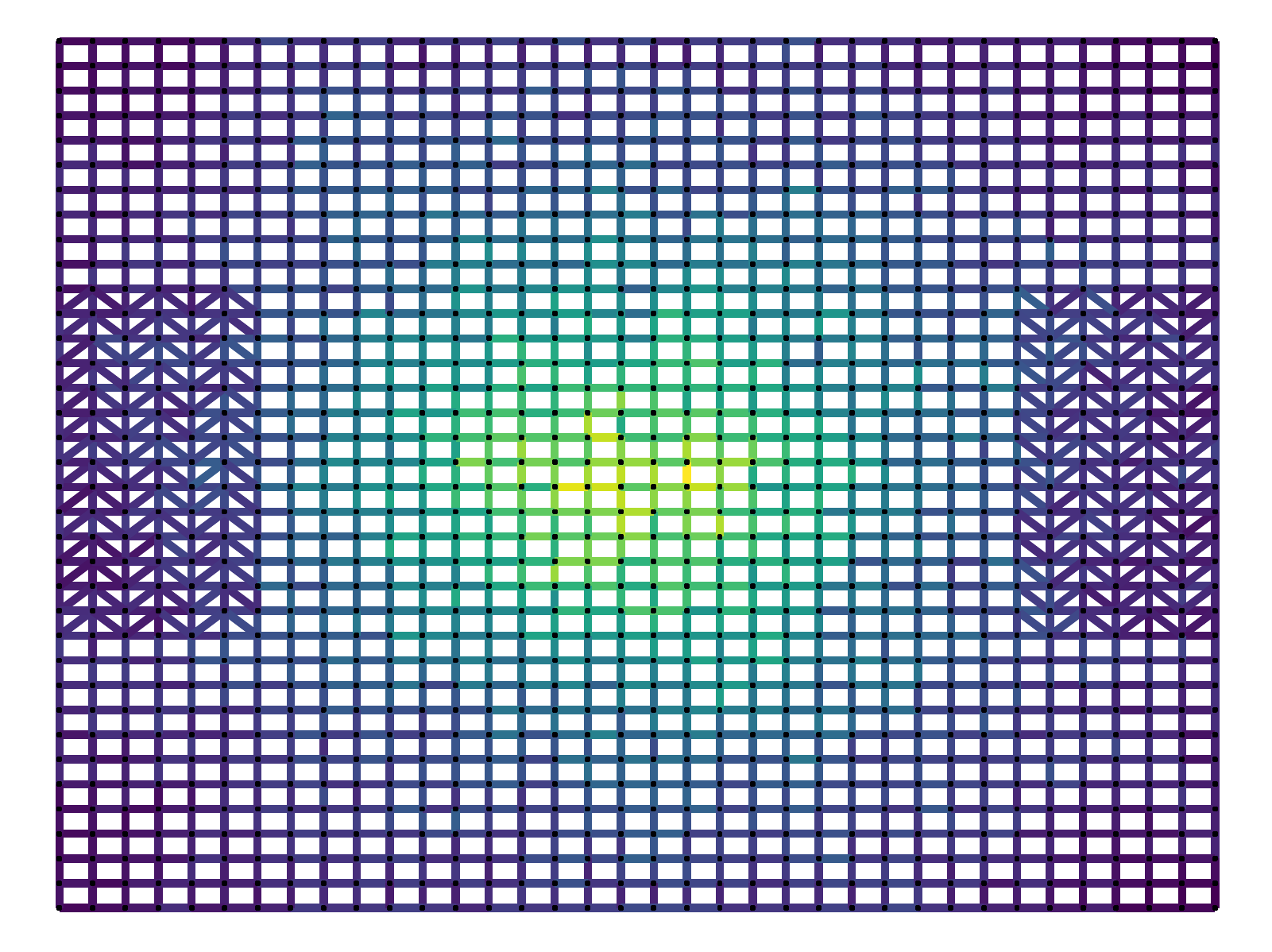} \\
       \includegraphics[scale = .1, angle=90]{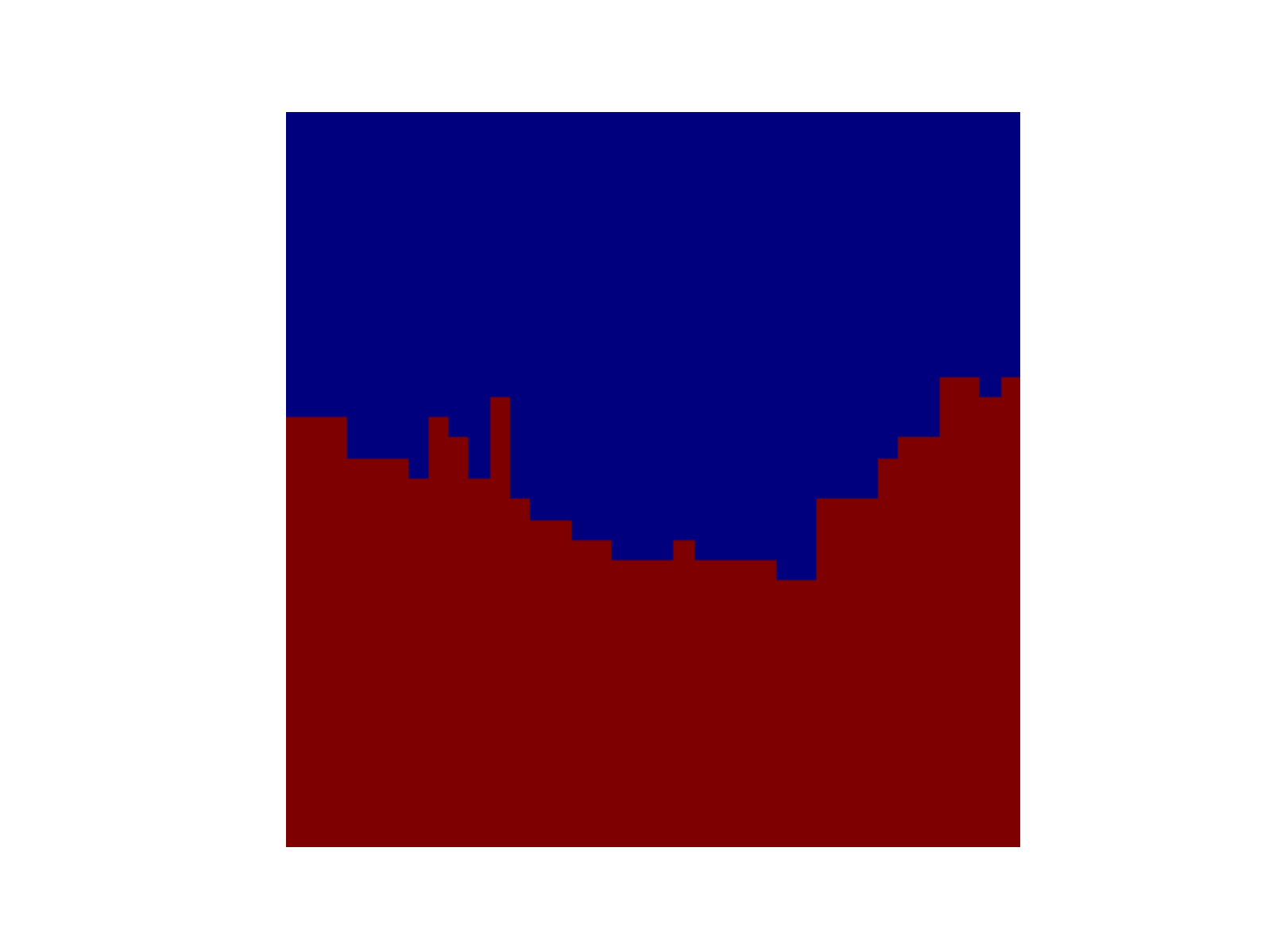} &
       \includegraphics[scale = .2]{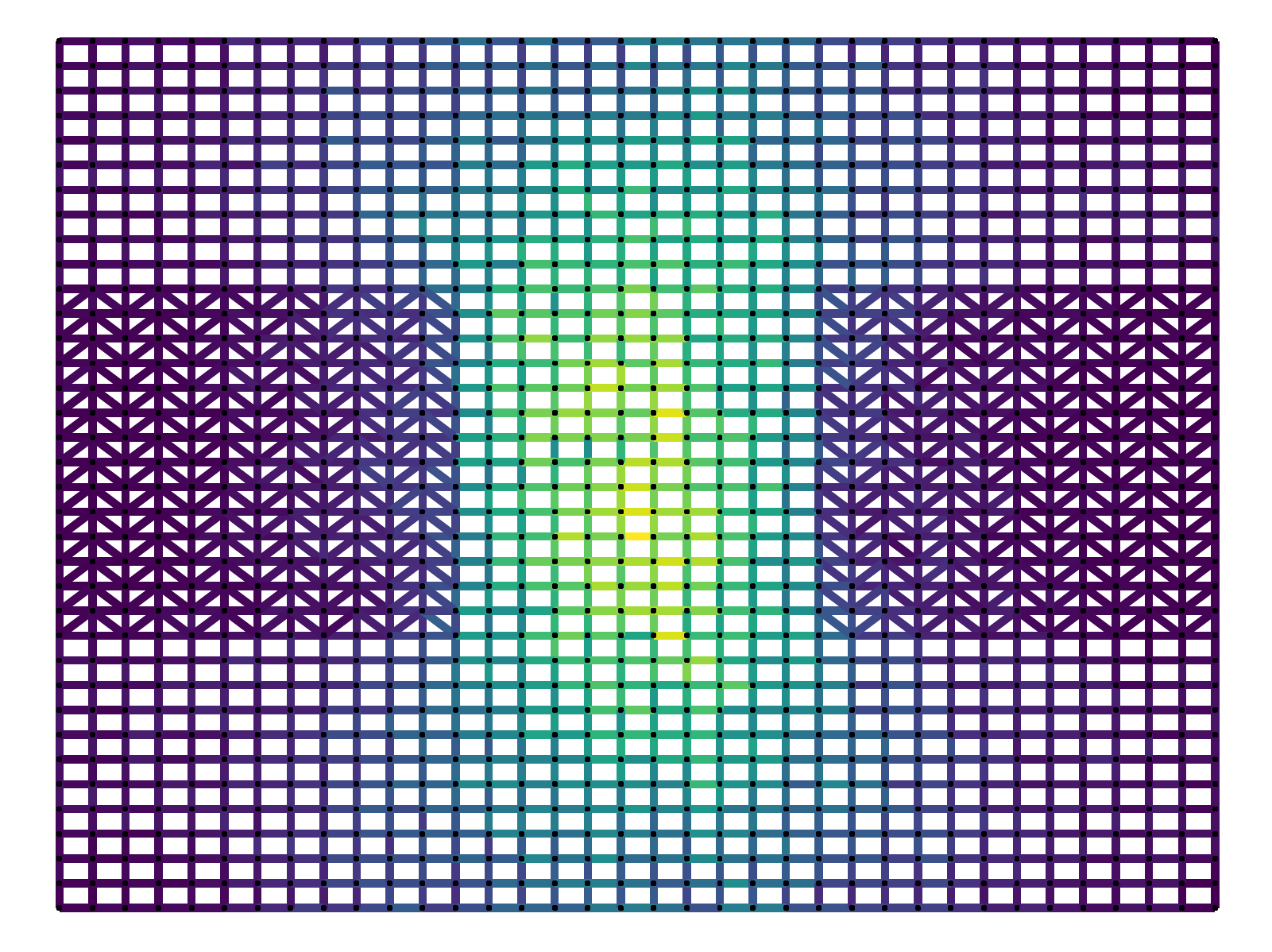}  & \includegraphics[scale = .1, angle=90]{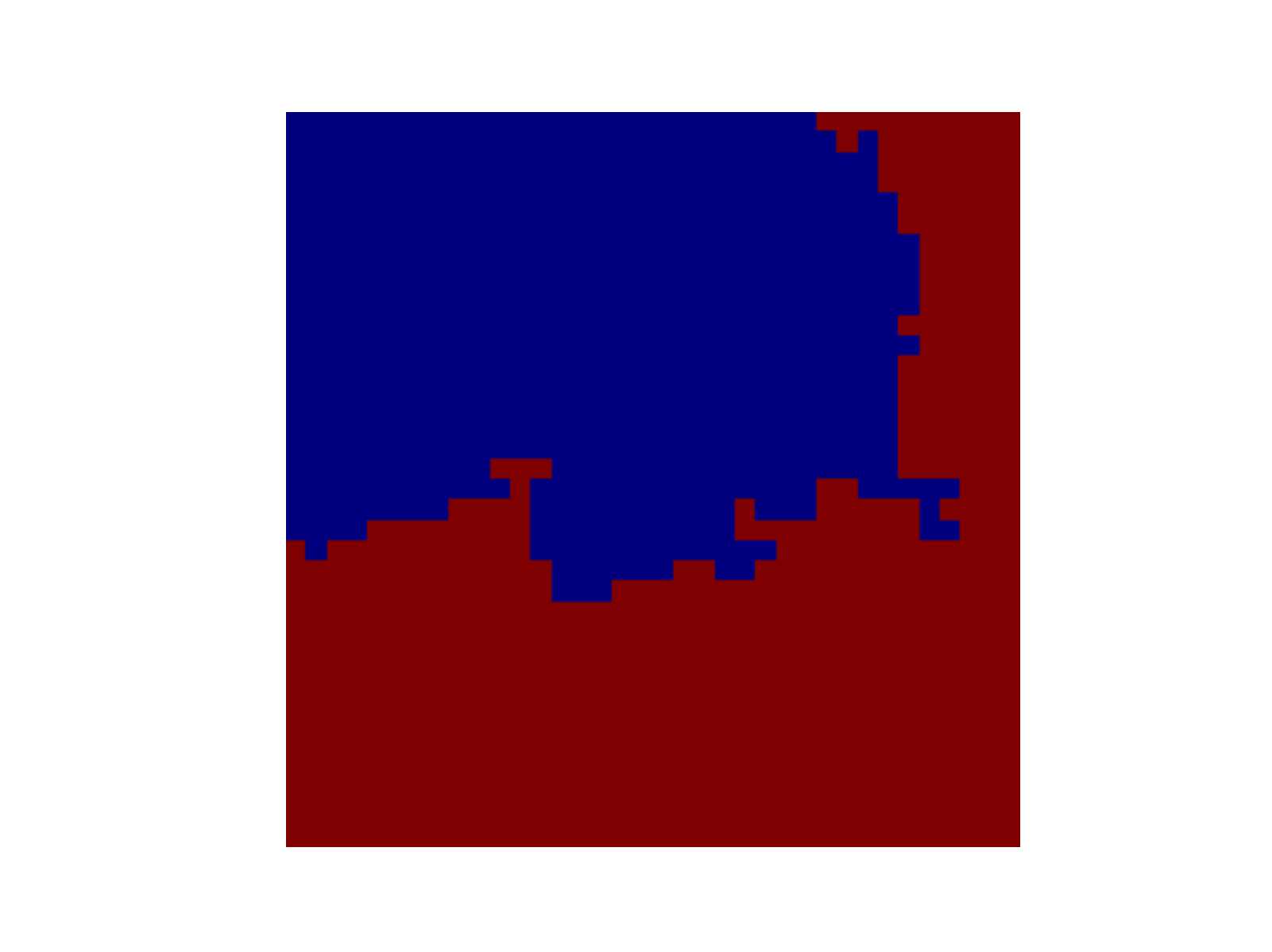}  &
       \includegraphics[scale = .2]{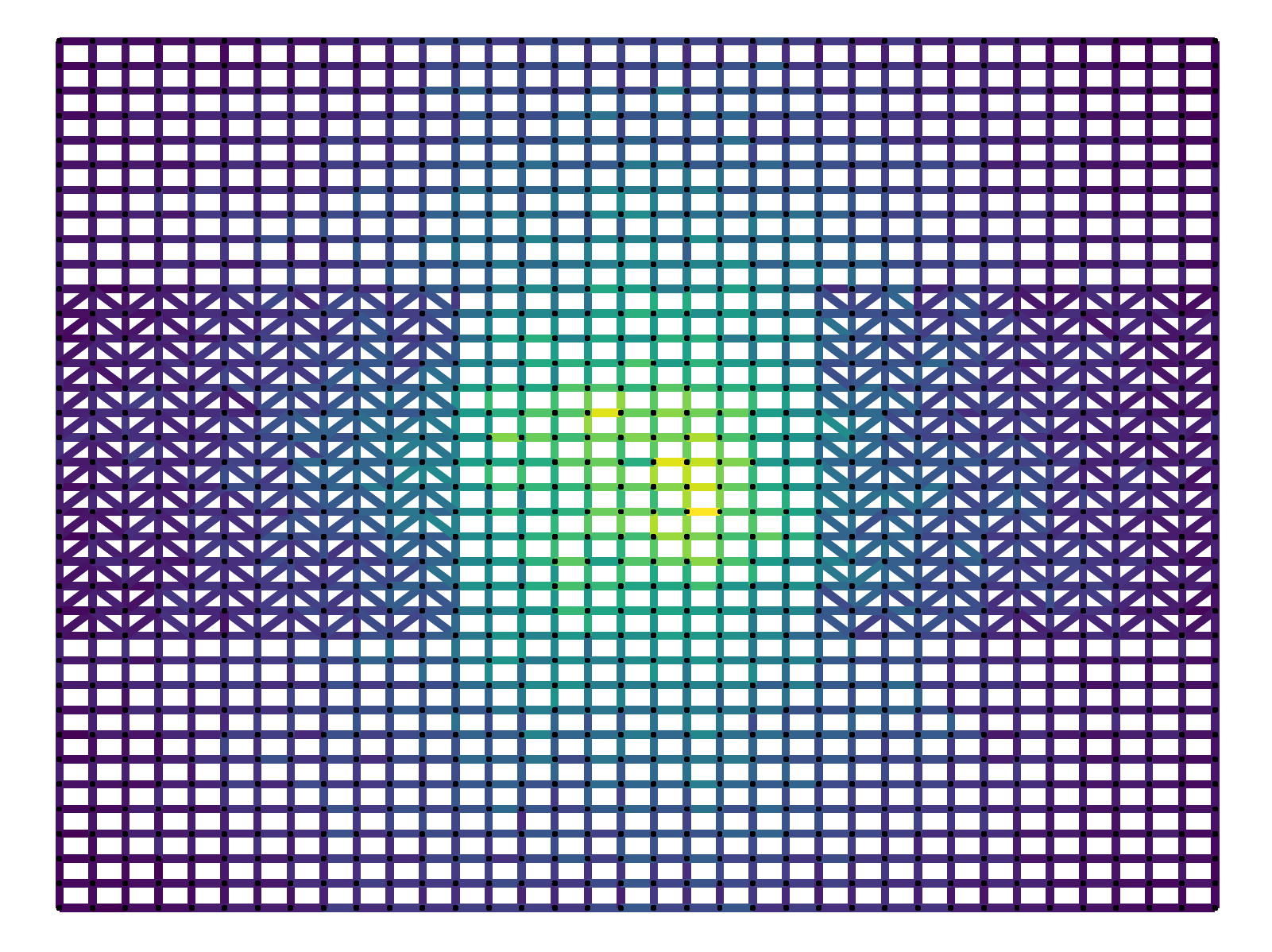} \\
       \includegraphics[scale = .1, angle=90]{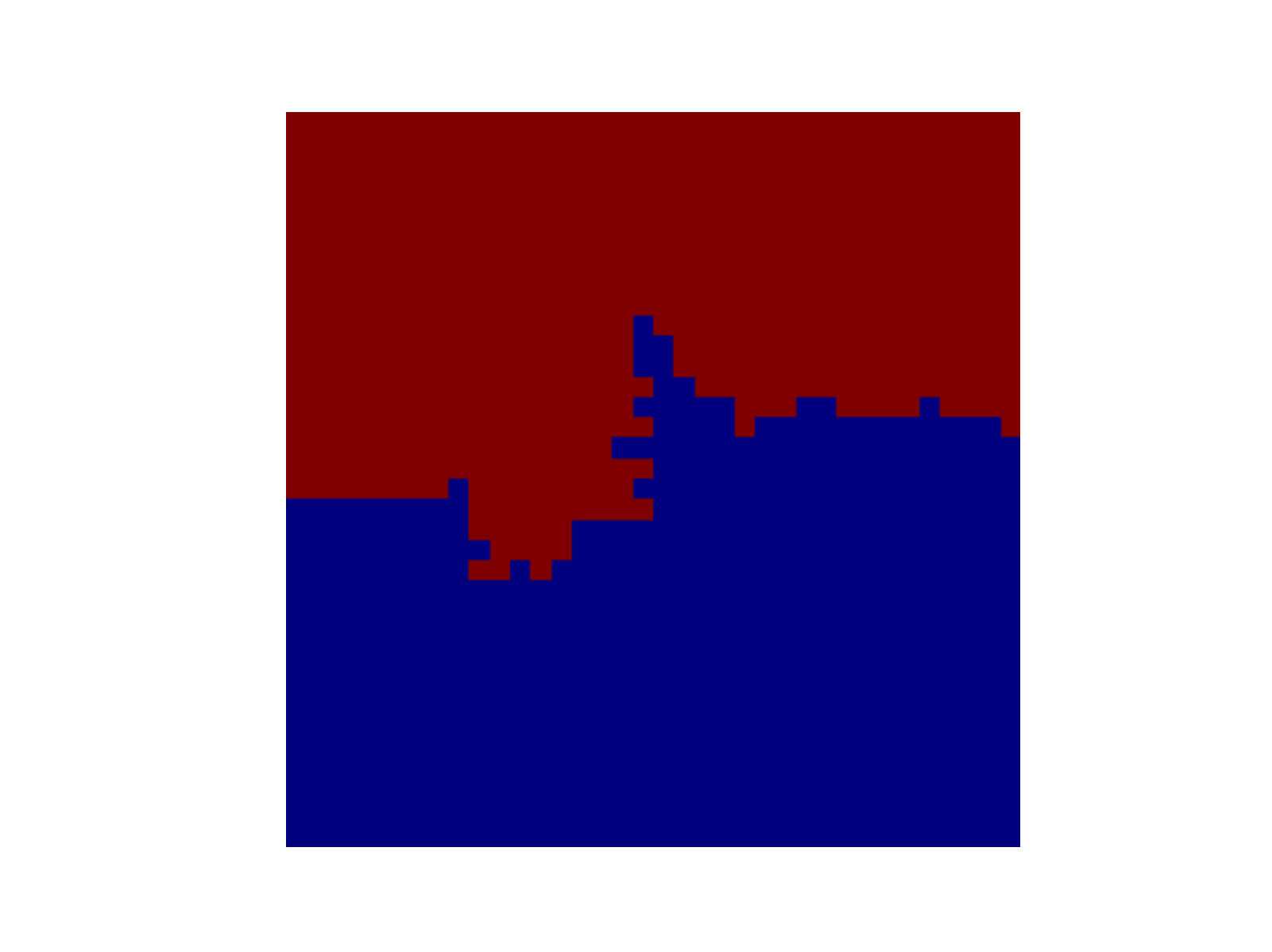} &
       \includegraphics[scale = .2]{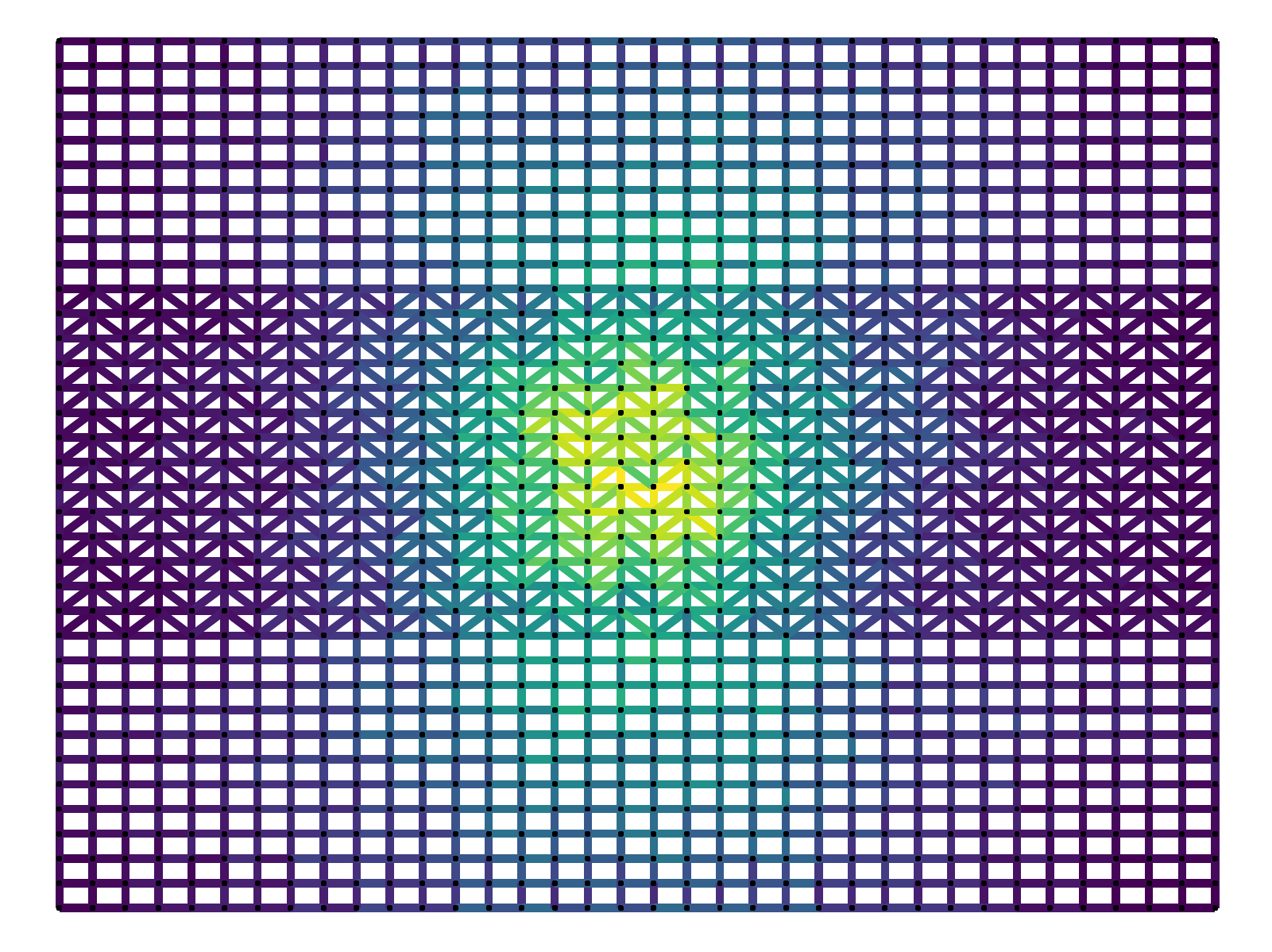}  & \includegraphics[scale = .1, angle=90]{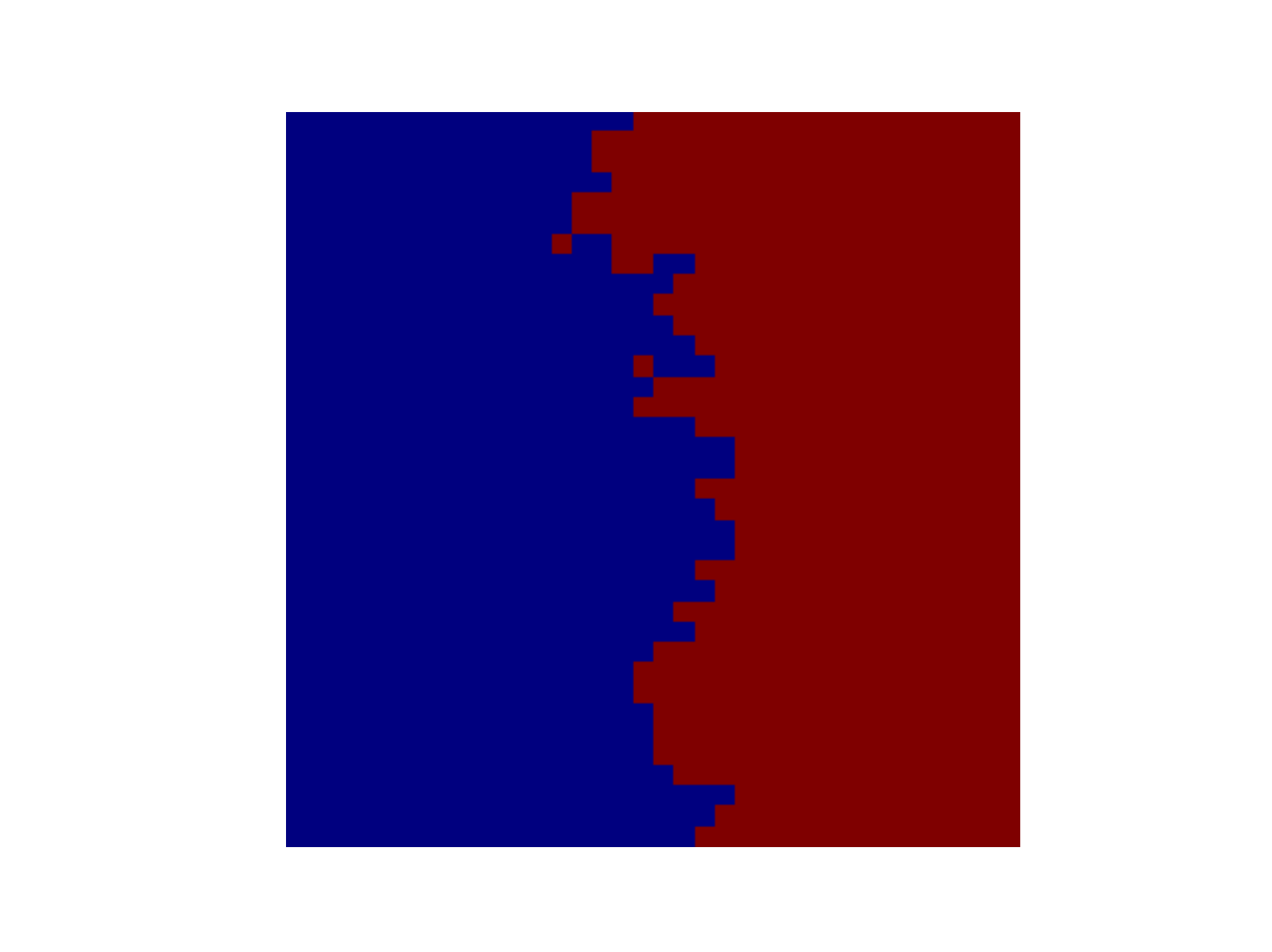}  &
       \includegraphics[scale = .2]{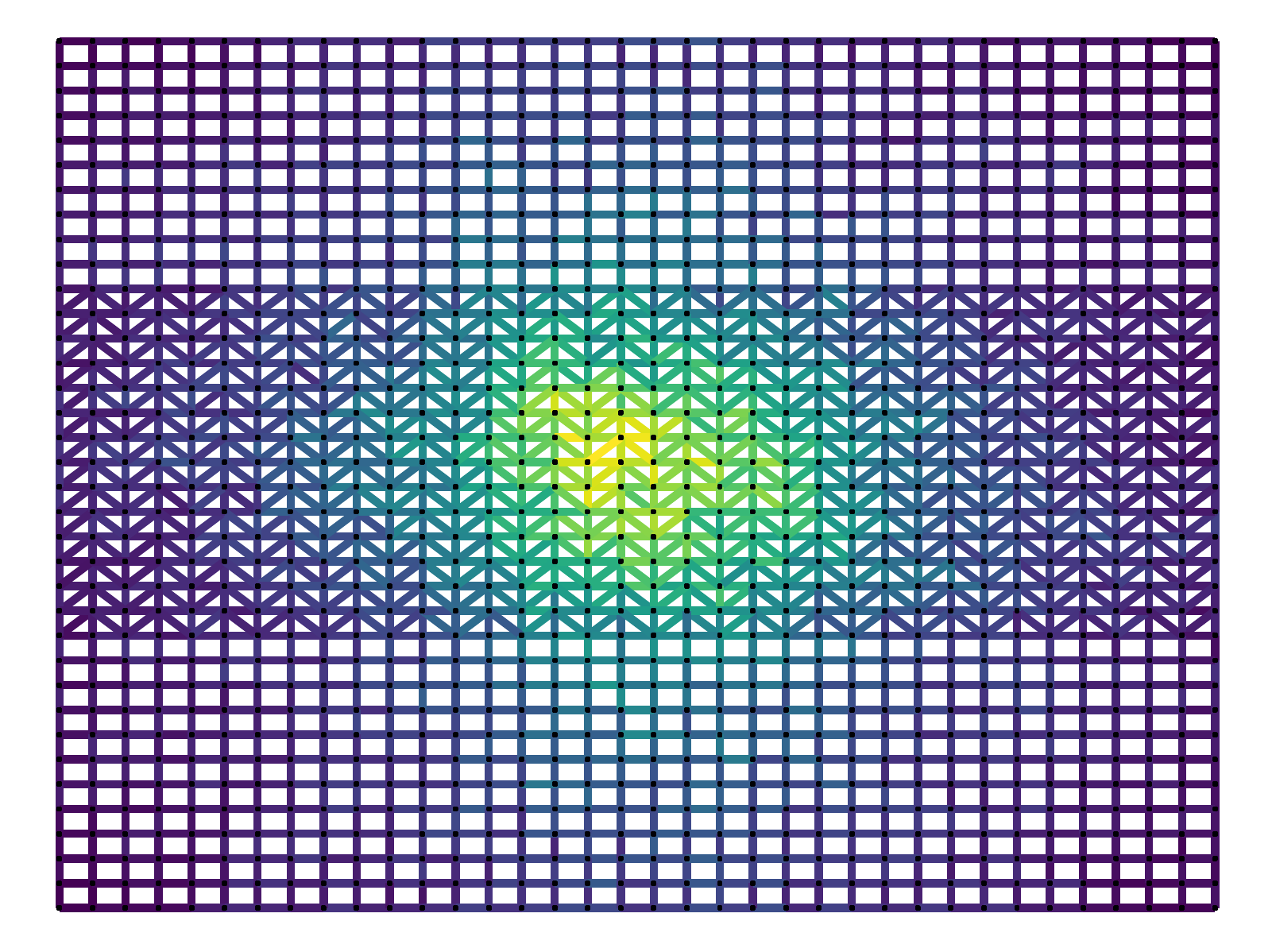} \\

    \end{tabular}
    
    \caption{An edge is yellower if it is more frequently a cut-edge of sampled partition. The left hand diagrams were made using $\MST$-partition, and the right hand side with $\UST$-partition. Both use $1000$ sampled plans, and plans are conditioned on having at most a $5 \%$ size deviation between the two blocks. %
    Some sample partitions are displayed next to the cut-edge picture. The UST-partitions tend to have slightly longer boundaries on average. %
    }%

    \label{fig:treegate}
\end{figure}
\endgroup

\begin{figure}
    \centering
    \begin{tabular}{c|c|c}
         Width & MST & UST \\
         0 & 1.346, 1.339 & 1.348, 1.273\\
         1 & 1.045, 1.479 & 1.220, 1.314\\
         2 & 1.003, 1.689 & 1.191, 1.425\\
         3 &  1.072, 1.559 & 1.239, 1.379\\
    \end{tabular}
    
    \caption{1000 samples, on a $36 \times 36$ node graph. Each half gate is $6 \times \text{width}$ vertices long, so width $3$ represents the gate cutting the graph entirely in half. The numbers reported are obtained in the following way: some nodes are set to be $1$ and others to $0$. Each district majority votes to decide the ``party'' of the representative in that region, which is a number in $\{0,1\}$. By summing the party across both districts, each plan is assigned a total party value in $\{0,1,2\}$. We have reported the mean total party value across $1000$ trials, according to two different voting population distributions. The left hand numbers reported are based on a distribution of voters where the left 60 $\%$ of the nodes in the square are party $1$, and the right hand numbers are based on a voter distribution where the bottom $60 \%$ are party $1$. Since the voting data forces the total party value into $\{1,2\}$, these are all Bernoulli random variables, and therefore the entire distribution can be read from the mean that we reported.
    }
    \label{fig:treegatevotes}
\end{figure}

\section{Positive results}\label{Section:PositiveResults}

In this section, we provide several results regarding the tractability of sampling connected $k$-partitions and simple cycles. Most of these results will follow from the tractability of counting connected $k$-partitions on various families of graphs. Unlike many $p$-relations, connected 2-partitions and simple cycles do not appear to be encodeable in a self-reducible way (in the sense used in 
\cite{JVV,Khullerselfreducible}), meaning that the equivalence between counting and sampling requires variations on the ideas explained in \cite{JVV}.\footnote{Using techniques similar to \cite{Khullerselfreducible} we can prove that at least one reasonable encoding of simple cycles is not self-reducible, unless $\Poly = \NP$. See \Cref{Section:Notselfreducible}.}
However, we can use the chain of bigons construction to evaluate the marginal probabilities that self-reducibility would normally reduce to a counting problem. We can also directly modify the counting algorithms we find to compute the marginals. First, we recall how to use certain marginal probabilities to sample from a probability distribution over subsets of a given set.

\subsection{Sampling from counting}

The following algorithm is a standard part of the equivalence between counting and sampling, and is usually stated in the context of self-reducible structures. In \Cref{appendix:inductivesampling} we provide a proof of correctness.

\begin{defn}[Marginal probabilities of distributions over subsets]\label{defn:thosemarginals}
Let $p$ be a probability distribution on $2^{[n]}$, the set of subsets of $[n] = \{1,2,3,\ldots, n\}$. Let $S$ be a random variable distributed according to $p$. For a set $J \subseteq [i-1]$, define $p( i | J) = \mathbb{P} (  i \in S |  S \cap [i - 1] = J)$; that is, the probability that $S$ contains $i$, conditioned on containing $J$ and being disjoint from $[i - 1] \setminus J$. (As a convention, take $[0] = \emptyset$.)
\end{defn}

The use of the previous definition is in the following algorithm for sampling from a probability distribution over subsets of a set, given access to the marginals of \cref{defn:thosemarginals}.

\begin{algorithm}[H]
\caption{\texttt{InductiveSampling}}\label{alg:inductivesampling}
\textbf{Input:} A probability distribution $p$ on $2^{[n]}$ described via an oracle $O$ that can compute $p( i | J)$ for any $i \in [n]$ and any $J \subseteq [n]$.\\
\textbf{Output:} A random element of $2^{[n]}$ distributed according to $p$. 
\begin{algorithmic}[1]
\STATE{Set $J_0 = \emptyset$}
\FOR{ $i \in [n]$ } \STATE{
Use $O$ to calculate $p(i | J_{i-1})$} \STATE{
With probability $p ( i |J_{i-1})$ , set $J_i = J_{i-1} \cup \{i\}$. Else, set $J_i = J_{i-1}$.
}   
 \ENDFOR
\STATE{Return $J_n$.}

\end{algorithmic}

\end{algorithm}

The correctness of this algorithm (\Cref{appendix:inductivesampling}) proves the following:

\begin{thm}\label{thm:MarginalToSamples}
Let $\mathscr{C}$ be a language encoding graphs (resp. node-weighted graphs), and suppose that there is a polynomial time Turing machine $M$ (resp. $M_B$), that on input $G \in \mathscr{C}$, $J, J' \subseteq E(G)$, computes the number of simple cycles containing $J$ and disjoint from $J'$ (resp. the number of balanced connected $2$-partitions whose cut set contains $J$ and is disjoint from $J'$). Then there is a polynomial time probabilistic Turing machine that uniformly samples from $SC(G)$  (resp. uniformly samples from $P_2^0(G)$) for $G \in \mathscr{C}$.
\end{thm}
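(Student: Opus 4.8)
The plan is to reduce the statement to the correctness of \Cref{alg:inductivesampling}, which is established in \Cref{appendix:inductivesampling}: given an oracle that computes the marginals $p(i \mid J)$ of \Cref{defn:thosemarginals} for a target distribution $p$ on $2^{[n]}$, that algorithm runs in polynomial time (making $n$ oracle calls) and returns a sample drawn exactly from $p$. So the whole problem reduces to manufacturing such an oracle for the uniform distribution out of the hypothesized counting machine.

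First I would fix a connected $G \in \mathscr{C}$, identify $E(G)$ with $[m]$ where $m = |E(G)|$, and let $p$ be the uniform distribution on $2^{[m]}$ that is supported on $SC(G)$ (and vanishes off $SC(G)$). Then for $J \subseteq [i-1]$ the marginal is, directly from the definition,
\[
p(i \mid J) \;=\; \frac{\#\{\, C \in SC(G) \,:\, J \cup \{i\} \subseteq C,\ C \cap ([i-1]\setminus J) = \emptyset \,\}}{\#\{\, C \in SC(G) \,:\, J \subseteq C,\ C \cap ([i-1]\setminus J) = \emptyset \,\}},
\]
and the numerator and denominator are exactly the two quantities $M$ is assumed to compute (the number of simple cycles containing a prescribed edge set and disjoint from another). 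One has to observe that the denominator is positive along every execution path the algorithm actually follows, so the conditioning makes sense; on the probability-zero set of $J$ where it vanishes the oracle's return value is irrelevant. This yields a polynomial-time marginal oracle, so \Cref{alg:inductivesampling} outputs a uniformly random element of $SC(G)$.

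For the balanced-partition claim I would pass to cut sets. Since $G$ is connected, the map $(A,B) \mapsto \cut(A,B)$ on connected $2$-partitions is two-to-one onto its image: the two connected components of $G \setminus \cut(A,B)$ recover the underlying unordered partition. Pushing the uniform distribution on $P_2^0(G)$ forward along this map therefore gives the uniform distribution on the set of cut sets of balanced connected $2$-partitions, and because the fibre has constant size two the factor cancels in every ratio, so each marginal $p(i \mid J)$ is again a ratio of two numbers of the form $M_B$ supplies. Running \Cref{alg:inductivesampling} produces a uniform such cut set $K$; I would then output the two components of $G \setminus K$, assigning them to the two blocks by an independent fair coin flip, to obtain a uniform element of $P_2^0(G)$.

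There is no single deep obstacle; the reason this needs a dedicated argument — as the paper stresses — is that $SC$ and $P_2$ are not self-reducible in the usual sense, so one cannot recurse on smaller instances. The workaround is exactly that the counting machine is handed \emph{two} sets, a forced set and a forbidden set, which is precisely the strength needed to read off an arbitrary conditional marginal with no recursion at all. Accordingly the only things requiring care are the bookkeeping that the displayed ratio of restricted counts equals the conditional probability (including the zero-denominator corner case) and, in the balanced case, checking that the cut map has constant fibre size so that uniform-over-partitions pushes forward to uniform-over-cut-sets.
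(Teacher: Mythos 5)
Your proposal is correct and follows essentially the same route as the paper, which proves the theorem simply by invoking the correctness of \Cref{alg:inductivesampling} (\Cref{prop:InductiveSampling}) and reading off each marginal $p(i\mid J)$ as a ratio of two counts of the form the hypothesized machine supplies, with inputs $(J\cup\{i\},\,[i-1]\setminus J)$ and $(J,\,[i-1]\setminus J)$. Your additional bookkeeping — the positivity of the denominator along reachable execution paths, and the constant-fibre-size argument passing between ordered balanced partitions and their cut sets (which is justified by \Cref{prop:matroidduality}) — correctly fills in details the paper leaves implicit.
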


\subsection{Remarks on algorithmic meta-theorems}

The next sections will be concerned with computing the marginals that are necessary for \Cref{thm:MarginalToSamples}. This will be done by showing that we can solve certain counting problems. The tractability of the counting problems on graphs of bounded treewidth will follow from extensions of Courcelle's theorem, such as those in \cite{arnborg1991easy}. In particular, the cut edges of a connected $k$-partition can be expressed in $\MSO_2$ (see \Cref{section:MSO2connectedkpartition}), and similarly the cut edges of a balanced connected $k$-partition can be expressed in $\EMS$, as defined in \cite{arnborg1991easy}.%
\footnote{For background on second order logic, the reader is referred to \cite[Chapter 7]{ebbinghaus2013mathematical}; for background on these meta-theorems and $\MSO_2$, the reader is referred to \cite{arnborg1991easy}. A brief summary of the meta-theorem that we use is given in \Cref{section:briefMSOsummary}.}
The constants in these meta-theorems are too large to be practically useful, as the automata on which they are based grows in size like a tower of exponentials in the size of the formula, and there are no general tricks to avoid this \cite{frick2004complexity}
. Therefore, although we appeal to these meta-theorems to conclude complexity theory statements, we emphasize practical approaches to solving these counting and sampling problems on series-parallel graphs, which give some directions for practical implementations on wider classes of graphs. For example, forthcoming work \cite{ignasi} extends the ideas applied to series-parallel graphs to arrive at a reasonably implementable algorithm for counting and sampling simple cycles fixed-parameter tractably in the treewidth.

\subsection{Simple cycles}

\begin{defn}[Marginal graph]
Given a graph $G = (V,E)$ and $J, J' \subseteq E$, let $G_{J,J'}(d)$ denote the graph where the edges in $J'$ are deleted, and the edges in $J$ are replaced by a chain of $d$ bigons. 
\end{defn}

The next lemma shows that for sufficiently large $d$, the number of simple cycles in $G$ containing $J$ and disjoint from $J'$ can be inferred from $|SC ( G_{J,J'}(d))|$, by using division with remainder and the same exponential growth rate comparisons that drove the intractability results:

\begin{defn}
If $\mathscr{C}$ is some language encoding graphs, and $k : \mathscr{C} \to \mathbb{N}$ some function, then we call $k$ a parametrized language of graphs.
\end{defn}

\begin{lem}\label{lem:Simplecycleremaindercount}
Let $k : \mathscr{C} \to \mathbb{N}$ be a parametrized language of graphs. Suppose that there is a polynomial $p$ and a computable function $f$ and a Turing machine $M$ that can calculate $| SC( G_{J, J'}(d) )| $ in time $f ( k(G_{J, J'}(d))) p(|G|,d)$ for all $G \in \mathscr{C}$ and $d \geq 1$ and for any $J, J' \subseteq E(G)$. Then, there is a polynomial $q$ and a TM which calculates $b_{J,J'} := | \{  T \in SC(G) : J \subseteq T, J' \cap T = \emptyset  \} |$ in time $O ( f( k ( G_{J, J'}( 36n^4) )) q( |G|))$ %
for all $G \in \mathscr{C}$ and $J, J \subseteq E(G)$.
\end{lem}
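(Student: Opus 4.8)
The plan is to reduce the problem of computing $b_{J,J'}$ to a single call of $M$ on a marginal graph $G_{J,J'}(d)$ for a well-chosen $d$, and then recover $b_{J,J'}$ by an elementary division-with-remainder argument. First I would observe, exactly as in the proof of \Cref{prop:SimpleCycleUniformHard}, that the map $\pi_d : SC(G_{J,J'}(d)) \to 2^{E(G)}$ that collapses the bigon chains has image contained in $\{\emptyset\} \cup \{ T \in SC(G) : J' \cap T = \emptyset\} \cup \{\text{single edges of } J\}$ (since a bigon chain can itself be traversed as a cycle, but an edge not in $J$ is not subdivided), and that a cycle $T \in SC(G)$ with $J' \cap T = \emptyset$ has exactly $2^{d|T \cap J|}$ preimages: along each edge of $T$ lying in $J$ there are $2^d$ ways to route through the chain of $d$ bigons, while along each edge of $T$ not in $J$ there is a unique choice. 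In particular, the cycles $T$ that contain all of $J$ (and avoid $J'$) are precisely those whose fiber has the maximal size $2^{d|J|}$, while every other contributing cycle has a fiber of size at most $2^{d(|J|-1)}$, and the ``spurious'' preimages (the $|J|$ individual bigon chains, plus the empty set) contribute a bounded-by-$n^2$ extra term of size $\le d$ each.

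The key quantitative step is to choose $d$ large enough that $b_{J,J'} \cdot 2^{d|J|}$ dominates everything else, so that $|SC(G_{J,J'}(d))| = b_{J,J'} \cdot 2^{d|J|} + (\text{remainder})$ with the remainder strictly less than $2^{d|J|}$; then $b_{J,J'} = \lfloor |SC(G_{J,J'}(d))| / 2^{d|J|} \rfloor$ (and if $J = \emptyset$ one argues slightly differently, subtracting the $\le n^2$ spurious single-edge/empty contributions first, or noting the statement is about $J,J'\subseteq E$ and handling $J=\emptyset$ directly). The number of cycles $T \ne$ the ones we want, all of which have $|T\cap J|\le |J|-1$, is at most $|SC(G)| \le 2^{n^2}$ (the crude bound used already in \Cref{prop:SimpleCycleUniformHard}), so their total preimage count is at most $2^{n^2} \cdot 2^{d(|J|-1)}$, and the spurious preimages contribute at most $n^2 \cdot d$. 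It therefore suffices to pick $d$ with $2^{n^2} 2^{d(|J|-1)} + n^2 d < 2^{d|J|}$, i.e. $2^{d} > 2^{n^2} + n^2 d$ (using $|J|\ge 1$), which by \Cref{lem:polylarge} (with $q = 2$, $S = 2^{n^2+1}$, $e = 1$, say) holds once $d \ge 4\lceil (n^2 + 1 + 1)^2 \rceil$, and $d = 36 n^4$ is comfortably large enough for all $n \ge 1$. With this $d$ fixed, $k(G_{J,J'}(36n^4))$ is a well-defined quantity, and one call to $M$ computes $|SC(G_{J,J'}(d))|$ in time $f(k(G_{J,J'}(36n^4))) \cdot p(|G|, 36 n^4)$; since $p$ is a polynomial and $36 n^4$ is polynomial in $|G|$, the bound $p(|G|, 36n^4)$ is $q(|G|)$ for a polynomial $q$. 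The final division by $2^{d|J|}$ and the floor operation are polynomial-time arithmetic on numbers with $O(n^4 |J|) = \mathrm{poly}(|G|)$ bits, so the total running time is $O(f(k(G_{J,J'}(36n^4))) q(|G|))$ as claimed.

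The main obstacle, as in the intractability proofs, is purely bookkeeping: making sure the image of $\pi_d$ is described correctly and that \emph{every} preimage of every ``wrong'' element is accounted for in the remainder estimate. The subtlety specific to this lemma, compared with \Cref{prop:SimpleCycleUniformHard}, is that the fibers now have different sizes depending on $|T \cap J|$ rather than on $|T|$, so the clean statement ``$b_{J,J'} = \lfloor |SC(G_{J,J'}(d))|/2^{d|J|}\rfloor$'' requires separating the cycles by how many edges of $J$ they contain and checking that only the full-$J$ cycles land in the top ``digit''; one also has to verify the edge case $J = \emptyset$ (where the ``bigon-as-cycle'' preimages disappear but the empty set and deleted edges still need care), but here $d$ is irrelevant and $b_{\emptyset,J'} = |\{T \in SC(G) : J'\cap T = \emptyset\}|$ is read off directly from $|SC(G_{\emptyset,J'}(1))| = |SC(G\setminus J')|$. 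None of this is deep, but it is the part where an error would most easily creep in.
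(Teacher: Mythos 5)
Your proposal is correct and follows essentially the same route as the paper's proof: decompose $|SC(G_{J,J'}(d))|$ according to $|T\cap J|=k$ (with fibers of size $2^{dk}$ plus the $d|J|$ bigon cycles), choose $d$ polynomially large via \Cref{lem:polylarge} so that the full-$J$ cycles dominate, and recover $b_{J,J'}$ as the quotient in a division with remainder by $2^{d|J|}$, handling $J=\emptyset$ separately. The only differences are cosmetic (phrasing via the projection map $\pi_d$ and slightly different but equivalent constants).
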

\begin{proof}
If $|J| = 0$, then $|SC( G_{J,J'}(d))| = b_{J,J'}$. We assume that $|J| \geq 1$, and let $n = |G|$. Observe, in the manner of \Cref{prop:SimpleCycleUniformHard}, that $$| SC( G_{J, J'}(d) )| = \sum_{k = 0}^{|J|} 2^{dk} | \{ X \in SC(G) : X \cap J' = \emptyset, |X \cap J| = k \} | + d|J.|$$ We define $a_d = |SC(G_{J, J'}(d))|$ and the remainder term $$R_d = \sum_{k = 0}^{|J| - 1} 2^{dk} | \{ X \in SC(G) : X \cap J' = \emptyset, |X \cap J| = k \} | + d|J|.$$ $R_d$ is bounded above by $2^{d (|J| - 1) } 2^{n^2} + d |J| \leq d n^2 2^{d (|J| - 1) + n^2}$. 
By \Cref{lem:polylarge}, for $d = 4 \ceil{ (n^2 + \log(n^2) + 1)^2}$, $2^{d |J| }  > d n^2 2^{d (|J| - 1) + n^2}$, and hence for such $d$, we have $a = 2^{d |J|} b_{J,J'} + R_d,$ where $2^{d |J| } > R_d$.
Since each term in that expression is an integer, $R_d$ is the remainder of dividing $a$ by $2^{d |J|}$, and $b_{J,J'}$ is the quotient. This division with remainder can be performed in $O( ( \log ( | SC(G_{J,J'}(d))|) \log (2^{d|J|}) ))$ time \cite[Theorem 3.3]{shoup2009computational}, which is polynomial in $(|G|, d)$. Since $d = 4 \ceil{ (n^2 + \log(n^2) + 1)^2} \leq 36 n^4$, the result follows. %
\end{proof}

We briefly recall a definition of treewidth:

\begin{defn}[$k$-trees, partial $k$-trees and treewidth]
A $k$-tree is any graph that can be recursively constructed in the following manner. We start with a tree $T_0$ that is a $k$ clique. Then, we obtain $T_n$ from $T_{n-1}$ by picking any $k$-clique $Q$ of $T_{n-1}$ %
adding a new vertex $v$ and connecting $v$ to each node of $Q$. A \emph{partial} $k$-tree is any subgraph of a $k$-tree. The treewidth of a graph $G$ is the smallest $k$ such that $G$ is a partial $k$-tree.
\end{defn}

The operation $G \to G_{J,J'}(d)$ preserves the class of series-parallel graphs, and in addition, does not increase the treewidth of any graph with treewidth $\geq 2$. %
There are polynomial time dynamic programming algorithms for counting the number of simple cycles of a series-parallel graph,\footnote{See \Cref{appendix:SCDP}} and it follows from Courcelle's theorem that counting the number of simple cycles is fixed-parameter tractable in the treewidth.\footnote{ This follows from Proposition 5.11 and Theorem 6.56 in \cite{courcelle2012graph}. We are grateful to \href{http://fc.isima.fr/~kante/}{Mamadou Moustapha Kant\'e} for \href{https://cstheory.stackexchange.com/a/43865/44995}{pointing this out on Stack Exchange} \cite{mamadou}. There is an explicit $\MSO_2$ formula for simple cycles at the same link.}
Thus, from \Cref{lem:Simplecycleremaindercount} and \Cref{thm:MarginalToSamples} we have the following:

\begin{thm}\label{thm:SCtreewidthsampling}
The problem of uniformly sampling simple cycles is $\FPT$ in the treewidth.
\end{thm}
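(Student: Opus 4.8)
The plan is to assemble the machinery already developed in the excerpt into the standard counting-to-sampling pipeline. The final statement, \Cref{thm:SCtreewidthsampling}, asserts that uniformly sampling simple cycles is fixed-parameter tractable in the treewidth, so the goal is to exhibit a probabilistic algorithm whose running time is $f(\mathrm{tw}(G)) \cdot \mathrm{poly}(|G|)$ for some computable $f$. By \Cref{thm:MarginalToSamples}, it suffices to produce a polynomial-time (in fact, FPT-time) Turing machine that, on input $G$ and $J, J' \subseteq E(G)$, computes $b_{J,J'} = |\{T \in SC(G) : J \subseteq T,\ J' \cap T = \emptyset\}|$. The inductive sampling algorithm (\Cref{alg:inductivesampling}) then converts these marginal counts into a uniform sample; note that one only ever calls the marginal oracle with $J, J'$ forming a partial partition of an initial segment of the edges, so polynomially many oracle calls suffice.

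First I would recall that counting simple cycles of a graph is FPT in the treewidth: this is exactly the fact cited in the excerpt (Proposition 5.11 and Theorem 6.56 of \cite{courcelle2012graph}, via an explicit $\MSO_2$ formula for simple cycles), so there is a computable $f$ and a Turing machine $M$ computing $|SC(H)|$ in time $f(\mathrm{tw}(H)) \cdot \mathrm{poly}(|H|)$ for every $H$. Second, I would invoke \Cref{lem:Simplecycleremaindercount} with the parametrized language $k(H) = \mathrm{tw}(H)$: the hypothesis of that lemma is precisely the existence of such an $M$ on the family of marginal graphs $G_{J,J'}(d)$, and its conclusion delivers a Turing machine computing $b_{J,J'}$ in time $O(f(\mathrm{tw}(G_{J,J'}(36n^4))) \cdot q(|G|))$. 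The one remaining ingredient is that $\mathrm{tw}(G_{J,J'}(d))$ stays bounded in terms of $\mathrm{tw}(G)$ — and this is stated in the excerpt: the operation $G \to G_{J,J'}(d)$ does not increase treewidth for any graph of treewidth $\geq 2$ (deleting edges cannot raise treewidth, and subdividing edges or adding parallel edges also does not, once the treewidth is at least $2$). Hence $f(\mathrm{tw}(G_{J,J'}(36n^4))) \leq f(\max(\mathrm{tw}(G),2))$, which is a computable function of $\mathrm{tw}(G)$ alone, and the marginal oracle runs in FPT time.

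Combining these pieces, \Cref{thm:MarginalToSamples} then yields a polynomial-time probabilistic Turing machine that uniformly samples $SC(G)$ once given the marginal oracle; since the oracle runs in time $f'(\mathrm{tw}(G)) \cdot \mathrm{poly}(|G|)$ and is queried polynomially many times, the whole sampler is FPT in the treewidth. I do not anticipate a genuine obstacle here — the theorem is essentially a bookkeeping corollary of \Cref{lem:Simplecycleremaindercount}, \Cref{thm:MarginalToSamples}, and the Courcelle-type counting result — but the one point requiring care is the treewidth-preservation claim for $G \mapsto G_{J,J'}(d)$, since the bound $b_{J,J'}$ is extracted from a graph with a \emph{large} (polynomial in $n$) number of bigon subdivisions; one must be sure that the multiplicative blow-up in edges does not inflate the treewidth parameter, which is exactly why the "treewidth $\geq 2$" hypothesis is needed (subdividing an edge of a tree can create a vertex of degree $2$ but keeps treewidth $1$, so the statement is really about treewidth being stable under subdivision and parallel-edge addition in the range $\geq 2$).
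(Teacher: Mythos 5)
Your proposal is correct and follows exactly the paper's route: the Courcelle-type FPT counting of simple cycles, the treewidth-preservation of $G \mapsto G_{J,J'}(d)$, \Cref{lem:Simplecycleremaindercount} for the marginals, and \Cref{thm:MarginalToSamples} to convert marginals into a uniform sampler. Your added care about why the bigon blow-up does not inflate the treewidth is a point the paper states without elaboration, but the argument is the same.
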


Since the treewidth of a plane dual changes by at most one \cite{lapoire1996treewidth,bouchitte2001treewidth}, we obtain a similar result for sampling connected $2$-partitions. However, in the next section we will show how to apply Courcelle's theorem directly to show that sampling connected $k$-partitions is $\FPT$ in the treewidth.

The treewidth of a typical state dual graph used for redistricting (\Cref{section:CongressionalMotivation}) is on the order of $40$ to $60$; although this shows that treewidth is not a useful parameter for state dual graphs, this does not rule out the possibility that other parameters can make sampling from $P_2(G)$ tractable when $G$ is a state dual graph. We discuss this further in \Cref{section:openquestions}. 

With a little more work, one can show that many other distributions over simple cycles besides the uniform distribution can be efficiently sampled from, at least on graphs with treewidth $\leq 2$. We now introduce a definition to describe these distributions.

\begin{defn}[Edge weight probability]\label{defn:edgeweightprobability}
Let $G$ be a graph, and $ c : E(G) \to \mathbb{Q}_{\geq 0}$ some weight function. Let $N_c$ denote the measure on simple cycles that gives weight $N_c (C) = \prod_{\substack{e \in C}} c(e)$ to each simple cycle $C$, and let $\nu_c$ denote the probability distribution obtained by normalizing $N_c$.
\end{defn}

\begin{thm}\label{thm:mucsampling}
Sampling from $\nu_c$ is polynomial-time solvable on the class of graphs of treewidth $\leq 2$.
\end{thm}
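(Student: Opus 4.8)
The plan is to mirror the argument for the uniform distribution (\Cref{thm:SCtreewidthsampling}), replacing counting of simple cycles by $N_c$-weighted counting everywhere. First I would record the $\nu_c$-analogue of \Cref{thm:MarginalToSamples}: regard a simple cycle as an element of $2^{E(G)}$ with the edges enumerated in a fixed order $e_1,\dots,e_m$, and run \Cref{alg:inductivesampling} on $\nu_c$. The marginal $p(i\mid J)$ it requires is the ratio
\[
p(i\mid J)\;=\;\frac{N_c\bigl(\{\,C\in SC(G):(J\cup\{e_i\})\subseteq C,\ C\cap J'=\emptyset\,\}\bigr)}{N_c\bigl(\{\,C\in SC(G):J\subseteq C,\ C\cap J'=\emptyset\,\}\bigr)},
\]
with $J'=\{e_1,\dots,e_{i-1}\}\setminus J$. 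So it suffices to compute, in polynomial time on graphs of treewidth $\le 2$, the weighted ``forced/forbidden'' count $b^c_{J,J'}:=N_c(\{\,C\in SC(G):J\subseteq C,\ C\cap J'=\emptyset\,\})$ for arbitrary disjoint $J,J'\subseteq E(G)$ (the only nontrivial case being treewidth exactly $2$, since forests contain no cycles).

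Second I would reduce $b^c_{J,J'}$ to an \emph{unconstrained} weighted cycle count on a series--parallel graph, exactly as in \Cref{lem:Simplecycleremaindercount}. Deleting $J'$ only removes cycles through $J'$, and $G_{J,J'}(d)$ is series--parallel whenever $G$ is (and in any case of treewidth $\le 2$). To carry the weights through the chain-of-bigons substitution, I would put, on the chain $B_d(e)$ replacing an edge $e\in J$, weight $c(e)$ on both parallel edges of its first bigon and weight $1$ on its remaining $2d-2$ edges; then each of the $2^d$ simple paths across $B_d(e)$ contributes exactly the factor $c(e)$, so the lifts of a simple cycle $C$ of $G$ that avoids $J'$ and meets $J$ in $k$ edges have total $N_c$-weight $(2^d)^k N_c(C)$. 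Hence
\[
N_c\bigl(SC(G_{J,J'}(d))\bigr)\;=\;\sum_{k=0}^{|J|}(2^d)^k\,N_c\bigl(\{\,C\in SC(G):|C\cap J|=k,\ C\cap J'=\emptyset\,\}\bigr)\;+\;R_d,
\]
where $R_d$ collects the length-$2$ simple cycles formed by a single bigon together with the lower-order terms and satisfies $R_d\le (2^d)^{|J|-1}$ times a quantity polynomial in the bit-size of the input (after clearing denominators so the weights are integers, $N_c(SC(G))\le W^{|E(G)|}2^{|E(G)|^2}$ with $W$ the largest weight). By \Cref{lem:polylarge} one picks $d$ polynomial in $|G|$ and the bit-size of $c$ with $2^{d|J|}>R_d$; then $b^c_{J,J'}$ is the quotient and $R_d$ the remainder of dividing $N_c(SC(G_{J,J'}(d)))$ by $2^{d|J|}$, a polynomial-time computation.

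Third I would supply the base ingredient: given a series--parallel graph $H$ with rational edge weights, compute $N_c(SC(H))=\sum_{C\in SC(H)}\prod_{e\in C}c(e)$ in polynomial time. This is the weighted refinement of the dynamic program of \Cref{appendix:SCDP}: over an SP-tree decomposition of $H$ one tracks, for each two-terminal component, the total $N_c$-weight of the partial cycle-subgraphs of each type at the two terminals (empty; a single simple path joining them; or already a closed cycle), updating these weighted totals multiplicatively under series and parallel composition and reading off $N_c(SC(H))$ at the root; alternatively this follows from the weighted/counting form of Courcelle's theorem since $SC$ is $\MSO_2$-definable. Combining the three steps with \Cref{alg:inductivesampling} gives a polynomial-time sampler for $\nu_c$ on treewidth-$\le 2$ graphs.

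The conceptual content is entirely parallel to the uniform case; the step needing the most care is the second one's bookkeeping --- distributing the weights over each bigon chain so that the substitution multiplies every cycle's weight by exactly the intended power of $2^d$, clearing denominators of the rational weights, and checking that the bound on $R_d$ still admits a polynomial choice of $d$ through \Cref{lem:polylarge}. Verifying that the SP dynamic program's series/parallel update rules remain correct once counts are replaced by $N_c$-weighted sums is routine but must be checked type by type.
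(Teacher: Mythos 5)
Your proposal is correct in outline but takes a genuinely different route from the paper for the key step of computing the constrained marginals $b^c_{J,J'}$. The paper's proof of \Cref{thm:mucsampling} is a one-line reduction to \Cref{lem:simplecycleweightedmarginals}, which works over the ring $\mathbb{Q}[x]$: it sets $w(e)=x\,c(e)$ on the forced edges $J$, $w(e')=0$ on the forbidden edges $J'$, runs the series--parallel dynamic program of \Cref{lem:combinationrules} to evaluate the generating function $f_{SC}(G,w)$, and reads off $b^c_{J,J'}$ as the coefficient of $x^{|J|}$ --- explicitly \emph{avoiding} \Cref{lem:Simplecycleremaindercount}. You instead run the weighted analogue of \Cref{lem:Simplecycleremaindercount}: graph surgery (delete $J'$, blow up $J$ into weighted bigon chains) followed by division with remainder. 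Your bigon bookkeeping is right (one factor of $c(e)$ per chain crossing, $2^{dk}$ lifts, the spurious bigon $2$-cycles absorbed into $R_d$), and the growth-rate comparison via \Cref{lem:polylarge} goes through. What each approach buys: yours reuses the exact template of \Cref{thm:SCtreewidthsampling}, while the paper's formal-variable trick is cheaper once the DP over an arbitrary ring is in hand --- no choice of $d$, no graph surgery, and no integrality concerns.

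That last point is where your write-up needs repair. You propose "clearing denominators so the weights are integers," but replacing $c$ by $Dc$ multiplies each cycle's weight by $D^{|C|}$, which depends on the cycle length; this changes $\nu_c$ and does not rescale $b^c_{J,J'}$ by any fixed power of $D$, so you cannot simply sample from the integer-weight distribution nor recover $b^c_{J,J'}$ from $b^{Dc}_{J,J'}$. The fix is to keep the rational weights and instead multiply the \emph{aggregate} quantities by $D^{|E(G)|}$ (with $D$ the lcm of the denominators): then $D^{|E(G)|}N_c(SC(G_{J,J'}(d)))$, $D^{|E(G)|}b^c_{J,J'}$, and $D^{|E(G)|}R_d$ are all nonnegative integers, the inequality $D^{|E(G)|}R_d<2^{d|J|}$ still holds for $d$ polynomial in $|G|$ and the bit-size of $c$, and integer division recovers $D^{|E(G)|}b^c_{J,J'}$, hence $b^c_{J,J'}$. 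With that correction your argument is sound; the remaining ingredients (the weighted SP dynamic program and the embedding of treewidth-$2$ graphs into series--parallel graphs via \Cref{lem:embedinSPgraph}) coincide with what the paper does.
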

\begin{proof}
See \Cref{lem:simplecycleweightedmarginals}, which shows how to directly compute the required marginal probabilities for sampling from $\nu_c$ via \Cref{alg:inductivesampling}, without using \Cref{lem:Simplecycleremaindercount}. 
\end{proof}

The space of distributions on $SC(G)$ is \emph{far} larger than the distributions described by $\nu_c$. We mention several other tractable distributions on $SC(G)$ and $P_2(G)$ in \Cref{section:otherdistributions}.

\subsection{Connected $k$-partitions}

First, we will briefly review $\MSO_2$ and the counting meta-theorem in \Cref{section:briefMSOsummary}. Next, in \Cref{section:MSO2connectedkpartition}, we will describe an $\MSO_2$ formula that defines connected $k$-partitions. Finally, we will tie these together to prove the following: %

\begin{theorem}\label{thm:uniformlysamplingkparts}
Uniformly sampling from $P_k(G)$ is $\FPT$ in the treewidth.
\end{theorem}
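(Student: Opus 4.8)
The plan is to reduce the problem to \Cref{thm:MarginalToSamples}-style inductive sampling via \Cref{alg:inductivesampling}, but with the marginal probabilities supplied by a counting oracle obtained from Courcelle's theorem rather than from self-reducibility. Concretely, I would first show that counting connected $k$-partitions with a prescribed cut-set constraint is fixed-parameter tractable in the treewidth. To do this, I invoke the $\MSO_2$ formula $\varphi_k$ defining the cut-edge sets of connected $k$-partitions (this is exactly what \Cref{section:MSO2connectedkpartition} provides): a subset $J \subseteq E(G)$ is $\cut(P)$ for some $P \in \mathscr{P}_k(G)$ iff $G \setminus J$ has exactly $k$ connected components, each of which is an induced subgraph on its vertex set, and $J$ is minimal with this property (equivalently, every edge of $J$ joins two distinct components of $G\setminus J$). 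All of these conditions are expressible in $\MSO_2$, and ``exactly $k$ components'' is a fixed first-order-over-sets condition for each fixed $k$. Then the counting extension of Courcelle's theorem (\cite{arnborg1991easy}, summarized in \Cref{section:briefMSOsummary}) gives, for each fixed $k$, an algorithm running in time $f(\mathrm{tw}(G))\cdot \mathrm{poly}(|G|)$ that computes $|\{J : \varphi_k(J)\}| = |\mathscr{P}_k(G)|$.

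The second step is to make this oracle conditional, so it can drive \Cref{alg:inductivesampling}. Fix an enumeration $e_1,\dots,e_m$ of $E(G)$. For $J \subseteq [i-1]$ (identified with a subset of $\{e_1,\dots,e_{i-1}\}$) I need $\mathbb{P}(e_i \in \cut(P) \mid \cut(P)\cap\{e_1,\dots,e_{i-1}\} = J)$, where $P$ is uniform on $\mathscr{P}_k(G)$. Each such conditional probability is a ratio of two counts of the form ``number of $J' \in \{J : \varphi_k(J)\}$ with $J' \supseteq A$ and $J' \cap B = \emptyset$'' for fixed edge sets $A, B$. But ``$J'$ contains every edge of $A$ and no edge of $B$'' is itself an $\MSO_2$ condition (a conjunction of $|A|+|B|$ membership/non-membership atoms), so $\varphi_k \wedge (\text{this constraint})$ is again a single $\MSO_2$ formula whose size is bounded in terms of $k$ and $|A|+|B| \le m$, and Courcelle-counting applies. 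Equivalently and more cleanly, one can simply delete the edges of $B$ and contract the edges of $A$ (contraction does not increase treewidth) and count connected $k'$-partitions there for an appropriately adjusted $k'$; either route yields the marginals in FPT time. Feeding these marginals into \Cref{alg:inductivesampling} and invoking \Cref{thm:MarginalToSamples} (whose correctness is established in \Cref{appendix:inductivesampling}) produces a uniform sample from the set of cut-edge sets of connected $k$-partitions; since $\cut$ is a bijection between $\mathscr{P}_k(G)$ and its image on edge subsets (different connected $k$-partitions have different cut-sets, as the partition is recovered by $\comp$), this is a uniform sample from $\mathscr{P}_k(G)$, and one converts to an ordered partition, if desired, by a uniform random labeling.

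The main obstacle, and the point that needs the most care, is the first step: verifying that the property ``$J$ is the cut-set of a connected $k$-partition'' really is $\MSO_2$-definable with a formula of bounded size, and that imposing the ``contains $A$, avoids $B$'' side constraints keeps us inside the hypotheses of the counting meta-theorem rather than blowing up the formula in a way that destroys the FPT bound. The subtlety is that the number of added membership atoms is linear in $|E(G)|$, so one must use the version of the meta-theorem that allows a free set-variable (the subset $J$ being counted) together with the input edge subsets $A$ and $B$ as labeled parameters of the structure, so that only the \emph{fixed} formula $\varphi_k$ plus a generic ``$J \supseteq A$, $J \cap B = \emptyset$'' schema (of size $O(1)$ once $A,B$ are encoded as unary predicates on the labeled structure) is what gets compiled into an automaton. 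This is exactly why I prefer the delete/contract reformulation: deleting $B$ and contracting $A$ changes only the input graph, not the formula, and the treewidth is non-increasing under both operations, so the bound $f(\mathrm{tw}(G))\cdot\mathrm{poly}(|G|)$ is preserved uniformly over all $A,B$. Once that bookkeeping is pinned down, the remaining pieces—$\MSO_2$-expressibility (\Cref{section:MSO2connectedkpartition}), the counting meta-theorem (\Cref{section:briefMSOsummary}), and the sampling-from-marginals algorithm (\Cref{alg:inductivesampling}, \Cref{thm:MarginalToSamples})—assemble immediately, and I would keep their invocations brief.
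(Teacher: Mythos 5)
Your main route is essentially the paper's: express the constrained family of connected $k$-partitions by a fixed $\MSO_2$ formula over a relational structure in which the forced and forbidden edge sets appear as unary predicates (the paper's vocabulary $\mathcal{R}=(V,E,J,J',\inc)$ does exactly this, and its formula $F_k(X)=F'_k(X)\wedge(J\subseteq X)\wedge(J'\cap X=\emptyset)$ is the ``$O(1)$-size schema'' you describe), apply the Arnborg--Lagergren--Seese counting theorem (\Cref{thm:arnborgs}), and feed the resulting marginals into \Cref{alg:inductivesampling}. The only cosmetic difference is that you define the formula on the cut-set while the paper works with its complement $F(P)=\cut(P)^c$ (a flat of the graphic matroid); either is fine since each determines the other. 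You correctly identify the one real subtlety — that naively conjoining $|A|+|B|$ membership atoms would make the formula, and hence the automaton, depend on $|E(G)|$ — and the labeled-structure fix is the right one.

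However, the alternative you say you prefer, ``delete the edges of $B$ and contract the edges of $A$,'' does not work, and in fact has both operations backwards. Contracting an edge forces its endpoints into the \emph{same} block, i.e.\ it enforces $e\notin\cut(P)$, which is the constraint attached to $B$, not to $A$; so if anything you would contract $B$ (this is what the paper does in \Cref{defn:WJJ'} for the balanced case). Deleting an edge enforces nothing: a connected $k$-partition of $G-e$ may still place the endpoints of $e$ in the same block (joined by another path), so $\mathscr{P}_k(G-e)$ does not count partitions of $G$ with $e\in\cut(P)$, nor does it count those with $e\notin\cut(P)$. There is no single minor operation that forces an edge \emph{into} the cut — this is precisely why the paper's other marginal computations (for simple cycles and for balanced partitions) go through gadget replacements followed by division with remainder rather than deletion or contraction. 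So you should drop the delete/contract reformulation and rely solely on the labeled-structure route, which suffices and completes the proof as in the paper.
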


\subsubsection{Counting solutions to $\MSO_2$ formulas}\label{section:reviewcourcelles}\label{section:briefMSOsummary}

We mostly follow \cite{arnborg1991easy}. We consider a relational vocabulary $\mathcal{R} = (V,E,J, J', \inc)$, where $V,E,J,J'$ are unary relations, and $\inc$ is a binary relation. Additionally, we consider a set of formulas $\Gamma = \{\forall x V(x) \vee E(x), \forall x V(x) \leftrightarrow \lnot E(x), \forall x \forall y \inc(x,y) \to V(x) \wedge E(y), \forall x J(x) \vee J'(x) \to E(x)\}$. $\Mod(\Gamma)$ denotes the set of models of $\Gamma$. Given a model of $\Gamma$ with universe $A$, $A$ is partitioned by the two sets defined by $V$ and $E$, which we refer to as $V$ and $E$, by abuse of notation. We interpret $V$ as the set of vertices, $E$ as the set of edges, $J$ and $J'$ as two collections of edges, and $\inc$ as the incidence relation. That is, $\inc(v,e)$ is interpreted as meaning that vertex $v$ is incident to edge $e$. Thus, a model for $\Gamma$ is a graph along with two sets of edges. %

We denote by $\MSO_2$ the second order logic with signature $\mathcal{R}$ that allows only unary relational variables. %
Given a formula $\Phi(X)$ in $\MSO_2$ with a free variable $X$, the enumeration problem for $\Phi$ is that of computing $| \{ X : G \models \Phi(X) \}|$ for any given $G \in \Mod(\Gamma)$.

Then, we have:

\begin{theorem}{\cite[{Theorem 5.7}]{arnborg1991easy}}\label{thm:arnborgs}
For each $\MSO_2$ formula $\Phi(X)$, and for each class $K$ of graphs with universally bounded treewidth, the enumeration problem for $\Phi$ can be solved in $O( |G| \log ( |G|))$ time if $G$ is given with a tree-decomposition.
\end{theorem}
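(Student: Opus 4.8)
I would prove this by the standard automata‑theoretic route: translate $\Phi(X)$ into a finite bottom‑up tree automaton running on a parse term for $G$ read off the tree decomposition, and then run a \emph{counting} (weighted) version of that automaton by dynamic programming. First I would normalize the input: given $G\in K$ together with a width‑$k$ tree decomposition, convert it in linear time to a \emph{nice} tree decomposition (a rooted binary tree with empty leaf bags and internal nodes of the four types Introduce‑vertex, Introduce‑edge, Forget, Join). Reading this tree bottom‑up exhibits $G$ as the value of a term $\tau_G$ of size $O(|G|)$ over a \emph{finite} algebra of operations on $(k{+}1)$‑boundaried graphs — add a boundary vertex, add an edge between two boundary vertices, forget a boundary vertex, glue two graphs along their common boundary — the finiteness of the signature being the whole point and depending only on $k$.

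\textbf{From formula to automaton (the substantive step).} Because we are in $\MSO_2$, the free set variable $X$ may pick out edges as well as vertices, so I would work with the incidence encoding of $G$ (vertices and edges both as elements, with an incidence predicate) and pass to an \emph{expanded} signature in which every ``add a boundary vertex'' and ``add an edge'' operation comes in two flavours, placing the new element into $X$ or not; a term over the expanded signature thus encodes a pair $(G,X)$. Using the Feferman--Vaught/Shelah composition lemma for $\MSO$, the $\MSO_q$‑type of a boundaried structure produced by these operations (for $q$ the quantifier rank of $\Phi$) is computable from the types of the operands and the operation applied, and there are only finitely many such types; hence there is a finite deterministic bottom‑up tree automaton $\mathcal{A}_\Phi$, with state set $Q$ of size bounded by a function of $k$ and $\Phi$ alone, that accepts an expanded term iff the encoded $(G,X)$ satisfies $\Phi$. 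This is exactly the ``$\MSO$‑definable $\Rightarrow$ recognizable'' direction of the Courcelle--Arnborg--Lagergren--Seese machinery.

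\textbf{Counting by dynamic programming, and the running time.} Rather than enumerate the exponentially many $X$, I would count them: process $\tau_G$ bottom‑up, maintaining at each node $t$ a vector $(c_t(q))_{q\in Q}$, where $c_t(q)$ is the number of assignments of $X$‑membership to the elements introduced in the subterm at $t$ whose (deterministic) $\mathcal{A}_\Phi$‑run ends in state $q$. The updates are immediate from the transition function: at an Introduce node, add the contributions of the two flavours of the new element; at a Forget node, relabel via $\mathcal{A}_\Phi$; at a Join node, add $c_{t_1}(q_1)\cdot c_{t_2}(q_2)$ into the entry for the combined state. Since the root has empty boundary, $|\{X:G\models\Phi(X)\}|=\sum_{q\in F}c_{\mathrm{root}}(q)$ for the accepting set $F$. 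The tables have constant size $|Q|=O_{k,\Phi}(1)$ and there are $O(|G|)$ nodes, so only $O(|G|)$ vector updates occur, each a bounded number of integer additions and multiplications on counts bounded by $2^{|G|}$; charging the arithmetic on these (possibly exponentially large) integers gives the claimed $O(|G|\log|G|)$ bound. (If the decomposition were not supplied one would prepend Bodlaender's linear‑time algorithm, but the statement assumes it is given.)

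\textbf{Main obstacle.} The only genuinely delicate point is the translation of $\Phi$ into the finite tree automaton $\mathcal{A}_\Phi$, uniformly in the fixed width: one must set up the parse operations and the expanded signature so that they match the incidence encoding (forced by $X$ possibly ranging over edges), and then verify that the composition lemma for $\MSO$ still applies to produce a finite set of composable types. Everything downstream — the dynamic program over $\tau_G$ and its cost analysis — is routine bookkeeping once the automaton, and the fact that its state set size is independent of $|G|$, are in hand.
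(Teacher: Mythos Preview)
The paper does not prove this statement: it is quoted as \cite[Theorem 5.7]{arnborg1991easy} and used as a black box, so there is no ``paper's own proof'' to compare against. Your outline is the standard automata-theoretic route (parse term over a finite $(k{+}1)$-boundaried-graph algebra, Feferman--Vaught composition to get a finite tree automaton whose state space depends only on $k$ and $\Phi$, then a counting DP with $O(|G|)$ updates on integers of $O(|G|)$ bits), which is indeed how the Arnborg--Lagergren--Seese result is established; your identification of the formula-to-automaton translation as the only nontrivial step is accurate.
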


For the purposes of obtaining an $\MSO_2$ formula, it is convenient to represent a connected $k$-partition by the \emph{complement} of the cut-set, similarly to \Cref{appendix:duality}. 

\begin{defn}[Connected partitions as edge sets]
Given an (unordered) $k$-partition $P$, define $F(P) = \cut(P)^c$. Let $\Flats_k(G)$ be the set $\{F(P) : P \in P_k(G) \}$. %
For any $J', J \subseteq E$, define $F_{J, J'}(G) \subseteq \Flats_k(G)$  as those $Q \in \Flats_k(G)$ with $J \subseteq Q$ and $Q \cap J' = \emptyset$.

\end{defn}

Since a connected $k$-partition is determined by its cut set (\Cref{prop:matroidduality}), it follows that $F: \mathscr{P}_k(G) \to \Flats_k(G)$ is a bijection. We are describing (unordered) connected partitions as flats in the graphic matroid, hence the notation ``F'' and ``$\Flats_k$.''

\subsubsection{$\MSO_2$ formula for edge sets in $\Flats_k(G)$}\label{section:MSO2connectedkpartition}

Let $G = (V,E)$ be a graph. For $X \subseteq E$ we will build up to an $\MSO_2$ formula that checks if $X \in \Flats_k(G)$. Our building blocks are inspired by the examples in \cite[Chapter 7]{cygan2015parameterized}. First, we define a formula that checks if a set of nodes, $Y$, is contained in $G[X]:$

$$\In(Y,X) = \forall_{v \in Y} \exists_{e \in X} \inc(v,e).$$

Given two sets of vertices, $U$ and $W$, we define a formula that checks if there is an edge in $X$ connecting a node in $U$ to a node in $W$:

$$\Bridge(U,W,X) = \exists_{u \in U, w \in W, e \in E} \inc(u,e) \wedge \inc(w,e).$$

Next we define a formula that checks if $G[X]$ is connected, by checking whether there are any non-trivial $2$-partitions of $V(G[X])$ with no edges in $X$ between the different blocks.

$$\connE(X) := \forall_{Y \subseteq V} [\In(Y,X) \wedge Y \not = \emptyset \wedge[\exists_{U \subseteq V} \In(U,X) \wedge U \cap Y = \emptyset \wedge U \not = \emptyset \wedge U \cup Y = V]] \to \Bridge(Y,U,X).$$

We next define a formula that checks if an edge has both endpoints in the nodes of a subgraph induced by a set of edges $Y$:

$$\ep(e,Y) = \forall_{v \in V} \inc(e,v) \to (\exists_{e' \in Y} \inc(e',v))$$

Finally, for each $k \geq 1$, we define a formula that takes a collection of edges, $X$, and checks whether it is in $\Flats_k(G)$. This is accomplished by checking that every node of $G$ is incident to some edge in $X$ and that $X$ is a union of $k$ sets of edges, each of which induces a connected subgraph and so that any edge with both endpoints in one of those connected subgraphs is in $X$.

$$F'_k(X) = \In(V,X) \wedge ( \exists_{X_1, \ldots, X_k \subseteq E} (X = \bigcup X_i \wedge \bigwedge_i connE(X_i) \wedge (\forall_{e \in E} \bigwedge_i (\ep(e,X_i) \to e \in X_i ))) $$

Recall that we considered $J$ and $J'$ to be part of the relational structure $\mathcal{R}$, so we can define the $\MSO_2$ formula whose solution sets are the members of $F_{J,J'}$: 

\begin{equation}\label{eqn:MSOformula}
    F_k(X) = F'_k(X) \wedge (J \subseteq X) \wedge (J' \cap X = \emptyset)
\end{equation}

\begin{lemma}\label{lem:kpartMSOformula}
Let $A$ be a model of $\Gamma$, i.e. a graph $G = (V,E)$ with vertex-edge incidence matrix given by $\inc$ and two distinguished subsets of edges, $J$ and $J'$. $F_k(X)$ is true in $A$  if and only if $X = F(P)$ for some connected $k$-partition $P$ of $G$ and $X \cap J' = \emptyset$ and $J \subseteq X$.
\end{lemma}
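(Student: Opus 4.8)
The plan is to verify the equivalence by unwinding the definitions of the sub-formulas $\In$, $\Bridge$, $\connE$, $\ep$, and $F'_k$ one at a time, checking in each case that the formula has the intended meaning over the relational structure $A$. First I would observe that $\In(Y,X)$ holds exactly when every vertex of $Y$ is incident to some edge of $X$, i.e.\ $Y \subseteq V(G[X])$; then that $\Bridge(U,W,X)$ holds exactly when some edge of $X$ (the quantifier ranges over $e \in E$, but the incidence conditions force $e$ to lie in $X$ once we also intersect with $X$ — here I should double check that the intended reading uses $e \in X$ rather than $e \in E$, and state the correct version) joins a vertex of $U$ to a vertex of $W$. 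With these in hand, $\connE(X)$ asserts that there is no partition of $V(G[X])$ into two nonempty parts $Y, U$ with $Y \cup U = V$ and no edge of $X$ between them; the standard fact is that a graph is connected if and only if no such separation exists, so $\connE(X)$ holds iff $G[X]$ is connected. (One subtlety: the quantifier $\forall_{Y \subseteq V}$ ranges over all vertex subsets, but the guard $\In(Y,X)$ together with $U \cup Y = V$ and $\In(U,X)$ restricts attention to bipartitions of $V(G[X])$ once we note vertices not covered by $X$ are irrelevant; I would spell this out carefully.)

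Next I would handle $\ep(e,Y)$: it holds iff both endpoints of $e$ lie in $V(G[Y])$, i.e.\ $e$ has both endpoints among the vertices covered by $Y$. Then I would analyze $F'_k(X)$, which conjoins three conditions: (i) $\In(V,X)$, i.e.\ $X$ spans $G$ (every vertex is incident to an edge of $X$); (ii) $X = \bigcup_i X_i$ for some edge sets $X_1,\dots,X_k$, each inducing a connected subgraph; and (iii) for every edge $e$ and every $i$, if both endpoints of $e$ lie in $V(G[X_i])$ then $e \in X_i$. Condition (iii) says each $X_i$ is \emph{induced} on its vertex set, i.e.\ $G[X_i] = G[V(G[X_i])]$ is an induced subgraph. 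I would then argue that these conditions characterize exactly $\Flats_k(G)$: given $X \in F'_k(X)$, the vertex sets $V(G[X_i])$ form a partition of $V$ (spanning gives the cover; the induced condition forces the $X_i$ to be the full induced edge sets, and a vertex in two distinct $V(G[X_i])$, $V(G[X_j])$ would force an edge between them contradicting... — actually I need to be careful here and check disjointness carefully, perhaps via \Cref{prop:matroidduality} or directly: distinctness of blocks and the induced-subgraph condition), each $G[X_i]$ connected, so $\{V(G[X_i])\}$ is a connected $k$-partition $P$ with $F(P) = \bigcup_i X_i = X$. Conversely, for $P \in P_k(G)$ with blocks $B_1,\dots,B_k$, setting $X_i = E(G[B_i])$ witnesses $F'_k(F(P))$. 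The main obstacle I anticipate is precisely this direction: showing that the existential witnesses $X_1,\dots,X_k$ in $F'_k$ are \emph{forced} to be the induced edge sets of a genuine partition of $V$ — in particular that their vertex sets are pairwise disjoint rather than merely covering $V$. The induced-subgraph condition (iii) is what makes this work, since if $v \in V(G[X_i]) \cap V(G[X_j])$ with $i \ne j$ then... one has to chase through what connectivity of each $X_\ell$ plus the induced condition forces; I would verify that actually (iii) combined with connectedness gives that the $V(G[X_i])$ are the connected components of $G[X]$, hence a partition.

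Finally, I would note that $F_k(X) = F'_k(X) \wedge (J \subseteq X) \wedge (J' \cap X = \emptyset)$ simply adds the two side conditions, which are directly the conditions $J \subseteq X$ and $X \cap J' = \emptyset$ in the statement of the lemma (here using that $J$ and $J'$ are unary relations of the structure $A$, so these are legitimate $\MSO_2$ formulas). Combining everything: $A \models F_k(X)$ iff $X = F(P)$ for some connected $k$-partition $P$ of $G$, with $J \subseteq X$ and $X \cap J' = \emptyset$, which is exactly the claim. The proof is essentially bookkeeping; the only genuinely delicate point is the disjointness/partition argument for the witnesses in $F'_k$, and secondarily making sure each sub-formula's quantifier ranges and guards are stated correctly (e.g.\ in $\Bridge$, whether the bridging edge must lie in $X$).
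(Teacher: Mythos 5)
The paper states this lemma without proof, so there is no argument of record to compare yours against; your unwinding-the-definitions strategy is the natural one, and you have put your finger on exactly the right pressure point, namely whether the existential witnesses $X_1,\dots,X_k$ in $F'_k$ are forced to have pairwise disjoint vertex sets. The difficulty is that the resolution you propose does not work: condition (iii) together with connectivity of each $G[X_i]$ does \emph{not} force the sets $V(G[X_i])$ to be the connected components of $G[X]$. Take $G$ a triangle on $\{a,b,c\}$ with $k=2$ and witnesses $X_1=\{ab,bc,ca\}$, $X_2=\{ab\}$. Each $X_i$ equals the full induced edge set on its vertex set ($\{a,b,c\}$ and $\{a,b\}$ respectively), each induces a connected subgraph, and $X=X_1\cup X_2$ is the whole edge set, which satisfies $\In(V,X)$; so $F'_2(X)$ holds. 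But the full edge set is $F(P)$ only for the trivial $1$-partition, never for a connected $2$-partition. So the backward direction fails for the formula as written, and no amount of care in the step ``verify that the $V(G[X_i])$ are the connected components'' will rescue it --- the claim is false. (The same mechanism, e.g.\ taking repeated witnesses, makes $F'_k$ accept flats of $j$-partitions for every $j\le k$.)

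There is a second, independent failure your outline does not catch: the conjunct $\In(V,X)$ demands that every vertex be incident to an edge of $X$, but if $P$ has a singleton block $\{v\}$ then $F(P)=\cut(P)^c$ contains no edge at $v$, so the forward direction fails as well (on the path $a\!-\!b\!-\!c$ with $P=\{\{a\},\{b,c\}\}$ one has $F(P)=\{bc\}$, yet $F'_2(\{bc\})$ is false). Both defects live in the formula of \cref{eqn:MSOformula} rather than merely in the proof, so the correct move is to repair the formula --- for instance, quantify existentially over $k$ \emph{vertex} sets required to be nonempty, pairwise disjoint, covering, and each connected (connectivity witnessed by an edge set), and define $X$ as the set of edges with both endpoints in a common block; this is still $\MSO_2$, so the downstream appeal to the counting meta-theorem survives. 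While doing so you should also carry out the correction you flagged in $\Bridge$ (the witnessing edge must be required to lie in $X$; otherwise $\connE(X)$ is trivially true on any connected $G$) and note that the guard $U\cup Y=V$ makes $\connE(X_i)$ vacuously true whenever $X_i$ does not span $V$, so as written the per-block connectivity conjuncts impose no constraint at all.
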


We now prove \Cref{thm:uniformlysamplingkparts}.

\begin{proof}

Let $K$ be a class of graphs with universally bounded treewidth. For any $G \in K$ and $J,J' \subseteq E(G)$, by \Cref{thm:arnborgs} and \Cref{lem:kpartMSOformula}, we can count $|F_{J,J'}(G)|$ in time $O(|G|\log(|G|)$ with constant dependent only on the bound on the treewidth and the formula $F_k$. The conditions for running \Cref{alg:inductivesampling} are satisfied.
\end{proof}

\begin{remark}
It is easy to add a relational formula (see \cite{arnborg1991easy}) to \Cref{eqn:MSOformula} that restricts our count to only balanced connected $k$-partitions. In particular, the balanced connected $k$-partition problem is in \emph{extended} monadic second order logic (EMS). From this it should follow that so the counting and sampling problems are $\XP$ in the treewidth. However, as noted at \href{https://cstheory.stackexchange.com/q/44338/44995}{this Stack Exchange question}\cite{msomissing}, the corresponding meta-theorem appears to be missing from the literature. %
\end{remark}

\subsection{Balanced 2-partitions}

We mentioned in \Cref{section:balancedhard} that \cite{dyer_complexity_1985} proved that determining if a graph has a balanced connected $2$-partition is $\NP$-hard. That paper also describes a dynamic programming algorithm that determines if a series-parallel graph has a balanced connected 2-partition. This dynamic programming algorithm can be modified to produce an algorithm for \emph{counting} the number of balanced connected 2-partitions of a given node-weighted series-parallel graph $G$ in time polynomial in $G$ and pseudopolynomial in the weights. We present the details of this algorithm in \Cref{appendix:balancedDP}. To turn such a counting algorithm into an algorithm for calculating the marginals necessary for \Cref{thm:MarginalToSamples}, we proceed along similar lines as in the simple cycle case. %

\begin{defn}[$W^{J, J'}(G,w)(d)$]\label{defn:WJJ'}
Let $(G,w)$ be a node weighted graph, and let $J, J' \subseteq E(G)$.
Define $G^{J,J'}$ by replacing edges in $J$ with the doubled $d$-star gadgets from \Cref{defn:doubledstar} and contracting the edges in $J'$, deleting any self loops that arise in this way.
Assign the ``new nodes'' of $D_d(e)$ weight $0$ for each $e \in J$, and the old nodes the same weight as they had in $G$. The resulting node-weighted graph is denoted $W^{J,J'}(G,w)(d)$. 
\end{defn}

We now show that the marginals necessary for \Cref{alg:inductivesampling} can be computed from $|P_2^0(W^{J,J'}(G,w)(d))|$ (with notation as in \Cref{section:balancedhard}) using division with remainder and the exponential growth rate calculations that drove the intractability result in \Cref{section:balancedhard}:

\begin{prop}\label{lem:balancedpartitionremaindercount}
Let $(G,w)$ be a weighted graph. Then: %

\begin{equation}\label{eqn:balanceddecomp}
    |P_2^0(W^{J, J'}( G,w)(d))| = 2^{d |J|} | \{ X \in P_2^0(G,w) : J \subseteq \cut(X), \cut(X) \cap J' = \emptyset \}| + R_d,
\end{equation}

where $R_d$ is a non-negative integer with:  
\begin{equation}\label{eqn:remainderupperbound}
    R_d \leq 2^{n^2} 2^{ d ( |J| - 1) }. 
\end{equation}

\end{prop}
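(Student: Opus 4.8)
The plan is to mirror the structure of \Cref{lem:Simplecycleremaindercount}, now working with balanced connected $2$-partitions and the doubled $d$-star gadget instead of simple cycles and chains of bigons. The starting point is the restriction map $R_d$ of \Cref{defn:Rdmap} and the counting identity \eqref{eq:pdcut}: for a partition $(A,B) \in P_2(G)$ with both blocks nonempty, there are exactly $2^{d\cdot\cut(A,B)}$ lifts across the new nodes of $D_d(G)$, and since the new nodes all have weight $0$, every lift has the same block weights as $(A,B)$, so a lift of $(A,B)$ is $0$-balanced in $W^{J,J'}(G,w)(d)$ if and only if $(A,B)$ is $0$-balanced in $(G,w)$. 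Contracting the edges of $J'$ identifies $P_2^0$ of the contracted graph with those $0$-balanced partitions of $G$ whose cut avoids $J'$ (contracting an edge forces its endpoints into the same block); here one must check that contraction does not disturb $0$-balancedness, which holds because contracted edges carry over the combined weight, and that no new partitions are introduced. Combining these two observations, grouping the lifted partitions by $\ell := |\cut(X) \cap J|$ for $X$ ranging over $0$-balanced partitions of $G$ with $\cut(X)\cap J' = \emptyset$, gives
\begin{equation*}
|P_2^0(W^{J,J'}(G,w)(d))| = \sum_{\ell = 0}^{|J|} 2^{d\ell}\, \bigl|\{ X \in P_2^0(G,w) : |\cut(X)\cap J| = \ell,\ \cut(X)\cap J' = \emptyset \}\bigr|,
\end{equation*}
which is exactly \eqref{eqn:balanceddecomp} with $R_d$ collecting the terms $\ell \le |J|-1$.

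It then remains to bound $R_d$. Each term in $R_d$ is a product of $2^{d\ell}$ with $\ell \le |J|-1$ and a count of $0$-balanced partitions, and the total number of connected $2$-partitions of a graph on $n$ nodes is crudely at most $2^{n^2}$ (the same bound used in \Cref{prop:SimpleCycleUniformHard} and \Cref{prop:balancedhard}: it is even a bound on $|2^{E(G)}|$ for $n\ge 2$). Hence $R_d \le 2^{d(|J|-1)} 2^{n^2}$, which is \eqref{eqn:remainderupperbound}. Nonnegativity of $R_d$ is immediate since it is a nonnegative integer combination of counts. One subtlety, handled exactly as in the simple-cycle case: one should note the degenerate situations (e.g.\ $|J| = 0$, or partitions with an empty block, which cannot arise for connected $2$-partitions of a graph with $\ge 2$ nodes) so that the lift count $2^{d\cdot\cut}$ is valid; when $J = \emptyset$ the identity reads $|P_2^0(W^{\emptyset,J'}(G,w)(d))| = |\{X \in P_2^0(G,w) : \cut(X)\cap J' = \emptyset\}|$ with $R_d = 0$, consistent with the claimed formula.

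I expect the main obstacle to be bookkeeping around the contraction of $J'$ and the interaction of weights with balancedness, rather than anything deep: one must confirm that $W^{J,J'}(G,w)(d)$, after contracting $J'$ and deleting resulting self-loops, has its $0$-balanced connected $2$-partitions in bijection with the $0$-balanced connected $2$-partitions of $G$ whose cut avoids $J'$ and contains $J$'s gadget paths, and that the weight assignments (new nodes weight $0$, contracted nodes carrying summed weight) make the bijection weight-preserving in the sense needed. Once the bijection is set up, the decomposition by $\ell$ and the crude $2^{n^2}$ bound finish the proof; these steps are the same division-with-remainder style argument already used for simple cycles in \Cref{lem:Simplecycleremaindercount}, so I would present them briefly and refer back to that lemma and to \eqref{eq:pdcut} for the lift count.
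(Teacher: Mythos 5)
Your proposal is correct and follows essentially the same route as the paper: decompose $P_2^0(W^{J,J'}(G,w)(d))$ over the fibers of the restriction map, identify the base with $0$-balanced partitions of $G$ whose cut avoids $J'$ via the contraction of $J'$, count $2^{d\ell}$ balanced lifts per partition with $|\cut(X)\cap J|=\ell$ (using that new nodes have weight $0$), and bound the $\ell\le|J|-1$ terms by $2^{n^2}2^{d(|J|-1)}$. The only cosmetic difference is that you state the full exact sum over $\ell$ while the paper only makes the top term exact and upper-bounds the rest, which is all that is needed.
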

\begin{proof}
Let $G / J'$ denote the quotient graph obtained by identifying $u,v \in V(G)$ if $\{u,v\} \in J'$. First, we decompose 
\begin{equation}\label{eqn:decomposition1} 
P_2 (G/ J') = \bigcup_{k = 0}^{|J|} \{ (A,B) \in P_2(G/J') : | \cut(A,B) \cap J | = k \}. \end{equation}

We define $R_J : P_2( G^{J, J'}(d)) \to P_2( G / J')$ as $R_d$ is in \Cref{defn:Rdmap} by forgetting the assignment of new nodes, We pull back \Cref{eqn:decomposition1} along $R_J$ to obtain: $$P_2( G^{J, J'}(d)) = \bigcup_{k = 0}^{|J|} R_J^{-1} ( \{ (A,B) \in P_2(G/J') : | \cut(A,B) \cap J | = k \} ).$$

The map $\phi^* : P_2(G/ J') \to P_2(G)$ defined by $\phi^*( (A,B) ) = ( \phi^{-1}(A), \phi^{-1}(B))$ is an injection, and the image is $\{ (A,B) \in P_2(G) : \cut(A,B) \cap J' = \emptyset \}$.

Hence we have a \emph{partition} of $P_2(G^{J,J'}(d))$, $$P_2( G^{J, J'}(d)) = \bigcup_{k = 0}^{|J|} (\phi_{J'}^* \circ R_J) ^{-1} ( \{ (A,B) \in P_2(G) : \cut(A,B) \cap J' = \emptyset, | \cut(A,B) \cap J | = k \} ).$$

So far we have decomposed the set of partitions of $G^{J,J'}(d)$. Next, we  compute the $0$-balanced partitions in each block of that decomposition. %
The elements of $(\phi_{J'}^* \circ R_J) ^{-1} ( \{ (A,B) \in P_2(G) : \cut(A,B) \cap J' = \emptyset, | \cut(A,B) \cap J | = k \})$ are obtained by extending a partition in $\{ (A,B) \in P_2(G) : \cut(A,B) \cap J' = \emptyset, | \cut(A,B) \cap J | = k \}$ onto the new nodes. Since each new node has weight $0$, it is impossible to assign new nodes in such a way as to make unbalanced partitions of $G$ balanced. %

The balanced partitions that have $J$ contained in the cut have exactly $2^{d|J|}$ balanced extensions each. This proves \Cref{eqn:balanceddecomp}. %
We are left to show the upper bound of \Cref{eqn:remainderupperbound} for the reamining partitions, %
namely :$$\text{Rem}_d =  (\bigcup_{k = 0}^{|J| - 1} (\phi_{J'}^* \circ R_J) ^{-1} ( \{ (A,B) \in P_2(G) : \cut(A,B) \cap J' = \emptyset, | \cut(A,B) \cap J | = k \} )) \cap P_2^0( W^{J,J'}(G)(d))$$

We have that $R_d = |\text{Rem}_d|$. Suppose that $X$ is some balanced partition of $G$, with $|\cut(X) \cap J| \leq |J| - 1$. The number of ways to extend $X$ to the new nodes and get a balanced partition is at most $2^{ d ( |J| - 1) } $. Since $|P_2(G)| \leq 2^{n^2}$, this provides the upper bound on the remainder term.
\end{proof}

\begin{prop}\label{prop:balancedcountingmachineimpliessampler}
Let $\mathscr{C}$ be some class of graphs that is closed under the operation $G \to G^{J,J'}(d)$ of \Cref{defn:WJJ'}, for all $d \geq 1$. Let $p$ be a polynomial. %
Suppose that $M$ is a Turing machine which can compute $|P_2^0(G)|$ on all weighted graphs $(G,w)$ where $G \in \mathscr{C}$ and $w : V(G) \to \{0,1,\ldots,\}$, in time bounded by $p(|G|, w(G))$. Then there is a polynomial time probabilistic Turing machine that uniformly samples from $P_2^0(G,w)$ in time polynomial in $(|G|, w(G))$ for all $G \in \mathscr{C}$.
\end{prop}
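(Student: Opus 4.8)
The plan is to mirror the proof strategy already used for simple cycles in \Cref{thm:SCtreewidthsampling} and \Cref{lem:Simplecycleremaindercount}, transported to the balanced $2$-partition setting via the doubled $d$-star gadget and \Cref{thm:MarginalToSamples}. Concretely, I would apply \Cref{prop:balancedpartitionremaindercount} to turn the hypothesized counting machine $M$ into a machine that computes the conditional counts required by \Cref{thm:MarginalToSamples} (namely, the number of balanced connected $2$-partitions whose cut set contains $J$ and is disjoint from $J'$), and then feed those counts into \Cref{alg:inductivesampling} through \Cref{thm:MarginalToSamples} to obtain the sampler.

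In more detail, fix a node-weighted graph $(G,w)$ with $G \in \mathscr{C}$, and fix $J, J' \subseteq E(G)$. First I would choose $d$ large enough that the remainder term $R_d$ in \Cref{eqn:balanceddecomp} is strictly smaller than the modulus $2^{d|J|}$. By the bound \Cref{eqn:remainderupperbound}, $R_d \le 2^{n^2} 2^{d(|J|-1)}$, so it suffices to take $d$ with $2^{d} > 2^{n^2}$, e.g.\ $d = n^2 + 1$; this is polynomial in $|G|$. (When $|J| = 0$ the decomposition degenerates and $|P_2^0(W^{J,J'}(G,w)(d))| = |P_2^0(G/J',w)|$ already gives the desired count directly, so that case is handled separately and trivially.) With this $d$, \Cref{eqn:balanceddecomp} reads $|P_2^0(W^{J,J'}(G,w)(d))| = 2^{d|J|} b_{J,J'} + R_d$ with $0 \le R_d < 2^{d|J|}$, where $b_{J,J'}$ is exactly the conditional count we want. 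Hence $b_{J,J'}$ is the quotient of $|P_2^0(W^{J,J'}(G,w)(d))|$ by $2^{d|J|}$, recoverable by a single integer division with remainder, which runs in time polynomial in the bit-length of the inputs \cite[Theorem 3.3]{shoup2009computational}.

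Next I would check the running time and the closure hypothesis. The construction $G \to W^{J,J'}(G,w)(d)$ adds $O(d\cdot|E(G)|)$ new weight-$0$ nodes and contracts edges of $J'$, so $|W^{J,J'}(G,w)(d)|$ is polynomial in $(|G|, d)$ and hence polynomial in $|G|$; the new weights are all $0$, so $w(W^{J,J'}(G,w)(d)) = w(G)$. Since $\mathscr{C}$ is assumed closed under $G \to G^{J,J'}(d)$ for all $d \ge 1$, the graph $W^{J,J'}(G,w)(d)$ lies in $\mathscr{C}$, and we may invoke $M$ on it; by hypothesis $M$ runs in time $p(|W^{J,J'}(G,w)(d)|, w(G))$, which is a polynomial in $(|G|, w(G))$. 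Therefore the oracle call needed by \Cref{alg:inductivesampling} for each pair $(i, J_{i-1})$ — build the gadget graph, run $M$, divide — takes time polynomial in $(|G|, w(G))$. Since \Cref{alg:inductivesampling} makes $|E(G)|$ oracle calls (iterating over the edge set, using the marginals $p(i \mid J_{i-1})$ of \Cref{defn:thosemarginals} instantiated for the distribution on cut sets of balanced $2$-partitions), the whole sampler runs in time polynomial in $(|G|, w(G))$; its correctness is exactly \Cref{thm:MarginalToSamples}, since the ability to compute $b_{J,J'}$ for all $J, J'$ is precisely the hypothesis of that theorem for the balanced case.

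The only genuinely delicate point — and the step I expect to require the most care — is the bookkeeping that the quotient really equals $b_{J,J'}$ and not some shifted quantity: one must be sure that every term of \Cref{eqn:balanceddecomp} is a nonnegative integer and that $R_d < 2^{d|J|}$ strictly, so that division with remainder is unambiguous. This is guaranteed by \Cref{prop:balancedpartitionremaindercount}, so the argument reduces to assembling the pieces; there is no further combinatorial obstacle. One should also note, as in the simple-cycle case, the mild encoding subtlety that $M$ and the gadget constructor must agree on how the ``new nodes'' of each $D_d(e)$ are recorded in the string encoding, so that \Cref{alg:inductivesampling} can address the original edges; we suppress this level of detail exactly as the paper does elsewhere.
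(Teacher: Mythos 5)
Your proposal is correct and follows essentially the same route as the paper's proof: choose $d = n^2+1$ so that the remainder bound \Cref{eqn:remainderupperbound} forces $R_d < 2^{d|J|}$, recover the marginal count $a_{J,J'}$ by division with remainder from $|P_2^0(W^{J,J'}(G,w)(d))|$ computed by $M$ (legitimate by the closure hypothesis), and feed these marginals into \Cref{alg:inductivesampling}. The extra details you supply (the $|J|=0$ case, the observation that the new nodes carry weight $0$ so the total weight is unchanged, and the encoding caveat) are harmless elaborations of the same argument.
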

\begin{proof}

Due to \Cref{alg:inductivesampling}, to sample in polynomial time it suffices to be able to compute $a_{J, J'} := | \{ X \in P_2^0(G,w) : | \cut(X) \cap J | = |J|, \cut(X) \cap J' = \emptyset \}|$ in polynomial time for any given $J, J' \subseteq E(G)$. We will do this by from computing $|P_2^0(W^{J, J'}( G,w)(d))|$ at a value of $d$ which is polynomially large in $|G|$. 

If $d = n^2 + 1$, then $2^{ d |J|} > 2^{n^2} 2^{ d (|J|- 1)}$.
Now, given $N_d =  |P_2^0(W^{J, J'}( G,w)(d))|$, from \Cref{lem:balancedpartitionremaindercount} we know that we can write $N_d = a_{J,J'} 2^{d|J|} + R_d$. Since we can efficiently compute $2^{d|J|}$ and $N_d$ in time polynomial in $(|G|, w(G))$, since we fixed $d = n^2 + 1$, by division with remainder we can compute $a_{J,J'}$ in time polynomial in $(|G|, w(G))$. Thus, we have calculated the marginal that we need for sampling.
\end{proof}

\begin{thm}\label{thm:balSCsampling}
There is an algorithm for uniformly sampling from the balanced partitions of a node weighted \emph{series-parallel} graph $(G,w)$, which runs in time polynomial in $(|G|, w(G))$.
\end{thm}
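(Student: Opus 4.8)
The plan is to combine two ingredients that are already in place in the paper: the counting-to-sampling reduction of \Cref{prop:balancedcountingmachineimpliessampler}, and the existence of a polynomial-time (in $|G|$, pseudopolynomial in the weights) counting algorithm for balanced connected $2$-partitions of series-parallel graphs, which is promised in \Cref{appendix:balancedDP} as a modification of the dynamic programming algorithm of \cite{dyer_complexity_1985}. The only compatibility condition we must verify is that the class of series-parallel graphs is closed under the marginal-graph operation $G \mapsto G^{J,J'}(d)$ of \Cref{defn:WJJ'}, so that the counting algorithm can be invoked on the auxiliary graphs produced inside \Cref{prop:balancedcountingmachineimpliessampler}.

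First I would check this closure property. The operation $G^{J,J'}(d)$ does three things: it contracts the edges in $J'$ (deleting any resulting self-loops), it subdivides each edge of $J$ into a doubled $d$-star (replace the edge by $d$ parallel edges, each then subdivided once), and it leaves all other edges alone. Series-parallel graphs are precisely the graphs of treewidth $\leq 2$ that are also $2$-terminal-reducible in the usual sense; more to the point, the class of (multi)graphs with no $K_4$ minor is closed under edge contraction and edge subdivision, and replacing an edge by a parallel edge is a parallel composition, which also preserves the $K_4$-minor-free property. Deleting self-loops only removes $K_4$ minors. Hence each of the three operations preserves series-parallelism, and therefore so does their composition. (If one prefers, phrase this in terms of treewidth: contraction cannot increase treewidth, and the doubled $d$-star replacement of a single edge keeps treewidth $\leq 2$ since the gadget itself has treewidth $2$ and is glued along the two original endpoints.) Along the way one must also confirm that the node weights assigned by \Cref{defn:WJJ'}---old weights on old nodes, weight $0$ on the new nodes---are nonnegative integers bounded by $w(G)$, so that "polynomial in $(|G^{J,J'}(d)|, w(G^{J,J'}(d)))$" is the same as "polynomial in $(|G|, d, w(G))$".

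The second step is purely an invocation: since the modified Dyer--Frieze dynamic program of \Cref{appendix:balancedDP} computes $|P_2^0(H,w)|$ for any node-weighted series-parallel graph $(H,w)$ in time polynomial in $|H|$ and pseudopolynomial in $w(H)$, and since (by step one) $G^{J,J'}(d)$ is series-parallel with $|G^{J,J'}(d)| = O(|G|\cdot d)$ and weight sum at most $w(G)$, the hypothesis "$M$ computes $|P_2^0(\cdot)|$ in time $p(|G|,w(G))$ on the class $\mathscr{C}$ of series-parallel graphs" of \Cref{prop:balancedcountingmachineimpliessampler} is satisfied with $\mathscr{C}$ the series-parallel graphs and $d$ taken, as in the proof of that proposition, to be $n^2+1$. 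Applying \Cref{prop:balancedcountingmachineimpliessampler} then yields a polynomial-time (in $(|G|,w(G))$) probabilistic Turing machine that uniformly samples from $P_2^0(G,w)$, which is exactly the assertion of the theorem.

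The only real obstacle is the bookkeeping around the closure claim and the fact that $G^{J,J'}(d)$ may be a genuine multigraph (parallel edges, and---before the self-loop deletion---self-loops), so one has to be slightly careful that the series-parallel structure and the counting dynamic program are stated for multigraphs; this is standard but worth a sentence. Everything else is a direct chaining of previously established results, so I expect the write-up to be short.

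\begin{proof}
By \Cref{prop:balancedcountingmachineimpliessampler}, it suffices to exhibit a class $\mathscr{C}$ of node-weighted graphs containing all node-weighted series-parallel graphs, closed under the operation $(G,w)\mapsto G^{J,J'}(d)$ of \Cref{defn:WJJ'} for every $d\geq 1$ and every $J,J'\subseteq E(G)$, and on which $|P_2^0(\cdot)|$ is computable in time polynomial in $|G|$ and pseudopolynomial in $w(G)$.

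Take $\mathscr{C}$ to be the class of node-weighted series-parallel (multi)graphs. We first check closure. The graph $G^{J,J'}(d)$ is obtained from $G$ by three operations performed edgewise: contracting each edge of $J'$ and discarding self-loops; replacing each edge $e\in J$ by the doubled $d$-star $D_d(e)$ (that is, $d$ parallel copies of $e$, each subdivided once); and leaving the remaining edges untouched. Each of these operations preserves the property of having no $K_4$ minor, and hence preserves series-parallelism: edge contraction and edge subdivision never create a $K_4$ minor, replacing an edge by a parallel copy is a parallel composition of series-parallel graphs, and deleting self-loops only destroys minors. Equivalently, in terms of treewidth, contraction does not increase treewidth, and $D_d(e)$ has treewidth $2$ and is attached to the rest of the graph along the two endpoints of $e$, so the replacement keeps the treewidth at $\leq 2$. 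Thus $G^{J,J'}(d)\in\mathscr{C}$. Moreover $|G^{J,J'}(d)| = O(|G|\, d)$, and the weights prescribed by \Cref{defn:WJJ'}---the old weights on old vertices and $0$ on the new vertices---are nonnegative integers with total $w(G^{J,J'}(d)) \leq w(G)$.

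Next we invoke the counting algorithm. The dynamic programming algorithm of \Cref{appendix:balancedDP}, obtained by modifying the algorithm of \cite{dyer_complexity_1985}, computes $|P_2^0(H,w)|$ for any node-weighted series-parallel (multi)graph $(H,w)$ in time polynomial in $|H|$ and pseudopolynomial in $w(H)$. Applying this with $H = G^{J,J'}(d)$, and using $|G^{J,J'}(d)| = O(|G|\, d)$ and $w(G^{J,J'}(d)) \leq w(G)$, gives a Turing machine $M$ computing $|P_2^0(\cdot)|$ on $\mathscr{C}$ in time polynomial in $(|G|, w(G))$ for each fixed polynomially bounded $d$; in particular for $d = n^2+1$, which is the value used in the proof of \Cref{prop:balancedcountingmachineimpliessampler}.

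All hypotheses of \Cref{prop:balancedcountingmachineimpliessampler} are therefore met, and it produces a polynomial-time probabilistic Turing machine that, for every node-weighted series-parallel graph $(G,w)$, uniformly samples from $P_2^0(G,w)$ in time polynomial in $(|G|, w(G))$.
\end{proof}
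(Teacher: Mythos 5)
Your proposal is correct and follows exactly the same route as the paper: combine \Cref{prop:balancedcountingmachineimpliessampler} with the counting dynamic program of \Cref{appendix:balancedDP}, after verifying that node-weighted series-parallel graphs are closed under the operation $G \to G^{J,J'}(d)$ (the paper justifies this closure by the same observation that series-parallel graphs are preserved by replacing edges with doubled $d$-stars and by edge contraction with self-loop removal). Your additional bookkeeping on the size and weight of $G^{J,J'}(d)$ is a harmless elaboration of what the paper leaves implicit.
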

\begin{proof}
This follows from \Cref{prop:balancedcountingmachineimpliessampler} and the dynamic program for counting balanced partitions on series-parallel graphs presented in \Cref{appendix:balancedDP}, since the class of node weighted series-parallel graphs is closed under the operation $G \to G^{J,J'}(d)$ for all $d \geq 1$; this is because series-parallel graphs are closed under replacing edges by doubled $d$-trees, and under edge contractions (provided we eliminate self loops).
\end{proof}

\begin{remark}
It may be possible to extend this to an $XP$ in treewidth algorithm for sampling balanced $k$-partitions, using similar ideas as well as those mentioned in the conclusion of \cite{ito2006partitioning}.
\end{remark}

\subsection{Other families of distributions over partitions}\label{section:otherdistributions}

We conclude this section by pointing out that \emph{many} distributions on $P_k(G)$ and $P_k^0(G)$ are tractable to sample. 
A general strategy for building $k$-partitions of $G$ is to contract $G$ in some random way onto a simpler graph, $G'$, and then pull back $k$-partitions from the simpler graph. The following lemma shows that one can pull back connected $k$-partitions along quotient maps obtained by contracting connected partitions:

\begin{lem}
Let $G$ be a graph, and let $\phi : G \to G / R$ be a graph quotient map, where $R$ is an equivalence relation on the nodes such that the equivalence classes of $R$ induce connected subgraphs. Then for any $(A_1, \ldots, A_k) \in P_k(G/R)$, $( \phi^{-1}(A_1), \ldots, \phi^{-1}(A_k)) \in P_k(G)$. Moreover, if $w$ is a node weight function on $G$, then if we assign each equivalence class of $G/R$ the total weight of all its elements, $\phi^{-1}$ preserves the weight of blocks, and thus also pulls back balanced partitions.
\end{lem}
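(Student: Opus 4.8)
The statement to prove is the final lemma: if $\phi:G\to G/R$ is a quotient map where the classes of $R$ induce connected subgraphs, then $\phi^{-1}$ carries connected $k$-partitions of $G/R$ to connected $k$-partitions of $G$, and it respects a natural induced weighting. The plan is to verify the two defining properties of a connected $k$-partition for the pulled-back tuple, and then check the weight bookkeeping, all of which are elementary.

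First I would check that $(\phi^{-1}(A_1),\ldots,\phi^{-1}(A_k))$ is a partition of $V(G)$. Since $(A_1,\ldots,A_k)$ is a partition of $V(G/R)$, disjointness and covering are preserved under taking preimages of any map: the $\phi^{-1}(A_i)$ are pairwise disjoint because the $A_i$ are, and $\bigcup_i \phi^{-1}(A_i)=\phi^{-1}(\bigcup_i A_i)=\phi^{-1}(V(G/R))=V(G)$ because $\phi$ is surjective (it is a quotient map onto $G/R$). This step requires no hypothesis on $R$ beyond $\phi$ being onto.

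Next I would show that each block $\phi^{-1}(A_i)$ induces a connected subgraph of $G$; this is the only place the hypothesis on $R$ is used, and it is the one mildly nontrivial step. Take $u,v\in\phi^{-1}(A_i)$. Their images $\phi(u),\phi(v)$ lie in $A_i$, and since $A_i$ induces a connected subgraph of $G/R$ there is a path $\phi(u)=w_0,w_1,\ldots,w_m=\phi(v)$ in $G/R$ with every $w_j\in A_i$. For each edge $\{w_{j},w_{j+1}\}$ of $G/R$, by definition of the quotient graph there is an edge of $G$ with one endpoint in the class $\phi^{-1}(w_j)$ and the other in $\phi^{-1}(w_{j+1})$; pick such endpoints $x_j\in\phi^{-1}(w_j)$ and $y_{j+1}\in\phi^{-1}(w_{j+1})$. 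Now within each class $\phi^{-1}(w_j)\subseteq\phi^{-1}(A_i)$, which induces a connected subgraph of $G$ by hypothesis, we can connect the ``arrival'' vertex to the ``departure'' vertex by a path lying entirely in $\phi^{-1}(w_j)$, and we also connect $u$ inside its class and $v$ inside its class. Concatenating these intra-class paths with the inter-class edges yields a walk from $u$ to $v$ in $G$ all of whose vertices lie in $\bigcup_{j}\phi^{-1}(w_j)\subseteq\phi^{-1}(A_i)$. Hence $\phi^{-1}(A_i)$ is connected. The main obstacle, such as it is, is just being careful that each intermediate class is fully contained in $\phi^{-1}(A_i)$ (it is, since $w_j\in A_i$ and $\phi^{-1}$ of a point is one class) and that the classes are internally connected (the hypothesis); there is no real difficulty.

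Finally I would handle the weight claim. Given a node weight $w$ on $G$, define the weight of each equivalence class $c$ of $R$ to be $\sum_{x\in c} w(x)$, which is the induced weight on $V(G/R)$. Then for any block $A_i$, the weight of $\phi^{-1}(A_i)$ under $w$ equals $\sum_{x\in\phi^{-1}(A_i)}w(x)=\sum_{c\in A_i}\sum_{x\in c}w(x)$, which is exactly the weight of $A_i$ under the induced weighting; the first equality is the disjoint decomposition of $\phi^{-1}(A_i)$ into the classes $c\in A_i$. Thus $\phi^{-1}$ sends a partition with prescribed block weights to a partition with the same block weights, so in particular it sends balanced partitions of the weighted quotient to balanced partitions of $(G,w)$. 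This completes the proof. (If the paper wishes to also assert that $\phi^{-1}$ is injective on partitions, this is immediate from surjectivity of $\phi$: $\phi^{-1}(A_i)$ determines $A_i$ as $\phi(\phi^{-1}(A_i))$.)
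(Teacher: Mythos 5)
Your proof is correct; the paper states this lemma without proof, and your argument (preimages of a partition form a partition, lift a path in $G/R[A_i]$ edge by edge and splice in intra-class paths using the connectivity of each equivalence class, then sum weights class by class) is exactly the standard argument the authors are implicitly relying on. No gaps.
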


This lemma can be used to give a recipe for chaining together random partitions into of $G$ with many blocks into an algorithm for obtaining random partitions into $k$ blocks. For example, at each stage one can take $R$ to be an equivalence relation induced by random sets of edges, such as the edges of a random matching, or the  monochromatic edges in a sample from distribution over colorings, or a random forest, as we did in \Cref{section:thechoiceofmodel}. One could also imagine finding random quotients onto graphs of smaller treewidth, and then using the sampling algorithms from the previous sections. %
Additionally, it is known \cite[Theorem 11]{demaine2004bidimensional, demaine2008bidimensionality} %
that plane graphs can be \emph{contracted} onto partially triangulated grid graphs with similar treewidth, which suggests that understanding the connected $k$-partition sampling problem for partially triangulated grid graphs is an open problem with important implications with sampling connected partitions of state dual graphs. %

Another means of producing connected $2$-partitions is via min cuts, since min cuts are always connected. There are polynomial time algorithms for uniformly sampling min $s,t$-cuts genus $g$ graphs given in \cite{chambers2014countin}. %
On general graphs, one can also sample min-cuts in a way that is fixed parameter tractable in the size of the min-cut \cite{berge2019fixed}, but the running time of this algorithm is practical only for very small min-cut sizes.

We emphasize that even though these distributions can be efficiently sampled from, it is not clear how to characterize their properties in terms of interpretable features of districting plans.%
As we discussed in \Cref{section:thechoiceofmodel}, the properties of these distributions as distributions over partitions of the underlying geography may vary significantly with the discretization, and in a given application one needs to decide if this is acceptable.

\section{Conclusions}

\pagebreak
\subsection{Broad overview of paper}

We motivated this paper 
by discussing attempts at characterizing outlier redistricting plans through ensemble methods (\Cref{section:CongressionalMotivation}). In practice this is applied by picking some statistic of interest and comparing the distribution of this statistic over sample maps to those of a proposed plan. %
The theoretical and experimental results in this paper indicate that additional considerations are needed before we can put full trust into statistical outlier analysis %
of gerrymandering. %
In particular: %

\begin{itemize}
    \item The flip walk proposal distribution used in practice is likely not rapidly mixing (\cref{Section:FlipChain} and \cref{subsection:gridgraphempirical}), even on classes of graphs where sampling is tractable (\cref{Section:PositiveResults}).
    \item The complexity results (\cref{thm:generalintractability3CCP}, \cref{prop:balancedhard} and \cref{thm:lambdadualkintractable}) show that for many classes of graphs and distributions, there are likely to be no efficient replacements for the flip walk when it comes to sampling from certain prescribed distributions.
    \item Even if it were possible to sample from an explicitly designed distribution, that distribution may undergo phase changes in its qualitative behavior if the description is slightly modified (\cref{section:phasetransitions}). Along similar lines, a fixed algorithm can produce dramatically different distributions of partitions over the underlying geography if different choices are made regarding the choice of model graph (\Cref{fig:Frankengraph} and \Cref{section:thechoiceofmodel}).  %
\end{itemize}

Altogether, these observations imply that inferential conclusions \emph{may} not be robust to small changes in the set up. %

\subsection{Future directions}

We now describe some directions that the computational redistricting community can go in to address these challenges.

\subsubsection{Other applications to redistricting}

There are local measurements of gerrymandering that do not require guarantees about sampling. One example \cite{chikina_assessing_2017, pegden1} is supported by a rigorous theory for a particular meaning of gerrymandering regarding ``carefully crafted'' plans. It remains an important question to investigate the extent to which the decisions we highlighted in \Cref{section:phasetransitions} and  \Cref{section:thechoiceofmodel} affect the interpretation of these tests. %

Additionally, it is possible to produce a large ensemble of partitions using the flip walk \cite{mattingly}, the means described in \Cref{section:otherdistributions} and other methods such as \cite{cho_sampling_2018, chen_unintentional_2013, recomb}. Despite the difficulties inherent in characterizing the statistical behavior of these ensembles, %
a large collection of plans may be amenable for other species of analysis. For example, questions about the existence of plans meeting certain criteria can sometimes be answered by identifying demonstration plans, rough characteristics of trade-offs between criteria can perhaps be calculated, and certain implications of proposed legislation can be evaluated \cite{deford2019redistricting}. %

\subsubsection{The extreme outlier hypothesis}\label{section:EOH}

Currently a consensus is developing that the notion of an \emph{extreme} partisan outlier is robust between different sampling methods. For example, handful of recent court decisions reference favorably the outcome of such outlier analysis \cite{courtopinion1, courtopinion2, courtopinion3}. %
The consensus asserts that different ensemble methods are measuring a consistent and interpretable feature of the political data because those methods used in practice seem to detect the same extreme partisan outliers. 

The hypothesis that there is a consistent and robust notion of an extreme outlier, which we will call the ``extreme outlier hypothesis'' (EOH), is likely to be a rich source of challenges and questions about ensemble based redistricting. This hypothesis was partially explored in \cite{deford2019redistricting}, where it was found that certain changes in the sampling algorithm had little effect on the tails of some chosen statistics, but that other changes caused certain tails to become exaggerated. However, the changes that exaggerated the tails resulted from interpreting different legal constraints for permissible maps; for example, section 3.4 of \cite{deford2019redistricting} shows some examples of how adherence to the voting rights act can have dramatic impacts on the distribution of partisan scores. 

It is easy to fabricate distributions where any specific plan appears to be an outlier, but such fabricated distributions may not reflect principles important in real world redistricting. To be useful, EOH must incorporate redistricting principles and not just mathematical abstractions. 
For example, in \Cref{section:phasetransitions} we obtained two distributions described by similar parameters, which nevertheless find different extreme outliers. However, this does not falsify the EOH, because supercritical partitions are not representative of legal maps. Therefore, one could reasonably object to the baseline provided by the supercritical distribution, while accepting the usefulness of the baseline provided by the subcritical distribution. On the other hand, as we saw in \Cref{section:thechoiceofmodel}, certain changes to the underlying discretization can shift its partisan properties without changing a typical district's geometry, a more subtle change that may or may not have bearing on the EOH in practice. Of our experiments, the most challenging for the EOH is the significant difference between the UST-partition and MST-partition in the third row of \cref{fig:treegatevotes}.%

A reasonable formulation of EOH hypothesizes that %
it is implausibly difficult to fabricate distributions over districting plans that are defensible as a baseline for redistricting, even under scrutiny by adversarial experts, but which report different \emph{extreme} outliers.
To falsify EOH, it would \emph{suffice} to find reasonable operationalizations of the same set of legal requirements into different sampling algorithms, which nonetheless report different extreme outliers. The problem of establishing precise guidelines for what constitutes a representative distribution over districting plans is understudied and critical to understanding the EOH.%

\subsubsection{Pragmatism and distributional design}\label{section:distributionaldesign}

A pragmatic resolution to the hard questions raised by the EOH may be to pick a handful of sampling algorithms that will be consistently recognized a baseline. 
Some states already incorporate restrictions on the use of partisan data in the drawing of districting plans, and ensemble based methods could be one additional way for those states to operationalize these intentions. However, there is likely not a single collection of distributions that will suit every geography and political culture. Beyond purely mathematical analysis, one route to finding suitable distributions is through empirical analysis and successive refinements under real world conditions. %

Although the procedure for determining which distributions to set as baselines is politically fraught, understanding the sampling algorithms to promote informed decisions is a hard scientific problem. 
In addition to characterizing the distributions generated by these algorithms, an analysis of algorithmic reliability, as in \Cref{Section:FlipChain} and \Cref{section:phasetransitions}, is important for developing standards that constrain data dredging and post-hoc analysis. Such reliability considerations are also important for reproducibility. %
Similarly, we should understand the dependence of partition sampling algorithms on the discretization, to constrain what we called \emph{metamandering} in \Cref{section:thechoiceofmodel}.

Using sampling algorithms to detect partisan intent and using sampling algorithms to constrain possibilities are two separate tasks. A clear danger in both approaches is that distributions could be chosen in a way that creates unnecessary constraints or undesirable bias.
Indeed, \emph{if} the EOH fails, %
then giving someone the power to choose a baseline distribution creates the opportunity for subtler manipulation of voting outcomes. %
Along with exploring the EOH, developing empirically motivated partition sampling algorithms and understanding their trade-offs is a key direction for future research in this area. %

\subsection{Open questions}\label{section:openquestions}

We summarize a handful of remaining questions about sampling connected partitions:%

\begin{itemize}
    \item What sort of ground truth models could be useful for assessing the accuracy of outlier methods? Of the districting plans historically or presently used, what proportion of them are flagged as outliers by suggested methodology? Does it correlate with other evidence for gerrymandering? Is the flagging consistent between methods? %
    \item Can the intractability results be unified and strengthened? It seems unlikely that \Cref{thm:facebounded3CCP} is optimal.
    \item Can a general and practically useful sufficient for the existence of a bottleneck in the flip walk be extracted from the examples in \Cref{Section:FlipChain}?
    \item Besides treewidth, are there other graph parameters that make uniformly sampling from $P_2(G)$ tractable?
    \item All of our intractability results relied on reductions from Hamiltonian cycle, by proving that any algorithm sampling from certain distributions can be modified to put large mass on the longer simple cycles of a graph. However, the partitions that are of interest to redistricting tend to have relatively short boundaries on the order of $\Theta(\sqrt{|V|})$, rather than $\Theta(|V|)$. As mentioned in \Cref{section:otherdistributions}, it is possible to sample from the set of min-cuts, FPT in the size of the min-cut. However, this is a different regime from sampling from the cuts of size $\Theta(\sqrt{|V|})$ for the graphs that arise as state dual graphs (\Cref{section:CongressionalMotivation}). Are there approaches to proving tractability or intractability of sampling such medium length cycles?
    \item Is it possible to uniformly sample from $P_2(L_n)$, where $L_n$ is the $n \times n$ grid graph from \Cref{Section:GridGraph}? What about if we consider the class of graphs obtained from $L_n$ by adding some diagonal edges as in \cref{fig:treegate}, or partially triangulated grid graphs \cite{demaine2008bidimensionality}.
    \item Are there families of graph with unbounded treewidth where $\nu_{\lambda}$ sampling $P_2(G)$ is tractable?
    \item Statistical evidence, included repeating the tests in \cite{kennedy2002monte} as well as the flip pictures in \Cref{subsection:gridgraphempirical}, suggests that the $\nu_{\lambda}$ Metropolis-Hastings weighted flip walk Markov chain on $P_2(L_n)$ mixes rapidly only at the critical value $\lambda = 1/\mu$. Is this true? What is the dependence on the population balance restriction?
    \item Which distributions over $P_k(G)$ can we efficiently sample from? Which of these distributions is robust to changes in the discretization? %
    \item Recalling the motivation (\cref{section:CongressionalMotivation}) and problems raised by discretization (\cref{section:thechoiceofmodel}), one may be inspired to abandon the discrete model and directly sample from partitions of the underlying geography. There are many sampling algorithms one can investigate here, such as some derived from Schramm-Loewner evolution, random lines or polynomial curves. What favorable or unfavorable properties do these sampling algorithms have? %
    \item Although there are many plans, many of them are similar in shape. This may lead one to guess that there is a small collection of plans that are near to every other plan; i.e. that there is an epsilon net in the space of reasonable plans. For low dimensional shapes, it is reasonable to find an epsilon net, but for shapes of dimension $d$, the number of points needed to form an epsilon net grows roughly like $(1/\epsilon)^d$. %
    It would be interesting to determine whether or not the space of reasonable plans was high or low dimensional from this point of view. The authors conjecture that this space will behave as if extremely high dimensional, but if there are ways to constrain it to be low dimensional, then the potential existence of a computable $\epsilon$-net opens another way to discuss typicality while being distribution agnostic, for example through the analysis of Pareto fronts between measurements.
\end{itemize}

\section*{Acknowledgements}

We want to thank the following people for their patience, enthusiasm, eagerness to share knowledge, insightful questions and helpful discussions: 
Hugo Akitaya,
Eric Bach,
Assaf Bar-Natan,
Jin-Yi Cai,
Sarah Cannon,
Ed Chien,
Sebastian Claici,
Moon Duchin, 
Charlie Frogner,
Jordan Ellenberg,
Heng Guo,
David Hayden,
P\'{a}lv\"{o}lgyi D\"{o}m\"{o}t\"{o}r
 Honlapja,
Mamadou Moustapha Kant\'e,
Fredrik Berg Kjolstad,
Tianyu Liu, 
Aleksander M\c{a}dry,
Elchanan Mossel,
Marshall Mueller,
David Palmer,
Wes Pegden,
Sebastien Roch, 
Mikhail Rudoy,
Zach Schutzmann,
Allan Sly, and
Nike Sun. 

We want to thank  Jin-Yi Cai and Tianyu Liu for several in depth discussions that helped to guide this investigation, for catching some mistakes in an earlier version of \cref{section:balancedhard} and for drawing our attention to relevant literature. We want to thank Sebastien Roch for several in depth discussions about bottlenecks in the flip walk, and asking useful questions which helped to guide the direction of investigation. %

We also want to thank the creators, moderators and contributors to the Stack Exchange network; this project benefited greatly from the expertise of the individuals with whom that webpage connected us to: Heng Guo, P\'{a}lv\"{o}lgyi D\"{o}m\"{o}t\"{o}r
 Honlapja, Mamadou Moustapha Kant\'e, Mikhail Rudoy, ``Gamow,'' and ``Kostya\_I.'' %
 
We also wish to thank the students from VRDI 2018. We  especially thank the students in the graph partitioning group from week two, who suffered with us trying to find an algorithm that would solve the connected $2$-partition uniform sampling problem: Austin Eide, Victor Eduardo Chavez-Heredia, Patrick Girardet, Amara Jaeger, Ben Klingensmith, Bryce McLaughlin, Heather Newman, Sloan Nietert, Anna Schall, Lily Wang.
We also want to thank the team that developed Gerrychain at VRDI 2018, which we used extensively in \Cref{Section:Empirical}: Mary Barker, Daryl DeFord, Robert Dougherty-Bliss, Max Hully, Anthony Pizzimenti, Preston Ward.
 
\subsection*{Funding}

The first author was partially supported by the NSF RTG award DMS-1502553 and by U.S. National Science Foundation grants DMS-1107452,
DMS-1107263, DMS-1107367.
The authors acknowledge the generous support of NSF grant IIS-1838071 %
and the Prof. Amar G. Bose Research Grant.
This work was partially completed at the Voting Rights Data Institute in
the summer of 2018.

\bibliography{ref}
\appendix

\newpage

\section{Appendix for complexity results}
\subsection{Verifying \Cref{lem:tedious}}
\begin{figure}[h]
    \centering
    \hspace*{-6cm}  
    \scalebox{.93}{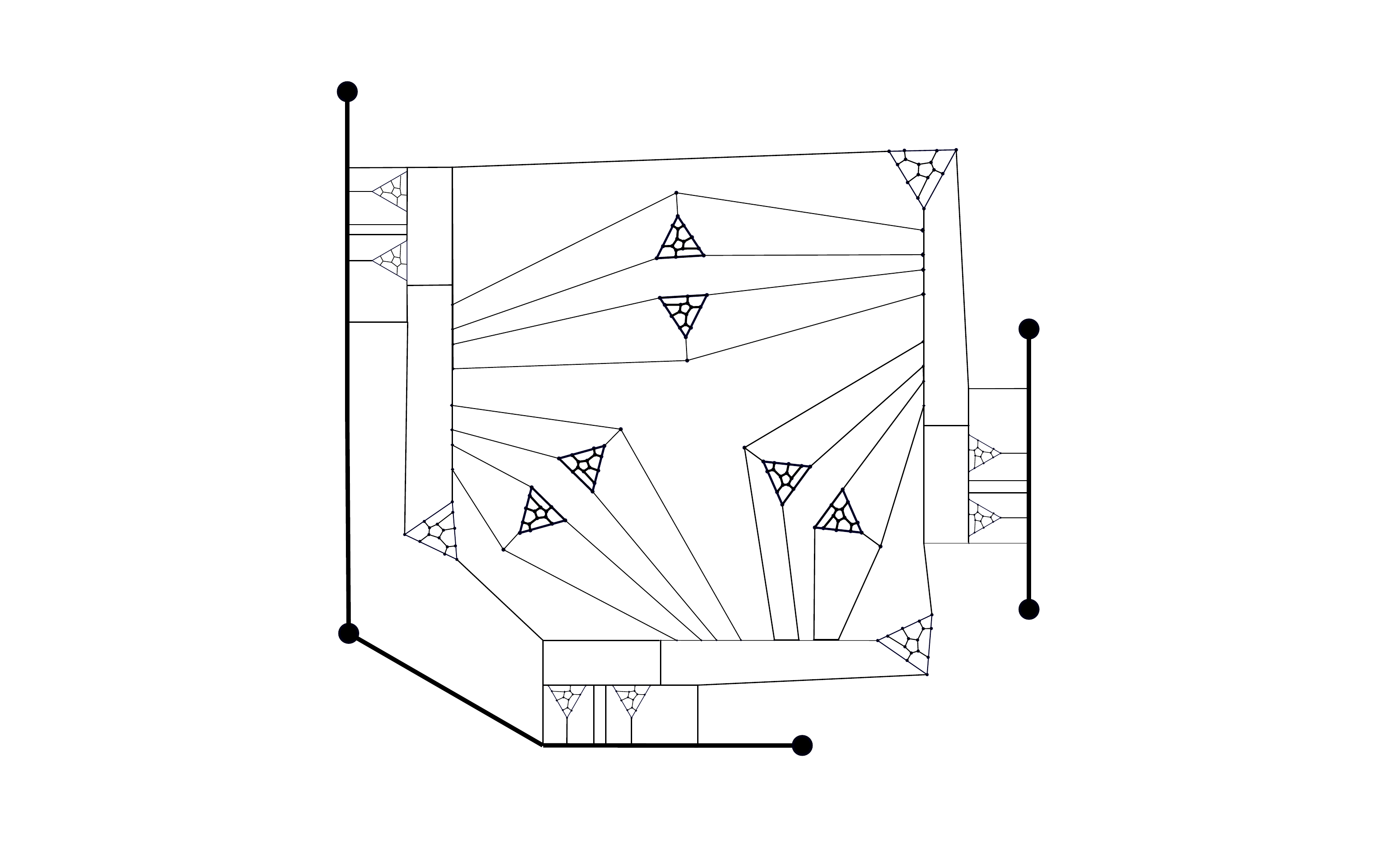}
    \caption{The $3OR$ subdivision as it appears in \Cref{lem:tedious}.}
    \label{fig:3ORDetailedForTedious}
\end{figure}

\subsection{Proving that $R_d$ preserves $3CCP$ graphs}\label{section:preserves3CCP}

By construction, $R_d(G)$ (\cref{defn:GtoRdGconstruction}) remains cubic, and $R_d(G)$ is planar if $G$ is planar. The next few lemmas show that if $G$ is $3$-connected, so is $R_d(G)$. We will let $\tilde{R}_d$ be the graph obtained from $R_d$ by adding $3$ leaf edges to each of $\{a_0, b_0, c_0\}$. The following lemma will show us that we can replace cubic vertices of $G$ with copies of $R_d$ and preserve $3$-connectedness:

\begin{figure}
    \centering
    \begin{tabular}{cc}
    \def\svgscale{.6}{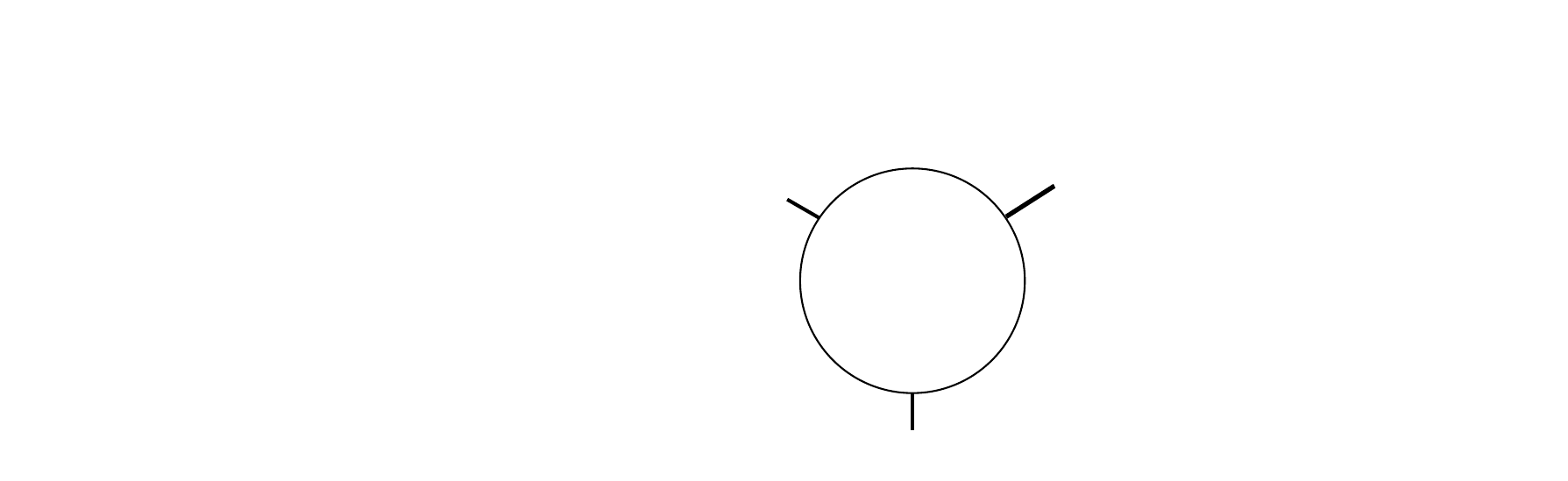}
    \end{tabular}
    \caption{Constructions described in \Cref{lem:3_connected_lemma}    }\label{fig:3ConnectednessConstruction}
\end{figure}

\begin{lem}\label{lem:3_connected_lemma}
Suppose that $A$ is a graph with $3$ leaf nodes, denoted by $L = \{l_1, l_2, l_3\}$.  Let $A'$ be the graph obtained by identifying the 3 leaf nodes of $A$. Let $G$ be some graph with a cubic vertex $v \in V(G)$. Let $A(G)$ be a graph obtained from $G$ by replacing $v$ by $A$: that is, by deleting $v$ from $G$ and choosing some identification between the $3$ leaf nodes of $A$ and the $3$ neighbors of $v$.  %
Then, if $G$ is 3 connected and $A'$ is 3 connected, $A(G)$ is 3 connected.
\end{lem}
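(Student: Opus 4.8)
The plan is to prove $3$-connectivity of $A(G)$ by contradiction: suppose $\{x,y\}$ is a $2$-cut of $A(G)$, and derive a contradiction using the hypotheses that $G$ is $3$-connected and $A'$ is $3$-connected. The key observation is that $A(G)$ decomposes into two pieces glued along $L = \{l_1,l_2,l_3\}$: a copy of $A$ (with its three leaves identified with the three former neighbors of $v$) and a copy of $G - v$. Any separator of $A(G)$ must interact with this decomposition in a controlled way, and we will case on where the two cut vertices $x,y$ lie relative to the copy of $A$.

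First I would set up notation: write $A(G) = A^\circ \cup (G - v)$ where $A^\circ$ denotes the copy of $A$ inside $A(G)$ and the intersection is exactly the vertex set $L$ (each $l_i$ identified with a neighbor $n_i$ of $v$ in $G$). The first main step is to handle the case where at most one of $x,y$ lies in the interior $V(A^\circ) \setminus L$. Then removing $x,y$ leaves the interior of $A^\circ$ still connected to at least two of the $l_i$ (since $A'$, hence $A^\circ$ minus one interior vertex, keeps the $l_i$ well-connected — this is where $3$-connectedness of $A'$ enters, since $A'$ is $A$ with the leaves merged, so $A' - x$ is connected and a path avoiding $x$ through $A'$ lifts to a path through $A^\circ$ between distinct $l_i$'s). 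Meanwhile $(G - v) - \{x,y\}$: if both $x,y \in V(G-v)$, this is $G$ minus $v$ and two other vertices, which is $G$ minus three vertices — but $3$-connectedness of $G$ only guarantees $G$ minus two vertices is connected, so here one has to be slightly careful and use that $v$ has degree $3$, contracting $A^\circ$ back to $v$ to see the would-be separator of $A(G)$ would force $\{x,y,?\}$ behavior in $G$; alternatively, argue directly that $G - v - \{x,y\}$ together with the surviving connections through $l_i$ stays connected. The clean way: contract all of $A^\circ$ back to a single vertex $v^\star$; a connected-component obstruction in $A(G) - \{x,y\}$ either survives this contraction (giving a $\le 2$-cut of $G$, contradiction) or is absorbed into $v^\star$, meaning both $x,y$ were interior to $A^\circ$.

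So the remaining case — and I expect this to be the main obstacle — is when \emph{both} $x,y \in V(A^\circ)\setminus L$. Here removing $x,y$ from $A^\circ$: since $A'$ is $3$-connected, $A' - \{x,y\}$ is connected, so in $A^\circ - \{x,y\}$ all the surviving vertices are connected to the "boundary" $\{l_1,l_2,l_3\}$ through paths (more precisely, $A^\circ - \{x,y\}$ has at most the components that $A' - \{x,y\}$ would have after un-merging the leaves, and since $A'-\{x,y\}$ is connected, each component of $A^\circ - \{x,y\}$ touches $L$). On the $G - v$ side, nothing is removed, so $(G-v)$ is connected and meets all of $\{l_1,l_2,l_3\}$ (using $3$-connectivity of $G$: $G - v$ is connected since $|V(G)| \ge 4$). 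Gluing: every component of $A(G) - \{x,y\}$ contains some $l_i$, and all $l_i$ are in one component via $G - v$; hence $A(G) - \{x,y\}$ is connected, contradiction. The subtle point requiring care is the precise relationship between components of $A^\circ - \{x,y\}$ and components of $A' - \{x,y\}$ — when we un-identify the merged leaf vertex of $A'$ back into three leaves $l_1,l_2,l_3$, a single component of $A' - \{x,y\}$ could in principle split, but only along the leaf vertex, so each resulting piece still contains at least one $l_i$, which is exactly what the argument needs. I would write this last case as the core lemma and present the earlier cases via the contraction trick to keep the exposition short.
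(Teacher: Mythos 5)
Your proposal is correct, and it rests on the same decomposition and the same two inputs as the paper's proof (the contraction $A(G)\to G$ collapsing the interior $B=V(A)\setminus L$ back to $v$, connectivity of $A'$ minus the obstacles to route interior vertices of $A$ out to $L$, and $3$-connectivity of $G$ for the rest), but you organize the argument differently. The paper works path-centrically: for arbitrary $a,b,x,y$ it builds an $x$--$y$ path avoiding $a,b$, splitting on whether the endpoints lie in $B$ and then, for interior endpoints, on $|L\cap\{a,b\}|$, invoking Menger-style disjoint paths from $x$ to $w$ in $A'$ so that at least one exits through a usable leaf. You work separator-centrically: you case on where the putative $2$-cut $\{x,y\}$ sits, and your main case (both cut vertices in $B$) collapses the paper's three subcases into a single component count -- every component of $A-\{x,y\}$ meets $L$ because $A'-\{x,y\}$ is connected and un-merging $w$ can only split a component along the leaves, while $G-v$ is connected and contains all of $L$. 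What the paper's version buys is that the reduction to $G$ is immediate (lift a path); what yours buys is the cleaner treatment of the interior case. The price you pay is in the ``at most one cut vertex interior'' case: when exactly one of $x,y$ lies in $B$, the contracted graph is not literally $G$ minus two vertices but $G-y$ with possibly some edges at $v$ deleted (namely $vl_i$ for any leaf whose unique $A$-neighbour is $x$), so the separator of $G$ you extract is $\{v,y\}$ rather than the image of $\{x,y\}$, and you must check both sides are nonempty -- which follows since $3$-connectivity of $A'$ forces $w$ to have three distinct neighbours, so at least two leaves remain attached to $B-x$, and $A'-w=A[B]$ being $2$-connected keeps $B-x$ in one piece. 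You flagged that care is needed there, and the details do go through, so I see no gap; just be sure to write out that bookkeeping when you formalize the contraction step.
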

\begin{proof}
Suppose that $a, b, x, y \in V(A(G))$. We will show that there is a path in $A(G) \setminus \{a,b\}$ between $x$ and $y$. Let $B = V(A) \setminus L$, and let $C : V(A(G)) \to V(G)$ be the map that contracts $B$ back to $v$: for $s \not \in B$, $C(s) = s$, and for $s \in B$, $C(s) = v$. There are two cases to consider:

\begin{enumerate}
    \item If $C(x) \not = v$ and $C(y) \not = v$, then there is a path between $C(x)$ and $C(y)$ in $G \setminus \{ C(a), C(b) \}$, since $G$ is 3-connected. This can be lifted to a path between $x$ and $y$ in $A(G) \setminus \{a, b\}$.
    \item If $C(x) = v$, it is always possible to find a path in $A(G) \setminus \{a,b\}$ from $x$ to some $x'$ with $C(x') \not = v$. There are three cases:

\begin{enumerate}
\item If $| L \cap \{a,b\}| = 0$: As $A'$ is $3$-connected, there is a path in $A' \setminus (B \cap \{a,b\})$ from $x$ to $w$. This gives a path in $A(G)$ from $x$ %
to a node %
$L$. 

\item If $| L \cap \{a,b\}| = 1$: At most one node of $\{a,b\}$ can be contained in $B = A \setminus L$. Thus, $A' \setminus (B \cap \{a,b\})$ is $2$-connected, so there are two paths in $A' \setminus (B \cap \{a,b\})$  from $x$ to $w$. These give paths in $A(G)$, which only intersect $\{a,b\}$ at $L$, and of these paths connects to a node in $L \setminus \{a,b\}$.

\item If $| L \cap \{a,b\}| = 2$: Since $A'$ is $3$ connected, there are three node disjoint paths from $x$ to $w$. In $A(G)$, this corresponds to a path to each of the three leaf nodes, one of which is not contained in $\{a,b\}$.

\end{enumerate}
    
Likewise, if $C(y) = v$, then we can connect $y$ to some $y'$, with $C(y') \not = v$. Once we have connected $x$ to $x'$ and $y$ to $y'$ outside of $A(v)$, we are back in Case 1.

\end{enumerate}

\end{proof}

The following lemma is well known; it is one of the Barnette-Grunbaum (BG) operations, introduced in \cite[Proof of Theorem 2]{barnette1969steinitz}. See also \cite{schmidt2011structure}. %

\begin{lem}[BG-operation] \label{lem:addedge3connected}
Let $G$ be a $3$-connected graph. Let $e_1$ and $e_2 \in E(G)$. Suppose that $G'$ is the graph obtained from $G$ by subdividing each $e_i$ by introducing a vertex $x_i$, and then adding an edge from $x_1$ to $x_2$. Then $G'$ is $3$-connected.
\end{lem}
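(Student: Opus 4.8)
\textbf{Proof proposal for \Cref{lem:addedge3connected} (the BG-operation lemma).}

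The plan is to verify directly that after the BG-operation, no two vertices of $G'$ form a cut set. First I would fix notation: let $e_1 = \{u_1, w_1\}$ and $e_2 = \{u_2, w_2\}$ be the two edges of $G$, let $x_1, x_2$ be the new subdivision vertices, and let $f = \{x_1, x_2\}$ be the new edge. Thus $V(G') = V(G) \cup \{x_1, x_2\}$ and $E(G') = (E(G) \setminus \{e_1, e_2\}) \cup \{\{u_1,x_1\}, \{x_1,w_1\}, \{u_2,x_2\}, \{x_2,w_2\}, f\}$. Since $|V(G')| \geq 5$, it suffices to show that $G' \setminus \{a,b\}$ is connected for every pair $\{a,b\} \subseteq V(G')$. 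I would also record the elementary fact that subdividing edges of a $k$-connected graph (without adding $f$) preserves $k$-connectedness — call the intermediate graph $G''$ (which is $G$ with $e_1, e_2$ subdivided); $G''$ is $3$-connected, and $G' = G'' + f$, so adding an edge can only help, \emph{except} that $x_1$ and $x_2$ have degree $3$ in $G'$ and we must make sure removing two vertices doesn't isolate one of them. Actually the cleanest route avoids $G''$ and argues in $G'$ directly.

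The key steps, in order: (1) If $\{a,b\} \cap \{x_1, x_2\} = \emptyset$, then $\{a,b\} \subseteq V(G)$; since $G$ is $3$-connected, $G \setminus \{a,b\}$ is connected, and the corresponding paths lift to $G' \setminus \{a,b\}$ (a path through $e_i$ in $G$ becomes a path through $u_i x_i w_i$ in $G'$, and $x_i \notin \{a,b\}$ so this is fine), hence $G' \setminus \{a,b\}$ is connected. (2) If exactly one of $a,b$ is in $\{x_1,x_2\}$, say $a = x_1$ and $b \in V(G)$: in $G' \setminus \{x_1, b\}$, the vertex $x_2$ still has its edge to at least one of $u_2, w_2$ available (since $b$ can kill at most one of them), so $x_2$ is connected to $V(G) \setminus \{b\}$; meanwhile $G' \setminus \{x_1, b\}$ restricted to $V(G) \setminus \{b\}$ contains the connected graph $G \setminus \{b, \text{(one endpoint of }e_1\text{)}\}$'s worth of connectivity — more carefully, $(G \setminus b)$ is $2$-connected so $(G\setminus b) \setminus \{\text{one vertex}\}$ is connected, and removing $x_1$ only deletes edges $\{u_1 x_1\},\{x_1 w_1\}$ which is like removing $e_1$ from a graph that still has everything else; one checks $(V(G)\setminus b, E(G')[V(G)\setminus b])$ is connected because it contains $G\setminus b$ minus at most the edge $e_1$, and $G \setminus b$ is $2$-edge-connected. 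So everything is connected to $V(G)\setminus b$, including $x_2$. (3) If $\{a,b\} = \{x_1, x_2\}$: then $G' \setminus \{x_1,x_2\}$ equals $G$ with $e_1$ and $e_2$ simply \emph{deleted}; since $G$ is $3$-connected it is $3$-edge-connected, so deleting two edges leaves it connected.

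The main obstacle I expect is case (2), and within it the bookkeeping about edge-connectivity: one must be careful that deleting the subdivision vertex $x_i$ together with an arbitrary vertex $b \in V(G)$ does not disconnect the remainder. The subtlety is that removing $x_1$ is \emph{not} the same as removing a vertex of $G$ — it is the same as deleting the edge $e_1$ — so the argument naturally wants to mix vertex-connectivity of $G$ (to handle removal of $b$) with edge-connectivity of $G$ (to handle the "removal" of $e_1$). The clean statement to invoke is: a $3$-connected graph is $2$-edge-connected even after deleting any single vertex (since $G \setminus b$ is $2$-connected, hence $2$-edge-connected), so $G \setminus b$ minus the edge $e_1$ is connected; then one patches in $x_2$ via its surviving pendant edge and patches in $x_1$ only in the cases where $x_1 \notin \{a,b\}$. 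I would present these three cases as short paragraphs, each a couple of sentences, and cite \cite{barnette1969steinitz} and \cite{schmidt2011structure} for the standard version; since the lemma is classical, a terse verification is appropriate rather than a fully detailed proof.
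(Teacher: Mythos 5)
Your proposal supplies a direct case analysis for a lemma the paper does not prove at all: the paper simply cites \cite[Proof of Theorem 2]{barnette1969steinitz} and \cite{schmidt2011structure}. Your three-case split (neither, exactly one, or both of $a,b$ in $\{x_1,x_2\}$) is the right skeleton, and cases (2) and (3) are sound: the mix of vertex-connectivity of $G$ (to absorb the removal of $b$) with $2$-edge-connectivity of $G\setminus b$ (to absorb the "deletion" of $e_1$ caused by removing $x_1$), and the use of $3$-edge-connectedness in case (3), are exactly the right tools.

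Two points need repair. First, your preamble asserts that the intermediate graph $G''$ (with $e_1,e_2$ subdivided but $f$ not yet added) is $3$-connected; it is not — each $x_i$ has degree $2$ in $G''$, so $\{u_1,w_1\}$ is a cut set. You discard this route a sentence later, so it does not infect the final argument, but the claim as written is false. Second, and more substantively, case (1) only shows that the vertices of $V(G)\setminus\{a,b\}$ lie in a single component of $G'\setminus\{a,b\}$ (via the lifted paths); it never attaches $x_1$ and $x_2$ to that component. In the generic subcase this is trivial ($x_i$ keeps a neighbor among $u_i,w_i$), but the subcase $\{a,b\}=\{u_1,w_1\}$ is precisely where the lemma could fail and where the new edge $f$ does its only real work: there $x_1$ survives only through $f$ and $x_2$, and $x_2$ reaches $V(G)\setminus\{a,b\}$ because simplicity of $G$ forces $e_2$ to retain at least one endpoint outside $\{u_1,w_1\}$. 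Add those two sentences to case (1) and the verification is complete.
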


\begin{lem}
If $G$ is 3-connected, then so is $R_d(G)$.
\end{lem}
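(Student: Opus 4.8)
The goal is to show that if $G$ is a $3$-connected cubic plane graph, then $R_d(G)$ is also $3$-connected. The plan is to proceed by induction on $d$ and to reduce the claim to the two structural lemmas already available: \Cref{lem:3_connected_lemma} (replacing a cubic vertex by a graph whose leaf-identification is $3$-connected preserves $3$-connectedness) and \Cref{lem:addedge3connected} (the Barnette--Grunbaum operation preserves $3$-connectedness). Recall from \Cref{defn:GtoRdGconstruction} that $R_d(G)$ is obtained by replacing each vertex of $G$ with a copy of the gadget $R_d$ (with its three attaching nodes $a_0,b_0,c_0$ glued onto the three half-edges incident to that vertex). By \Cref{lem:3_connected_lemma}, it therefore suffices to prove the following local statement: let $R_d'$ denote the graph obtained from $R_d$ by identifying its three attaching nodes $a_0, b_0, c_0$ into a single vertex; then $R_d'$ is $3$-connected. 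Indeed, once we know $R_d'$ is $3$-connected, we can replace the cubic vertices of $G$ one at a time, each replacement preserving $3$-connectedness by \Cref{lem:3_connected_lemma}, and after $|V(G)|$ such replacements we arrive at $R_d(G)$.

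\textbf{The inductive step for $R_d'$.} I would establish that $R_d'$ is $3$-connected by induction on $d$. For the base case $d = 0$: $R_0$ is a $3$-cycle on $\{a_0, b_0, c_0\}$, so $R_0'$ is a single vertex with three self-loops, which we should handle by convention (or start the induction at $d=1$, where $R_1'$ is a small explicitly-checkable graph). For the inductive step, recall the construction of $R_{d+1}$ from $R_d$ in \Cref{defn:RDDef}: one subdivides each of the three ``outer'' edges $\{x_d, y_d\}$ of $R_d$ with a new node $z_d'$, attaches a pendant edge $\{x_d', x_{d+1}\}$ at each subdivision node, and then glues on a fresh $3$-cycle $C_{d+1}$ on $\{a_{d+1}, b_{d+1}, c_{d+1}\}$ by identifying the $x_{d+1}$. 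The key observation is that subdividing the three outer edges of $R_d$ and joining the three subdivision points by a triangle is, up to the identification, exactly an iterated Barnette--Grunbaum operation: adding the triangle $C_{d+1}$ on the three subdivision nodes amounts to two applications of \Cref{lem:addedge3connected} (add one edge, then the second, then the third — subdividing as needed), each preserving $3$-connectedness. Passing to $R_{d+1}'$, where the new attaching nodes $a_{d+1}, b_{d+1}, c_{d+1}$ get identified, corresponds precisely to the leaf-identification hypothesis of \Cref{lem:3_connected_lemma} applied in reverse, or can be seen directly: the three pendant edges together with their shared endpoint form a claw that, once its leaves are glued, is just a vertex of degree $3$ attached to $R_d'$ — and attaching a degree-$3$ vertex to $3$ distinct vertices of a $3$-connected graph preserves $3$-connectedness. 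Assembling these, $R_{d+1}'$ is obtained from $R_d'$ by a sequence of operations each of which is known to preserve $3$-connectedness, completing the induction.

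\textbf{Main obstacle.} The routine parts — cubic-ness and planarity of $R_d(G)$ — are immediate from the construction and were already noted in the text. The delicate point is keeping the bookkeeping straight between $R_d$ (three attaching nodes) and $R_d'$ (attaching nodes identified), and making sure the inductive hypothesis is stated for exactly the object to which \Cref{lem:3_connected_lemma} applies. In particular one must verify carefully that, in the construction $R_d \rightsquigarrow R_{d+1}$, the three ``outer'' edges being subdivided are genuinely distinct edges of $R_d$ and that their subdivision nodes become three distinct vertices of the relevant graph — otherwise the Barnette--Grunbaum lemma does not apply cleanly. I expect this edge-distinctness / vertex-distinctness check, together with correctly identifying which three vertices the new triangle $C_{d+1}$ attaches to, to be the main place where care is needed; everything else is a direct appeal to \Cref{lem:3_connected_lemma,lem:addedge3connected}.
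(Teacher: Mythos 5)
Your high-level strategy (reduce to a local $3$-connectivity statement about the gadget via \Cref{lem:3_connected_lemma}, then induct on $d$ using the Barnette--Grunbaum operation) is the same as the paper's, but the local statement you reduce to is not the one the lemma needs, and it is in fact false. \Cref{lem:3_connected_lemma} applies to a gadget $A$ with three \emph{leaf} nodes, and $A'$ identifies those leaves. For the vertex replacement $G \mapsto R_d(G)$ the relevant gadget is $R_d$ together with the three half-edges that will carry the edges of $G$, so the correct $A'$ is $R_d$ plus a \emph{new apex vertex} $w$ joined to $a_0,b_0,c_0$ (the paper's $(\tilde{R}_d)'$, which for $d=0$ is $K_4$). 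You instead propose to identify $a_0,b_0,c_0$ themselves into a single vertex. Beyond the degenerate base case you noticed, that graph is not $3$-connected for any $d\geq 1$: in $R_1$ the subdivision vertex $a_0'$ is adjacent exactly to $b_0$, $c_0$, and $a_1$, so after merging $a_0,b_0,c_0$ into a vertex $u$ it has only the two distinct neighbors $u$ and $a_1$, and $\{u,a_1\}$ is a $2$-cut. So the induction you set up is on a false statement and cannot be carried out.

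There is also a confusion about the geometry of the recursion. The gadget grows \emph{inward}: $C_{d+1}$ is a new innermost triangle attached by pendant edges to the subdivision points of $C_d$, and the attaching nodes of $R_{d+1}$ remain $a_0,b_0,c_0$ --- nothing involving $a_{d+1},b_{d+1},c_{d+1}$ is identified when forming the auxiliary graph. Consequently the inductive step is not "two BG operations put a triangle on the three subdivision nodes": the new triangle is not on the subdivision nodes. Two applications of \Cref{lem:addedge3connected} produce a single cubic vertex $c$ adjacent to the three subdivision points, and one must then replace $c$ by the triangle-with-pendant-edges (that is, by $\tilde{R}_0$, whose leaf-identification is $K_4$) using \Cref{lem:3_connected_lemma} a second time. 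With the apex version of the auxiliary graph and this two-stage inductive step, the argument closes exactly as in the paper.
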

\begin{proof}

If we show that $(\tilde{R}_d)'$ (in the notation of \Cref{lem:3_connected_lemma} and the paragraph preceding it) is $3$ connected, then the claim follows from \Cref{lem:3_connected_lemma} by considering $R_d(G)$ as obtained by replacing each node of $G$ by an $\tilde{R}_d$ one at a time in the sense given by \Cref{lem:3_connected_lemma}. To prove that $(\tilde{R}_d)'$ is 3-connected we argue by induction. In the base case, $(\tilde{R}_0)'$ is a $K_4$ graph, so it is $3$-connected. Let $Q_d$ be obtained from $(\tilde{R}_d)'$ by adding a single node $c$ in the center connected with three edges subdividing edges of the inner circle of $R_d$. If $(\tilde{R}_d)'$ is $3$-connected, then it follows from applying the BG-operation of \Cref{lem:addedge3connected} twice that $Q_d$ is also $3$-connected. From this it follows that $(\tilde{R}_{d+1})'$ $3$-connected, because $(\tilde{R}_{d+1})'$ is obtained by replacing $c$ with an $\tilde{R}_0$, so it is also $3$-connected by \Cref{lem:3_connected_lemma}.

\end{proof}

\begin{remark}
\Cref{lem:3_connected_lemma} and \Cref{lem:addedge3connected} make it relatively straightforward to check that inserting the $3OR$ gadget preserves $3$-connectedness, which is stated in \cite{garey_plane_1976} without proof.
\end{remark}

\begin{lem}
Suppose that $H$ is a cubic planar graph, with face degree bounded by $d$. Then $R_d(H)$ has face degree bounded by $3d$.
\end{lem}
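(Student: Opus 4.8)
The plan is to classify the faces of $R_d(H)$ by where they sit. Fix a planar embedding of $R_d$ inside a closed topological disk $\Delta$ so that the three attaching nodes $a_0,b_0,c_0$ lie on $\partial\Delta$ in the cyclic order matching the rotation at the vertex being replaced. In $R_d(H)$ each vertex $v$ of $H$ is replaced by a copy $\Delta_v$ of this disk, and --- since the construction of \Cref{defn:GtoRdGconstruction} keeps the edges of $H$, each now joining an attaching node of $\Delta_u$ to one of $\Delta_v$ --- the disks $\Delta_v$ together with the (undivided) edges of $H$ cut the plane into the interiors of the $\Delta_v$ and into regions that are in bijection with, and homeomorphic to, the faces of $H$. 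Hence every face of $R_d(H)$ is either (a) an internal face of some copy of $R_d$, or (b) a face lying inside the region of a face $F$ of $H$. I will bound the degree in each case.

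For (a): I claim, by induction on $d$ following the recursion of \Cref{defn:RDDef}, that the outer boundary of $\Delta$ is the triangle $(a_0,b_0,c_0)$ if $d=0$ and the hexagon $a_0 c_0' b_0 a_0' c_0 b_0'$ if $d\ge 1$, and that the bounded faces of $R_d$ are the single triangle $C_d$ together with, for $d\ge 1$, a collection of pentagons --- three of them being created between each pair of consecutive triangles $C_i, C_{i+1}$. The inductive step only has to follow what one application of the construction (subdivide the three edges of the current innermost triangle, attach three spokes $x_i' x_{i+1}$, and glue on the new triangle $C_{i+1}$) does to the face list: it leaves the outer hexagon untouched, and replaces the innermost triangle $C_i$ by a subdivided hexagon whose interior is split by the spokes and by $C_{i+1}$ into three pentagons plus the new triangle. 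In particular every internal face of $R_d$ has degree at most $5$, hence at most $3d$ once $d\ge 2$; the cases $d\le 1$ are verified directly (and are never needed in the applications, where $d$ is large).

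For (b): let $F$ be a face of $H$ with $\deg_H F = f \le d$, and walk around the corresponding face of $R_d(H)$. Each of the $f$ edges of $F$ survives as a single edge. At each of the $f$ vertices $v$ of $F$, the corner of $F$ at $v$ gets replaced by the sub-arc of $\partial\Delta_v$ bounding the angular sector of $F$ at $v$; because $v$ is cubic and $a_0,b_0,c_0$ cut $\partial\Delta_v$ into three arcs which correspond bijectively to the three faces of $H$ incident to $v$, this is one of those three arcs, and by the boundary description above each such arc has length $1$ (if $d=0$) or $2$ (if $d\ge 1$). So the face of $R_d(H)$ has degree $f + 2f = 3f \le 3d$. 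Combining (a) and (b) proves the lemma, and the same bookkeeping supplies the face-degree half of \Cref{prop:Rdkeeps3CCP}.

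The step I expect to be the main (though purely combinatorial) obstacle is the inductive description of the internal faces of $R_d$ in case (a): keeping the node labels $x_i, x_i'$ straight through the recursion, and confirming that no bounded face other than the innermost triangle ever has size different from $5$, requires care, but it can be read off \Cref{fig:recursiveinitial} and \Cref{fig:labelsrecursive} and verified in the single inductive step described above.
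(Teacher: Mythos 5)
Your case (b) is exactly the paper's (one-line) argument made precise: each of the $f$ original edges of a face $F$ of $H$ survives, and each cubic corner is replaced by one of the three arcs into which $a_0,b_0,c_0$ cut the outer hexagon of $R_d$, contributing $2$ edges, for a total of $3f\le 3d$. Your decision to also bound the internal faces of the gadgets (case (a)) is a sound addition that the paper leaves implicit.

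However, the inductive invariant you state for case (a) --- that every bounded face of $R_d$ other than the innermost triangle is a pentagon --- is false, and the error sits precisely in the inductive step you flag as the crux. Passing from $R_i$ to $R_{i+1}$ does not only ``replace the innermost triangle $C_i$'' by three pentagons plus $C_{i+1}$: it also subdivides the three edges of $C_i$, and each of those edges lies on the boundary of one of the three faces created at the \emph{previous} step, between $C_{i-1}$ and $C_i$. Each such face therefore gains a vertex and becomes a hexagon. Concretely, in $R_2$ the face outside $C_1$ containing $b_0$ is bounded by $c_0'b_0$, $b_0a_0'$, the spoke $a_0'a_1$, the two half-edges $a_1b_1'$ and $b_1'c_1$ of the subdivided edge $a_1c_1$, and the spoke $c_1c_0'$ --- six edges. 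A count via Euler's formula confirms the correct face inventory of $R_d$ for $d\ge 2$: one triangle ($C_d$), three pentagons (the innermost ring), and $3(d-1)$ hexagons. The conclusion survives with the corrected bound: internal faces have degree at most $6$, which is at most $3m$ whenever the face-degree bound $m$ of the cubic graph $H$ is at least $2$ (and necessarily $m\ge 3$ here), so the lemma and \Cref{prop:Rdkeeps3CCP} still follow. You should repair the stated invariant rather than the overall structure.
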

\begin{proof}
For each face, each vertex along that face supplies two additional edges when we replace vertices with copies of $R_d$. Thus, the face degree multiplies by $3$. The claims follows. %
\end{proof}

Altogether, we have shown that for all $d$, the construction $G \to R_d(G)$ sends $\mathscr{C}_{m}$ into $\mathscr{C}_{3m}$, where $\mathscr{C}_m$ is as in \Cref{defn:Cm3CCP}.

\subsection{Duality for connected $k$-partitions}\label{appendix:duality}

In this section, we prove \Cref{thm:kpartitionduality:ref}.
There are three main steps in the proof, which answer three questions: 
\begin{enumerate}
    \item Can you recover a connected partition from its edge boundary? 
    \item What does the edge boundary of a partition of a plane graph look like in the dual graph? 
    \item In what way is the number of blocks of a connected partition reflected in its representation in the dual graph?
\end{enumerate}
We state and prove the theorems that answer these questions in the next three subsections, and the end result is \Cref{thm:kpartitionduality:ref}. The reader may note that 1) is answered by matroid duality between the graphic and cographic matroids (specifically between flats, which correspond to connected partitions, and unions of circuits)\footnote{We wish to acknowledge a helpful \href{https://mathoverflow.net/q/316132/41873}{MathOverflow discussion} that drew our attention to this connection to matroids \cite{matroidmathoverflow}.}, that the answer to 2) follows quickly from the usual bond-cycle duality, and that 3) is a discrete version of Alexander duality.

\subsubsection{Connected partitions and edge cuts}

We first recall the bijection between connected partitions and edge-cuts:

\begin{defn}[Unordered connected partitions]
Let $\mathscr{P}(G)$ denote the set of \emph{unordered} partitions of $V(G)$. Let $\mathscr{P}^c(G) \subseteq \mathscr{P}(G)$ denote the set of partitions such that each block induces a connected subgraph.
That is, $\mathscr{P}^c(G) = \bigcup_{k = 1}^{|V(G)|} \mathscr{P}_k(G)$.

\end{defn}

\begin{defn}[Edge cut]
Let $P$ be a partition of $V(G)$.  If $P = \{A_1, \ldots, A_k\}$, then we refer to the $A_i$ as the \emph{blocks} of $P$. Let $\cut(P)$ denote the set of edges of $G$ with endpoints in different blocks of $P$. %
\end{defn}

\begin{defn}[Cut sets]
Let $\Cuts(G)$ be the set of the cuts of partitions of $V(G)$. That is, $\Cuts(G) = \{ \cut( P ) : P \in \mathscr{P}(G) \}$. The elements of $\Cuts(G)$ are called cut sets.
\end{defn}

\begin{defn}[Component map]
Given $J \in \Cuts(G)$, define a partition $\comp(J) \in \mathscr{P}^c(G)$ as the connected components of $G \setminus J$. This defines a function $\comp : \Cuts(G) \to \mathscr{P}^c(G)$.
\end{defn}

\begin{prop}\label{prop:matroidduality}
The functions $\comp$ and $\cut$ induce a bijection between $\Cuts(G)$ and $\mathscr{P}^c(G)$.
\end{prop}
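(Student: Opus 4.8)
\textbf{Proof proposal for \Cref{prop:matroidduality}.}

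The plan is to show that $\comp$ and $\cut$ are mutually inverse bijections between $\Cuts(G)$ and $\mathscr{P}^c(G)$. There are two composites to check. First I would show that $\comp \circ \cut = \mathrm{id}_{\mathscr{P}^c(G)}$. Take $P = \{A_1, \ldots, A_k\} \in \mathscr{P}^c(G)$ and set $J = \cut(P)$. The graph $G \setminus J$ is obtained from $G$ by deleting exactly the edges with endpoints in distinct blocks, so each $G[A_i]$ survives intact and there are no edges between distinct blocks remaining. Since each $A_i$ induces a \emph{connected} subgraph (this is where the ``$c$'' hypothesis is used — the bijection fails for arbitrary partitions), the connected components of $G \setminus J$ are precisely the $A_i$, i.e.\ $\comp(\cut(P)) = P$. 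In particular $\cut$ is injective on $\mathscr{P}^c(G)$ and $\comp$ is surjective onto $\mathscr{P}^c(G)$.

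Next I would show $\cut \circ \comp = \mathrm{id}_{\Cuts(G)}$. Let $J \in \Cuts(G)$, so $J = \cut(Q)$ for some (possibly non-connected) partition $Q \in \mathscr{P}(G)$. Let $P = \comp(J)$, the partition into connected components of $G \setminus J$; note $P$ is a refinement of $Q$, since deleting $J$ cannot merge blocks of $Q$. I claim $\cut(P) = J$. The inclusion $\cut(P) \subseteq J$ holds because any edge not in $J$ joins two vertices in the same component of $G\setminus J$, hence the same block of $P$. For the reverse inclusion $J \subseteq \cut(P)$: an edge $e = \{u,v\} \in J$ joins vertices in different blocks of $Q$ (by definition of $\cut(Q)$); since $P$ refines $Q$, $u$ and $v$ lie in different blocks of $P$ as well, so $e \in \cut(P)$. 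Hence $\cut(\comp(J)) = \cut(P) = J$, so $\comp$ is injective on $\Cuts(G)$ and $\cut$ is surjective onto $\Cuts(G)$.

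Combining the two paragraphs: $\cut$ restricted to $\mathscr{P}^c(G)$ and $\comp$ restricted to $\Cuts(G)$ are mutually inverse, and moreover the image of $\comp$ is exactly $\mathscr{P}^c(G)$ (every $\comp(J)$ has connected blocks by construction) and the image of $\cut|_{\mathscr{P}^c(G)}$ is exactly $\Cuts(G)$ (since any $J \in \Cuts(G)$ equals $\cut(\comp(J))$ with $\comp(J) \in \mathscr{P}^c(G)$). This establishes the bijection. I do not anticipate a serious obstacle here; the only subtle point — and the place to be careful in the writeup — is that $\cut$ is \emph{not} injective on all of $\mathscr{P}(G)$ (distinct partitions can share a cut set), so the argument genuinely needs the observation that $\comp(J)$ is the coarsest connected partition with cut set contained in $J$, equivalently that a connected partition is recoverable from its cut set while a general one is not. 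This is exactly the graphic-matroid statement that flats are in bijection with the closed sets obtained as complements of cocircuit unions, which one may invoke as context but the direct argument above is self-contained.
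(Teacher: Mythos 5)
Your proof is correct and follows essentially the same route as the paper's: the paper establishes surjectivity of $\cut$ via the connected refinement of an arbitrary partition (which is exactly your $\comp(J)$ refining $Q$ with the same cut set) and then verifies $\comp \circ \cut = \mathrm{id}$ on connected partitions, just as you do. Your version spells out the second composite $\cut \circ \comp = \mathrm{id}_{\Cuts(G)}$ more explicitly, but the content is the same and your caution about $\cut$ failing to be injective on all of $\mathscr{P}(G)$ is exactly the right point to flag.
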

\begin{proof}

To show that $\cut$ is surjective, we observe that if $P$ is any partition, we can define a connected partition $P'$, whose blocks are the connected components of the blocks of $P$, and $\cut{P} = \cut{P'}$.
We will conclude by showing that $\comp \circ \cut = id$. 
First, observe that $\comp \circ \cut$ does not merge any blocks, since every path in $G$ between two blocks has to cross a cut edge. %
Second, observe that $\comp \circ \cut$ does not split any blocks, since two points in any block \emph{of a connected partition} are always connected by a path that does not use any cut edges. %

\end{proof}

So far we have established that a connected partition is determined by the boundaries between its blocks. Next, we work towards characterizing the shapes that can arise as such boundaries, by treating them as subgraphs of the planar dual. 

\subsubsection{Dual connected partitions and connected partitions}

The following straightforward lemma is useful for proving the duality theorem:

\begin{lemma}\label{lem:equivbridgeunion}
Let $G$ be a graph, and $J \subseteq E(G)$. Then each connected component of $G[J]$ is two edge-connected if and only if each connected component of $G[J]$ has no bridge edges if and only if $G[J]$ is a union of not-necessarily disjoint simple cycles.
\end{lemma}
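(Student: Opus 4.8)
The plan is to prove the three equivalences in a cycle: first that each component of $G[J]$ being $2$-edge-connected implies each component has no bridge, then that no component has a bridge implies $G[J]$ is a union of (not necessarily disjoint) simple cycles, and finally that a union of simple cycles implies each component is $2$-edge-connected. Since the statement is about components of $G[J]$ and all three properties are properties of the individual components (being $2$-edge-connected, having no bridge, being a union of cycles are each checked component-by-component), I would first reduce to the case where $G[J]$ is connected, noting a disjoint union of unions of cycles is a union of cycles and similarly for the other two properties.

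For the connected case: the implication ``$2$-edge-connected $\Rightarrow$ no bridge'' is essentially immediate from the definition, since a bridge is an edge whose removal disconnects the component, contradicting $2$-edge-connectivity (here one should be slightly careful with the degenerate case of a single vertex or a single edge, but a single isolated vertex has no edges so it vacuously satisfies all three conditions, and a single edge is a bridge, is not $2$-edge-connected, and is not a union of cycles, so the equivalence still holds). For ``no bridge $\Rightarrow$ union of simple cycles'': take any edge $e = \{u,v\}$ in the component; since $e$ is not a bridge, $G[J] \setminus e$ still connects $u$ to $v$, so there is a path from $u$ to $v$ avoiding $e$, and together with $e$ this gives a simple cycle through $e$. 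Thus every edge lies on a simple cycle, and $G[J]$ is the union of these cycles. For ``union of simple cycles $\Rightarrow$ no bridge'' (and hence $2$-edge-connected in the connected case): if $e$ lies on a simple cycle $C$, then removing $e$ still leaves $u$ and $v$ connected via the rest of $C$, so $e$ is not a bridge; since a connected bridgeless graph is $2$-edge-connected, we close the loop.

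The main subtlety — not really an obstacle, but the thing to handle carefully — is the interplay between the component-wise phrasing and the global phrasing of ``union of simple cycles,'' and making sure the trivial/degenerate components (isolated vertices) and the definition of $2$-edge-connectivity for a single vertex are treated consistently. I would state at the outset the convention that a connected graph on one vertex is considered $2$-edge-connected (it has no edges to cut), which makes all three conditions vacuously true there and keeps the equivalence clean. Everything else is a short direct argument, so I expect the write-up to be only a few lines once the reduction to connected components and the degenerate-case convention are in place.

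\begin{proof}
All three conditions are local to connected components of $G[J]$, and a disjoint union of components each of which is a union of simple cycles is again a union of simple cycles; likewise ``has no bridge edge'' and ``is $2$-edge-connected'' hold for $G[J]$ if and only if they hold for each component. Hence it suffices to prove the equivalences when $G[J]$ is connected, with the convention that a connected graph on a single vertex (having no edges) is $2$-edge-connected. If $G[J]$ is $2$-edge-connected, then by definition no single edge deletion disconnects it, so it has no bridge. If $G[J]$ is connected with no bridge, then for any edge $e = \{u,v\}$ of $G[J]$, the graph $G[J] \setminus e$ still contains a $u$--$v$ path, which together with $e$ forms a simple cycle through $e$; since every edge lies on a simple cycle, $G[J]$ is a union of (not necessarily disjoint) simple cycles. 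Finally, if $G[J]$ is a union of simple cycles, then every edge $e$ lies on a simple cycle $C$, so removing $e$ leaves its endpoints connected via $C \setminus e$; thus $G[J]$ has no bridge, and being connected and bridgeless it is $2$-edge-connected. This completes the cycle of implications.
\end{proof}
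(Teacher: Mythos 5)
Your proof is correct. The paper actually states \Cref{lem:equivbridgeunion} without proof (it is introduced as a ``straightforward lemma''), so there is no argument of record to compare against; your write-up supplies exactly the standard argument one would expect — reduce to a single connected component, close the cycle of implications via ``every non-bridge edge lies on a simple cycle,'' and note that a connected bridgeless graph is $2$-edge-connected — and your care with the degenerate cases (isolated vertices, a single edge) is consistent with the lemma as stated.
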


\begin{defn}[Dual connected partitions] Let $\mathscr{E}_2(G)$ denote the set of subsets of edges of $G$ that are unions of not-necessarily disjoint simple cycles. We will call these the \emph{dual connected partitions}. %
\end{defn}

The purpose of the next few propositions is to show that dual connected partitions are plane duals of the cuts of connected partitions. First we recall the bijection between the edges of a plane graph and the edges of its dual:

\begin{defn}[Dual edges]
Let $G$ be a plane graph. For an edge $e  \in E(G)$, let $e^*$ denote the edge in $G^*$ with the property that the two endpoints of $e^*$ are the shores of $e$, i.e., the two faces that are separated by $e$. For a set $J \subseteq E(G)$ , denote by $J^*$ the corresponding set of edges in $G^*$. We define a function $D(J) = J^*$, which is a bijection $2^{E(G)} \to 2^{E(G^*)}$
\end{defn}

We aim to prove a plane duality between $\Cuts(G)$ and $\mathscr{E}_2(G^*)$. In particular, we want show that $D$ induces a bijection between $\Cuts(G)$ and $\mathscr{E}_2(G^*)$. Towards that, we will recall the plane duality between even subgraphs and the edge boundaries, which will be useful for controlling the topology of $D(J)$ for $J \in \Cuts(G)$.

\begin{defn}[Edge boundary] Let $G = (V,E)$ be a graph, and $A \subseteq V$. Denote by $\cut(A) = \cut( \{A, A^c\} ) =  \partial_E(G)$, the edge boundary of $A$.
\end{defn}

\begin{defn}[Even Subgraphs]\label{defn:evensubgraphs}
Let $G = (V,E)$ be a graph. A subset $J \subseteq E$ defines an even subgraph $G[J]$ if the degree of each node of $G[J]$ is even. Let $Even(G) = \{ J \subseteq 2^{E(G)} : G[J] \text{ is an even subgraph} \}$.
\end{defn}

\begin{prop}[Proposition 2.1 in \cite{Erickson}\footnote{Beware that his terminology is different from ours; specifically, he refers to what we call an edge boundary as an edge cut.}] \label{prop:evensubgraphs}
Let $G$ be a connected plane graph, and let $H \subseteq E(G)$. Then, $H$ is an even subgraph if and only if $H^*$ is an edge boundary. Moreover, $H$ is a simple cycle if and only if $H^*$ is the cut of a connected $2$-partition.
\end{prop}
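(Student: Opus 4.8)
The plan is to derive this from the classical planar bond–cycle duality together with the standard characterizations of minimal cycles and minimal cuts, following \cite{Erickson}. Write $\mathcal{Z}(G) \subseteq \mathbb{F}_2^{E(G)}$ for the cycle space (edge sets all of whose induced vertex-degrees are even) and $\mathcal{B}(G) \subseteq \mathbb{F}_2^{E(G)}$ for the cut space. Since $\partial_E(A) = \sum_{v\in A}\partial_E(\{v\})$ over $\mathbb{F}_2$ (edges internal to $A$ cancel), the cut space is exactly the set of edge boundaries $\{\partial_E(A) : A \subseteq V(G)\}$, and $\partial_E(A)\oplus\partial_E(B)=\partial_E(A\triangle B)$. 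The map $D : 2^{E(G)} \to 2^{E(G^*)}$, $H \mapsto H^*$, is $\mathbb{F}_2$-linear, so the first assertion of the proposition is precisely the statement $D(\mathcal{Z}(G)) = \mathcal{B}(G^*)$.

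For $D(\mathcal{Z}(G)) \subseteq \mathcal{B}(G^*)$, first treat a single simple cycle $C$. Drawn in the plane it is a Jordan curve; let $F_C$ be the set of faces of $G$ lying in its interior. A dual edge $e^*$ lies in $C^*$ iff $e \in C$ iff the two faces flanking $e$ lie on opposite sides of the curve, i.e. iff exactly one of them belongs to $F_C$; hence $C^* = \partial_E(F_C)$ in $G^*$. A general even subgraph is an $\mathbb{F}_2$-sum of edge-disjoint simple cycles, so by linearity of $D$ its image is an $\mathbb{F}_2$-sum of edge boundaries, hence again an edge boundary. For the reverse inclusion one can either run the same argument with $G$ and $G^*$ interchanged (using $G^{**}=G$, valid since $G$ is connected) and compare dimensions via Euler's formula, $\dim\mathcal{Z}(G) = |E|-|V|+1 = |V(G^*)|-1 = \dim\mathcal{B}(G^*)$, forcing equality; or argue directly: if $H^* = \partial_E(F)$ for a set $F$ of faces of $G$, then around any vertex $v$ of $G$ the incident edges and faces alternate in cyclic order, $e_i \in H$ iff exactly one of the two faces flanking $e_i$ lies in $F$ (dual edges of bridges become self-loops, which lie in no edge boundary), and the number of such changes around the cyclic sequence of faces is even, so every vertex has even degree in $G[H]$ and $H \in \mathcal{Z}(G)$. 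Either route gives a linear isomorphism $D : \mathcal{Z}(G) \xrightarrow{\ \sim\ } \mathcal{B}(G^*)$.

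For the second assertion, note that a linear isomorphism carries minimal nonempty elements of $\mathcal{Z}(G)$ to minimal nonempty elements of $\mathcal{B}(G^*)$. The minimal nonempty elements of $\mathcal{Z}(G)$ are exactly the simple cycles: every nonempty even subgraph contains a simple cycle, and a proper nonempty subset of a simple cycle has a degree-one vertex. The minimal nonempty elements of $\mathcal{B}(G^*)$ are the \emph{bonds} of $G^*$, and for the connected graph $G^*$ a cut $\partial_E(A)$ is a bond iff both $G^*[A]$ and $G^*[A^c]$ are connected: if $G^*[A]$ splits as $A_1 \sqcup A_2$ with no edge between, then $\partial_E(A_1) \subsetneq \partial_E(A)$ is a strictly smaller nonempty cut, while if both sides are connected then deleting any proper nonempty subset of $\partial_E(A)$ leaves $G^*$ connected, so $\partial_E(A)$ is minimal. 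Hence $H^* \in \mathcal{B}(G^*)$ is a bond iff $H^* = \cut(\{A,A^c\})$ for a connected $2$-partition $\{A,A^c\}$ of $G^*$; combining with the first part, $H$ is a simple cycle iff $H^*$ is the cut of a connected $2$-partition of $G^*$.

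The main obstacle is the single geometric step $C^* = \partial_E(F_C)$, which rests on the Jordan curve theorem applied to the fixed plane embedding; once it is in place, the rest is routine linear algebra over $\mathbb{F}_2$ and the textbook facts that minimal cycles are simple cycles and minimal cuts are bonds. Connectedness of $G$ is used only to guarantee that $G^*$ is connected and that $G^{**}=G$.
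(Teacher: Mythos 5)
The paper does not actually prove this proposition: it is imported verbatim from Erickson's notes (his Proposition~2.1), so there is no in-paper argument to compare against. Your proof is correct and is the standard one: you establish $D(\mathcal{Z}(G))=\mathcal{B}(G^*)$ via the Jordan-curve computation $C^*=\partial_E(F_C)$ for a single simple cycle, extend by Veblen's decomposition of even subgraphs into edge-disjoint cycles, close the inclusion by a dimension count (or the corner-alternation argument), and then pass to minimal nonempty elements on both sides. One point of phrasing deserves care: an arbitrary $\mathbb{F}_2$-linear isomorphism does \emph{not} carry support-minimal elements to support-minimal elements; what you actually need, and have, is that $D$ is induced by a bijection of the underlying edge sets and therefore preserves inclusion of supports, so its restriction $\mathcal{Z}(G)\to\mathcal{B}(G^*)$ is an isomorphism of posets and matches minimal nonempty elements. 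With that reading, your identification of the minimal nonempty even subgraphs with simple cycles, and of the minimal nonempty cuts of the connected graph $G^*$ with the cuts $\cut(\{A,A^c\})$ of connected $2$-partitions, is carried out correctly, and the second assertion follows.
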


The following well-known lemma will be useful for relating $Even(G)$ to $\mathscr{E}_2(G)$:
\begin{lem}[Euler]\label{prop:Euler}
Let $G$ be a graph. Then $J \subseteq E(G)$ is an even subgraph if and only if $J$ is a union of pairwise disjoint simple cycles.
\end{lem}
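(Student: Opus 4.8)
The plan is to prove the two implications of this classical fact separately, deriving the harder direction (even subgraph $\Rightarrow$ union of cycles) by induction on $|J|$, following the usual Euler/Veblen cycle-decomposition argument. Throughout I would read ``pairwise disjoint'' as meaning pairwise \emph{edge}-disjoint (the cycles may still share vertices, as in a figure-eight); this is the sense in which the statement is true and is what contrasts with the ``not-necessarily disjoint'' cycles appearing in \Cref{lem:equivbridgeunion} for $\mathscr{E}_2(G)$.

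For the easy direction, suppose $J = C_1 \cup \cdots \cup C_m$ with the $C_i$ pairwise edge-disjoint simple cycles, so that $J$ is the disjoint union of the edge sets $E(C_i)$. Then for every vertex $v$ of $G$ we have $\deg_{G[J]}(v) = \sum_{i=1}^m \deg_{C_i}(v)$, and each summand is $0$ or $2$ since the $C_i$ are simple cycles; hence $\deg_{G[J]}(v)$ is even and $J \in \mathit{Even}(G)$.

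For the converse I would induct on $|J|$. If $J = \emptyset$ it is the empty (vacuous) union of cycles. If $J \neq \emptyset$ and $G[J]$ is even, then every vertex incident to an edge of $J$ has positive even degree in $G[J]$, hence degree at least $2$; discarding the isolated vertices, $G[J]$ is a finite graph with minimum degree $\geq 2$, and such a graph contains a simple cycle (extend a walk greedily until it first revisits a vertex, and take the portion between the two occurrences). Fix such a simple cycle $C \subseteq J$ and set $J' = J \setminus E(C)$. Deleting the edges of a simple cycle changes the degree of each vertex by $0$ or $2$, so $G[J']$ is again even, and $|J'| < |J|$; by the inductive hypothesis $J'$ is a union of pairwise edge-disjoint simple cycles $C_1, \dots, C_r$, each automatically edge-disjoint from $C$. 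Therefore $J = C \cup C_1 \cup \cdots \cup C_r$ exhibits $J$ as a union of pairwise edge-disjoint simple cycles, completing the induction. The only ingredient that is not pure bookkeeping is the elementary observation that minimum degree $\geq 2$ forces a cycle, which I would state explicitly; there is no real obstacle here, since this is a standard result and the paper itself flags it as well known.
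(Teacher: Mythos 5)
Your proof is correct, and it is the standard Euler/Veblen cycle-peeling induction; the paper itself gives no proof of this lemma, stating it only as well known. Your reading of ``pairwise disjoint'' as pairwise \emph{edge}-disjoint is the right one (the statement is false for vertex-disjointness, e.g.\ a figure-eight), and it matches how the paper invokes the lemma in the proof of \Cref{prop:DBijection}, where $\cut(P')^*$ is described as ``a union of edge disjoint cycles.''
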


The previous theorem characterized edge boundaries, which are the cut sets of not-necessarily connected 2-partitions, using the planar dual. %
The next proposition will characterize the cut sets of connected $k$-partitions using the planar dual.

\begin{prop}\label{prop:DBijection}
Let $G$ be a connected plane graph. Let $H \subset E(G)$. Then $H \in \mathscr{E}_2(G)$ if and only if $H^*$ is a cut set. In particular, $D$ gives a bijection between $\mathscr{E}_2(G)$ and $\Cuts(G^*)$.
\end{prop}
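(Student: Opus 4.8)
The statement to prove is \Cref{prop:DBijection}: for a connected plane graph $G$, a set $H \subseteq E(G)$ lies in $\mathscr{E}_2(G)$ if and only if $H^*$ is a cut set of $G^*$, so that $D$ restricts to a bijection $\mathscr{E}_2(G) \to \Cuts(G^*)$. The plan is to reduce everything to the already-available duality between even subgraphs and edge boundaries (\Cref{prop:evensubgraphs}) together with the Euler characterization (\Cref{prop:Euler}) and the combinatorial fact (\Cref{lem:equivbridgeunion}) that membership in $\mathscr{E}_2(G)$ means ``every component is bridgeless,'' i.e. a union of not-necessarily-disjoint simple cycles. Since $D$ is already known to be a bijection $2^{E(G)} \to 2^{E(G^*)}$, the only content is to show that $D$ carries $\mathscr{E}_2(G)$ \emph{onto} $\Cuts(G^*)$, i.e. the two-way implication $H \in \mathscr{E}_2(G) \iff H^* \in \Cuts(G^*)$.

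\textbf{Key steps.} First I would handle the direction $H^* \in \Cuts(G^*) \Rightarrow H \in \mathscr{E}_2(G)$ by decomposing an arbitrary cut set into edge boundaries: if $K = \cut(P)$ for a partition $P = \{A_1,\dots,A_k\}$ of $V(G^*)$, then $K = \bigcup_i \cut(A_i)$ is a union of edge boundaries of $G^*$. Applying $D^{-1}$ and \Cref{prop:evensubgraphs}, each $D^{-1}(\cut(A_i))$ is an even subgraph of $G$, hence by \Cref{prop:Euler} a disjoint union of simple cycles; a union of unions of simple cycles is still a union of (not necessarily disjoint) simple cycles, so $H = D^{-1}(K)$ has every component bridgeless, i.e. $H \in \mathscr{E}_2(G)$ by \Cref{lem:equivbridgeunion}. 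For the converse, $H \in \mathscr{E}_2(G) \Rightarrow H^* \in \Cuts(G^*)$, I would argue componentwise: write $H$ as a union of its connected components $H_1,\dots,H_m$; each $H_j$, being bridgeless and connected, is an even subgraph (an Eulerian graph is a union of \emph{edge-disjoint} cycles by \Cref{prop:Euler}, so its degrees are even — actually I should be careful and note that ``bridgeless + connected'' does not immediately give ``even,'' so instead I should observe that $H_j$ as a union of simple cycles has a symmetric-difference / $\mathbb{F}_2$-cycle-space description and is even as an element of $Even(G)$ precisely when... ). Here I'd prefer the cleaner route: note $\mathscr{E}_2(G) \subseteq \langle \text{cycles} \rangle_{\mathbb{F}_2}$ need not hold, so instead use that each simple cycle $C$ in the decomposition of $H$ has $C^*$ a cut of a connected $2$-partition by \Cref{prop:evensubgraphs}, hence $C^* = \cut(\{F_C, F_C^c\})$ for the set $F_C$ of dual faces inside $C$; then $H^* = \bigcup C^*$, and I want to exhibit $H^*$ as $\cut(P)$ for the common refinement $P$ of all the two-block partitions $\{F_C, F_C^c\}$ — this works because $\cut$ of a common refinement is the union of the cuts, which is exactly $\bigcup_C C^* = H^*$.

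\textbf{Main obstacle.} The delicate point is the last claim: that $\cut$ of the common refinement of a family of partitions equals the union of their cut sets. One direction ($\cut(\text{refinement}) \supseteq$ each $\cut$, hence $\supseteq$ union) is easy; the other direction needs that an edge separated by the refinement is separated by at least one of the partitions in the family, which is true by definition of common refinement (the refinement's blocks are the nonempty intersections, so $u,v$ in different refinement blocks means some partition separates them). So this is actually routine once stated carefully. The genuinely fiddly part I expect to spend the most care on is making sure the ``union of simple cycles'' decomposition of $H \in \mathscr{E}_2(G)$ can be organized so that the corresponding dual two-block cuts glue into a single $\cut(P)$ — in particular checking that $\bigcup_C C^*$ is genuinely a cut set and not merely an element of the cycle space of $G^*$, which is why routing through \emph{common refinement of connected $2$-partitions} (rather than $\mathbb{F}_2$-sums) is the right move. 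Once that lemma about common refinements is in place, both implications close, $D$ restricts to the claimed bijection, and polynomial-time computability is immediate since $D$, $\cut$, $\comp$ and connected-component computations are all polynomial.
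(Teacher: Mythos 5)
Your proof is correct, but it takes a genuinely different route from the paper's. For the direction $H\in\mathscr{E}_2(G)\Rightarrow H^*\in\Cuts(G^*)$, the paper does not build a common refinement: it takes $P=\comp(H^*)$, the partition of $V(G^*)$ into the components of $G^*\setminus H^*$, and verifies $\cut(P)=H^*$ edge by edge (an edge $e\in H$ lies on a simple cycle $C\subseteq H$, and $C^*$ already separates the two shores of $e$ by \Cref{prop:evensubgraphs}, so deleting the larger set $H^*$ certainly does; conversely, a cut edge $e^*$ of $P$ is itself a one-edge path between the shores and so must meet, hence lie in, $H^*$). For the converse, the paper again argues per edge: given $e$ with $H^*=\cut(P)$, it coarsens $P$ to a $2$-partition $P'$ still separating the shores of $e$, so that $\cut(P')^*\subseteq H$ is an even subgraph containing $e$, exhibiting a simple cycle through $e$ inside $H$ and hence showing $e$ is not a bridge. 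Your version replaces both per-edge arguments with global decompositions: $\cut(P)=\bigcup_i\partial_E A_i$ in one direction, and the identity $\cut\bigl(\bigwedge_j Q_j\bigr)=\bigcup_j\cut(Q_j)$ for common refinements in the other. Both steps are correct, and the refinement identity is exactly as routine as you say. What the paper's formulation buys is that it directly produces the canonical witness $\comp(H^*)$, which is what \Cref{prop:geometricpartofduality} needs when asserting that $\comp\circ D^{-1}$ inverts $D\circ\cut$; your refinement partition need not have connected blocks, so one still invokes \Cref{prop:matroidduality} to pass to the connected representative (harmless here, since $\Cuts$ is defined over all partitions). Your self-correction about ``bridgeless and connected does not imply even'' was the right call --- the theta graph is the standard counterexample --- and routing through \Cref{lem:equivbridgeunion} and per-cycle duality avoids it cleanly.
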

\begin{proof}

\begin{itemize}
    \item [$\Rightarrow$]
    Suppose that $H \in \mathscr{E}_2(G)$, and let $P \in \mathscr{P}^c(G^*)$ be the connected partition defined by the connected components of $G^* \setminus H^*$. We show that $\cut(P) = H^*$. Let $e \in H$, and let $C \subseteq H$ be a cycle containing it. Then the shores of $e$ are necessarily in different components of $G^* \setminus C^*$ by \Cref{prop:evensubgraphs}, and thus in different components of $G^* \setminus H^*$. Thus, $e^* \in \cut(P)$, so $H^* \subseteq \cut(P)$. On the other hand, if $e^* \in \cut(P)$, then the shores of $e$ are in different components of $G^* \setminus H^*$, so every path in $G^*$ between the two shores of $e$ must pass through $H^*$. In particular, the path $e^*$ must pass through $H^*$, which means that $e^* \in H^*$. Thus, $\cut(P) \subseteq H^*$.

    \item [$\Leftarrow $]

    Now suppose that $H^* = \cut(P)$ for some $P \in \mathscr{P}^c(G)$. Take any $e \in H$. We want to show that $e$ is not a bridge edge. Suppose that $A$ and $B$ are the blocks of $P$ containing the faces in the two shores of $e$. Now, create a not necessarily connected $2$-partition $P'$ by reassigning all of the blocks of $P$ that are not $A$ or $B$ to be part of block $A$. Now, $\cut(P') \subseteq \cut(P)$ %
    and $e \in \cut(P')$%
    . Since $\cut(P')^*$ is a union of edge disjoint cycles (\Cref{prop:Euler}), there is a simple cycle $C$ in $\cut(P')^*$ that contains $e$. In particular, $\cut(P)^* = H \supseteq C$, so $e$ could not have been a bridge edge of $H$.
\end{itemize}

\end{proof}

We summarize the previous two results in a single duality statement:
\begin{prop}[Duality between connected partitions and dual connected partitions]\label{prop:geometricpartofduality}The functions $\comp \circ D^{-1}$ and $D \circ \cut$ are mutual inverses, inducing a bijection between $\mathscr{P}^c(G)$ and $\mathscr{E}_2(G^*)$.
\end{prop}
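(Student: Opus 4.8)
The statement to prove is \Cref{prop:geometricpartofduality}: that $\comp \circ D^{-1}$ and $D \circ \cut$ are mutually inverse bijections between $\mathscr{P}^c(G)$ and $\mathscr{E}_2(G^*)$.

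\medskip

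\textbf{Plan.} The plan is to assemble this directly from the two ingredients already established, namely \Cref{prop:matroidduality} (the functions $\comp$ and $\cut$ are mutually inverse bijections between $\Cuts(G)$ and $\mathscr{P}^c(G)$) and \Cref{prop:DBijection} (the map $D$ restricts to a bijection $\mathscr{E}_2(G) \to \Cuts(G^*)$, equivalently $D^{-1}$ restricts to a bijection $\Cuts(G^*) \to \mathscr{E}_2(G)$, and symmetrically $D: \mathscr{E}_2(G^*) \to \Cuts(G)$ since $G = G^{**}$ up to the canonical identification of a plane graph with its double dual). The composite $D \circ \cut$ then makes sense as a map $\mathscr{P}^c(G) \xrightarrow{\cut} \Cuts(G) \xrightarrow{D} \mathscr{E}_2(G^*)$, each arrow a bijection by the cited propositions, hence the composite is a bijection. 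Similarly $\comp \circ D^{-1}$ is a bijection $\mathscr{E}_2(G^*) \xrightarrow{D^{-1}} \Cuts(G) \xrightarrow{\comp} \mathscr{P}^c(G)$.

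\medskip

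\textbf{Key steps, in order.} First I would note the duality convention: for a plane graph $G$ we identify $G^{**}$ with $G$ canonically, so that $D : 2^{E(G)} \to 2^{E(G^*)}$ and its inverse $D^{-1} : 2^{E(G^*)} \to 2^{E(G)}$ agree with the map ``$D$ for $G^*$'' under this identification; this is what lets us apply \Cref{prop:DBijection} with the roles of $G$ and $G^*$ interchanged to conclude $D : \mathscr{E}_2(G^*) \to \Cuts(G)$ is a bijection. Second, I would check that $D \circ \cut$ actually lands in $\mathscr{E}_2(G^*)$: given $P \in \mathscr{P}^c(G)$, $\cut(P) \in \Cuts(G)$ by definition of $\Cuts$, and then $D(\cut(P)) \in \mathscr{E}_2(G^*)$ because $D : \Cuts(G) \to \mathscr{E}_2(G^*)$ is the inverse bijection from \Cref{prop:DBijection} (applied with $G \rightsquigarrow G^*$). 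Symmetrically $\comp \circ D^{-1}$ lands in $\mathscr{P}^c(G)$ since $D^{-1}(H) \in \Cuts(G)$ for $H \in \mathscr{E}_2(G^*)$, and $\comp : \Cuts(G) \to \mathscr{P}^c(G)$. Third, I would verify the two composites are mutually inverse: $(\comp \circ D^{-1}) \circ (D \circ \cut) = \comp \circ (D^{-1} \circ D) \circ \cut = \comp \circ \cut = \mathrm{id}_{\mathscr{P}^c(G)}$ by \Cref{prop:matroidduality}, and $(D \circ \cut) \circ (\comp \circ D^{-1}) = D \circ (\cut \circ \comp) \circ D^{-1} = D \circ D^{-1} = \mathrm{id}_{\mathscr{E}_2(G^*)}$, again using that $\cut \circ \comp = \mathrm{id}$ on $\Cuts(G)$ from \Cref{prop:matroidduality} and that $D, D^{-1}$ are inverse bijections on all of $2^{E(G)}, 2^{E(G^*)}$. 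Being mutually inverse, both are bijections, which is the claim.

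\medskip

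\textbf{Expected obstacle.} There is essentially no deep content remaining — the proposition is a formal consequence of the two preceding results — so the only thing that requires care is bookkeeping the duality convention and making sure the domains/codomains match up, in particular being honest that applying \Cref{prop:DBijection} ``with $G$ replaced by $G^*$'' requires the standard identification $G^{**} \cong G$ and that under it $D_{G^*}^{-1} = D_G$. I would state that identification explicitly once at the start of the proof and then the rest is a one-line diagram chase. So the main (minor) obstacle is purely expository: presenting the convention cleanly enough that the composition $D \circ \cut : \mathscr{P}^c(G) \to \mathscr{E}_2(G^*)$ is unambiguous.
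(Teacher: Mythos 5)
Your proposal is correct and is essentially the paper's own proof: the paper likewise obtains the result in one step by composing the bijection $D:\mathscr{E}_2(G^*)\to\Cuts(G)$ from \Cref{prop:DBijection} (applied to $G^*$, with the implicit identification $G^{**}\cong G$) with the mutually inverse pair $\comp,\cut$ from \Cref{prop:matroidduality}. Your extra care about the double-dual convention is a reasonable expository addition but not a mathematical difference.
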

\begin{proof}
Since $D$ induces a bijection between $\mathscr{E}_2(G^*)$ and $\Cuts(G)$ (\Cref{prop:DBijection}) and $\comp$ and $\cut$ give a bijection between $Cuts(G)$ and $\mathscr{P}^c(G)$ (\Cref{prop:matroidduality}), the claim follows.
\end{proof}
\subsubsection{The number of blocks and the circuit rank}
Now we will review some facts that relate the number of blocks in a connected partition of a plane graph to the circuit rank of the corresponding dual connected partition.
\begin{defn}[$h_1$ and $h_0$] Let $G$ be a graph. Then $h_1(G)$ denotes the circuit rank of $G$, $h_0(G)$ denotes the number of connected components of $G$. 
\end{defn}
\begin{prop}\label{prop:Eulersformula} Let $G = (V,E)$ be a plane graph. %
Then $1 + h_0(G) = |V(G)| - |E(G)| + |F(G)|$. 
\end{prop}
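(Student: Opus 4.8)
The plan is to reduce the statement to the classical Euler formula for \emph{connected} plane graphs and then induct on the number of connected components. For the base case I would invoke (or reprove by a short induction on the number of edges) Euler's formula: for a connected plane graph $H$, $|V(H)| - |E(H)| + |F(H)| = 2$. The induction on edges is routine — a spanning tree gives $|V(H)| - (|V(H)|-1) + 1 = 2$, and deleting an edge lying on a cycle decreases $|E|$ and $|F|$ each by one while preserving connectedness. Since $h_0(H) = 1$ here, this is exactly $1 + h_0(H) = |V(H)| - |E(H)| + |F(H)|$.

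For the inductive step, suppose $G$ is a plane graph with $c = h_0(G) \geq 2$ components $G_1,\dots,G_c$. The key observation is that the embedding always exposes a face $F$ of $G$ incident to vertices of two distinct components, say $G_1$ and $G_2$: taking a face of $G_1$ alone that contains a point of $G_2$ and moving toward $G_2$ inside that region, one first meets a face of the full graph $G$ touching both components. I would then pick $u \in V(G_1)$ and $w \in V(G_2)$ on the boundary of $F$, and form $G'$ by adding an edge $e = \{u,w\}$ drawn inside $F$. Then $|V(G')| = |V(G)|$, $|E(G')| = |E(G)| + 1$, $h_0(G') = c - 1$, and $|F(G')| = |F(G)|$ — the last because $e$ is a bridge joining two previously separate components, so $F \setminus e$ is still connected and no face gets split. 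Applying the induction hypothesis to $G'$ yields $1 + (c-1) = |V(G)| - (|E(G)|+1) + |F(G)|$, which rearranges to $1 + c = |V(G)| - |E(G)| + |F(G)|$, as desired.

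The main obstacle is the topological bookkeeping in the inductive step: verifying that some face is always incident to two components, and that inserting a bridge across it leaves the face count unchanged. Both follow from the Jordan curve theorem applied to the facial boundary walks of a plane embedding, and in the write-up I would state them carefully but leave the standard point-set topology details to the reader, since this lemma is only a helper for the duality results in \Cref{appendix:duality}. An alternative I would consider, if a cleaner self-contained argument is preferred, is a single induction on $|E(G)|$ directly: deleting a bridge increases $h_0$ by $1$ and leaves $|F|$ fixed, while deleting a non-bridge edge keeps $h_0$ fixed and decreases $|F|$ by $1$; the base case is an edgeless graph on $n$ vertices, where $h_0 = n$, $|F| = 1$, and indeed $1 + n = n - 0 + 1$.
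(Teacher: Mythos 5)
Your proof is correct and takes essentially the same route as the paper: the paper's one-line argument adds $h_0(G)-1$ bridges joining the components without changing the face count and then invokes Euler's formula for connected plane graphs, which is exactly your inductive step carried out all at once. The extra care you take with the topological facts (that two components share a face, and that a bridge drawn inside that face does not split it) is implicit in the paper's version, so nothing is missing from either account.
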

\begin{proof} One can add $h_0 - 1$ edges connecting the components, without changing the number of faces. If the new graph has $E'$ edges, then $E' = E + h_0 -1$, and we have $V - E' + F = 2$, from which the formula follows.\end{proof}
\begin{prop}\label{prop:ranknullity}
If $G$ is a graph, then $h_1(G) - h_0(G) = |E(G)| - |V(G)|$. %
\end{prop}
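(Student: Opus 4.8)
The identity to prove is equivalent to $h_1(G) = |E(G)| - |V(G)| + h_0(G)$, the standard "circuit rank" formula; since the definition of $h_1$ in use is the minimum number of edges one must delete from $G$ to obtain a forest, the cleanest route is to compute that minimum directly via maximal spanning forests. The plan is as follows.

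First I would record the elementary fact that a maximal forest $F \subseteq G$ (one spanning tree on each connected component of $G$) has exactly $|V(G)| - h_0(G)$ edges: a tree on $n$ vertices has $n-1$ edges, and summing $n_i - 1$ over the $h_0(G)$ components whose vertex counts $n_i$ sum to $|V(G)|$ gives $|V(G)| - h_0(G)$. Deleting the complementary edge set $E(G) \setminus E(F)$ from $G$ leaves the forest $F$, so the minimum number of deletions needed is at most $|E(G)| - |E(F)| = |E(G)| - |V(G)| + h_0(G)$; this gives the inequality $h_1(G) \le |E(G)| - |V(G)| + h_0(G)$.

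For the reverse inequality, let $S \subseteq E(G)$ be any set of edges whose removal makes $G$ a forest, and set $G' = G \setminus S$. Since deleting an edge never decreases the number of connected components, $h_0(G') \ge h_0(G)$; and since $G'$ is a forest on the vertex set $V(G)$, it has at most $|V(G)| - h_0(G') \le |V(G)| - h_0(G)$ edges. Hence $|S| = |E(G)| - |E(G')| \ge |E(G)| - |V(G)| + h_0(G)$, so $h_1(G) \ge |E(G)| - |V(G)| + h_0(G)$. Combining the two bounds yields $h_1(G) = |E(G)| - |V(G)| + h_0(G)$, and subtracting $h_0(G)$ from both sides gives the claimed $h_1(G) - h_0(G) = |E(G)| - |V(G)|$. (An alternative, equally short proof is induction on $|E(G)|$: the edgeless graph satisfies the identity, and adding one edge either joins two components — decreasing $h_0$ by one, leaving $h_1$ fixed — or closes a cycle within a component — increasing $h_1$ by one, leaving $h_0$ fixed — and in both cases both sides increase by exactly one.)

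There is no real obstacle here; the statement is a folklore fact. The only points requiring a word of care are the monotonicity of $h_0$ under edge deletion and the exact edge count $|V(G)| - h_0(G)$ of a maximal forest, both of which are immediate.
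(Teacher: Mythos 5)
Your proof is correct, but it takes a different route from the paper. The paper's proof is a one-liner: it identifies $h_1(G)$ with the dimension of the kernel of the boundary map $\partial: \mathbb{F}_2^{E} \to \mathbb{F}_2^{V}$ (the cycle space) and $h_0(G)$ with the dimension of its cokernel, and then invokes the rank--nullity theorem. You instead work directly with the definition the paper actually gives for $h_1$ --- the minimum number of edges whose removal leaves a forest --- and compute that minimum via maximal spanning forests, proving both inequalities. Your argument is more elementary and, arguably, more faithful to the stated definition: the paper's proof silently uses the (standard but nontrivial) fact that the circuit rank in the deletion sense equals the dimension of the cycle space over $\mathbb{F}_2$, and establishing that identification essentially requires the spanning-forest count you carry out anyway. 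What the linear-algebraic route buys is brevity and a framing that meshes with the matroid/homological language used elsewhere in the paper's duality appendix (where cycle bases and linear independence of lifted cycles are manipulated explicitly, e.g.\ in the proof that bounds $h_1$ of projected edge sets). Either proof is acceptable; your two-sided inequality argument and your parenthetical induction on $|E(G)|$ are both complete and gap-free.
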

\begin{proof}Since $h_1$ is the cycle rank, which is the dimension of the kernel of the boundary map $\partial: \mathbb{F}_2^E \to \mathbb{F}_2^V$, and $h_0$ is the rank of the cokernel of $\partial$, this is just a statement of the rank-nullity theorem. %
\end{proof}
\begin{prop}\label{prop:topologicalpartofduality}
Let $G$ be a connected plane graph. For $P \in \mathscr{P}^c(G)$ and $J \in \mathscr{E}_2(G)$, $h_1( G^*[\cut( P)^*]) = |P| - 1$ and $| \comp(J^*)| - 1 = h_1(G[J])$. Here $|P|$ counts the number of blocks of $P$.
\end{prop}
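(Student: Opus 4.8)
The goal is the formula $h_1(G^*[\cut(P)^*]) = |P| - 1$ for a connected partition $P$ of a connected plane graph $G$; the second assertion about $\comp$ is just this applied through the bijection of \Cref{prop:geometricpartofduality}, so I would focus on the first. The plan is to count edges, vertices, and faces of the dual subgraph $H := G^*[\cut(P)^*]$ and feed these into the Euler-type relations \Cref{prop:Eulersformula} and \Cref{prop:ranknullity}. Write $k = |P|$ for the number of blocks.

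First I would set up the correspondence between the blocks of $P$ and the faces of $H$. The key geometric observation is that when we draw $H$ in the plane (inherited from the embedding of $G^*$), its faces are in bijection with the blocks of $P$: the edges $\cut(P)^*$ are precisely the dual edges crossing block boundaries, so a face of $H$ is a maximal region not separated by any such edge, which is exactly the union of the dual vertices (= faces of $G$) lying in a single block of $P$ together with the parts of $G^*$ interior to that block. Connectedness of each block (so that $P$ is a \emph{connected} partition) is what guarantees each block contributes exactly one face rather than several. Thus $|F(H)| = k$. I would also note $H$ is connected: by \Cref{prop:DBijection} $\cut(P)^*$ lies in $\mathscr{E}_2$-land only after we know it is a cut set, but more simply, $H = G^*[\cut(P)^*]$ where $\cut(P)^* \in \mathscr{E}_2(G^*)$ by \Cref{prop:DBijection}, so each component of $H$ is $2$-edge-connected; however $H$ need not be connected in general, so I would instead track $h_0(H)$ carefully rather than assume it is $1$. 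Let me reconsider: apply \Cref{prop:Eulersformula} to the plane graph $H$ (using the embedding restricted from $G^*$), giving $1 + h_0(H) = |V(H)| - |E(H)| + |F(H)|$, and \Cref{prop:ranknullity}, giving $h_1(H) - h_0(H) = |E(H)| - |V(H)|$. Adding these eliminates $|V(H)|,|E(H)|$: $h_1(H) + 1 = |F(H)| = k$, i.e. $h_1(H) = k-1$, provided the face count $|F(H)| = k$ holds with $H$ embedded as a subgraph of the plane.

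The main obstacle, and the step I would spend the most care on, is justifying $|F(H)| = |P|$ rigorously. Two subtleties: (i) the faces of $H$ as an abstract planar embedding must be matched with blocks of $P$ — here I would argue that two faces $f_1, f_2$ of $G^*$ (= vertices of $G$... careful: faces of $G^*$ correspond to vertices of $G$) lie in the same face of $H$ iff they can be joined by a path in $G^*$ avoiding $\cut(P)^*$, iff the corresponding vertices of $G$ lie in the same block (using that $\comp \circ \cut = \mathrm{id}$ from \Cref{prop:matroidduality} and connectedness of blocks); (ii) one must make sure this works even when $H$ is disconnected, e.g. a partition into three blocks $A, B, C$ arranged concentrically, where $\cut(P)^*$ is two disjoint cycles — the inner cycle, outer cycle, and the annular region between give three faces, matching $|P| = 3$, and indeed Euler's formula for the disconnected planar graph $H$ (which \Cref{prop:Eulersformula} already handles via the $h_0$ term) absorbs exactly this. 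So the face-counting argument needs the $h_0(H)$ bookkeeping to be consistent with the block-counting, which is precisely why the two Euler relations combine cleanly. Once $|F(H)| = |P|$ is established, the conclusion $h_1(H) = |P| - 1$ is immediate, and the $\comp$ statement follows by substituting $P = \comp(J^*)$ and using \Cref{prop:geometricpartofduality} to identify $\cut(\comp(J^*))^* = J$, so that $h_1(G[J]) = |\comp(J^*)| - 1$.
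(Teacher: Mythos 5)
Your proposal is correct and follows essentially the same route as the paper: combine \Cref{prop:Eulersformula} and \Cref{prop:ranknullity} to get $h_1(H) = |F(H)| - 1$, then identify the faces of $H = G^*[\cut(P)^*]$ with the blocks of $P$ via $\comp\circ\cut = \mathrm{id}$. One small wording slip: the criterion for two faces of $G^*$ to lie in the same face of $H$ is a connecting path in $G$ avoiding $\cut(P)$ (not a path in $G^*$ avoiding $\cut(P)^*$), which is exactly the component structure of $G\setminus\cut(P)$ that you then correctly invoke.
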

\begin{proof}
Let $P \in \mathscr{P}^c(G)$. Let $J = \cut(P)^*$. %
\Cref{prop:Eulersformula} and \Cref{prop:ranknullity} together yield that %
$h_1( G^*[J] ) = h_0 - V + E =|F(G^*[J])| - 1$. Since the number of faces of $G^*[J]$ is the number of components of $G \setminus J$, and since $\comp(J) = \comp \circ \cut (P) = P$, we obtain $h_1( G^*[\cut( P)^*]) = |P| - 1$. %
Now consider $J \in \mathscr{E}_2(G^*)$, and let $P = \comp(J^*)$. Since $J = \cut(P)^*$, it follows from $h_1( G^*[\cut( P)^*]) = |P| - 1$ that $h_1( G^*[J]) = |P| - 1$, so the claim follows from $(G^*)^* = G$. %
\end{proof}
Finally, we %
present the duality theorem:
\begin{defn}[Dual $k$-partition]
We define $P_k^*(G) = \{ J \in \mathscr{E}_2(G) : h_1( G[J]) = k - 1 \}$. We call the elements of this set dual $k$-partitions.
\end{defn}
\begin{thm}[Duality between $P_k(G)$ and $P_k^*(G^*)$]\label{thm:kpartitionduality}
The map $D \circ \cut : P_k(G) \to P_k^*(G^*)$ is a bijection, with $\comp \circ D^{-1} : P_k^*(G^*) \to P_k(G)$ as its inverse. Both are computable in polynomial time.
\end{thm}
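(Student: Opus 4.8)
The plan is to derive \Cref{thm:kpartitionduality} by restricting the bijection between \emph{all} connected partitions and \emph{all} dual connected partitions to the layers indexed by the number of blocks. \Cref{prop:geometricpartofduality} already gives that $D \circ \cut : \mathscr{P}^c(G) \to \mathscr{E}_2(G^*)$ is a bijection with inverse $\comp \circ D^{-1}$. Since $\mathscr{P}^c(G) = \bigsqcup_{k \geq 1} \mathscr{P}_k(G)$ by definition, and $\mathscr{E}_2(G^*) = \bigsqcup_{k \geq 1} \mathscr{P}_k^*(G^*)$ because every $J \in \mathscr{E}_2(G^*)$ has a well-defined circuit rank $h_1(G^*[J]) = k-1 \geq 0$, it suffices to show that $D \circ \cut$ carries the $k$-th piece of the first decomposition onto the $k$-th piece of the second.

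For this, I would invoke \Cref{prop:topologicalpartofduality}: for $P \in \mathscr{P}^c(G)$ one has $h_1(G^*[\cut(P)^*]) = |P| - 1$. Hence $P$ has exactly $k$ blocks if and only if $D(\cut(P)) = \cut(P)^*$ has circuit rank $k-1$, i.e.\ if and only if $D(\cut(P)) \in \mathscr{P}_k^*(G^*)$. This gives $D\circ\cut(\mathscr{P}_k(G)) \subseteq \mathscr{P}_k^*(G^*)$; applying the other identity of \Cref{prop:topologicalpartofduality}, namely $|\comp(J^*)| - 1 = h_1(G[J])$, to $\comp\circ D^{-1}$ yields the reverse inclusion. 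So the restricted map is a bijection $\mathscr{P}_k(G) \to \mathscr{P}_k^*(G^*)$ with inverse the restriction of $\comp\circ D^{-1}$.

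For the polynomial-time claim I would observe that $D$ and $D^{-1}$ amount to relabeling each edge $e$ by the dual edge $e^*$ joining its two shores, which is read off from the supplied plane embedding in linear time; $\cut(P)$ is computed by scanning all edges and testing whether endpoints lie in distinct blocks; and $\comp(J)$ is computed by deleting $J$ and running a connected-components search. All of these, together with the circuit-rank bookkeeping, are polynomial in $|G|$.

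The honest assessment is that the final theorem is essentially pure bookkeeping once its ingredients are in hand; the real work — and the main obstacle — lies in those ingredients, above all the topological identity $h_1(G^*[\cut(P)^*]) = |P| - 1$ (a discrete Alexander-type duality proved via Euler's formula and rank--nullity in \Cref{prop:topologicalpartofduality}) and the fact, from \Cref{prop:DBijection}, that $D$ maps $\mathscr{E}_2(G)$ exactly onto $\Cuts(G^*)$. Granting those, \Cref{thm:kpartitionduality} follows immediately by the layering argument above.
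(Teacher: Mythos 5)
Your proposal is correct and follows essentially the same route as the paper, which likewise derives the theorem by combining \Cref{prop:geometricpartofduality} (the bijection $\mathscr{P}^c(G) \leftrightarrow \mathscr{E}_2(G^*)$) with \Cref{prop:topologicalpartofduality} (matching the number of blocks to the circuit rank), plus the standard observation that $D$, $\cut$, and $\comp$ are polynomial-time computable. Your write-up merely makes the layering step and the computability claims more explicit than the paper's one-line proof.
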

\begin{proof}
The bijection follows from \Cref{prop:topologicalpartofduality} and \Cref{prop:geometricpartofduality}. It is well known that $D$ and $\comp$ and $\cut$ can be computed in polynomial time.
\end{proof}

\section{Positive results}

\subsection{Using marginal counts}\label{appendix:inductivesampling}

In this section, we prove the correctness of \Cref{alg:inductivesampling}.%

\begin{prop}\label{prop:InductiveSampling}
The output of \Cref{alg:inductivesampling} is a random variable $J$ valued in $2^{ [n]}$ and drawn with distribution $p$. Moreover, if a call to the oracle $O$ takes time $O(T(n))$, then the total runtime of \Cref{alg:inductivesampling} is $O(nT(n))$.
\end{prop}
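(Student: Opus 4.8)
The plan is to prove correctness of \Cref{alg:inductivesampling} by induction on $i$, tracking the joint distribution of the prefix $J_i = J \cap [i]$ produced by the algorithm. Concretely, I would show by induction on $i \in \{0,1,\dots,n\}$ that for every $J \subseteq [i]$,
\[
\mathbb{P}(J_i = J) = \mathbb{P}_{S \sim p}(S \cap [i] = J).
\]
The base case $i = 0$ is immediate since $J_0 = \emptyset = [0]$ deterministically, and $S \cap [0] = \emptyset$ always. For the inductive step, condition on the event $\{J_{i-1} = J'\}$ for $J' \subseteq [i-1]$: by the induction hypothesis this has probability $\mathbb{P}(S \cap [i-1] = J')$, and on this event the algorithm sets $J_i = J' \cup \{i\}$ with probability $p(i \mid J')$ and $J_i = J'$ otherwise, independently of everything else. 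Using \Cref{defn:thosemarginals}, $p(i \mid J') = \mathbb{P}(i \in S \mid S \cap [i-1] = J')$, so multiplying the conditional probabilities gives $\mathbb{P}(J_i = J' \cup \{i\}) = \mathbb{P}(S \cap [i-1] = J')\,\mathbb{P}(i \in S \mid S \cap [i-1] = J') = \mathbb{P}(S \cap [i] = J' \cup \{i\})$, and similarly for the case $i \notin J_i$. Setting $i = n$ yields that $J_n$ has law $p$ on $2^{[n]}$, which is exactly the claimed output distribution.

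One subtlety to handle carefully is that \Cref{defn:thosemarginals} only defines $p(i \mid J')$ when the conditioning event $\{S \cap [i-1] = J'\}$ has positive probability; if it has probability zero, then by the induction hypothesis $\mathbb{P}(J_{i-1} = J') = 0$ as well, so that branch contributes nothing to the sum and the value returned by the oracle on such an input is irrelevant. I would note this explicitly so that the inductive identity is well-posed. The runtime claim is then immediate: the loop runs $n$ times, each iteration makes a single oracle call costing $O(T(n))$ plus $O(1)$ additional work (sampling a biased coin and updating the set), so the total is $O(n\,T(n))$.

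I do not anticipate a serious obstacle here — this is the standard "sampling-to-counting" reduction argument, and the only thing requiring care is the bookkeeping of conditional probabilities together with the degenerate (probability-zero) conditioning events. The writeup is essentially the two paragraphs above, formalized as an induction, followed by the one-line runtime observation.
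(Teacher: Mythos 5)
Your proposal is correct and matches the paper's own argument: both prove by induction on the prefix index that $\mathbb{P}(J_i = W) = \mathbb{P}(S \cap [i] = W)$, using exactly the same conditioning step. Your added care about probability-zero conditioning events and the explicit runtime accounting are minor refinements the paper glosses over, but the core argument is identical.
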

\begin{proof}

Let $S$ be a random variable distributed according to $p$. Let $\mathbb{P}$ be the probability measure underlying the process of the algorithm and the random variable $S$. Let $J_k$ be the random set $J$ on the $k$th step of \Cref{alg:inductivesampling}. Using induction, we will show that, for all $m \in [n]$,
\begin{equation}\label{eqn:inductivehypothesisforsampling}
\mathbb{P} ( J_m = W) = \mathbb{P} ( S \cap [m] = W).
\end{equation}
The desired conclusion is the case $m = n$. In the base case, when $k = 1$,  \Cref{eqn:inductivehypothesisforsampling} holds because $\mathbb{P} ( J_1 = \{1\} ) = p(1 | \emptyset) =  \mathbb{P} ( 1 \in S ) = \mathbb{P} ( S \cap \{1\} = \{1 \})$. Now, suppose that for some $m \geq 1$ with $m < n$, it holds that $\mathbb{P}( J_m = W) = \mathbb{P} ( S \cap [m] = W)$ for all $W \subset [m]$. Recall that from the definition of \Cref{alg:inductivesampling} we have that $\mathbb{P}(J_{m+1} =  W \cup \{m+1\} | J_m = W) = \mathbb{P}(  m + 1 \in S | S \cap [m] = W)$.

The inductive step now follows by a computation:
\begin{align*}
    \mathbb{P}(J_{m+1} = W \cup \{m+1\}) &= \mathbb{P}(J_{m+1} =  W \cup \{m+1\} | J_m = W)) \mathbb{P}(J_m = W)\\ &= \mathbb{P}(  m + 1 \in S | S \cap [m] = W) \mathbb{P}( S \cap [m] = W) \\ &= \mathbb{P}( S \cap  [m+1] = W \cup \{m + 1\}).
\end{align*}
Likewise $\mathbb{P}( J_{m+1}  =W ) = \mathbb{P}( S \cap [m+1] = W)$. %
\end{proof}

\subsection{Series-parallel graphs}

We recall the definition of a series-parallel graph, a class of graphs well suited to dynamic programming algorithms.

\begin{defn}[Two-terminal graphs]
A two-terminal graph $G$ is a graph with two distinguished nodes: a source, $\sigma(G)$, and a sink $\tau(G)$. A pair of two-terminal graphs, $G$ and $H$, are said to be isomorphic, $G \cong H$, if there is an isomorphism of the underlying graphs which maps the source to the source and the sink to the sink.
\end{defn}

\begin{example}
The complete graph on $\{0,1\}$ is naturally a two-terminal graph, where we set $\sigma(K_2) = 0$ and $\tau(K_2) = 1$. We denote it by $K_2$
\end{example}

\begin{defn}[Series Composition]\label{defn:series}
Let $G_1$ and $G_2$ be two-terminal graphs. We define $G_1 \circ G_2$ as the graph obtained from the disjoint union of $G_1$ and $G_2$ by identifying $\tau_1$ and $\sigma_2$, and we make it into a two-terminal graph by setting $\sigma ( G_1 \circ G_2) = \sigma(G_1)$ and $\tau(G_1 \circ G_2) = \tau(G_2)$.
\end{defn}

\begin{defn}[Parallel Composition]
In the notation of \Cref{defn:series}, we define $G_1 \parallelsum G_2$ as the graph obtained from the disjoint union of $G_1$ and $G_2$ by making the identifications $\sigma_1 \sim \sigma_2$ and $\tau_1 \sim \tau_2$. We make $G_1 \parallelsum G_2$ into a two-terminal graph by defining $s( G_1 \parallelsum G_2) = [\sigma(G_1)] = [\sigma(G_2)]$ and $t( G_1 \parallelsum G_2) = [\tau(G_1)] = [\tau(G_2)]$.
\end{defn}

\begin{defn}[Series-Parallel graphs]
We define the class of series-parallel graphs as the smallest class of two-terminal graphs that is closed under Parallel Composition and Series Composition, and which contains the two-terminal graph $K_2$.
\end{defn}

The feature which makes series-parallel graphs convenient for dynamic programming is that we can record the series and parallel composition operations into a tree, called the $SP$-tree, around which we can organize dynamic programs:

\begin{defn}[$SP$-tree]
An $SP$ tree is a rooted binary tree, where the children of any internal node have an ordering, and where each internal node of the tree is labelled $P$ or $S$. We assign to each leaf a copy of the two-terminal graph $K_2$. Then, to each internal node we assign the graph obtained by applying either $S$eries Composition or $P$arallel composition to its children, depending on the label of that internal node; for the series composition (labelled $S$), the order of the composition is in the order on the children prescribed by the tree, and for parallel composition (labelled $P$) the order does not matter. If $T$ is an $SP$-tree, define $G(T)$ as the two-terminal graph assigned to the root of $T$. If $X = (G, \sigma, \tau)$ is a series-parallel graph, we say that an $SP$-tree $T$ is an $SP$-tree for $X$ if $X \cong G(T)$.
\end{defn}

\begin{lem}[Theorem 4.1 of \cite{bodlaender1996parallel} ] 
Given a graph $G$, determining if $G$ is series-parallel and if so building an $SP$-tree for it can be done in linear time.
\end{lem}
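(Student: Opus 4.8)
The plan is to prove the lemma via the classical \emph{reduction characterization} of two-terminal series-parallel graphs (due to Valdes, Tarjan, and Lawler), which is tailor-made for a linear-time implementation. Working with multigraphs carrying a fixed source $\sigma$ and sink $\tau$, define two local operations: a \emph{series reduction} deletes a non-terminal vertex $v$ of degree exactly $2$ and replaces its two incident edges $\{a,v\},\{v,b\}$ by a single edge $\{a,b\}$; a \emph{parallel reduction} replaces two edges with the same pair of endpoints by a single edge with those endpoints. The first step is to prove: a two-terminal multigraph $G$ is series-parallel in the sense defined above if and only if some sequence of series and parallel reductions transforms $G$ into the single edge $K_2$ on $\{\sigma,\tau\}$. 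The ``only if'' direction is an easy induction on the number of composition operations used to build $G$, since the final composition exhibits either a degree-$2$ non-terminal or a pair of parallel edges to reduce, and reducing it lands on a strictly smaller series-parallel graph. The ``if'' direction uses that the inverse of a series (resp.\ parallel) reduction is exactly a series (resp.\ parallel) composition with a copy of $K_2$, so undoing the reduction sequence rebuilds $G$ from copies of $K_2$ using only the two allowed operations; a short confluence argument then shows that if \emph{some} maximal reduction sequence ends in $K_2$, then \emph{every} maximal reduction sequence does, so a greedy algorithm may reduce in any order.

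Next I would give the linear-time implementation. Keep $G$ in doubly linked adjacency-list form, maintain the degree of every vertex, and maintain a worklist of ``active'' vertices: those currently non-terminal of degree $2$, or those currently having a repeated neighbor. Process the worklist by popping a vertex, performing the applicable reduction in $O(1)$ time (splicing adjacency lists and updating the at most four affected degrees), then re-examining the constantly many endpoints touched and pushing any that become active. Each reduction strictly decreases $|E(G)|$, so at most $|E(G)|-1$ reductions occur; with the worklist discipline each reduction together with its local re-examination costs $O(1)$ amortized, and a preprocessing bucket sort of the edge list by (unordered) endpoint pair seeds the initial parallel-edge candidates, so the total running time is $O(|V|+|E|)$. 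The graph is series-parallel precisely when the process halts at a single $\sigma$--$\tau$ edge; otherwise it halts at an irreducible graph that is not $K_2$, and by the reduction characterization $G$ is not series-parallel.

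To build the $SP$-tree, decorate each \emph{virtual} edge of the evolving multigraph with a pointer to the root of an $SP$-tree realizing it, together with a chosen direction ``from $a$ to $b$''; initially each original edge points to a leaf labelled $K_2$. When a series reduction at $v$ merges the directed virtual edges $a \to v$ and $v \to b$ into $a \to b$, create an internal node labelled $S$ whose ordered children are those two subtrees in that order, and attach it to the new edge; when a parallel reduction merges two virtual edges between $u$ and $v$, create an internal node labelled $P$ with those two subtrees as children, reversing the direction of either subtree if needed (reversing an $SP$-tree's source and sink is realized by recursively reversing the child order at every $S$-node, and is harmless because parallel composition is symmetric). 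When the multigraph collapses to the final $\sigma$--$\tau$ edge, the $SP$-tree it points to is the output; correctness is an induction paralleling the ``if'' direction above, since each recorded node mirrors the composition that inverts the reduction performed. Each reduction creates $O(1)$ tree nodes, so the tree is built within the same $O(|V|+|E|)$ budget.

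The main obstacle is not the algorithm itself but two places where care is required: first, the combinatorial heart, namely the order-independence (confluence) of the reduction process, which is what makes the greedy worklist strategy provably reach $K_2$ whenever $G$ is series-parallel; and second, the genuinely \emph{linear} (rather than $O(|E|\,\alpha(|E|))$ or $O(|E|\log|E|)$) time bound, which hinges on never rescanning adjacency lists globally, on the amortization ``one reduction permanently destroys one edge,'' and on detecting new parallel edges locally right after each series reduction (or via the initial bucket sort) rather than with a general dictionary. As a backup route I would mention recognizing graphs of treewidth $\le 2$ in linear time via the Arnborg--Proskurowski reduction rules (or Bodlaender's fixed-parameter treewidth algorithm) and reading the $SP$-tree off a width-$2$ tree decomposition, using the fact that the two-terminal series-parallel graphs are exactly the biconnected graphs of treewidth $\le 2$ with the two terminals taken adjacent; but the reduction approach above is more direct and more transparently linear.
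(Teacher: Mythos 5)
The paper does not prove this lemma at all: it is imported as a black-box citation of a known recognition result (the linear-time algorithm goes back to Valdes--Tarjan--Lawler, and the cited Bodlaender reference treats it in the parallel setting). Your proposal is therefore not ``the paper's route done differently'' but a self-contained reconstruction of the standard proof, and as such it is essentially correct: the reduction characterization, the worklist implementation with the ``each reduction destroys an edge'' amortization, and the decoration of virtual edges with $SP$-subtree pointers are exactly how the linear-time algorithm is usually presented. Two points deserve more care than you give them. First, the confluence of the reduction system is the entire content of the correctness claim for the greedy strategy, and you only assert it; it is not an off-the-shelf Newman's-lemma application because overlapping redexes (e.g.\ a degree-$2$ vertex incident to a parallel pair, or two adjacent degree-$2$ vertices) must be checked, and your series reduction as written does not even exclude the case $a=b$, which would create a self-loop --- you should either forbid that case or argue it cannot occur in a reducible instance (it cannot in a genuine two-terminal series-parallel graph, since every edge lies on a simple $\sigma$--$\tau$ path). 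Second, the lemma as stated in the paper takes a bare graph $G$ with no distinguished terminals, whereas your whole argument fixes $\sigma$ and $\tau$ in advance; you need a final remark on how the terminals are chosen or recovered (e.g.\ run the reductions without protecting any vertex and read the terminals off the surviving edge, or reduce each biconnected component separately), since otherwise the algorithm answers a slightly different question than the one the paper uses.
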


\subsection{Computing marginal probabilities on graphs of treewidth $2$}\label{appendix:SCDP}

In this section we give a polynomial time algorithm for counting simple cycles on graphs of treewidth $2$. We also prove as a byproduct of the method that it is possible to efficiently sample from a much broader family of distributions than just uniform, namely those defined by \Cref{defn:edgeweightprobability}. For these computations, it will be convenient to extend the concept of a network to allow edge weights in other rings. These results are extended in \cite{ignasi} to graphs of bounded treewidth.

\begin{defn}[$R$-network]
Let $G$ be a graph and $R$ a ring, and let $ w :E (G) \to R$ be a function. Then we will call $(G,w)$ an $R$-network.
\end{defn}

Let $\mathbb{Q}$ be the rationals. Let $(G,w)$ be a $\mathbb{Q}$-network with non-negative weights, and let $J, J' \subseteq E(G)$. For sampling with \Cref{alg:inductivesampling} we would like to be able to compute the total $\nu_w$ (\cref{defn:edgeweightprobability}) mass of the simple cycles containing $J$ and disjoint from $J'$. The approach here will be to encode that mass as the evaluation of a generating function (\Cref{defn:scgf}), which we can evaluate efficiently on series-parallel graphs by dynamic programming.

\begin{defn}[Simple cycle generating function]\label{defn:scgf}
Let $(G,w)$ be an $R$-network. Let $f_{SC}(G,w) := \sum_{ C \in SC(G) } \prod_{e \in C} w(e)$ denote the generating function of the simple cycles of $G$ evaluated at the weights $w$.%
\end{defn}

Let $(G,c)$ be a $\mathbb{Q}$-network. To sample from $\nu_c$ of $G$, the marginals we need in the course of \Cref{prop:InductiveSampling} are easily computed from $N_c ( \{ C \in SC(G) : C \cap J' = \emptyset, C \supseteq J \} ) = \sum_{ \substack{C \in SC(G) , C \cap J' = \emptyset, C \supseteq J}} \prod_{e \in C} c(e)$. To obtain these measurements for any given $J, J'$, let $x$ be some formal variable, and set $w(e) = x c(e)$ for $e \in J$, $w(e') = 0$ for $e' \in J'$ and $w(f) = c(e)$ otherwise. Then the coefficient of the $x^{|J|}$ term of $f_{SC}(G,w)$ is $\sum_{ C \in SC(G) : J \subseteq C, J' \cap C = \emptyset} \prod_{e \in C} c(e)$, which is the $N_c$ mass of all the simple cycles that are disjoint from $J'$ and that contain $J$. We next show that we can compute $f_{SC}(G,w)$ \emph{if} $G$ is a series-parallel graph, via a dynamic programming algorithm which runs in time polynomial in $|G|$ and $|w|$. We also need to keep track of the corresponding generating function for the simple paths, which we define next.

\begin{defn}[Simple path generating function]
Let $(G,w)$ be a series-parallel $R$-network, with source $\sigma$ and sink $\tau$. Then we define $f_{SP}(G) = \sum_{ \gamma \in SP_{\sigma, \tau} (G) } \prod_{e \in \gamma} w(e)$, where $SP_{\sigma, \tau}(G)$ is the set of simple paths from $\sigma$ to $\tau$ in $G$, where a path is a \emph{sequence} of \emph{edges}.
\end{defn}

\begin{lem}\label{lem:combinationrules}

Let $(G_1,w_1)$ and $(G_2,w_2)$ be series-parallel $R$-networks. Let $w : E(G_1) \sqcup E(G_2) \to R$ be the unique weight function that restricts to $w_1$ and $w_2$. Then, let $w$ make both $G_1 \circ G_2$ and $G_1 \parallelsum G_2$ into $R$-networks, using that both have edge set $E(G_1) \cup E(G_2)$. Then:

\begin{equation}\label{eqn:SCunderComp}
    f_{SC}(G_1 \circ G_2, w) = f_{SC}(G_1, w_1) + f_{SC}(G_2, w_2)
\end{equation}
\begin{equation}\label{eqn:SPunderComp}
    f_{SP} (G_1 \circ G_2 , w) = f_{SP}(G_1, w_1) f_{SP}(G_2, w_2)
\end{equation}
\begin{equation}\label{eqn:SCunderparallel}
    f_{SC} ( G_1 \parallelsum G_2 ,w) = f_{SC}(G_1, w_1) + f_{SC}(G_2, w_2) + f_{SP}(G_1, w_1) f_{SP}(G_2,w_2)
\end{equation}
\begin{equation}\label{SPunderparalell}
    f_{SP}(G_1 \parallelsum G_2, w) = f_{SP}(G_1, w_1) + f_{SP}(G_2, w_2)
\end{equation}
\end{lem}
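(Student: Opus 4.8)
The plan is to prove each of the four identities by exhibiting an explicit bijection of the underlying combinatorial objects and then observing that the generating function identity falls out for free, since $w$ restricts to $w_i$ on $E(G_i)$ and $E(G_1)\cap E(G_2)=\emptyset$, so for any edge set $X$ the product $\prod_{e\in X}w(e)$ factors as the product over $X\cap E(G_1)$ times the product over $X\cap E(G_2)$. The key structural facts I would use are that in $G_1\circ G_2$ the two factors share only the cut vertex $v=\tau(G_1)=\sigma(G_2)$ (and removing $v$ separates them), while in $G_1\parallelsum G_2$ they share only the two terminals $s=[\sigma(G_1)]=[\sigma(G_2)]$ and $t=[\tau(G_1)]=[\tau(G_2)]$; in both cases the edge sets are disjoint.

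For \eqref{eqn:SCunderComp} I would argue that a simple cycle of $G_1\circ G_2$ cannot use edges of both factors: any such cycle would have to transition between $G_1$ and $G_2$, and transitions can occur only at the shared vertex $v$, which a simple cycle visits at most once (degree $2$), and a single visit to $v$ does not permit both entering and returning while also reconnecting the $G_1$-part. Hence $SC(G_1\circ G_2)=SC(G_1)\sqcup SC(G_2)$ as collections of edge subsets, giving the sum. For \eqref{eqn:SPunderComp}, since $v$ separates $\sigma(G_1)$ from $\tau(G_2)$, every simple $\sigma$-$\tau$ path passes through $v$ exactly once and therefore splits uniquely into a simple $\sigma(G_1)$-$v$ path in $G_1$ followed by a simple $v$-$\tau(G_2)$ path in $G_2$; conversely the concatenation of any such pair is simple because the two halves meet only at $v$. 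This is a bijection $SP_{\sigma,\tau}(G_1\circ G_2)\cong SP_{\sigma_1,\tau_1}(G_1)\times SP_{\sigma_2,\tau_2}(G_2)$, and the weights multiply. For \eqref{SPunderparalell}, a simple $s$-$t$ path visits $s$ and $t$ only at its endpoints, so it never reaches a shared vertex in its interior and cannot switch factors; thus it lies entirely inside $G_1$ or entirely inside $G_2$, giving $SP_{s,t}(G_1\parallelsum G_2)=SP_{\sigma_1,\tau_1}(G_1)\sqcup SP_{\sigma_2,\tau_2}(G_2)$ and the claimed sum.

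The one step that needs genuine care is the ``mixed'' case of \eqref{eqn:SCunderparallel}. Given a simple cycle $C$ of $G_1\parallelsum G_2$, put $C_i=C\cap E(G_i)$. If some $C_i$ is empty, $C$ is a simple cycle of the other factor, contributing $f_{SC}(G_1,w_1)+f_{SC}(G_2,w_2)$. If both are nonempty, I would show each $C_i$ is a single simple $s$-$t$ path: in the subgraph $(V(G_i),C_i)$ every degree-$1$ vertex must be one of the shared vertices $\{s,t\}$ (all other vertices of $G_i$ have degree $0$ or $2$ in $C$); a path component cannot have both endpoints at the same terminal; and a configuration of two path components inside one factor would force the other factor to contain a component with both ends at a single terminal, which is impossible. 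Because $G_1$ and $G_2$ share only $s$ and $t$, the union of any simple $s$-$t$ path of $G_1$ with any simple $s$-$t$ path of $G_2$ is automatically a simple cycle, so the mixed simple cycles are in bijection with $SP_{\sigma_1,\tau_1}(G_1)\times SP_{\sigma_2,\tau_2}(G_2)$, contributing the product term. Summing the three contributions yields \eqref{eqn:SCunderparallel}; everything else is a routine application of the shared-boundary structure.
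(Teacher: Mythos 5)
Your proposal is correct and follows essentially the same route as the paper: the paper's proof consists precisely of the four set-level bijections you describe ($SC(G_1\circ G_2)=SC(G_1)\cup SC(G_2)$, the concatenation bijection for series paths, the disjoint-union for parallel paths, and the decomposition of a mixed parallel cycle into a pair of terminal-to-terminal simple paths), stated without further justification. Your additional verifications — the cut-vertex argument for the series case and the degree/sub-cycle analysis showing each $C\cap E(G_i)$ is a single simple $s$-$t$ path in the mixed parallel case — are exactly the details the paper leaves implicit, and they are sound.
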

\begin{proof}
In each case, the equality on generating functions will follow from a bijection of sets.

\begin{proof}[Proof of \Cref{eqn:SCunderComp}]
$SC(G_1 \circ G_2) = SC(G_1) \cup SC(G_2)$.

\end{proof}
\begin{proof}[Proof of \Cref{eqn:SPunderComp}]
$SP(G_1 \circ G_2) = \{ \gamma_1 \circ \gamma_2 : \gamma_i \in SP(G_i) \}$, where $\circ$ between paths denotes concatenation.

\end{proof}
\begin{proof}[Proof of \Cref{eqn:SCunderparallel}]

$SC(G_1 \parallelsum G_2) = SC(G_1) \cup SC(G_2) \cup \{ Set(\gamma_1 \circ Reverse(\gamma_2)) : \gamma_i \in SP(G_i) \}$, where the $Reverse$ of a path is that path run backwards, and $Set(\_)$ takes the sequence of edges and turns it into a subset of $E(G_1 \parallelsum G_2)$.%

\end{proof}
\begin{proof}[Proof of \Cref{eqn:SCunderparallel}]
$SP(G_1 \parallelsum G_2) = SP(G_1) \cup SP(G_2)$

\end{proof}

\end{proof}

Now we recall a key lemma that will allow us to reduce the treewidth $2$ case to the series-parallel case, and prove the theorem about efficiently evaluating $f_{SC}$:

\begin{lem}[\cite{wald1983steiner} Theorem 4.1]\label{lem:embedinSPgraph}
If $H$ is a graph of treewidth $\leq 2$, then there is a series-parallel graph $G$ and an embedding $i : H \to G$. Both $H$ and $i$ are computable from $H$ in linear time.
\end{lem}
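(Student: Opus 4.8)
The plan is to prove the slightly stronger assertion that every graph of treewidth at most $2$ is a subgraph of a $2$-tree, together with the classical fact that every $2$-tree is series-parallel; the embedding $i$ is then just the inclusion map. Recall that a $2$-tree is obtained from a single edge by repeatedly adding a vertex adjacent to both endpoints of an existing edge, and that $2$-trees are exactly the edge-maximal graphs with no $K_4$ minor. For $|V(H)|\le 2$ the statement is trivial, so assume $|V(H)|\ge 3$.

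First I would reduce to the $2$-tree case. Given $H$ with $\mathrm{tw}(H)\le 2$, I compute a width-$\le 2$ tree decomposition of $H$ in linear time, e.g.\ by the Arnborg--Proskurowski/Mat\'ou\v sek--Thomas reduction rules: iteratively delete vertices of degree $\le 1$ and delete a degree-$2$ vertex after joining its two neighbours by an edge (``series reduction''); the reduction sequence reconstructs both a tree decomposition and a completion. Smoothing the decomposition so that every bag has size exactly $3$ and adjacent bags share exactly $2$ vertices, the fill-in graph $G$ in which $u\sim v$ iff $u,v$ appear together in some bag is chordal with clique number $3$, and an elimination order read off the decomposition exhibits $G$ as a $2$-tree that contains $H$ as a spanning subgraph. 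All of these steps are $O(|H|)$, and $i:H\to G$ is the identity on vertices.

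Next I would show, by induction on $|V(G)|$, that every $2$-tree $G$ is series-parallel with terminals any chosen edge $\{s,t\}$. The base cases $K_2$ and the triangle are immediate. For the step, let $v$ be the last vertex added in the $2$-tree construction, so $v$ has exactly two neighbours $a,b$, the pair $\{a,b\}$ is an edge, and $G-v$ is a $2$-tree still containing $\{s,t\}$; by induction $G-v$ has an $SP$-tree $T'$ with terminals $s,t$. The edge $\{a,b\}$ is carried by some leaf $L$ of $T'$ (each leaf is a $K_2$, and the edges of the realised graph are exactly the images of those leaves). Replacing $L$ by a $P$-node whose children are $L$ itself and an $S$-node carrying the length-two path $a$--$v$--$b$ yields an $SP$-tree whose associated graph is precisely $G$; hence $G$ is series-parallel, and building $T$ from $T'$ costs $O(1)$ per added vertex, so the whole construction stays linear.

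Finally I would dispose of the reductions glossed over above: if $H$ is disconnected, first join one representative from each component by a path, which keeps treewidth $\le 2$ and only enlarges the subgraph to be embedded; and cut vertices of $H$ cause no trouble since we pass through the $2$-tree completion, which is $2$-connected. The main obstacle is purely the linear-time bookkeeping---verifying that the width-$2$ decomposition, its smoothing, the $2$-tree completion, and the translation into an $SP$-tree can all be produced in $O(|H|)$ time with no hidden super-linear factor---since the combinatorial heart (partial $2$-tree $\Leftrightarrow$ treewidth $\le 2$, and $2$-tree $\Rightarrow$ series-parallel) is standard and is exactly what is recorded in \cite{wald1983steiner}.
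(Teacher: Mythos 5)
The paper does not prove this lemma at all---it is imported verbatim as Theorem~4.1 of the cited reference \cite{wald1983steiner}---so there is no internal proof to compare against; what you have written is a self-contained reconstruction of the standard argument (partial $2$-tree $\Rightarrow$ spanning subgraph of a $2$-tree $\Rightarrow$ series-parallel), and it is essentially correct. The two halves are both sound: smoothing a width-$2$ decomposition so that every bag has size $3$ and adjacent bags share $2$ vertices makes the fill-in graph exactly a $2$-tree built one simplicial vertex at a time, and your leaf-replacement surgery on the $SP$-tree (swap the leaf carrying $\{a,b\}$ for a $P$-node over that leaf and the series composition $a$--$v$--$b$) correctly realizes the addition of a degree-$2$ simplicial vertex. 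This matches what the lemma is used for later (\Cref{lem:simplecycleweightedmarginals}), where the extra edges of $G$ are simply given weight $0$, so the choice of terminals is immaterial to the application.

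One small gap to patch in the induction: you take $v$ to be ``the last vertex added'' and assert $G-v$ still contains the terminal edge $\{s,t\}$, but the last-added vertex could be $s$ or $t$. The fix is the standard fact that a $2$-tree on $n\ge 4$ vertices has at least two \emph{non-adjacent} simplicial vertices of degree $2$; since $s$ and $t$ are adjacent, at least one simplicial vertex lies outside $\{s,t\}$, and you delete that one instead (the case $n=3$ is your triangle base case). With that adjustment, and granting the (genuinely routine but tedious) linear-time bookkeeping for the reduction rules, the smoothing, and the $SP$-tree assembly---which is precisely the content Wald and Colbourn supply---the argument is complete.
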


\begin{lem}\label{lem:simplecycleweightedmarginals}
Suppose $(H,w)$ is a $\mathbb{Q}[x]$-network with treewidth $\leq 2$. Then $f_{SC}(H,w)$ can be computed in time polynomial in $|(H,w)|$. In particular, if $(H,c)$ is a $\mathbb{Q}$-network, then $N_c ( \{ C \in SC(H) : C \cap J' = \emptyset, C \supseteq J \} )$ can be computed in time polynomial in $|(H,c)|$.
\end{lem}
\begin{proof}
By \Cref{lem:embedinSPgraph}, there is a series-parallel graph $G$, which contains $H$ as a subgraph, and moreover $G$ and $i : H \to G$ can be constructed in linear time. Extend $w$ to all the edges of $G$ by defining $w(e) = 0$ for $e \in E(G) \setminus E(H)$. %
Then we have that $f_{SC}(G,w) = f_{SC}(H,w)$ since all the cycles of $G$ that are not cycles of $H$ have weight zero.
We let $T_G$ be a binary $SP$-tree of $G$, which has $O(|G|)$ nodes. Using \cref{lem:combinationrules}, we compute $f_{SC}$ and $f_{SP}$ at each node. The cost of the calculation at each node is bounded by the cost of multiplying and adding the corresponding generating functions for the nodes children. Each generating function has degree at most $O ( |G| ( \max_{e \in E} \deg ( w(e)))$, and has coefficients that have binary encoding whose length is polynomial in $|(G,w)|$. Thus, the first claim follows. The second claim follows because if we set $w(e) = c(e)x1_{x \in J} + c(e)1_{x \not \in J' \cup J}$, then the coefficient of the $x^{|J|}$ term of $f_{SC}(G,w)$ is $N_c ( \{ C \in SC(H) : C \cap J' = \emptyset, C \supseteq J \} )$. %
\end{proof}

\subsection{Dynamic program for counting balanced partitions on series-parallel graphs}\label{appendix:balancedDP}

In this section we show how to set up a dynamic program that will count the number of balanced connected $2$-partitions of a series-parallel graph. We will be interested in the case where nodes have weights valued in $\mathbb{N} = \{0,1,2,\ldots \}$, but it will be convenient extend these weights to take values in a larger monoid, $N$, which will also keep track of when a set of nodes is non-empty.

\begin{defn}[The monoid $N$]
Let $E$ be the commutative monoid given by $E = \{ \{\emptyset, \lnot \emptyset \}, \cup, \emptyset \}$ where $\emptyset$ is the identity element, and  $\lnot \emptyset \cup \lnot \emptyset = \lnot \emptyset$. %
Let $\mathbb{N}$ be the monoid of natural numbers with addition. Let $N = \mathbb{N} \times E$ and by abuse of notation we let $0 \in N$ denote the additive identity, and $+$ the binary operation in $N$. Let $n : N \to \mathbb{N}$ and $e : N \to E$ be the natural projections. 
\end{defn}

\begin{defn}[(Admissible) node-weighted graphs]
$(G,w)$ will denote a graph $G = (V,E)$ along with a function $w: V(G) \to N$. If $e(w(v)) = \lnot \emptyset$ for all $v \in V$, then we call $(G,w)$ admissible. For any $A \subseteq V(G)$, define $w(A) = \sum_{a \in A}w(a)$. 
\end{defn}

\begin{defn}[The DP-table $X(G,w)$] For a series-parallel graph $G$ with source $\sigma$ and sink $\tau$, and weight function $w : V(G) \to N$, we imitate \cite{dyer_complexity_1985} and define:
\begin{align*}
X(G, w) = \{ (a_1, a_2, a_3, m) \in N^3 \times \mathbb{N} &| \text{there are exactly $m$ partitions of $V(G)$ into blocks $V_1, V_2, V_3$ such that:} \\
& \text{(i) $w(V_i) = a_i$ and $G[V_i]$ is connected for $i = 1,2,3$}\\
& \text{(ii) $\sigma \in V_1$} \\
& \text{(iii) $a_3 = 0$ implies that $\tau(G) \in V_1$ and $a_3 \not = 0$ implies that $\tau(G) \in V_3$. } \}
\end{align*}
\end{defn}
That is, $X(G,w)$ is a function $N^3 \to \mathbb{N}$ that counts the number of partitions of $G$ into connected $3$-partitions which blocks of given weights. If we know $X(G,w)$, we will be able to calculate $|P_2^0(G,w)|$ (\cref{thm:calculatingbalancedpartitions}). %
\begin{defn}[Series and parallel composition for weighted SP graphs]\label{defn:SPweightscomposition}
Let $(G_1, w_1)$ and $(G_2,w_2)$ be series-parallel graphs with $N$ valued weights on the nodes. Define weights $w$ on the nodes of $G = G_1 \circ G_2$ and $G = G_1 \parallelsum G_2$ by first thinking of $w_1$ and $w_2$ as functions on $V(G)$, through extending them to the other nodes by assigning them the value $(0,  \lnot \emptyset)$, and then setting $w = w_1 + w_2$.
\end{defn}
\begin{defn}[Naming conventions for series-parallel compositions]
In the case of $G_1 \circ G_2$, we let $\epsilon$ denote the node $\tau_1 = \sigma_2$, $\sigma = \sigma_1$ and $\tau = \tau_2$. In the case of $G_1 \parallelsum G_2$, $\sigma = \sigma_1 = \sigma_2$ and $\tau = \tau_1 = \tau_2$.
\end{defn}
Let $(G,w)$ be a node-weighted series-parallel graph. The next several propositions will show that we can compute $X( G,w)$ by a dynamic programming algorithm on a binary $SP$-tree of $G$. Specifically, we will show how to compute $X( ( G_1, w_1) \circ (G_2, w_2) )$ and $ X (( G_1, w_1) \parallelsum (G_2, w_2) ) $ from $X( G_1, w_1)$ and $X(G_2, w_2)$, using a algorithms that are polynomial time in $G$ and pseudopolynomial time in the total weights. To prove these algorithms to be correct, we will compare the dynamic calculation of $X(G,w)$ with a dynamic enumeration of the partitions in question, using the following definition:
\begin{defn}[The enumeration version of $X(G,w)$]\begin{align*}
\widetilde{X}(G) = \{ (V_1, V_2, V_3) \in \mathcal{P}(V(G))^3 &: (V_1, V_2, V_3) \text{ form a partition of } V(G) \text{ such that }\\
& \text{(i) $G[V_i]$ is connected for $i = 1,2,3$}\\
& \text{(ii) $\sigma \in V_1$} \\
& \text{(iii) $V_3 = \emptyset$ implies that $\tau(G) \in V_1$ and $V_3 \not = \emptyset$ implies that $\tau(G) \in V_3$. } \}
\end{align*}
\end{defn}

In the proof of correctness for the dynamic program for evaluating $X( (G, w))$, we will explain how to compute $\tilde{X}(G_1 \circ G_2)$ from $\tilde{X}(G_1)$ and $\tilde{X}(G_2)$ by merging the blocks sharing the node $\epsilon$, and accepting the output when it results in an element of $\tilde{X}(G_1 \circ G_2)$. %

\begin{algorithm}[H]
\caption{\texttt{SeriesPartitions}}\label{alg:seriespartitions}
\textbf{Input:} $X( (G_1, w_1) )$ and $X( (G_2, w_2) )$ for $(G_1, w_1)$ and $(G_2, w_2)$.\\
\textbf{Output:} $X( (G_1,w_1) \circ (G_2, w_2) )$
\begin{algorithmic}[1]
\STATE{Set $f$ as constantly zero on $\{ (a_1, a_2, a_3) \in \mathbb{N}^3 : 0 \leq a_i , \sum a_i = w( (G_1, w_1) \circ (G_2, w_2))\} $.} \FOR{$(a_1, a_2, a_3, m_1) \in X( G_1, w_1)$ \textrm{ and }$(b_1, b_2, b_3, m_2) \in X(G_2, w_2)$ } 
\IF{($a_3 = 0$ and $b_3 = 0$) and ($a_2 = 0$ or $b_2 = 0$)} 
 \STATE{ $f(a_1 + b_1, a_2 + b_2, 0) \pluseq m_1 m_2$ }\ENDIF

\IF{($a_3 = 0$ and $b_3 \not = 0$) and ($a_2 = 0$ or $b_2 = 0$)} \STATE{
$f(a_1 + b_1, a_2 + b_2, b_3) \pluseq m_1 m_2$ }
\ENDIF 

\IF{($a_3 \not = 0$ and $b_3 = 0$) and ($a_2 = 0$ or $b_2 = 0$) } \STATE{ $f(a_1, a_2 + b_2, a_3 + b_1) \pluseq m_1 m_2$ 
} \ENDIF 
\IF{$a_3 \not = 0$ and $b_3 \not = 0$ and ($a_2 = 0$ and $b_2 = 0$)} \STATE{ $f(a_1, a_3 + b_1,b_3) \pluseq m_1 m_2$ }
\ENDIF   \ENDFOR
\STATE{Return $f$}
\end{algorithmic}
\end{algorithm}
\begin{figure}[H]
    \centering
    
    \begin{tabular}{cc}
    \def\svgscale{.28}{%% Creator: Inkscape inkscape 0.92.4, www.inkscape.org
%% PDF/EPS/PS + LaTeX output extension by Johan Engelen, 2010
%% Accompanies image file 'SeriesCase1.pdf' (pdf, eps, ps)
%%
%% To include the image in your LaTeX document, write
%%   \input{<filename>.pdf_tex}
%%  instead of
%%   \includegraphics{<filename>.pdf}
%% To scale the image, write
%%   \def\svgwidth{<desired width>}
%%   \input{<filename>.pdf_tex}
%%  instead of
%%   \includegraphics[width=<desired width>]{<filename>.pdf}
%%
%% Images with a different path to the parent latex file can
%% be accessed with the `import' package (which may need to be
%% installed) using
%%   \usepackage{import}
%% in the preamble, and then including the image with
%%   \import{<path to file>}{<filename>.pdf_tex}
%% Alternatively, one can specify
%%   \graphicspath{{<path to file>/}}
%% 
%% For more information, please see info/svg-inkscape on CTAN:
%%   http://tug.ctan.org/tex-archive/info/svg-inkscape
%%
\begingroup%
  \makeatletter%
  \providecommand\color[2][]{%
    \errmessage{(Inkscape) Color is used for the text in Inkscape, but the package 'color.sty' is not loaded}%
    \renewcommand\color[2][]{}%
  }%
  \providecommand\transparent[1]{%
    \errmessage{(Inkscape) Transparency is used (non-zero) for the text in Inkscape, but the package 'transparent.sty' is not loaded}%
    \renewcommand\transparent[1]{}%
  }%
  \providecommand\rotatebox[2]{#2}%
  \newcommand*\fsize{\dimexpr\f@size pt\relax}%
  \newcommand*\lineheight[1]{\fontsize{\fsize}{#1\fsize}\selectfont}%
  \ifx\svgwidth\undefined%
    \setlength{\unitlength}{570.61906499bp}%
    \ifx\svgscale\undefined%
      \relax%
    \else%
      \setlength{\unitlength}{\unitlength * \real{\svgscale}}%
    \fi%
  \else%
    \setlength{\unitlength}{\svgwidth}%
  \fi%
  \global\let\svgwidth\undefined%
  \global\let\svgscale\undefined%
  \makeatother%
  \begin{picture}(1,0.4957712)%
    \lineheight{1}%
    \setlength\tabcolsep{0pt}%
    \put(0,0){\includegraphics[width=\unitlength,page=1]{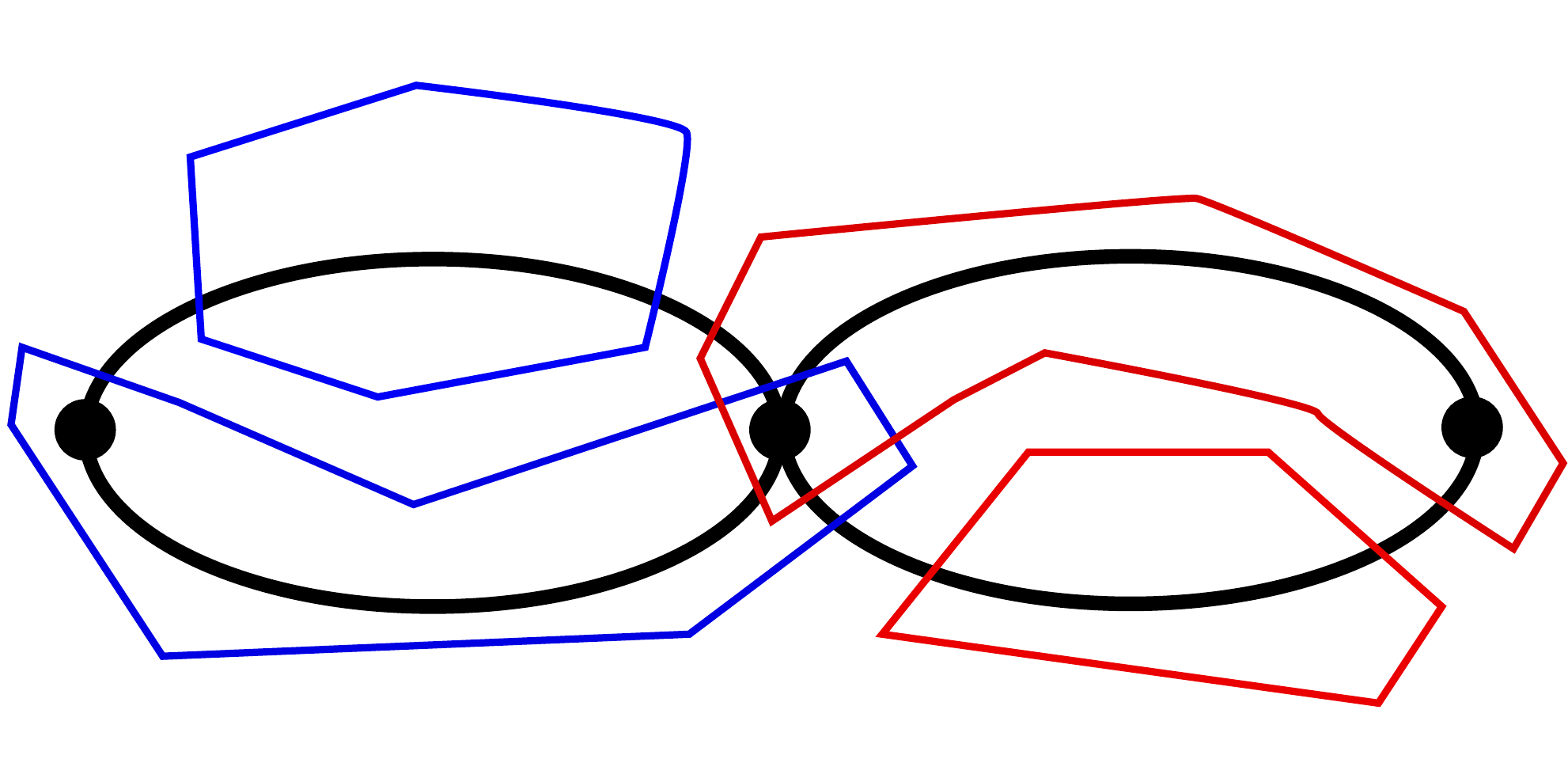}}%
    \put(-0.00408789,0.01513201){\color[rgb]{0,0,0}\makebox(0,0)[lt]{\lineheight{1.25}\smash{\begin{tabular}[t]{l}$X_1$\end{tabular}}}}%
    \put(0.09088847,0.45835502){\color[rgb]{0,0,0}\makebox(0,0)[lt]{\lineheight{1.25}\smash{\begin{tabular}[t]{l}$X_2$\end{tabular}}}}%
    \put(0.58870363,0.43021386){\color[rgb]{0,0,0}\makebox(0,0)[lt]{\lineheight{1.25}\smash{\begin{tabular}[t]{l}$Y_1$\end{tabular}}}}%
    \put(0.62387999,0.0116144){\color[rgb]{0,0,0}\makebox(0,0)[lt]{\lineheight{1.25}\smash{\begin{tabular}[t]{l}$Y_2$\end{tabular}}}}%
  \end{picture}%
\endgroup%
} &
        \def\svgscale{.28}{%% Creator: Inkscape inkscape 0.92.4, www.inkscape.org
%% PDF/EPS/PS + LaTeX output extension by Johan Engelen, 2010
%% Accompanies image file 'SeriesCase2.pdf' (pdf, eps, ps)
%%
%% To include the image in your LaTeX document, write
%%   \input{<filename>.pdf_tex}
%%  instead of
%%   \includegraphics{<filename>.pdf}
%% To scale the image, write
%%   \def\svgwidth{<desired width>}
%%   \input{<filename>.pdf_tex}
%%  instead of
%%   \includegraphics[width=<desired width>]{<filename>.pdf}
%%
%% Images with a different path to the parent latex file can
%% be accessed with the `import' package (which may need to be
%% installed) using
%%   \usepackage{import}
%% in the preamble, and then including the image with
%%   \import{<path to file>}{<filename>.pdf_tex}
%% Alternatively, one can specify
%%   \graphicspath{{<path to file>/}}
%% 
%% For more information, please see info/svg-inkscape on CTAN:
%%   http://tug.ctan.org/tex-archive/info/svg-inkscape
%%
\begingroup%
  \makeatletter%
  \providecommand\color[2][]{%
    \errmessage{(Inkscape) Color is used for the text in Inkscape, but the package 'color.sty' is not loaded}%
    \renewcommand\color[2][]{}%
  }%
  \providecommand\transparent[1]{%
    \errmessage{(Inkscape) Transparency is used (non-zero) for the text in Inkscape, but the package 'transparent.sty' is not loaded}%
    \renewcommand\transparent[1]{}%
  }%
  \providecommand\rotatebox[2]{#2}%
  \newcommand*\fsize{\dimexpr\f@size pt\relax}%
  \newcommand*\lineheight[1]{\fontsize{\fsize}{#1\fsize}\selectfont}%
  \ifx\svgwidth\undefined%
    \setlength{\unitlength}{605.15948673bp}%
    \ifx\svgscale\undefined%
      \relax%
    \else%
      \setlength{\unitlength}{\unitlength * \real{\svgscale}}%
    \fi%
  \else%
    \setlength{\unitlength}{\svgwidth}%
  \fi%
  \global\let\svgwidth\undefined%
  \global\let\svgscale\undefined%
  \makeatother%
  \begin{picture}(1,0.48074178)%
    \lineheight{1}%
    \setlength\tabcolsep{0pt}%
    \put(0,0){\includegraphics[width=\unitlength,page=1]{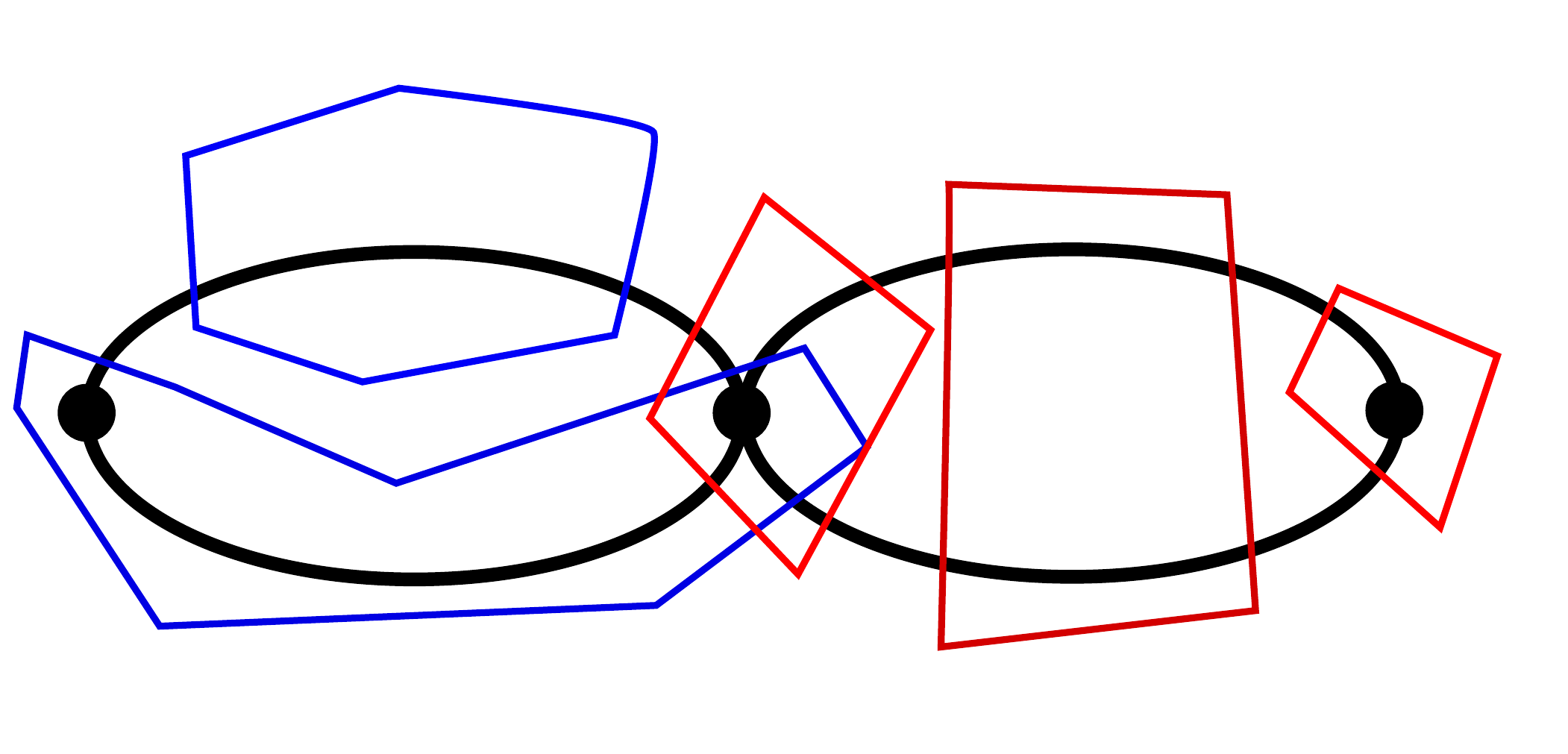}}%
    \put(-0.00385456,0.02090205){\color[rgb]{0,0,0}\makebox(0,0)[lt]{\lineheight{1.25}\smash{\begin{tabular}[t]{l}$X_1$\end{tabular}}}}%
    \put(0.0923346,0.44546119){\color[rgb]{0,0,0}\makebox(0,0)[lt]{\lineheight{1.25}\smash{\begin{tabular}[t]{l}$X_2$\end{tabular}}}}%
    \put(0.86357114,0.0806057){\color[rgb]{0,0,0}\makebox(0,0)[lt]{\lineheight{1.25}\smash{\begin{tabular}[t]{l}$Y_3$\end{tabular}}}}%
    \put(0.42906144,0.38575754){\color[rgb]{0,0,0}\makebox(0,0)[lt]{\lineheight{1.25}\smash{\begin{tabular}[t]{l}$Y_1$\end{tabular}}}}%
    \put(0.59158798,0.01095149){\color[rgb]{0,0,0}\makebox(0,0)[lt]{\lineheight{1.25}\smash{\begin{tabular}[t]{l}$Y_2$\end{tabular}}}}%
  \end{picture}%
\endgroup%
} \\     \def\svgscale{.28}{%% Creator: Inkscape inkscape 0.92.4, www.inkscape.org
%% PDF/EPS/PS + LaTeX output extension by Johan Engelen, 2010
%% Accompanies image file 'SeriesCase3.pdf' (pdf, eps, ps)
%%
%% To include the image in your LaTeX document, write
%%   \input{<filename>.pdf_tex}
%%  instead of
%%   \includegraphics{<filename>.pdf}
%% To scale the image, write
%%   \def\svgwidth{<desired width>}
%%   \input{<filename>.pdf_tex}
%%  instead of
%%   \includegraphics[width=<desired width>]{<filename>.pdf}
%%
%% Images with a different path to the parent latex file can
%% be accessed with the `import' package (which may need to be
%% installed) using
%%   \usepackage{import}
%% in the preamble, and then including the image with
%%   \import{<path to file>}{<filename>.pdf_tex}
%% Alternatively, one can specify
%%   \graphicspath{{<path to file>/}}
%% 
%% For more information, please see info/svg-inkscape on CTAN:
%%   http://tug.ctan.org/tex-archive/info/svg-inkscape
%%
\begingroup%
  \makeatletter%
  \providecommand\color[2][]{%
    \errmessage{(Inkscape) Color is used for the text in Inkscape, but the package 'color.sty' is not loaded}%
    \renewcommand\color[2][]{}%
  }%
  \providecommand\transparent[1]{%
    \errmessage{(Inkscape) Transparency is used (non-zero) for the text in Inkscape, but the package 'transparent.sty' is not loaded}%
    \renewcommand\transparent[1]{}%
  }%
  \providecommand\rotatebox[2]{#2}%
  \newcommand*\fsize{\dimexpr\f@size pt\relax}%
  \newcommand*\lineheight[1]{\fontsize{\fsize}{#1\fsize}\selectfont}%
  \ifx\svgwidth\undefined%
    \setlength{\unitlength}{626.09477361bp}%
    \ifx\svgscale\undefined%
      \relax%
    \else%
      \setlength{\unitlength}{\unitlength * \real{\svgscale}}%
    \fi%
  \else%
    \setlength{\unitlength}{\svgwidth}%
  \fi%
  \global\let\svgwidth\undefined%
  \global\let\svgscale\undefined%
  \makeatother%
  \begin{picture}(1,0.412405)%
    \lineheight{1}%
    \setlength\tabcolsep{0pt}%
    \put(0,0){\includegraphics[width=\unitlength,page=1]{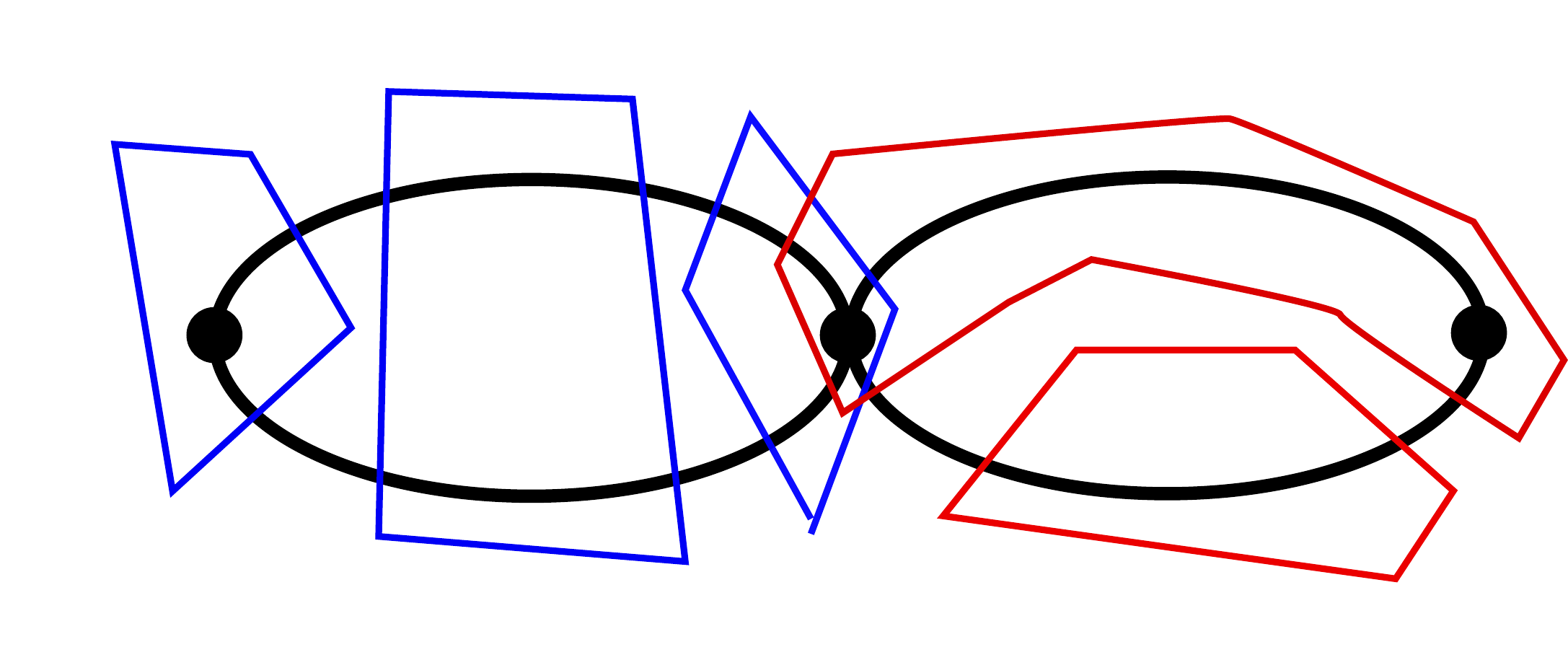}}%
    \put(0.22069149,0.37830412){\color[rgb]{0,0,0}\makebox(0,0)[lt]{\lineheight{1.25}\smash{\begin{tabular}[t]{l}$X_2$\end{tabular}}}}%
    \put(-0.00372567,0.03815169){\color[rgb]{0,0,0}\makebox(0,0)[lt]{\lineheight{1.25}\smash{\begin{tabular}[t]{l}$X_1$\end{tabular}}}}%
    \put(0.43264209,0.01091089){\color[rgb]{0,0,0}\makebox(0,0)[lt]{\lineheight{1.25}\smash{\begin{tabular}[t]{l}$X_3$\end{tabular}}}}%
    \put(0.67617881,0.35490561){\color[rgb]{0,0,0}\makebox(0,0)[lt]{\lineheight{1.25}\smash{\begin{tabular}[t]{l}$Y_1$\end{tabular}}}}%
    \put(0.68451471,0.0105853){\color[rgb]{0,0,0}\makebox(0,0)[lt]{\lineheight{1.25}\smash{\begin{tabular}[t]{l}$Y_2$\end{tabular}}}}%
  \end{picture}%
\endgroup%
} &     \def\svgscale{.28}{%% Creator: Inkscape inkscape 0.92.4, www.inkscape.org
%% PDF/EPS/PS + LaTeX output extension by Johan Engelen, 2010
%% Accompanies image file 'SeriesCase4.pdf' (pdf, eps, ps)
%%
%% To include the image in your LaTeX document, write
%%   \input{<filename>.pdf_tex}
%%  instead of
%%   \includegraphics{<filename>.pdf}
%% To scale the image, write
%%   \def\svgwidth{<desired width>}
%%   \input{<filename>.pdf_tex}
%%  instead of
%%   \includegraphics[width=<desired width>]{<filename>.pdf}
%%
%% Images with a different path to the parent latex file can
%% be accessed with the `import' package (which may need to be
%% installed) using
%%   \usepackage{import}
%% in the preamble, and then including the image with
%%   \import{<path to file>}{<filename>.pdf_tex}
%% Alternatively, one can specify
%%   \graphicspath{{<path to file>/}}
%% 
%% For more information, please see info/svg-inkscape on CTAN:
%%   http://tug.ctan.org/tex-archive/info/svg-inkscape
%%
\begingroup%
  \makeatletter%
  \providecommand\color[2][]{%
    \errmessage{(Inkscape) Color is used for the text in Inkscape, but the package 'color.sty' is not loaded}%
    \renewcommand\color[2][]{}%
  }%
  \providecommand\transparent[1]{%
    \errmessage{(Inkscape) Transparency is used (non-zero) for the text in Inkscape, but the package 'transparent.sty' is not loaded}%
    \renewcommand\transparent[1]{}%
  }%
  \providecommand\rotatebox[2]{#2}%
  \newcommand*\fsize{\dimexpr\f@size pt\relax}%
  \newcommand*\lineheight[1]{\fontsize{\fsize}{#1\fsize}\selectfont}%
  \ifx\svgwidth\undefined%
    \setlength{\unitlength}{657.89735916bp}%
    \ifx\svgscale\undefined%
      \relax%
    \else%
      \setlength{\unitlength}{\unitlength * \real{\svgscale}}%
    \fi%
  \else%
    \setlength{\unitlength}{\svgwidth}%
  \fi%
  \global\let\svgwidth\undefined%
  \global\let\svgscale\undefined%
  \makeatother%
  \begin{picture}(1,0.39215966)%
    \lineheight{1}%
    \setlength\tabcolsep{0pt}%
    \put(0,0){\includegraphics[width=\unitlength,page=1]{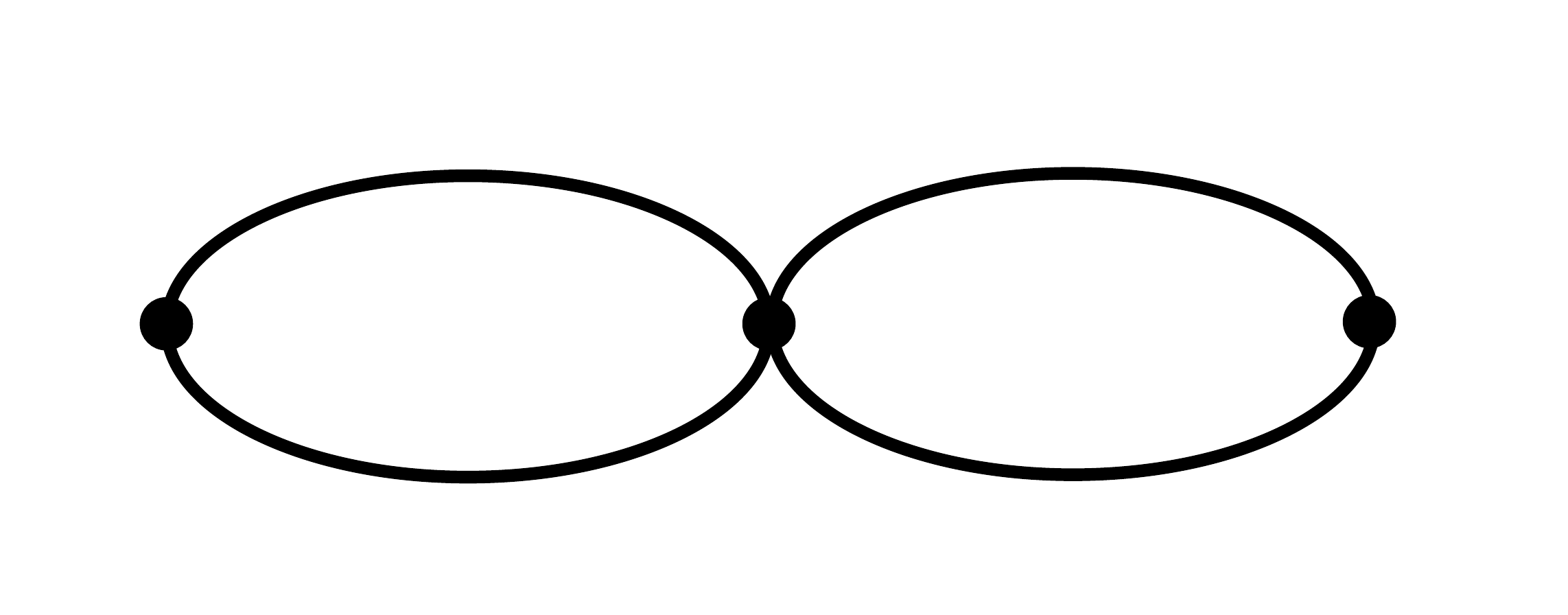}}%
    \put(0.21002325,0.35970721){\color[rgb]{0,0,0}\makebox(0,0)[lt]{\lineheight{1.25}\smash{\begin{tabular}[t]{l}$X_2$\end{tabular}}}}%
    \put(-0.00354558,0.03599779){\color[rgb]{0,0,0}\makebox(0,0)[lt]{\lineheight{1.25}\smash{\begin{tabular}[t]{l}$X_1$\end{tabular}}}}%
    \put(0.48156085,0.33499431){\color[rgb]{0,0,0}\makebox(0,0)[lt]{\lineheight{1.25}\smash{\begin{tabular}[t]{l}$Y_1$\end{tabular}}}}%
    \put(0.64326298,0.01158988){\color[rgb]{0,0,0}\makebox(0,0)[lt]{\lineheight{1.25}\smash{\begin{tabular}[t]{l}$Y_2$\end{tabular}}}}%
    \put(0,0){\includegraphics[width=\unitlength,page=2]{SeriesCase4.pdf}}%
    \put(0.87450745,0.07261884){\color[rgb]{0,0,0}\makebox(0,0)[lt]{\lineheight{1.25}\smash{\begin{tabular}[t]{l}$Y_3$\end{tabular}}}}%
    \put(0,0){\includegraphics[width=\unitlength,page=3]{SeriesCase4.pdf}}%
    \put(0.41172831,0.01007361){\color[rgb]{0,0,0}\makebox(0,0)[lt]{\lineheight{1.25}\smash{\begin{tabular}[t]{l}$X_3$\end{tabular}}}}%
  \end{picture}%
\endgroup%
} 
        \\
    (a) Case 1 & (b) Case 2 \\
    (c) Case 3 & (d) Case 4
    \end{tabular}
    
    \caption{The four cases in \Cref{alg:seriespartitions}}\label{fig:seriespartition}

\end{figure}

\begin{prop}\label{prop:verifyseries}
If $(G,w)$ is admissible, then \Cref{alg:seriespartitions} runs correctly and in polynomial time in $(|G|, n(w(G)) )$.\footnote{This is pseudopolynomial in the total weight.} %
\end{prop}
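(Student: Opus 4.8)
\textbf{Plan for the proof of \Cref{prop:verifyseries}.}
The proof has two parts: correctness and running time. The plan is to prove correctness by exhibiting a bijection between (on the one hand) the set $\widetilde{X}((G_1,w_1)\circ(G_2,w_2))$, the multiset of connected $3$-partitions of the series composition with the prescribed source/sink constraints, and (on the other hand) a disjoint union over pairs $\big((V_1^1,V_2^1,V_3^1),(V_1^2,V_2^2,V_3^2)\big)\in\widetilde{X}(G_1)\times\widetilde{X}(G_2)$ that survive a gluing compatibility test. First I would describe the gluing map: given partitions of $G_1$ and $G_2$, we identify $\epsilon=\tau_1=\sigma_2$, which forces the two blocks containing $\epsilon$ to merge. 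Because $\sigma_1\in V_1^1$ always and $\sigma_2\in V_1^2$ always, $\epsilon$ lies in $V_1^2$; in $G_1$ the node $\epsilon=\tau_1$ lies in $V_1^1$ if $V_3^1=\emptyset$ and in $V_3^1$ otherwise. So the merge identifies $V_1^2$ with either $V_1^1$ or $V_3^1$. One then checks which resulting triple $(W_1,W_2,W_3)$ is a legal element of $\widetilde{X}(G_1\circ G_2)$: legality requires each block nonempty-or-empty-in-the-right-slot, $\sigma=\sigma_1\in W_1$, and $\tau=\tau_2$ in $W_1$ if $W_3=\emptyset$ and in $W_3$ otherwise. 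A case analysis on the four sign patterns of $(a_3,b_3)$ (empty/nonempty third block in each factor), together with the observation that at most one of $V_2^1,V_2^2$ may be nonempty (otherwise the two disjoint "second" blocks would both be nonempty and yet must remain separate blocks across a single cut vertex, which is fine — wait, actually two disjoint second blocks can coexist only if one is empty, since after gluing they would be a single block only if they shared a vertex, and they don't; so in fact $V_2$ of the composite is $V_2^1\cup V_2^2$ which is connected only if one summand is empty), gives exactly the four \texttt{IF} branches of \Cref{alg:seriespartitions}. I would present \Cref{fig:seriespartition} as the picture underlying these four cases. Connectivity of the glued blocks is immediate: $G_1[V_i^1]$ and $G_2[V_j^2]$ are connected and share the single vertex $\epsilon$ when merged, so their union is connected; the unmerged blocks are untouched.

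Having the bijection, correctness of the count follows by passing from $\widetilde{X}$ to $X$: the multiplicities $m_1,m_2$ count the fibers over a given weight triple, and since the gluing map is a bijection onto the legal composite partitions and is weight-additive (the weight of a merged block is the sum of the two weights, using that $\epsilon$'s weight is counted once on each side but $w(\epsilon)$ appears in exactly one of $a_i,b_j$ by the extend-by-$(0,\lnot\emptyset)$ convention of \Cref{defn:SPweightscomposition} — this bookkeeping point needs care, and I would spell it out), the product $m_1 m_2$ is exactly the number of composite partitions with the target weight profile. Summing over all compatible pairs is what the \texttt{FOR} loop does, so the returned $f$ equals $X((G_1,w_1)\circ(G_2,w_2))$. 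Admissibility of $(G,w)$ is used to ensure the $E$-component $e(w(V_i))$ faithfully records emptiness of blocks, so the tests "$a_3=0$", "$a_2=0$" genuinely detect empty blocks rather than blocks of total numerical weight zero.

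For the running time: the table $X(G_i,w_i)$ has at most $(n(w(G_i))+1)^3$ entries (the $E$-components are determined by whether the $\mathbb{N}$-component slot is "used", and the fourth coordinate $m$ is a value not an index), hence the double \texttt{FOR} loop runs over at most $(n(w(G))+1)^6$ pairs; each iteration does a constant number of additions and multiplications of integers bounded by $2^{\mathrm{poly}(|G|,n(w(G)))}$ (since the number of connected $3$-partitions of $G$ is at most $3^{|V(G)|}$, so each $m_1 m_2$ has $\mathrm{poly}(|G|)$ bits), giving total time polynomial in $(|G|,n(w(G)))$, i.e.\ pseudopolynomial in the total weight. I would note that a matching lemma for $\parallelsum$ (an \Cref{alg:parallelpartitions}-style routine and its correctness) is handled the same way, and that combining these along a binary $SP$-tree with $O(|G|)$ nodes yields $X(G,w)$ for the whole graph in pseudopolynomial time; this is what feeds into \Cref{thm:calculatingbalancedpartitions} and ultimately \Cref{thm:balSCsampling}.

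The main obstacle I anticipate is not any single hard idea but the sheer case-bookkeeping: getting the four \texttt{IF} branches to exactly match the legal gluings, and in particular correctly tracking (i) which factor's third block becomes the composite third block, (ii) the "at most one of $V_2^1,V_2^2$ nonempty" constraint and why a violation would produce a disconnected or illegal partition, and (iii) the weight-of-$\epsilon$ double-counting subtlety introduced by \Cref{defn:SPweightscomposition}. I would organize this as a short lemma establishing the gluing bijection, with the figure doing most of the explanatory work, and relegate the weight-bookkeeping to an explicit sentence so the reader can verify it against the pseudocode.
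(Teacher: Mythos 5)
Your proposal is correct and takes essentially the same route as the paper: the paper's proof also rests on a bijection between composite partitions and compatible pairs, organized by the same four-way case split on the emptiness of the two third blocks (equivalently, on the equivalence relation the composite induces on $\{\sigma,\epsilon,\tau\}$, noting the fifth relation is excluded by connectivity), verified in the recovery direction rather than your gluing direction, and then coupled to the weight-level algorithm using admissibility exactly as you describe. One bookkeeping caution: your stated principle that ``at most one of $V_2^1,V_2^2$ is nonempty'' gives the condition ($a_2=0$ or $b_2=0$) in the first three branches, but in the fourth branch ($a_3\neq 0$ and $b_3\neq 0$) the composite already has three nonempty blocks $V_1^1$, $V_3^1\cup V_1^2$, $V_3^2$, so legality forces both $a_2=0$ and $b_2=0$ --- your planned legality check on the glued triple catches this, but the heuristic alone would not.
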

\begin{proof}

We need to check that each element of $P_2(G)$ is counted exactly once. To verify this, we explain another algorithm, \Cref{alg:setlevelseriespartitions}, that computes $\widetilde{X}(G_1 \circ G_2)$ from $\widetilde{X}(G_1)$ and $\widetilde{X}(G_2)$, but in exponential time. We will verify that in the course of this algorithm each element of $\widetilde{X}(G_1 \circ G_2)$ is computed exactly once. Finally, we will explain that the correctness of \Cref{alg:seriespartitions} can be seen by coupling it with a accelerated version of \Cref{alg:setlevelseriespartitions}, and we compute the time it takes for \Cref{alg:seriespartitions} to run.

\begin{algorithm}[H]
\caption{\texttt{SetLevelSeriesPartitions}}\label{alg:setlevelseriespartitions}
\textbf{Input:} series-parallel graphs $G_1$ and $G_2$ along with sets $\widetilde{X}( G_1 )$ and $\widetilde{X}( G_2 )$ \\
\textbf{Output:} $\widetilde{X}( G_1 \circ G_2 )$ 
\begin{algorithmic}[1]
\STATE{Initialize $F$ as the zero function on $\mathcal{P}(V(G))^3$, where $\mathcal{P}$ denotes the powerset.}
\FOR{$(X_1, X_2, X_3) \in \widetilde{X}( G_1)$ \textrm{ and }$(Y_1, Y_2, Y_3) \in \widetilde{X}(G_2)$ } 

\IF{$X_3 = \emptyset $ AND $Y_3 = \emptyset $ AND ($X_2 = \emptyset$ OR $Y_2 = \emptyset$)} \STATE{
   $F(X_1 \cup Y_1, X_2 \cup Y_2, \emptyset) \pluseq 1$} \ENDIF
   
\IF{$X_3 = \emptyset $ AND $Y_3 \not = \emptyset $ AND ($X_2 = \emptyset$ OR $Y_2 = \emptyset$)}
   \STATE{ $F(X_1 \cup Y_1, X_2 \cup Y_2, Y_3) \pluseq 1$}
   \ENDIF
   
\IF{$X_3 \not = \emptyset $ AND $Y_3 = \emptyset $ AND ($X_2 = \emptyset$ OR $Y_2 = \emptyset$)}
   \STATE{ $F(X_1, X_2 \cup Y_2, X_3 \cup Y_1) \pluseq 1$}
   \ENDIF

\IF{$X_3 \not = \emptyset $ AND $Y_3 \not = \emptyset $ AND $X_2 = \emptyset$ AND $Y_2 = \emptyset$} \STATE{ $F(X_1 , X_3 \cup Y_1, Y_3) \pluseq 1$} \ENDIF
\ENDFOR
\STATE{Return $F$}
\end{algorithmic}
   
\end{algorithm}
Our first goal is to verify that the function $F$ returned by \Cref{alg:setlevelseriespartitions} is the indicator of $\tilde{X}(G)$ in $\mathcal{P}(V(G))^3$.
First, it is straightforward to check that $\supp(F) \subseteq \tilde{X}(G)$, by examining each case. Let $(Z_1, Z_2, Z_3) \in \tilde{X}(G)$. %
There are four cases based on how $Z$ separates $(\sigma, \epsilon, \tau)$. The cases are:

\begin{enumerate}
    \item There is a block containing $\sigma$, $\epsilon$ and $\tau$.
    \item There is a block containing $\sigma$ and $\epsilon$, and a distinct block containing $\tau$.
    \item There is a block containing $\sigma$, and a distinct block containing $\epsilon$ and $\tau$.
    \item $\sigma, \epsilon$ and $\tau$ are each in a different block.
\end{enumerate}

This accounts for $4$ of the $5$ equivalence relations on $\{\sigma, \epsilon, \tau \}$. The missing equivalence relation on $\{\sigma, \epsilon, \tau\}$ would put $\sigma$ and $\tau$ in a block, and $\epsilon$ in a different block. This case cannot occur due to the requirement that the blocks are connected. One can now go through the four cases, and observe that for each $Z = (Z_1,Z_2,Z_3)$,   $Z$ can be produced in exactly one of the four cases in the algorithm, because each step is distinguished by how the partitions they produce separate $\{\sigma, \epsilon, \tau\}$. Moreover, because within each case the $X_i$ and $Y_i$ can be recovered by taking appropriate intersections of the $Z_i$ with $G_1$ and $G_2$, there is a unique pair of partitions of $G_1$ and $G_2$ that produce each $(Z_1, Z_2, Z_3)$. Thus, $F( Z_1, Z_2, Z_3) = 1$. We now go through the cases in more detail:

\begin{itemize}
    \item The case $X_3 = \emptyset$ and $Y_3 = \emptyset$, equivalently, $\sigma, \tau \in Z_1$. See \Cref{fig:seriespartition}(a).

    Recovery: $X_1 = Z_1 \cap G_1$, $X_2 = Z_2 \cap G_1$, $X_3 = \emptyset$, and $Y_1 = Z_1 \cap G_2$, $Y_2 = Z_2 \cap G_2$, $Y_3 = \emptyset$.

    \item The case $X_3 = \emptyset$, $Y_3 \not = \emptyset$, equivalently, $\sigma, \epsilon \in Z_1$, $\tau \in Z_3$. See \Cref{fig:seriespartition}(b).
    
    Recovery:  $X_1 = Z_1 \cap G_1$, $X_2 = Z_2 \cap G_1$, $X_3 = \emptyset$, and $Y_1 = Z_1 \cap G_2$, $Y_2 = Z_2 \cap G_2$ $Y_3 = Z_3$.

    \item The case $X_3 \not = \emptyset$, $Y_3 = \emptyset$, equivalently, $\sigma \in Z_1$, $\epsilon, \tau \in Z_3$. See \Cref{fig:seriespartition}(c).
    
    Recovery:  $X_1 = Z_1 \cap G_1$, $X_2 = Z_2 \cap G_1$, $X_3 = Z_3 \cap G_1$, and $Y_1 = Z_3 \cap G_2$, $Y_2 = Z_2 \cap G_2$, $Y_3 = \emptyset$.
    
    \item The case $X_3 \not = \emptyset$, $Y_3 \not = \emptyset$, equivalently, $\sigma, \epsilon, \tau$ are each in a different block. See \Cref{fig:seriespartition}(d).
    
    Recovery: $X_1 = Z_1 \cap G_1$, $X_2 = \emptyset$, $X_3 = Z_2 \cap G_1$, and $Y_1 = Z_2 \cap G_2$, $Y_2 = \emptyset$, $Y_3 = Z_3 \cap G_2$.

\end{itemize}

Finally, we examine the relationship between \Cref{alg:setlevelseriespartitions} and \Cref{alg:seriespartitions}. In particular, we will couple them together by sorting the elements of $\tilde{X}((G_1,w_1)$ so that those with the same sequence of weights appear together, and the same for $\tilde{X} ( ( G_2, w_2))$. The weights of the two $3$-partitions we are merging uniquely determines which of the four cases \Cref{alg:setlevelseriespartitions} is in since, by construction of the $E$ coordinate of every node weight, a set is empty iff its weight is zero. The weights of the the two merging partitions also determine the weights of the resulting $3$-partition, say $(c_1, c_2, c_3)$. Thus, while the set level algorithm \Cref{alg:setlevelseriespartitions} putters through $\Theta(m_1m_2)$ set-operations and $m_1m_2$ updates to a function, the algorithm \Cref{alg:seriespartitions} computes $m_1m_2$ and adds that that to the value of $f$ over some efficiently computable tuple $(c_1, c_2,c_3)$. %

Finally, after having verified that \Cref{alg:seriespartitions} has the correct output, we remark that it consists of a single loop over the product of two sets, which has size bounded by $O( w(G_1)^2 w(G_2)^2)$, since the number of elements in $X(G)$ is in general bounded by $w(G)^2$. Within each loop, each $m_i \leq 2^{3|V(G)|}$, so the cost of multiplications and additions are polynomial in $|G|$. This concludes the proof.
\end{proof}

We next explain how to compute $X( (G_1, w_1) \parallelsum (G_2, w_2))$ from $X( (G_1,w_1))$ and $X((G_2, w_2))$. %

\begin{algorithm}[H]
\caption{\texttt{ParallelPartitions}}\label{alg:parallelpartition}

\textbf{Input:} $X( (G_1, w_1) )$ and $X( (G_2, w_2) )$ for $(G_1, w_1)$ and $(G_2, w_2)$\\
\textbf{Output:} $X( (G_1,w_1) \parallelsum (G_2, w_2) )$.
\begin{algorithmic}
\STATE{Set $f$ to be constantly zero on $\{ (a_1, a_2, a_3) : 0 \leq a_i , \sum a_i = w( (G_1,w_1) \parallelsum (G_2, w_2) ) \}$}

\FOR{$(a_1, a_2, a_3, m_1) \in X( G_1, w_1)$ \textrm{ and }$(b_1, b_2, b_3, m_2) \in X(G_2, w_2)$ }

\IF{$a_3 = 0$ AND $b_3 = 0$ AND ($a_2 = 0$ OR $b_2 = 0$ )}
   \STATE{ $f(a_1 + b_1 , a_2 + b_2, 0) \pluseq m_1 m_2$ }
   \ENDIF
   
\IF{$a_3 = 0$ AND $b_3 \not = 0$ AND ($a_2 = 0$ OR $b_2 = 0$)} \STATE{ $f(a_1 + b_1 + b_3, a_2 + b_2, 0) \pluseq m_1 m_2$ } \ENDIF

\IF{$a_3 \not = 0$ AND $b_3 = 0$ AND ($a_2 = 0$ OR $b_2 = 0$)}
   \STATE{ $f(a_1 + b_1 + a_3, a_2 + b_2, 0) \pluseq m_1 m_2$ } \ENDIF

\IF{$a_3 \not = 0$ AND $b_3 \not = 0$ AND ($a_2 = 0$ OR $b_2 = 0$)}
   \STATE{ $f(a_1 + b_1, a_2 + b_2, a_3 + b_3) \pluseq m_1 m_2$ 
} \ENDIF

\ENDFOR

\STATE{Return $f$}
\end{algorithmic}
\end{algorithm}

\begin{figure}[H] \centering \begin{tabular}{cc}

\def\svgscale{.4}{%% Creator: Inkscape inkscape 0.92.4, www.inkscape.org
%% PDF/EPS/PS + LaTeX output extension by Johan Engelen, 2010
%% Accompanies image file 'ParallelCase1.pdf' (pdf, eps, ps)
%%
%% To include the image in your LaTeX document, write
%%   \input{<filename>.pdf_tex}
%%  instead of
%%   \includegraphics{<filename>.pdf}
%% To scale the image, write
%%   \def\svgwidth{<desired width>}
%%   \input{<filename>.pdf_tex}
%%  instead of
%%   \includegraphics[width=<desired width>]{<filename>.pdf}
%%
%% Images with a different path to the parent latex file can
%% be accessed with the `import' package (which may need to be
%% installed) using
%%   \usepackage{import}
%% in the preamble, and then including the image with
%%   \import{<path to file>}{<filename>.pdf_tex}
%% Alternatively, one can specify
%%   \graphicspath{{<path to file>/}}
%% 
%% For more information, please see info/svg-inkscape on CTAN:
%%   http://tug.ctan.org/tex-archive/info/svg-inkscape
%%
\begingroup%
  \makeatletter%
  \providecommand\color[2][]{%
    \errmessage{(Inkscape) Color is used for the text in Inkscape, but the package 'color.sty' is not loaded}%
    \renewcommand\color[2][]{}%
  }%
  \providecommand\transparent[1]{%
    \errmessage{(Inkscape) Transparency is used (non-zero) for the text in Inkscape, but the package 'transparent.sty' is not loaded}%
    \renewcommand\transparent[1]{}%
  }%
  \providecommand\rotatebox[2]{#2}%
  \newcommand*\fsize{\dimexpr\f@size pt\relax}%
  \newcommand*\lineheight[1]{\fontsize{\fsize}{#1\fsize}\selectfont}%
  \ifx\svgwidth\undefined%
    \setlength{\unitlength}{250.64613159bp}%
    \ifx\svgscale\undefined%
      \relax%
    \else%
      \setlength{\unitlength}{\unitlength * \real{\svgscale}}%
    \fi%
  \else%
    \setlength{\unitlength}{\svgwidth}%
  \fi%
  \global\let\svgwidth\undefined%
  \global\let\svgscale\undefined%
  \makeatother%
  \begin{picture}(1,0.59954218)%
    \lineheight{1}%
    \setlength\tabcolsep{0pt}%
    \put(0,0){\includegraphics[width=\unitlength,page=1]{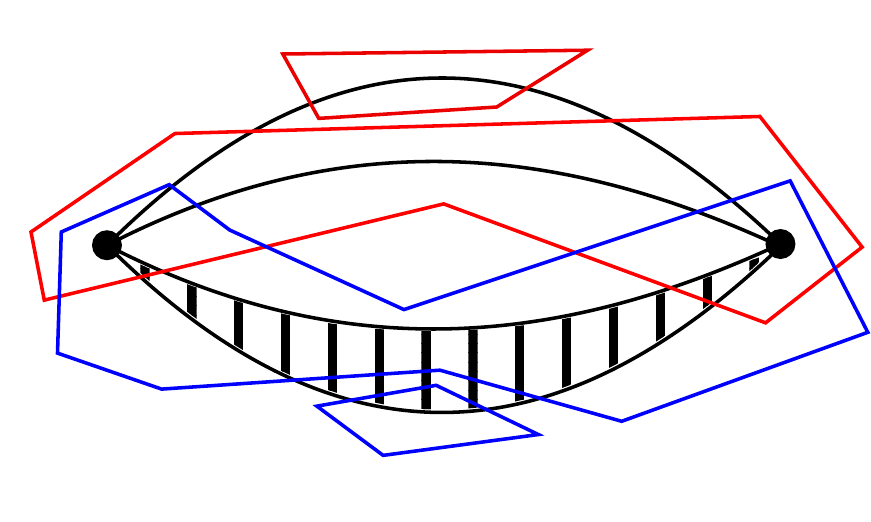}}%
    \put(0.06309103,0.44774392){\color[rgb]{0,0,0}\makebox(0,0)[t]{\lineheight{1.25}\smash{\begin{tabular}[t]{c}$X_1$\end{tabular}}}}%
    \put(0.47719077,0.56681897){\color[rgb]{0,0,0}\makebox(0,0)[t]{\lineheight{1.25}\smash{\begin{tabular}[t]{c}$X_2$\end{tabular}}}}%
    \put(0.10012126,0.10245065){\color[rgb]{0,0,0}\makebox(0,0)[t]{\lineheight{1.25}\smash{\begin{tabular}[t]{c}$Y_1$\end{tabular}}}}%
    \put(0.47863252,0.01015766){\color[rgb]{0,0,0}\makebox(0,0)[t]{\lineheight{1.25}\smash{\begin{tabular}[t]{c}$Y_2$\end{tabular}}}}%
    \put(0,0){\includegraphics[width=\unitlength,page=2]{ParallelCase1.pdf}}%
  \end{picture}%
\endgroup%
} &\def\svgscale{.4}{%% Creator: Inkscape inkscape 0.92.4, www.inkscape.org
%% PDF/EPS/PS + LaTeX output extension by Johan Engelen, 2010
%% Accompanies image file 'ParallelCase2.pdf' (pdf, eps, ps)
%%
%% To include the image in your LaTeX document, write
%%   \input{<filename>.pdf_tex}
%%  instead of
%%   \includegraphics{<filename>.pdf}
%% To scale the image, write
%%   \def\svgwidth{<desired width>}
%%   \input{<filename>.pdf_tex}
%%  instead of
%%   \includegraphics[width=<desired width>]{<filename>.pdf}
%%
%% Images with a different path to the parent latex file can
%% be accessed with the `import' package (which may need to be
%% installed) using
%%   \usepackage{import}
%% in the preamble, and then including the image with
%%   \import{<path to file>}{<filename>.pdf_tex}
%% Alternatively, one can specify
%%   \graphicspath{{<path to file>/}}
%% 
%% For more information, please see info/svg-inkscape on CTAN:
%%   http://tug.ctan.org/tex-archive/info/svg-inkscape
%%
\begingroup%
  \makeatletter%
  \providecommand\color[2][]{%
    \errmessage{(Inkscape) Color is used for the text in Inkscape, but the package 'color.sty' is not loaded}%
    \renewcommand\color[2][]{}%
  }%
  \providecommand\transparent[1]{%
    \errmessage{(Inkscape) Transparency is used (non-zero) for the text in Inkscape, but the package 'transparent.sty' is not loaded}%
    \renewcommand\transparent[1]{}%
  }%
  \providecommand\rotatebox[2]{#2}%
  \newcommand*\fsize{\dimexpr\f@size pt\relax}%
  \newcommand*\lineheight[1]{\fontsize{\fsize}{#1\fsize}\selectfont}%
  \ifx\svgwidth\undefined%
    \setlength{\unitlength}{251.12524115bp}%
    \ifx\svgscale\undefined%
      \relax%
    \else%
      \setlength{\unitlength}{\unitlength * \real{\svgscale}}%
    \fi%
  \else%
    \setlength{\unitlength}{\svgwidth}%
  \fi%
  \global\let\svgwidth\undefined%
  \global\let\svgscale\undefined%
  \makeatother%
  \begin{picture}(1,0.63383783)%
    \lineheight{1}%
    \setlength\tabcolsep{0pt}%
    \put(0,0){\includegraphics[width=\unitlength,page=1]{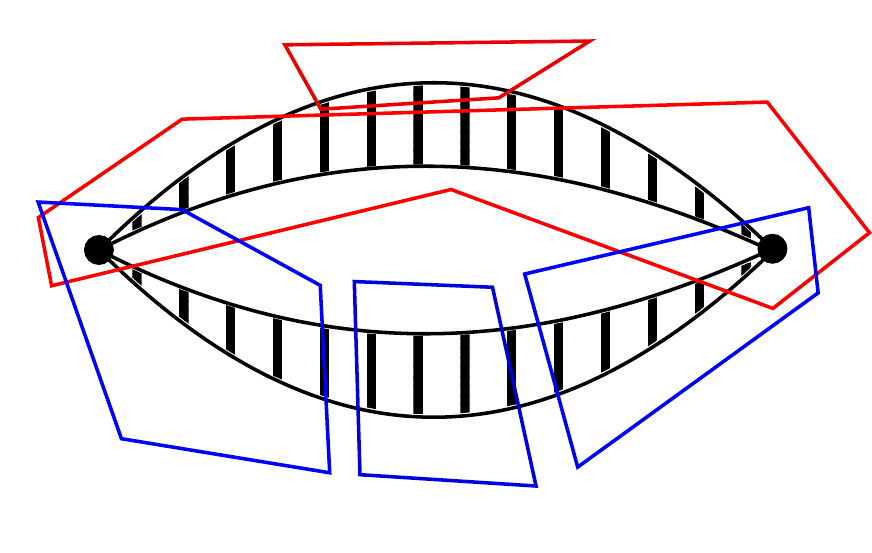}}%
    \put(0.10316837,0.48232934){\color[rgb]{0,0,0}\makebox(0,0)[t]{\lineheight{1.25}\smash{\begin{tabular}[t]{c}$X_1$\end{tabular}}}}%
    \put(0.51647811,0.60117704){\color[rgb]{0,0,0}\makebox(0,0)[t]{\lineheight{1.25}\smash{\begin{tabular}[t]{c}$X_2$\end{tabular}}}}%
    \put(0.06137541,0.09798851){\color[rgb]{0,0,0}\makebox(0,0)[t]{\lineheight{1.25}\smash{\begin{tabular}[t]{c}$Y_1$\end{tabular}}}}%
    \put(0.49462914,0.01013828){\color[rgb]{0,0,0}\makebox(0,0)[t]{\lineheight{1.25}\smash{\begin{tabular}[t]{c}$Y_2$\end{tabular}}}}%
    \put(0.85538364,0.13548623){\color[rgb]{0,0,0}\makebox(0,0)[t]{\lineheight{1.25}\smash{\begin{tabular}[t]{c}$Y_3$\end{tabular}}}}%
  \end{picture}%
\endgroup%
} \\\def\svgscale{.4}{%% Creator: Inkscape inkscape 0.92.4, www.inkscape.org
%% PDF/EPS/PS + LaTeX output extension by Johan Engelen, 2010
%% Accompanies image file 'ParallelCase3.pdf' (pdf, eps, ps)
%%
%% To include the image in your LaTeX document, write
%%   \input{<filename>.pdf_tex}
%%  instead of
%%   \includegraphics{<filename>.pdf}
%% To scale the image, write
%%   \def\svgwidth{<desired width>}
%%   \input{<filename>.pdf_tex}
%%  instead of
%%   \includegraphics[width=<desired width>]{<filename>.pdf}
%%
%% Images with a different path to the parent latex file can
%% be accessed with the `import' package (which may need to be
%% installed) using
%%   \usepackage{import}
%% in the preamble, and then including the image with
%%   \import{<path to file>}{<filename>.pdf_tex}
%% Alternatively, one can specify
%%   \graphicspath{{<path to file>/}}
%% 
%% For more information, please see info/svg-inkscape on CTAN:
%%   http://tug.ctan.org/tex-archive/info/svg-inkscape
%%
\begingroup%
  \makeatletter%
  \providecommand\color[2][]{%
    \errmessage{(Inkscape) Color is used for the text in Inkscape, but the package 'color.sty' is not loaded}%
    \renewcommand\color[2][]{}%
  }%
  \providecommand\transparent[1]{%
    \errmessage{(Inkscape) Transparency is used (non-zero) for the text in Inkscape, but the package 'transparent.sty' is not loaded}%
    \renewcommand\transparent[1]{}%
  }%
  \providecommand\rotatebox[2]{#2}%
  \newcommand*\fsize{\dimexpr\f@size pt\relax}%
  \newcommand*\lineheight[1]{\fontsize{\fsize}{#1\fsize}\selectfont}%
  \ifx\svgwidth\undefined%
    \setlength{\unitlength}{252.88430412bp}%
    \ifx\svgscale\undefined%
      \relax%
    \else%
      \setlength{\unitlength}{\unitlength * \real{\svgscale}}%
    \fi%
  \else%
    \setlength{\unitlength}{\svgwidth}%
  \fi%
  \global\let\svgwidth\undefined%
  \global\let\svgscale\undefined%
  \makeatother%
  \begin{picture}(1,0.62820626)%
    \lineheight{1}%
    \setlength\tabcolsep{0pt}%
    \put(0,0){\includegraphics[width=\unitlength,page=1]{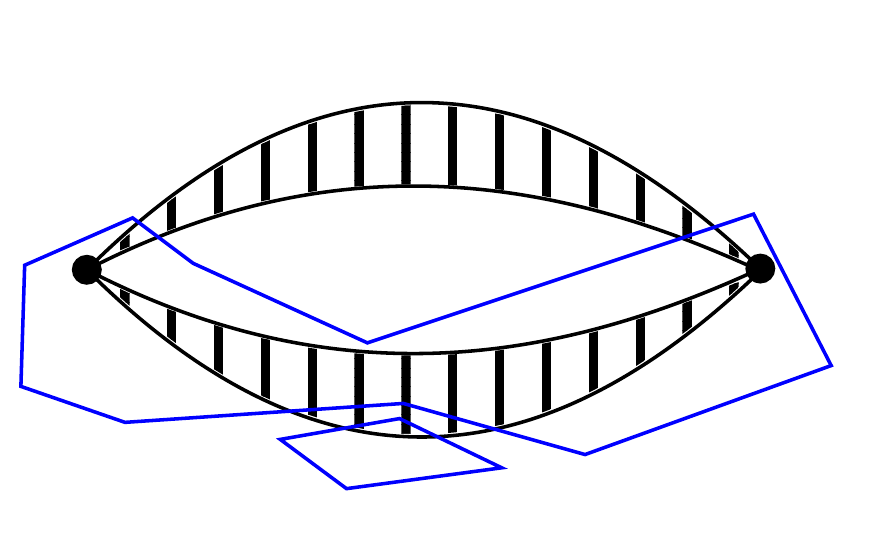}}%
    \put(0.09065268,0.10154373){\color[rgb]{0,0,0}\makebox(0,0)[t]{\lineheight{1.25}\smash{\begin{tabular}[t]{c}$Y_1$\end{tabular}}}}%
    \put(0.46581565,0.01006775){\color[rgb]{0,0,0}\makebox(0,0)[t]{\lineheight{1.25}\smash{\begin{tabular}[t]{c}$Y_2$\end{tabular}}}}%
    \put(0,0){\includegraphics[width=\unitlength,page=2]{ParallelCase3.pdf}}%
    \put(0.06253264,0.54977769){\color[rgb]{0,0,0}\makebox(0,0)[t]{\lineheight{1.25}\smash{\begin{tabular}[t]{c}$X_1$\end{tabular}}}}%
    \put(0.51957412,0.59577266){\color[rgb]{0,0,0}\makebox(0,0)[t]{\lineheight{1.25}\smash{\begin{tabular}[t]{c}$X_2$\end{tabular}}}}%
    \put(0.9374882,0.46154711){\color[rgb]{0,0,0}\makebox(0,0)[t]{\lineheight{1.25}\smash{\begin{tabular}[t]{c}$X_3$\end{tabular}}}}%
  \end{picture}%
\endgroup%
} &    \def\svgscale{.4}{%% Creator: Inkscape inkscape 0.92.4, www.inkscape.org
%% PDF/EPS/PS + LaTeX output extension by Johan Engelen, 2010
%% Accompanies image file 'ParallelCase4.pdf' (pdf, eps, ps)
%%
%% To include the image in your LaTeX document, write
%%   \input{<filename>.pdf_tex}
%%  instead of
%%   \includegraphics{<filename>.pdf}
%% To scale the image, write
%%   \def\svgwidth{<desired width>}
%%   \input{<filename>.pdf_tex}
%%  instead of
%%   \includegraphics[width=<desired width>]{<filename>.pdf}
%%
%% Images with a different path to the parent latex file can
%% be accessed with the `import' package (which may need to be
%% installed) using
%%   \usepackage{import}
%% in the preamble, and then including the image with
%%   \import{<path to file>}{<filename>.pdf_tex}
%% Alternatively, one can specify
%%   \graphicspath{{<path to file>/}}
%% 
%% For more information, please see info/svg-inkscape on CTAN:
%%   http://tug.ctan.org/tex-archive/info/svg-inkscape
%%
\begingroup%
  \makeatletter%
  \providecommand\color[2][]{%
    \errmessage{(Inkscape) Color is used for the text in Inkscape, but the package 'color.sty' is not loaded}%
    \renewcommand\color[2][]{}%
  }%
  \providecommand\transparent[1]{%
    \errmessage{(Inkscape) Transparency is used (non-zero) for the text in Inkscape, but the package 'transparent.sty' is not loaded}%
    \renewcommand\transparent[1]{}%
  }%
  \providecommand\rotatebox[2]{#2}%
  \newcommand*\fsize{\dimexpr\f@size pt\relax}%
  \newcommand*\lineheight[1]{\fontsize{\fsize}{#1\fsize}\selectfont}%
  \ifx\svgwidth\undefined%
    \setlength{\unitlength}{252.88428249bp}%
    \ifx\svgscale\undefined%
      \relax%
    \else%
      \setlength{\unitlength}{\unitlength * \real{\svgscale}}%
    \fi%
  \else%
    \setlength{\unitlength}{\svgwidth}%
  \fi%
  \global\let\svgwidth\undefined%
  \global\let\svgscale\undefined%
  \makeatother%
  \begin{picture}(1,0.69237016)%
    \lineheight{1}%
    \setlength\tabcolsep{0pt}%
    \put(0,0){\includegraphics[width=\unitlength,page=1]{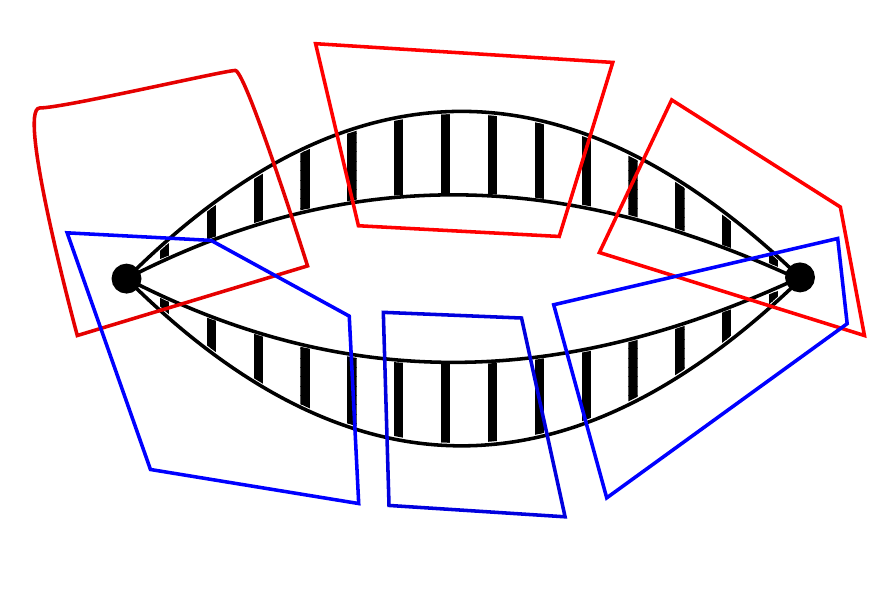}}%
    \put(0.09835275,0.10154374){\color[rgb]{0,0,0}\makebox(0,0)[t]{\lineheight{1.25}\smash{\begin{tabular}[t]{c}$Y_1$\end{tabular}}}}%
    \put(0.4735153,0.01006776){\color[rgb]{0,0,0}\makebox(0,0)[t]{\lineheight{1.25}\smash{\begin{tabular}[t]{c}$Y_2$\end{tabular}}}}%
    \put(0.86989184,0.14089913){\color[rgb]{0,0,0}\makebox(0,0)[t]{\lineheight{1.25}\smash{\begin{tabular}[t]{c}$Y_3$\end{tabular}}}}%
    \put(0.06253264,0.62029682){\color[rgb]{0,0,0}\makebox(0,0)[t]{\lineheight{1.25}\smash{\begin{tabular}[t]{c}$X_1$\end{tabular}}}}%
    \put(0.58100813,0.65993656){\color[rgb]{0,0,0}\makebox(0,0)[t]{\lineheight{1.25}\smash{\begin{tabular}[t]{c}$X_2$\end{tabular}}}}%
    \put(0.9374882,0.53206614){\color[rgb]{0,0,0}\makebox(0,0)[t]{\lineheight{1.25}\smash{\begin{tabular}[t]{c}$X_3$\end{tabular}}}}%
  \end{picture}%
\endgroup%
} \\(a) Case 1 & (b) Case 2 \\(c) Case 3 & (d) Case 4\end{tabular}\caption{The four cases in \Cref{alg:parallelpartition}}\label{fig:parallelpartition}\end{figure}

\begin{prop}\label{prop:verifyparallel}
If $(G,w)$ is admissible, then \Cref{alg:parallelpartition} runs correctly and in time polynomial in $(|G|, n(w(G)))$.
\end{prop}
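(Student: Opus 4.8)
The plan is to follow, almost verbatim, the template of the proof of \Cref{prop:verifyseries}. First I would introduce a set-level companion algorithm \texttt{SetLevelParallelPartitions}, obtained from \Cref{alg:parallelpartition} by replacing each weight tuple $(a_1,a_2,a_3,m)$ by a genuine triple of vertex sets $(X_1,X_2,X_3)\in\widetilde{X}(G_1)$ (and likewise $(Y_1,Y_2,Y_3)\in\widetilde{X}(G_2)$), replacing $+$ on weights by $\cup$ on sets, the tests ``$a_i=0$'' by ``$X_i=\emptyset$'', and the update $f(\cdots)\pluseq m_1m_2$ by $F(Z_1,Z_2,Z_3)\pluseq 1$. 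The four \textbf{if}-branches of \Cref{alg:parallelpartition} then become the four ways of forming $Z$ from $(X_1,X_2,X_3)$ and $(Y_1,Y_2,Y_3)$: glue the two pieces along the shared terminals $\sigma=\sigma_1=\sigma_2$ and $\tau=\tau_1=\tau_2$, and, whenever $\tau$ lies in the $\sigma$-block of one piece, absorb the other piece's $\tau$-block ($X_3$ or $Y_3$) into the global $\sigma$-block. I would then prove that the function $F$ returned by \texttt{SetLevelParallelPartitions} is exactly the indicator of $\widetilde{X}(G_1\parallelsum G_2)$ inside $\mathcal{P}(V(G))^3$.

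For the correctness of the set-level algorithm there are two things to check. The inclusion $\supp(F)\subseteq\widetilde{X}(G)$ is a routine inspection of the four branches: $Z_1$ always contains $\sigma$; the guard ``$a_2=0$ or $b_2=0$'' ensures the middle block is the union of at most one nonempty piece and hence connected, since the two candidate middle pieces avoid both terminals and no edge of $G=G_1\parallelsum G_2$ joins $V(G_1)\setminus\{\sigma,\tau\}$ to $V(G_2)\setminus\{\sigma,\tau\}$; and connectivity of $Z_1$ and of $Z_3$ holds because the pieces being unioned always share $\sigma$ or share $\tau$. Conversely, given $Z=(Z_1,Z_2,Z_3)\in\widetilde{X}(G)$ I would exhibit the unique producing pair: put $X_2=Z_2\cap V(G_1)$ and $X_3=Z_3\cap V(G_1)$, each connected in $G_1$ because a path of $G$ inside $Z_2$ or $Z_3$ cannot leave $V(G_1)$ without using a shared terminal (an analog of \Cref{lem:restrictionconnectededge}); let $X_1$ be the connected component of $\sigma$ in $G_1[Z_1\cap V(G_1)]$, and define $Y$ symmetrically. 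The only way $Z_1\cap V(G_1)$ can fail to be connected in $G_1$ is if $\tau\in Z_1$ but $\tau$ is separated from $\sigma$ within $G_1$; then $Z_3=\emptyset$, so the leftover component is precisely the one containing $\tau$, and it plays the role of the third block $X_3$ for the partition of $G_1$. A short case analysis on the pair of questions (``is $\tau$ with $\sigma$ in $G_1$?'', ``is $\tau$ with $\sigma$ in $G_2$?'') shows $(X,Y)$ reproduces $Z$ in exactly one of the four branches and that $X,Y$ are forced, whence $F(Z)=1$.

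Finally I would couple the set-level algorithm with \Cref{alg:parallelpartition}, exactly as \Cref{alg:setlevelseriespartitions} is coupled with \Cref{alg:seriespartitions} in the proof of \Cref{prop:verifyseries}. Admissibility of $(G,w)$ — every singleton weight has $E$-coordinate $\lnot\emptyset$ — implies that a block of a partition of $G$, $G_1$, or $G_2$ is empty if and only if its $N$-weight is $0$; hence the branch taken and the weight tuple $(c_1,c_2,c_3)$ of the merged partition are determined by the weight tuples of the two inputs. Sorting $X(G_1,w_1)$ and $X(G_2,w_2)$ so that entries with equal weight tuples are adjacent, \Cref{alg:parallelpartition} performs for each pair of weight tuples a single multiplication $m_1m_2$ and a single update to $f$, which aggregates exactly the $m_1m_2$ unit increments that the set-level algorithm would perform over the corresponding pairs of partitions; this yields correctness. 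For the running time, $|X(G_i,w_i)|=O(n(w(G_i))^2)$ (the three weights sum to the fixed total $w(G_i)$, so two coordinates determine the third, and each tuple carries a unique count $m$), each $m_i\le 2^{3|V(G)|}$ so arithmetic on the $m_i$ is polynomial in $|G|$, and the table $f$ has $O(n(w(G))^2)$ entries; so the whole procedure runs in time polynomial in $(|G|,n(w(G)))$, as claimed.

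The main obstacle is the uniqueness-of-recovery step. Unlike the series case, the two terminals $\sigma$ and $\tau$ of a parallel composition are \emph{both} shared, so a connected block of $Z$ can restrict to a disconnected set in $G_1$ — precisely when $\sigma$ and $\tau$ lie in the same block of $Z$ but are linked only through $G_2$. One must verify that this situation forces $Z_3=\emptyset$, that the extra $G_1$-component is exactly the $\tau$-component, and that this is what the ``absorb $X_3$/$Y_3$'' branches (Cases 2 and 3 of \Cref{alg:parallelpartition}) account for, so that every $Z\in\widetilde{X}(G)$ still has a single producing pair landing in a single branch. Everything else is a direct transcription of the series-composition argument.
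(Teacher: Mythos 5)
Your proposal is correct and follows essentially the same route as the paper's proof: introduce the set-level companion algorithm, check $\supp(F)\subseteq\widetilde{X}(G_1\parallelsum G_2)$, establish unique recovery via a four-way case split on whether $\tau$ is joined to $\sigma$ within $G_1$ and/or within $G_2$ (which matches the paper's classification by how $Z_1$ connects $\sigma$ to $\tau$), then couple with the weight-level algorithm using admissibility and bound the runtime. Your explicit handling of the subtlety that $Z_1\cap V(G_1)$ may split into a $\sigma$-component and a $\tau$-component is exactly the point the paper's Cases 2 and 3 recoveries address.
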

\begin{proof}
This proof follows the same structure as \Cref{prop:verifyseries}.
\begin{algorithm}
\caption{\texttt{SetLevelParallelPartitions}}\label{alg:setlevelparallelpartitions}
\textbf{Input:} series-parallel graphs $G_1$ and $G_2$ along with $\widetilde{X}( G_1 )$ and $\widetilde{X}( G_2 )$ \\
\textbf{Output:} $\widetilde{X}( G_1 \parallelsum G_2 )$. (Convention to align with pictures $G_1$ will be the top SP-graph, whose partitions are denoted with $X_i$.)

\begin{algorithmic}[1]
\STATE{Initialize $F$ as the zero function on $\mathcal{P}(V(G))^3$}

\FOR{$(X_1, X_2, X_3) \in \widetilde{X}( G_1)$ \textrm{ and }$(Y_1, Y_2, Y_3) \in \widetilde{X}(G_2)$ }

\IF{$X_3 = \emptyset $ and $Y_3 = \emptyset $ AND ($X_2 = \emptyset$ or $Y_2 = \emptyset$) }
   \STATE{ $F(X_1 \cup Y_1, X_2 \cup Y_2, \emptyset) \pluseq 1$}
   \ENDIF
   
\IF{$X_3 = \emptyset $ and $Y_3 \not = \emptyset $ AND ($X_2 = \emptyset$ or $Y_2 = \emptyset$)}
   \STATE{ $F(X_1 \cup X_3 \cup Y_1, X_2 \cup Y_2, \emptyset) \pluseq 1$} \ENDIF
   
\IF{$X_3 \not = \emptyset $ and $Y_3 = \emptyset $ AND ($X_2 = \emptyset$ or $Y_2 = \emptyset$)}
    \STATE{ $F(X_1 \cup Y_1 \cup Y_3, X_2 \cup Y_2, \emptyset) \pluseq 1$} \ENDIF
   
\IF{$X_3 \not = \emptyset $ and $Y_3 \not = \emptyset $ AND ($X_2 = \emptyset$ or $Y_2 = \emptyset$) }
   \STATE{ $F(X_1 \cup Y_1 , X_2 \cup Y_2, X_3 \cup Y_3) \pluseq 1$}
   \ENDIF

\ENDFOR
\STATE{Return $F$}
\end{algorithmic}
\end{algorithm}
We verify that the function $F$ returned by \Cref{alg:setlevelparallelpartitions} is the indicator of $\tilde{X}(G)$ in $\mathcal{P}(V(G))^3$.
First, it is straightforward to check that $\supp(F) \subseteq \tilde{X}(G)$. Next, let $(Z_1, Z_2, Z_3) \in \tilde{X}(G)$. There are four cases based on how $Z$ connects $\sigma$ and $\tau$. %

\begin{enumerate}
    \item $Z_1$ has a path through $G_1$ and $G_2$ from $\sigma$ to $\tau$.
    \item $Z_1$ has a path through only $G_1$ from $\sigma$ to $\tau$.
    \item $Z_1$ has a path through only $G_2$ from $\sigma$ to $\tau$.
    \item No paths, that is: $\sigma \in Z_1$ and $\tau \in Z_3$.
\end{enumerate}
One can now observe that for each $Z = (Z_1,Z_2,Z_3)$,   $Z$ can be produced in exactly one of the four cases in \Cref{alg:setlevelparallelpartitions}, because each case is distinguished what kind of paths in $Z_1$ there are from $\sigma$ to $\tau$. Moreover, because the $X_i$ and $Y_i$ can be recovered by taking appropriate intersections of the $Z_i$ with $G_1$ and $G_2$, there is a unique pair of partitions of $G_1$ and $G_2$ that produce each $(Z_1, Z_2, Z_3)$. Thus, $F( Z_1, Z_2, Z_3) = 1$. We now list the cases of the algorithm in more detail:
\begin{itemize}
    \item The case $X_3 = \emptyset $ and $Y_3 = \emptyset $, equivalently, $Z_1$ has a path through $G_1$ and $G_2$ from $\sigma$ to $\tau$. See \Cref{fig:parallelpartition}(a).
    Recovery: $X_1 = Z_1 \cap G_1$, $X_2 = Z_2 \cap G_1$, $X_3 = \emptyset$, and $Y_1 = Z_1 \cap G_2$, $Y_2 = Z_2 \cap G_2$, $Y_3 = \emptyset$.
    \item The case $X_3 = \emptyset $ and $Y_3 \not = \emptyset $, equivalently, $Z_1$ has a path through only $G_1$ from $\sigma$ to $\tau$. See \Cref{fig:parallelpartition}(b).
    Recovery:  $X_1 = Z_1 \cap G_1$, $X_2 = Z_2 \cap G_1$, $X_3 = \emptyset$, and $Y_1$ the component of $Z_1 \cap G_2$ containing $\sigma$, $Y_2 = Z_2 \cap G_2$ and $Y_3$ is the component of $Z_1 \cap G_2$ containing $\tau$.
    \item The case $X_3 \not = \emptyset $ and $Y_3 = \emptyset $, equivalently, $Z_1$ has a path through only $G_2$ from $\sigma$ to $\tau$. See \Cref{fig:parallelpartition}(c).
    Recovery:  $X_1$ is the component of $Z_1 \cap G_1$ containing $\sigma$, $X_2 = Z_2 \cap G_1$, $X_3$ is the component of $Z_3 \cap G_1$ containing $\tau$, and $Y_1 = Z_1 \cap G_2$, $Y_2 = Z_2 \cap G_2$, $Y_3 = \emptyset$.
    \item The case $X_3 \not = \emptyset $ and $Y_3 \not = \emptyset $, equivalently, $\sigma \in Z_1$ and $\tau \in Z_3$. See \Cref{fig:parallelpartition}(d).
    Recovery: $X_1 = Z_1 \cap G_1$, $X_2 = Z_2 \cap G_1$, $X_3 = Z_3 \cap G_1$, and $Y_1 = Z_1 \cap G_2$, $Y_2 = Z_2 \cap G_2$ and $Y_3 = Z_3 \cap G_2$.
\end{itemize}
The relationship between  \Cref{alg:setlevelparallelpartitions} and \Cref{alg:parallelpartition} is the same as in \Cref{prop:verifyseries}, proving that \Cref{alg:parallelpartition} has the correct output. Moreover, it consists of a single loop over the product of two sets, which has size bounded by $O( w(G_1)^2 w(G_2)^2)$, since the number of elements in $X(G)$ is in general bounded by $w(G)^2$. Within each loop, each $m_i \leq 2^{3|V(G)|}$, so the cost of multiplications and additions are polynomial in $|G|$.
\end{proof}

\begin{thm}\label{thm:calculatingbalancedpartitions}
Let $(G,w) $ be a node $\mathbb{N}$-weighted series-parallel graph. Then $|P_2^0(G,w)|$ can be calculated in time polynomial in $(|G|, w(G))$.\footnote{The input to the polynomial is the size of $w(G)$, not the binary encoding of $w(G)$.}

\end{thm}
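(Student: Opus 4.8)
The plan is to compute the DP-table $X(G,w)$ at the root of a binary $SP$-tree for $G$, and then extract $|P_2^0(G,w)|$ from it. First I would handle a trivial reduction: if $(G,w)$ is not admissible (i.e.\ some node has weight with $E$-component $\emptyset$), we can replace each weight $(k,\emptyset)$ by $(k,\lnot\emptyset)$ at no cost, since the $E$-component only matters for tracking non-emptiness and every singleton block is non-empty; more carefully, I would just assume admissibility by first passing to the monoid $N$ and setting $w'(v) = (n(w(v)), \lnot\emptyset)$, and note that the partitions being counted are unchanged. Next, invoke the linear-time algorithm of \cite{bodlaender1996parallel} to build a binary $SP$-tree $T$ for $G$ with $O(|G|)$ nodes.

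The core of the argument is a bottom-up dynamic program over $T$. At each leaf we have a copy of $K_2$ with the weights that $w$ assigns to its two endpoints; the table $X(K_2,w)$ is computed directly by enumerating the (constantly many) connected $3$-partitions of a single edge. At each internal node labelled $S$ (resp.\ $P$) we apply \Cref{alg:seriespartitions} (resp.\ \Cref{alg:parallelpartition}) to the tables of the two children, using the weight-combination rule of \Cref{defn:SPweightscomposition}. Correctness of these two merge steps is exactly \Cref{prop:verifyseries} and \Cref{prop:verifyparallel}, which also give that each merge runs in time polynomial in $(|G|, n(w(G)))$ — note that the entries $a_i$ of any tuple in $X$ are bounded by $w(G)$, so the tables have size $O(w(G)^3)$, and there are $O(|G|)$ merges, so the whole sweep is polynomial in $(|G|, w(G))$. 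Hence $X(G,w)$ is computable within the claimed budget; the only subtlety I would be careful about is that the $SP$-tree's series compositions may be iterated (more than two children conceptually), but since we use a \emph{binary} $SP$-tree and \Cref{alg:seriespartitions} handles exactly two graphs at a time, this is automatic.

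Finally I would extract the answer. A balanced connected $2$-partition of $(G,w)$ is a pair of connected blocks $(V_1,V_2)$ with $w(V_1)=w(V_2)=w(G)/2$ (and if $w(G)$ is odd there are none). Such a partition corresponds, in the language of $X(G,w)$, to a triple $(V_1,V_2,V_3)$ with $V_3=\emptyset$ and $V_2=\emptyset$ being \emph{excluded} — rather, I want the count of $3$-partitions where one of the three blocks is empty and the other two each have total weight $w(G)/2$. Concretely: sum $m$ over all $(a_1,a_2,a_3,m)\in X(G,w)$ with $a_3 = 0$, $n(a_1) = n(a_2) = n(w(G))/2$, and $a_2 \neq 0$ (the condition $a_2\neq 0$ ensures $V_2$ is genuinely a second block and forces $\sigma\in V_1$, $\tau$ arbitrary); since the unordered balanced $2$-partition $\{V_1,V_2\}$ gets counted exactly once this way because $\sigma$ is pinned to $V_1$, no division by $2$ is needed. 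One should double-check the boundary case where $\sigma$ and $\tau$ land in the same block versus different blocks — but the definition of $X$ already tracks $\tau$'s location through the $a_3$ coordinate, so one can split into the subcase $a_3\neq 0$ with appropriate weight conditions as well; I would write out the precise index set in the proof.

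The main obstacle I anticipate is purely bookkeeping: getting the extraction formula exactly right (which tuples in $X(G,w)$ correspond bijectively to balanced connected $2$-partitions, with no double counting and no omissions, especially regarding the placement of $\tau$ and the role of the distinguished source block $V_1$). The dynamic-programming machinery itself is entirely supplied by \Cref{prop:verifyseries}, \Cref{prop:verifyparallel}, and the $SP$-tree construction, so the substantive mathematical content is already in place; what remains is to assemble these pieces and verify the final readout, which is the step I would spend the most care on.

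\begin{proof}
We may assume $(G,w)$ is admissible: replacing each weight $w(v)=(k,x)$ by $(k,\lnot\emptyset)$ changes neither the set of connected $2$-partitions of $G$ nor their weights, since the $E$-component of a node weight only records non-emptiness of sets and every block of a partition containing that node is non-empty. By \cite{bodlaender1996parallel} we can, in linear time, decide that $G$ is series-parallel and construct a binary $SP$-tree $T$ for $G$ with $O(|G|)$ nodes.

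We compute $X(\cdot,w)$ by a bottom-up sweep of $T$. For a leaf, which carries a copy of $K_2$, the table $X(K_2,w)$ is obtained by enumerating directly the connected $3$-partitions of a single edge (there are constantly many, as each of the two endpoints may lie in one of the blocks subject to the connectivity and normalization constraints). For an internal node labelled $S$ with children carrying $(G_1,w_1)$ and $(G_2,w_2)$, we compute $X((G_1,w_1)\circ(G_2,w_2),w)$ from $X(G_1,w_1)$ and $X(G_2,w_2)$ via \Cref{alg:seriespartitions}, whose correctness and running time polynomial in $(|G|,n(w(G)))$ are \Cref{prop:verifyseries}; the weight function on the composed graph is as in \Cref{defn:SPweightscomposition}. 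For an internal node labelled $P$, we use \Cref{alg:parallelpartition} in the same way, its correctness and running time being \Cref{prop:verifyparallel}. Since every tuple $(a_1,a_2,a_3,m)\in X(H,w)$ has $n(a_i)\le n(w(H))$, the table $X(H,w)$ has $O(w(G)^3)$ entries, and the counts $m$ are bounded by $2^{3|V(G)|}$, so each merge and the whole sweep of the $O(|G|)$ nodes of $T$ takes time polynomial in $(|G|,w(G))$. This yields $X(G,w)$ for $G=G(T)$.

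It remains to read off $|P_2^0(G,w)|$. If $n(w(G))$ is odd, then $P_2^0(G,w)=\emptyset$ and the count is $0$. Otherwise, put $h=n(w(G))/2$. A balanced connected $2$-partition of $(G,w)$ is an unordered pair $\{V_1,V_2\}$ of disjoint sets with $V_1\cup V_2 = V(G)$, with $G[V_i]$ connected, and with $n(w(V_i))=h$. Writing such a partition as an ordered triple $(V_1,V_2,V_3)$ in the sense defining $X(G,w)$, with $\sigma\in V_1$ (which uniquely orders the pair), we split on the location of $\tau$:
\begin{itemize}
\item if $\tau\in V_1$, then this corresponds to a tuple $(a_1,a_2,a_3,m)\in X(G,w)$ with $a_3=0$, $a_2\neq 0$, and $n(a_1)=n(a_2)=h$;
\item if $\tau\in V_2$, then, since in the definition of $X$ a nonempty second block not containing $\tau$ is recorded as $V_2$, this is again a tuple with $a_3=0$, $a_2\neq 0$, $n(a_1)=n(a_2)=h$ --- the placement of $\tau$ within the two blocks of weight $h$ is immaterial to the table, as both cases are encoded by $a_3=0$.
\end{itemize}
Consequently
\[
|P_2^0(G,w)| \;=\; \sum_{\substack{(a_1,a_2,a_3,m)\in X(G,w)\\ a_3 = 0,\ a_2\neq 0,\ n(a_1)=n(a_2)=h}} m,
\]
and each balanced connected $2$-partition contributes exactly once, because pinning $\sigma$ to the first block removes the ambiguity in ordering $\{V_1,V_2\}$. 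The sum ranges over $O(w(G)^3)$ terms, each a nonnegative integer bounded by $2^{3|V(G)|}$, so it is computed in time polynomial in $(|G|,w(G))$. Combining this with the computation of $X(G,w)$ above gives the claimed bound.
\end{proof}
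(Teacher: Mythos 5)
Your overall architecture is identical to the paper's: pass to admissible $N$-weights, build a binary $SP$-tree, sweep it bottom-up with \Cref{alg:seriespartitions} and \Cref{alg:parallelpartition} (justified by \Cref{prop:verifyseries} and \Cref{prop:verifyparallel}), and read the answer off the root table; the complexity accounting also matches. But the one step you flagged as needing the most care --- the final readout --- is where your argument breaks.

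The error is in your second bullet. By clause (iii) of the definition of $X(G,w)$, $a_3 = 0$ \emph{forces} $\tau \in V_1$, and $a_3 \neq 0$ forces $\tau \in V_3$; consequently $\tau$ is never in $V_2$, and a block containing $\tau$ but not $\sigma$ is always normalized as $V_3$, never as $V_2$. So a balanced partition $\{A,B\}$ with $\sigma \in A$ and $\tau \in B$ is recorded as the triple $(A,\emptyset,B)$ and contributes to a tuple with $a_2 = 0$ and $n(a_3) = h \neq 0$ --- not, as you claim, to a tuple with $a_3 = 0$ and $a_2 \neq 0$. Your single sum
\[
\sum_{\substack{(a_1,a_2,a_3,m)\in X(G,w)\\ a_3 = 0,\ a_2\neq 0,\ n(a_1)=n(a_2)=h}} m
\]
therefore counts only the balanced partitions in which $\sigma$ and $\tau$ lie in the same block, and omits all those separating $\sigma$ from $\tau$. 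You need to add the second family of tuples, those with $a_2 = 0$, $a_3 \neq 0$, and $n(a_1)=n(a_3)=h$, which is exactly the paper's two-term formula
\[
|P_2^0(G,w)| = X(G,w')\bigl((\tfrac{w(V)}{2},\lnot\emptyset),(\tfrac{w(V)}{2},\lnot\emptyset),0\bigr) + X(G,w')\bigl((\tfrac{w(V)}{2},\lnot\emptyset),0,(\tfrac{w(V)}{2},\lnot\emptyset)\bigr).
\]
With that correction the rest of your write-up coincides with the paper's proof.
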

\begin{proof}

We extend $w$ to weights valued in $N =  E \times \mathbb{N}$, by setting $w'(a) = (\lnot \emptyset, w(a))$ for all $a \in V(G)$. Thus,$(G,w')$ is admissible. %
We let $T$ be a binary $SP$-tree for $G$. This is a binary tree with $|E(G)|$ leaves, so it has $O(|E(G)|)$ nodes. %
Each node of $T$ is associated with a subgraph of $G$, and we make them into node-weighted series-parallel graphs by setting the $E$ component to be $\lnot \emptyset$ on all nodes, and by setting the $\mathbb{N}$ component of the weight function in any way that adds up correctly using \Cref{defn:SPweightscomposition}; for example, if $H$ is a node in $T$, with left child $L$ and right child $R$, we can assign the $\mathbb{N}$ part of the weight on $L$ to be the restriction of $w$ to $L$, and on $R$ to be the restriction of $w$ to $R \setminus L$, and zero elsewhere. The resulting node-weighted graphs are all admissible by construction.

Moreover, the graph at each $P$ node is the node-weighted parallel composition of the graphs at each child node, and the graph at each $S$ node is the node-weighted series-composition of the graphs at each child node.
Computing $X( (H,w'))$ at each of the leaves takes time $O(N(w(G)))$. Computing the value of $X( (H',w'))$ for each $P$ or $S$ node, given the values at the children, takes time $O(p(|G|, w(G)))$ for some polynomial fixed $p$ (given by \Cref{prop:verifyseries} and \Cref{prop:verifyparallel}). Thus, the total time to compute $X( (H',w'))$ at each node of the tree by memoization is $O( |E(G)| p( |G|, w(G)))$.

From $X( (G,w'))$ we can calculate $|P_2^0(G,w)|$ as $$|P_2^0(G,w)| = X(G,w')( (\frac{w(V)}{2}, \lnot \emptyset) , (\frac{w(V)}{2}, \lnot \emptyset), 0) +  X(G,w')((\frac{w(V)}{2}, \lnot \emptyset), 0, (\frac{w(V)}{2}, \lnot \emptyset)) $$ %

\end{proof}

\subsection{The natural $p$-relation for simple cycles and un-ordered $2$-partitions is not self-reducible}\label{Section:Notselfreducible}
Here we prove that a natural encoding of the simple cycles of a graph is not self-reducible, unless $\Poly = \NP$. %
We take our inspiration from \cite{Khullerselfreducible}, where it is proven that a particular $p$-relation encoding $4$ coloring a planar graph is not self-reducible. We use the same definition for self-reducibility as in \cite{Khullerselfreducible}.

We encode a graph $G = (V,E)$ in such a way that the edges are ordered. A solution $X \in SC(G)$ is described by an element of $2^{|E|}$, a binary sequence of length $|E|$, where the $k$th term is $1$ if and only if the $k$th edge of $G$ is in $X$. We call this $p$-relation $R_{SC}$.

Let $\Sigma$ be some alphabet, where $\Sigma$ has some ordering, and we extend that ordering to $\Sigma^n$ for all $n$ using the lexicographic order. We let $R \subseteq \Sigma^* \times \Sigma^*$ be any $p$-relation with $|y| = p(|x|)$ for all $(x,y) \in R$ for some polynomial $p(n)$. Now we define the following problem (see also \cite[Definition 3.2]{grosse2001relating}):

\begin{computationalproblem}{LF-$R$ (Lexicographically First)}
Input: $x \in \Sigma^*$

Output: The lexicographically first element of $R(x)$, provided $R(x) \not = \emptyset$.

\end{computationalproblem}

For example, LF-$R_{SC}$ is the problem of finding the lexicographically first simple cycle under the particular encoding of $R_{SC}$. The following proposition is well known:

\begin{prop}[\cite{Khullerselfreducible}, Prefix-Search]\label{prop:prefix}
Suppose that $R$ is a self-reducible $p$-relation, and suppose that there is a polynomial time Turing machine which, given $x$, answers if $R(x) \not = \emptyset$. Then there is a polynomial time algorithm for LF-$R$.
\end{prop}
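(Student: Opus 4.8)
\textbf{Proof plan for \Cref{prop:prefix} (Prefix-Search).}

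The plan is to establish the proposition by the standard ``exponential-gap binary search'' used for lexicographically-first selector problems on self-reducible relations. First I would unwind the hypotheses: self-reducibility of $R$ gives us, for any instance $x$ and any admissible prefix $w \in \Sigma^{\leq p(|x|)}$, the ability to construct in polynomial time a new instance $x_w$ with the property that $R(x_w)$ is (essentially) the set of completions $z$ such that $wz \in R(x)$ — so that $R(x_w) \neq \emptyset$ iff some solution of $R(x)$ begins with $w$. (Here I am using the definition of self-reducibility exactly as in \cite{Khullerselfreducible}: the polynomial length bound $|y| = p(|x|)$ from the statement of the problem, together with polynomial-time computability of the reduction map $w \mapsto x_w$ and a bound on how fast instance size grows.) I would note that the nonemptiness oracle for $R$ promised in the hypothesis, applied to the instances $x_w$, is precisely a ``does a solution of $R(x)$ extend the prefix $w$?'' oracle, and each such query runs in polynomial time since $x_w$ has size polynomial in $|x|$.

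Second, I would give the algorithm. On input $x$, first call the nonemptiness oracle on $x$ itself; if it returns ``no,'' output that $R(x) = \emptyset$. Otherwise build up the lexicographically first $y \in R(x)$ symbol by symbol: maintain a prefix $w$, initially the empty string, that is guaranteed extendable to a solution. To extend $w$ by one symbol, iterate over the letters $\sigma \in \Sigma$ in increasing order, and for each query the nonemptiness oracle on $x_{w\sigma}$; take the \emph{first} $\sigma$ for which the answer is ``yes'' and set $w \gets w\sigma$. Since $w$ was extendable and $\Sigma$ is finite, such a $\sigma$ always exists, and choosing the smallest one maintains the invariant that $w$ is a prefix of the lexicographically first solution. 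Repeat until $|w| = p(|x|)$ (using the fixed polynomial length bound), at which point $w$ itself is the lexicographically first element of $R(x)$; verify membership once more with the deterministic recognizer for $R$ if desired. I would remark that one can equally well recurse directly in $R$ rather than reindexing to $x_w$, which is closer to the literal definition of self-reducibility, but the prefix-instance formulation makes the oracle usage transparent.

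Third, the running time: the outer loop runs $p(|x|)$ times; each iteration makes at most $|\Sigma|$ oracle calls; each oracle call constructs $x_{w\sigma}$ (polynomial time, polynomial size, by self-reducibility) and runs the assumed polynomial-time nonemptiness test. Since $|\Sigma|$ is a constant and $p$ is a fixed polynomial, the total is polynomial in $|x|$, so LF-$R$ is in $\Poly$.

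I do not expect a genuine obstacle here — this is a textbook argument — but the one point that needs care is making sure the self-reduction's instance-size blowup does not compound across the $p(|x|)$ levels of the search; the definition of self-reducibility in \cite{Khullerselfreducible} is chosen precisely so that the size of $x_w$ stays polynomial in $|x|$ uniformly in $w$, and I would cite that bound explicitly rather than re-deriving it. The upshot, to be used in the sequel, is the contrapositive: if LF-$R_{SC}$ is $\NP$-hard (which is where the excerpt is heading, presumably by encoding a Hamiltonicity-type decision into ``is the lexicographically first simple cycle long/of a prescribed form''), then $R_{SC}$ cannot be self-reducible unless $\Poly = \NP$.
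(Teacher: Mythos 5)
Your proof is correct and is exactly the standard prefix-search argument that the paper relies on: the paper itself gives no proof of \Cref{prop:prefix}, citing it as a known result from \cite{Khullerselfreducible}, and your greedy symbol-by-symbol extension using the self-reduction plus the nonemptiness oracle (with the fixed solution length $|y| = p(|x|)$ justifying the lexicographic invariant, and the uniform bound on $|x_w|$ controlling the running time) is precisely that argument.
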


The idea in \cite{Khullerselfreducible} is to show that a certain $p$-relation $R$ is not self-reducible, as long as $\Poly \not = \NP$, by showing that checking $R(X) \not = \emptyset$ is in $P$, but that LF-$R$ is $NP$-hard. We will follow the same approach by reducing to LF-$R_{SC}$ from the following problem, which we will shortly show is $NP$-complete.

\begin{computationalproblem}{ExtendingToSimpleCycle}\label{problem:extending}

Input: An undirected graph $G$ and a set of edges $J \subseteq E(G)$.

Output: YES if $J$ can be extended to a simple cycle of $G$. NO, otherwise.

\end{computationalproblem}

\begin{figure}
    \centering
    \begin{tabular}{cc}
          \includegraphics[scale = .6]{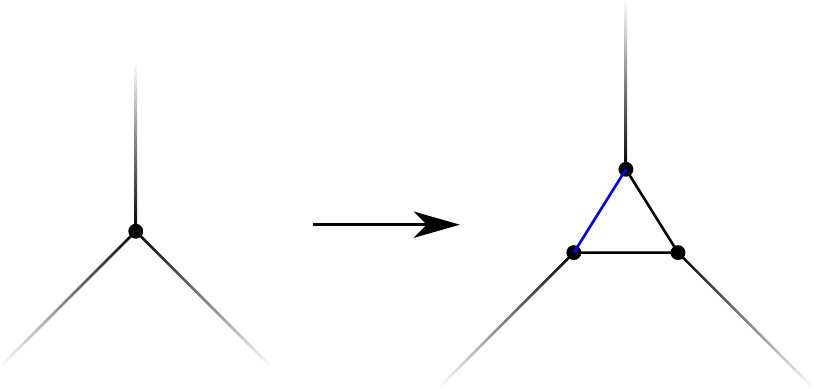}   &  
          \includegraphics[scale = .6]{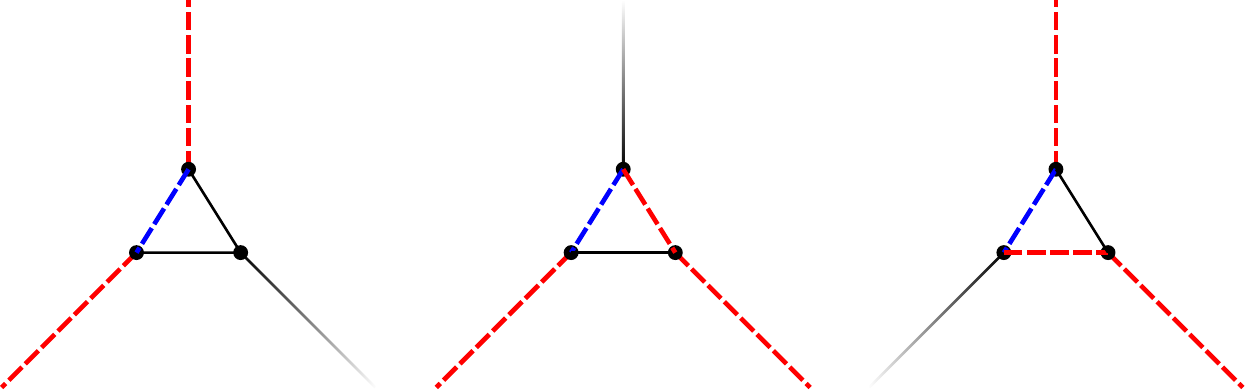}
    \end{tabular}
    \caption{a)Replacing each vertex of $G$ with a triangle. Here the blue edge is the one added to $J$. b) The routing rules illustrating that any Hamiltonian cycle in $G$ gives an extension of $J$ to a simple cycle in $G'$, and any extension of $J$ in $G'$ gives a Hamiltonian cycle of $G$.}
    \label{fig:trianglereplacementextension}
\end{figure}

\begin{prop}
\texttt{ExtendingToSimpleCycle} is $NP$-complete on the class of $3CCP$ graphs with face degree bounded by $531$.
\end{prop}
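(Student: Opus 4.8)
The plan is to reduce from the Hamiltonian cycle problem on $3CCP$ graphs with face degree bounded by $177$, which is $\NP$-complete by \Cref{thm:facebounded3CCP}. Given such a graph $G$, I would replace each vertex $v$ of $G$ with a small triangle gadget, one vertex of the triangle for each of the three edges incident to $v$, as illustrated in \Cref{fig:trianglereplacementextension}a). Call the resulting graph $G'$. Since each original vertex had degree $3$, this replacement keeps the graph cubic; it also keeps it simple, plane, and—by an argument essentially identical to the one used for the $R_d$ gadget (\Cref{lem:3_connected_lemma} together with the BG-operation \Cref{lem:addedge3connected})—it keeps it $3$-connected. The face degree bound needs a short check: each original face of degree $\leq 177$ now picks up two extra edges per vertex along its boundary, tripling the face degree, so $G'$ has face degree $\leq 531$, while the triangular faces introduced by the gadget have degree $3$. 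Thus $G' \in \mathscr{C}_{531}$.

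Next I would specify the distinguished set of edges $J \subseteq E(G')$: take $J$ to consist of one edge of each triangle gadget—say, for each vertex $v$, a fixed edge of its triangle (the blue edges in \Cref{fig:trianglereplacementextension}a)). The routing analysis, shown in \Cref{fig:trianglereplacementextension}b), is the heart of the argument. The claim is that $J$ extends to a simple cycle of $G'$ if and only if $G$ has a Hamiltonian cycle. In one direction, given a Hamiltonian cycle $C$ of $G$, at each vertex $v$ the cycle $C$ uses exactly two of the three edges at $v$; inside the triangle gadget for $v$ one then routes through the two corresponding triangle-vertices using exactly the triangle edge in $J$ together with one more triangle edge, while the third triangle-vertex (corresponding to the unused edge at $v$) is bypassed. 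Doing this consistently at every vertex, and keeping all the original (now subdivided) edges of $C$, yields a simple cycle of $G'$ containing $J$. Conversely, any simple cycle of $G'$ containing all of $J$ must enter and leave each triangle gadget, and because it contains the designated triangle edge, a case analysis of how a simple cycle can pass through a triangle shows it visits exactly two of the three triangle-vertices of each gadget and uses exactly one of the corresponding original edges on each side; collapsing each gadget back to a vertex then produces a cycle of $G$ that visits every vertex, i.e. a Hamiltonian cycle. This establishes the reduction, and since it is clearly computable in polynomial time, \textsc{ExtendingToSimpleCycle} is $\NP$-hard on $\mathscr{C}_{531}$; membership in $\NP$ is immediate since a full simple cycle extending $J$ is a polynomial-size certificate.

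The main obstacle I anticipate is the routing case analysis: one must verify carefully that \emph{forcing} a specific edge of each triangle into the cycle genuinely pins down the local behavior, ruling out degenerate passes through a triangle (e.g. a simple cycle entering and leaving a triangle at the same vertex, or using all three triangle edges, which is impossible for a simple cycle but must be dismissed explicitly) and confirming that the bijection with the ``used/unused edge at $v$'' data is exact. A secondary, more bookkeeping-level obstacle is confirming the three structural properties ($3$-connectivity, cubicity, face-degree bound) survive the gadget substitution; for $3$-connectivity I would invoke \Cref{lem:3_connected_lemma} with the triangle-with-three-leaves as the gadget $A$ (so that $A'$ is $K_4$, which is $3$-connected), exactly as in the proof that $R_d$ preserves $3CCP$. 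Once this proposition is in hand, the intended payoff—by \Cref{prop:prefix} and the style of argument in \cite{Khullerselfreducible}—is that \textsc{LF-}$R_{SC}$ is $\NP$-hard even on $\mathscr{C}_{531}$ (since $J$ can be encoded as a prefix of the edge-indicator string by ordering the edges of $J$ first, and checking whether a set of edges extends to a simple cycle is exactly what a self-reduction oracle plus an emptiness test would let us do in polynomial time), whereas testing $R_{SC}(G') \neq \emptyset$ is trivially in $\Poly$; hence $R_{SC}$ cannot be self-reducible unless $\Poly = \NP$.
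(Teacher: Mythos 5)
Your proposal is correct and follows essentially the same route as the paper: reduce from $\mathscr{C}_{177}$-HAM by replacing each cubic vertex with a triangle gadget, place one edge of each triangle into $J$, and check via the local routing analysis that extensions of $J$ to simple cycles of $G'$ correspond exactly to Hamiltonian cycles of $G$, with $3$-connectivity preserved by \Cref{lem:3_connected_lemma}. The only quibble is cosmetic: the triangle (i.e.\ $R_0$-type) replacement adds just one edge per boundary vertex to each original face, so the face degree at most doubles to $354$ rather than tripling, which is still comfortably within the stated bound of $531$.
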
\label{prop:extendingSChard}
\begin{proof}
Let $G \in \mathscr{C}_{177}$. Construct $G'$ by replacing each vertex of $G$ with a triangle as in \Cref{fig:trianglereplacementextension}a); the result remains $3CCP$ by \Cref{lem:3_connected_lemma}, and has face degree bounded by $531$. Build a set $J$ by taking one edge from each of those triangles. By examining the local routings in figure \Cref{fig:trianglereplacementextension}b), we can see that $J$ has an extension to a simple cycle of $G'$ iff $G$ has a Hamiltonian cycle. Since the latter problem is $NP$-complete by \Cref{thm:facebounded3CCP}, the proposition follows.

\end{proof}

\begin{thm}\label{thm:notselfreducible}
Fix any class of graphs $K$ which contains $\mathscr{C}_{531}$. Then the relation $R_{SC}$ is not self-reducible on $K$ assuming $\Poly \not = \NP$.
\end{thm}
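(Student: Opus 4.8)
The plan is to mimic the Khuller–style argument from \cite{Khullerselfreducible}, combining \Cref{prop:prefix} with the $\NP$-hardness established in \Cref{prop:extendingSChard}. The key observation is that $R_{SC}$ \emph{does} have an efficient emptiness test on any class of graphs: given $G$, one can decide whether $SC(G) \neq \emptyset$ in polynomial time, since a graph has a simple cycle if and only if it is not a forest, and this is clearly checkable in polynomial time. Thus the only possible obstruction to $R_{SC}$ being self-reducible is the difficulty of \textsc{LF-}$R_{SC}$.

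First I would argue by contradiction: suppose $R_{SC}$ is self-reducible on $K$. Since $K \supseteq \mathscr{C}_{531}$ and since emptiness of $SC(G)$ is decidable in polynomial time for every graph, \Cref{prop:prefix} applies and yields a polynomial-time algorithm for \textsc{LF-}$R_{SC}$ on $K$, hence in particular on $\mathscr{C}_{531}$. Next I would use this hypothetical algorithm to solve \textsc{ExtendingToSimpleCycle} on $\mathscr{C}_{531}$ in polynomial time. Given an instance $(G, J)$ with $G \in \mathscr{C}_{531}$ and $J \subseteq E(G)$, I would re-encode $G$ by reordering its edges so that the edges of $J$ come first, say $J = \{e_1, \ldots, e_{|J|}\}$ in the new ordering. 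Then $J$ extends to a simple cycle of $G$ if and only if the lexicographically first simple cycle of $G$ under this reordered encoding begins with the prefix $1^{|J|}$: indeed, if some simple cycle contains all of $J$ then the lexicographically largest-prefix cycle (which is what \textsc{LF}—with the convention that $1 < 0$ in the relevant ordering, or after an appropriate bit-flip—returns) has prefix $1^{|J|}$, and conversely any cycle with that prefix contains $J$. So I would run the \textsc{LF-}$R_{SC}$ algorithm on the reordered instance, read off the first $|J|$ bits of the output, and answer YES precisely when they are all $1$.

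Care must be taken about the direction of the lexicographic order: the encoding in the excerpt sets the $k$th bit to $1$ when the $k$th edge is in the cycle, and "lexicographically first" must be interpreted consistently. If the ordering on $\Sigma = \{0,1\}$ is the usual one with $0 < 1$, then the lexicographically \emph{first} string is the one with the most initial zeros, which is the wrong extremum; in that case I would instead note that by relabelling edges (or equivalently by the standard observation that prefix search works for either extremum) one obtains a polynomial-time procedure that finds the simple cycle whose indicator vector is lexicographically \emph{largest}, and use that. Alternatively, one can set up the prefix search directly: self-reducibility of $R_{SC}$ together with the polynomial-time emptiness test lets one decide, for any prefix $w$, whether some $y \in R_{SC}(G)$ extends $w$ — this is exactly the prefix-search routine — and querying it on the prefix $1^{|J|}$ (with respect to the $J$-first edge ordering) directly answers \textsc{ExtendingToSimpleCycle}. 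Either way, we get a polynomial-time algorithm for an $\NP$-complete problem (\Cref{prop:extendingSChard}), forcing $\Poly = \NP$, contrary to hypothesis. Hence $R_{SC}$ is not self-reducible on $K$.

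The main obstacle is purely bookkeeping: getting the definition of self-reducibility from \cite{Khullerselfreducible} to line up with our edge-ordered encoding, and making sure the prefix-search reduction respects the requirement (from the definition of self-reducibility) that the "children" instances in the self-reduction tree remain in the class $K$ — this is why the hypothesis "$K$ contains $\mathscr{C}_{531}$" and the fact that \textsc{ExtendingToSimpleCycle} is hard already \emph{within} $\mathscr{C}_{531}$ are both essential; I should double-check that the standard self-reducibility framework for this encoding operates by deleting or contracting edges, operations under which one stays inside the ambient class of all graphs, so that the emptiness test and the prefix search both remain valid. No genuinely new idea is needed beyond transcribing the template of \cite{Khullerselfreducible} with \Cref{prop:extendingSChard} in place of their planar $4$-colouring hardness result.
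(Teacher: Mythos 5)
Your proposal is correct and follows essentially the same route as the paper's proof: reorder the edges so that $J$ comes first, invoke the prefix-search consequence of self-reducibility (\Cref{prop:prefix}) together with the trivial polynomial-time emptiness test for $SC(G)$, and derive a polynomial-time algorithm for \textsc{ExtendingToSimpleCycle}, contradicting \Cref{prop:extendingSChard}. Your extra care about the direction of the lexicographic order (or, equivalently, querying the prefix-search routine directly on $1^{|J|}$) is a welcome clarification of a point the paper's terser argument leaves implicit, but it does not change the approach.
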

\begin{proof}
Let $G, J$ be some instance of \texttt{ExtendingToSimpleCycle}. We order the edges of $G$ so that $J$ are the first edges. If there is an extension of $J$ to a simple cycle, then one of the simple cycles extending $J$ is the lexicographically first simple cycle among all simple cycles of $G$. %
If we assume that $R_{SC}$ is self-reducible, then \Cref{prop:prefix} guarantees that we can determine the lexicographically first simple cycle in polynomial time, which puts the extension problem into $\Poly$.
This contradicts $\Poly \not = \NP$ because we have proven that \texttt{ExtendingToSimpleCycle} is $\NP$-hard.
\end{proof}

We remark that, since we have proven \Cref{thm:notselfreducible} in the context of plane graphs, we obtain results about the non-self reducibility of encodings of connected $2$-partitions. One encoding that immediately reduces to the theorem just proven is to encode a connected $2$-partition as the set of cut edges. %

\end{document}